\theoremstyle{definition}
\newtheorem{theorem}{Theorem}[section]
\newtheorem{corollary}[theorem]{Corollary}
\newtheorem{lemma}[theorem]{Lemma}
\newtheorem{construction}[theorem]{Construction}
\newtheorem{intuition}[theorem]{Intuition}
\newtheorem{definition}[theorem]{Definition}
\newtheorem{notation}[theorem]{Notation}
\newtheorem{example}[theorem]{Example}
\newtheorem{note}[theorem]{Note}
\newtheorem{fact}[theorem]{Fact}
\newtheorem{proposition}[theorem]{Proposition}
\crefname{theorem}{Theorem}{Theorems}
\crefname{construction}{Construction}{Constructions}
\crefname{corollary}{Corollary}{Corollaries}
\crefname{lemma}{Lemma}{Lemmas}
\crefname{hope}{Hope}{Hopes}
\crefname{belief}{Belief}{Beliefs}
\crefname{construction}{Construction}{Constructions}
\crefname{intuition}{Intuition}{Intuitions}
\crefname{definition}{Definition}{Definitions}
\crefname{notation}{Notation}{Notations}
\crefname{example}{Example}{Examples}
\crefname{warning}{Warning}{Warnings}
\crefname{claim}{Claim}{Claims}
\crefname{procedure}{Procedure}{Procedures}
\crefname{note}{Note}{Notes}
\crefname{question}{Question}{Questions}
\crefname{fact}{Fact}{Facts}
\crefname{assumption}{Assumption}{Assumptions}
\crefname{metatheorem}{Metatheorem}{Metatheorems}
\crefname{proposition}{Proposition}{Propositions}
\DeclareFontFamily{U}{min}{}
\DeclareFontShape{U}{min}{m}{n}{<-> udmj30}{}
\newcommand\yo{\!\text{\usefont{U}{min}{m}{n}\symbol{'207}}\!}
\newcommand\ofty{\,{:}\,}
\newcommand\colim{\operatorname{colim}}
\newcommand\img{\operatorname{im}}
\newcommand\psh{\operatorname{PSh}}
\newcommand\Aut{\operatorname{Aut}}
\newcommand{\dom}{\operatorname{dom}}
\newcommand{\cod}{\operatorname{cod}}
\newcommand\calA{\mathcal A}
\newcommand\calB{\mathcal B}
\newcommand\calF{\mathcal F}
\newcommand\calG{\mathcal G}
\newcommand\calH{\mathcal H}
\newcommand\calK{\mathcal K}
\newcommand\calL{\mathcal L}
\newcommand\calM{\mathcal M}
\newcommand\calN{\mathcal N}
\newcommand\calO{\mathcal O}
\newcommand\calP{\mathcal P}
\newcommand\calR{\mathcal R}
\newcommand\calS{\mathcal S}
\newcommand\calU{\mathcal U}
\newcommand\llbr[1]{\left\llbracket #1\right\rrbracket}
\newcommand\R{\mathbb R}
\newcommand\N{\mathbb N}
\newcommand\Q{\mathbb Q}
\newcommand\Z{\mathbb Z}
\newcommand\Ber{\operatorname{Ber}}
\newcommand\angled[1]{\langle #1\rangle}
\newcommand\frakA{\mathfrak A}
\newcommand\frakB{\mathfrak B}
\newcommand\frakC{\mathfrak C}
\newcommand\frakD{\mathfrak D}
\newcommand\frakF{\mathfrak F}
\newcommand\commsquare[8]{
\begin{tikzcd}[ampersand replacement=\&]
{#1} \arrow[d, "{#4}"'] \arrow[r, "{#2}"] \& {#3} \arrow[d, "{#5}"] \\
{#6} \arrow[r, "{#7}"']                \& {#8}
\end{tikzcd}
}
\newcommand{\labelword}[2]{\phantomsection
  #1\def\@currentlabel{\unexpanded{#1}}\label{#2}}
\newcommand\appref[1]{the appendix}
\newcommand\annoyingappref[1]{Theorem~C.33}
\newcommand\appref[1]{\cref{#1}}
\newcommand\annoyingappref[1]{\cref{#1}}
\newcommand\arxivversion{}
\newcommand\input{[}1]{}
\newcommand\input{[}1]{\input{#1}}
\newcommand\goodparagraph[1]{\hspace{0em}\\[-0.5em]\textbf{\textit{#1.~}}}
\newcommand\fremlin[1]{\citet[#1]{fremlin2000measure}}
\newcommand\fremlinft[3]{\citet[\href{https://www1.essex.ac.uk/maths/people/fremlin/chap#2.pdf\#page=#3}{#1}]{fremlin2000measure}}
\newcommand\fremlinf[3]{\cite[\href{https://www1.essex.ac.uk/maths/people/fremlin/chap#2.pdf\#page=#3}{#1}]{fremlin2000measure}}
\newcommand\lilac[2]{\cite[\href{https://johnm.li/lilac.pdf\#page=#2}{#1}]{li2023lilac}}
\newcommand\lilact[2]{\citet[\href{https://johnm.li/lilac.pdf\#page=#2}{#1}]{li2023lilac}}
\newcommand\nref[2]{\hyperref[#1]{#2}}
\newcommand\flavortext{}
\newcommand\powerset{\mathscr P}
\newcommand\trueprop{\mathsf{True}}
\newcommand\falseprop{\mathsf{False}}
\newcommand\vtrue{\top}
\newcommand\vfalse{\bot}
\newcommand\hexnum[1]{\texttt{0x#1}}
\newcommand\funcat[2]{[#1;#2]}
\newcommand\catelts{{\nref{def:catelts}{\mathrm{El}}}}
\newcommand\Shjat{\operatorname{Sh}_{{\rm atomic}}}
\newcommand\FinInj{{\nref{def:FinInj}{\mathbf{Inj}_{<\omega}}}}
\newcommand\FinInjop{{\nref{def:FinInj}{\mathbf{Inj}_{<\omega}\op}}}
\newcommand\tgsets[1]{{#1{\nref{app:def:gsets}{\,\Set}}}}
\newcommand\op{^{\rm op}}
\newcommand\monunit{\mathrm{I}}
\newcommand\scmfst{\mathop{\nref{app:def:semicartesian-monoidal-category}{\mathrm{fst}}}}
\newcommand\scmsnd{\mathop{\nref{app:def:semicartesian-monoidal-category}{\mathrm{snd}}}}
\newcommand\Set{\mathbf{Set}}
\newcommand\Core{\mathrm{Core}}
\newcommand\measalg{\operatorname{\nref{app:cons:measalg}{alg}}}
\newcommand\lmafp{\mathbin{\hat\otimes}}
\newcommand\atom[1]{\operatorname{atom}(#1)}
\newcommand\aseq{=_{\rm a.s.}}
\newcommand\shotimes{\mathop{\otimes}}
\newcommand\neset[2]{\mathrm{D}_{#1,#2}}
\newcommand\negligibles{\mathrm{negligibles}}
\newcommand\texforgetmu{\mathrm{U}}
\newcommand\forgetmu{{\nref{def:forgetmu}{\texforgetmu}}}
\newcommand\dforgetmu{{\nref{def:dforgetmu}{\mathrm{U}_{\rm d}}}}
\newcommand\pbmeas[2]{{\nref{app:cons:pbmeas}{#1}^{-1}}#2}
\newcommand\texhcube{{{{\mathbb I}^\omega}}}
\newcommand\hcube{{\nref{def:hcube}{\texhcube}}}
\newcommand\texagoodname{absolutely continuous set}
\newcommand\texAgoodname{Absolutely continuous set}
\newcommand\agoodname{\texagoodname}
\newcommand\aagoodname{an \texagoodname}
\newcommand\Agoodname{\texAgoodname}
\newcommand\Agoodnamenolink{\texAgoodname}
\newcommand\agoodnamenolink{\texagoodname}
\newcommand\dagoodname{discrete \texagoodname{}}
\newcommand\agoodnamecat{\mathbf{Set}^\ll}
\newcommand\sheafcatadj{enhanced measurable}
\newcommand\Sheafcatadj{Enhanced measurable}
\newcommand\texgoodsheaf{\sheafcatadj{} sheaf}
\newcommand\texGoodsheaf{\Sheafcatadj{} sheaf}
\newcommand\goodsheaf{\texgoodsheaf}
\newcommand\Goodsheaf{\texGoodsheaf}
\newcommand\texgoodsheaves{\sheafcatadj{} sheaves}
\newcommand\texGoodsheaves{\Sheafcatadj{} sheaves}
\newcommand\goodsheaves{\texgoodsheaves}
\newcommand\Goodsheaves{\texGoodsheaves}
\newcommand\Goodsheavesnolink{\Sheafcatadj{} sheaves}
\newcommand\goodsheavesnolink{\sheafcatadj{} sheaves}
\newcommand\goodsheafcat{\mathbf{EMS}}
\newcommand\dgoodsheaves{discrete \sheafcatadj{} sheaves}
\newcommand\Dgoodsheaves{Discrete \sheafcatadj{} sheaves}
\newcommand\texStdProb{\mathbf{Prob}_{\rm std}}
\newcommand\StdProb{{\nref{def:stdprob}{\texStdProb}}}
\newcommand\StdProbAlg{\nref{def:StdProbAlg}{\mathbf{ProbAlg}_{\rm std}}}
\newcommand\StdProbAlgop{\nref{def:StdProbAlg}{\mathbf{ProbAlg}_{\rm std}\op}}
\newcommand\texStdMble{\mathbf{EMS}_{\rm std}}
\newcommand\StdMble{{\nref{def:mblecat}{\texStdMble}}}
\newcommand\StdMbleop{{\nref{def:mblecat}{\texStdMble\op}}}
\newcommand\StdMbleAlg{{\nref{def:StdMbleAlg}{\mathbf{MbleAlg}_{\rm std}}}}
\newcommand\StdMbleAlgop{{\nref{def:StdMbleAlg}{\mathbf{MbleAlg}_{\rm std}\op}}}
\newcommand\CountProb{{\nref{def:CountProb}{\mathbf{Prob}_{\le\omega}^+}}}
\newcommand\CSur{{\nref{def:CSur}{\mathbf{Surj}_{\le\omega}}}}
\newcommand\CSurop{{\nref{def:CSur}{\mathbf{Surj}_{\le\omega}\op}}}
\newcommand\Store{{\nref{def:store}{\mathrm{S}}}}
\newcommand\Prop{{\nref{prop:ext-ren-inv-prop}{\mathrm{Prop}}}}
\newcommand\Loc{{\nref{prop:warp-in-the-atomic-sheaves}{\mathrm{Loc}}}}
\newcommand\Sch{{\nref{def:Sch}{\mathbf{Sch}}}}
\newcommand\inl{\mathrm{inl}}
\newcommand\inr{\mathrm{inr}}
\newcommand\enc{\mathrm{enc}}
\newcommand\dec{\mathrm{dec}}
\newcommand\nomgrp{{\nref{def:nomgrp}{S_\omega}}}
\newcommand\Nom{{\nref{def:Nom}{\mathbf{Nom}}}}
\newcommand\nomstore{\mathbin{\xrightharpoonup{\rm fin}}}
\newcommand\nomLoc{{\nref{prop:warp-in-the-gsets}{\overline{\mathrm{Loc}}}}}
\newcommand\nomProp{{\nref{def:nomProp}{\overline{\mathrm{Prop}}}}}
\newcommand\nomStore{{\nref{def:nomStore}{\overline{\mathrm{S}}}}}
\newcommand\sepcon{\mathbin{*}}
\newcommand\sepimp{\mathbin{-\mkern-2mu*}}
\newcommand\dauti{\nref{def:dauti}{\Aut[0,1]}}
\newcommand\nomDpspcsprelim{{\nref{def:nomDpspcsprelim}{\overline{\mathbb{P}}}}}
\newcommand\nomDpspcs{{\nref{def:nomDpspcs}{\overline{\mathbb{P}}}}}
\newcommand\nomDRVprelim{{\nref{def:nomDRVprelim}{\overline{\mathrm{RV}}}}}
\newcommand\nomDRV{{\nref{prop:dwarp-in-the-gsets}{\overline{\mathrm{RV}}}}}
\newcommand\discsubscript{_{\rm d}}
\newcommand\dgsets{{\nref{def:dgsets}{\agoodnamecat\discsubscript}}}
\newcommand\Dpspcsprelim{{\nref{def:Dpspcsprelim}{\mathbb{P}}}}
\newcommand\Dpspcs{{\nref{def:Dpspcs}{\mathbb{P}}}}
\newcommand\DRVprelim{{\nref{def:DRVprelim}{\mathrm{RV}}}}
\newcommand\DRV{{\nref{prop:dwarp-in-the-atomic-sheaves}{\mathrm{RV}}}}
\newcommand\dsheafcat{{\nref{def:dsheafcat}{\goodsheafcat\discsubscript}}}
\newcommand\RV{\mathrm{RV}}
\newcommand\RVshf{{\nref{lem:rv-mble-sheaf}{\RV}}}
\newcommand\pcmjoin{{\nref{app:def:pspcs-tensor-join}{\mathrm{join}}}}
\newcommand\pcmunit{{\nref{app:def:pspcs-join-unit}{\mathrm{emp}}}}
\newcommand\texpspcs{\mathbb P}
\newcommand\pspcs{{\nref{def:mble-pspcs}{\texpspcs}}}
\newcommand\gProp{{\mathrm{Prop}}}
\newcommand\sheafcat{{\nref{def:sheafcat}{\goodsheafcat}}}
\newcommand\gpcmjoin{{\nref{app:def:gpspcs-join}{\overline{\mathrm{join}}}}}
\newcommand\gpcmunit{{\nref{app:def:emp-gpspcs}{\overline{\mathrm{emp}}}}}
\newcommand\gkrmorder{\mathbin{\nref{app:def:pspc-ord}{\overline{\sqsubseteq}}}}
\newcommand\gkrmorderop{\mathbin{\nref{app:def:pspc-ord}{\overline{\sqsupseteq}}}}
\newcommand\texauti{\mathbf{G}^{\ll}}
\newcommand\auti{{\nref{def:auti}{\texauti}}}
\newcommand\preauti{{{\Aut_\StdMble\hcube}}}
\newcommand\Fix{\operatorname{Fix}}
\newcommand\nomFix{\operatorname{\nref{def:nomFix}{Fix}}}
\newcommand\Stab{\operatorname{Stab}}
\newcommand\texgpspcs{\overline{\mathbb P}}
\newcommand\gpspcs{{\nref{def:gpspcs}{\texgpspcs}}}
\newcommand\texgRV{\overline{\operatorname{RV}}}
\newcommand\gRV{{\nref{def:gRV}{\texgRV}}}
\newcommand\cinfty{{{\rm c}_\infty}}
\newcommand\subgrpcat{{\bf Fix}}
\newcommand\cgsets{{\nref{def:cgsets}{\agoodnamecat}}}
\newcommand*\pdot{\mathpalette\pdot@{.8}}
\newcommand*\pdot@[2]{\mathbin{\vcenter{\hbox{\scalebox{#2}{$\m@th#1\bullet$}}}}}
\newcommand\krmorder{\mathrel{\nref{app:def:mble-pspcs-ordering}{\sqsubseteq}}}
\newcommand\Dicom{\mathbin{\nref{def:Dicom}{\overline\pdot}}}
\newcommand\Dorder{\mathbin{\nref{def:Dorder}{\overline\sqsubseteq}}}
\author{John M. Li}
\affiliation{\institution{Northeastern University}
  \city{Boston, MA}
  \country{U.S.A.}}
\email{li.john@northeastern.edu}
\author{Jon Aytac}
\affiliation{\institution{Sandia National Laboratories}
  \city{Livermore, CA}
  \country{U.S.A.}}
\email{jmaytac@sandia.gov}
\author{Philip Johnson-Freyd}
\affiliation{\institution{Sandia National Laboratories}
  \city{Livermore, CA}
  \country{U.S.A.}}
\email{pajohn@sandia.gov}
\author{Amal Ahmed}
\affiliation{\institution{Northeastern University}
  \city{Boston, MA}
  \country{U.S.A.}}
\email{amal@ccs.neu.edu}
\author{Steven Holtzen}
\affiliation{\institution{Northeastern University}
  \city{Boston, MA}
  \country{U.S.A.}}
\email{s.holtzen@northeastern.edu}
\title{A Nominal Approach to Probabilistic Separation Logic}
\begin{abstract}

  Currently, there is a gap between the tools used
  by probability theorists and those used in formal reasoning about
  probabilistic programs. On the one hand, a probability theorist decomposes
  probabilistic state along the simple and natural product of probability
  spaces. On the other hand, recently developed probabilistic separation logics
  decompose state via relatively unfamiliar
  measure-theoretic constructions for computing unions of
  sigma-algebras and probability measures. We bridge the gap between these two
  perspectives by showing that these two methods of decomposition are equivalent
  up to a suitable equivalence of categories. Our main result is a probabilistic analog of the
  classic equivalence between the category of nominal sets and the Schanuel
  topos. Through this equivalence, we validate design decisions in prior work
  on probabilistic separation logic and create new connections to
  nominal-set-like models of probability.

\end{abstract}
\begin{document}
  \maketitle

\section{Introduction}

Separation logic~\cite{reynolds2002separation}, now a standard tool for reasoning about
programs with shared mutable state, grew out of Reynolds's 
Syntactic Control of Interference~\cite{reynolds1978syntactic} --- a
substructural system for controlling the interaction of imperative program fragments.
The basic ingredients
for today's interpretations of separation logic connectives,
present in the original model of Syntactic Control of Interference~\cite{o1993model},
can be seen as living in a category of functors known as the Schanuel topos,
with noninterference defined in terms of the
coproduct of finite sets.
Over the years, this model has been reformulated to suit the needs of
formal reasoning about imperative programs: modern models of separation logic
live not in the Schanuel topos, but in categories more like $\Set$,
and separation is interpreted not by coproduct,
but by algebraic structures such as
partial commutative monoids (PCMs)~\cite{galmiche2005semantics,biering2004logic}.
In particular, the now-standard model of separation logic
in which separating conjunction splits
stores into disjoint pieces is defined in terms of
the partial function $\uplus$
sending a pair of disjoint stores to their union,
giving rise to a PCM of stores.
This shift in perspective
is justified by a classic equivalence of categories:
\begin{fact} \label{fact:folklore}
The Schanuel topos $\Sch$ is equivalent to the category $\Nom$
of nominal sets,
and the original coproduct-based model of separation
in $\Sch$
corresponds to the standard union-based model in $\Nom$
across this equivalence.\footnote{For a good reference documenting this equivalence, see
\citet[\S6.3]{pitts2013nominal}.}
\end{fact}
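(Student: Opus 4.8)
The plan is to exhibit an explicit pair of functors between $\Nom$ and $\Sch$, show they are mutually inverse equivalences, and then check that each carries its own model of separation onto the other. I take $\Sch$ in its standard presentation as the category of pullback-preserving (covariant) functors $F \colon \mathbf{Inj}_{<\omega} \to \Set$ --- equivalently, sheaves for the atomic topology --- where $\mathbf{Inj}_{<\omega}$ is the category of finite subsets of $\Names$ and injections, and I take $\Nom$ to be the category of $\Somega$-sets in which every element has finite support.

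First I would define the comparison functors. In one direction, send a nominal set $X$ to the functor $\Phi X$ with $(\Phi X)(A) = \{x \in X : \supp(x) \subseteq A\}$; an injection $i \colon A \hookrightarrow B$ acts by choosing any permutation $\pi \in \Somega$ extending $i$ and setting $x \mapsto \pi \cdot x$, which is well defined precisely because $\supp(x) \subseteq A$ forces $\pi \cdot x$ to depend only on $i|_{\supp(x)}$, and the result is supported by $i(A) \subseteq B$. In the other direction, send a functor $F$ to the filtered colimit $\Psi F = \colim_{A} F(A)$ taken over the inclusions of finite subsets of $\Names$; the $\Somega$-action arises from functoriality of $F$ on permutations, and every colimit element, represented by some $x \in F(A)$, is supported by $A$, so $\Psi F$ is genuinely nominal.

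The routine half is $\Psi\Phi \cong \mathrm{id}$: since every $x \in X$ has finite support it already appears in $(\Phi X)(\supp x)$, and the colimit collapses back onto $X$. The subtle half is $\Phi\Psi \cong \mathrm{id}$, and this is where I expect the real work to be. I would show that for pullback-preserving $F$ the set $(\Phi\Psi F)(A)$ of colimit elements supported by $A$ is canonically isomorphic to $F(A)$, naturally in $A$. The crux is that pullback preservation forces every germ to have a well-defined least support: applying $F$ to the pullback square with corners $A \cap B,\, A,\, B,\, A \cup B$ yields a pullback in $\Set$, so a germ represented over both $A$ and $B$ is already represented over $A \cap B$, and intersections of supporting sets still support. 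This is exactly the content of the atomic-sheaf condition, and discharging it cleanly --- together with the naturality bookkeeping --- is the main obstacle of the whole proof.

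Finally, for the correspondence of separation models I would transport the monoidal structure. On $\mathbf{Inj}_{<\omega}$ the relevant operation is the coproduct $+$, i.e.\ disjoint union of finite sets, and the coproduct-based separation in $\Sch$ is its Day convolution $(F \otimes G)(C) = \colim_{B + B' \to C} F(B) \times G(B')$. Under $\Phi$ and $\Psi$ this should match the separated product $X \sepcon Y = \{(x,y) : \supp(x) \cap \supp(y) = \emptyset\}$ on nominal sets, since a germ supported by a coproduct $B + B'$ decomposes uniquely into a $B$-part and a disjoint $B'$-part. Concretely I would verify $\Phi(X \sepcon Y) \cong \Phi X \otimes \Phi Y$ by matching the convolution formula to pairs of disjoint support, and check that the units (the empty set, respectively the empty store) agree; monoidal coherence and naturality then follow formally once the underlying equivalence is in hand.
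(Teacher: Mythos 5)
Your proposal is correct in outline, and it is essentially the classical argument given in the reference the paper itself cites for this fact (\citet[\S 6.3]{pitts2013nominal}): exhibit the equivalence explicitly via support-sections $(\Phi X)(A)=\{x\in X\mid \supp(x)\subseteq A\}$ and the filtered colimit of germs $\Psi F=\colim_A F(A)$, then match the Day convolution over the coproduct on $\FinInj$ against the separated product of nominal sets. But this is a genuinely different route from the paper's own treatment. The paper never builds the functors by hand: it isolates two abstract properties of the inclusions of finite sets into $\N$ --- Homogeneity (\cref{lem:nom-homogeneity}) and Correspondence (\cref{def:Correspondence}) --- feeds them into the general equivalence theorem for atomic sites and continuous group actions (\citet[Theorem III.9.2]{maclane2012sheaves}, abstracted in the appendix as a ``nominal situation'', \cref{app:def:nominal-situation,app:thm:nominal-situation}), and expresses the correspondence of the two models by packaging each side as a resource monoid (\cref{prop:semantic-domains-corresp,prop:sch-nom-krm}), so that arbitrary internal-language constructions, not just $\sepcon$, transport across the equivalence. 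What each buys: your route is elementary, self-contained, and yields concrete formulas; the paper's route is the one that generalizes, since its probabilistic results (\cref{thm:discrete,thm:cont,thm:krm-corresp}) are obtained precisely by re-proving Homogeneity and Correspondence in new settings and re-running the identical abstract argument, which a hands-on construction tied to $\N$ and $\supp$ would not survive. If you carry your sketch out, two points need care: (i) in the direction $\Phi\Psi\cong\mathrm{id}$, the intersection/pullback property alone does not bridge ``germ permutation-supported by $A$'' to ``germ represented by an element of $F(A)$''; you must additionally choose a permutation fixing $A$ pointwise that moves the remainder of a representing set entirely off itself, and use injectivity of the transition maps $F(i)$ (itself a consequence of pullback preservation) to upgrade germ equality to honest equality in a fixed $F(C)$; (ii) the fact as the paper formalizes it also asserts the correspondence $\Store\leftrightarrow\nomStore$ and of the join and ordering maps (union of disjoint stores versus amalgamation through a coproduct, and the subvaluation orders on both sides), which are routine given your comparison $\Phi(X\sepcon Y)\cong\Phi X\otimes\Phi Y$ but are part of the claim and should be recorded.
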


Today, there is a pressing need for syntactic control of \emph{probabilistic}
interference --- that is, for establishing the \emph{probabilistic independence}
of program fragments. In response to this need, recent work has
developed a number of probabilistic separation
logics~\cite{barthe2019probabilistic,bao2021bunched,bao2022separation,li2023lilac},
whose semantic models are given by PCMs made of probability-theoretic objects.
Lilac~\cite{li2023lilac} is a separation logic whose PCM-based model
is particularly well-behaved: its notion of separation 
coincides with probabilistic independence~\cite[Lemma 2.5]{li2023lilac},
and yields a frame rule identical to the standard one for store-based separation logics.

However, Lilac's PCM model does not match a probability theorist's intuition.
One expects separation to be interpreted via a standard product
of probability spaces~\citep{kallenberg1997foundations}, but Lilac interprets separation
using \emph{independent combination}: a partial binary operation
on probability spaces constructed out of low-level set-theoretic
operations on $\sigma$-algebras.
Moreover, Lilac's model fixes up front an unconventional sample
space --- the space $[0,1]^\omega$
of infinite streams of real numbers in the interval, known as the Hilbert
cube --- and the soundness of Lilac's proof rules
depends on various properties specific to it.
These contrasts between Lilac's model and textbook probability
raise a question:
\emph{how do we know Lilac provides a good notion of separation for
probabilistic separation logic?}

We answer this question by showing Lilac's seemingly non-standard 
independent combination is in fact \emph{equivalent} to a probability theorist's 
product-based intuition of state decomposition.
Our result is a probabilistic analog of \cref{fact:folklore}:
just as the coproduct model of separation
corresponds to the now-standard model based on $\uplus$
across an equivalence between the Schanuel topos and $\Nom$,
the probability theorist's intuitive product-based model
of independence corresponds to Lilac's independent-combination-based model
across an equivalence between
a category of \emph{\goodsheaves}
and a category of \emph{\agoodname{s}} (\cref{thm:krm-corresp}).
Our contributions are as follows:
\begin{itemize}[leftmargin=*]
  \item 
    We introduce \emph{\agoodname{s}}:
    just as nominal sets are sets equipped with an action
    by permutations of names,
    \agoodname{s}
    are sets equipped with a continuous action 
    by measurable permutations
    of the Hilbert cube.
  \item
    We prove analogs of
    the equivalence $\Sch\simeq\Nom$
    for both discrete and continuous probability~(\cref{thm:discrete,thm:cont}).
    In particular, we show that
    the category $\cgsets$ of \agoodname{s}
    is equivalent to a topos $\sheafcat$ of \emph{\goodsheaves{}}:
    a probabilistic analog of the Schanuel topos.
  \item
    We show that $\cgsets$ provides a natural background category
    for a fragment of Lilac. 
    \cref{thm:krm-corresp} then shows that, by transporting across
    the equivalence $\cgsets\simeq\sheafcat$,
    Lilac's model corresponds
    to a 
    model
    in $\sheafcat$
    where separation
    arises naturally from product of probability spaces via Day convolution~\cite{dongol2016convolution,o1995syntactic,biering2004logic}.
\end{itemize}

\section{The Nominal Situation} \label{sec:nom-situation}

Our main result
is a probabilistic analog
of \cref{fact:folklore} (\cref{thm:krm-corresp}).
To set the stage, we first make \cref{fact:folklore}
a precise mathematical statement (\cref{prop:sch-nom-krm}).
We devote this section to describing the necessary pieces in 
this comfortable setting;
the material in this section is standard, but we will deviate
occasionally from the usual presentation in order to
focus on the aspects that are most relevant
to our eventual probabilistic counterpart.

At its core, \cref{fact:folklore} states that two
distinct approaches to modelling store-separation
are equivalent.
To illustrate this fact we will study
a tiny separation logic
consisting of propositions $P,Q$ about integer-valued stores:
\begin{align}
P,Q ::= x\mapsto i \mid \trueprop \mid P\sepcon Q.
\tag{\textsc{TinySep}}
\label{eq:tinysep}
\end{align}
\ref{eq:tinysep} propositions are well-formed according to a judgment $\Gamma\vdash P$
defined as usual:
a context $\Gamma$ is a set of logical variables $x$,
and $\Gamma \vdash P$
if $\Gamma$ contains the variables used in $P$.
\cref{fact:folklore} asserts the equivalence of
two different models for \ref{eq:tinysep}:

\newcommand\modelsh[2]{\llbr{#1}_1^{#2}}
\newcommand\modelgset[1]{\llbr{#1}_2}
\goodparagraph{Model~1: separation as coproduct}
  In this model, a store consists of two components: (1) a \emph{shape} $L$
  given as a finite set of available locations (i.e., memory addresses), and (2)
  a \emph{valuation} $s : L\rightharpoonup \Z$, a partial function assigning
  values to a subset of the shape. An example is shown in Figure~\ref{fig:coproduct};
  the store $s$ has shape $\{\hexnum0,\hexnum1, \hexnum2\}$, and the 
  valuation maps $\hexnum0 \mapsto 8$ and so on.
  Under this model, the meaning of a proposition
  depends on the shape $L$:
  the interpretation of a proposition $\Gamma \vdash P$
  has form $\modelsh{\Gamma\vdash P}L: (\Gamma\to L)\to \calP(L\rightharpoonup\Z)$,
  associating each substitution $\gamma : \Gamma\to L$ to
  the set $\modelsh{\Gamma\vdash P}L(\gamma)$ of $L$-shaped valuations
  satisfying $P$.

  Under this interpretation, we define
  $s\in\modelsh{\trueprop}L(\gamma)$ always
  and $s\in\modelsh{x\mapsto i}L(\gamma)$
  if and only if $s(\gamma(x)) = i$.
  Separating conjunction is defined via
  the coproduct of store shapes:
  $P_1\sepcon P_2$
  holds of an $L$-shaped valuation $s$
  if and only if there are valuations
  $s_1$ of shape $L_1$ and $s_2$ of shape $L_2$,
  and an injective function $i : L_1 + L_2\hookrightarrow L$
  embedding the coproduct $L_1+L_2$ into $L$ such that
  $s_1$ satisfies $P_1$ and $s_2$ satisfies $P_2$
  and $s_1,s_2$ embed into $s$ along $i$.
  This situation 
  is visualized in Figure~\ref{fig:coproduct}. For example,
  \begin{align*}
  s \in \modelsh{(x\mapsto 8)\sepcon(y\mapsto 3)}{
    \{\hexnum0,\hexnum1,\hexnum2\}}(\{x\mapsto \hexnum0,
    y\mapsto \hexnum1\})
  \end{align*}
  is witnessed by setting
  $s_1$ to the $\{\hexnum0\}$-shaped valuation
  $\{\hexnum0 \mapsto 8\}$
  and $s_2$ to the $\{\hexnum0\}$-shaped valuation
  $\{ \hexnum0 \mapsto 3\}$
  and $i : \{\hexnum0\}+\{\hexnum1\}\hookrightarrow
  \{\hexnum0,\hexnum1\}$
  to the injection defined by 
  $i(\inl(\hexnum0)) = \hexnum0$
  and $i(\inr(\hexnum0)) = \hexnum1$, where 
  $\inl : L_1 \to L_1 + L_2$ and $\inr : L_2 \to L_1 + L_2$ are the coproduct injections.

  \begin{figure}
    \centering
  \begin{subfigure}[b]{0.4\linewidth}
    \centering
  \begin{tikzpicture}
    \matrix [column sep=1mm]
    {
                            & \node{$s$};      &[0.4cm] &  \node {$s_1$}; \\
      \node{\hexnum0};      & \node[draw] (s0) {8}; &      &  \node[draw] (s10) {8}; & \node{\hexnum0};\\
      \node{\hexnum1};      & \node[draw] (s1) {3}; &      &  \\
      \node{\hexnum2};      & \node[draw] (s2) {4}; &      &  \node[draw] (s21) {3}; & \node{\hexnum0};\\
                            &                  &      &  \node {$s_2$}; \\
    };
    \draw[->] (s10) -- (s0) node[fill=white,inner sep=2pt,midway] {$i$};
    \draw[->] (s21) -- (s1) node[fill=white,inner sep=2pt,midway] {$i$};
  \end{tikzpicture}
  \caption{Model~1: coproduct.}
  \label{fig:coproduct}
  \end{subfigure}
  \hfill
  \begin{subfigure}[b]{0.5\linewidth}
    \centering
  \begin{tikzpicture}
    \matrix [column sep=1mm]
    {
                           & \node {$s_1$};    &[0.1cm]               &[0.1cm] \node {$s_2$}; &[0.1cm]              & [0.1cm] \node {$s$};\\
      \node{\hexnum0};     & \node[draw] {8};  &                      & \node[draw] {};       &                     &  \node[draw] {8};      \\
      \node{\hexnum1};     & \node[draw] {};   & \node {$\biguplus$}; & \node[draw] {3};      & \node{$\subseteq$}; &  \node[draw] {3};     \\
      \node{\hexnum2};     & \node[draw] {};   &                      & \node[draw] {};      &                     &  \node[draw] {4};     \\
      \node{$\vdots$};     & \node{$\vdots$};  &                      & \node{$\vdots$};      &                     &  \node{$\vdots$}; \\
    };
  \end{tikzpicture}
  \caption{Model~2: union.}
  \label{fig:disjoint-union}
  \end{subfigure}
  \caption{Visualizing separation in Model~1 and Model~2.}
  \label{fig:schan-nom}
  \end{figure}
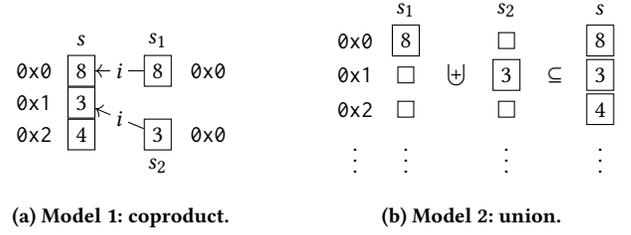

\goodparagraph{Model~2: separation as union}
In this model, one fixes upfront
a ``universal store shape''
into which all store shapes can be embedded.
Any countably-infinite set will do;
we choose the natural numbers $\N$.
A store is a partial
function $s : \N\nomstore\Z$
defined on finitely many values of its domain,
and a proposition $\Gamma\vdash P$
denotes a function
$\modelgset{\Gamma\vdash P} : (\Gamma\to\N)\to\calP(\N\nomstore\Z)$.
The interpretations of $\trueprop$ and $x\mapsto i$
are as in the shape-indexed model:
$s\in \modelgset\trueprop(\gamma)$
always and $s\in\modelgset{x\mapsto i}(\gamma)$
if and only if $s(\gamma(x)) = i$.
Separating conjunction is defined via union
of stores: a store $s$ is in $\modelgset{P_1\sepcon P_2}(\gamma)$ 
if and only if there exist disjoint stores $s_1$ and $s_2$ 
with $s_1 \uplus s_2 \subseteq s$ such that
$s_1$ is in $\modelgset{P_1}(\gamma)$
and $s_2$ is in $\modelgset{P_2}(\gamma)$.
Figure~\ref{fig:disjoint-union} visualizes an example:
$s\in\modelgset{(x\mapsto 8)\sepcon(y\mapsto 3)}\{x\mapsto \hexnum0,y\mapsto \hexnum1\}$
holds because $s_1$ and $s_2$
have a union contained in $s$
and $s_1$ satisfies $x \mapsto 8$
and $s_2$ satisfies $y \mapsto 3$.

\goodparagraph{Relating the two models}
Model~1 and Model~2 are equivalent by
\cref{fact:folklore}.
The equivalence is based on the following idea.
Every store shape $L$ can be 
encoded as a finite subset of $\N$
via a suitable pair of functions
$\enc_L: L\to\N$
and 
$\dec_L : \N\rightharpoonup L$.
Choosing an arbitrary such pair $(\enc_L,\dec_L)$ for every $L$
allows translating Model~1 into Model~2
in a bijective way:
an $L$-shaped store $s : L\rightharpoonup \Z$
corresponds to a finite partial function $s\circ \dec_L : \N\nomstore\Z$,
and a Model-1-substitution $\gamma : \Gamma\to L$
corresponds to a Model-2-substitution $\enc_L \circ \gamma : \Gamma\to \N$.
Via these translations, it holds that
 $s\in \modelsh{\Gamma\vdash P}L(\gamma)$ if and only if 
$s\circ\dec_L\in \modelgset{\Gamma\vdash P}(\enc_L\circ \gamma)$
for all $L$-shaped valuations $s$, propositions $\Gamma\vdash P$,
and substitutions $\gamma : \Gamma\to L$,
so both models induce the same notion of store-satisfaction.

This equivalence should seem plausible enough
given how tiny \ref{eq:tinysep} is.
What is remarkable about \cref{fact:folklore} is that this equivalence
continues to hold
when the interpretations $\modelsh-{(-)}$ and $\modelgset-$
are extended
to include all the usual features of separation logic,
including separating implication $\sepimp$,
the intuitionistic connectives $\land,\lor,\to,\falseprop$,
quantification
at both first-order and higher type, quantification over propositions,
predicates defined by structural recursion, and so on.
In short, the semantic domains of Model~1 are equivalent in expressive power
to those of Model~2.

Rather than laboriously verifying one by one
that the standard interpretations of
each of these features coincide,
\cref{fact:folklore} establishes a general result.
The key is to place Model~1 and Model~2
into the context of suitable categories
that bring out their essential structure.
Model~1
naturally lives in a category $\Sch$
called the \emph{Schanuel topos}:
the interpretation $\modelsh{\Gamma\vdash P}{(-)}$
of a proposition $P$ defines a $\Sch$-morphism
from a $\Sch$-object representing $\Gamma$-substitutions
to a $\Sch$-object representing store-predicates.
Model~2 naturally lives
in the category $\Nom$
of \emph{nominal sets}~\cite{pitts2013nominal}:
the interpretation
$\modelgset{\Gamma\vdash P}$
defines a $\Nom$-morphism from a nominal set
of $\Gamma$-substitutions to a nominal set of store-predicates.
Having placed Model~1 and Model~2
into suitable
background categories,
\cref{fact:folklore} follows from
a classic theorem:
the categories $\Sch$ and $\Nom$
are known to be equivalent~\cite[\S 6.3]{pitts2013nominal},
and inspecting the proof of this equivalence
shows that the functor $\Sch\to\Nom$ witnessing it
sends Model~1 to Model~2 via the construction
involving $(\enc_L,\dec_L)$.

The rest of this section is devoted to filling
in the details of this category-theoretic setup.
First we will describe how Model~1 lives in $\Sch$
and Model~2 lives in $\Nom$.
Then we will highlight the essential properties
of this setup that make the equivalence $\Sch\simeq\Nom$
possible,
and how Model~1 and Model~2
are instances of the same structure
across this equivalence;
\cref{thm:krm-corresp}
relies crucially on
identifying analogous properties in the probabilistic setting.

\subsection{Model~1 in the Schanuel topos} \label{sec:schanuel-topos}
\newcommand\Extension{\nref{def:Extension}{Extension}}
\newcommand\Renaming{\nref{def:Renaming}{Renaming}}
\newcommand\Restriction{\nref{def:Restriction}{Restriction}}

In this section we describe how
Model~1
of \cref{sec:nom-situation}
naturally lives in the Schanuel topos $\Sch$.
The benefit of this
is that it makes the invariants
maintained by 
$\modelsh-{(-)}$
explicit: the category $\Sch$
is such that all
constructions that make categorical sense --- i.e.,
are well-defined as objects and morphisms of $\Sch$ ---
are forced to preserve all invariants.
The invariants in this case
are the following principles
one intuitively expects to hold
when reasoning about stores:
\begin{itemize}[leftmargin=*]
  \item \emph{\labelword{Extension}{def:Extension}}:
    propositions should continue to hold
    when new locations are introduced (such as
    when declaring a local variable or allocating a reference).
    More precisely, if
    $s\in\modelsh{\Gamma\vdash P}L(\gamma)$
    for some $L$-shaped valuation $s$
    and substitution $\gamma : \Gamma\to L$,
    and if $L$ is a subset of some extended set of locations $L'$,
    then it should hold that
    $s\in\modelsh{\Gamma\vdash P}{L'}(\gamma)$,
    where we have implicitly coerced $s$
    into an $L'$-shaped valuation
    and $\gamma$ into an $L'$-shaped substitution $\Gamma\to L'$
    along the inclusion $L\subseteq L'$.
    \footnote{This makes the separation logic affine
    rather than linear; we will restrict our attention to affine separation logics in this paper,
    as Lilac is affine and our main goal is to obtain models for it.}
  \item \emph{\labelword{Renaming}{def:Renaming}}: propositions should be 
    stable under renaming of locations.
    More precisely, if $s\in\modelsh{\Gamma\vdash P}L(\gamma)$
    for some $L$-shaped valuation $s$
    and substitution $\gamma : \Gamma\to L$,
    and if $f$ is a bijective function $L\to L'$,
    then it should hold that
    $s\circ f^{-1}\in\modelsh{\Gamma\vdash P}{L'}(f\circ \gamma)$.
  \item \emph{\labelword{Restriction}{def:Restriction}}: the truth of a proposition
    should not depend on any unused locations.
    For example, suppose a proposition $P$ 
    holds of the $\{\ell_1,\ell_2\}$-shaped valuation $\{\ell_1\mapsto 1\}$,
    which does not use the location $\ell_2$.
    Then $P$ should also hold of
    $\{\ell_1\mapsto 1\}$ considered as an $\{\ell_1\}$-shaped valuation.
\end{itemize}
As basic principles of
store-based reasoning, it is crucial
that these invariants are preserved by the
basic separation logic connectives:
if $P$ and $Q$ satisfy \Extension{}, \Renaming{}, and \Restriction{},
then their separating conjunction $P\sepcon Q$,
separating implication $P\sepimp Q$,
conjunction $P\land Q$, and implication $P\to Q$
should as well.

A general strategy for preserving
invariants like this is to work with $\Set$-valued functors out of a
category $C$ capturing them.
Such functors are very well-behaved:
in particular, many subcategories of the
functor category $\funcat{C\op}{\Set}$,
called \emph{categories of sheaves on $C$},
are automatically cartesian closed, and
can be used to quickly obtain
invariant-preserving interpretations of logical connectives.
Placing Model~1 into the Schanuel topos $\Sch$
is an instance of this idea.
The Schanuel topos is a particular subcategory of $\funcat{C\op}\Set$,
where $C$ is chosen so that functors $C\op\to\Set$
capture \Extension{} and \Renaming{},
consisting only of functors that are \emph{atomic sheaves} in order to capture \Restriction{}.
We build up to this model in steps.

\subsubsection{The base category $C$} \label{sec:nominal-fixing-base-category}
Essentially, \Extension{} says propositions should be stable under
subset-inclusions $L\subseteq L'$
and \Renaming{} says they should be stable under
bijections.
These two invariants
can be packaged into a \emph{category of store shapes}:
\newcommand\StoreSh{{\nref{def:storesh}{\mathbf{Shp}}}}
\begin{definition} \label{def:storesh}
  Let $\StoreSh$ be the category whose objects
  are finite sets $L$ and whose morphisms from $L$
  to $M$ are functions $M\to L$
  definable by composing subset-inclusions and bijections.
\end{definition}
Note the direction $M\to L$ is the reverse of what one
might expect; this is because we will consider contravariant functors
on $\StoreSh$.
Intuitively, there is a morphism $M \to L$ if $L$ is a ``smaller'' shape
than $M$.
Since every composite of subset-inclusions and bijections
is an injective function, and every injective function
is bijective onto its image,
the category $\StoreSh$ has a simple abstract description:
\begin{proposition}
  The category $\StoreSh$
  is equal to $\FinInjop$,
  \labelword{where}{def:FinInj} $\FinInj$ is the category of
  injective functions between finite sets.
\end{proposition}
With $\StoreSh$ in hand,
functors $\StoreSh\op\to\Set$ (equivalently,
functors $\FinInj\to\Set$) model
\Extension{}- and \Renaming{}-invariant concepts.
In particular, there is a functor
modelling stores:
\begin{definition}[Store functor] \label{def:store}
  The \emph{store functor} $\Store : \StoreSh\op\to\Set$ is a 
  functor that sends a finite set
  $L$ to the set of 
  all $L$-shaped valuations
  and a 
  $\FinInj$-morphism $i : L\hookrightarrow M$ 
  to a function coercing $\Store(M)$ into $\Store(L)$
  defined by
  $\Store(i)(L,s) = (M,s')$, where
  $s'$ is the valuation $M\rightharpoonup\Z$
  defined by $s'(m) = s(l)$ iff $m = i(l)$ for some $l$ in $L$.
\end{definition}

The action of $\Store$ on $\StoreSh$-morphisms
captures the operations that we expect to be invariant under:
if $i$ is a subset inclusion $L\subseteq L'$,
then $\Store(i)$ coerces $L$-shaped stores
into $L'$-shaped stores as in the description of \Extension{},
and if $f$ is a bijective function $L\to L'$,
then $\Store(f)$ sends an $L$-shaped valuation $s$
to an $L'$-shaped valuation $s\circ f^{-1}$
as in the description of \Renaming{}.

\subsubsection{Using sheaves to capture Restriction}

Recall the example used to illustrate \Restriction:
if a proposition holds of the $\{\ell_1,\ell_2\}$-shaped
valuation $\{\ell_1\mapsto 1\}$,
then it should also hold of $\{\ell_1\mapsto 1\}$
considered as an $\{\ell_1\}$-shaped valuation.
We say that $\{\ell_1\mapsto 1\}\in\Store\{\ell_1,\ell_2\}$
\emph{restricts to $\{\ell_1\mapsto 1\}\in\Store\{\ell_1\}$
along $i$}, where $i$ is the 
subset-inclusion $\{\ell_1\}\subseteq\{\ell_1,\ell_2\}$.
This is an instance of a more general property satisfied
by the functor $\Store$:

\begin{proposition} \label{prop:store-atomic-sheaf-cond}
  Let $i : L\hookrightarrow M$ be an injective function between finite sets $L$ and $M$,
  and $s\in\Store(M)$ an $M$-shaped valuation.
  If $\dom(s)\subseteq\img(i)$,
  then there exists a unique $L$-shaped valuation $s'\in\Store(L)$,
  the \emph{restriction of $s$ along $i$},
  such that $\Store(i)(s') = s$.
\end{proposition}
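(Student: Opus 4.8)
The plan is to construct the required $s'$ explicitly and then read off both existence and uniqueness directly from the definition of $\Store(i)$. First I would unwind what the equation $\Store(i)(s') = s$ means. Since $\Store(i) : \Store(L)\to\Store(M)$ sends an $L$-shaped valuation $t$ to the $M$-shaped valuation supported on $i(\dom(t))$ taking value $t(l)$ at each $i(l)$, the equation $\Store(i)(s') = s$ is equivalent to the conjunction of (i) $i(\dom(s')) = \dom(s)$ and (ii) $s'(l) = s(i(l))$ for every $l\in\dom(s')$.

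For existence, I would define $s'$ to have domain $i^{-1}(\dom(s)) = \{\,l\in L : i(l)\in\dom(s)\,\}$ and set $s'(l) := s(i(l))$ there. Condition (ii) then holds by construction, and condition (i) reduces to the set identity $i\bigl(i^{-1}(\dom(s))\bigr) = \dom(s)$. This is exactly where the hypothesis $\dom(s)\subseteq\img(i)$ is used: for an arbitrary subset $A$ one only has $i(i^{-1}(A))\subseteq A$, and equality holds precisely when $A\subseteq\img(i)$. Hence $\Store(i)(s') = s$. It is worth noting that the hypothesis is also necessary, since every valuation in the image of $\Store(i)$ is supported inside $\img(i)$, so no preimage of $s$ could exist without it.

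For uniqueness, suppose $s''\in\Store(L)$ also satisfies $\Store(i)(s'') = s$. Applying (i) to both $s'$ and $s''$ gives $i(\dom(s'')) = \dom(s) = i(\dom(s'))$, and injectivity of $i$ then forces $\dom(s'') = \dom(s')$; applying (ii) to both gives $s''(l) = s(i(l)) = s'(l)$ pointwise, so $s'' = s'$. The whole argument is pure bookkeeping with partial-function domains once $\Store(i)$ has been unwound, so I do not expect a genuine obstacle; the only point worth flagging is that injectivity of $i$ does double duty --- it makes the formula $s'(l) = s(i(l))$ well-defined and it is what collapses $i(\dom(s'')) = i(\dom(s'))$ back to $\dom(s'') = \dom(s')$ --- and that the support hypothesis is precisely the condition placing $s$ in the image of $\Store(i)$.
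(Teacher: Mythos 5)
Your proof is correct. The paper states this proposition without proof (treating it as elementary bookkeeping about partial functions), and your explicit construction of $s'$ as $s\circ i$ restricted to $i^{-1}(\dom(s))$, with the hypothesis $\dom(s)\subseteq\img(i)$ used exactly to get $i(i^{-1}(\dom(s)))=\dom(s)$ and injectivity of $i$ used for uniqueness, is precisely the argument the authors intend.
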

\cref{prop:store-atomic-sheaf-cond} can be expressed more abstractly:
\begin{definition}
  Let $F$ be a functor $\StoreSh\op\to\Set$
  and $i : L\hookrightarrow M$ a $\FinInj$-morphism.
  An element $y$ of $F(M)$
  is \emph{restrictable along $i$}
  if for all $\FinInj$-objects
  $N$ and $\FinInj$-morphisms $j,k : M\to N$
  with $j\circ i = k\circ i$
  it holds that $F(j)(y) = F(k)(y)$.
\end{definition}

\begin{definition} \label{def:atomic-sheaf-condition}
  A functor $F : \StoreSh\op\to\Set$ 
  \emph{has a restriction operation}
  if for all $\FinInj$-morphisms $i : L\hookrightarrow M$
  and elements $y$ of $F(M)$ that are restrictable along $i$,
  there exists a unique $x\in F(L)$,
  called the \emph{restriction of $y$ along $i$},
  such that $y = F(i)(x)$.
\end{definition}
With these definitions in hand,
one can show
\cref{prop:store-atomic-sheaf-cond}
is equivalent to $\Store$ having a restriction operation.
Functors with a restriction operation
have a special name: they are
called \emph{atomic sheaves on $\StoreSh$}~\cite[Lemma III.4.2]{maclane2012sheaves}.
\labelword{The}{def:Sch} Schanuel topos $\Sch$ is the full subcategory of
$\funcat{\StoreSh\op}\Set$ consisting of atomic sheaves.

In these new terms,
\cref{prop:store-atomic-sheaf-cond} says
$\Store$ is an atomic sheaf on $\StoreSh$,
and so an object of $\Sch$.
Just as $\Store$ captures the concept of stores as shape-indexed valuations,
there are other atomic sheaves
for each of the other concepts used to define 
Model~1:

\begin{proposition} \label{prop:warp-in-the-atomic-sheaves} \label{prop:ext-ren-inv-prop}
  The following are objects of $\Sch$: \begin{itemize}[leftmargin=*]
    \item The constant functor $\Prop$
        sending every object of $\StoreSh$
        to the set $\{\vtrue,\vfalse\}$
        and every morphism of $\StoreSh$ to the identity function.
    \item The functor $\Loc$ of locations,
      defined by $\Loc(L) = L$
      on objects of $\StoreSh$
      and $\Loc(i : L\hookrightarrow L')(l : L) = i(l)$
      on $\FinInj$-morphisms $i : L\hookrightarrow L'$.
    \item The functor $\Loc^\Gamma$
      of $\Gamma$-substitutions,
      which maps objects $L$ to the set of all substitutions $L \rightarrow \Gamma$,
      and action on $\FinInj$-morphisms inherited pointwise from $\Loc$.
  \end{itemize}
\end{proposition}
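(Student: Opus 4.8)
The plan is to verify, for each of the three functors $F\in\{\Prop,\Loc,\Loc^\Gamma\}$, first that $F$ is a well-defined functor $\StoreSh\op\to\Set$ (equivalently $\FinInj\to\Set$) and then that $F$ has a restriction operation in the sense of \cref{def:atomic-sheaf-condition}, so that $F$ is an atomic sheaf and hence an object of $\Sch$. Functoriality is routine in each case: every transition map is either the identity ($\Prop$) or post-composition by the relevant injection ($\Loc$, $\Loc^\Gamma$), so identities and composites are preserved. Thus the real content is the sheaf condition. The functor $\Prop$ is immediate: since each $\Prop(i)$ is the identity on $\{\vtrue,\vfalse\}$, every $y\in\Prop(M)$ is restrictable along any $i:L\hookrightarrow M$ (the condition $\Prop(j)(y)=\Prop(k)(y)$ reads $y=y$), and the unique $x$ with $\Prop(i)(x)=y$ is $x=y$. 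The same argument shows that any constant functor is an atomic sheaf.

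For $\Loc$ and $\Loc^\Gamma$ the transition maps are post-composition by $i$: one has $\Loc(i)(l)=i(l)$, and reading $\Loc^\Gamma$ as the $\Gamma$-fold power of $\Loc$ (so $\Loc^\Gamma(L)=\Loc(L)^\Gamma$, the $\Gamma$-tuples of locations, i.e.\ substitutions $\gamma:\Gamma\to L$), $\Loc^\Gamma(i)$ acts pointwise as $\gamma\mapsto i\circ\gamma$. The key step will be to identify the restrictable elements: I claim that for $i:L\hookrightarrow M$, an element $y\in\Loc(M)=M$ (resp.\ $\gamma\in\Loc^\Gamma(M)$) is restrictable along $i$ if and only if it factors through the image of $i$, namely $y\in\img(i)$ (resp.\ $\gamma(\Gamma)\subseteq\img(i)$). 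Granting this, existence and uniqueness of the restriction come for free: since $i$ is injective, it is a bijection onto $\img(i)$, so there is a unique preimage $x=i^{-1}(y)$ (resp.\ $x=i^{-1}\circ\gamma$) in $L$, and it satisfies $\Loc(i)(x)=y$ (resp.\ $\Loc^\Gamma(i)(x)=\gamma$).

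The one nontrivial direction, and the main obstacle, is showing that a restrictable element must factor through $\img(i)$; the converse follows from the functoriality computation $F(j)(F(i)(x))=F(j\circ i)(x)=F(k\circ i)(x)=F(k)(F(i)(x))$. For the hard direction I would argue by contraposition using a fresh point. Suppose $y\notin\img(i)$ (resp.\ $\gamma(g_0)\notin\img(i)$ for some $g_0\in\Gamma$). Let $N=M\sqcup\{\star\}$ with $\star$ fresh, let $j:M\hookrightarrow N$ be the inclusion, and let $k:M\hookrightarrow N$ be the injection agreeing with $j$ everywhere except that it sends the offending point $y$ (resp.\ $\gamma(g_0)$) to $\star$. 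Both $j,k$ are $\FinInj$-morphisms, and $j\circ i=k\circ i$ because $j$ and $k$ agree on $\img(i)$, which excludes the offending point. But $\Loc(j)(y)=y\neq\star=\Loc(k)(y)$ (resp.\ the tuples $j\circ\gamma$ and $k\circ\gamma$ differ in coordinate $g_0$), so $y$ is not restrictable. This establishes the claim and completes the verification that $\Loc$ and $\Loc^\Gamma$ are atomic sheaves.

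Finally, I note that $\Loc^\Gamma$ also follows abstractly once $\Loc$ is known to lie in $\Sch$: it is the $\Gamma$-fold product $\prod_{\Gamma}\Loc$ computed pointwise in $\funcat{\StoreSh\op}\Set$, and atomic sheaves are closed under limits (the inclusion $\Sch\hookrightarrow\funcat{\StoreSh\op}\Set$ is reflective and creates limits), so a product of copies of the sheaf $\Loc$ is again a sheaf. Either route yields the proposition, and I expect the fresh-point witness of non-restrictability to be the only step requiring genuine idea rather than bookkeeping.
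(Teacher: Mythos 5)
Your proof is correct and complete. The paper itself states this proposition without proof (it is a standard fact about the Schanuel topos), so there is no in-paper argument to compare against; your verification supplies exactly what is omitted, and it follows the route the paper's definitions set up: check functoriality, then check the restriction operation of \cref{def:atomic-sheaf-condition}. Two remarks. First, you silently (and correctly) repair a typo in the statement: the paper writes that $\Loc^\Gamma(L)$ is the set of substitutions ``$L\to\Gamma$'', but your reading $\Loc^\Gamma(L) = (\Gamma\to L)$ with action $\gamma\mapsto i\circ\gamma$ is the only one under which the action on $\FinInj$-morphisms is well defined, and it is the reading forced by \cref{prop:model-one-lives-in-sch}. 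Second, your key step --- the fresh-point construction of $j,k:M\hookrightarrow M\sqcup\{\star\}$ agreeing on $\img(i)$ but separating the offending element, showing that restrictable elements must factor through $\img(i)$ --- is the right idea, and it is the finite combinatorial analogue of the ``target practice'' lemma in the appendix (\appref{app:lem:target-practice}), which plays precisely the same role (two maps agreeing on a subalgebra but separating an element outside it) in the paper's proofs that $\RVshf_A$ and $\pspcs$ are atomic sheaves in the continuous setting. Your abstract alternative for $\Loc^\Gamma$ is also sound: $\Sch$ is reflective in $\funcat{\StoreSh\op}\Set$, hence closed under limits, and $\Loc^\Gamma$ is the pointwise $\Gamma$-fold product of $\Loc$.
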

With these sheaves in hand,
one can show Model~1 lives in $\Sch$:
\begin{proposition} \label{prop:model-one-lives-in-sch}
  If $\Gamma\vdash P$
  then the $L$-indexed family of functions
  \[
    \left(
    \modelsh{\Gamma\vdash P}L :
     (\Gamma\to L)
     \to
     \calP(L\rightharpoonup\Z)
    \right)_{L\in\StoreSh}
  \]
  is natural in $L$, so
  defines a morphism $\Loc^\Gamma\to \Prop^\Store$
  in $\Sch$,
  where $\Prop^\Store$ is the exponential guaranteed to exist
  because $\Sch$ is cartesian closed by virtue of being a category of sheaves.
  Moreover, every morphism of this type
  satisfies \Extension{}, \Renaming{}, and \Restriction{}.
\end{proposition}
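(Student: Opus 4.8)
The plan is to reduce both halves of the statement to a single naturality condition. Since $\Sch$ is cartesian closed and $\Loc^\Gamma$, $\Store$, and $\Prop$ are all objects of $\Sch$ (\cref{prop:warp-in-the-atomic-sheaves} for $\Loc^\Gamma$ and $\Prop$, and \cref{prop:store-atomic-sheaf-cond} for $\Store$), a morphism $\Loc^\Gamma\to\Prop^\Store$ is the same datum as a morphism $\Loc^\Gamma\times\Store\to\Prop$; and since $\Prop$ is the constant sheaf at $\{\vtrue,\vfalse\}$ and $\Sch$ is a full subcategory of $\funcat{\StoreSh\op}\Set$, the latter is exactly a natural transformation into a constant functor. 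Concretely, this is a family of predicates encoding the relation $s\in\modelsh{\Gamma\vdash P}L(\gamma)$ subject to the single naturality condition: for every $\FinInj$-morphism $j:L\hookrightarrow M$, every $\gamma:\Gamma\to L$, and every $s\in\Store(L)$,
\[
  s\in\modelsh{\Gamma\vdash P}L(\gamma)
  \iff
  \Store(j)(s)\in\modelsh{\Gamma\vdash P}M(j\circ\gamma).
\]
So the proposition reduces to (a) verifying this equivalence for all $\Gamma\vdash P$, and (b) reading \Extension{}, \Renaming{}, and \Restriction{} off from it.

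For (a) I would induct on the structure of $P$. When $P=\trueprop$ both sides are $\vtrue$, so the equivalence holds trivially. When $P=(x\mapsto i)$, both sides are lookups, and they agree because $\Store(j)$ is pushforward along the injection $j$: $\Store(j)(s)\big((j\circ\gamma)(x)\big)=\Store(j)(s)\big(j(\gamma(x))\big)=s(\gamma(x))$. The case $P=P_1\sepcon P_2$ is the crux. Its forward direction ($\Rightarrow$) is routine functoriality: a witnessing decomposition into valuations $s_1\in\Store(L_1)$, $s_2\in\Store(L_2)$ together with an embedding $e:L_1+L_2\hookrightarrow L$ of $s$ at stage $L$ transports to the embedding $j\circ e:L_1+L_2\hookrightarrow M$ of $\Store(j)(s)$ at stage $M$; the embedding condition survives because $\Store(j)$ is pushforward and $j\circ e$ is again injective, and the compatibility of the witnessing substitutions with $\gamma$ is preserved after post-composing with $j$.

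The hard part is the backward direction ($\Leftarrow$) of the $\sepcon$ case: from a witness for $\Store(j)(s)$ at stage $M$, with embedding $e':L_1+L_2\hookrightarrow M$, I must manufacture a witness at stage $L$. The obstacle is that $e'$ may send locations outside $\img(j)$, whereas $\Store(j)(s)$ is supported entirely within $\img(j)$. The key observation is that the \emph{used} locations --- the support of $s_1,s_2$ together with the images of the witnessing substitutions --- must land under $e'$ inside $\img(j)$, since $s_1$ and $s_2$ embed into the $\img(j)$-supported store $\Store(j)(s)$ and since the substitution at stage $M$ is $j\circ\gamma$, whose image also lies in $\img(j)$. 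Because $j$ is injective, the restriction of $e'$ to the used locations factors uniquely as $j\circ e''$; and \cref{prop:store-atomic-sheaf-cond} lets me cut $L_1$ and $L_2$ down to their used parts --- which is essential, since $L_1+L_2$ may otherwise be too large to embed into the smaller shape $L$ --- so that $e''$ becomes a genuine embedding into $L$. The restricted valuations still satisfy $P_1$ and $P_2$ by the induction hypothesis, yielding the required witness at stage $L$. This support-chasing, and in particular the appeal to the atomic-sheaf restriction operation rather than mere functoriality, is the main obstacle in the whole proof.

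Finally, (b) is immediate from the naturality condition. Instantiating $j$ at a subset-inclusion $L\subseteq L'$ gives \Extension{} verbatim: the equivalence says $s$ satisfies $P$ at $L$ iff the coerced store $\Store(j)(s)$ and substitution $j\circ\gamma$ satisfy $P$ at $L'$. Instantiating $j$ at a bijection $f:L\to L'$ gives \Renaming{}, using that $\Store(f)(s)=s\circ f^{-1}$ for a bijection. For \Restriction{}, given $s\in\Store(M)$ whose support lies in $\img(i)$ for an inclusion $i:L\hookrightarrow M$ (and $\gamma=i\circ\gamma'$ for some $\gamma'$), \cref{prop:store-atomic-sheaf-cond} supplies the unique restriction $s'\in\Store(L)$ with $\Store(i)(s')=s$; applying the naturality condition to $i$ transfers satisfaction between $(\gamma',s')$ at $L$ and $(\gamma,s)$ at $M$, which is exactly \Restriction{}. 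Because these derivations invoke only the naturality condition and \cref{prop:store-atomic-sheaf-cond}, they hold for \emph{every} morphism $\Loc^\Gamma\to\Prop^\Store$ in $\Sch$, establishing the proposition's final claim.
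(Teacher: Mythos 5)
The paper states this proposition without proof---it belongs to the expository ``standard'' material of \cref{sec:nom-situation}---so there is no in-paper argument to compare against; judged on its own terms, your proposal is correct and is essentially the canonical proof. Your reduction is right: since $\Prop$ is a constant sheaf and $\Sch$ is a full subcategory of $\funcat{\StoreSh\op}\Set$, cartesian closedness turns a morphism $\Loc^\Gamma\to\Prop^\Store$ into exactly a family of store-predicates satisfying your displayed bi-implication, and the ``Moreover'' clause then follows for \emph{every} such morphism just as you say (\Extension{} and \Renaming{} from the forward implication at inclusions and bijections respectively, \Restriction{} from the backward implication together with the existence of restrictions supplied by \cref{prop:store-atomic-sheaf-cond}). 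You also correctly locate the crux in the backward direction of the $\sepcon$ case, and the three ingredients you deploy there---the witnesses' supports landing in $\img(j)$, factoring the embedding through $j$ by injectivity, and cutting the witnessing shapes down to their used parts via the restriction operation before invoking the induction hypothesis---are exactly what is needed; mere functoriality does not suffice, as you note. The one caveat is that the paper's prose definition of $\modelsh{P_1\sepcon P_2}{L}(\gamma)$ leaves informal which substitutions the witnesses $s_1,s_2$ must satisfy $P_1,P_2$ under; your proof implicitly commits to one sensible reading (witnessing substitutions tracked through the embedding, hence folded into your ``used locations''), which is consistent with the internal-language definition in \cref{eqn:sepcon-defn}, so this is a formalization choice rather than a gap.
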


\subsection{Model~2 in nominal sets} \label{sec:nominal-sets}

We now turn to the other side of the 
equivalence given by \cref{fact:folklore}:
the category of nominal sets $\Nom$,
and how it naturally houses
Model~2
of \cref{sec:nom-situation},
in which separation is defined via union of
finite partial functions on $\N$.

Just as $\Sch$ is a category capturing the invariants
implicitly maintained by Model~1,
$\Nom$ is a category
capturing the invariants implicitly maintained
by Model~2.
In this case, the invariants are:
\newcommand\Permutation{{\nref{def:Permutation}{Permutation}}}
\newcommand\Finiteness{{\nref{def:Finiteness}{Finiteness}}}
\begin{itemize}[leftmargin=*]
  \item \emph{\labelword{Permutation}{def:Permutation}}:
    propositions should be stable under permuting locations.
    If $s\in\modelgset{\Gamma\vdash P}(\gamma)$
    for some store $s : \N\nomstore\Z$ and substitution $\gamma : \Gamma\to \N$,
    and $\pi : \N\to\N$ is a permutation of finitely-many natural numbers,
    then it should hold that $s\circ \pi \in \modelgset{\Gamma\vdash P}(\pi^{-1}\circ \gamma)$.
  \item \emph{\labelword{Finiteness}{def:Finiteness}}: more subtly,
    stores and substitutions
    can only mention finitely-many locations $n\in\N$;
    this models the fact that physical stores are necessarily finite,
    and ensures that one always has the ability to allocate fresh locations.
\end{itemize}
To capture \Permutation{}, the objects of $\Nom$
are sets equipped with an action by a group of permutations to be invariant under.
  \labelword{Specifically}{def:nomgrp}, let $\nomgrp$ be the
  group of permutations of finitely-many natural numbers:
  elements of $\nomgrp$ are bijective functions $\pi : \N\to\N$
  such that there exists some $n\in \N$
  with $\pi(m) = m$ for all $m\ge n$.
  An \emph{$\nomgrp$-set}
  is a set $X$ equipped with
  a right action by $\nomgrp$:
  a function $(\cdot) : X\times\nomgrp\to X$
  satisfying $x\cdot 1 = x$ and $x\cdot (\pi\sigma) = (x\cdot \pi)\cdot \sigma$
  for all $x\in X$ and $\pi,\sigma\in\nomgrp$.
There is an $\nomgrp$-set
$\nomStore$ of stores,
whose group action says what it means to
permute the locations in a store:\footnote{In general, we will overline
 objects of $\Nom$ to distinguish them
 from their $\Sch$-counterparts.
}
\begin{definition} \label{def:nomStore}
  Let $\nomStore$
  be the $\nomgrp$-set
  of stores $s : \N\nomstore\Z$
  with action
  $s\cdot \pi = s\circ \pi$.
\end{definition}
A morphism of $\nomgrp$-sets
$(X,\cdot_X)\to(Y,\cdot_Y)$
is an \emph{equivariant function}:
a function $f : X\to Y$
satisfying $f(x\cdot_X\pi) = f(x)\cdot_Y \pi$
for all $x\in X$ and $y\in Y$ and $\pi\in\nomgrp$.
This captures invariance under \Permutation{}:
$\nomgrp$-morphisms $\nomStore\to\nomProp$,
\labelword{where}{def:nomProp} $\nomProp$ is the $\nomgrp$-set 
$\{\vtrue,\vfalse\}$
with trivial action $p\cdot \pi = p$,
are the permutation-invariant predicates on stores.

To capture \Finiteness{}, the category $\Nom$
is a full subcategory of the category of $\nomgrp$-sets
consisting of those $\nomgrp$-sets $(X,\cdot_X)$
in which every $x\in X$ only uses finitely many locations.
The concept of ``using'' a location is made precise by looking at
stabilizer subgroups: if
$x\cdot\pi = x$
(i.e., $\pi$ is in the stabilizer of $x$)
then $x$ can only ``use'' those locations fixed by $\pi$.
An element $x$ uses finitely many locations if its stabilizer is \emph{open}
for a suitable topology:
\begin{definition}[Topology on $\nomgrp$] \label{def:open-stab}
  A subset $U$ of $\nomgrp$ is \emph{open}
  if for every $\pi$ in $U$
  there exists a finite subset $A$ of $\N$
  such that $\pi\in\nomFix A\subseteq U$,
  \labelword{where}{def:nomFix} $\nomFix A$
  is the subgroup of $\nomgrp$-permutations $\pi$
  that fix every element of $A$; i.e., $\pi(a) = a$ for all $a$ in $A$.
\end{definition}

Intuitively, a stabilizer subgroup $\Stab x$ is open
if every $\pi$ stabilizing $x$ does so for some ``finite reason'' $A$:
there is some subset $A$ fixed by $\pi$ such that any other permutation $\pi'$
fixing $A$ also stabilizes $x$.
Nominal sets are $\nomgrp$-sets with open stabilizers~\citep[\S 6.2]{pitts2003nominal}:

\begin{definition} \label{def:Nom}
  A \emph{nominal set}
  is a $\nomgrp$-set $(X,\cdot)$
  such that for every $x$ in $X$
  the stabilizer subgroup $\Stab x$ is \nref{def:open-stab}{open}.
  $\Nom$ is the category of nominal sets and equivariant functions.
\end{definition}

For example, $\nomStore$ is a nominal set:
if $s$ is a store with $s\circ\pi = s$,
then $\pi$ fixes the finite set $\dom(s)$,
and moreover every permutation fixing $\dom(s)$
fixes $s$, so $\dom(s)\subseteq\Stab x$ and $\Stab s$ is open.
There are nominal sets
capturing each
of the other concepts used in Model~2:
\begin{proposition} \label{prop:warp-in-the-gsets}
  The following are objects of $\Nom$:\begin{itemize}[leftmargin=*]
    \item The $\nomgrp$-set $\nomProp$ of propositions
    \item The $\nomgrp$-set $\nomLoc$ of locations $\N$
      with action $x\cdot \pi = \pi^{-1}(x)$.
    \item The $\nomgrp$-set $\nomLoc^\Gamma$ of $\Gamma$-substitutions
      $\gamma : \Gamma\to \N$
      with action defined by $\gamma\cdot \pi = \pi^{-1}\circ \gamma$.
  \end{itemize}
\end{proposition}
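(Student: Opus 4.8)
The plan is to verify, for each of the three $\nomgrp$-sets, exactly the two conditions that \cref{def:Nom} demands of an object of $\Nom$: that the stated data really is a right action of $\nomgrp$, and that every element has an open stabilizer. The unifying idea for the second condition is that in each case the stabilizer of an element turns out to be \emph{precisely} a subgroup of the form $\nomFix A$ for a finite $A\subseteq\N$, so openness is immediate from \cref{def:open-stab} upon taking the witnessing finite set to be $A$ itself.

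First I would dispatch the action axioms. For $\nomProp$ the action is trivial, so $p\cdot 1 = p$ and $p\cdot(\pi\sigma) = p = (p\cdot\pi)\cdot\sigma$ hold on the nose. For $\nomLoc$ and $\nomLoc^\Gamma$ the essential point is that the inverse built into the action is exactly what converts the contravariance of $(-)^{-1}$ into a \emph{right} action: for $\nomLoc$ one has $x\cdot 1 = 1^{-1}(x) = x$ and, using $(\pi\sigma)^{-1} = \sigma^{-1}\pi^{-1}$, $x\cdot(\pi\sigma) = (\pi\sigma)^{-1}(x) = \sigma^{-1}(\pi^{-1}(x)) = (x\cdot\pi)\cdot\sigma$; and for $\nomLoc^\Gamma$ the identical computation runs with $\gamma$ in place of $x$ and precomposition in place of application, via $(\pi\sigma)^{-1}\circ\gamma = \sigma^{-1}\circ\pi^{-1}\circ\gamma$.

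Next I would compute the stabilizers explicitly. For $\nomProp$, every $\pi$ fixes every $p$, so $\Stab p = \nomgrp = \nomFix\emptyset$, which is open. For $\nomLoc$, $\pi$ stabilizes $x$ iff $\pi^{-1}(x) = x$ iff $\pi(x) = x$, so $\Stab x = \nomFix\{x\}$, of the required form with $A = \{x\}$. For $\nomLoc^\Gamma$, $\pi$ stabilizes $\gamma$ iff $\pi^{-1}\circ\gamma = \gamma$, i.e.\ iff $\pi$ fixes $\gamma(c)$ for every variable $c\in\Gamma$, so $\Stab\gamma = \nomFix(\img\gamma)$. I would then conclude openness by exhibiting $\img\gamma$ as the witnessing finite set.

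The only step requiring genuine care — and the one place where the argument is not purely formal — is this last case: openness of $\Stab\gamma$ hinges on $\img\gamma\subseteq\N$ being finite, which in turn uses that the context $\Gamma$ is finite. I would make this hypothesis explicit, since it is exactly the instance of the \Finiteness{} invariant that forces $\nomLoc^\Gamma$ into $\Nom$ rather than merely into the category of all $\nomgrp$-sets; for an infinite $\Gamma$, a substitution with infinite image would have a stabilizer that is not open, and the claim would fail.
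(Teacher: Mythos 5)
Your proof is correct and follows essentially the same route the paper takes (the paper leaves this proposition unproved, but its preceding argument that $\nomStore$ is nominal is exactly your method: exhibit each stabilizer as $\nomFix A$ for a finite $A\subseteq\N$ and invoke \cref{def:open-stab}). Your explicit flagging that $\Gamma$ must be finite for $\img\gamma$ to be finite is a sound and worthwhile clarification of a hypothesis the paper leaves implicit.
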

With these in hand, one can show Model~2 lives in $\Nom$:
\begin{proposition} \label{prop:model-two-lives-in-nom}
  If $\Gamma\vdash P$ then 
  the function
  $\modelgset{\Gamma\vdash P}$
  is a morphism $\nomLoc^\Gamma\to\nomProp^\nomStore$ in $\Nom$,
  and every morphism of this type
  satisfies \Permutation{} and \Finiteness{}.
\end{proposition}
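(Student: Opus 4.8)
The plan is to prove both halves of the statement — that $\modelgset{\Gamma\vdash P}$ is a $\Nom$-morphism, and conversely that every such morphism respects \Permutation{} and \Finiteness{} — after first unwinding the exponential $\nomProp^\nomStore$. Its underlying set is the nominal set of finitely-supported predicates $S\subseteq\nomStore$, and because $\nomProp$ carries the trivial action, the conjugation action on the exponential reduces to $S\cdot\pi = \{\,s\circ\pi : s\in S\,\}$. With this in hand, the claim ``$\modelgset{\Gamma\vdash P}$ is a morphism'' splits into two sub-claims: (i) each value $\modelgset{\Gamma\vdash P}(\gamma)$ is finitely supported, so genuinely lands in the exponential; and (ii) the assignment $\gamma\mapsto\modelgset{\Gamma\vdash P}(\gamma)$ is equivariant, i.e. $\modelgset{\Gamma\vdash P}(\pi^{-1}\circ\gamma) = \modelgset{\Gamma\vdash P}(\gamma)\cdot\pi$. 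The first thing I would record is that, once the action is written out, (ii) is \emph{literally} the \Permutation{} invariant and (i) is \Finiteness{}, so establishing the two sub-claims discharges the bulk of the proposition.

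I would prove (i) and (ii) simultaneously by structural induction on $P$. For $\trueprop$ the predicate is all of $\nomStore$, supported by $\emptyset$ and equivariant. For $x\mapsto i$ the predicate $\{s : s(\gamma(x)) = i\}$ is supported by the single location $\{\gamma(x)\}$, and equivariance is the one-line computation $(s\circ\pi)\big((\pi^{-1}\circ\gamma)(x)\big) = s(\gamma(x))$. For the conjunction $P\sepcon Q$ the clean device is to package separation as a binary operation on predicates, $S\mathbin{*}T = \{\,s : \exists\,\text{disjoint } s_1\in S,\ s_2\in T \text{ with } s_1\uplus s_2\subseteq s\,\}$, and to prove the single equation $(S\mathbin{*}T)\cdot\pi = (S\cdot\pi)\mathbin{*}(T\cdot\pi)$. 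Given this, equivariance in the conjunction case is immediate from the inductive hypotheses, and finite support follows because $\mathbin{*}$ is equivariant: the union of a support of $\modelgset P(\gamma)$ and a support of $\modelgset Q(\gamma)$ supports $\modelgset{P\sepcon Q}(\gamma)$.

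The hard part — really the only step requiring care — is verifying that $\mathbin{*}$ commutes with the action, since this is where one must check that precomposition with the bijection $\pi$ preserves all three ingredients of the definition of separation. Concretely, I would observe that $\dom(s_i\circ\pi) = \pi^{-1}(\dom s_i)$, so disjointness of $\dom s_1$ and $\dom s_2$ is preserved as $\pi^{-1}$ is injective; that $(s_1\uplus s_2)\circ\pi = (s_1\circ\pi)\uplus(s_2\circ\pi)$; and that $s_1\uplus s_2\subseteq s$ if and only if $(s_1\uplus s_2)\circ\pi\subseteq s\circ\pi$. Together these transport a witnessing decomposition of $s$ to one of $s\circ\pi$ and back, yielding the equation.

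Finally, for the converse direction I would argue that the invariants come \emph{for free} once one has landed in $\Nom$. For an arbitrary $\Nom$-morphism $F : \nomLoc^\Gamma\to\nomProp^\nomStore$, equivariance reads $F(\pi^{-1}\circ\gamma) = F(\gamma)\cdot\pi = \{s\circ\pi : s\in F(\gamma)\}$, which is exactly \Permutation{}; and \Finiteness{} holds because every element of the nominal sets $\nomLoc^\Gamma$ (by \cref{prop:warp-in-the-gsets}) and $\nomProp^\nomStore$ has an open stabilizer, so substitutions, stores, and the predicate $F(\gamma)$ each mention only finitely many locations. This mirrors the way $\Sch$ forces \Extension{}, \Renaming{}, and \Restriction{}, and is the payoff of the categorical setup.
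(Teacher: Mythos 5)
Your proof is correct, and there is nothing in the paper to compare it against line by line: the paper states this proposition without proof, as part of the classical nominal-sets background it reviews (deferring to Pitts's book), so what you have written supplies the argument the paper leaves implicit. Your organization is the standard and right one: identify $\nomProp^{\nomStore}$ with finitely-supported predicates $S\subseteq\nomStore$ carrying the action $S\cdot\pi=\{s\circ\pi : s\in S\}$; observe that equivariance of $\gamma\mapsto\llbr{\Gamma\vdash P}_2(\gamma)$ is exactly Permutation and that landing in the exponential is the formal content of Finiteness; then induct on $P$, the only nontrivial case being separating conjunction, which you reduce to the single equation $(S\mathbin{*}T)\cdot\pi=(S\cdot\pi)\mathbin{*}(T\cdot\pi)$ plus the standard fact that an equivariant binary operation carries arguments supported by $A$ and $B$ to a result supported by $A\cup B$. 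Your verification of that equation (preimages of domains under $\pi$, distribution of precomposition over $\uplus$, stability of $\subseteq$, and the converse direction via $\pi^{-1}$) is complete, and your treatment of the second half of the proposition is the right reading of the informal invariants.

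One caveat, which concerns the paper's definitions rather than your argument: you certify Finiteness via finite \emph{support}, i.e.\ the existence of a finite $A$ with $\nomFix A\subseteq\Stab S$. That is the standard notion and clearly the intended one, but it is strictly weaker than what \cref{def:open-stab} literally demands, namely that \emph{every} $\pi\in\Stab S$ lie in some $\nomFix A\subseteq\Stab S$. Under the literal reading the proposition would actually fail at the $\sepcon$ case: for $\gamma(x)=\ell_1\neq\ell_2=\gamma(y)$, the predicate $\{s : s(\ell_1)=1,\ s(\ell_2)=1\}$ has a stabilizer containing the transposition of $\ell_1$ and $\ell_2$; any finite $A$ fixed pointwise by that transposition avoids $\ell_1,\ell_2$, and then $\nomFix A$ contains a permutation moving $\ell_1$ to a fresh location, which does not stabilize the predicate. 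So your implicit use of the finite-support (coset-topology) formulation is not a gap --- it is the reading under which the statement is true --- but it is worth stating explicitly that this is what ``open stabilizer'' must mean here.
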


\subsection{The equivalence} \label{sec:sch-nom} \label{sec:eqv-models-schnom}

This section sketches the classic equivalence $\Sch\simeq\Nom$
and how Models 1 and 2 correspond across it.
We will not be concerned so much with the details of this
particular equivalence, but rather with highlighting
the key properties of $\Sch$
and $\Nom$ that make it possible --- \cref{thm:krm-corresp}
relies on identifying analogous properties
in the probabilistic setting.

In \cref{sec:nom-situation} we sketched
the correspondence between Model~1 and Model~2,
based on the idea that every store shape $L$
can be encoded as a finite set of natural numbers
via a pair of functions $(\enc_L,\dec_L)$.
This idea also forms the basis for the
equivalence $\Sch\simeq\Nom$.
In the language of \cref{sec:schanuel-topos},
every $\enc_L$
encodes the object $L$ of $\StoreSh$
as a subset $\img(\enc_L)$ of $\N$.
This encoding extends to morphisms of $\StoreSh$:
every $\StoreSh$-morphism $M\to L$,
equivalently an injective function $f : L\hookrightarrow M$,
can be encoded as a permutation $\pi\in\nomgrp$
that sends $\img(\enc_L)$ to $\img(\enc_M)$.
More precisely,
\newcommand\Homogeneity{{\nref{lem:nom-homogeneity}{Homogeneity}}}
\savebox0{\cite[L1.14]{pitts2013nominal}}
\begin{proposition}[Homogeneity~\usebox0] \label{lem:nom-homogeneity}
  Let $L,M$ be finite sets and $\enc_L$ and $\enc_M$
  injective functions $L\hookrightarrow\N$ and $M\hookrightarrow\N$.
  For any injective function $i : L\hookrightarrow M$,
  there exists $\pi\in\nomgrp$
  such that $\pi\circ \enc_L = \enc_M \circ i$,
  making the following square commute:
    \[\begin{tikzcd}
    \N \arrow[r, "\pi", dashed]                 & \N                    \\
    L \arrow[u, "\enc_L", hook] \arrow[r, "i"', hook] & M \arrow[u, "\enc_M"', hook]
    \end{tikzcd}\]
\end{proposition}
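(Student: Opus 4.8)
The plan is to observe that the equation $\pi\circ\enc_L = \enc_M\circ i$ completely pins down the behaviour of $\pi$ on the finite set $\img(\enc_L)$, and then to extend this partial specification to a finitely-supported permutation of $\N$ by a straightforward cardinality count. In other words, this is a concrete instance of the standard fact that any bijection between two finite subsets of a countable set extends to a finitely-supported permutation of the whole set.

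First I would set $A = \img(\enc_L)$ and $B = \img(\enc_M\circ i)$, both finite subsets of $\N$. Since $\enc_L$ is injective, $|A| = |L|$; since $i$ and $\enc_M$ are both injective, so is their composite $\enc_M\circ i$, giving $|B| = |L|$. Hence $|A| = |B|$. The required equation forces $\pi$ to restrict on $A$ to the bijection $\phi := \enc_M\circ i\circ\enc_L^{-1} : A\to B$, where $\enc_L^{-1}$ denotes the inverse of the corestriction of $\enc_L$ onto its image $A$; conversely, any finitely-supported permutation extending $\phi$ witnesses the proposition.

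Next I would build such an extension. Let $S = A\cup B$, a finite set. Because $|A| = |B|$, the complements $S\setminus A$ and $S\setminus B$ have equal (finite) cardinality, so I may choose an arbitrary bijection $\psi : S\setminus A\to S\setminus B$. Define $\pi : \N\to\N$ to agree with $\phi$ on $A$, with $\psi$ on $S\setminus A$, and with the identity on $\N\setminus S$. Then $\pi$ sends the partition $A\sqcup(S\setminus A)\sqcup(\N\setminus S)$ of $\N$ bijectively onto the partition $B\sqcup(S\setminus B)\sqcup(\N\setminus S)$, so $\pi$ is a bijection; and it fixes every element outside the finite set $S$, so $\pi\in\nomgrp$. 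By construction $\pi(\enc_L(l)) = \phi(\enc_L(l)) = \enc_M(i(l))$ for every $l\in L$, which is exactly commutativity of the square.

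This argument is almost entirely routine; the only point demanding care is the cardinality bookkeeping that makes $S\setminus A$ and $S\setminus B$ matchable, which ultimately rests on $\enc_M\circ i$ being injective so that $|B| = |L| = |A|$. I do not expect any genuine obstacle here — the content is combinatorial, and the dashed arrow in the diagram is non-unique precisely because the choice of $\psi$ on the complement is free.
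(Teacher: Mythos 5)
Your proof is correct. The paper does not actually prove this proposition --- it cites it as Lemma 1.14 of Pitts's book on nominal sets --- and your construction (extending the forced bijection $\enc_M\circ i\circ\enc_L^{-1}$ on $\img(\enc_L)$ by an arbitrary cardinality-matched bijection on the remainder of $\img(\enc_L)\cup\img(\enc_M\circ i)$ and the identity elsewhere) is precisely the standard argument behind that cited fact, so it fills the gap in essentially the same way the reference does.
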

Furthermore,
the relationships between encoded store shapes
$\img(\enc_L)$ are faithfully captured by
relationships between subgroups of $\nomgrp$:

\newcommand\Correspondence{{\nref{def:Correspondence}{Correspondence}}}
\begin{proposition}[Correspondence] \label{def:Correspondence}
    For any two store shapes $L$ and $M$,
    it holds that
    $\nomFix(\img(\enc_L))\subseteq\nomFix(\img(\enc_M))$
    if and only if $\img(\enc_L)\supseteq \img(\enc_M)$.
\end{proposition}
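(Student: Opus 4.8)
The plan is to peel away the encoding functions and reduce the statement to a purely combinatorial fact about finite subsets of $\N$. Writing $A = \img(\enc_L)$ and $B = \img(\enc_M)$ for the two encoded shapes, which are finite subsets of $\N$, the claim becomes: $\nomFix(A) \subseteq \nomFix(B)$ if and only if $B \subseteq A$, where $\nomFix(S)$ is the pointwise stabilizer of $S$ inside $\nomgrp$. The backward direction is immediate monotonicity: $\nomFix$ reverses inclusions, since any permutation fixing every element of the larger set $A$ a fortiori fixes every element of the smaller set $B$; hence $B \subseteq A$ yields $\nomFix(A) \subseteq \nomFix(B)$.

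The content is in the forward direction, which I would prove by contraposition. Assume $B \not\subseteq A$ and produce a permutation witnessing $\nomFix(A) \not\subseteq \nomFix(B)$. Pick $b \in B \setminus A$. Because $A$ is finite and $\N$ is infinite, the set $A \cup \{b\}$ is finite, so there exists a fresh natural number $c \in \N \setminus (A \cup \{b\})$. Let $\pi \in \nomgrp$ be the transposition swapping $b$ and $c$ and fixing everything else; this is finitely supported, hence a genuine element of $\nomgrp$. Since neither $b$ nor $c$ lies in $A$, the permutation $\pi$ fixes every element of $A$, so $\pi \in \nomFix(A)$. But $\pi(b) = c \neq b$ with $b \in B$, so $\pi \notin \nomFix(B)$. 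This exhibits an element of $\nomFix(A)$ not in $\nomFix(B)$, completing the contrapositive and hence the forward direction. The two directions together give the biconditional.

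The main obstacle, such as it is, lives entirely in this forward direction, and specifically in the construction of the witnessing transposition: it is the one step that uses the particular structure of $\nomgrp$ rather than an arbitrary permutation group, relying on the existence of a point $c$ outside $A$ to swap with. This is exactly what finiteness of $A$ inside the infinite set $\N$ guarantees, so no genuine difficulty arises. It is worth flagging that this lemma is precisely the place where the ``finitely many names'' discipline of $\nomgrp$ is used to separate distinct finite subsets, foreshadowing the role the analogous separation property will play in the probabilistic setting.
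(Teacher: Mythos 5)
Your proof is correct, and it matches the approach this family of results uses throughout the paper: the paper states this nominal Correspondence without an explicit proof (deferring to the standard references for $\Sch\simeq\Nom$), but its proof of the discrete probabilistic analog (\cref{def:DCorrespondence}) has exactly your two-step shape --- the inclusion-reversal direction is immediate monotonicity, and the converse is witnessed by a permutation that swaps something so as to fix the larger object while moving the smaller one. Your fresh-element transposition, enabled by finiteness of $\img(\enc_L)$ inside the infinite set $\N$, is precisely the nominal counterpart of that swapping construction, so there is nothing to correct.
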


\newcommand\schPred{{\nref{def:schPred}{\mathrm{Pred}}}}
\newcommand\schSubst[1]{{\nref{def:schSubst}{\llbr{#1}}}}
\newcommand\nomPred{{\nref{def:nomPred}{\overline{\mathrm{Pred}}}}}
\newcommand\nomSubst[1]{{\nref{def:nomSubst}{\overline{\llbr{#1}}}}}
\Homogeneity{} and \Correspondence{} together
give the equivalence $\Sch\simeq\Nom$.
For details, see \citet[Theorem III.9.2]{maclane2012sheaves}.
With this equivalence in hand, we are finally in a position to make \cref{fact:folklore} precise.
\labelword{Abbreviating}{def:schSubst} $\Loc^\Gamma$ as $\schSubst\Gamma$
\labelword{and}{def:schPred} the exponential $\Prop^\Store$ as $\schPred$,
\cref{prop:model-one-lives-in-sch}
shows that the Hom-set $\Sch(\schSubst\Gamma,\schPred)$
serves as a semantic domain for Model-1 interpretations of
\ref{eq:tinysep} propositions in context $\Gamma$.
Analogously, \cref{prop:model-two-lives-in-nom}
shows that $\Nom(\nomSubst\Gamma,\nomPred)$
serves as a semantic domain for Model~2,
\labelword{where}{def:nomSubst} $\nomSubst\Gamma$ is $\nomLoc^\Gamma$
\labelword{and}{def:nomPred} $\nomPred$ the exponential $\nomProp^\nomStore$.
The next proposition establishes that these semantic
domains correspond across $\Sch\simeq\Nom$:
\begin{proposition} \label{prop:semantic-domains-corresp}
  Across the equivalence $\Sch\simeq\Nom$, the sheaf
  $\Store$ corresponds to the nominal set $\nomStore$,
  $\Prop$ to $\nomProp$,
  $\Loc$ to $\nomLoc$,
  $\schSubst\Gamma$ to $\nomSubst\Gamma$,
  $\schPred$ to $\nomPred$,
  and 
$\Sch(\schSubst\Gamma,\schPred)$
to
$\Nom(\nomSubst\Gamma,\nomPred)$.
\end{proposition}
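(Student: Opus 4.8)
The plan is to exploit the fact that \cref{lem:nom-homogeneity} (\Homogeneity{}) and \cref{def:Correspondence} (\Correspondence{}) exhibit $\Sch\simeq\Nom$ through a concrete functor $\Phi\colon\Sch\to\Nom$, following \citet[Theorem III.9.2]{maclane2012sheaves}. Viewing an atomic sheaf as a covariant functor $F\colon\FinInj\to\Set$, this $\Phi$ sends $F$ to the filtered colimit $\colim_A F(A)$ taken over the directed poset of finite subsets $A\subseteq\N$ with inclusions — equivalently, over store shapes along the chosen encodings $\enc_L$ — equipped with the $\nomgrp$-action carrying the class of $x\in F(A)$ to the class of $F(\pi|_A)(x)\in F(\pi(A))$. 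Every correspondence in the statement then reduces to computing this colimit for the three generating sheaves $\Loc$, $\Prop$, and $\Store$, together with the formal fact that an equivalence of cartesian closed categories preserves products, exponentials, and Hom-sets.

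First I would dispatch the three base cases by direct computation. For $\Loc$, since $\Loc(A)=A$ with transition maps the subset inclusions, the colimit is the union $\bigcup_A A=\N$, and the induced action sends $a\mapsto\pi(a)$, matching $\nomLoc$ once one passes from the left action on the colimit to the right action $x\cdot\pi=\pi^{-1}(x)$ that $\nomLoc$ carries. For $\Prop$, the constant sheaf at $\{\vtrue,\vfalse\}$ has identity transition maps, so its colimit is $\{\vtrue,\vfalse\}$ with trivial action, which is exactly $\nomProp$. For $\Store$, an element of $\Store(A)$ is a valuation $A\rightharpoonup\Z$ and the inclusion $A\subseteq B$ coerces it to a $B$-shaped valuation undefined on $B\setminus A$; hence two representatives are identified in the colimit precisely when they denote the same partial function on $\N$, and since each $A$ is finite the colimit is the set of finite partial functions $\N\nomstore\Z$ with action $s\cdot\pi=s\circ\pi$, i.e. $\nomStore$. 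These three identifications give $\Loc\mapsto\nomLoc$, $\Prop\mapsto\nomProp$, and $\Store\mapsto\nomStore$.

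The remaining correspondences follow formally. Because $\Gamma$-substitutions form the power $\schSubst\Gamma=\Loc^\Gamma=\prod_{\Gamma}\Loc$ and $\Phi$, being half of an equivalence, preserves products, I obtain $\schSubst\Gamma\mapsto\prod_\Gamma\nomLoc=\nomSubst\Gamma$. Since an equivalence preserves exponentials (it preserves the defining adjunction $\Hom(X\times\Store,-)\cong\Hom(X,(-)^\Store)$), I get $\schPred=\Prop^\Store\mapsto\nomProp^\nomStore=\nomPred$. Finally, full faithfulness of $\Phi$ yields the bijection $\Sch(\schSubst\Gamma,\schPred)\cong\Nom(\nomSubst\Gamma,\nomPred)$, completing the correspondence of semantic domains.

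I expect the main obstacle to be bookkeeping rather than conceptual: the delicate points are pinning down the precise equivalence functor so that its action agrees on the nose with the \Homogeneity{}/\Correspondence{} presentation, and reconciling the variance conventions — the covariant left action arising on the colimit versus the right action $x\cdot\pi=\pi^{-1}(x)$ built into $\nomLoc$, $\nomSubst\Gamma$, and $\nomStore$ — so that the computed isomorphisms are genuinely $\nomgrp$-equivariant and not merely bijections of underlying sets.
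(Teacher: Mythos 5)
Your proposal is correct and takes essentially the same route the paper relies on: the equivalence functor is exactly the colimit-over-encodings construction with the permutation action induced by refinement (this is the form given by the paper's general machinery in \cref{app:thm:nominal-situation}, which it instantiates for the probabilistic analogues in the appendix, and which it leaves implicit here by deferring to \citet[Theorem III.9.2]{maclane2012sheaves}), with the base sheaves $\Loc$, $\Prop$, $\Store$ computed directly and the remaining correspondences obtained formally from preservation of products, exponentials, and full faithfulness. Your flagged concerns --- pinning the functor down on the nose and the left-versus-right action bookkeeping --- are precisely the points the paper's appendix handles via its refinement topology and the isomorphism $\tgsets{G}\cong\tgsets{G\op}$, so they are genuine but routine.
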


\newcommand\schDay{\mathbin{\nref{def:schDay}{\otimes}}}
\newcommand\schinc{\nref{def:schinc}{i}}
\newcommand\schjoin{\mathbin{\nref{def:schjoin}{\pdot}}}
\newcommand\schord{\mathrel{\nref{def:schord}{\sqsubseteq}}}
\newcommand\internalsepcon{\mathbin{\nref{def:internalsepcon}{\circledast}}}
It remains to show that Model~1 intepretations
$\modelsh{\Gamma\vdash P}{(-)}$
correspond to their Model~2 counterparts
$\modelgset{\Gamma\vdash P}$.
This is straightforward when $P$ is
$\trueprop$
or $x\mapsto i$; the interesting case
is the separating conjunction $P_1*P_2$.
One could show
$\modelsh{P_1\sepcon P_2}{(-)}$
corresponds to
$\modelgset{P_1\sepcon P_2}$
by unwinding definitions
and showing, via a careful calculation,
that they correspond across the functor $\Sch\to\Nom$ witnessing
the equivalence.
But \cref{fact:folklore} is far more general.
The idea is to use
the \emph{internal language} of $\Sch$: as a category of sheaves,
any construction in higher-order logic can be interpreted
in $\Sch$~\cite[VI.7.1]{maclane2012sheaves}.
In this internal language,
types denote sheaves and functions denote natural transformations,
and Model~1's separating conjunction
can be defined as
$\modelsh{\Gamma\vdash P_1\sepcon P_2}{(-)}
=\modelsh{\Gamma\vdash P_1}{(-)}\internalsepcon\modelsh{\Gamma\vdash P_2}{(-)}$,
\labelword{where}{def:internalsepcon} $\internalsepcon$
is a special $\Sch$-morphism denoting separating conjunction in the 
internal language of $\Sch$. The meaning of $\internalsepcon$ 
can be described by conveniently using the
higher-order logic of $\Sch$:
\begin{align} \label{eqn:sepcon-defn}
 \begin{aligned}
 &(\internalsepcon) :
   \schPred^{\schSubst\Gamma}
   \times
   \schPred^{\schSubst\Gamma}
   \to
   \schPred^{\schSubst\Gamma}
   \\
 &(f_1 \internalsepcon f_2)(\gamma : {\schSubst\Gamma})(s : \Store) =
  \left(~\begin{aligned}
     &\exists~s_1~s_2:\Store.~
     s_1\pdot s_2 \text{ defined} \land{}\\
     &s_1\pdot s_2 \sqsubseteq s\land{}
     f_1\,\gamma\,s_1 \land f_2\,\gamma\,s_2
    \end{aligned}\right)
  \end{aligned}
\end{align}
This definition is made of the following key ingredients:
\newcommand\schSepStores{{\nref{def:schSepStores}{\mathrm{S}^2_\perp}}}
\begin{itemize}[leftmargin=*]
\item A symbol $\sqsubseteq$,
  which in the internal language
  looks like an ordering relation on stores,
  and externally denotes 
  a suitable natural transformation
   $\Store\times\Store\to\Prop$.
\item A symbol $\pdot$,
  which internally looks like a partial function combining stores,
  and externally denotes
  a natural transformation $\schSepStores\to \Store$,
  where $\schSepStores$ is a subobject
  $i : \schSepStores\hookrightarrow\Store\times\Store$
  of the sheaf $\Store\times\Store$ of pairs of stores
  carving out the domain on which $\pdot$ is defined.
\end{itemize}
  \labelword{The}{def:schord} ordering $\sqsubseteq$
  is the natural transformation
  $(\schord) : \Store\times\Store\to\Prop$
  defined by
    $(s_1 \schord_L s_2) = \top$ if and only if $s_1$ is a subvaluation of $s_2$.
  \labelword{The}{def:schjoin} combining operation $\pdot$
  is a natural transformation
  $\schjoin : \schSepStores\to\Store$.
  \labelword{Its}{def:schDay} domain $\schSepStores$ 
  is a sheaf defined in terms of the coproduct of finite sets.
  \labelword{Each}{def:schSepStores} $\schSepStores(L)$ is a set consisting of pairs of $L$-shaped valuations
  that ``factor through'' a coproduct $L_1+L_2$ along some
  $i : L_1+L_2\hookrightarrow L$:
  \[ 
     \schSepStores(L) =
     \begin{aligned}
       &\{\,(\Store(i\circ\inl)(s_1), \Store(i\circ\inr)(s_2)) \\
       &\!\mid L_1,L_2\in\StoreSh,s_1\in \Store(L_1), s_2\in\Store(L_2),
         i : L_1+L_2\hookrightarrow L \,\}
     \end{aligned}
  \]
  The morphism $\schjoin$
  sends each pair $(\Store(i\circ\inl)(s_1), \Store(i\circ\inr)(s_2))$ of separated stores
  to the combined store $\Store(i)[s_1,s_2]$,
  where the valuation $[s_1,s_2]$ of type
  $L_1+L_2\to\Z$ is the unique one defined by $[s_1,s_2]\circ\inl = s_1$
  and $[s_1,s_2]\circ\inr = s_2$.
  \labelword{Each}{def:schinc}
  $\schSepStores(L)$
  is a subset of $(\Store\times\Store)(L)$,
  and collecting the canonical subset-inclusions
  into an $L$-indexed family gives
  a monic natural transformation
  $\schinc : \schSepStores\hookrightarrow \Store\times\Store$.

\newcommand\nomDay{\mathbin{\nref{def:nomDay}{\sepcon}}}
\newcommand\nomSepStores{\mathbin{\nref{def:nomSepStores}{\overline{\mathrm{S}}^2_\perp}}}
\newcommand\nominc{\nref{def:nominc}{\overline{i}}}
\newcommand\nomjoin{\mathbin{\nref{def:nomjoin}{\overline{\pdot}}}}
\newcommand\nomord{\mathrel{\nref{def:nomord}{\overline{\sqsubseteq}}}}
\newcommand\schemp{{\nref{def:schemp}{\mathrm{emp}}}}
\newcommand\nomemp{{\nref{def:nomemp}{\overline{\mathrm{emp}}}}}
  In the internal language, $\schjoin$ looks like a partial function
  that is associative and commutative and monotone with respect to $\schord$,
  \labelword{with}{def:schemp} unit the natural transformation $\schemp : 1\to \Store$
  sending every store shape $L$ to the empty valuation on $L$.
  Together, the tuple $(\schord,\schSepStores,\schinc,\schjoin,\schemp)$
  packages up the ingredients needed to model separation logic in $\Sch$
  into a \emph{resource monoid} internal to $\Sch$:
  \begin{definition}
    A \emph{resource monoid}~\cite{galmiche2005semantics} is a poset $(R,\sqsubseteq)$ with a least element $\bot$
    and a monotone partial function $(\cdot) : R\times R\rightharpoonup R$
    such that $(R,\cdot,\bot)$ forms a partial commutative monoid.\footnote{In this paper we are concerned with affine models of separation logic,
    and so consider an affine variant of the resource monoids defined in
    \citet{galmiche2005semantics}. Our definition is closest in spirit to the
    affine PDMs sketched there.}
  \end{definition}
  We can similarly construct a resource monoid in $\Nom$.
  \labelword{There}{def:nomord}
  is an equivariant function
  $(\schord) : \nomStore\times\nomStore\to\nomProp$
  sending a pair $(s_1,s_2)$ of finite partial functions on $\N$
  to $\vtrue$ if and only if $s_1\subseteq s_2$,
  \labelword{with}{def:nomemp} least element $\nomemp$ the empty finite partial function.
\labelword{There}{def:nomSepStores} is a nominal set
  $\nomSepStores$ of separated stores:
  the set 
  \[\{(s_1,s_2)\mid s_1,s_2\in\nomStore, \dom(s_1)\cap\dom(s_2) = \emptyset\}\]
  of pairs of stores with disjoint domain, and pointwise action.
  \labelword{Both}{def:nominc} the
  canonical inclusion $\nominc: \nomSepStores\hookrightarrow\nomStore\times\nomStore$
  and 
  \labelword{the}{def:nomjoin}
  function
  $(\nomjoin) : \nomSepStores\to\nomStore$
  sending a pair $(s_1,s_2)$ of disjoint stores
  to their union $s_1\uplus s_2$
  are equivariant, hence morphisms in $\Nom$.
  Finally, $\nomjoin$ is monotone in $\nominc$
  and has unit $\nomemp$
  so $(\nomord,\nomSepStores,\nominc,\nomjoin,\nomemp)$ forms a resource monoid internal to $\Nom$,
and reinterpreting
\cref{eqn:sepcon-defn}
inside $\Nom$
with $(\nomord,\nomSepStores,\nominc,\nomjoin,\nomemp)$
in place of
$(\schord,\schSepStores,\schinc,\schjoin,\schemp)$
yields Model~2's separating conjunction.
The following proposition, connecting the two resource monoids,
makes \cref{fact:folklore} precise:
\begin{proposition} \label{prop:sch-nom-krm}
The resource monoid $(\schord,\schSepStores,\schinc,\schjoin,\schemp)$
corresponds to $(\nomord,\nomSepStores,\nominc,\nomjoin,\nomemp)$
across the equivalence $\Sch\simeq\Nom$.
\end{proposition}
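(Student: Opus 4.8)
The plan is to exploit the fact that the equivalence $\Sch\simeq\Nom$ is witnessed by a functor $\Phi\colon\Sch\to\Nom$ whose action on objects is already pinned down by \cref{prop:semantic-domains-corresp}. As one half of an equivalence, $\Phi$ preserves finite products, the terminal object, and monomorphisms, so it automatically carries any resource monoid internal to $\Sch$ to a resource monoid internal to $\Nom$; the content of the proposition is only that the resulting structure is the \emph{specific} one $(\nomord,\nomSepStores,\nominc,\nomjoin,\nomemp)$. Thus it suffices to check that $\Phi$ sends each of the five components of $(\schord,\schSepStores,\schinc,\schjoin,\schemp)$ to its claimed $\Nom$-counterpart. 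Since \cref{prop:semantic-domains-corresp} already supplies $\Phi\Store\cong\nomStore$, $\Phi\Prop\cong\nomProp$, and $\Phi(1)\cong 1$ --- and hence $\Phi(\Store\times\Store)\cong\nomStore\times\nomStore$ --- all that remains is to match the morphisms and the separated-stores subobject under these identifications.

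The concrete form of $\Phi$ makes three of the checks routine. Unwinding the equivalence of \citet[Theorem III.9.2]{maclane2012sheaves} through the encodings $(\enc_L,\dec_L)$, the isomorphism $\Phi\Store\cong\nomStore$ sends an $L$-shaped valuation $s$ to the finite partial function $s\circ\dec_L\colon\N\nomstore\Z$, exactly as in \cref{sec:nom-situation}. Under this identification the empty valuation on $L$ maps to the empty partial function, giving $\Phi\schemp=\nomemp$; and $s_1$ is a subvaluation of $s_2$ on $L$ if and only if $s_1\circ\dec_L\subseteq s_2\circ\dec_L$ on $\N$, giving $\Phi\schord=\nomord$. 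Both are immediate element-level verifications once the encoding description is in hand.

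The substance lies in the separated-stores subobject and the join. First I would replace the coproduct-based definition of $\schSepStores$ with an intrinsic one: a pair $(t_1,t_2)$ of $L$-valuations lies in $\schSepStores(L)$ if and only if $\dom(t_1)\cap\dom(t_2)=\emptyset$. The forward direction is immediate, since for $t_1=\Store(i\circ\inl)(s_1)$ and $t_2=\Store(i\circ\inr)(s_2)$ the domains sit inside the disjoint images $\img(i\circ\inl)$ and $\img(i\circ\inr)$; conversely, given disjoint $t_1,t_2$ one takes $L_1=\dom(t_1)$, $L_2=\dom(t_2)$, and $i\colon L_1+L_2\hookrightarrow L$ the copairing of the two inclusions, which recovers $t_1,t_2$ as the required pushforwards. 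The same bookkeeping shows $\schjoin(t_1,t_2)=\Store(i)[s_1,s_2]$ is precisely the union $t_1\uplus t_2$. So, externally, $\schSepStores$ is the sheaf of disjoint pairs and $\schjoin$ is disjoint union --- which already reads like the $\Nom$-side definitions.

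With this description in place, transporting across $\Phi$ is the last step. Because $s\mapsto s\circ\dec_L$ preserves domains --- $\dom(s\circ\dec_L)=\enc_L(\dom(s))$ --- it carries a disjoint pair of $L$-valuations to a pair of finite partial functions on $\N$ with disjoint domain and intertwines the coproduct-join with $\uplus$, so objectwise at each $L$ the image of $\schSepStores(L)$ lands in $\nomSepStores$ and $\schjoin$ becomes $\nomjoin$. To upgrade this to the equality $\Phi\schSepStores=\nomSepStores$ of subobjects I would verify two points at the level of the colimit-over-encodings that defines $\Phi$: that the correspondence does not depend on the shape $L$ chosen to represent a given element (this is where \Homogeneity{} enters), and that it is surjective onto all of $\nomSepStores$, which holds because every disjoint pair in $\nomStore\times\nomStore$ has finite total domain by \Finiteness{} and so is realized at some encoded shape. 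I expect the main obstacle to be exactly this reconciliation: $\schSepStores$ and $\nomSepStores$ are presented by genuinely different conditions --- a coproduct factorization versus a disjoint-domain constraint --- and the crux is showing the equivalence functor bridges the two coherently, rather than any single computation being difficult.
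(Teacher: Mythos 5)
Your proposal is correct and takes essentially the same route as the paper: the paper states this proposition without proof as classic material, but its proof of the continuous analog (\cref{app:thm:krm-corresp}) follows exactly your blueprint --- an intrinsic recharacterization of the convolution subobject (\cref{app:lem:factor-through-tensor-iff-indep-combable}, ``factors through a tensor product iff independently combinable,'' mirroring your ``factors through a coproduct iff disjoint domains''), followed by component-wise transport of the ordering, the subobject, the join, and the unit through the concrete colimit-over-embeddings description of the equivalence functor. The one quibble is that independence of the chosen representing shape is handled by the colimit structure and the Closure condition rather than by Homogeneity (\cref{lem:nom-homogeneity}), whose role, together with Correspondence (\cref{def:Correspondence}), is to establish the equivalence itself; this does not affect the soundness of your argument.
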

We are at last ready to appreciate the full power of this fact.
First, it shows $\modelsh{P_1\sepcon P_2}{(-)}$
corresponds to $\modelgset{P_1\sepcon P_2}$:
both arise from the same internal-language definition,
up to the replacement of types and
function symbols
following \cref{prop:semantic-domains-corresp,prop:sch-nom-krm}.
Next, since the separating implication $\sepimp$
and all intuitionistic connectives can be defined
similarly using the internal language,
they must correspond as well; this extends the equivalence
of Models 1 and 2 from \ref{eq:tinysep}
to all standard separation logic connectives.
More generally, \cref{fact:folklore} says
that any construction in higher-order logic
that only uses the types and functions of
\cref{prop:semantic-domains-corresp,prop:sch-nom-krm}
corresponds across the equivalence $\Sch\simeq\Nom$.

\section{The discrete case} \label{sec:discrete}

\cref{thm:krm-corresp}
imports quite a bit of measure theory
in order to support continuous probability.
To describe the key ideas, we
temporarily set the measure theory aside
by first presenting in detail a version of
\cref{thm:krm-corresp}
adapted to discrete probability.

The structure of this section is completely analogous
to \cref{sec:nom-situation}.
We first present two different
probabilistic separation logics:
one
where separation is defined via the product
of sample spaces,
and a second based on \citet{li2023lilac}
where separation is defined via \emph{independent combination}.
Then, we will show how separation-as-product
naturally lives in a category $\dsheafcat$
of \emph{\dgoodsheaves{}} analogous to the Schanuel topos,
and how separation-as-independent-combination
naturally lives in a category $\dgsets$ of \emph{\dagoodname{s}}.
Finally, we show these two categories equivalent,
and that the two notions of separation correspond
across this equivalence, giving an analog of \cref{fact:folklore}
suitable for discrete probability.

In \cref{sec:nom-situation} we considered a tiny separation logic
\ref{eq:tinysep} for integer-valued stores.
Analogously, we consider here a logic
for integer-valued random variables:
\begin{align}
P,Q ::= X\sim \mu \mid \trueprop \mid P\sepcon Q.
\tag{\textsc{TinyProbSep}}
\label{eq:tinyprobsep}
\end{align}
The proposition $X\sim\mu$ asserts that the logical variable $X$
stands for an integer-valued random variable
with probability mass function $\mu : \Z\to[0,1]$.
As in \ref{eq:tinysep},
a proposition is well-formed in context $\Gamma$,
written $\Gamma\vdash P$,
if $\Gamma$ contains the variables used by $P$.
We shall establish the equivalence of two different models for \ref{eq:tinyprobsep}.
In both cases, the basic idea is that a proposition denotes a predicate
on \emph{probability spaces} and logical variables denote \emph{random
variables}, just as a proposition in ordinary separation logic
denotes a predicate on stores with logical variables denoting heap locations.
The difference is in how these objects are represented:

\goodparagraph{Model~1: separation as product}
In this model, a \emph{probability space}
consists of two components:
(1) a nonempty countable set $\Omega$
called the \emph{sample space},
and (2) a \emph{probability space $\calP{}$ on $\Omega$}
consisting of a pair $(\calF,\mu)$
with $\calF$ a $\sigma$-algebra on $\Omega$
and $\mu : \calF\to[0,1]$ a probability measure.
A \emph{random variable on $\Omega$}
is a function $\Omega\to\Z$.
\labelword{We}{def:Dpspcsprelim} \labelword{will}{def:DRVprelim} write $\Dpspcsprelim(\Omega)$
and $\DRVprelim(\Omega)$
for the set of probability spaces and random variables on $\Omega$ respectively.

The meaning of a proposition depends on the underlying sample space:
$\Gamma\vdash P$
denotes a map
$\modelsh{\Gamma\vdash P}\Omega
: (\Gamma\to \DRVprelim(\Omega))\to \powerset(\Dpspcsprelim(\Omega))$
associating each \emph{random substitution} $G : \Gamma\to\DRVprelim(\Omega)$
to the set
$\modelsh{\Gamma\vdash P}\Omega(G)$
of probability spaces on $\Omega$ satisfying $P$.

Under this interpretation, we have $\calP\in\modelsh\trueprop\Omega(G)$
for all probability spaces $\calP$ on $\Omega$,
and $(\calF,\mu)\in\modelsh{X\sim\nu}\Omega(G)$
if and only if $G(X)$ is $\calF$-measurable
and has distribution $\nu$; i.e., for all $i\in \Z$ it holds that
$G(X)^{-1}(i)\in\calF$
and $\mu(G(X)^{-1}(i)) = \nu(i)$.
Separating conjunction is defined in terms of products
of sample spaces. To make this precise, we need the following definitions:

\begin{definition}[Pullback probability space]
  Let $X$ be a nonempty countable set, $(Y,\calG,\nu)$
  a countable probability space,
  and $f : X\twoheadrightarrow Y$
  a surjective function.
  The \emph{pullback of $(\calG,\nu)$ along $f$},
  written $f^{-1}(\calG,\nu)$,
  is the probability space $(\calF,\mu)$ on $X$ 
  defined by
  $\calF = \{f^{-1}(G) \mid G \in\calG\}$
  and $\mu(f^{-1}(G)) = \nu(G)$.
  Note $\mu$ is well-defined because
  $f$ surjective, so $f^{-1}$ injective.
\end{definition}

\newcommand\subpspcs{\mathrel{\nref{def:subpspcs}{\sqsubseteq}}}
\begin{definition}[Subspace] \label{def:subpspcs}
  Given two probability spaces $(\calF,\mu)$
  and $(\calG,\nu)$ on $\Omega$,
  say $(\calF,\mu)$ is a \emph{subspace} of $(\calG,\nu)$,
  written $(\calF,\mu)\subpspcs(\calG,\nu)$,
  if $\calF\subseteq\calG$ and $\nu|_\calF = \mu$.\end{definition}

With these definitions, the separating conjunction $P_1\sepcon P_2$
holds of a probability space $\calP$
on $\Omega$
if and only if there exist probability spaces $\calP_1$ on $\Omega_1$
and $\calP_2$ on $\Omega_2$
and a surjective function
$p : \Omega\twoheadrightarrow\Omega_1\times\Omega_2$
such that $\calP_1$ satisfies $P_1$
and $\calP_2$ satisfies $P_2$
and $p^{-1}(\calP_1\otimes\calP_2)$
is a subspace of $\calP$,
where $\calP_1\otimes\calP_2$
is the product probability space
on $\Omega_1\times\Omega_2$
whose measure is the product measure
induced by the measures of $\calP_1$ on $\Omega_1$
and $\calP_2$ on $\Omega_2$ in the usual way.

For example, let $\Omega$ be the sample space $\{0,1\}^3$
of points $(x,y,z)\in\R^3$ with $x,y,z$ all either $0$ or $1$.
Let $G$ be the random substitution of type $\{X,Y\}\to\DRVprelim(\Omega)$
where $G(X)$ is the random variable $(x,y,z)\mapsto x$
and $G(Y)$ is the random variable $(x,y,z)\mapsto y$.
Let $(\calF,\mu)$ be the uniform probability space on $\Omega$,
assigning each tuple $(x,y,z)$ probability $1/8$.
It holds that
\[
  (\calF,\mu)\in\modelsh{(X\sim\Ber(1/2))\sepcon(Y\sim\Ber(1/2))}\Omega(G),
\]
witnessed by setting $p$
to the projection $\Omega\twoheadrightarrow\{0,1\}\times\{0,1\}$
defined by $p(x,y,z) = (x,y)$.

\newcommand\Isigalg{{\nref{def:Isigalg}{\Sigma_{[0,1]}}}}
\goodparagraph{Model~2: separation as independent combination}
In this model, one fixes upfront a single measurable space
to serve as a ``universal sample space''
into which all discrete sample spaces can be embedded.
\labelword{Any}{def:Isigalg} standard Borel space will do;
we choose the interval $[0,1]$.
The idea is that, just as every finite store shape $L$ can be 
encoded as a finite subset of $\N$ via an injective function $\enc_L : L\hookrightarrow \N$,
every nonempty countable sample space $\Omega$
can be encoded as a countable partition of the interval
via a random variable $\dec_\Omega : [0,1]\to\Omega$
with each $\dec_\Omega^{-1}(\omega)$ nonnegligible, 
visualized as:
  \begin{equation}
    \begin{aligned}
    \tikz{
          \draw [|-|,thick] (0,0) -- (3,0);
          \node[] at (0, 0.35) {$0$};
          \node[] at (3, 0.35) {$1$};
          \draw [] (1,0.1) -- (1,-0.1) node[above, yshift=5] {$1/3$};
          \draw [] (2,0.1) -- (2,-0.1) node[above, yshift=5] {$2/3$};

          \node [] at (0.5,-0.75) {$\omega_1$};
          \node [] at (1.5,-0.75) {$\omega_2$};
          \node [] at (2.5,-0.75) {$\omega_3$};

          \draw (1.5,-0.75) ellipse (1.5cm and 0.4cm);
          \node[] at(3.5, 0) {$[0,1]$};
          \draw[->] (3.5, -0.2) -- (3.5, -0.5) node[right, xshift=0.5, yshift=4] {$\dec_\Omega$};
          \node[] at(3.5, -0.75) {$\Omega$};

          \filldraw[draw=black, fill=gray!20] (0,0) -- (1,0) -- (0.5, -0.5) -- (0,0) -- cycle;
          \filldraw[draw=black, fill=gray!20] (1,0) -- (2,0) -- (1.5, -0.5) -- (1,0) -- cycle;
          \filldraw[draw=black, fill=gray!20] (2,0) -- (3,0) -- (2.5, -0.5) -- (2,0) -- cycle;
    }
    \end{aligned}
    \tag{\textsc{IntervalEncode}}
    \label{fig:encoding-omega}
  \end{equation}

This illustration gives one possible encoding
of the three-point space $\{\omega_1,\omega_2,\omega_3\}$
as the partition $\{[0,1/3), [1/3,2/3], (2/3,1]\}$,
generated by the random variable $\dec_\Omega : [0,1]\to \{\omega_1,\omega_2,\omega_3\}$
taking value $\omega_1$ on $[0,1/3)$,
$\omega_2$ on $[1/3,2/3]$,
and $\omega_3$ on $(2/3,1]$.

With this fixed sample space in hand,
the random variables $\Omega\to\Z$
of Model~1
can be encoded as 
Lebesgue-measurable functions $[0,1]\to \Z$, quotiented
by almost-everywhere equality.
\labelword{We}{def:nomDRVprelim}
will write $\nomDRVprelim$ for the set of integer-valued random variables.

In order for our encoding
of sample spaces as partitions generated by random variables
to respect almost-everywhere equality of random variables,
we must consider such partitions up to negligibility:
for example,
the partitions $\{[0,1/3), [1/3,2/3], (2/3,1]\}$
and $\{[0,1/3], (1/3,2/3), [2/3,1]\}$
should be considered equivalent,
as they arise from almost-everywhere-equal random variables.
This idea motivates the following definition.

\newcommand\parteq{\mathrel{\nref{def:parteq}{\aseq}}}
\begin{definition}
  A \emph{countable measurable partition of $[0,1]$}
  is a countable partition $\{A_i\}_{i\in I}$
  with each $A_i$ a Lebesgue-measurable and nonnegligible subset of $[0,1]$,
  quotiented by almost-everywhere equality:
  \labelword{two}{def:parteq} partitions are almost-everywhere equal,
  written $\{A_i\}_{i\in I} \parteq \{B_j\}_{j\in J}$,
  if for all $i$ in $I$ there exists a unique $j$ in $J$
  such that the symmetric difference $A_i\triangle B_j$ is Lebesgue-negligible.
\end{definition}

Just as any $L$-shaped valuation can be encoded as a finite partial function on $\N$,
any discrete probability space can be encoded as a countable measurable partition equipped with a measure:

\begin{definition}
  A \emph{countable measured partition of $[0,1]$}
  is a pair $(\{A_i\}_{i\in I}, \mu)$
  with $\{A_i\}_{i\in I}$ a countable measurable partition of $[0,1]$
  and $\mu : \{A_i\}_{i\in I}\to [0,1]$
  a function satisfying $\sum_i \mu(A_i) = 1$.
  Two such partitions
  are equal if their measurable partitions are equal
  and their measures agree.
  \labelword{Let}{def:nomDpspcsprelim} $\nomDpspcsprelim$ be the set of countable measured partitions of $[0,1]$.
\end{definition}

Now that we have a way of encoding discrete probability spaces as countable measured partitions of $[0,1]$,
we can define a model of \ref{eq:tinyprobsep} purely in terms of countable measured partitions.
A proposition $\Gamma\vdash P$ denotes a map
$\modelgset{\Gamma\vdash P} : (\Gamma\to\nomDRVprelim)\to \powerset(\nomDpspcsprelim)$
assigning each random substitution $G : \Gamma\to\nomDRVprelim$
the set of countable measured partitions satisfying $P$.
The interpretations of $\trueprop$ and $X\sim\mu$ are as in
Model~1:
$\calP\in\modelgset\trueprop(G)$ for any countable measured partition $\calP$,
and $(\{A_i\}_{i\in I},\mu)\in\modelgset{X\sim\nu}(G)$
if and only if
for all $k\in \Z$ there exists $i\in I$ with $G(X)^{-1}(k)\triangle A_i$ negligible
and $\mu(G(X)^{-1}(k)) = \nu(k)$.
Separating conjunction is defined via \emph{independent combination},
following \citet{li2023lilac}:
\begin{definition}[Discrete independent combination] \label{def:Dicom}
  A countable measured partition $(\calA,\mu)$
  is an \emph{independent combination}
  of $(\calA_1,\mu_1)$ and $(\calA_2,\mu_2)$
  if (1) the partition $\calA$ is generated by
  the intersections $A_1\cap A_2$ for $A_1$ in $\calA_1$ 
  and $A_2$ in $\calA_2$
  and (2) $\mu(A_1\cap A_2) = \mu_1(A_1)\mu_2(A_2)$
  for all $A_1$ in $\calA_1$ and $A_2$ in $\calA_2$.
  Independent combinations are unique if they exist~\lilac{Lemma 2.3}{11},
  defining a partial function $\Dicom$
  with $(\calA_1,\mu_1)\Dicom(\calA_2,\mu_2) = (\calA,\mu)$
  if and only if $(\calA,\mu)$ is the independent combination
  of $(\calA_1,\mu_1)$
  and $(\calA_2,\mu_2)$.
\end{definition}

\begin{definition}[Ordering on partitions] \label{def:Dorder}
  For two countable measured partitions $(\calA,\mu)$
  and $(\calB,\nu)$,
  let $\Dorder$ be the ordering relation
  defined by $(\calA,\mu)\Dorder(\calB,\nu)$
  if and only if the partition $\calA$ is coarser than $\calB$
  and $\nu$ restricts to $\mu$.
\end{definition}
These definitions give an interpretation of separating conjunction:
a countable measured partition $(\calA,\mu)$ on $[0,1]$
satisfies $P_1\sepcon P_2$ with random substitution $G$,
written $(\calA,\mu)\in\modelgset{P_1\sepcon P_2}(G)$,
if and only if
there exist $(\calA_1,\mu_1)$
and $(\calA_2,\mu_2)$ independently combinable
with $(\calA_1,\mu_1)\Dicom(\calA_2,\mu_2)\Dorder(\calA,\mu)$
such that $(\calA_1,\mu_1)$ is in $\modelgset{P_1}(G)$
and $(\calA_2,\mu_2)$ is in $\modelgset{P_2}(G)$.

\goodparagraph{Relating the two models}
We will show that Model~1 and Model~2 are equivalent.
As shown in \ref{fig:encoding-omega},
every nonempty countable sample space $\Omega$ can be encoded as a
countable measured partition on $[0,1]$ via a suitable
random variable $\dec_\Omega : [0,1]\to\Omega$.
Choosing a $\dec_\Omega$
for every $\Omega$
allows translating Model~1 into Model~2:
a Model~1 random variable $X \in \DRV(\Omega)$
corresponds to a Model~2 random variable $X\circ\dec_\Omega\in\nomDRVprelim$,
and probability spaces can be translated similarly.
To extend this into an equivalence analogous to \cref{fact:folklore},
we repeat the recipe of \cref{sec:nom-situation}: we will
place Models 1 and 2 into suitable categories, show the categories equivalent, and
show that the models correspond across this equivalence.

\subsection{\Dgoodsheaves{}}

In \cref{sec:schanuel-topos} we saw how the Schanuel topos $\Sch$
captured the invariants maintained by Model~1 of \ref{eq:tinysep}.
In this section we describe analogously how
a category of \emph{\dgoodsheaves},
written $\dsheafcat$,
captures the invariants maintained by Model~1 of \ref{eq:tinyprobsep}.
Whereas the invariants of \cref{sec:schanuel-topos}
were about extensions and restrictions of the store shape $L$,
the invariants in our probabilistic setting
are about extensions and restrictions of the sample space $\Omega$,
as observed by \citet{simpson2017probability,simpsonsheafslides}:
\newcommand\probExtension{\nref{def:probExtension}{Extension}}
\newcommand\probRestriction{\nref{def:probRestriction}{Restriction}}
\begin{itemize}[leftmargin=*]
  \item \emph{\labelword{Extension}{def:probExtension}}:
    propositions that hold in sample space $\Omega$
    should continue to hold when $\Omega$
    is extended to a larger sample space $\Omega'$
    via a surjective function $p : \Omega'\twoheadrightarrow\Omega$.
    More precisely, if $(\calF,\mu)\in\modelsh{\Gamma\vdash P}\Omega(G)$
    for some probability space $(\calF,\mu)$ on $\Omega$
    and random substitution $G : \Gamma\to\DRV(\Omega)$,
    then it should hold that $p^{-1}(\calF,\mu) \in \modelsh{\Gamma\vdash P}{\Omega'}(G\cdot p)$,
    where $G\cdot p$ is the random substitution
    $(G\cdot p)(X) = G(X)\circ p$.\footnote{Note that this rolls the two invariants \Extension{} and \Renaming{}
    of \cref{sec:schanuel-topos} into one: it captures invariance under
    permutations of the underlying sample space in the case where $p$
    is a bijection.}

  \item \emph{\labelword{Restriction}{def:probRestriction}}:
    the truth of a proposition should not depend on any unused samples.
    For example, let $\Omega$ be an arbitrary sample space.
    Suppose $G : \Gamma\to\RV(\Omega)$
    sends every $X$ in $\Gamma$
    to the constant random variable $0$, so
    $G(X)(\omega) = 0$ for all $\omega$,
    and let $(\calF,\mu)$ be the minimal probability space on $\Omega$
    where $\calF$ is the minimal $\sigma$-algebra $\{\emptyset,\Omega\}$
    and $\mu$ the minimal probability measure with $\mu(\emptyset) = 0$
    and $\mu(\Omega) = 1$.
    Both $G$ and $(\calF,\mu)$
    don't use any of the samples in $\Omega$: every random variable $G(X)$ 
    is a deterministic value, and $\mu$ only assigns probabilities to
    the deterministic events $\emptyset$ and $\Omega$.
    \probRestriction{} says that if a proposition $P$ holds in this situation,
    then it should hold of the one-point probability space
    on the one-point set with substitution
    sending every $X$ in $\Gamma$ to the constant random variable $0$.
\end{itemize}
To capture these invariants, we replay the construction of the Schanuel topos:
whereas the Schanuel topos is a category of atomic sheaves on the category $\StoreSh$
of store shapes, $\dsheafcat$ is a category of atomic sheaves
on a category of discrete sample spaces.

\labelword{First}{def:CSur}, we fix a base category capturing \probExtension{}:
the category $\CSur$ of nonempty countable sets
and surjective functions.
The idea is that an object $\Omega$ of $\CSur$
is a countable sample space,
and a morphism $p : \Omega'\twoheadrightarrow\Omega$
extends a sample space $\Omega$
to a larger space $\Omega'$
in which every sample $\omega$ in $\Omega$
is expanded to a set of samples $p^{-1}(\omega)\subseteq\Omega'$;
surjectivity of $p$ ensures that every $p^{-1}(\omega)$
is nonempty, so $p$ never deletes any samples in $\Omega$.

Functors $\CSurop\to\Set$ model sample-space-dependent concepts.
In particular, there is a functor modelling probability spaces:
  \labelword{For}{def:Dpspcs} a $\CSur$-morphism $p : \Omega'\twoheadrightarrow\Omega$,
  setting $\Dpspcs(p)$ to the function $\Dpspcsprelim(\Omega)\to\Dpspcsprelim(\Omega')$
  that sends a probability space $\calP$
  on $\Omega$ to its pullback $p^{-1}\calP$
  makes $\Dpspcsprelim$ a functor $\CSurop\to\Set$.

Next, we capture \probRestriction{} by cutting the functor category $\funcat{\CSurop}\Set$
down to a full subcategory of atomic sheaves.
The notion of atomic sheaf is given by the notion of \emph{atomic topology},
which exists for a given category if and only if the following \emph{Ore property}
holds~\cite[p.115]{maclane2012sheaves}:
\begin{definition} \label{prop:csur-ore}
  A category $C$ has the \emph{right Ore property}
  if for all $X\xrightarrow f Z\xleftarrow g Y$ there exists
  $X\xleftarrow h W\xrightarrow k Y$ such that $fh = gk$.
\end{definition}
That $\CSur$ satisfies this condition can be straightforwardly verified:
any cospan can be completed to a commutative square
by taking a pullback in $\Set$.
Thus the notion of atomic sheaf makes sense for $\CSur$,
a functor is an atomic sheaf
if and only if it has a restriction operation in the sense of
\cref{def:atomic-sheaf-condition},
and the following definition makes sense:
\begin{definition} \label{def:dsheafcat}
  Let $\dsheafcat$ be the full subcategory of
  the category $\funcat{\CSurop}\Set$
  consisting of those functors that are atomic sheaves.
\end{definition}

Just as $\Dpspcs(\Omega)$ models the concept of probability spaces
on $\Omega$, there are other atomic sheaves corresponding to each
of the other concepts used to define Model~1:

\newcommand\DProp{\nref{def:DProp}{\mathrm{Prop}}}
\begin{proposition} \label{prop:dwarp-in-the-atomic-sheaves}
  The following are objects of $\dsheafcat$: \begin{itemize}[leftmargin=*]
    \item \labelword{The}{def:DProp} constant functor $\DProp$ of propositions sending every object
       of $\CSur$ to the set $\{\vtrue,\vfalse\}$ and every morphism
       of $\CSur$ to the identity function.
    \item The functor $\DRV$ of random variables,
      with action on morphisms
      defined by $\DRV(p : \Omega'\twoheadrightarrow\Omega)(X : \DRVprelim(\Omega))
      = (X\circ p : \DRVprelim(\Omega'))$.
    \item The functor $\DRV^\Gamma$
      of $\Gamma$-substitutions
      with $\DRV^\Gamma(\Omega) = \Gamma\to\DRV(\Omega)$
      and action on morphisms
      defined by lifting $\DRV$ pointwise.
  \end{itemize}
\end{proposition}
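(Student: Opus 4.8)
The plan is to verify that each of the three functors has a \emph{restriction operation}, which by the discussion preceding \cref{def:dsheafcat} (using that $\CSur$ has the right Ore property) is equivalent to being an atomic sheaf on $\CSur$. Reading \cref{def:atomic-sheaf-condition} in $\CSur$, an element $y\in F(\Omega')$ is \emph{restrictable} along a surjection $p:\Omega'\twoheadrightarrow\Omega$ when $F(q)(y)=F(r)(y)$ for every pair of surjections $q,r:\Theta\twoheadrightarrow\Omega'$ with $p\circ q=p\circ r$, and $F$ is an atomic sheaf exactly when every restrictable $y$ admits a unique $x\in F(\Omega)$ with $F(p)(x)=y$. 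I would check this condition for each functor, with $\DRV$ carrying essentially all of the content.

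For $\DProp$ there is nothing to do: it is the constant functor at $\{\vtrue,\vfalse\}$, so $F(p)$ is the identity for every $p$; every element is vacuously restrictable, and the unique preimage of $y$ under the identity is $y$ itself.

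The heart of the argument is $\DRV$. Here $F(p)$ is precomposition $(-)\circ p$, so for $p:\Omega'\twoheadrightarrow\Omega$ a restriction of $Y:\Omega'\to\Z$ is exactly a factorization $Y=X\circ p$. I would first show that $Y$ is restrictable along $p$ if and only if $Y$ is constant on the fibers of $p$. The forward direction is the only nonroutine point: given $a,b\in\Omega'$ with $p(a)=p(b)$, instantiate the restrictability hypothesis at $\Theta=\Omega'$, with $q=\mathrm{id}_{\Omega'}$ and $r$ the involution of $\Omega'$ swapping $a$ and $b$. Both are surjections, and $p\circ r=p=p\circ q$ since $r$ permutes within a single fiber, so restrictability forces $Y=Y\circ r$ and hence $Y(a)=Y(b)$. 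Conversely, if $Y$ is fiber-constant then for any $q,r$ with $p\circ q=p\circ r$ the values $q(\theta),r(\theta)$ lie in a common fiber, so $Y\circ q=Y\circ r$. Finally, a fiber-constant $Y$ factors through the surjection $p$ as $Y=X\circ p$ for a unique $X:\Omega\to\Z$ (existence by fiber-constancy, uniqueness because $p$ is surjective), which is the required restriction; this shows $\DRV$ is an atomic sheaf.

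For $\DRV^\Gamma$ I would avoid repeating this work. As presheaves on $\CSur$ one has $\DRV^\Gamma=\prod_{X\in\Gamma}\DRV$, the pointwise action of \cref{prop:dwarp-in-the-atomic-sheaves} being exactly the product action; since $\dsheafcat$ is a reflective subcategory of $\funcat{\CSurop}\Set$ it is closed under limits, so the product of copies of the sheaf $\DRV$ is again a sheaf. (Concretely, one can instead verify the condition pointwise: $G'\in\DRV^\Gamma(\Omega')$ is restrictable along $p$ iff each $G'(X)$ is restrictable in $\DRV$, whence each restricts uniquely to some $G(X)$, and these assemble into the unique restriction $G$.) The only genuinely delicate step is the fiber-constancy argument for $\DRV$; everything else is bookkeeping.
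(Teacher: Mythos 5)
Your proof is correct. Note, though, that the paper never actually writes out a proof of this proposition: the discrete section states it without argument, and the only proof of this kind in the paper is the continuous analogue, \cref{app:lem:rv-mble-sheaf}, which shows $\RVshf_A$ is an atomic sheaf on $\StdMble$. That proof cannot argue pointwise as you do, because its elements are almost-everywhere equivalence classes over a base category whose maps are themselves quotiented by a.e.\ equality; instead it dualizes along $\StdMble\simeq\StdMbleAlgop$ (\cref{app:lem:mblecat-dual}) and invokes the target practice lemma (\cref{app:lem:target-practice}) to produce, for any event not in $\img(p^{-1})$, a pair of maps $q,r$ with $pq=pr$ that separate it, contradicting $p$-invariance. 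Your transposition $r$ swapping two points $a,b$ of a single fibre is precisely the elementary discrete shadow of that lemma: both constructions manufacture the pair $(q,r)$ witnessing that anything failing to factor through $p$ fails to be restrictable. So your route buys a short, self-contained verification that exploits the set-level concreteness of $\CSur$ (fibres have honest points, so fibre-constancy makes sense), while the paper's machinery is what one is forced into once negligible sets rule out pointwise reasoning; your treatment of $\DProp$ (trivial) and of $\DRV^\Gamma$ (a $\Gamma$-fold product of sheaves, hence a sheaf by reflectivity of $\dsheafcat$ in presheaves, or by the pointwise check) is likewise correct and is the standard bookkeeping.
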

\noindent With these sheaves in hand,
one can show Model~1 lives in $\dsheafcat$:
\begin{proposition}
  If $\Gamma\vdash P$ then
  the $\Omega$-indexed family of functions
  \[
    \left(
      \modelsh{\Gamma\vdash P}\Omega
      : (\Gamma\to\DRV(\Omega))\to
      \calP(\Dpspcs(\Omega))
    \right)_{\Omega\in\CSur}
    \]
    is natural in $\Omega$,
    so defines a morphism $\DRV^\Gamma\to\Prop^\Dpspcs$
    in $\dsheafcat$, and every morphism
    of this type satisfies \probExtension{}
    and \probRestriction{}.
\end{proposition}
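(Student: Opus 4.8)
The plan is to reduce both claims to a single pointwise biconditional and then verify it by structural induction. Since $\dsheafcat$ is a category of sheaves it is cartesian closed, so $\DProp^\Dpspcs$ exists and, transposing across the exponential adjunction, a morphism $\DRV^\Gamma\to\DProp^\Dpspcs$ is the same datum as a natural transformation $\DRV^\Gamma\times\Dpspcs\to\DProp$ into the constant functor $\DProp$. Because $\DProp$ is constant, unwinding naturality at an arbitrary $\CSur$-morphism $q:\Omega'\twoheadrightarrow\Omega$ --- along which $(\DRV^\Gamma\times\Dpspcs)(q)(G,\calP)=(G\cdot q,\,q^{-1}\calP)$ --- shows that such a natural transformation is exactly a family of predicates $\modelsh{\Gamma\vdash P}{\Omega}(G)\subseteq\Dpspcs(\Omega)$ satisfying
\begin{equation} \label{eqn:dstar}
  \calP\in\modelsh{\Gamma\vdash P}{\Omega}(G)
  \iff
  q^{-1}\calP\in\modelsh{\Gamma\vdash P}{\Omega'}(G\cdot q)
\end{equation}
for every such $q$, every $G:\Gamma\to\DRV(\Omega)$, and every $\calP\in\Dpspcs(\Omega)$. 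Thus it suffices to prove \eqref{eqn:dstar} for the interpretation at hand; conversely, any morphism of the stated type corresponds to a family satisfying \eqref{eqn:dstar}.

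This last remark already settles the second claim, uniformly in $P$. The forward direction of \eqref{eqn:dstar}, read with $q=p$, is verbatim \probExtension{}. For \probRestriction{}, apply the backward direction to the unique surjection $t:\Omega\twoheadrightarrow 1$ onto the one-point space: the minimal space $(\{\emptyset,\Omega\},\mu)$ and the constant-$0$ substitution described there are precisely the pullbacks $t^{-1}\calP_0$ and $G_0\cdot t$ of the one-point probability space $\calP_0$ and the substitution $G_0$ on $1$, so \eqref{eqn:dstar} transports their satisfaction down to the one-point space, which is exactly what \probRestriction{} demands.

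It remains to establish \eqref{eqn:dstar} by induction on the derivation of $\Gamma\vdash P$. The case $\trueprop$ is immediate, as both sides always hold. For $X\sim\nu$, the identity $(G\cdot q)(X)^{-1}(i)=q^{-1}\bigl(G(X)^{-1}(i)\bigr)$ together with the defining properties of the pullback space --- that $q^{-1}\calF=\{q^{-1}(A):A\in\calF\}$ with $q^{-1}$ injective (as $q$ is surjective) and that the pullback measure sends $q^{-1}(A)$ to $\mu(A)$ --- shows that $G(X)$ is $\calF$-measurable with distribution $\nu$ if and only if $(G\cdot q)(X)$ is $q^{-1}\calF$-measurable with distribution $\nu$. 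The forward direction of the separating-conjunction case is equally routine: witnesses $\calP_1,\calP_2$ and a surjection $p:\Omega\twoheadrightarrow\Omega_1\times\Omega_2$ for $\calP$ yield witnesses for $q^{-1}\calP$ by replacing $p$ with $p\circ q$, using the functoriality $(p\circ q)^{-1}=q^{-1}\circ p^{-1}$ of pullback and the fact that pullback along a surjection preserves $\subpspcs$.

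The crux --- and the step I expect to be the main obstacle --- is the backward direction of the separating-conjunction case: from witnesses $\calP_1,\calP_2$ and a surjection $p':\Omega'\twoheadrightarrow\Omega_1\times\Omega_2$ for $q^{-1}\calP$ one must manufacture a witnessing surjection out of the \emph{smaller} space $\Omega$. The governing observation is that the containment $(p')^{-1}(\calP_1\otimes\calP_2)\subpspcs q^{-1}\calP$ places every event $(p')^{-1}(B)$ inside $q^{-1}\calF$, so that $p'$ is measurable with respect to $q^{-1}\calF$ and hence, since all sample spaces are countable, factors through $q$ --- up to the relevant $\sigma$-algebras --- as $p'=\bar p\circ q$ for a surjection $\bar p:\Omega\twoheadrightarrow\Omega_1\times\Omega_2$; one then checks that $\bar p$ witnesses $\calP\in\modelsh{\Gamma\vdash P_1\sepcon P_2}{\Omega}(G)$, using that pullback along the surjection $q$ reflects $\subpspcs$. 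This descent of the witness from $\Omega'$ to $\Omega$ is the measure-theoretic counterpart of the \Restriction-based reasoning in the Schanuel setting, and it is precisely here that the restrictability underlying the atomic-sheaf structure of $\dsheafcat$ is doing the work.
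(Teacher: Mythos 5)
Your reduction of both claims to a single pointwise biconditional is correct: since $\DProp$ is a constant sheaf, naturality of the uncurried family $\DRV^\Gamma\times\Dpspcs\to\DProp$ is exactly the stated iff, Extension is its forward direction, and the paper's Restriction example is the backward instance along the unique surjection $\Omega\twoheadrightarrow 1$ (the minimal space and the constant-$0$ substitution are indeed pullbacks along that map). The $\trueprop$ and $X\sim\nu$ cases and the forward direction of $\sepcon$ are also fine. Note the paper states this proposition without proof, so your argument must stand on its own --- and it breaks exactly at the step you flagged as the crux.

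The gap is the factorization claim in the backward direction of $\sepcon$. From $(p')^{-1}(\calP_1\otimes\calP_2)\sqsubseteq q^{-1}\calP$ you conclude that $p'$ factors as $\bar p\circ q$ for a surjection $\bar p:\Omega\twoheadrightarrow\Omega_1\times\Omega_2$. But measurability of $p'$ with respect to $q^{-1}\calF$ only controls preimages of sets in the $\sigma$-algebra of $\calP_1\otimes\calP_2$; when that $\sigma$-algebra fails to separate points of $\Omega_1\times\Omega_2$, the map $p'$ need not be constant on $q$-fibers, and in fact no surjection $\Omega\twoheadrightarrow\Omega_1\times\Omega_2$ need exist at all. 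Concretely: let $\Omega=1$, $\Omega'=\{a,b\}$, $q$ the unique map, $\Omega_1=\{0,1\}$ with the trivial $\sigma$-algebra, $\Omega_2=1$, and $p'(a)=(0,\ast)$, $p'(b)=(1,\ast)$. Then $(p')^{-1}(\calP_1\otimes\calP_2)$ is the minimal space on $\Omega'$, so the subspace hypothesis holds, yet $p'$ is not constant on the single $q$-fiber, and no map from the one-point set $\Omega$ onto the two-point set $\Omega_1\times\Omega_2$ exists --- so no hedge ``up to the relevant $\sigma$-algebras'' with that codomain can succeed. The repair is to normalize the witnesses before factoring: every $\sigma$-algebra on a countable set is atomic, so replace each $(\Omega_i,\calF_i,\mu_i)$ by its quotient onto the (countable, nonempty) set of atoms of $\calF_i$, carrying the full power set and the induced measure, and replace $p'$ by its composite with the quotient map; this changes none of the pulled-back spaces on $\Omega'$, hence preserves all hypotheses. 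After normalization, singletons are measurable, so their $p'$-preimages are of the form $q^{-1}(A)$ with the sets $A$ forming a nonempty $\calF$-measurable partition of $\Omega$ (using that $q^{-1}$ is injective because $q$ is surjective), and that partition is precisely a surjection $\bar p$ from $\Omega$ onto the quotient product satisfying $\bar p\circ q = p'$. With this emendation your remaining steps --- reflection of the subspace order along $q^{-1}$ and the induction hypothesis applied to the two component pullbacks --- do close the case.
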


\subsection{Discrete \agoodnamenolink{s}}

We now turn to Model~2 of \ref{eq:tinyprobsep} described in \cref{sec:discrete}.
Just as Model~2 of \ref{eq:tinysep} naturally lives in the category $\Nom$
of nominal sets,
Model~2 of \ref{eq:tinyprobsep} naturally lives in
a category $\dgsets$ of \emph{\dagoodname{s}}.
$\Nom$
captures two invariants held by Model~2 of \ref{eq:tinysep}:
\Permutation{} and \Finiteness{}.
Model~2 of \ref{eq:tinyprobsep} maintains analogous invariants:
\newcommand\prPerm{\nref{def:prPerm}{Permutation}}
\newcommand\Sparsity{\nref{def:Sparsity}{Sparsity}}
\begin{itemize}[leftmargin=*]
  \item \emph{\labelword{Permutation}{def:prPerm}}:
    propositions should be stable under permuting the sample space $[0,1]$.
    More precisely, if $(\calA,\mu)\in\modelgset{\Gamma\vdash P}(G)$
    for some countable measured partition $(\calA,\mu)$
    and random substitution $G : \Gamma\to\nomDRVprelim$,
    and if $\pi : [0,1]\to[0,1]$ is a measurable bijection,
    then it should hold that $(\calA,\mu)\cdot\pi\in\modelgset{\Gamma\vdash P}(G\cdot\pi)$,
    where $(\calA,\mu)\cdot\pi$ and $G\cdot\pi$
    are the results of the permutation $\pi$ acting on $(\calA,\mu)$
    and $G$.

  \item \emph{\labelword{Sparsity}{def:Sparsity}}:
    more subtly, the countable measured partitions
    $(\calA,\mu)$ represent \emph{countable} probability spaces only.
    This ensures $(\calA,\mu)$ always leaves ``enough room'' in $[0,1]$
    for ``fresh randomness'': for any other discrete probability space,
    there exists an encoding of it as a countable measured partition $(\calB,\nu)$
    such that the discrete independent combination
    $(\calA,\mu)\Dicom(\calB,\nu)$ is defined.
\end{itemize}
To capture \prPerm{}, the objects of $\dgsets$
are sets equipped with an action by a group of measurable automorphisms.
  \labelword{Specifically}{def:dauti}, let $\dauti$ be the group of measurable maps
  $\pi : [0,1]\to[0,1]$ that are bijective mod almost-everywhere equality.
  The category of $\dauti$-sets is the category whose objects
  are sets $X$ equipped with a right action by $\dauti$
  and whose morphisms are equivariant functions.
  Just as there is a $\nomgrp$-set $\nomStore$ of stores,
  there is a $\dauti$-set $\nomDpspcs$ of countable measured partitions on $[0,1]$:
\begin{definition} \label{def:nomDpspcs}
  Let $\nomDpspcsprelim$ be the set of countable measured partitions on $[0,1]$
  with action $(\{A_i\}_{i\in I},\mu)\cdot\pi = (\{\pi^{-1}(A_i)\}_{i\in I},\mu\circ\pi)$.
\end{definition}

\Sparsity{} is captured by
topologizing $\dauti$
via countable measurable partitions,
so a stabilizer $\Stab x$ is open if
every $\pi$ stabilizing $x$ does so for a ``countable reason'':
there is a partition $\calA$ fixed by $\pi$ such that any other permutation
fixing $\calA$ also stabilizes $x$. 

\newcommand\setaseq{\mathrel{=_{\rm a.e.}}}
\newcommand\dautiFix{\operatorname{\nref{def:dautiFix}{Fix}}}
\begin{definition}[Topology on $\dauti$] \label{def:dauti-open}
A subset $U$ of $\dauti$
is \emph{open} if for every $\pi\in U$,
  there exists a countable measurable partition $\calA$ of $[0,1]$
  such that $\pi\in\dautiFix\calA\subseteq U$,
  \labelword{where}{def:dautiFix}
  $\dautiFix \calA$ is the subgroup of $\dauti$
  consisting of those permutations $\pi$ that fix every element of $\calA$;
  i.e., $\pi(A) \setaseq A$ for all $A\in\calA$.
\end{definition}

\begin{definition} \label{def:dgsets}
  A \emph{\dagoodname{}} is a $\dauti$-set 
whose elements have open stabilizers.
Let $\dgsets$ be the category of \dagoodname{s} and equivariant functions.
\end{definition}

In addition to $\Dpspcs$,
there are objects of $\dgsets$ corresponding to each of the other concepts
used to define Model~2 of \ref{eq:tinyprobsep}:
\newcommand\nomDProp{\nref{prop:dwarp-in-the-gsets}{\overline{\mathrm{Prop}}}}
\begin{proposition} \label{prop:dwarp-in-the-gsets}
  The following are objects of $\dgsets$: \begin{itemize}[leftmargin=*]
    \item The $\dauti$-set $\nomDProp = \{\vtrue,\vfalse\}$ with the trivial action.
    \item The $\dauti$-set $\nomDRVprelim$ of random variables
      with action $X\cdot\pi = X\circ \pi$.
    \item The $\dauti$-set $\nomDRVprelim^\Gamma$
      of random $\Gamma$-substitutions $\Gamma\to\nomDRVprelim$
      with action defined by lifting $\nomDRV$
      pointwise.
  \end{itemize}
\end{proposition}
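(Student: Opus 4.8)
The plan is to verify, for each of the three $\dauti$-sets, that the group action axioms hold (routine, modulo the well-definedness remarks below) and that every element has open stabilizer (the substance). The proof mirrors the argument that $\nomStore$ is a nominal set: there, a store $s$ has stabilizer exactly $\nomFix(\dom s)$ with $\dom s$ finite; here the role of the finite ``domain'' of an element is played by a \emph{countable measurable partition it generates}, with finiteness of the domain replaced by countability of the partition.

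First I would isolate the key lemma. To a random variable $X\in\nomDRVprelim$ associate the partition $\calA_X = \{X^{-1}(k) : k\in\Z,\ X^{-1}(k)\text{ nonnegligible}\}$. Since a.e.-equal random variables generate a.e.-equal partitions, $\calA_X$ is a well-defined countable measurable partition of $[0,1]$, and I claim
\[
  \Stab(X) \;=\; \dautiFix(\calA_X).
\]
For the inclusion $\supseteq$: if $\pi$ fixes each block $X^{-1}(k)$ up to negligibility, then for a.e.\ $\omega$, writing $k=X(\omega)$, we get $\pi(\omega)\in X^{-1}(k)$ and hence $X(\pi(\omega))=k=X(\omega)$; taking the countable union over $k$ of the exceptional null sets shows $X\circ\pi\setaseq X$, i.e.\ $\pi\in\Stab(X)$. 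For the inclusion $\subseteq$: if $X\circ\pi\setaseq X$ then for each $k$ we get $\pi(X^{-1}(k))\subseteq X^{-1}(k)$ up to null; applying the same reasoning to $\pi^{-1}$ (which also stabilizes $X$, since $\Stab(X)$ is a subgroup and elements of $\dauti$ preserve the null ideal) gives the reverse containment, so $\pi(X^{-1}(k))\setaseq X^{-1}(k)$ for every $k$, i.e.\ $\pi\in\dautiFix(\calA_X)$. Since every set of the form $\dautiFix(\calA)$ is a basic open of the \nref{def:dauti-open}{topology} on $\dauti$, the equality $\Stab(X)=\dautiFix(\calA_X)$ shows $\Stab(X)$ is open.

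With the lemma in hand the three cases follow quickly. For $\nomDProp$ the action is trivial, so every stabilizer is all of $\dauti=\dautiFix(\{[0,1]\})$, which is open. For $\nomDRVprelim$ the lemma applies directly. For $\nomDRVprelim^\Gamma$, the action is pointwise, so $\Stab(G)=\bigcap_{X\in\Gamma}\Stab(G(X))=\bigcap_{X\in\Gamma}\dautiFix(\calA_{G(X)})$; since $\Gamma$ is finite this is a finite intersection of open sets and hence open. Equivalently, it equals $\dautiFix(\calA)$ for $\calA$ the common refinement $\bigvee_{X\in\Gamma}\calA_{G(X)}$, which is again a countable measurable partition because a finite join of countable partitions is countable.

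I expect the main obstacle to be the measure-theoretic bookkeeping modulo negligible sets rather than any conceptual difficulty. Concretely, one must check that the level-set partition $\calA_X$ genuinely consists of nonnegligible measurable blocks covering $[0,1]$ up to null; that images $\pi(A)$ are well-defined up to a.e.\ equality and that elements of $\dauti$, being nonsingular bijections, transport a.e.\ statements through both $\pi$ and $\pi^{-1}$; and that the set operations used---countable unions in the $\supseteq$ direction and finite intersections for the refinement in the substitution case---interact correctly with $\setaseq$. The only place finiteness of $\Gamma$ enters is to keep this refinement countable, which is exactly the analog of finiteness of $\dom s$ in the nominal setting.
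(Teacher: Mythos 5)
Your proof is correct and takes essentially the approach the paper intends: the paper states this proposition without an explicit proof, but its verification that $\nomStore$ is a nominal set (exhibiting $\nomFix(\dom s)\subseteq\Stab(s)$) is exactly the argument you replay, with your level-set partition $\calA_X$ playing the role of $\dom(s)$ and the common refinement $\bigvee_{X\in\Gamma}\calA_{G(X)}$ handling substitutions just as finiteness of $\dom(s)$ does for stores. Your sharper observation that $\Stab(X)$ in fact \emph{equals} $\dautiFix(\calA_X)$ is a worthwhile refinement, since the paper's literal definition of openness (\cref{def:dauti-open}) requires each $\pi\in\Stab(X)$ to lie \emph{inside} some fixing subgroup contained in $\Stab(X)$, which mere containment of one fixing subgroup would not supply.
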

\noindent With these in hand, one can show Model~2 lives
in $\dgsets$:
\begin{proposition} 
  If $\Gamma\vdash P$ then
  the function $\modelgset{\Gamma\vdash P}$
  is a morphism $\nomDRV^\Gamma\to\nomDProp^\nomDpspcs$ in $\dgsets$,
  and every morphism of this type
  satisfies \Permutation{} and \Finiteness{}.
\end{proposition}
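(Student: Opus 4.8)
The plan is to prove the two assertions in turn, following the pattern of \cref{prop:model-two-lives-in-nom}: the morphism property carries the real content, while the two invariants then follow formally from the categorical packaging. For the morphism property, note first that both endpoints are objects of $\dgsets$: the domain $\nomDRV^\Gamma$ by \cref{prop:dwarp-in-the-gsets}, and the codomain $\nomDProp^\nomDpspcs$ as the exponential, which exists because $\dgsets$---like $\Nom$---is a cartesian closed category of continuous $G$-sets. It then remains to verify that $\modelgset{\Gamma\vdash P}$ is equivariant, which I would do by structural induction on $P$. By currying it suffices to show that the map sending a pair $(G,(\calA,\mu))$ to the truth value of the membership $(\calA,\mu)\in\modelgset{\Gamma\vdash P}(G)$ is stable under the diagonal $\dauti$-action; concretely, that $(\calA,\mu)\cdot\pi\in\modelgset{\Gamma\vdash P}(G\cdot\pi)$ whenever $(\calA,\mu)\in\modelgset{\Gamma\vdash P}(G)$. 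The base cases are direct: $\trueprop$ holds of every partition and is thus action-invariant, and the conditions defining $X\sim\nu$---that $G(X)$ is measurable with distribution $\nu$, read off from $\calA$---are preserved when $\pi$ acts simultaneously on $G$ and on $(\calA,\mu)$, since $\pi$ is a bijection that preserves the relevant measurements up to negligibility.

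The interesting case is $P_1\sepcon P_2$. Unfolding the definition, $(\calA,\mu)$ satisfies $P_1\sepcon P_2$ under $G$ iff there are independently combinable $(\calA_1,\mu_1)$ and $(\calA_2,\mu_2)$ with $(\calA_1,\mu_1)\Dicom(\calA_2,\mu_2)\Dorder(\calA,\mu)$ and $(\calA_i,\mu_i)\in\modelgset{P_i}(G)$ for $i=1,2$. I would show this existential is stable under $\pi$ by transporting witnesses: given a witnessing pair, the pair $((\calA_1,\mu_1)\cdot\pi,(\calA_2,\mu_2)\cdot\pi)$ witnesses satisfaction of $P_1\sepcon P_2$ by $(\calA,\mu)\cdot\pi$ under $G\cdot\pi$. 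This rests on the equivariance of both $\Dicom$ and $\Dorder$: because $\pi$ commutes with forming the partition generated by the intersections $A_1\cap A_2$ and preserves the product-measure condition $\mu(A_1\cap A_2)=\mu_1(A_1)\mu_2(A_2)$ up to almost-everywhere equality, it carries independent combinations to independent combinations; and because it preserves coarsening and measure-restriction, it respects $\Dorder$. Combined with the inductive hypotheses for $P_1$ and $P_2$, this closes the induction and establishes the morphism property.

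With equivariance in hand, both invariants are immediate from the categorical setup. \Permutation{} is a verbatim restatement of equivariance: a $\dgsets$-morphism commutes with the $\dauti$-action, which says exactly that satisfaction is preserved when $\pi$ acts jointly on the partition and the substitution. \Finiteness{} follows because the codomain $\nomDProp^\nomDpspcs$ is an object of $\dgsets$, so each of its elements has an open stabilizer; unwinding \cref{def:dauti-open}, each predicate $\modelgset{\Gamma\vdash P}(G)$ is fixed by $\dautiFix\calA$ for some countable measurable partition $\calA$, i.e.\ depends on only a countable amount of the sample space---precisely the condition guaranteeing that there is always room in $[0,1]$ for fresh randomness.

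The principal obstacle is the separating-conjunction case, where one must check that independent combination interacts correctly with the $\dauti$-action on partitions taken modulo negligibility. Concretely, one works throughout with almost-everywhere equivalence classes and must verify that applying $\pi$ commutes with generating the combined partition and preserves the product-measure identity, so that $\Dicom$ descends to an equivariant partial operation; uniqueness of independent combinations (\cref{def:Dicom}) keeps this bookkeeping manageable. A secondary input, which I would cite rather than reprove, is that $\dgsets$ is cartesian closed with the expected exponentials---the probabilistic analog of the corresponding fact for $\Nom$---since this underwrites both the existence of $\nomDProp^\nomDpspcs$ as a $\dgsets$-object and the automatic \Finiteness{} of its elements.
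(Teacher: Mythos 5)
Your proof is correct and is essentially the argument the paper intends (the paper states this proposition without proof, as the routine discrete analog of \cref{prop:model-two-lives-in-nom}): structural induction on $P$ establishing equivariance, with the separating-conjunction case handled by transporting witnesses along the equivariance of $\Dicom$ and $\Dorder$, and the two invariants then read off from the categorical packaging. Your currying step is also the right way to discharge the one subtle obligation — that each value $\modelgset{\Gamma\vdash P}(G)$ has an open stabilizer and hence lands in the exponential $\nomDProp^\nomDpspcs$ — since stabilizers can only grow along equivariant maps out of an object of $\dgsets$.
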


\subsection{The equivalence of categories} \label{sec:discrete-eqv}

Having placed Models 1 and 2 of \ref{eq:tinyprobsep} described in \cref{sec:discrete}
into the categories $\dsheafcat$ and $\dgsets$ respectively,
we describe in this section how $\dsheafcat$ are $\dgsets$ equivalent,
giving an analog of $\Sch\simeq\Nom$ for discrete probability.
The key step is to establish probabilistic analogs
of \Homogeneity{} and \Correspondence{}:

\begin{lemma}[Homogeneity] \label{def:DHomogeneity}
  Let $\Omega,\Omega'$ be nonempty countable sets
  and let $\dec_\Omega$ and $\dec_{\Omega'}$
  be measurable functions $[0,1]\to\Omega$ and $[0,1]\to\Omega'$
  with $\dec_\Omega^{-1}(\omega)$ and $\dec_{\Omega'}^{-1}(\omega')$
  nonnegligible for all $\omega\in\Omega$ and $\omega'\in\Omega'$.
  For any surjective function $f : \Omega'\twoheadrightarrow\Omega$,
  there exists $\pi\in\dauti$
  making the following square commute:
  \[\begin{tikzcd}
  {[0,1]} \arrow[d, "\dec_{\Omega'}"'] \arrow[r, "\pi", dashed] & {[0,1]} \arrow[d, "\dec_\Omega"] \\
  \Omega' \arrow[r, "p"', two heads]                            & \Omega                          
  \end{tikzcd}\]
\end{lemma}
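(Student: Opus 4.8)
The plan is to read the commutativity of the square as a fiberwise constraint on $\pi$ and then build $\pi$ block by block. Spelling the square out, what is needed is a measurable bijection $\pi$ (mod almost-everywhere equality) with $\dec_\Omega\circ\pi = f\circ\dec_{\Omega'}$ almost everywhere, where $f$ is the bottom surjection. Since $\dec_{\Omega'}$ and $\dec_\Omega$ are total measurable functions into countable sets, their fibers give genuine countable measurable partitions $\{U_{\omega'} = \dec_{\Omega'}^{-1}(\omega')\}_{\omega'\in\Omega'}$ and $\{V_\omega = \dec_\Omega^{-1}(\omega)\}_{\omega\in\Omega}$ of $[0,1]$, each block nonnegligible by hypothesis. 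The desired equation $\dec_\Omega(\pi(x)) = f(\dec_{\Omega'}(x))$ says exactly that $\pi$ must carry each source block $U_{\omega'}$ into the target block $V_{f(\omega')}$; and for $\pi$ to be bijective, the blocks $\{U_{\omega'} : f(\omega') = \omega\}$ must jointly exhaust $V_\omega$ up to a null set, for every $\omega$. This is the probabilistic counterpart of extending an injection to a permutation in the nominal \Homogeneity{} lemma: the interval $[0,1]$ supplies ``enough room'' to accommodate the refinement.

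First I would refine the target partition to match the source. Fixing $\omega\in\Omega$, its fiber $f^{-1}(\omega)\subseteq\Omega'$ is nonempty (surjectivity of $f$) and countable, so using atomlessness of Lebesgue measure I partition the nonnegligible set $V_\omega$ into nonnegligible measurable pieces $\{W_{\omega'}\}_{\omega'\in f^{-1}(\omega)}$, one per preimage. Ranging over all $\omega$ yields a countable measurable partition $\{W_{\omega'}\}_{\omega'\in\Omega'}$ of $[0,1]$ with $W_{\omega'}\subseteq V_{f(\omega')}$ and each $W_{\omega'}$ nonnegligible. For each $\omega'$ I then need a nonsingular measurable bijection $\pi_{\omega'} : U_{\omega'}\to W_{\omega'}$ (defined off a null set). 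Gluing the $\pi_{\omega'}$ along the partition $\{U_{\omega'}\}$ produces a measurable map $\pi$; it is bijective mod a null set because $\{U_{\omega'}\}$ and $\{W_{\omega'}\}$ both partition $[0,1]$, and it satisfies the required equation by construction, since $x\in U_{\omega'}$ forces $\pi(x)\in W_{\omega'}\subseteq\dec_\Omega^{-1}(f(\omega'))$, whence $\dec_\Omega(\pi(x)) = f(\omega') = f(\dec_{\Omega'}(x))$.

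The crux --- and the only genuinely measure-theoretic step --- is producing the blockwise bijections $\pi_{\omega'}$ and checking the glued map lands in $\dauti$. Here I would invoke the isomorphism theorem for standard probability spaces: equipping each nonnegligible block with its normalized restricted Lebesgue measure makes $U_{\omega'}$ and $W_{\omega'}$ atomless standard probability spaces, hence isomorphic via a measure-preserving bijection defined off null sets. Such a map is nonsingular in both directions with respect to Lebesgue measure, as the normalized and restricted measures are mutually absolutely continuous, so it sends null sets to null sets each way; consequently the glued $\pi$ is a measure-class-preserving measurable bijection mod almost-everywhere equality, i.e., an element of $\dauti$. The remaining point demanding care is the bookkeeping that the null sets accumulated across the blocks stay null and that $\pi^{-1}$ is measurable --- both follow because $\Omega$ and $\Omega'$ are countable, so only countably many null sets are ever combined.
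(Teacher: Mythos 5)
Your proposal is correct and follows essentially the same route as the paper's proof: refine the target partition so that each block $V_\omega = \dec_\Omega^{-1}(\omega)$ is split into one nonnegligible measurable piece $W_{\omega'}$ per $\omega'\in p^{-1}(\omega)$, then glue blockwise measurable isomorphisms $U_{\omega'}\to W_{\omega'}$ into a single automorphism. Where the paper handles the blockwise step by reducing to intervals via the fact that every nonnegligible Lebesgue-measurable set is measurably isomorphic to an interval (Fremlin 344J), you invoke the isomorphism theorem for atomless standard probability spaces after normalizing --- the same classical fact in different clothing --- and your bookkeeping of null sets, nonsingularity, and membership in $\dauti$ is if anything more explicit than the paper's sketch.
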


This lemma is particularly important, so we give some intuition about its proof.
Consider the following visualization of a surjection $p$ from 
$\Omega' = \{\omega_1', \omega_2', \omega_3'\}$ onto $\Omega = \{\omega_1, \omega_2\}$
and
two decoding functions $\dec_{\Omega'}$ and $\dec_\Omega$ visualized as
in \ref{fig:encoding-omega}:

\begin{center}
  \pgfdeclarelayer{bg}    \pgfdeclarelayer{mid}   \pgfsetlayers{bg,mid,main}  \begin{tikzpicture}
  \draw [|-|,thick] (0,0) -- (3,0);
  \draw [] (1,0.1) -- (1,-0.1) node[above, yshift=5] {$1/3$};
  \draw [] (2,0.1) -- (2,-0.1) node[above, yshift=5] {$2/3$};

  \node [] (omegap1) at (0.5,-0.75) {$\omega'_1$};
  \node [] (omegap2) at (1.5,-0.75) {$\omega'_2$};
  \node [] (omegap3) at (2.5,-0.75) {$\omega'_3$};

  \draw (1.5,-0.75) ellipse (1.5cm and 0.3cm);
  \node[] at(-0.4, -0.75) {$\Omega'$};

  \filldraw[draw=black, fill=gray!20] (0,0) -- (1,0) -- (0.5, -0.5) -- (0,0) -- cycle;
  \filldraw[draw=black, fill=gray!20] (1,0) -- (2,0) -- (1.5, -0.5) -- (1,0) -- cycle;
  \filldraw[draw=black, fill=gray!20] (2,0) -- (3,0) -- (2.5, -0.5) -- (2,0) -- cycle;

  \draw [|-|,thick] (4,0) -- (6,0);
  \draw [] (5,0.1) -- (5,-0.1)  node[above, yshift=5] {$1/2$};
  \node [] (omega1) at (4.5,-0.75) {$\omega_1$};
  \node [] (omega2) at (5.5,-0.75) {$\omega_2$};
  \filldraw[draw=black, fill=gray!20] (4,0) -- (5,0) -- (4.5, -0.5) -- (4,0) -- cycle;
  \filldraw[draw=black, fill=gray!20] (5,0) -- (6,0) -- (5.5, -0.5) -- (5,0) -- cycle;

  \draw (5,-0.75) ellipse (1cm and 0.3cm);
  \node[] at(6.3, -0.75) {$\Omega$};

  \draw [->, ] (omegap1) to [bend right = 20] (omega1);

\begin{pgfonlayer}{bg}
    \draw [->] (omegap2) to [bend right = 20] (omega2);
  \end{pgfonlayer}
  \begin{pgfonlayer}{mid}
    \filldraw[white] (3,-1.15) circle (5pt);
  \end{pgfonlayer}

  \draw [->, ] (omegap3) to [bend right = 20] (omega1);

  \node[] at(3.5, -1.4) {$p$};

\end{tikzpicture}
\end{center}

\cref{def:DHomogeneity} asserts that there exists $\pi$
with $p \circ \dec_{\Omega'} =  \dec_\Omega \circ \pi$. Indeed 
we can explicitly construct such a $\pi$ for this example: let $\pi$ be
an automorphism
that sends the interval $[0, 1/3]$ to $[0, 1/4]$,
the interval $[2/3, 1]$ to $[1/4, 1/2]$, and finally $[1/3, 2/3]$ to $[1/2, 1]$.
This construction generalizes nicely to any situation where
the preimages $\dec_{\Omega'}^{-1}(\omega_i')$
and $\dec_{\Omega}^{-1}(\omega_i)$ are all intervals.
In the fully general case,
these preimages can be 
arbitrary Lebesgue-measurable sets, but
every such set is measurably isomorphic 
to an interval~\fremlinf{344J}{34}{38}, so the general case reduces to the one sketched above.

\begin{lemma}[Correspondence] \label{def:DCorrespondence}
  For any countable measurable partition $\{A_i\}_{i\in I}$ of $[0,1]$,
  let $\dautiFix\{A_i\}_{i\in I}$ be the subgroup of $\dauti$
  consisting of those automorphisms $\pi\in\dauti$ fixing $\{A_i\}_{i\in I}$, so
  that $\pi(A_i) \triangle A_i$ negligible for all $i\in I$.
  For any two partitions $\calA$ and $\calB$, it holds that
  $\dautiFix\calA\subseteq\dautiFix\calB$
  iff $\calA$ is finer than $\calB$.
\end{lemma}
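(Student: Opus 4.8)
The plan is to prove the two inclusions of the biconditional separately: the implication ``$\calA$ finer than $\calB$'' $\Rightarrow\dautiFix\calA\subseteq\dautiFix\calB$ directly, and its converse by contraposition. Throughout I use that every $\pi\in\dauti$ is a measurable bijection mod negligible sets that, together with its inverse, preserves negligibility, and hence commutes with countable unions, intersections, and complements up to negligibility, so that $\setaseq$ behaves as a congruence for the Boolean operations. Recall that $\calA=\{A_i\}_{i\in I}$ is \emph{finer than} $\calB=\{B_j\}_{j\in J}$ precisely when each $A_i$ lies, modulo a negligible set, inside a (necessarily unique) block $B_{j(i)}$; equivalently each $B_j\setaseq\bigcup_{i\in I_j}A_i$, where $I_j=\{i:j(i)=j\}$. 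For the forward direction, assuming $\calA$ finer than $\calB$ and $\pi\in\dautiFix\calA$ (so $\pi(A_i)\setaseq A_i$ for all $i$), I would compute $\pi(B_j)\setaseq\bigcup_{i\in I_j}\pi(A_i)\setaseq\bigcup_{i\in I_j}A_i\setaseq B_j$, using countability of $I$; hence $\pi$ fixes every block of $\calB$, i.e. $\pi\in\dautiFix\calB$.

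The converse is the substantive direction, which I would prove contrapositively: if $\calA$ is not finer than $\calB$, I will build a $\pi\in\dautiFix\calA\setminus\dautiFix\calB$. Not being finer means some block $A_i$ lies mod null inside no single $B_j$; since the $B_j$ partition $[0,1]$ and $\sum_j\lambda(A_i\cap B_j)=\lambda(A_i)>0$, the set $A_i$ must meet at least two blocks nonnegligibly, say $\lambda(A_i\cap B_{j_1})>0$ and $\lambda(A_i\cap B_{j_2})>0$ with $j_1\neq j_2$. The strategy is then to reshuffle mass strictly within $A_i$ --- so as to fix all of $\calA$ --- while carrying a nonnegligible piece of $B_{j_1}$ over into $B_{j_2}$.

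Concretely, I would pick nonnegligible subsets $T_1\subseteq A_i\cap B_{j_1}$ and $T_2\subseteq A_i\cap B_{j_2}$ of equal Lebesgue measure (possible since $\lambda$ is atomless) together with a measure-preserving Borel isomorphism $\phi:T_1\to T_2$ mod null, which exists because atomless standard measure spaces of equal mass are isomorphic~\fremlinf{344}{34}{38}. Setting $\pi=\phi$ on $T_1$, $\pi=\phi^{-1}$ on $T_2$, and $\pi=\mathrm{id}$ elsewhere yields a measurable involution, bijective mod a.e., hence $\pi\in\dauti$. Since $\pi$ is the identity off $T_1\cup T_2\subseteq A_i$, it fixes $A_i$ and fixes every other block pointwise, so $\pi\in\dautiFix\calA$; but $\pi(B_{j_1})\setaseq(B_{j_1}\setminus T_1)\cup T_2$ (as $T_1\subseteq B_{j_1}$ is carried onto $T_2$, while $T_2\cap B_{j_1}$ is negligible because $T_2\subseteq B_{j_2}$ and $B_{j_1},B_{j_2}$ are disjoint mod null), so $\pi(B_{j_1})\triangle B_{j_1}\setaseq T_1\cup T_2$ is nonnegligible and $\pi\notin\dautiFix\calB$. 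The main obstacle is exactly this construction --- producing a genuine element of $\dauti$ that realizes the desired swap --- for which the standard-measure-space isomorphism theorem (already the workhorse behind \cref{def:DHomogeneity}) is essential; everything else is routine Boolean bookkeeping modulo negligibility.
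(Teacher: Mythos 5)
Your proof is correct and takes essentially the same route as the paper's: the forward direction is immediate, and for the converse both arguments locate a block $A\in\calA$ meeting two blocks $B_1,B_2\in\calB$ nonnegligibly and construct an automorphism that reshuffles material inside $A$ across those two blocks, thereby fixing $\calA$ but not $\calB$. The only cosmetic difference is that the paper swaps the full intersections $A\cap B_1$ and $A\cap B_2$ directly (permissible since elements of $\dauti$ need not preserve measure), whereas you swap equal-measure pieces via a measure-preserving isomorphism; both constructions rest on the same isomorphism theorem for standard spaces.
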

\begin{proof}
  If $\calA$ is finer than $\calB$ then certainly
  every $\pi$ fixing $\calA$ fixes $\calB$.
  For the converse,
  suppose for contradiction that $\calA$ is not finer than $\calB$,
  so there is some $A\in\calA$ and $B_1,B_2\in\calB$
  with $A\cap B_1$ and $A\cap B_2$ both nonnegligible.
  Pick an arbitrary $\pi$ swapping $A\cap B_1$
  with $A\cap B_2$;
  $\pi$ fixes $\calA$ but not $\calB$,
  contradicting $\dautiFix\calA\subseteq\dautiFix\calB$.
\end{proof}

The equivalence follows from these lemmas:
\begin{theorem} \label{thm:discrete}
  $\dsheafcat\simeq\dgsets$.
\end{theorem}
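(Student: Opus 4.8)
The plan is to follow the template of the classical equivalence $\Sch\simeq\Nom$ (\citet[Theorem III.9.2]{maclane2012sheaves}), in which \emph{Homogeneity} and \emph{Correspondence} are exactly the two site-theoretic properties needed to exhibit a category of atomic sheaves as a category of continuous group actions. Here the role of the universal object $\N$ is played by the interval $[0,1]$, the role of $\Somega$ by $\dauti$, and the role of an encoding $\enc_L$ by a decoding $\dec_\Omega$, whose induced partition $\calA_\Omega = \{\dec_\Omega^{-1}(\omega)\}_{\omega\in\Omega}$ is a countable measurable partition of $[0,1]$. I would fix once and for all such a $\dec_\Omega$ for every nonempty countable $\Omega$, and then exhibit the equivalence by constructing a comparison functor together with an inverse, as in the nominal case.

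First I would construct the comparison functor $\Phi : \dsheafcat\to\dgsets$ as the \emph{stalk at $[0,1]$}: send an atomic sheaf $F$ to the filtered colimit $\colim_\Omega F(\Omega)$ taken over the diagram of decodings, where a surjection $p:\Omega'\twoheadrightarrow\Omega$ with $\dec_\Omega\aseq p\circ\dec_{\Omega'}$ contributes the transition map $F(p):F(\Omega)\to F(\Omega')$. Filteredness of this diagram is exactly the Ore property of $\CSur$ (\cref{prop:csur-ore}). An automorphism $\pi\in\dauti$ carries the $\Omega$-component indexed by $\dec_\Omega$ to the component indexed by $\dec_\Omega\circ\pi^{-1}$; \emph{Homogeneity} (\cref{def:DHomogeneity}) guarantees that $\dec_\Omega\circ\pi^{-1}$ is again a legitimate decoding and supplies, via $F$, the transport identifying the two components, so that this prescription is a well-defined right $\dauti$-action. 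Since each element of $\Phi(F)$ descends from some $F(\Omega)$, it is fixed by the open subgroup $\dautiFix(\calA_\Omega)$; hence $\Phi(F)$ has open stabilizers and is a genuine object of $\dgsets$.

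Next I would build the inverse $\Psi:\dgsets\to\dsheafcat$ by \emph{fixed points}: set $\Psi(X)(\Omega) = \{x\in X : x\cdot\sigma = x \text{ for all }\sigma\in\dautiFix(\calA_\Omega)\}$. A surjection $p:\Omega'\twoheadrightarrow\Omega$ makes $\calA_{\Omega'}$ finer than $\calA_\Omega$, so \emph{Correspondence} (\cref{def:DCorrespondence}) gives $\dautiFix(\calA_{\Omega'})\subseteq\dautiFix(\calA_\Omega)$ and hence an inclusion $\Psi(X)(\Omega)\hookrightarrow\Psi(X)(\Omega')$, the required contravariant functoriality. The crucial point is that $\Psi(X)$ is an \emph{atomic} sheaf, i.e.\ has a restriction operation (\cref{def:atomic-sheaf-condition}): given $y\in\Psi(X)(\Omega')$ restrictable along $p$, \emph{Correspondence} translates restrictability into the statement that $\dautiFix(\calA_\Omega)\subseteq\Stab y$, so $y$ already lies in $\Psi(X)(\Omega)$ and is its own unique restriction. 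Openness of stabilizers in $X$ is what makes the partitions $\calA_\Omega$ cofinal among the subgroups that can fix an element.

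Finally I would produce natural isomorphisms $\Phi\Psi\cong\mathrm{id}$ and $\Psi\Phi\cong\mathrm{id}$. The former is nearly immediate: $\Phi(\Psi(X)) = \colim_\Omega\Psi(X)(\Omega) = \bigcup_\Omega\{x\in X : \dautiFix(\calA_\Omega)\subseteq\Stab x\}$, which equals $X$ because every $x$ has open stabilizer and is therefore fixed by some $\dautiFix(\calA)$, every such $\calA$ arising as a $\calA_\Omega$. I expect $\Psi\Phi\cong\mathrm{id}$ to be the main obstacle: one must show that for an atomic sheaf $F$ the elements of the stalk $\Phi(F)$ fixed by $\dautiFix(\calA_\Omega)$ are exactly the image of $F(\Omega)$. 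This is where the atomic sheaf condition does the real work, forcing a stalk element with a large stabilizer to descend uniquely to $F(\Omega)$, and where the measure theory bites: one must know that $\dautiFix(\calA)$ contains \emph{enough} automorphisms, leaning on \emph{Homogeneity} together with the fact, already invoked in its proof, that every nonnegligible measurable set is measurably isomorphic to an interval. Once both isomorphisms are in hand, verifying their naturality is routine and completes the equivalence.
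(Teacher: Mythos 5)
Your template---a ``stalk at $[0,1]$'' functor paired with a fixed-point functor, with \cref{def:DHomogeneity,def:DCorrespondence} doing the site-theoretic work, following \citet[Theorem III.9.2]{maclane2012sheaves}---is indeed the skeleton of the paper's argument (the paper packages it as the general nominal-situation theorem, \appref{app:thm:nominal-situation}, and verifies its hypotheses with exactly those two lemmas). But there is a genuine gap, and it sits exactly where you fix \emph{one} decoding $\dec_\Omega$ per countable set and build everything over that family. First, filteredness of your diagram is \emph{not} ``exactly the Ore property of $\CSur$'': the Ore property produces an abstract common refinement with surjections, but your diagram only admits surjections $p$ satisfying $\dec_\Omega\aseq p\circ\dec_{\Omega'}$, so you need the \emph{fixed} partition of the abstract refining set to refine both $\calA_{\Omega_1}$ and $\calA_{\Omega_2}$, and nothing forces this. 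For instance, let the fixed decoding of one two-element set generate $\{A,A^{\mathrm c}\}$, where $A$ meets every interval in positive but not full measure, and let every other fixed decoding generate a partition into intervals; then that object and the ``halves'' object have no upper bound in your diagram, since no interval partition refines $\{A,A^{\mathrm c}\}$, and $\{A,A^{\mathrm c}\}$ does not refine the halves. Second, and more fundamentally, your $\dauti$-action is ill-defined: $\dec_\Omega\circ\pi^{-1}$ is a legitimate decoding but not the fixed one, so ``the component indexed by $\dec_\Omega\circ\pi^{-1}$'' does not exist in your diagram; and Homogeneity cannot ``supply the transport via $F$''---it produces an automorphism of $[0,1]$, which is not a $\CSur$-morphism and hence nothing $F$ can be applied to. Nor should any such identification exist: in the intended $\dauti$-set, $[\dec_\Omega,x]$ and $[\dec_\Omega\circ\pi^{-1},x]$ are \emph{distinct} points of one orbit (for $F=\DRV$ this is precisely the difference between a random variable $X$ and $X\circ\pi^{-1}$), so identifying those components would trivialize the very action you are trying to define. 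The same conflation breaks $\Psi$: a general $\CSur$-surjection $p:\Omega'\twoheadrightarrow\Omega$ bears no relation to the fixed decodings, so it does \emph{not} make $\calA_{\Omega'}$ finer than $\calA_\Omega$, and your $\Psi(X)$ is defined only on decoding-compatible surjections rather than as a presheaf on all of $\CSur$; likewise your final step needs every countable measurable partition to ``arise as a $\calA_\Omega$,'' which an arbitrary once-and-for-all choice does not provide.

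The repair is to index over \emph{all} decodings rather than one per object---equivalently over the preorder of all countable measurable partitions, i.e.\ all subgroups $\dautiFix\calA$---which \emph{is} closed under the $\dauti$-action, so that the action is simply precomposition, $[\dec,x]\cdot\pi=[\dec\circ\pi,x]$, with no transport needed. \cref{def:DHomogeneity} then shows that every $\CSur$-surjection is realized by an automorphism over the decodings, and \cref{def:DCorrespondence} shows that inclusion of the subgroups $\dautiFix\calA$ coincides with refinement of partitions; together these identify the base category (up to equivalence) with a cofinal family of open subgroups of $\dauti$, which is exactly the form in which \citet[Theorem III.9.2]{maclane2012sheaves} applies. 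This is what the paper's appendix does: it proves an equivalence between the base category and a category $\subgrpcat$ of fix-subgroups and cosets, checks cofinality of these subgroups among the open ones, and cites the classical theorem, rather than re-proving the stalk/fixed-point equivalence by hand. If you insist on keeping fixed decodings, then every transition map, every component of the action, and every restriction map of $\Psi(X)$ must be defined by an explicit appeal to Homogeneity together with a well-definedness argument over the choice of automorphism---at which point you will have reconstructed the all-decodings colimit in different notation.
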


\newcommand\dshDay{\mathbin{\nref{def:dshDay}{\otimes}}}
\newcommand\dshinc{\nref{def:dshinc}{i}}
\newcommand\dshjoin{\mathbin{\nref{def:dshjoin}{\pdot}}}
\newcommand\dshemp{\mathbin{\nref{def:dshemp}{\mathrm{emp}}}}
\newcommand\dshjoindom{{\nref{def:dshjoindom}{\mathbb P^2_\perp}}}
\newcommand\dshord{\mathrel{\nref{def:dshord}{\sqsubseteq}}}
\newcommand\dshPred{{\nref{def:dshPred}{\mathrm{Pred}}}}
\newcommand\dshSubst[1]{{\nref{def:dshSubst}{\llbr{#1}}}}
For details, see the appendix: \cref{thm:discrete} follows 
from a specialization of \annoyingappref{app:thm:nominal-situation} 
using \cref{def:DHomogeneity,def:DCorrespondence} to satisfy the preconditions.
This equivalence of categories
extends to an equivalence of Models 1 and 2 of \ref{eq:tinyprobsep}.
The argument is as in \cref{sec:eqv-models-schnom}: we
package Models 1 and 2 into resource monoids in $\dsheafcat$ and $\dgsets$
respectively, and then show they correspond across the equivalence $\dsheafcat\simeq\dgsets$.

To construct the resource monoid packaging Model~1 into $\dsheafcat$,
we make use of a general recipe for constructing models of separation logic
via the \emph{Day convolution}~\cite{o1995syntactic,dongol2016convolution,biering2004logic}.
  The Day convolution is a general construction
  lifting a monoidal structure on a base category $C$
  to a monoidal structure on $\funcat{C\op}\Set$, see \citet{day2006closed}.
  The resource monoid $(\schord,\schSepStores,\schinc,\schjoin)$
  in $\Sch$ described in \cref{sec:sch-nom} can be constructed using the Day convolution:
  the base category $\StoreSh$
  has a monoidal product given by coproduct of finite sets,
  and Day convolution lifts this to a monoidal product $\otimes$
  on $\funcat{\StoreSh\op}\Set$;
  applying the Day convolution to the sheaf $\Store$
  gives the functor $\Store\otimes\Store$, which
  one can show is naturally isomorphic to $\schSepStores$;
  the operations $\schord,\schinc,\schjoin$ can then be defined
  straightforwardly.

To apply this recipe for discrete probability,
we replace $\StoreSh$
with $\CSur$ 
and coproduct $+$ of finite sets
with product $\times$ of sample spaces.
This makes $(\CSur,\times,1)$ a monoidal category,
where the unit $1$ is the one-point sample space.
\labelword{Via}{def:dshDay} the Day convolution, $\times$ lifts to a monoidal product
$\dshDay$ on $\funcat{\CSurop}\Set$.
Just as $\Store\otimes\Store$ is isomorphic to the functor $\schSepStores$ modelling separated stores,
the Day convolution $\Dpspcs\dshDay\Dpspcs$ is isomorphic to
a sheaf of
probability spaces that can be rendered independent with a suitable joint measure:
\begin{proposition}
  \labelword{The}{def:dshjoindom} functor $\Dpspcs\dshDay\Dpspcs$ is isomorphic to
  an atomic sheaf $\dshjoindom$ sending each $\Omega\in\CSur$ to
  the set
  \[ 
     \begin{aligned}
       &\{\,((\pi_1\circ p)^{-1}(\calP_1), (\pi_2\circ p)^{-1}(\calP_2)) \\
       &\!\mid \Omega_1,\Omega_2\in\CSur,\calP_1\in \Dpspcs(\Omega_1), \calP_2\in\Dpspcs(\Omega_2),
         p : \Omega\twoheadrightarrow\Omega_1\times\Omega_2\,\}
     \end{aligned}
  \]
  of pairs of probability spaces on $\Omega$
  that ``factor through'' a product $\Omega_1\times\Omega_2$
  along some projection $p : \Omega\twoheadrightarrow\Omega_1\times\Omega_2$.
\end{proposition}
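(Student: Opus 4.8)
The plan is to compute the coend defining the Day convolution and identify it with the explicit description of $\dshjoindom$. For presheaves on the monoidal category $(\CSur,\times,1)$, the Day convolution is the coend
\[
(\Dpspcs\dshDay\Dpspcs)(\Omega)
=\int^{\Omega_1,\Omega_2\in\CSur}
\CSur(\Omega,\Omega_1\times\Omega_2)\times\Dpspcs(\Omega_1)\times\Dpspcs(\Omega_2),
\]
so an element is an equivalence class of triples $(p,\calP_1,\calP_2)$ with $p:\Omega\twoheadrightarrow\Omega_1\times\Omega_2$ a surjection and $\calP_i\in\Dpspcs(\Omega_i)$, modulo the relation generated by $((g_1\times g_2)\circ p,\,\calP_1',\calP_2')\sim(p,\,g_1^{-1}\calP_1',\,g_2^{-1}\calP_2')$ for surjections $g_i:\Omega_i\twoheadrightarrow\Theta_i$ and $\calP_i'\in\Dpspcs(\Theta_i)$. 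I would define the comparison map $\Phi$ by sending the class of $(p,\calP_1,\calP_2)$ to the pair $((\pi_1\circ p)^{-1}\calP_1,\,(\pi_2\circ p)^{-1}\calP_2)$, where $\pi_i$ are the product projections; this is exactly the shape of elements of $\dshjoindom(\Omega)$, and each $\pi_i\circ p$ is surjective since objects of $\CSur$ are nonempty.

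The easy directions come first. Well-definedness of $\Phi$ amounts to checking that it respects the coend relation, which follows from the identity $\pi_i\circ(g_1\times g_2)=g_i\circ\pi_i$ together with functoriality of pullback: $(\pi_1\circ(g_1\times g_2)\circ p)^{-1}\calP_1'=(\pi_1\circ p)^{-1}(g_1^{-1}\calP_1')$, and symmetrically for the second component. Naturality in $\Omega$ is immediate, since precomposing the representing surjection with a $\CSur$-morphism commutes with the pullbacks. Surjectivity of $\Phi$ is immediate from the very definition of $\dshjoindom$, each of whose elements is by construction of the form $((\pi_1\circ p)^{-1}\calP_1,(\pi_2\circ p)^{-1}\calP_2)$.

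The crux is injectivity, which I would establish by exhibiting a canonical normal form for each coend class that is recoverable from its image under $\Phi$. Given a representative $(p,\calP_1,\calP_2)$ with $\calP_i=(\calF_i,\mu_i)$, use that $\Omega_i$ is countable so $\calF_i$ is determined by its partition into atoms; let $q_i:\Omega_i\twoheadrightarrow\hat\Omega_i$ be the quotient onto this partition, so that $\calP_i=q_i^{-1}\hat\calP_i$ for a unique $\hat\calP_i$ whose $\sigma$-algebra is all of $\powerset\hat\Omega_i$. The coend relation then gives $(p,\calP_1,\calP_2)\sim(\hat p,\hat\calP_1,\hat\calP_2)$ with $\hat p=(q_1\times q_2)\circ p$: a normal form in which both $\sigma$-algebras are full. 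From the image pair $(\calQ_1,\calQ_2)=\Phi(p,\calP_1,\calP_2)$ one recovers this normal form uniquely, since the $\sigma$-algebra of $\calQ_1=(\pi_1\circ\hat p)^{-1}\hat\calP_1$ has as atoms exactly the fibers of $\pi_1\circ\hat p$, which (surjectivity again) biject with $\hat\Omega_1$ and thereby recover $\hat\Omega_1$, the map $\pi_1\circ\hat p$, and the measure $\hat\mu_1$; symmetrically $\calQ_2$ recovers $\hat\Omega_2$, $\pi_2\circ\hat p$, and $\hat\mu_2$, whence $\hat p=\langle\pi_1\circ\hat p,\pi_2\circ\hat p\rangle$. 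Thus any two representatives with the same image share the same normal form and are coend-equivalent. Finally, to justify calling $\dshjoindom$ an atomic sheaf I would check directly that it has a restriction operation in the sense of \cref{def:atomic-sheaf-condition}, building the unique restriction componentwise from the restriction operation of the atomic sheaf $\Dpspcs$. I expect the injectivity/normal-form step to be the main obstacle, as it is where one must show the coend relation is neither too coarse nor too fine; the remaining checks are routine and mirror the Schanuel-topos computation $\Store\otimes\Store\cong\schSepStores$.
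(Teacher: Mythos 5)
Your coend computation, comparison map, and injectivity argument are correct, and they amount to the paper's own proof carried out concretely: your normal form (quotienting each $\Omega_i$ by the atoms of $\calF_i$ so that the pulled-back $\sigma$-algebra becomes a full power set) is precisely a representative with \emph{minimal support} in the sense of \cref{app:def:supported-sheaf}, and your recovery of $\hat\Omega_1$, $\pi_1\circ\hat p$, $\hat\mu_1$ from the atoms of the first component of the image --- together with the observation that $\hat p=\langle\pi_1\circ\hat p,\pi_2\circ\hat p\rangle$ --- is the discrete instance of the paper's two abstract ingredients: uniqueness of supports up to isomorphism and joint monicity of the projections (\cref{app:lem:ussheaf-conv,app:lem:tensor-sub-product}). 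So for the presheaf isomorphism $\Dpspcs\dshDay\Dpspcs\cong\dshjoindom$ your plan goes through.

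The genuine gap is in the final step, where you call the atomic-sheaf property of $\dshjoindom$ routine and propose to build restrictions ``componentwise from the restriction operation of $\Dpspcs$.'' Componentwise restriction is not enough. Given $p:\Omega'\twoheadrightarrow\Omega$ and a pair $(\calQ_1,\calQ_2)\in\dshjoindom(\Omega')$ restrictable along $p$, each component is indeed restrictable in $\Dpspcs$, so there are unique $\calP_1,\calP_2\in\Dpspcs(\Omega)$ with $p^{-1}\calP_i=\calQ_i$; but membership of $(\calP_1,\calP_2)$ in $\dshjoindom(\Omega)$ is a \emph{joint} condition --- the pair must factor through some surjection $\Omega\twoheadrightarrow\Xi_1\times\Xi_2$ --- and this does not follow from the two componentwise statements. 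What has to be descended along $p$ is the witnessing surjection $s=\langle s_1,s_2\rangle:\Omega'\twoheadrightarrow\Theta_1\times\Theta_2$ itself, not just the two probability spaces; this is exactly the hard point in the paper's argument (\cref{app:lem:tensor-preserves-ussheaves}), where $p$-invariance is transferred to the class of $s$ in a presheaf of representables modulo a groupoid of isomorphisms, and the hypothesis that such presheaves are themselves atomic sheaves is invoked. In the discrete case you can close the gap with your own normal-form tools: putting everything in normal form, the equality of $\sigma$-algebras $s_i^{-1}(\powerset(\Theta_i))=(q_i\circ p)^{-1}(\powerset(\hat\Omega_i))$, where $q_i:\Omega\twoheadrightarrow\hat\Omega_i$ is the atom quotient of $\calP_i$, forces the fibers of $s_i$ and of $q_i\circ p$ to coincide, giving bijections $g_i:\Theta_i\to\hat\Omega_i$ with $q_i\circ p=g_i\circ s_i$; hence $\langle q_1,q_2\rangle\circ p=(g_1\times g_2)\circ s$ is surjective, so $\langle q_1,q_2\rangle$ is surjective, exhibiting $(\calP_1,\calP_2)$ as an element of $\dshjoindom(\Omega)$. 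As written, however, the sheafhood half of the proposition is unproved.
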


The resource monoid operations can be defined as follows.
  \labelword{First}{def:dshord}, the subspace ordering $\subpspcs$
  forms a natural transformation $(\dshord) : \Dpspcs\times\Dpspcs\to\DProp$.
  \labelword{Next}{def:dshjoin}, there
  is natural transformation
  $(\dshjoin) : \dshjoindom\to\Dpspcs$
  sending a pair
  $((\pi_1\circ p)^{-1}(\calP_1), (\pi_2\circ p)^{-1}(\calP_2))$
  of probability spaces that factor through
  some $p : \Omega\twoheadrightarrow\Omega_1\times\Omega_2$
  to
  $p^{-1}(\calP_1\otimes\calP_2)$,
  where $\calP_1\otimes\calP_2$
  is the usual product probability space on $\Omega_1\times\Omega_2$.
  \labelword{Each}{def:dshinc}
  $\dshjoindom(\Omega)$
  is a subset of $(\Dpspcs\times\Dpspcs)(\Omega)$,
  and collecting the canonical subset-inclusions
  into an $\Omega$-indexed family
  forms a natural transformation
  $\dshinc: \dshjoindom\hookrightarrow\Dpspcs\times\Dpspcs$.
  Finally, $\dshjoin$ is associative and
  commutative and monotone with respect to $\dshord$,
  \labelword{and}{def:dshemp} has unit the natural transformation $\dshemp : 1\to \Dpspcs$
  sending a sample space $\Omega$ to the trivial probability space $(\Omega,\{\emptyset,\Omega\},\mu)$
  with $\mu(\Omega) = 1$.

\begin{proposition} \label{prop:dsh-rm}
  $(\dshord,\dshjoindom,\dshinc,\dshjoin,\dshemp)$
  is a resource monoid in $\dsheafcat$.
\end{proposition}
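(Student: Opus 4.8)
The plan is to read off the commutative‑monoid laws from the Day‑convolution description and to verify the order data by hand. The key observation is that $\dshjoin : \dshjoindom \to \Dpspcs$ is exactly the Day‑convolution multiplication $\Dpspcs \dshDay \Dpspcs \cong \dshjoindom \to \Dpspcs$: under the universal property of the coend defining $\dshDay$, it is induced by the natural family of maps $\Dpspcs(\Omega_1) \times \Dpspcs(\Omega_2) \to \Dpspcs(\Omega_1 \times \Omega_2)$ sending $(\calP_1, \calP_2)$ to the product probability space $\calP_1 \otimes \calP_2$, while $\dshemp$ is the Day unit $1 \to \Dpspcs$. Thus I would first exhibit $\Dpspcs$ as a commutative monoid object for $\dshDay$ — equivalently, endow it with a symmetric lax monoidal structure on $(\CSurop, \times)$ with structure map $(\calP_1, \calP_2) \mapsto \calP_1 \otimes \calP_2$ and unit the trivial space on the one-point set. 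Checking naturality amounts to the Fubini‑style identity $(f_1 \times f_2)^{-1}(\calP_1 \otimes \calP_2) = f_1^{-1}(\calP_1) \otimes f_2^{-1}(\calP_2)$ for surjections $f_i$, and the lax coherence laws reduce to associativity and symmetry of the product of probability measures. This is the structural heart of commutativity, associativity, and the unit law for $\dshjoin$.

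I would then discharge the remaining data by concrete, objectwise checks. The relation $\dshord$ is the subspace ordering $\subpspcs$ of \cref{def:subpspcs}, which at each stage $\Omega$ is reflexive, transitive, and antisymmetric on $\Dpspcs(\Omega)$ (antisymmetry holds since $\calF \subseteq \calG$ and $\calG \subseteq \calF$ force $\calF = \calG$, and the measures then coincide); since it is natural and $\DProp$‑valued this suffices for an internal partial order. The element $\dshemp(\Omega)$, the trivial space $(\{\emptyset, \Omega\}, \mu)$, is $\subpspcs$‑below every probability space on $\Omega$ because $\{\emptyset, \Omega\}$ embeds into any $\sigma$‑algebra and every measure restricts to it, giving the least element. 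The unit law $\calP \dshjoin \dshemp = \calP$ holds because the pair factors through the iso $\Omega \cong \Omega \times 1$ and $\calP \otimes (\text{trivial on } 1) \cong \calP$; commutativity follows from the symmetry of $(\CSur, \times)$ together with $\calP_1 \otimes \calP_2 = \mathrm{swap}^{-1}(\calP_2 \otimes \calP_1)$, which simultaneously shows $\dshjoindom$ is closed under the swap. Monotonicity of $\dshjoin$ with respect to $\dshord$ reduces to the two routine facts that $\calP_1 \subpspcs \calP_1'$ and $\calP_2 \subpspcs \calP_2'$ imply $\calP_1 \otimes \calP_2 \subpspcs \calP_1' \otimes \calP_2'$, and that pullback along a surjection preserves $\subpspcs$.

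The step I expect to be the main obstacle is associativity together with the matching of definedness. Interpreted in the internal language, associativity is a Kleene equality $(\calP_1 \dshjoin \calP_2) \dshjoin \calP_3 \simeq \calP_1 \dshjoin (\calP_2 \dshjoin \calP_3)$, so both groupings must be defined exactly when the other is and then agree. Because membership in $\dshjoindom$ is the existential assertion that a pair factors through some projection $p : \Omega \twoheadrightarrow \Omega_a \times \Omega_b$, establishing this equality means rearranging a factorization witnessing one grouping into one witnessing the other. I would handle it by lifting to a triple factoring through $\Omega_1 \times \Omega_2 \times \Omega_3$, using the associator of $(\CSur, \times)$ to reassociate the witnessing surjection, and using associativity of the product measure $(\calP_1 \otimes \calP_2) \otimes \calP_3 \cong \calP_1 \otimes (\calP_2 \otimes \calP_3)$ together with functoriality of pullback $(p \circ q)^{-1}(-) = q^{-1}(p^{-1}(-))$ to identify the two composite spaces. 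Carrying out the forcing argument carefully — in particular tracking surjections across stages and confirming that each intermediate witness genuinely lands in $\dshjoindom$ — is the fiddly core of the proof; everything else is bookkeeping licensed by the monoid‑object structure and standard product‑measure identities.
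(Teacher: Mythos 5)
Your overall route is the same as the paper's. The paper (in its continuous analog, \cref{app:def:pspcs-join-unit-cm}, \cref{app:def:pspcs-join-unit-pcm}, \cref{app:thm:mble-pspcs-krm}, of which the discrete proposition is the specialization) likewise obtains associativity, commutativity, and the unit law by reducing them to the symmetric monoidal structure of the base category and of product probability spaces, and likewise treats the definedness bookkeeping by concrete factorization arguments; your plan for the Kleene-equality half of associativity --- lift the two given factorizations to a single factorization through a triple product, reassociate the witnessing surjection, and use functoriality of pullback --- is exactly the diagram chase the paper performs. Your identification of $\dshjoin$ with the coend-induced map coming from the lax symmetric monoidal structure $(\calP_1,\calP_2)\mapsto\calP_1\otimes\calP_2$, with naturality the identity $(f_1\times f_2)^{-1}(\calP_1\otimes\calP_2)=f_1^{-1}(\calP_1)\otimes f_2^{-1}(\calP_2)$, is also how the paper's join is characterized.

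There is one genuine gap, in monotonicity. For a resource monoid the monotonicity clause is not merely an inequality: it asserts that if $\calP_1\dshord\calP_1'$ and $\calP_2\dshord\calP_2'$ and $(\calP_1',\calP_2')\in\dshjoindom(\Omega)$, then $(\calP_1,\calP_2)\in\dshjoindom(\Omega)$ \emph{and} $\calP_1\dshjoin\calP_2\dshord\calP_1'\dshjoin\calP_2'$. Your two ``routine facts'' (the product of probability spaces preserves $\subpspcs$, and pullback preserves $\subpspcs$) only deliver the inequality once both pairs are already known to be combinable; they say nothing about why combinability propagates downward along $\dshord$, which is precisely the part the paper's monotonicity proof spends its effort on. The repair in the discrete case: if $(\calP_1',\calP_2')=((\pi_1\circ p)^{-1}\calQ_1,(\pi_2\circ p)^{-1}\calQ_2)$ for some surjection $p:\Omega\twoheadrightarrow\Omega_1\times\Omega_2$, then since $\pi_i\circ p$ is surjective, $(\pi_i\circ p)^{-1}$ is an injective map of $\sigma$-algebras, so the sub-$\sigma$-algebra of $\calP_i$ inside $(\pi_i\circ p)^{-1}\calQ_i$ transports to a coarsening $\calQ_i''$ of $\calQ_i$ on $\Omega_i$ with $\calP_i=(\pi_i\circ p)^{-1}\calQ_i''$; hence the \emph{same} surjection $p$ witnesses $(\calP_1,\calP_2)\in\dshjoindom(\Omega)$, and only then do your two facts finish the argument. (This is the discrete shadow of the factorization $X=fX'$, $Y=gY'$ and the $f\otimes g$ diagram in \cref{app:thm:mble-pspcs-krm}.) Without this step the verification of the resource-monoid axioms is incomplete.
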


\newcommand\dgDay{\mathbin{\nref{def:dgDay}{\sepcon}}}
\newcommand\dginc{\nref{def:dginc}{\overline{i}}}
\newcommand\dgjoin{\mathbin{\nref{def:dgjoin}{\overline{\pdot}}}}
\newcommand\dgemp{\mathbin{\nref{def:dgemp}{\overline{\mathrm{emp}}}}}
\newcommand\dgord{\mathrel{\nref{def:dgord}{\overline{\sqsubseteq}}}}
\newcommand\dgjoindom{{\nref{def:dgjoindom}{\overline{\mathbb P}^2_\perp}}}
\newcommand\dgPred{{\nref{def:dgPred}{\overline{\mathrm{Pred}}}}}
\newcommand\dgSubst[1]{{\nref{def:dgSubst}{\overline{\llbr{#1}}}}}
Model~2 can be packaged into a resource monoid
in $\dgsets$ analogously.
\labelword{Let}{def:dgjoindom} $\dgjoindom$
be the \dagoodname{}
  $\{(\calP_1,\calP_2)\mid \calP_1,\calP_2\in\nomDpspcs, \calP_1\Dicom\calP_2\text{ defined}\}$
  of pairs of independently combinable countable measured partitions of $[0,1]$
  with pointwise group action.
  \labelword{This}{def:dginc}
  is a subset of $\nomDpspcs\times\nomDpspcs$;
  both the canonical inclusion map $\dginc$
  \labelword{and}{def:dgjoin}
  the function $\dgjoin : \dgjoindom\to\nomDpspcs$
  sending a pair $(\calP_1,\calP_2)$ of independently combinable
  probability spaces on $[0,1]$
  to their independent combination $\calP_1\Dicom\calP_2$
  are equivariant.
  \labelword{Finally}{def:dgord}, the
  ordering relation $\Dorder$ on probability spaces on $[0,1]$
  is equivariant, so defines a morphism
  $(\dgord) : \nomDpspcs\times\nomDpspcs\to\nomDProp$,
  \labelword{and}{def:dgemp} this ordering relation has as least element $\dgemp$
  the measured partition containing a single component with probability $1$.

The following theorem establishes that 
$(\dgord,\dgjoindom,\dginc,\dgjoin)$ is a resource monoid
together with an analog of \cref{fact:folklore} for discrete probability:

\begin{theorem} \label{thm:discrete-krm}
  The resource monoid $(\dshord,\dshjoindom,\dshinc,\dshjoin,\dshemp)$
  corresponds to $(\dgord,\dgjoindom,\dginc,\dgjoin,\dgemp)$
  across the equivalence $\dsheafcat\simeq\dgsets$.
\end{theorem}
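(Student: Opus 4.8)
The plan is to show that the equivalence of \cref{thm:discrete} transports each of the five components of the sheaf-side resource monoid $(\dshord,\dshjoindom,\dshinc,\dshjoin,\dshemp)$ to its counterpart on the \dagoodname{} side, exactly as in the proof of \cref{prop:sch-nom-krm}. As a by-product this also establishes that $(\dgord,\dgjoindom,\dginc,\dgjoin,\dgemp)$ is a resource monoid: the resource-monoid axioms are expressible in the internal language and hence preserved by the equivalence, and \cref{prop:dsh-rm} already supplies them on the sheaf side. So the whole theorem reduces to a componentwise correspondence.

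First I would record the probabilistic analog of \cref{prop:semantic-domains-corresp}: that the equivalence sends the carrier $\Dpspcs$ to $\nomDpspcs$ and the constant sheaf $\DProp$ to the constant \dagoodname{} $\nomDProp$. Concretely, the functor $\dsheafcat\to\dgsets$ evaluates an atomic sheaf through the chosen decoding maps $\dec_\Omega:[0,1]\to\Omega$, with the $\dauti$-action supplied by \cref{def:DHomogeneity}; applied to $\Dpspcs$ it sends a probability space $\calP$ on $\Omega$ to the countable measured partition obtained by decoding $\calP$ along $\dec_\Omega$, which is precisely the Model~1-to-Model~2 translation described above, so the image is $\nomDpspcs$. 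With the carriers identified, the ordering and unit correspond by direct unwinding: the subspace relation $\subpspcs$ on $\Dpspcs$ decodes to the coarser-than relation $\Dorder$ on partitions, so $\dshord$ corresponds to $\dgord$, and the trivial sheaf-space $(\Omega,\{\emptyset,\Omega\},\mu)$ decodes to the one-block partition, so $\dshemp$ corresponds to $\dgemp$.

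The substance is the correspondence of the partial operation, which I would split into its domain and its action. For the domain, I would show the equivalence carries $\dshjoindom$ to $\dgjoindom$: an element of $\dshjoindom(\Omega)$ is a pair factoring through a projection $p:\Omega\twoheadrightarrow\Omega_1\times\Omega_2$, and decoding along $\dec_\Omega$ turns the two factors into a pair of partitions of $[0,1]$. Invoking \cref{def:DHomogeneity} to align $\dec_\Omega$ with a decoding of the product for which the two coordinate projections decode to \emph{independent} partitions, this pair is exactly an independently combinable one, so the image is $\dgjoindom$; the inclusions $\dshinc$ and $\dginc$ then correspond automatically as the induced subobject inclusions. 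For the action, $\dshjoin$ sends such a pair to $p^{-1}(\calP_1\otimes\calP_2)$, and the computation to perform is that decoding this pullback of the \emph{product} measure yields precisely the independent combination $\calP_1\Dicom\calP_2$ of the decoded factors.

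I expect this last measure-theoretic identity --- that pullback of a product probability space, once encoded into $[0,1]$, coincides with discrete independent combination --- to be the main obstacle. It is the point where the product-based notion of separation and Lilac's independent-combination-based notion actually meet, and it requires choosing the decoding maps coherently via \cref{def:DHomogeneity} so that the generated $\sigma$-algebras intersect as in \cref{def:Dicom} and the product measure realizes $\mu(A_1\cap A_2)=\mu_1(A_1)\mu_2(A_2)$, all modulo the almost-everywhere quotient on partitions. Once this identity is secured, the remaining obligations are routine naturality and equivariance checks, and the correspondence of the full tuples $(\dshord,\dshjoindom,\dshinc,\dshjoin,\dshemp)$ and $(\dgord,\dgjoindom,\dginc,\dgjoin,\dgemp)$ follows.
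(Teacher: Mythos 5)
Your proposal follows essentially the same route as the paper: a componentwise transport of the five resource-monoid ingredients across the equivalence (carrier, ordering, unit, domain of the partial operation, and the join), with the crux being exactly the identity you single out --- that a pair factoring through a product decodes to an independently combinable pair, and that pulling back the product probability space decodes to their independent combination --- which is precisely the key lemma the paper proves (in its continuous form, relating factorization through tensor products to independent combinability) to establish this correspondence. Your remark that the resource-monoid axioms on the $\dgsets$ side come for free from the correspondence also matches the paper's framing, so the proposal is correct and structurally the same as the paper's argument.
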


\section{The continuous case} \label{sec:cont}

In this section we generalize \cref{sec:discrete}
from discrete to continuous probability:
$\dsheafcat$ becomes
a category $\sheafcat$ of \emph{\goodsheaves{}},
and $\dgsets$ becomes a category $\cgsets$
of \emph{\agoodname{s}}.
Due to the amount of measure theory
required, we stick to stating the key definitions and lemmas;
the full details can be found in \appref{app:sec:goodsheaves,app:sec:agoodnames}.

\subsection{\Goodsheavesnolink{}}

The first step in generalizing $\dsheafcat$
to $\sheafcat$ is to replace the base category $\CSur$
of discrete sample spaces with a base category of continuous
sample spaces.

The starting point for this generalization is the following observation.
  \labelword{Let}{def:CountProb} $\CountProb$ be the category
  whose objects are countable probability spaces
  $(\Omega,\mu : \Omega\to[0,1])$
  with $\mu(\omega) > 0$ for all $\omega\in\Omega$,
  and whose morphisms
  $(\Omega, \mu)\to(\Omega',\mu')$
  are measure-preserving maps $f : \Omega\to\Omega'$;
  i.e., $\sum_{f(\omega) = \omega'}\mu(\omega) = \mu'(\omega')$
  for all $\omega'\in\Omega'$.
  \labelword{There}{def:dforgetmu} is a 
  functor $\dforgetmu : \CountProb\to\CSur$
  that forgets the measures $\mu$:
  measure-preserving maps $f$
  between probability spaces with strictly positive measure
  are surjective because
  $f^{-1}(y)$ must be nonempty for all $y\in\cod(f)$.
  The category $\CSur$ is the image of $\CountProb$ under
  $\dforgetmu$:
  every nonempty countable set $\Omega$ can be equipped with a strictly positive
  probability measure,
  and for every surjective function $p : \Omega'\twoheadrightarrow\Omega$,
  there exist strictly positive probability measures $\mu'$ on $\Omega'$
  and $\mu$ on $\Omega$ making $p$ a measure-preserving map $(\Omega',\mu')\to(\Omega,\mu)$.
Thus $\CSur$ can be thought of as a category of probability spaces
where one has forgotten all measures.

To generalize this situation from discrete to continuous probability,
we replace the category $\CountProb$ of countable probability spaces
with a category of continuous probability spaces:

\begin{definition}[The category $\StdProb$] \label{def:stdprob}
Let $\StdProb$ be the category
of \emph{standard probability spaces}~\citep{rohlin1949fundamental}
and measure-preserving maps
quotiented by almost-everywhere equality.
\end{definition}

Then, we replace the functor
$\dforgetmu : \CountProb\to\CSur$
with a functor $\forgetmu$ that forgets
continuous probability measures.
The idea behind this forgetting process is as follows.
Given a probability space $(X,\calF,\mu)$,
one can forget everything about the measure $\mu$
except for which subsets are negligible,
leaving behind an \emph{enhanced measurable space} $(X,\calF,\calN)$,
where $\calN$ is the $\sigma$-ideal of $\mu$-negligible sets~\cite[Definition 4.4]{pavlov2022gelfand}.
    Given a measure-preserving map $[f] : (X,\calF,\mu)\to (Y,\calG,\nu)$
    quotiented by almost-everywhere equality
    where $\mu$ has negligibles $\calN$ and $\nu$ has negligibles $\calM$,
    one can forget everything about $[f]$ measure-preserving except
    for the fact that
    $\nu(G) = 0$ iff $\mu(f^{-1}(G)) = 0$,
    leaving behind an equivalence class
    $[f]$
    with $f^{-1}(G)\in\calN$ iff $G\in\calM$
    for all $G\in\calG$.
This motivates the following definitions.

\begin{definition}[The category $\StdMble$] \label{def:mblecat}
A \emph{standard enhanced measurable space}
is tuple $(X,\calF,\calN)$ for which there
exists a measure $\mu$ making $(X,\calF,\mu)$ a standard probability space
with negligibles $\calN$.
Given enhanced measurable spaces $(X,\calF,\calN)$
and $(Y,\calG,\calM)$,
a measurable map $f : (X,\calF)\to(Y,\calG)$
is \emph{negligible-preserving and reflecting}
if $f^{-1}(G)\in\calN$ iff $G\in\calM$ for all $G\in\calG$;
two such maps $f,f'$ are \emph{almost-everywhere equal}
if $f^{-1}(G)\triangle f'^{-1}(G')\in\calN$ for all $G,G'\in\calG$ with $G\triangle G'\in\calM$.
Let $\StdMble$ be the category of
standard enhanced measurable spaces
and negligible-preserving-and-reflecting maps quotiented by almost-everywhere equality.
\end{definition}

\begin{proposition}
\labelword{Let}{def:forgetmu} $\forgetmu : \StdProb\to\StdMble$
be the functor that sends probability spaces $(X,\calF,\mu)$
with negligibles $\calN$ to enhanced measurable spaces $(X,\calF,\calN)$.
This functor is surjective on objects,
and any morphism of standard enhanced measurable spaces
arises from a measure-preserving map equipping
those spaces with standard probability measures.
\end{proposition}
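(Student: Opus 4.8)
The plan is to verify the two assertions in turn --- surjectivity on objects and the factorization of morphisms --- after first checking that $\forgetmu$ is genuinely a functor. Surjectivity on objects is immediate from the definition: a standard enhanced measurable space $(X,\calF,\calN)$ is by definition one for which there exists a measure $\mu$ making $(X,\calF,\mu)$ a standard probability space with negligibles $\calN$, and then $\forgetmu(X,\calF,\mu) = (X,\calF,\calN)$. For functoriality I would observe that a measure-preserving map $f\colon (X,\calF,\mu)\to(Y,\calG,\nu)$ satisfies $\mu(f^{-1}(G)) = \nu(G)$ for all $G$, so $f^{-1}(G)\in\calN$ iff $G\in\calM$; that is, $f$ is negligible-preserving-and-reflecting. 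A short computation with symmetric differences then shows $\mu$-a.e.-equal measure-preserving maps become a.e.-equal in $\StdMble$, so $\forgetmu$ descends to the quotients, and preservation of identities and composites is trivial since $\forgetmu$ only discards the measure.

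For the substantive claim I would argue by pushing a measure forward. Given a morphism $[f]\colon (X,\calF,\calN)\to(Y,\calG,\calM)$ in $\StdMble$, represented by a negligible-preserving-and-reflecting measurable map $f$, first pick a standard probability measure $\mu$ on $(X,\calF)$ with negligibles $\calN$ (possible by surjectivity on objects). Define $\nu$ on $\calG$ by $\nu(G) = \mu(f^{-1}(G))$; this is a probability measure, and $f\colon(X,\calF,\mu)\to(Y,\calG,\nu)$ is measure-preserving by construction, so $\forgetmu$ applied to this $f$ recovers the given class $[f]$. Moreover $\nu(G)=0$ iff $f^{-1}(G)\in\calN$ iff $G\in\calM$, using in turn that $\mu$ has negligibles $\calN$ and that $f$ both preserves and reflects negligibles; hence the negligibles of $\nu$ are exactly $\calM$. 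It then remains only to see that $(Y,\calG,\nu)$ is a \emph{standard} probability space.

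The main obstacle is exactly this last standardness check, since being a standard probability space depends on the measure and not merely on its $\sigma$-ideal of null sets, so it does not follow formally from $\nu$ having negligibles $\calM$. Here I would appeal to the structure of standard probability spaces: $(Y,\calG,\calM)$ being standard enhanced means $\calG$ is the completion, with respect to $\calM$, of a standard Borel $\sigma$-algebra $\calG_0$. The restriction $\nu|_{\calG_0}$ is then a Borel probability measure on a standard Borel space, and the completion of any such measure is a standard probability space. Finally, since $\nu$ and the witnessing standard measure for $(Y,\calG,\calM)$ have the same null sets (both equal to $\calM$), their completions of $\calG_0$ coincide and equal $\calG$; therefore $(Y,\calG,\nu)$ is standard with negligibles $\calM$, which completes the argument.
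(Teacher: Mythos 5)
Your proof is correct, and in substance it builds the same object as the paper, but by a more direct route. The paper proves this proposition by deferring to its characterization lemma for maps of enhanced measurable spaces (the equivalence of: being an $\StdMble$-map, being negligible-preserving, inducing an injective complete-Boolean-algebra homomorphism, and arising from a measure-preserving map), and its proof of the relevant implication factors through measure algebras: negligible-preservation makes $\measalg(f)\colon\calG/\calM\to\calF/\calN$ injective, a standard measure $\overline\mu$ on $\calF/\calN$ is restricted along this injection to give a measure on $\calG/\calM$, and the codomain measure is then read off on events via their equivalence classes. Unwinding that construction yields exactly your pushforward $\nu(G)=\mu(f^{-1}(G))$, and both arguments close with the same standardness check: $\nu$ and a witnessing standard measure $\lambda$ for $(Y,\calG,\calM)$ have the same Borel negligibles, hence induce the same completion of the Borel $\sigma$-algebra $\calG_0$, which is $\calG$ (the paper cites Fremlin 212E(d) for this step, which you argue directly; note your phrase ``their completions of $\calG_0$ coincide'' implicitly uses that every set in $\calM$ lies under a Borel null set, which holds because $\calG$ is itself such a completion). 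What your version buys is self-containedness and elementarity --- no Boolean-algebra formalism is needed, and well-definedness of $\nu$ is automatic rather than a check on a quotient. What the paper's detour buys is reuse: the injectivity characterization is exactly what is needed elsewhere for the duality $\StdMble\simeq\StdMbleAlgop$, so the paper extracts both results from one lemma. One presentational gap worth fixing: you should note explicitly that the chosen domain measure $\mu$ makes $(X,\calF,\mu)$ standard (it does, by the definition of standard enhanced measurable space), since the statement requires \emph{both} spaces to carry standard probability measures.
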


Then, just as $\dsheafcat$ is the category of atomic sheaves on $\CSur$,
$\sheafcat$ is the category of atomic sheaves on $\StdMble$:
\begin{proposition} \label{lem:mble-right-ore}
  $\StdMble$ has the \nref{prop:csur-ore}{right Ore property}.
\end{proposition}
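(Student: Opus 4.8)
The plan is to mimic the proof that $\CSur$ has the right Ore property, where a cospan is completed by a pullback in $\Set$. Here the role of that pullback is played by the \emph{relatively independent product} of probability spaces over a common factor, whose existence rests on the disintegration theorem for standard probability spaces. Concretely, given a cospan $(X,\calF,\calN)\xrightarrow{f}(Z,\calH,\calK)\xleftarrow{g}(Y,\calG,\calM)$ in $\StdMble$, I would produce a span through the relatively independent product $W$, with $h,k$ its two projections, and check that $fh=gk$ in $\StdMble$.

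First I would lift the cospan into $\StdProb$. The subtlety is that $\forgetmu$ being surjective on objects and every $\StdMble$-morphism arising from a measure-preserving map (by the preceding proposition) only guarantees that $f$ and $g$ \emph{individually} come from measure-preserving maps, possibly into $(Z,\calH)$ carrying different measures. So I would fix one measure $\rho$ on $(Z,\calH)$ with negligibles $\calK$, and then arrange measures $\mu$ on $X$ and $\nu$ on $Y$ with negligibles $\calN$ and $\calM$ for which $f$ and $g$ become measure-preserving into the \emph{common} space $(Z,\calH,\rho)$. This is possible because $f$ and $g$ preserve and reflect negligibles: starting from any measures realizing $f$ and $g$, their pushforwards along $f$ and $g$ are mutually absolutely continuous with $\rho$, so replacing $\mu$ by $(\varphi\circ f)\cdot\mu$ with $\varphi$ the Radon--Nikodym derivative correcting the pushforward to $\rho$ (and symmetrically for $\nu$) fixes the marginals on $Z$ without disturbing the negligible ideals.

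Next I would form the relatively independent product of $(X,\mu)$ and $(Y,\nu)$ over $(Z,\rho)$. Disintegrating $\mu$ along $f$ and $\nu$ along $g$ yields conditional measures $(\mu_z)_{z\in Z}$ and $(\nu_z)_{z\in Z}$ supported on the fibers $f^{-1}(z)$ and $g^{-1}(z)$, and I would take $W=(X\times Y,\,\calF\otimes\calG,\,\calN_\lambda)$ where $\lambda=\int_Z \mu_z\otimes\nu_z\,d\rho(z)$ and $\calN_\lambda$ is its ideal of negligibles; since a product of two standard probability spaces is again standard, $W$ is a bona fide object of $\StdMble$. Letting $h,k$ be the projections, the marginal computation $\lambda(A\times Y)=\int_Z\mu_z(A)\,d\rho=\mu(A)$, and its symmetric counterpart for $k$, shows $h$ and $k$ are measure-preserving, hence negligible-preserving-and-reflecting $\StdMble$-morphisms. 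Crucially, $\lambda$ is carried by the locus $\{(x,y):f(x)=g(y)\}$ by construction of the fiberwise product, so $f\circ h$ and $g\circ k$ agree $\lambda$-almost everywhere, which is exactly the statement that $fh=gk$ in $\StdMble$.

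The main obstacle is entirely measure-theoretic: producing the disintegrations $(\mu_z),(\nu_z)$ and verifying that the coupling $\lambda$ is well defined modulo almost-everywhere equality. This is precisely where the restriction to \emph{standard} probability spaces is indispensable, since regular conditional probabilities are only guaranteed for standard Borel spaces; once the disintegrations are in hand, the marginal, support, and negligibility computations are routine.
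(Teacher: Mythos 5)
Your proposal matches the paper's proof essentially step for step: the paper likewise lifts the cospan to $\StdProb$ (its extension lemma performs exactly your Radon--Nikodym correction so that $f$ and $g$ become measure-preserving into a common $(Z,\calH,\rho)$), then completes the cospan there via the disintegration-based coupling $\int_Z \mu_z\otimes\nu_z\,\rho({\rm d}z)$, checks the projections are measure-preserving and that the square commutes almost everywhere because each fiberwise product is carried by $f^{-1}(z)\times g^{-1}(z)$, and finally forgets the measures. One small correction: standardness of $W$ does not follow from ``a product of two standard probability spaces is standard'' (your $\lambda$ is not the product measure), but rather from the fact that the completion of any Borel probability measure on a Polish space is standard, which is how the paper argues it.
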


\begin{definition} \label{def:goodsheaf} \label{def:sheafcat}
  Let $\sheafcat$ be the full subcategory of $\funcat{\StdMbleop}\Set$
  consisting of atomic sheaves.
  Objects of $\sheafcat$ will be called
  \emph{\goodsheaves{}}.
\end{definition}

Inside $\sheafcat$, there are continuous analogs
of the \dgoodsheaves{} $\DRV$ of random variables
and $\Dpspcs$ of discrete probability spaces.
The continuous analog of $\DRV$ models
$A$-valued random variables for $A$ Polish,
following \citet{simpsonsheafslides,simpson2017probability}:

\begin{definition} \label{lem:rv-mble-sheaf}
  For any measurable space $(A,\calG)$
  arising from a Polish space,
  the \emph{sheaf of random variables} is:
  \begin{align*}
    &\RVshf_A(\Omega,\calF,\calN) = \{ \text{measurable maps }(\Omega,\calF)\to (A,\calG) \}\,/\,=_{\calN\text{-\rm a.e.}}
    \\
    &\RVshf_A(p : \Omega' \to \Omega)([X] : \RVshf_A(\Omega)) : \RVshf_A(\Omega') = [X\circ p]
  \end{align*}
\end{definition}

For proof that $\RVshf_A$ is indeed a sheaf, 
see \appref{app:lem:rv-mble-sheaf}.
Next, to generalize $\Dpspcs$ from discrete to continuous probability,
we make use of the following observation:
every discrete probability space $(\Omega,\calF,\mu)$
arises via pullback from a surjection $X : \Omega\twoheadrightarrow A$
in which the set $A$ is equipped
with a probability mass function $\nu : A\to [0,1]$,
by setting $\calF := \{X^{-1}(a) \mid a \in A\}$
and $\mu(X^{-1}(a)) = \nu(a)$.
Thus, discrete probability spaces $(\Omega,\calF,\mu)$
can be represented by $\CSur$ morphisms
$\Omega\to \dforgetmu(A,\mu)$ for $(A,\mu)\in\CountProb$,
where $\dforgetmu$ is the functor
$\CSur\to\CountProb$ that forgets measures.
This motivates the following generalization to the continuous setting.

\begin{definition} \label{def:mble-pspcs}
  The \emph{sheaf of probability spaces} is
  \[\pspcs:=\colim_{A\,:\,\Core(\StdProb)} \yo(\forgetmu A),\]
  where $\yo$ is the Yoneda embedding, $\Core(\StdProb)$
  is the subcategory of $\StdProb$-isomorphisms,
  and the colimit is taken in presheaves.
  (See \appref{app:def:mble-pspcs} for proof that $\pspcs$ is indeed a sheaf.)
  Concretely, the presheaf $\pspcs$ sends $(\Omega,\calF,\calN):\StdMble$
  to the set of pairs $((A,\calG,\mu),X)$
  where $(A,\calG,\mu):\StdProb$
  and $X$ is a $\StdMble$-map
  from $(\Omega,\calF,\calN)$
  to $\forgetmu(A,\calG,\mu)$,
  quotiented by
      $((A,\calG,\mu),X)\sim((A',\calG',\mu'),X')$ iff
        there is a $\StdProb$-iso $i : (A,\calG,\mu)\to (A',\calG',\mu')$
        with $X' = U(i)\, X$.
  The action on morphisms is given by precomposition.
\end{definition}

\newcommand\StdMbleotimes{{\nref{def:stdmble-tensor}{\otimes}}}
Using $\RVshf$ and $\pspcs$,
we generalize the resource monoid of \cref{thm:discrete-krm}
to a resource monoid of continuous probability spaces.
The monoidal category $(\CSur,\times,1)$
of discrete sample spaces
becomes a monoidal category $(\StdMble,\otimes,1)$
of continuous sample spaces,
with monoidal product $\otimes$
inherited from the usual tensor
product $\otimes_{\StdProb}$ of standard probability spaces:

\newcommand\shjoindom{{\nref{def:shjoindom}{\mathbb P^2_\perp}}}
\newcommand\shjoin{\mathbin{\nref{def:shjoin}{\pdot}}}
\newcommand\shinc{{\nref{def:shinc}{i}}}
\newcommand\shord{\mathrel{\nref{def:shord}{\sqsubseteq}}}
\newcommand\CProp{{\nref{def:CProp}{\mathrm{Prop}}}}
\begin{definition} \label{def:stdmble-tensor}
  Given two standard enhanced measurable spaces $X,Y$,
  their \emph{tensor product}
  $X\StdMbleotimes Y$
  is defined to be
  $\forgetmu (X'\otimes_{\StdProb} Y')$,
  where
  $X'$ and $Y'$ are arbitrary standard probability spaces
  with $\forgetmu (X') = X$ and $\forgetmu (Y') = Y$.
\end{definition}
This is well-defined --- the choice of $X',Y'$ does not matter ---
and extends to a bifunctor on $\StdMble$
making $(\StdMble,\StdMbleotimes,1)$ a symmetric monoidal category
with unit the one-point space $1$.
For details, see \appref{app:sec:stdmblespc}.
Lifting $\StdMbleotimes$ to $\funcat\StdMbleop\Set$
via the Day convolution
yields a resource monoid in $\sheafcat$:

\begin{lemma}
  \labelword{The}{def:shinc} Day convolution $\pspcs\otimes\pspcs$
  is a sheaf, and there is a monic map of sheaves $\shinc : \pspcs\otimes\pspcs
  \hookrightarrow \pspcs\times\pspcs$.
\end{lemma}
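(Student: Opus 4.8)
The plan is to mirror the discrete development culminating in \cref{prop:dsh-rm}, where the Day convolution $\Dpspcs\dshDay\Dpspcs$ was identified with a concrete atomic sheaf $\dshjoindom$ sitting inside $\Dpspcs\times\Dpspcs$ via $\dshinc$. First I would compute the Day convolution explicitly. Since the Day convolution is cocontinuous in each variable and the Yoneda embedding is strong monoidal for it (so $\yo(U)\otimes\yo(V)\cong\yo(U\StdMbleotimes V)$), and since $\pspcs=\colim_{A:\Core(\StdProb)}\yo(\forgetmu A)$ is a colimit of representables in presheaves, a short co-Yoneda computation gives
\[
  \pspcs\otimes\pspcs\;\cong\;\colim_{A,B:\Core(\StdProb)}\yo\bigl(\forgetmu A\StdMbleotimes\forgetmu B\bigr)\;=\;\colim_{A,B:\Core(\StdProb)}\yo\bigl(\forgetmu(A\otimes_{\StdProb}B)\bigr),
\]
using \cref{def:stdmble-tensor} for the last equality (see \citet{day2006closed} for the general facts). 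Concretely, $(\pspcs\otimes\pspcs)(\Omega)$ is then the set of triples $(A,B,X)$ with $A,B:\StdProb$ and $X$ a $\StdMble$-map $\Omega\to\forgetmu(A\otimes_{\StdProb}B)$, quotiented by the action of $\StdProb$-isomorphisms of $A$ and $B$, and with structure maps given by precomposition.

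Next I would build $\shinc$ and prove it monic. The tensor $A\otimes_{\StdProb}B$ carries two measure-preserving marginal projections onto $A$ and $B$; applying $\forgetmu$ and postcomposing $X$ with them yields $\StdMble$-maps $\Omega\to\forgetmu A$ and $\Omega\to\forgetmu B$, i.e.\ elements of $\pspcs(\Omega)$. Sending $(A,B,X)$ to this pair defines a natural transformation $\shinc:\pspcs\otimes\pspcs\to\pspcs\times\pspcs$ that respects the iso-quotient. To see it is monic it suffices to check componentwise injectivity, which reduces to the fact that a map into a tensor is determined by its two marginals: the product $\sigma$-algebra on $A\otimes_{\StdProb}B$ is generated by measurable rectangles, so two $\StdMble$-maps into $\forgetmu(A\otimes_{\StdProb}B)$ agreeing modulo negligibles after composition with both projections must agree modulo negligibles, while $A$ and $B$ are recovered up to isomorphism from the two components. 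As a componentwise-injective natural transformation it is a monomorphism in presheaves, and since $\sheafcat$ is a full subcategory of $\psh(\StdMble)$ it is monic there as well once both objects are known to be sheaves.

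It remains to show $\pspcs\otimes\pspcs$ is a sheaf. The product $\pspcs\times\pspcs$ is a sheaf because $\sheafcat$ is closed under limits computed in presheaves, and via $\shinc$ I would regard $\pspcs\otimes\pspcs$ as the subpresheaf of $\pspcs\times\pspcs$ of those pairs of probability spaces on $\Omega$ that jointly factor through a tensor. A subpresheaf of an atomic sheaf is itself a sheaf exactly when it is closed under the ambient restriction operation (a routine consequence of \cref{def:atomic-sheaf-condition}), so I would verify this closure. Given a pair $(X_1,X_2)=(\mathrm{fst}\,X,\mathrm{snd}\,X)$ in the subpresheaf over some $M$ that is restrictable along a morphism $i:L\to M$, naturality of the marginal projections together with the marginal-determinacy used for monicity shows that the joint $X$, viewed as an element of $\pspcs(M)$, is itself restrictable along $i$; its restriction exists since $\pspcs$ is a sheaf and is a joint whose marginals are the restrictions of $X_1$ and $X_2$. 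Hence the restricted pair again factors through $A\otimes_{\StdProb}B$ and lies in the subpresheaf, establishing the sheaf condition.

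The main obstacle is the measure theory underlying two linked claims: that a $\StdMble$-map into a tensor is determined modulo negligibles by its two marginals (making $\shinc$ monic and the joint restriction well-behaved), and that ``factoring through a tensor'' is stable under the structure and restriction maps of the atomic topology. Both rest on standard properties of standard probability spaces --- rectangles generating the product $\sigma$-algebra and the classification of standard spaces up to measurable isomorphism --- together with the bookkeeping of negligible-preserving-and-reflecting maps modulo almost-everywhere equality, which I would relegate to the appendix.
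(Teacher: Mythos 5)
Your proposal is correct, but it reaches the result by a genuinely different route than the paper. The paper proves this lemma as an instance of general machinery: it introduces presheaves with \emph{minimal supports} (\cref{app:def:supported-sheaf}) and \emph{categories of supports} (\cref{app:def:category-of-supports}), computes the Day convolution of any two minimally-supported atomic sheaves as trivially-supported tuples modulo isomorphism (\cref{app:lem:ussheaf-conv}), and then proves, in that generality, both sheafhood (\cref{app:lem:tensor-preserves-ussheaves}) and the existence of the monic comparison map into the product (\cref{app:lem:tensor-sub-product}); the lemma follows because $\StdMble$ is a category of supports (\cref{app:lem:stdmble-supports}) and $\pspcs$ has minimal supports (\cref{app:lem:mble-pspcs-tensor-pspcs}). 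In particular, the paper's descent of a $p$-invariant class $[f,x,y]$ takes place in the representable-modulo-groupoid presheaf $\StdMble(-,G)$ for the groupoid $G$ of isomorphisms of the form $h\otimes k$, via \cref{app:lem:mblecat-quot-sheaf}. You instead exploit the specific presentation $\pspcs=\colim_{A}\yo(\forgetmu A)$ of \cref{def:mble-pspcs}: cocontinuity of Day convolution and strong monoidality of $\yo$ give your concrete description by triples $(A,B,X:\Omega\to\forgetmu(A\otimes_{\StdProb}B))$ modulo isomorphisms (which agrees with \cref{app:lem:ussheaf-conv} specialized to $\pspcs$), your monicity argument extracts the needed isomorphisms directly from the colimit quotient rather than from a support argument, and your sheafhood argument descends the \emph{joint} as an element of $\pspcs$, reusing the already-established sheafhood of $\pspcs$ instead of the general groupoid lemma. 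Both routes bottom out in the same two nontrivial inputs: joint monicity of the semicartesian projections of $\StdMble$ (marginal determinacy modulo negligibles, an axiom of ``category of supports'' in the paper and a fact you rightly flag as needing a careful appendix proof, where the completion of the product $\sigma$-algebra must be handled alongside the rectangle-generation argument), and sheafhood of representables modulo isomorphisms (ultimately the target-practice lemma, \cref{app:lem:target-practice}, which enters your proof only through the sheafhood of $\pspcs$). What the paper's approach buys is reusability of the supports machinery across the later correspondence calculations; what yours buys is a shorter, more elementary proof of this particular lemma that directly mirrors the discrete development around \cref{prop:dsh-rm}.
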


\newcommand\shemp{{\nref{def:shemp}{\mathrm{emp}}}}
\begin{lemma} \label{def:shord}
  \labelword{There}{def:shemp} is a map of sheaves $\shord : \pspcs\times\pspcs\to\CProp$,
  \labelword{where}{def:CProp} $\CProp$ is the constant sheaf at $\{\vtrue,\vfalse\}$,
  and a map of sheaves $\shemp : 1\to\pspcs$,
  making $(\pspcs,\shemp)$ a poset in $\sheafcat$ with least element $\shemp$.
\end{lemma}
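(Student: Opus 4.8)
The plan is to define $\shord$ pointwise as the subspace relation $\subpspcs$, expressed through the colimit presentation of $\pspcs$. Fix $\Omega=(\Omega,\calF,\calN):\StdMble$, and let $\calP_1,\calP_2\in\pspcs(\Omega)$ be represented (per \cref{def:mble-pspcs}) by pairs $((A,\calG,\mu),f)$ and $((B,\calH,\nu),g)$, where $f:\Omega\to\forgetmu(A,\calG,\mu)$ and $g:\Omega\to\forgetmu(B,\calH,\nu)$ are $\StdMble$-maps. I set $\calP_1\shord\calP_2$ to be $\vtrue$ at $\Omega$ iff there is a $\StdProb$-morphism $h:(B,\calH,\nu)\to(A,\calG,\mu)$ with $f=\forgetmu(h)\circ g$; intuitively such an $h$ exhibits the sub-$\sigma$-algebra and measure of $\calP_1$ as contained in those of $\calP_2$, recovering $\subpspcs$. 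For $\shemp$ I take the global element whose component at $\Omega$ is the class of $(1,\,!_\Omega)$, where $1$ is the one-point standard probability space (terminal in both $\StdProb$ and $\StdMble$) and $!_\Omega:\Omega\to\forgetmu(1)=1$ the unique map; naturality of $\shemp$ is immediate from uniqueness of maps to the terminal object.

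The first thing to check is that these are genuine maps of sheaves. Well-definedness of $\shord$ on the iso-quotient defining $\pspcs(\Omega)$ is routine: replacing $f,g$ by $\forgetmu(i)\,f,\forgetmu(j)\,g$ along $\StdProb$-isos $i,j$ replaces a witness $h$ by $i\,h\,j^{-1}$. The essential point is naturality: since $\CProp$ is the constant sheaf, a natural transformation into it must be \emph{invariant} — not merely monotone — under the restriction maps $\pspcs(p)$, which act by precomposition $f\mapsto f\circ p$. Invariance in the forward direction is clear (the same $h$ witnesses the relation after precomposing by $p$). For the reverse direction I would use that every $\StdMble$-morphism $p:\Omega'\to\Omega$ is epic: if $f\circ p=f'\circ p$ then the measurable set $\{f\neq f'\}$ (measurable because the target is standard Borel) has $p$-preimage in $\calN'$, so by the negligible-\emph{reflecting} property of $p$ it lies in $\calN$, giving $f=f'$. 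Hence $fp=\forgetmu(h)(gp)$ forces $f=\forgetmu(h)\,g$, so the same witness descends and $\shord$ agrees on restrictions.

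It remains to verify the internal poset axioms; because $\shord$ factors through the constant sheaf $\CProp$ and is restriction-invariant, these reduce to pointwise statements about the relation picked out by $\shord$ on each $\pspcs(\Omega)$. Reflexivity uses $h=\mathrm{id}$; transitivity composes witnesses $h,h'$ into $h\circ h'$; and the least-element property $\shemp\shord\calP$ uses the unique measure-preserving map $(B,\calH,\nu)\to 1$. The main obstacle is antisymmetry: given $f=\forgetmu(h)\,g$ and $g=\forgetmu(h')\,f$, I must show $\calP_1=\calP_2$ as elements of $\pspcs(\Omega)$, i.e.\ that $h'$ is a $\StdProb$-isomorphism rather than merely one half of a pair of measure-preserving maps. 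The key is again the negligible-reflecting property of the representing maps: from $f=\forgetmu(h\,h')\,f$ one gets $f^{-1}\big(G\mathbin{\triangle}(h\,h')^{-1}(G)\big)\in\calN$ for every $G\in\calG$, whence $G\mathbin{\triangle}(h\,h')^{-1}(G)$ is $\mu$-negligible, so $h\,h'=\mathrm{id}$ in $\StdProb$ (morphisms of $\StdProb$ being determined by their action on measure algebras mod almost-everywhere equality); symmetrically $h'\,h=\mathrm{id}$. Thus $h'$ is the sought iso with $g=\forgetmu(h')\,f$, identifying $\calP_1$ and $\calP_2$ in the colimit. I expect this antisymmetry argument — pinning down that mutual containment forces an honest measure isomorphism — to be the delicate step; everything else is bookkeeping once the epicness of $\StdMble$-morphisms is in hand.
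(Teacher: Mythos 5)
Your proposal is correct and takes essentially the same route as the paper (\cref{app:def:mble-pspcs-ordering,app:def:pspcs-join-unit}): the order is defined by existence of a $\StdProb$-morphism factoring one representing map through the other (equivalently the subspace relation on pullbacks), and $\shemp$ is the class of the unique map to the one-point probability space. If anything your write-up is more complete than the paper's, since you explicitly verify the reverse direction of naturality (via epicness of $\StdMble$-maps, cf.\ \cref{app:cor:stdmble-epis}) and internal antisymmetry (mutual witnesses compose to the identity, by faithfulness of the measure-algebra functor), both of which the paper leaves implicit.
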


\begin{lemma} \label{def:shjoin}
  There is a map of sheaves $\shjoin : \pspcs\otimes\pspcs\to\pspcs$,
  monotone with respect to $\shord$,
  such that $(\pspcs,\shjoin,\shemp)$ is a partial commutative monoid in $\sheafcat$.
\end{lemma}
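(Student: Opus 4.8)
The plan is to construct $\shjoin$ and verify the monoid laws abstractly, by exhibiting $\pspcs$ as a commutative monoid object for the Day convolution, and then to handle monotonicity and the partiality/sheaf bookkeeping separately. Concretely, $\shjoin$ should generalize the discrete join of \cref{thm:discrete-krm}: an element of $\pspcs\otimes\pspcs$ over $\Omega$ is represented, via the Day coend, by standard probability spaces $\calP_1,\calP_2$ together with a $\StdMble$-map $p$ from $\Omega$ into $\forgetmu$ of their product, and $\shjoin$ sends it to the pullback along $p$ of the product probability space $\calP_1\otimes_{\StdProb}\calP_2$. Abstractly, the key observation is that $\forgetmu:\StdProb\to\StdMble$ is strong symmetric monoidal: this is essentially forced by \cref{def:stdmble-tensor}, which defines $\StdMbleotimes$ precisely so that $\forgetmu(X'\otimes_{\StdProb}Y')=\forgetmu(X')\StdMbleotimes\forgetmu(Y')$, so that the structure isomorphisms are identities once the well-definedness and coherence of $\StdMbleotimes$ are granted. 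Since the Yoneda embedding into a Day-convolution category is always strong symmetric monoidal, the composite $\yo\circ\forgetmu$ is strong symmetric monoidal from $(\StdProb,\otimes_{\StdProb})$ into $(\funcat\StdMbleop\Set,\otimes)$.

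Next I would invoke the general principle that the colimit of a strong symmetric monoidal functor out of a symmetric monoidal index category carries a canonical commutative monoid structure, provided the target's tensor is cocontinuous in each variable --- which holds for Day convolution, being defined by a coend and admitting a right adjoint. The product $\otimes_{\StdProb}$ restricts to a symmetric monoidal structure on the groupoid $\Core(\StdProb)$, so applying this principle to $\pspcs=\colim_{A:\Core(\StdProb)}\yo(\forgetmu A)$ yields a multiplication
\[
\pspcs\otimes\pspcs\;\cong\;\colim_{A,A'}\yo(\forgetmu A)\otimes\yo(\forgetmu A')\;\cong\;\colim_{A,A'}\yo\bigl(\forgetmu(A\otimes_{\StdProb}A')\bigr)\;\longrightarrow\;\pspcs,
\]
induced by the tensor functor on $\Core(\StdProb)$, together with commutativity, associativity, and unit laws descending directly from the symmetric monoidal coherences of $(\StdProb,\otimes_{\StdProb})$. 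The unit is the coprojection at the one-point space $1\in\Core(\StdProb)$; since the monoidal unit of $\StdMble$ is terminal, $\yo(1)$ is the terminal presheaf and this coprojection is exactly $\shemp$. Because the monoidal unit is terminal, $\StdMble$ and hence its Day convolution are semicartesian, which is what produces the canonical mono $\shinc:\pspcs\otimes\pspcs\hookrightarrow\pspcs\times\pspcs$; a commutative monoid for such a product is precisely a partial commutative monoid whose domain of definedness is tracked by $\shinc$. That $\shjoin$ is a map in $\sheafcat$ is then automatic, since $\sheafcat\hookrightarrow\funcat\StdMbleop\Set$ is full and both $\pspcs\otimes\pspcs$ and $\pspcs$ are already sheaves.

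The remaining steps are not formal, and this is where I expect the work to concentrate. The first is discharging the hypotheses of the abstract machinery inside the quotiented measure-theoretic setting: that $\otimes_{\StdProb}$ is a coherent symmetric monoidal structure on $\StdProb$ after quotienting by almost-everywhere equality, that products of standard spaces are standard, and that the concrete pullback-of-product formula above is independent of the coend representative and of the choice of standard measures representing a given $\StdMble$-object --- i.e.\ that the abstract colimit-multiplication genuinely computes $p^{-1}(\calP_1\otimes_{\StdProb}\calP_2)$. The second, which I expect to be the main obstacle, is monotonicity with respect to $\shord$: this is genuinely measure-theoretic and does not follow from the monoid structure. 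Having identified $\shjoin$ with the concrete formula, it reduces to showing that both the product of probability spaces and pullback along a $\StdMble$-map preserve the subspace ordering --- that $\calP_1\subpspcs\calP_1'$ and $\calP_2\subpspcs\calP_2'$ imply $\calP_1\otimes_{\StdProb}\calP_2\subpspcs\calP_1'\otimes_{\StdProb}\calP_2'$, and that $p^{-1}(-)$ preserves $\subpspcs$ --- which I would verify directly at the level of generated $\sigma$-algebras and restricted measures, taking care that every claim is stated modulo negligibles.
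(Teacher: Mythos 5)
Your abstract construction of $\shjoin$ itself is sound, and it computes the right map: Day convolution of representables is the representable of the tensor, $\forgetmu$ is strong symmetric monoidal essentially by \cref{def:stdmble-tensor}, and the colimit-of-a-strong-monoidal-functor principle does endow $\pspcs=\colim_{A:\Core(\StdProb)}\yo(\forgetmu A)$ with a commutative monoid structure for the Day tensor whose multiplication is exactly $p^{-1}(\calP_1\otimes_{\StdProb}\calP_2)$ on representatives. This reproduces, more slickly, what the paper checks by hand in \cref{app:def:pspcs-join-unit-cm} (unit, commutativity, associativity as equations on maps out of Day tensors, all descending from coherence of $\otimes_{\StdProb}$). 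Your identification of monotonicity as genuinely measure-theoretic is also correct in spirit; the paper's version is the diagram chase in \cref{app:thm:mble-pspcs-krm}.

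The gap is the sentence ``a commutative monoid for such a product is precisely a partial commutative monoid whose domain of definedness is tracked by $\shinc$.'' That is false as a general principle, and it is precisely where the paper's work concentrates. Being a monoid object for $\otimes$ gives associativity as an equation between two maps $\pspcs\otimes\pspcs\otimes\pspcs\to\pspcs$; but the PCM axioms in the sense the lemma asserts (spelled out in \cref{app:def:pspcs-join-unit-pcm}) contain \emph{definedness-propagation} clauses stated in the cartesian structure: if $(p,q)$ factors through $\shinc$ and $(\pcmjoin(p,q),r)$ factors through $\shinc$, then $(q,r)$ and $(p,\pcmjoin(q,r))$ must factor through $\shinc$ as well. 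Equivalently, one needs the square comparing $\pcmjoin\otimes 1$ against $\pcmjoin\times 1$ over the inclusions to behave like a pullback, and nothing about semicartesianness, monicity of $\shinc$, or the monoid equations forces this; it is a special property of $\pspcs$. The paper proves it concretely in the associativity case of \cref{app:def:pspcs-join-unit-pcm}, using the characterization of $\pspcs\otimes\pspcs$ as pairs factoring through a tensor product (\cref{app:lem:pspcs-day-conv-char}) and an explicit construction extracting a factorization of $(X,(Y,Z))$ from given factorizations of $(X,Y)$ and $((X,Y),Z)$. The same issue recurs in monotonicity: a resource monoid requires that definedness be downward closed along $\shord$ (if $(p',q')$ is defined and $p\shord p'$, $q\shord q'$, then $(p,q)$ is defined), which your reduction to ``product and pullback preserve $\subpspcs$'' does not address. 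Finally, a minor point: monicity of $\shinc$ is not ``produced'' by semicartesianness --- in general the comparison map of a semicartesian tensor need not be monic; the paper proves it in \cref{app:lem:tensor-sub-product} via minimal supports and joint monicity of the projections in $\StdMble$ --- though you may of course cite the preceding lemma for it rather than rederive it.
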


\begin{theorem} \label{def:sheafresourcemonoid}
  $(\shord,\shjoindom,\shinc,\shjoin,\shemp)$
  is a resource monoid in $\sheafcat$.
\end{theorem}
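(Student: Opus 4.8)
The plan is to read this theorem off as an assembly of the three preceding lemmas, since every clause in the definition of a resource monoid has already been discharged. Because $\sheafcat$ is a category of sheaves, and hence a topos, it carries an internal higher-order logic, so ``resource monoid internal to $\sheafcat$'' unpacks into: an internal poset $(\pspcs,\shord)$ with a least element $\shemp$, together with an internal monotone partial commutative monoid operation $\shjoin$ whose domain of definedness is a subobject $\shinc : \shjoindom\hookrightarrow\pspcs\times\pspcs$. I would begin by spelling out this dictionary, noting in particular that an internal partial function is presented as a span --- here the monic $\shinc$ carving out the domain together with the total map $\shjoin$ out of that domain --- and that the monoid axioms are the Kleene-style equalities (commutativity, associativity, and the unit laws) interpreted in the internal language.

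With that translation fixed, each component is supplied directly. The sheaf $\shjoindom=\pspcs\otimes\pspcs$, together with its monic inclusion $\shinc$ into $\pspcs\times\pspcs$, is provided by the lemma establishing that the Day convolution is a sheaf with such a monic map, giving the partial domain. \cref{def:shord} supplies the order $\shord : \pspcs\times\pspcs\to\CProp$ and the map $\shemp : 1\to\pspcs$, and asserts that $(\pspcs,\shemp)$ is a poset in $\sheafcat$ with least element $\shemp$; this is exactly the poset-with-bottom requirement. Finally, \cref{def:shjoin} provides $\shjoin : \shjoindom\to\pspcs$, shows it is monotone with respect to $\shord$, and establishes that $(\pspcs,\shjoin,\shemp)$ is a partial commutative monoid in $\sheafcat$, which is the remaining requirement. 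Collecting these yields the tuple $(\shord,\shjoindom,\shinc,\shjoin,\shemp)$ carrying all the structure demanded of a resource monoid.

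The only verification left at this level is coherence between the lemmas: I would confirm that the order with respect to which $\shjoin$ is monotone in \cref{def:shjoin} is the same $\shord$ making $\pspcs$ a poset in \cref{def:shord}, and that the partial commutative monoid structure of \cref{def:shjoin} is taken relative to the domain $\shinc$ rather than some other subobject. Both are immediate from the shared notation, so no new argument is needed. The genuine difficulty lives entirely inside the three lemmas, and that is where the main obstacle would be if one unfolded it --- showing that independent combination, presented through the colimit definition of $\pspcs$, the Day-convolution description of $\shjoindom$, and the pullback-of-product-measure formula for $\shjoin$, all modulo almost-everywhere equality, is associative, commutative, unital, and monotone in the internal logic of the sheaf topos. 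Chasing those axioms through the measure-theoretic quotients is the hard part, but for the present theorem it is already packaged away, so the proof reduces to a short assembly citing \cref{def:shord,def:shjoin} together with the lemma supplying $\shinc$.
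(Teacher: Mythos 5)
Your proposal is correct and takes essentially the same approach as the paper: the paper's detailed proof (\cref{app:thm:mble-pspcs-krm} in the appendix) likewise assembles the resource-monoid structure by citing the internal PCM lemma and the monotonicity property, with all the substantive measure-theoretic and sheaf-theoretic work already discharged in the preceding lemmas. The only cosmetic difference is that in the appendix the monotonicity verification is carried out inside the theorem's own proof rather than quoted from \cref{def:shjoin}, but the decomposition and content are identical.
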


For details, see \appref{app:sec:goodsheaves}.
While the colimit presentation of $\pspcs$ makes it easier to check for sheafhood
and to construct the above resource monoid,
it is difficult to work with in the concrete calculations to follow.
To address this, we show $\pspcs$ equivalent to a sheaf of continuous probability spaces
that arise via pullback along $\StdMble$-maps.
To do this,
we must take care to define pullback in a way
that respects the negligible ideals contained in $\StdMble$-objects.

\newcommand\epbmeas[2]{{\nref{def:epb}{#1^*}}#2}

\begin{definition} \label{def:epb}
  For $(X,\calF,\calN)\in\StdMble$
  and $(Y,\calG,\mu)\in\StdProb$
  and $f : (X,\calF,\calN)\to\forgetmu(Y,\calG,\mu)$,
  the \emph{enhanced pullback of $(Y,\calG,\mu)$ along $f$},
  written $\epbmeas f(\calG,\mu)$,
  is the pair $(\epbmeas f \calG,\epbmeas f \mu)$ defined by
  \begin{align*}
    &\epbmeas f\calG = \{f^{-1}(G)\triangle N \mid G\in\calG,N'\in\calN\} \\
    &\epbmeas f\mu(f^{-1}(G)\triangle N) = \mu(G)\text{ for all $G\in\calG,N\in\calN$}
  \end{align*}
  Enhanced pullback makes $(X,\epbmeas f \calG,\epbmeas f \mu)$
  a probability space with negligibles $\calN$
  and $f$ a measure-preserving map $(X,\epbmeas f\calG,\epbmeas f\mu)\to(Y,\calG,\mu)$.
\end{definition}

\begin{definition} \label{def:pspc-on-ems}
  A \emph{probability space on} $(X,\calF,\calN)\in\StdMble$
  is a pair $(\calG,\mu)$
  with $\calN\subseteq \calG\subseteq\calF$
  and $\mu$ a probability measure with negligibles $\calN$.
  \labelword{Call}{def:standardizable} such a pair \emph{standardizable} if $(X,\calG,\mu)$
  arises via \nref{def:epb}{enhanced pullback}
  along a map $f : (X,\calF,\calN)\to \forgetmu(Y,\calG,\mu)$
  for some $(Y,\calG,\mu)\in\StdProb$.
\end{definition}

With these definitions in hand,
the colimit $\pspcs$
is equivalent to 
a sheaf of standardizable probability spaces,
with action on morphisms given by enhanced pullback:

\newcommand\epspcs{{\nref{def:mble-pspcs-computed}{\hat\texpspcs}}}
\begin{lemma} \label{def:mble-pspcs-computed}
  $\pspcs$ is equivalent to the following sheaf:
  \begin{align*}
    &\epspcs(\Omega)
    = \{(\calG,\mu) \mid \text{$(\calG,\mu)$ \nref{def:pspc-on-ems}{standardizable on}
    $\Omega$}\}
    \\
    &\epspcs(f : \Omega'\to\Omega)(\calG,\mu)
    = \epbmeas f{(\calG,\mu)}
  \end{align*}
\end{lemma}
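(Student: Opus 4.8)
The plan is to exhibit an explicit natural isomorphism $\Phi : \pspcs \Rightarrow \epspcs$ and read off the claim. Using the concrete description of $\pspcs$ from \cref{def:mble-pspcs}, I define $\Phi_\Omega : \pspcs(\Omega)\to\epspcs(\Omega)$ on representatives by sending $((A,\calG,\mu),X)$ --- where $(A,\calG,\mu)\in\StdProb$ and $X : \Omega\to\forgetmu(A,\calG,\mu)$ is an $\StdMble$-map --- to the enhanced pullback $\epbmeas X(\calG,\mu)$ of \cref{def:epb}. This lands in $\epspcs(\Omega)$ essentially by definition: $\epbmeas X(\calG,\mu)$ is a probability space on $\Omega$ arising via enhanced pullback along a map into $\forgetmu$ of a standard space, hence is \nref{def:standardizable}{standardizable}.

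First I would check $\Phi_\Omega$ is well-defined on $\sim$-classes. If $((A,\calG,\mu),X)\sim((A',\calG',\mu'),X')$ via a $\StdProb$-iso $i$ with $X' = \forgetmu(i)\circ X$, then --- since enhanced pullback composes contravariantly, $\epbmeas{(g\circ h)}{(-)} = \epbmeas h{(\epbmeas g{(-)})}$, and enhanced pullback of a space along the $\forgetmu$-image of an iso returns that space up to its negligibles --- we get $\epbmeas{X'}(\calG',\mu') = \epbmeas X{(\epbmeas{\forgetmu(i)}(\calG',\mu'))} = \epbmeas X(\calG,\mu)$. The same compositionality gives naturality: for $f : \Omega'\to\Omega$, both $\Phi_{\Omega'}(\pspcs(f)(A,X))$ and $\epspcs(f)(\Phi_\Omega(A,X))$ equal $\epbmeas{(X\circ f)}(\calG,\mu)$. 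These are routine verifications once one checks that the symmetric-difference slack in \cref{def:epb} is respected under composition, which I would dispatch by working modulo the common negligible ideal $\calN$. This compositionality is also exactly what makes $\epspcs$ a presheaf; and once $\Phi$ is a natural iso, $\epspcs$ is a sheaf because it is isomorphic to the sheaf $\pspcs$.

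Surjectivity of each $\Phi_\Omega$ is immediate: a standardizable $(\calG,\mu)\in\epspcs(\Omega)$ arises, by \nref{def:standardizable}{definition}, as $\epbmeas f(\calG_Y,\mu_Y)$ for some $f : \Omega\to\forgetmu(Y,\calG_Y,\mu_Y)$, so $(\calG,\mu) = \Phi_\Omega((Y,\calG_Y,\mu_Y),f)$. The crux --- and the expected main obstacle --- is injectivity, since this is where the standardness hypothesis must do real work. Here I would pass to measure algebras. Because $X$ is negligible-preserving-and-reflecting, the pullback map $X^* : [G]\mapsto[X^{-1}(G)]$ is a measure-preserving homomorphism from the measure algebra of $(A,\calG,\mu)$ that is injective (negligible-reflection forces $[G_1]=[G_2]$ iff $[X^{-1}G_1]=[X^{-1}G_2]$) and, by construction of enhanced pullback, surjects onto the measure algebra of the pullback space $(\Omega,\epbmeas X(\calG,\mu))$. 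Thus if $\Phi_\Omega(A,X)=\Phi_\Omega(A',X')$, the two pullback spaces coincide as measured subalgebras of $\Omega$, and the composite $(X^*)^{-1}\circ X'^*$ is a measure-algebra isomorphism identifying the measure algebras of $A'$ and $A$.

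Finally, I would invoke the classical structure theory of standard probability spaces: on $\StdProb$ the measure-algebra functor is a contravariant equivalence, so every isomorphism between measure algebras of standard spaces is realized by a unique (mod a.e.) measure-preserving isomorphism, and measure-preserving maps are determined by their action on measure algebras \citep{rohlin1949fundamental}. Applying this to $\theta := (X^*)^{-1}\circ X'^*$ yields a $\StdProb$-iso $i : (A,\calG,\mu)\to(A',\calG',\mu')$ with $i^* = \theta$, whence $(\forgetmu(i)\circ X)^* = X^*\circ i^* = X'^*$ and therefore $\forgetmu(i)\circ X = X'$ --- exactly the witness that $((A,\calG,\mu),X)\sim((A',\calG',\mu'),X')$. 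Hence each $\Phi_\Omega$ is a bijection, $\Phi$ is a natural isomorphism, and $\pspcs\cong\epspcs$ in $\sheafcat$. The measure-algebra correspondence is the one genuinely measure-theoretic ingredient, and I would defer its verification to \appref{app:def:mble-pspcs}.
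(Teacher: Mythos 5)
Your proposal is correct and takes essentially the same route as the paper's own proof (\cref{app:lem:mble-pspcs-repr-on}, supported by \cref{app:lem:epb-resp-iso,app:lem:epb-functorial,app:lem:epb-bij-hom}): the same assignment $[(A,\calG,\mu),X]\mapsto\epbmeas X{(\calG,\mu)}$, well-definedness and naturality from functoriality of enhanced pullback, surjectivity by the definition of standardizability, and injectivity by passing to measure algebras, where the bijectivity of $\measalg(X)$ onto the pullback algebra plus the duality $\StdProb\simeq\StdProbAlgop$ produce the required $\StdProb$-iso witnessing $\sim$. No gaps worth flagging.
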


Moreover, the Day convolution $\pspcs\otimes\pspcs$
corresponds to a sheaf of independently combinable probability spaces:

\begin{lemma} \label{def:shjoindom}
  $\pspcs\otimes\pspcs$ is equivalent to
  the following sheaf $\shjoindom$:
  \begin{align*}
    &\shjoindom(\Omega) = \left\{ ((\calG,\mu),(\calH,\nu)) 
    \left|~ \begin{aligned}
      &\text{$(\calG,\mu)$ and $(\calH,\nu)$ standardizable}\\
      &\text{and independently combinable}
    \end{aligned}\right.\right\}
    \\
    &\shjoindom(f : \Omega'\to\Omega)((\calG,\mu),(\calH,\nu))
    = (\epbmeas f {(\calG,\mu)}
    , \epbmeas f {(\calH,\nu)}
    )
  \end{align*}
\end{lemma}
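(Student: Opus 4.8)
The plan is to compute the Day convolution $\pspcs\otimes\pspcs$ by exploiting the colimit presentation of $\pspcs$ from \cref{def:mble-pspcs}, mirroring the discrete computation of $\dshjoindom$. The Day convolution is cocontinuous in each argument --- it is the left Kan extension furnishing one half of a closed monoidal structure on $\funcat{\StdMbleop}\Set$, hence preserves colimits in each variable. Writing $\pspcs=\colim_{A:\Core(\StdProb)}\yo(\forgetmu A)$ and distributing the convolution over both colimits reduces the problem to computing $\yo(\forgetmu A)\otimes\yo(\forgetmu B)$ on representables. Two standard facts then finish the reduction: the Yoneda embedding is strong monoidal into $(\funcat{\StdMbleop}\Set,\otimes)$, so $\yo(\forgetmu A)\otimes\yo(\forgetmu B)\cong\yo(\forgetmu A\StdMbleotimes\forgetmu B)$; and $\forgetmu$ is strong monoidal by the very \cref{def:stdmble-tensor} of $\StdMbleotimes$, so $\forgetmu A\StdMbleotimes\forgetmu B=\forgetmu(A\otimes_{\StdProb}B)$. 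This yields $\pspcs\otimes\pspcs\cong\colim_{A,B:\Core(\StdProb)}\yo(\forgetmu(A\otimes_{\StdProb}B))$, a colimit of representables indexed by the groupoid $\Core(\StdProb)\times\Core(\StdProb)$, which makes sense because a tensor of $\StdProb$-isomorphisms is again an isomorphism, so $\otimes_{\StdProb}$ restricts to this groupoid.

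Next I would unwind this colimit concretely, exactly as \cref{def:mble-pspcs} unwinds $\pspcs$ itself. Since the index is a groupoid, a section over $\Omega$ is a tuple $(A,B,X)$ with $A,B\in\StdProb$ and $X$ a $\StdMble$-map $\Omega\to\forgetmu(A\otimes_{\StdProb}B)$, quotiented by the action of $\StdProb$-isomorphisms of $A$ and of $B$. I would then exhibit a natural bijection with $\shjoindom(\Omega)$ by postcomposing $X$ with the two measure-preserving projections $A\otimes_{\StdProb}B\to A$ and $A\otimes_{\StdProb}B\to B$ and taking enhanced pullbacks (\cref{def:epb}): the maps $X_A:=\forgetmu(\pi_A)\circ X$ and $X_B:=\forgetmu(\pi_B)\circ X$ yield standardizable probability spaces $(\calG,\mu)=\epbmeas{X_A}{(A)}$ and $(\calH,\nu)=\epbmeas{X_B}{(B)}$ on $\Omega$, whose independent combination is precisely $\epbmeas{X}{(A\otimes_{\StdProb}B)}$. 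That this assignment is well-defined on isomorphism classes and injective follows from the same argument as in \cref{def:mble-pspcs-computed}: a standardizable probability space determines its representing $\StdProb$-object up to isomorphism.

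The crux of the argument --- and the main obstacle --- is matching the image of this assignment with exactly the pairs that are both standardizable and independently combinable. The forward direction is the measure-theoretic heart: I must show $(\calG,\mu)$ and $(\calH,\nu)$ are always independently combinable, which amounts to the fact that the factor algebras $\pi_A^{-1}(\calG_A)$ and $\pi_B^{-1}(\calG_B)$ are independent under the product measure on $A\otimes_{\StdProb}B$ and that enhanced pullback along $X$ transports this independence to $\Omega$. For surjectivity I would run this in reverse: given standardizable, independently combinable $(\calG,\mu)$ and $(\calH,\nu)$, their combination $(\calG,\mu)\Dicom(\calH,\nu)$ carries a product structure supplying the factoring map $X:\Omega\to\forgetmu(A\otimes_{\StdProb}B)$, using that $\otimes_{\StdProb}$ is the independent coupling of standard probability spaces; verifying that the two round-trips are mutually inverse is where the appendix lemmas on $\StdProb$ and independent combination do the real work. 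Finally, naturality in $\Omega$ is immediate: the colimit's action on morphisms is precomposition, which \cref{def:mble-pspcs-computed} identifies with enhanced pullback, matching the stated action of $\shjoindom$ componentwise.
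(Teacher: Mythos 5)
Your proposal is correct, but it reaches the concrete description of $\pspcs\otimes\pspcs$ by a genuinely different route from the paper. The paper never invokes cocontinuity of the Day convolution: instead it develops a theory of sheaves with minimal supports over a ``category of supports'' (\cref{app:def:supported-sheaf,app:def:category-of-supports}), evaluates the coend directly as trivially-supported tuples modulo isomorphism (\cref{app:lem:ussheaf-conv}), obtains the description of $(\pspcs\otimes\pspcs)(\Omega)$ as pairs of probability spaces that factor through a tensor product (\cref{app:lem:pspcs-day-conv-char}), and then concludes via \cref{def:mble-pspcs-computed} together with the equivalence between factoring through a tensor product and independent combinability (\cref{app:lem:factor-through-tensor-iff-indep-combable}). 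Your route --- distributing $\otimes$ over the colimit presentation of \cref{def:mble-pspcs} and using that $\yo$ and $\forgetmu$ are strong monoidal --- replaces all of the minimal-supports machinery with a co-Yoneda computation, and your intermediate form (isomorphism classes of maps $\Omega\to\forgetmu(A\otimes_{\StdProb}B)$ under the action of pairs of $\StdProb$-isos) makes the passage to $\shjoindom$ more direct: compose with the two projections and take enhanced pullbacks. The measure-theoretic crux you isolate is exactly the paper's: one direction is independence of the factor algebras transported along enhanced pullback (\cref{app:lem:indepcomb-pbstable}), and surjectivity is the pairing map $\omega\mapsto(X_A(\omega),X_B(\omega))$, which is measure-preserving into the product precisely because of independent combinability --- the two halves of \cref{app:lem:factor-through-tensor-iff-indep-combable}. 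What the paper's heavier machinery buys is sheafhood: a colimit of sheaves formed in presheaves need not be a sheaf, and \cref{app:lem:tensor-preserves-ussheaves} shows the presheaf convolution is already an atomic sheaf (and produces the monic into $\pspcs\times\pspcs$ that the later PCM lemmas reuse). Your argument yields only a presheaf-level isomorphism, so the word ``sheaf'' in the statement must be discharged either by citing \nref{def:shinc}{the preceding lemma} or by verifying the atomic-sheaf condition for $\shjoindom$ directly; you should say which. Finally, your injectivity step silently uses that the two projections out of $\forgetmu(A\otimes_{\StdProb}B)$ are jointly monic in $\StdMble$, in order to promote the componentwise isomorphisms $\phi,\psi$ to $\forgetmu(\phi\otimes\psi)\circ X = X'$; this is part of the paper's category-of-supports structure and deserves explicit mention.
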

Via these equivalences,
the resource monoid in \cref{def:sheafresourcemonoid}
parallels its discrete analog (\cref{prop:dsh-rm}).
Across $\pspcs\cong\epspcs$, the ordering $\shord$ corresponds to the generalization of \cref{def:Dorder}
from countable measured partitions to standardizable probability spaces.
Across $\pspcs\otimes\pspcs\cong\shjoindom$, the monic map $\shinc$ corresponds
to the canonical inclusion $\shjoindom\hookrightarrow \pspcs\times\pspcs$,
and the combining operation $\shjoin$ corresponds to
the map $\shjoindom\to\pspcs$ that
sends independently-combinable pairs
of standardizable probability spaces to their independent combination.
For details, see \appref{app:sec:prob-concepts-as-goodnames}.

\subsection{\Agoodnamenolink{s}} \label{sec:pnom}

Finding a continuous analog to $\dgsets$ boils down
to showing continuous analogs of \cref{def:DHomogeneity,def:DCorrespondence}.
In the discrete setting, these lemmas hold
because every discrete probability space
can be encoded as a measured partition that leaves
enough room in the sample space $[0,1]$ for fresh randomness.
To create a continuous analog,
we fix an enormous sample space following \citet{li2023lilac}:

\newcommand\hcubeproj[1]{{\nref{def:projn}{\mathrm{proj}_{1..#1}}}}
\newcommand\embedmap[1]{{#1\nref{def:embedmap}{\otimes1_\hcube}}}

\begin{definition} \label{def:hcube}
  The Hilbert cube $\hcube$ is the standard enhanced measurable space $([0,1]^\omega,\calF,\calN)$
  of infinite sequences in the interval $[0,1]$.
  The $\sigma$-algebra $\calF$
  and negligibles $\calN$ are those of the usual Lebesgue measure
  on $[0,1]^\omega$.
\end{definition}

Then, to ensure that there is always enough room left over in $\hcube$ for 
fresh randomness, we encode all probability spaces 
using only finitely many dimensions at a time:

\begin{definition} \label{def:ff-pspc}
  A \nref{def:standardizable}{standardizable} probability space $(\calG,\mu)$ on $\hcube$
  has \emph{finite footprint}
  if it arises by \nref{def:epb}{enhanced pullback} along a
  map $\hcube\to X$ that factors through 
  $\hcubeproj n$ for some $n$,
  \labelword{where}{def:projn}
  $\hcubeproj n$ is the canonical projection $\hcube\to[0,1]^n$.
\end{definition}

Analogously, the group $\dauti$ of $\dgsets$
becomes a group of finite-dimensional permutations of the Hilbert cube:

\begin{definition} \label{def:autoff}
\label{def:auti}
  A $\StdMble$-automorphism $\pi : \hcube\to\hcube$
  has \emph{finite width}
  if it is of the form $\pi'\times 1_{\hcube}$
  for some $\StdMble$-automorphism
  $\pi' : [0,1]^n\to[0,1]^n$.
  Let $\auti$ be the subgroup of $\Aut_\StdMble\hcube$
  consisting only of those automorphisms with \nref{def:autoff}{finite width}.
\end{definition}

Then, the topology on $\dauti$ generated by countable measurable partitions
becomes a topology on $\auti$ generated by standardizable sub-$\sigma$-algebras with finite footprint:

\newcommand\autiFix{\operatorname{\nref{def:autiFix}{Fix}}}
\begin{definition}[Topology on $\auti$] \label{def:auti-open} \label{def:autiFix}
  A subgroup $U$ of $\auti$ is
  \emph{open} if for every $\pi$ in $U$
  there exists $(\calF,\mu)$ with finite footprint
  such that $\pi\in\autiFix\calF\subseteq U$,
  where $\autiFix \calF$ is the subgroup of
  those $\pi$ in $\auti$ with $\pi(F) \setaseq F$ for all $F\in\calF$.
\end{definition}

\begin{definition} \label{def:pnom} \label{def:cgsets}
  $\cgsets$ is the category of
  $\auti$-sets with \nref{def:auti-open}{open} stabilizers
  and equivariant functions between them;
  objects of $\cgsets$ will be called \emph{\agoodname{s}}.
\end{definition}

There are \agoodname{s}
analogous to the sheaves $\RVshf_A$ of random variables
and $\pspcs$ of standardizable probability spaces:

\begin{definition} \label{def:gRV}
  For $A$ a Polish space,
  a random variable $X : \hcube\to A$
  has \emph{finite footprint}
  if it factors through $\hcubeproj n$ for some $n$.
  Let $\gRV_A$ be the set of random variables with finite footprint.
  This forms \aagoodname{}, with action $X\cdot\pi = X\circ \pi$.
\end{definition}

\begin{definition} \label{def:gpspcs}
  Let $\gpspcs$ be the set of standardizable probability spaces on $\hcube$
  with finite footprint. This forms \aagoodname{},
  with action $(\calF,\mu)\cdot\pi = \epbmeas{\pi}{(\calF,\mu)}$.
\end{definition}

\newcommand\gjoindom{{\nref{def:gjoindom}{\overline{\mathbb P}^2_\perp}}}
\newcommand\gjoin{\mathbin{\nref{def:gjoin}{\overline\pdot}}}
\newcommand\gemp{{\nref{def:gemp}{\overline{\mathrm{emp}}}}}
\newcommand\ginc{{\nref{def:ginc}{\overline i}}}
\newcommand\gord{\mathrel{\nref{def:gord}{\overline\sqsubseteq}}}
\newcommand\GProp{{\nref{def:GProp}{\overline{\mathrm{Prop}}}}}
These yield a resource monoid in $\cgsets$.
\begin{theorem} \label{thm:cgsets-monoid}
$(\gord,\gjoindom,\ginc,\gjoin,\gemp)$
is a $\cgsets$ resource monoid, where \begin{itemize}[leftmargin=*]
  \item 
  \labelword{$\gord : \gpspcs\times\gpspcs\to\GProp$}{def:gord}
  is the map that sends $((\calG,\mu),(\calH,\nu))$
  to $\top$ iff $\calG\subseteq\calH$ and $\nu|_\calG = \mu$,
  \labelword{where}{def:GProp} $\GProp$ is the two-element set with trivial action.

  \item 
  \labelword{$\gjoindom$}{def:gjoindom} is the set
  of pairs $((\calG,\mu),(\calH,\nu))\in\gpspcs\times\gpspcs$
  for which $(\calG,\mu)$ and $(\calH,\nu)$
  are independently combinable.

  \item 
  \labelword{$\ginc$}{def:ginc} is the inclusion 
  $\gjoindom\hookrightarrow\gpspcs\times\gpspcs$.

  \item 
  \labelword{$\gjoin : \gjoindom\to\gpspcs$}{def:gjoin}
  is the map that sends independently-combinable pairs
  to their independent combination.

  \item 
  \labelword{$\gemp : 1\to\gpspcs$}{def:gemp} is the constant map
  at the probability space $\epbmeas{f}1$ on $\hcube$
  arising from enhanced pullback along
  the unique $\StdMble$-map $\hcube\to\forgetmu(1)$
  into the one-point probability space $1$.
\end{itemize}
\end{theorem}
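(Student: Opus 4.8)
The plan is to obtain the result by transporting the $\sheafcat$-resource monoid of \cref{def:sheafresourcemonoid} across the equivalence $\cgsets\simeq\sheafcat$, rather than re-deriving the algebraic laws from scratch. The key observation is that ``being a resource monoid'' is expressed entirely in terms of categorical data that any equivalence preserves: a subobject $\shinc:\shjoindom\hookrightarrow\pspcs\times\pspcs$ cutting out the domain of the partial operation, a map $\shord:\pspcs\times\pspcs\to\CProp$ into the proposition object, the partial operation $\shjoin:\shjoindom\to\pspcs$, and the unit $\shemp:1\to\pspcs$, together with the commuting diagrams witnessing the poset axioms, the partial-commutative-monoid laws, and monotonicity. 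Since an equivalence preserves finite products, the terminal object, monomorphisms, and all such diagrams, the image of a resource monoid under the equivalence functor $\sheafcat\to\cgsets$ is again a resource monoid. Thus it suffices to establish $\cgsets\simeq\sheafcat$ and then identify the transported structure with $(\gord,\gjoindom,\ginc,\gjoin,\gemp)$.

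First I would invoke the equivalence $\cgsets\simeq\sheafcat$, the continuous counterpart of \cref{thm:discrete}: exactly as in the discrete case, it arises by specializing the general nominal-situation theorem (\annoyingappref{app:thm:nominal-situation}) to the continuous analogs of \cref{def:DHomogeneity,def:DCorrespondence}, whose proofs are where the measure theory of $\hcube$ enters (a measurable-isomorphism argument for standardizable sub-$\sigma$-algebras of finite footprint, and the correspondence between $\autiFix$-subgroups and refinement of $\sigma$-algebras). Then, by \cref{def:sheafresourcemonoid}, $(\shord,\shjoindom,\shinc,\shjoin,\shemp)$ is a resource monoid in $\sheafcat$, so by the previous paragraph its image in $\cgsets$ is a resource monoid there.

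The remaining and most delicate step is to check that this image is precisely $(\gord,\gjoindom,\ginc,\gjoin,\gemp)$. This amounts to computing the equivalence on the relevant objects and maps: using \cref{def:mble-pspcs-computed} it sends $\pspcs$ to $\gpspcs$, and using \cref{def:shjoindom} it sends $\pspcs\otimes\pspcs$ to $\gjoindom$ with the inclusion $\shinc$ going to $\ginc$; the constant sheaf $\CProp$ goes to $\GProp$; and, unwinding definitions, the maps $\shord$, $\shjoin$, $\shemp$ — whose actions are given by enhanced pullback and independent combination — are carried to the equivariant maps $\gord$, $\gjoin$, $\gemp$ described in the statement. Because the equivalence acts on an object by evaluating the corresponding sheaf at $\hcube$ and cutting the action down to the finite-width automorphisms $\auti$, these identifications reduce to matching the measure-theoretic data on both sides: the $\sigma$-algebra ordering for $\gord$, the product/independent-combination measure for $\gjoin$, and the trivial space $\epbmeas{f}{1}$ for $\gemp$.

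The main obstacle I anticipate is exactly this final identification, together with the finite-footprint and finite-width bookkeeping it requires: one must verify that the abstractly transported operations really do land among standardizable probability spaces of finite footprint and agree with the concrete maps, i.e.\ that enhanced pullback and independent combination commute with the restriction to $\auti$-actions. (If one instead preferred a direct verification in $\cgsets$, the corresponding hard content would be the associativity and monotonicity of independent combination, which rest on the measure theory established for Lilac.) Once these identifications are in place, the resource-monoid laws for $(\gord,\gjoindom,\ginc,\gjoin,\gemp)$ follow immediately from those for $(\shord,\shjoindom,\shinc,\shjoin,\shemp)$, paralleling the way \cref{prop:dsh-rm} yields the discrete statement in \cref{thm:discrete-krm}.
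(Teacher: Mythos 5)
Your proposal is correct and follows essentially the same route as the paper: the paper establishes the resource monoid (PDM) on the sheaf side via the Day convolution (\cref{def:sheafresourcemonoid}, i.e.\ \annoyingappref{app:thm:mble-pspcs-krm}), transports it across the equivalence of \annoyingappref{app:thm:nominal}, and then does exactly the concrete identification you flag as the delicate step --- matching the transported ordering, domain, join, and unit with $\gord$, $\gjoindom$, $\gjoin$, $\gemp$ via lemmas showing that enhanced pullback and independent combination are stable under the restriction to finite-footprint objects and $\auti$-actions. Your anticipated obstacle is precisely where the paper spends its effort (the lemmas relating factoring-through-a-tensor-product to independent combinability and the pullback-stability of independent combination).
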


\subsection{The equivalence}

By choosing $\hcube$ as underlying sample space
and topologizing $\Aut\hcube$
to permit only objects that use finitely-many dimensions
of $\hcube$ at a time,
we obtain
continuous analogs of \Homogeneity{} and \Correspondence{}.
This relies crucially on
both the finiteness of footprints and the inclusion of negligible ideals
in the base category 
$\StdMble$.
Negligible ideals
allow passing to \emph{measure algebra}~\fremlinf{321A}{32}{1}:

\begin{definition}
  A \emph{measure algebra} is a tuple $(\frakA,\overline\mu)$
  consisting of a complete Boolean algebra $\frakA$
  and a function $\overline\mu : \frakA\to[0,1]$
  such that (1) $\overline\mu(A) > 0$ for $A\ne\bot$
  and (2)
  $\overline\mu$ is countably additive in the sense that
  $\overline\mu(\bigvee_i A_i) = \sum_i\overline\mu(A_i)$
  for all countable families $\{A_i\}_{i\in I}$ with $A_i\wedge A_j = \bot$
  for all $i\ne j$.
  A \emph{measure algebra homomorphism}
  from $(\frakA,\overline\mu)$ to $(\frakB,\overline\nu)$
  is a complete Boolean algebra homomorphism $f : \frakA\to\frakB$,
  measure-preserving in the sense that $\overline\nu(f(A)) = \overline\mu(A)$
  for all $A\in\frakA$.
\end{definition}

Every $(X,\calF,\mu)$ in $\StdProb$ yields a measure algebra
$(\calF/\mu,\overline\mu)$,
where $\calF/\mu$ is the complete Boolean algebra
of events $F\in\calF$ mod $F\sim F'$ iff $\mu(F\triangle F') = 0$,
and $\overline\mu([F]) = \mu(F)$~\fremlinf{321H}{32}{3}.
Every measure-preserving map $f$ from $(X,\calF,\mu)$ to $(Y,\calG,\nu)$
defines a homomorphism $f^*$ from $(\calG/\nu,\overline\nu)$ to $(\calF/\mu,\overline\mu)$
sending $[G]\in\calG/\nu$ to $[f^{-1}(G)]\in\calF/\mu$~\fremlinf{324M}{32}{26}.
This gives a duality:
\begin{definition} \label{def:StdProbAlg}
  A \emph{standard probability algebra}
  is a measure algebra $(\frakA,\overline\mu)$
  arising from a standard probability space
  as described above.
  Let $\StdProbAlg$ be the category of standard probability algebras
  and measure algebra homomorphisms between them.
\end{definition}
\begin{lemma} \label{lem:dual1}
  $\StdProb\simeq \StdProbAlgop$. \end{lemma}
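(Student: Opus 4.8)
The plan is to exhibit the contravariant functor $F:\StdProb\to\StdProbAlgop$ described immediately above the statement—sending $(X,\calF,\mu)$ to the standard probability algebra $(\calF/\mu,\overline\mu)$ and a measure-preserving $[f]$ to the homomorphism $f^*$—and to show it is an equivalence by checking it is essentially surjective and fully faithful. Well-definedness on almost-everywhere classes and functoriality are already supplied by the preceding discussion (via \fremlin{324M}), so only these two properties remain.

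Essential surjectivity is immediate from \cref{def:StdProbAlg}: by definition every standard probability algebra is of the form $(\calF/\mu,\overline\mu)$ for some standard probability space $(X,\calF,\mu)$, and $F$ sends that space to it. For full faithfulness, fix standard probability spaces $X=(X,\calF,\mu)$ and $Y=(Y,\calG,\nu)$; we must show that
\[
  \StdProb(X,Y)\longrightarrow\StdProbAlg\bigl((\calG/\nu,\overline\nu),(\calF/\mu,\overline\mu)\bigr),\qquad [f]\mapsto f^*
\]
is a bijection.

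Faithfulness is essentially built into the definition of $\StdProb$. If $f^*=g^*$ then $f^{-1}(G)\triangle g^{-1}(G)$ is $\mu$-negligible for every $G\in\calG$. Since $Y$ is standard, $\calG$ is countably generated by a point-separating family, and agreement of preimages modulo negligible sets on a countable generating family propagates to all of $\calG$, forcing $f=g$ almost everywhere—which is exactly the equivalence relation quotiented out in $\StdProb$, so $[f]=[g]$.

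Fullness is the crux and the only genuinely measure-theoretic step. Given a measure-preserving Boolean homomorphism $\phi:(\calG/\nu,\overline\nu)\to(\calF/\mu,\overline\mu)$, we must realize it as $f^*$ for an honest measure-preserving map $f:X\to Y$. This is the classical realization theorem for homomorphisms between measure algebras of standard (Lebesgue--Rohlin) probability spaces: because $Y$ is standard, $\calG$ is the Borel algebra of a Polish space—hence countably separated—and every measure-preserving homomorphism into the measure algebra of a complete probability space is induced by a measurable point map (\fremlin{343B}; the special case of isomorphisms is treated in \fremlin{\S344}). Measure-preservation of $\phi$ makes the representing $f$ a $\StdProb$-morphism, and by construction $[f^{-1}(G)]=\phi([G])$ for all $G\in\calG$, i.e.\ $f^*=\phi$. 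I expect this realization step to be the main obstacle, as it is the one place requiring structure theory of standard spaces rather than formal category theory or countable generation; the remaining verifications are routine.
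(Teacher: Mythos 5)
Your proposal follows the same route as the paper's own proof (\cref{app:lem:stdprob-dual} in the appendix): the same functor, the same decomposition into essential surjectivity, fullness, and faithfulness, and the same key citation --- Fremlin's 343B --- for the realization step. Your faithfulness argument is a by-hand version of the paper's appeal to Fremlin 343G (for countably separated codomains the realizing map is unique up to a.e.\ equality), and essential surjectivity is indeed immediate from \cref{def:StdProbAlg} as stated in the main text (the appendix instead works with the atoms-plus-interval characterization and derives surjectivity from the Rohlin structure theorem).

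There is, however, one misattributed hypothesis at exactly the step you identified as the crux. Countable separation of $(Y,\calG,\nu)$ is \emph{not} what makes Fremlin 343B applicable: countable separation is the hypothesis for \emph{uniqueness} of the realizing point map (343G), i.e., it is what faithfulness needs. Existence of a realization in 343B requires verifying its condition (i), namely that $(Y,\calG,\nu)$ is a (locally) compact measure space, together with completeness and strict localizability of the domain $(X,\calF,\mu)$. These conditions do hold --- a standard probability space is the completion of a Borel measure on a Polish space, hence a compact measure space (Fremlin 433X(e)(i) and 342G(b)), hence locally compact (342H(a)), and any complete probability space is strictly localizable (221L) --- and this precondition-checking is precisely what the paper's proof carries out. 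As written, your inference ``standard $\Rightarrow$ countably separated $\Rightarrow$ every homomorphism is induced by a point map'' is not valid (countable separation alone does not yield realizability); replace it with the compactness verification and the proof is complete.
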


A similar duality holds also for $\StdMble$:

\begin{definition} \label{def:StdMbleAlg}
  A \emph{standard measurable algebra}
  is a complete Boolean algebra $\frakA$
  arising from a standard probability space;
  i.e. $\frakA$ is isomorphic to a Boolean algebra $\calF/\mu$
  for some $(X,\calF,\mu)\in\StdProb$.
  Let $\StdMbleAlg$ be the category
  of standard measurable algebras
  and \emph{injective} complete boolean algebra homomorphisms.
\end{definition}

\begin{lemma}  \label{lem:dual2}
  $\StdMble\simeq \StdMbleAlgop$. \end{lemma}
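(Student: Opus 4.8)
The plan is to build an explicit equivalence functor $\measalg : \StdMble \to \StdMbleAlgop$ mirroring the duality of \cref{lem:dual1}, with the measure forgotten and with injectivity of algebra maps playing the role that measure-preservation played there. On objects, set $\measalg(X,\calF,\calN) = \calF/\calN$, the complete Boolean algebra of events modulo negligible symmetric difference; this is a standard measurable algebra because, by definition of $\StdMble$, there is a measure $\mu$ with negligibles $\calN$ making $(X,\calF,\mu)$ a standard probability space, whence $\calF/\calN = \calF/\mu$. On morphisms, send $[f] : (X,\calF,\calN)\to(Y,\calG,\calM)$ to the inverse-image map $f^\ast : \calG/\calM \to \calF/\calN$, $[G]\mapsto[f^{-1}(G)]$. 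First I would check this is well-defined and functorial: $f^\ast$ does not depend on the a.e.-representative of $[f]$; it is injective precisely because $f$ reflects negligibles ($f^\ast[G]=0$ iff $f^{-1}(G)\in\calN$ iff $G\in\calM$ iff $[G]=0$); and it is a \emph{complete} Boolean homomorphism because inverse image commutes with complements and countable unions, and standard measurable algebras satisfy the countable chain condition (a strictly positive finite measure admits no uncountable disjoint family), so every supremum is realized by a countable subfamily and $\sigma$-completeness of $f^\ast$ upgrades to completeness.

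Next I would dispatch essential surjectivity and faithfulness, both of which are immediate. Essential surjectivity holds on the nose: a standard measurable algebra $\frakA$ is by definition isomorphic to $\calF/\mu$ for some $(X,\calF,\mu)\in\StdProb$, and $\measalg$ sends $(X,\calF,\calN)$ — with $\calN$ the negligibles of $\mu$ — to exactly $\calF/\calN=\calF/\mu\cong\frakA$. Faithfulness is just an unfolding of a.e.-equality in $\StdMble$: if $f^\ast=f'^\ast$, then for all $G,G'$ with $G\triangle G'\in\calM$ we have $[f^{-1}(G)]=f^\ast[G]=f'^\ast[G']=[f'^{-1}(G')]$, i.e. $f^{-1}(G)\triangle f'^{-1}(G')\in\calN$, which is precisely $[f]=[f']$.

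The main obstacle is fullness. Given an injective complete Boolean homomorphism $\phi:\calG/\calM\to\calF/\calN$, I must realize it as $f^\ast$ for some $\StdMble$-map $f$, and the idea is to put measures back in and appeal to \cref{lem:dual1}. Fix measures $\mu$ on $(X,\calF)$ and $\nu$ on $(Y,\calG)$ with negligibles $\calN$ and $\calM$, so $(\calF/\calN,\overline\mu)$ and $(\calG/\calM,\overline\nu)$ are standard probability algebras, and transport the measure across $\phi$ by setting $\overline\nu'([G]):=\overline\mu(\phi[G])$. Since $\phi$ is a complete (hence countably additive, disjointness-preserving) injective homomorphism and $\overline\mu$ is countably additive and strictly positive, $\overline\nu'$ is a strictly positive, countably additive measure on $\calG/\calM$ of total mass $\overline\mu(\phi(\top))=\overline\mu(\top)=1$, and by construction $\phi$ is now \emph{measure-preserving} from $(\calG/\calM,\overline\nu')$ to $(\calF/\calN,\overline\mu)$.

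The crux is then to argue that $(\calG/\calM,\overline\nu')$ is again a standard probability algebra. Because $\calG/\calM$ comes from a standard probability space it is countably generated, so $(\calG/\calM,\overline\nu')$ is a separable measure algebra; a standard fact of measure theory is that every separable measure algebra is the measure algebra of a standard probability space (see \citet{fremlin2000measure}), so $(\calG/\calM,\overline\nu')\in\StdProbAlg$. Realizing $\overline\nu'$ by the genuine measure $\nu'(G):=\overline\nu'[G]$ on $(Y,\calG)$ — which again has null ideal exactly $\calM$ — makes $(Y,\calG,\nu')$ a standard probability space, and $\phi$ a $\StdProbAlg$-morphism $(\calG/\calM,\overline\nu')\to(\calF/\calN,\overline\mu)$. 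By \cref{lem:dual1} this corresponds to a measure-preserving map $f:(X,\calF,\mu)\to(Y,\calG,\nu')$ with $f^\ast=\phi$; forgetting measures via $\forgetmu$ yields a negligible-preserving-and-reflecting $[f]:(X,\calF,\calN)\to(Y,\calG,\calM)$ with $\measalg[f]=f^\ast=\phi$, giving fullness. I expect the delicate points to be the separable-measure-algebra realization fact and the bookkeeping that the negligible ideals $\calN,\calM$ are preserved throughout; the remaining steps parallel \cref{lem:dual1} closely.
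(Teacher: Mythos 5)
Your proof is correct, and its core is the duality argument the paper intends: the paper's own proof (\cref{app:lem:mblecat-dual}) simply says the functor is $\measalg$ with measures forgotten, and that the equivalence is ``analogous to'' the proof of \cref{lem:dual1} (i.e.\ \cref{app:lem:stdprob-dual}, which realizes algebra homomorphisms by point maps via Fremlin's 343B and gets uniqueness from 343G). Where you genuinely differ is in how that analogy is discharged. For fullness, instead of re-running Fremlin 343B in the measureless setting, you transport a measure backwards along the injective homomorphism $\phi$ --- exactly the trick the paper itself uses in (3)$\Rightarrow$(4) of \cref{app:lem:mblecat-map-char} --- check that standardness survives the change of measure (this rests on the fact that countable generation/Maharam type is an invariant of the underlying Boolean algebra, independent of the measure, plus Maharam's classification), and then invoke \cref{lem:dual1} as a black box; this buys a clean reduction to an already-proved statement in place of a parallel argument, at the cost of the realization fact for separable measure algebras and the bookkeeping (which you do carry out) that the realized measure $\nu'$ on $(Y,\calG)$ has null ideal exactly $\calM$ and yields a standard space. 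For faithfulness, the paper's template appeals to countable separation (Fremlin 343G), whereas you get it definitionally: \cref{def:mblecat} declares two $\StdMble$-morphisms equal precisely when $f^{-1}(G)\triangle f'^{-1}(G')\in\calN$ whenever $G\triangle G'\in\calM$, which is literally what $f^{*}=f'^{*}$ unfolds to. One caveat there: the appendix's formulation of the category (\cref{app:def:stems-mor}) quotients instead by \emph{pointwise} almost-everywhere equality, and for that version your faithfulness step would additionally need countable separation of the target (as in the paper) to pass from preimage-agreement to pointwise a.e.\ agreement; for the main-text statement you were asked to prove, there is no gap.
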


\cref{lem:dual1,lem:dual2}
allow importing the extensive technical development
of measure algebras 
from \citet{fremlin2000measure}.
In particular, the algebraic perspective reveals
that the finite-footprint property from \cref{sec:pnom}
is a means of producing \emph{relatively-atomless} subalgebras:

\begin{definition}[\fremlinft{331A}{33}{1}]
  Let $\frakA$ be a complete Boolean algebra
  and $\frakB\subseteq\frakA$ a subalgebra.
  An element $a\in\frakA$ is a \emph{$\frakB$-relative atom}
  of $\frakA$ if the principal ideal generated by $a$ in $\frakA$
  is $\{a\cap b \mid b\in\frakB\}$. The algebra $\frakA$ is \emph{$\frakB$-relatively atomless}
  if it has no $\frakB$-relative atoms.
\end{definition}

\begin{theorem} \label{thm:relatively-atomless-enough-room}
  Let $\frakA$ be the measurable algebra
  of $\hcube$.
  For any $(\calG,\mu)$ with finite footprint,
  $\frakA$ is $\calG/\mu$-relatively atomless.
\end{theorem}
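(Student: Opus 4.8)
The plan is to leverage the finite footprint of $(\calG,\mu)$ so that the coordinates of $\hcube$ beyond the first $n$ act as an independent atomless factor, which supplies enough ``fresh randomness'' to split any nonzero element in a way $\calG/\mu$ cannot control. The first step is a monotonicity observation: if $\frakB\subseteq\frakB'$ are subalgebras of $\frakA$, then any $\frakB$-relative atom $a$ is also a $\frakB'$-relative atom, since $\{a\cap b\mid b\in\frakB\}\subseteq\{a\cap b'\mid b'\in\frakB'\}\subseteq\frakA_a$ forces all three to coincide. Hence it suffices to prove the stronger statement that $\frakA$ is $\calC_n/\mu$-relatively atomless, where $\calC_n$ is the sub-$\sigma$-algebra of $\calF$ consisting of events depending only on $\hcubeproj n$. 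Because $(\calG,\mu)$ has finite footprint, it arises by enhanced pullback along a map factoring through $\hcubeproj n$, so every generator of $\calG$ lies in $\calC_n$ and thus $\calG/\mu\subseteq\calC_n/\mu$.

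Next I would make the relative-atom condition concrete. Writing $\hcube\cong[0,1]^n\times Z$ with $Z$ the tail space on coordinates $n+1,n+2,\dots$ (so $Z\cong[0,1]^\omega$), the algebra $\frakA$ is the measure algebra of the independent product $\mu_n\times\nu$, the subalgebra $\calC_n/\mu$ is exactly the first-factor copy of the algebra of $\mu_n$, and crucially $\nu$ is atomless. Lifting algebra elements to measurable representatives, an element $a=[A]$ is a $\calC_n/\mu$-relative atom precisely when every measurable $C\subseteq A$ agrees mod null with a base-restriction $A\cap(B\times Z)$; the fibers of such a base-restriction are $A_x$ when $x\in B$ and empty otherwise, never a proper nonempty part of $A_x$. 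So to certify that a nonzero $a$ is not a relative atom it is enough to find a measurable $D\subseteq Z$ with $0<\nu(A_x\cap D)<\nu(A_x)$ on a set of $x$ of positive $\mu_n$-measure and set $C=A\cap([0,1]^n\times D)$. (In the language of measure algebras, this is the statement that a free product with an atomless factor is relatively atomless over the complementary factor.)

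The main obstacle is that $A$ may genuinely depend on all coordinates of $Z$, so there is no coordinate independent of $A$ to split along, and a single fixed $D$ may fail to cut any fiber. I would resolve this with a countable-generator argument. Fix a countable $\pi$-system $\{D_k\}$ of finite-dimensional coordinate boxes generating the Borel sets of $Z$ mod null. If no $D_k$ split a positive-$\mu_n$-measure family of fibers, then after intersecting the countably many conull sets, for almost every $x$ with $\nu(A_x)>0$ the normalized fiber measure $\nu_{A_x}(\cdot)=\nu(\cdot\cap A_x)/\nu(A_x)$ takes values in $\{0,1\}$ on every $D_k$; by Dynkin's $\pi$-$\lambda$ theorem it is then $\{0,1\}$-valued on all Borel sets of $Z$. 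But $\nu_{A_x}$ is absolutely continuous with respect to the atomless measure $\nu$, hence itself atomless, so splitting $A_x$ into two pieces of positive $\nu$-measure yields a Borel set of $\nu_{A_x}$-measure strictly between $0$ and $1$ --- a contradiction. Therefore some $D_k$ splits a positive-measure family of fibers, the required $C$ exists, and no nonzero $a$ is a $\calC_n/\mu$-relative atom; by the monotonicity observation, $\frakA$ is $\calG/\mu$-relatively atomless.
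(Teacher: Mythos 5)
Your proof is correct, but it takes a genuinely different route from the paper's. Both arguments start with the same reduction: finite footprint puts $\calG/\mu$ inside the subalgebra generated by the first $n$ coordinates, and relative atomlessness over a larger subalgebra implies it over a smaller one (your sandwich observation is sound; the paper's analogue is \cref{app:lem:enough-room-mono}, phrased for ``enough room'' rather than for relative atoms). The divergence is in how the coordinate subalgebra itself is handled. The paper never touches points or fibers: it introduces subalgebras with \emph{enough room} (\cref{app:def:enough-room}), notes that the first-$n$-coordinates subalgebra has enough room because independent standard probability algebras can be allocated in dimensions $n+1$ and above (\cref{app:lem:proj-relatomless}), and then invokes the characterization \cref{app:thm:enough-room-char}, whose relevant direction (\cref{app:lem:enough-room-rel-atomless}) passes through the localizable measure algebra free product $\frakC\lmafp[0,1]$ (\fremlinf{325L}{32}{35}) and lifts relative atomlessness along the chain $\frakC\subseteq\angled{\frakC,\frakD}\subseteq[0,1]$ via \fremlinf{331Y(a)}{33}{9}. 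You instead prove the key fact concretely: representing a nonzero class by a measurable $A\subseteq[0,1]^n\times Z$ with $Z\cong[0,1]^\omega$ the tail space, you use Fubini fibers, a countable $\pi$-system of tail boxes, the Dynkin $\pi$--$\lambda$ theorem, and atomlessness of the tail measure to find a tail set $D$ splitting a positive-measure family of fibers, so that $A\cap([0,1]^n\times D)$ witnesses that $[A]$ is not a relative atom; this countable-generator step is exactly what replaces the paper's appeal to the free-product decomposition and 331Y(a), since it handles sets $A$ depending on \emph{all} tail coordinates. Your route is more elementary and self-contained --- standard graduate measure theory, no closed-subalgebra or free-product machinery --- at the price of completion/measurability bookkeeping, which you do handle correctly (the fiber maps are measurable by Fubini, and the $\{0,1\}$-valued-measure contradiction is valid because a fiber measure absolutely continuous with respect to an atomless measure is atomless). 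The paper's route makes the theorem a one-line corollary of \cref{app:thm:enough-room-char}, machinery it must develop anyway because the same characterization and free-product presentation drive the Homogeneity and Correspondence lemmas (\cref{app:lem:mble-homogeneity,app:lem:mble-distinguishability}); it also stays at the level of abstract measure algebras, which is the isomorphism-invariant form used downstream.
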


Relative-atomlessness is
key to obtaining continuous analogs of
\Homogeneity{} and \Correspondence{},
which hold 
specifically for the
case where subalgebras are relatively atomless:

\begin{lemma}[Homogeneity] \label{lem:mble-homogeneity}
  For $\frakA$ a standard measurable algebra
  and subalgebras $\frakB,\frakC\subseteq\frakA$
  that render it relatively-atomless, and 
  a $\StdMbleAlg$-morphism
   $f : \frakB\hookrightarrow\frakC$,
  there exists a complete Boolean algebra automorphism $\pi : \frakA\to\frakA$
  with $\pi(b) = f(b)$ for all $b\in\frakB$.
\end{lemma}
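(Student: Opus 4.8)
The plan is to reduce the statement to the relative Maharam classification of measure algebras~\citep{fremlin2000measure}. First I would replace $f$ by its corestriction onto its image: write $\frakB' := f(\frakB)\subseteq\frakC$, so that $f$ becomes an isomorphism $\frakB\xrightarrow{\cong}\frakB'$ of complete Boolean algebras. Since $f$ is an injective \emph{complete} (order-continuous) homomorphism and $\frakB$ is Dedekind complete, $\frakB'$ is an order-closed subalgebra of $\frakA$. A quick computation shows that if $a$ is a $\frakB'$-relative atom, i.e.\ the principal ideal $\{x\in\frakA\mid x\le a\}$ equals $\{a\cap b'\mid b'\in\frakB'\}$, then, because $\frakB'\subseteq\frakC$, the chain $\{a\cap b'\mid b'\in\frakB'\}\subseteq\{a\cap c\mid c\in\frakC\}\subseteq\{x\mid x\le a\}$ collapses, so $a$ is also a $\frakC$-relative atom; hence $\frakA$ being $\frakC$-relatively atomless makes it $\frakB'$-relatively atomless as well. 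Thus it suffices to extend the isomorphism $f\colon\frakB\to\frakB'$ between two closed subalgebras, over \emph{each} of which $\frakA$ is relatively atomless, to an automorphism of $\frakA$.

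Next I would pin down the Maharam type of $\frakA$. Any atom $a\in\frakA$ satisfies $\{a\cap b\mid b\in\frakB\}=\{\bot,a\}=\{x\mid x\le a\}$ and so is a $\frakB$-relative atom; relative atomlessness therefore forces $\frakA$ to be atomless. As a standard measurable algebra that is atomless, $\frakA$ is isomorphic to the Lebesgue measure algebra $\mathfrak L$, i.e.\ it is Maharam-homogeneous of countable type $\tau(\frakA)=\aleph_0$ and in particular countably generated. Fixing an auxiliary standard measure $\overline\mu$ on $\frakA$ turns it into a standard probability algebra; the measure plays no role in the final automorphism and is only a vehicle for invoking Maharam theory. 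For a countably generated algebra the relative Maharam type over any closed subalgebra is at most $\tau(\frakA)=\aleph_0$, while relative atomlessness means every fibre is atomless and hence of relative type $\geq\aleph_0$; so $\frakA$ is relatively Maharam-homogeneous of type $\aleph_0$ over $\frakB$ and, by the previous paragraph, over $\frakB'$ as well.

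The extension now comes from the relative Maharam-type theorem~\fremlin{\S333}: a probability algebra that is relatively Maharam-homogeneous of type $\aleph_0$ over a closed subalgebra $\frakB$ admits a Boolean isomorphism $\iota_\frakB\colon\frakA\xrightarrow{\cong}\frakB\otimes\mathfrak L$ acting as $b\mapsto b\otimes 1$ on $\frakB$, and likewise an isomorphism $\iota_{\frakB'}\colon\frakA\xrightarrow{\cong}\frakB'\otimes\mathfrak L$ over $\frakB'$. Since the localizable free product $\otimes$ is functorial in Boolean isomorphisms, $f\otimes\mathrm{id}_{\mathfrak L}\colon\frakB\otimes\mathfrak L\to\frakB'\otimes\mathfrak L$ is a Boolean isomorphism, and I would set
\[
  \pi \;:=\; \iota_{\frakB'}^{-1}\circ\bigl(f\otimes\mathrm{id}_{\mathfrak L}\bigr)\circ\iota_\frakB .
\]
This $\pi$ is a complete Boolean automorphism of $\frakA$, and tracing $b\in\frakB$ through the three maps gives $b\mapsto b\otimes 1\mapsto f(b)\otimes 1\mapsto f(b)$, so $\pi|_\frakB=f$, as required.

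I expect the main obstacle to be the correct invocation of the relative Maharam theorem: verifying that relative atomlessness, together with countable generation of $\frakA$, really does yield relative Maharam-\emph{homogeneity} of type exactly $\aleph_0$, and that the structural isomorphism can be chosen to fix the distinguished subalgebra pointwise as $b\mapsto b\otimes 1$. A secondary subtlety worth flagging is that $f$ need not be measure-preserving for $\overline\mu$, so the intermediate map $f\otimes\mathrm{id}_{\mathfrak L}$ is only a Boolean isomorphism and not measure-preserving; this is harmless precisely because the lemma asks only for a Boolean automorphism, but it is the reason the argument must be run at the level of Boolean algebras rather than measure algebras.
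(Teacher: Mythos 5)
Your proof is correct, and it runs on the same engine as the paper's: the relative Maharam machinery of Fremlin's \S 333, in the form ``a standard measurable algebra of countable Maharam type that is relatively atomless over a closed subalgebra $\frakB$ is isomorphic to $\frakB\lmafp[0,1]$ via a Boolean isomorphism acting as $b\mapsto b\lmafp\top$ on $\frakB$'' (this is \cref{app:lem:rel-atomless-projection-forgot}, extracted from Fremlin 333K). The difference is in the final assembly. The paper applies this coordinatization once, to the domain $\frakB$, makes $f$ measure-preserving by pulling the ambient measure back along it, and then invokes Fremlin's 333C(b) to obtain a single measure-algebra isomorphism $\frakB\lmafp[0,1]\to\frakA$ restricting to $f$ on the embedded copy of $\frakB$; composing the two maps gives $\pi$. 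You instead coordinatize twice---over $\frakB$ and over the image $\frakB'=f(\frakB)$---and glue with $f\otimes\mathrm{id}_{\mathfrak L}$. This avoids 333C(b) altogether, at the price of two extra verifications, both of which you supply or correctly flag: (i) relative atomlessness over $\frakC$ descends to the smaller subalgebra $\frakB'$ (your relative-atom computation is right, and note the paper needs this fact implicitly as well, since its application of 333C(b) matches the subalgebras $\{b\lmafp\top \mid b\in\frakB\}$ and $f(\frakB)$, not all of $\frakC$); and (ii) $f\otimes\mathrm{id}_{\mathfrak L}$ is meaningful even though $f$ is not measure-preserving. For (ii) the clean justification is exactly the maneuver the paper performs when it ``forgets all measures'': transport a strictly positive measure on $\frakB'$ backwards along $f$ so that $f$ becomes a measure-algebra isomorphism, apply functoriality of $\lmafp$ for measure-preserving isomorphisms, and use that the underlying Boolean algebra of the free product does not depend on the choice of strictly positive measures---the algebraic counterpart of \cref{app:lem:stdprob-prod-forgets-uniquely}. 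With (i) and (ii) in place, your $\pi=\iota_{\frakB'}^{-1}\circ(f\otimes\mathrm{id}_{\mathfrak L})\circ\iota_{\frakB}$ is a complete Boolean automorphism restricting to $f$ on $\frakB$, as you computed.
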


\begin{lemma}[Correspondence] \label{lem:mble-distinguishability}
  Let $\frakA$ be a standard measurable algebra.
  For any subalgebra $\frakC\subseteq\frakA$, 
  let $\Fix\frakC$ be the group
  of $\frakA$-automorphisms fixing every $c$ in $\frakC$.
  If $\frakA$ is $\frakC$-relatively atomless
  then $\Fix\frakC\subseteq\Fix\frakD$ iff
  $\frakD\subseteq\frakC$.
\end{lemma}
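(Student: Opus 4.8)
The plan is to prove the two implications separately: the forward implication is immediate, and the converse reduces to a single structural fact about $\Fix\frakC$ that I establish using Homogeneity. For the direction $\frakD\subseteq\frakC\Rightarrow\Fix\frakC\subseteq\Fix\frakD$, observe that if $\pi$ fixes every element of $\frakC$ and $\frakD\subseteq\frakC$, then $\pi$ fixes every element of $\frakD$, so $\Fix\frakC\subseteq\Fix\frakD$. For the converse I argue by contraposition: assuming $\frakD\not\subseteq\frakC$, I produce an automorphism $\pi\in\Fix\frakC$ with $\pi\notin\Fix\frakD$. Fixing any $d\in\frakD\setminus\frakC$, it suffices to find $\pi\in\Fix\frakC$ with $\pi(d)\neq d$; in other words, the whole lemma reduces to showing that the fixed algebra of $\Fix\frakC$ is exactly $\frakC$, the measure-algebra analog of the discrete swapping argument in \cref{def:DCorrespondence}. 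Throughout I fix a measure $\overline\mu$ making $(\frakA,\overline\mu)$ a standard probability algebra (one exists since $\frakA$ is standard measurable); $\frakA$-automorphisms need not preserve $\overline\mu$, but I use $\overline\mu$ only to locate and compare elements.

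The heart of the argument is to build a second element $d'\neq d$ with the same relationship to $\frakC$ as $d$, meaning $\overline\mu(d'\wedge c)=\overline\mu(d\wedge c)$ for every $c\in\frakC$. Since $d\notin\frakC$, its conditional measure over $\frakC$ is not $\{0,1\}$-valued, so there is a $\frakC$-element $c_*$ of positive measure on which both $d$ and $\neg d$ carry positive conditional measure. Using that $\frakA$ is $\frakC$-relatively atomless --- a relative form of the intermediate-value property for atomless measures --- I carve nonzero $u\le d$ and $v\le\neg d$ below $c_*$ with equal conditional measure over $\frakC$, and set $d'=(d\wedge\neg u)\vee v$. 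Then $d'\neq d$ (as $u\le d$ but $u\wedge d'=0$), while $\overline\mu(d'\wedge c)=\overline\mu(d\wedge c)-\overline\mu(u\wedge c)+\overline\mu(v\wedge c)=\overline\mu(d\wedge c)$ for all $c\in\frakC$. Because $\overline\mu$ is faithful, this measure equality forces $c\wedge d=0\iff c\wedge d'=0$ and $c\le d\iff c\le d'$ for every $c\in\frakC$, so the assignment fixing $\frakC$ and sending $d\mapsto d'$ extends to an injective complete Boolean homomorphism $f:\frakC\vee\{d\}\hookrightarrow\frakA$ with image $\frakC\vee\{d'\}$, i.e.\ a $\StdMbleAlg$-morphism.

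Finally I extend $f$ to all of $\frakA$ via Homogeneity (\cref{lem:mble-homogeneity}). This requires $\frakA$ to be relatively atomless over both the domain $\frakC\vee\{d\}$ and the codomain $\frakC\vee\{d'\}$, so I first check that $\frakC$-relative atomlessness already implies $(\frakC\vee\{d\})$-relative atomlessness by a short Boolean computation: if $a$ were a nonzero $(\frakC\vee\{d\})$-relative atom, then both $a\wedge d$ and $a\wedge\neg d$ would be $\frakC$-relative atoms, hence both $0$, forcing $a=0$; the same argument applies to $d'$. Homogeneity then yields a complete Boolean automorphism $\pi$ of $\frakA$ with $\pi|_\frakC=\mathrm{id}$ and $\pi(d)=d'\neq d$, so $\pi\in\Fix\frakC\setminus\Fix\frakD$, completing the contrapositive.

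I expect the main obstacle to be the measure-theoretic carving of the matching subpieces $u$ and $v$: making precise the relative intermediate-value property in a $\frakC$-relatively atomless algebra --- realizing a prescribed conditional measure below a given element --- for which I would appeal to the measure-algebra theory of \citet{fremlin2000measure}. The remaining steps, namely the Boolean verification that $f$ is a well-defined $\StdMbleAlg$-morphism and that relative atomlessness passes to $\frakC\vee\{d\}$, are routine.
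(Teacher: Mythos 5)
Your proof is correct, but it takes a genuinely different route from the paper's. The paper's argument is short and citation-heavy: after passing to measure-preserving automorphisms, it invokes the characterization of enough-room subalgebras together with Fremlin's fixed-point-subalgebra theorem (333R, (ii)$\Rightarrow$(iii)) to conclude that $\frakC$ is exactly the fixed-point algebra of a single automorphism $\pi$, so any $d\in\frakD\setminus\frakC$ is moved by $\pi$. You instead build the distinguishing automorphism by hand, as a continuous analog of the swapping argument in \cref{def:DCorrespondence}: produce $d'\ne d$ with the same conditional measure over $\frakC$, check that $c\mapsto c$, $d\mapsto d'$ gives a well-defined injective complete Boolean homomorphism $\frakC\vee\{d\}\hookrightarrow\frakC\vee\{d'\}$, verify that relative atomlessness over $\frakC$ passes to these one-step extensions, and extend to an automorphism of $\frakA$ via the paper's own Homogeneity lemma (\cref{lem:mble-homogeneity}). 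The one ingredient you defer --- realizing a prescribed $\frakC$-conditional measure below a given element in a $\frakC$-relatively atomless algebra --- is exactly Maharam's lemma, \fremlinf{331B}{33}{1}, so the obstacle you flag is a direct citation rather than a gap; your restriction to a $\frakC$-element $c_*$ on which both $d$ and $\neg d$ have conditional measure bounded away from $0$ and $1$ then yields the matching pieces $u,v$ by taking both conditional measures equal to a common $\frakC$-measurable lower bound. Two small points deserve a line in a full write-up: (i) to apply \cref{lem:mble-homogeneity} you need $\frakC\vee\{d\}$ and $\frakC\vee\{d'\}$ to be \emph{standard} measurable algebras (objects of $\StdMbleAlg$), which holds because a closed subalgebra of a standard algebra has at most countably many atoms and countable Maharam type; (ii) your well-definedness claim is right, since equality of the two elements $(c_1\wedge d)\vee(c_2\wedge\neg d)=(c_3\wedge d)\vee(c_4\wedge\neg d)$ reduces to $(c_1\triangle c_3)\wedge d=0$ and $(c_2\triangle c_4)\wedge\neg d=0$, and faithfulness of $\overline\mu$ transfers exactly these conditions from $d$ to $d'$. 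What the two approaches buy: the paper's is shorter but rests on a rather strong external theorem; yours is longer but reuses the paper's existing machinery, makes the parallel with the discrete case explicit, and only needs the much earlier lemma 331B from Fremlin's development.
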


These yield a continuous analog of \cref{thm:discrete}:
\begin{theorem} \label{thm:cont}
  $\sheafcat\simeq\cgsets$.
\end{theorem}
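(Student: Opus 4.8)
The plan is to derive \cref{thm:cont} exactly as \cref{thm:discrete} was derived: by specializing the abstract equivalence \annoyingappref{app:thm:nominal-situation}, now instantiated with base category $\StdMble$, universal object the Hilbert cube $\hcube$, and topological group $\auti$. That abstract theorem takes a site with the \nref{prop:csur-ore}{right Ore property} --- supplied here by \cref{lem:mble-right-ore} --- a universal object into which every object of the site embeds, and a group of automorphisms of that object topologized by a family of ``admissible'' subobjects; it then produces an equivalence between the category of atomic sheaves on the site and the category of continuous $G$-sets (those with open stabilizers), provided two conditions hold: a \Homogeneity{} condition (every morphism between embedded objects lifts to a group element) and a \Correspondence{} condition (the assignment of objects to stabilizer-style subgroups reflects the object ordering). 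Concretely, the functor $\sheafcat\to\cgsets$ sends a \goodsheaf{} $F$ to the $\auti$-set $F(\hcube)$, with the action of $\pi\in\auti$ given by functoriality of $F$ along $\pi$ viewed as a $\StdMble$-endomorphism of $\hcube$; the atomic-sheaf condition on $F$ translates into the open-stabilizer condition on $F(\hcube)$.

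To verify the two preconditions I would first set up the encoding of $\StdMble$-objects into $\hcube$. Passing through the duality \cref{lem:dual2}, an object of $\StdMble$ is a standard measurable algebra, and the admissible subobjects are exactly the finite-footprint standardizable probability spaces of \cref{def:ff-pspc}, which under the duality correspond to finite-footprint subalgebras of the measurable algebra $\frakA$ of $\hcube$. The essential input is \cref{thm:relatively-atomless-enough-room}: every such finite-footprint subalgebra renders $\frakA$ relatively atomless. This is precisely the hypothesis under which the abstract measure-algebra lemmas \cref{lem:mble-homogeneity,lem:mble-distinguishability} apply, so those lemmas become available for all the subalgebras that arise as encodings.

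With this in place, \Homogeneity{} follows from \cref{lem:mble-homogeneity}: an injective homomorphism between two encoded subalgebras (each of finite footprint, hence each rendering $\frakA$ relatively atomless by \cref{thm:relatively-atomless-enough-room}) extends to a complete Boolean algebra automorphism of $\frakA$, which under \cref{lem:dual2} is a $\StdMble$-automorphism of $\hcube$. \Correspondence{} follows from \cref{lem:mble-distinguishability}: the $\autiFix$ assignment of \cref{def:auti-open} reflects the subalgebra order, which is exactly the order-reflection the abstract theorem consumes, and is also what makes the topology of \cref{def:auti-open} agree with the atomic topology on the site. Feeding these verified conditions into \annoyingappref{app:thm:nominal-situation} yields $\sheafcat\simeq\cgsets$.

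The main obstacle is the bridge between the raw algebraic statements and the finitary data that $\auti$ demands. \cref{lem:mble-homogeneity} produces only \emph{some} automorphism of $\frakA$, whereas membership in $\auti$ requires \emph{finite width} (\cref{def:auti}); the crux is to show the lifting can always be taken of the form $\pi'\times 1_{\hcube}$ with $\pi' : [0,1]^n\to[0,1]^n$, which uses both that the two subalgebras have finite footprint and that $\hcube$ factors as a product of its first $n$ coordinates with a residual copy of itself, so that one may act nontrivially on finitely many coordinates and by the identity elsewhere. A secondary but related subtlety is confirming that ``finite footprint'' is exactly the generating family for the open-stabilizer topology, so that the continuity condition on $\cgsets$-objects coincides with the atomic-sheaf condition on $\sheafcat$-objects; here one leans on the dualities \cref{lem:dual1,lem:dual2} to transport the measure-algebra machinery to the enhanced-measurable setting without introducing measure-specific artifacts.
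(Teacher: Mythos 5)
Your high-level plan coincides with the paper's: \cref{thm:cont} is indeed proved by instantiating the abstract equivalence \annoyingappref{app:thm:nominal-situation}, with Homogeneity obtained from \cref{thm:relatively-atomless-enough-room} together with \cref{lem:mble-homogeneity}, Correspondence from \cref{lem:mble-distinguishability}, and the finite-width obstruction resolved exactly as you describe, by factoring both footprints through the first $n$ coordinates and acting as the identity elsewhere (this is the Homogeneity clause of \appref{app:lem:mble-nominal-situation}). However, your concrete description of the equivalence contains a genuine error: the $\auti$-set corresponding to a \goodsheaf{} $F$ is \emph{not} $F(\hcube)$. It is the colimit of the sets $F(\Omega)$ over finite-footprint maps $p : \hcube\to\Omega$ --- equivalently, the subset of $F(\hcube)$ of elements that factor through some finite-footprint stage; this is what \annoyingappref{app:thm:nominal-situation} actually produces (see \appref{app:lem:rv-pnom} for the computation in the case of random variables). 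The difference is not cosmetic. Take $F = \RVshf_A$: the set $\RVshf_A(\hcube)$ contains random variables depending on \emph{all} coordinates of $\hcube$, and such an element has a non-open stabilizer, because for every finite-footprint $(\calF,\mu)$ the subgroup $\autiFix\calF$ contains finite-width automorphisms acting only on coordinates outside the footprint, and these move the element. So $\RVshf_A(\hcube)$ is not an object of $\cgsets$ at all, and the functor you describe does not land in the target category; this is precisely why $\gRV_A$ in \cref{def:gRV} is restricted to finite-footprint random variables, and why \cref{thm:krm-corresp} pairs $\RVshf_A$ with $\gRV_A$ rather than with $\RVshf_A(\hcube)$. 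For the same reason, your claim that ``the atomic-sheaf condition on $F$ translates into the open-stabilizer condition on $F(\hcube)$'' is false: open stabilizers come from the restriction to finite-footprint elements, not from sheafhood.

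Two smaller gaps in the reduction itself. First, a nominal situation requires more than the right Ore property, Homogeneity, and Correspondence: one must also verify Closure (for each chosen embedding $i_c$ and each $\pi\in\auti$, the composite $\pi i_c$ is again a chosen embedding up to isomorphism in the base) and Cofinality (any finite family of chosen embeddings is dominated by a single one); both are checked in \appref{app:lem:mble-nominal-situation}, the latter again by passing to a common finite set of coordinates. Second, the instantiation cannot literally take the base category to be $\StdMble$, since the abstract theorem needs every object equipped with a chosen embedding into the universal object; the paper first replaces $\StdMble$ by an equivalent category whose objects are finite-footprint maps $\hcube\to\Omega$ (\appref{app:lem:raii-eq}) and instantiates the abstract theorem there. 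Your ``encoding of $\StdMble$-objects into $\hcube$'' gestures at this, but it must be set up as an equivalence of sites before the abstract machinery applies.
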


Finally, a careful calculation across this equivalence
shows that the resource monoids in 
\cref{def:sheafresourcemonoid,thm:cgsets-monoid}
indeed correspond, yielding
an analog of \cref{fact:folklore} for continuous probability:

\begin{theorem} \label{thm:krm-corresp}
Across $\sheafcat\simeq\cgsets$,
the sheaf $\pspcs$ corresponds to $\gpspcs$,
the sheaf $\RVshf_A$ corresponds to $\gRV_A$,
and the resource monoid
$(\shord,\pspcs\otimes\pspcs,\shinc,\shjoin,\shemp)$
in $\sheafcat$
corresponds to
$(\gord,\gjoindom,\ginc,\gjoin,\gemp)$
in $\cgsets$.
\end{theorem}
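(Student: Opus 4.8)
The plan is to make the equivalence functor $\Phi : \sheafcat\to\cgsets$ of \cref{thm:cont} explicit and then evaluate it componentwise, exactly as \cref{prop:sch-nom-krm} and \cref{thm:discrete-krm} were read off from the equivalences $\Sch\simeq\Nom$ and $\dsheafcat\simeq\dgsets$. Following the recipe behind \cref{thm:cont} (and mirroring the functor $\Sch\to\Nom$, which evaluates a sheaf along the embeddings of finite sets into $\N$), $\Phi$ sends an atomic sheaf $F$ to the filtered colimit of its values along the finite-footprint encodings into $\hcube$, with $\auti$ acting through the functorial action of $F$ on finite-width automorphisms. Concretely, $\Phi(F)$ is the set of finite-footprint elements of $F(\hcube)$ --- those in the image of $F(\hcubeproj n)$ for some $n$ --- under the action $x\cdot\pi = F(\pi)(x)$. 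The two inputs that make this description correct are \cref{lem:mble-homogeneity} and \cref{lem:mble-distinguishability}: homogeneity collapses the colimit onto the finite-footprint part of $F(\hcube)$, and correspondence matches the finite-footprint condition with the \nref{def:auti-open}{openness} of stabilizers, so that $\Phi(F)$ is genuinely \aagoodname.

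First I would compute $\Phi$ on the basic objects. By the concrete presentation $\pspcs\cong\epspcs$ of \cref{def:mble-pspcs-computed}, the value $\epspcs(\hcube)$ is the set of standardizable probability spaces on $\hcube$, whose finite-footprint members are by \cref{def:ff-pspc} precisely the underlying set of $\gpspcs$ (\cref{def:gpspcs}); moreover $\Phi$ transports the sheaf action $\epspcs(\pi) = \epbmeas{\pi}{(-)}$ to exactly the group action defining $\gpspcs$, so $\Phi(\pspcs)\cong\gpspcs$. The identical computation with \cref{lem:rv-mble-sheaf} gives $\Phi(\RVshf_A)\cong\gRV_A$. Being an equivalence, $\Phi$ preserves finite limits and colimits, so $\Phi(\pspcs\times\pspcs)\cong\gpspcs\times\gpspcs$, $\Phi(1)\cong 1$, and $\Phi(\CProp)\cong\GProp$ (each being the two-valued object $1+1$).

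Next I would transport the remaining resource-monoid data. By \cref{def:shjoindom} the domain $\pspcs\otimes\pspcs$ of the combining map is the sheaf $\shjoindom$ of standardizable, independently combinable pairs; evaluating at $\hcube$ and restricting to finite footprint yields exactly the independently-combinable pairs of finite-footprint spaces, i.e. the underlying set of $\gjoindom$ (\cref{def:gjoindom}). Under the identification $\Phi(\pspcs\times\pspcs)\cong\gpspcs\times\gpspcs$, the monic $\shinc$ becomes the canonical inclusion $\ginc$, the ordering $\shord$ becomes the subspace test $\gord$, the unit $\shemp$ becomes the constant $\gemp$, and the combining map $\shjoin$ --- which sends an independently-combinable pair to the pullback of its product measure --- becomes the map $\gjoin$ sending a pair to its independent combination; each is verified by evaluating the structure map at $\hcube$ and matching it against the concrete description in \cref{thm:cgsets-monoid}. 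Since $\Phi$ preserves every commuting diagram, the resource-monoid axioms of \cref{def:sheafresourcemonoid} transport to those of \cref{thm:cgsets-monoid}; conceptually, every component is an internal-language construction over $\pspcs$, the Day tensor, products, $\CProp$ and $1$, all of which $\Phi$ preserves.

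The main obstacle is the monoidal step: showing that $\Phi$ carries the Day convolution to independent combination, i.e. that the externally-defined tensor $\StdMbleotimes$ on the site, lifted by Day convolution, computes on $\hcube$ as the independent-combination operation of \cref{def:Dicom} generalized to standardizable spaces. This is the continuous counterpart of matching coproduct-based Day convolution with disjoint union of stores in \cref{prop:sch-nom-krm}, but here it carries real measure-theoretic content: one must check that the product measure on an encoded tensor $\forgetmu(X'\otimes_{\StdProb}Y')$, pulled back into $\hcube$ along a finite-footprint embedding, is the independent combination of its two factors. This is where \cref{thm:relatively-atomless-enough-room} is indispensable --- relative atomlessness supplies the fresh dimensions of $\hcube$ needed to realize the abstract tensor concretely and to invoke \cref{lem:mble-homogeneity} to identify the two constructions up to a finite-width automorphism.
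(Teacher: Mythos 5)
Your skeleton is essentially the paper's. The appendix proof (\cref{app:thm:krm-corresp}) proceeds exactly as you outline: the equivalence of \cref{thm:cont} is made explicit --- an atomic sheaf $F$ goes to the union of the images of $F(p)$ over finite-footprint maps $p:\hcube\to\Omega$, i.e.\ the finite-footprint elements of $F(\hcube)$, with action $x\cdot\pi=F(\pi)(x)$ --- and then everything is transported componentwise: $\pspcs$ via its concrete presentation (\cref{app:lem:pspcs-corresp}, using \cref{def:mble-pspcs-computed}), $\RVshf_A$ via \cref{app:lem:rv-corresp}, and the monoid data by checking that each natural transformation, evaluated along the embeddings $F(p)$, agrees with the proposed equivariant map (the recipe stated before \cref{app:lem:pcmunit-corresp}, instantiated in \cref{app:lem:krmorder-corresp,app:lem:indepable-corresp,app:thm:pcmjoin-corresp}). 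You also correctly isolate the one step carrying real content: identifying the Day convolution $\pspcs\otimes\pspcs$ and the map $\shjoin$ with independently-combinable pairs and independent combination.

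Where you diverge --- and where your sketch goes wrong --- is in how that step is proved. You assert that \cref{thm:relatively-atomless-enough-room} and \cref{lem:mble-homogeneity} are indispensable here. In the paper they play no role in this step at all: relative atomlessness and homogeneity are consumed entirely by the equivalence $\sheafcat\simeq\cgsets$ itself (the Homogeneity and Correspondence conditions of the nominal situation behind \cref{app:thm:nominal}), which you have already granted yourself by invoking \cref{thm:cont}. The monoidal correspondence is instead a short direct computation, \cref{app:lem:factor-through-tensor-iff-indep-combable}: a pair $(\epbmeas{X}{(\calF,\mu)},\epbmeas{Y}{(\calG,\nu)})$ factors through a tensor product if and only if it is independently combinable, because given a witness $(\calH,\rho)$ of independent combinability the joint map $g(\omega)=(X(\omega),Y(\omega))$ into $A\otimes B$ is measure-preserving precisely by the independence equation $\rho(X^{-1}(F)\cap Y^{-1}(G))=\mu(F)\nu(G)$, while conversely pullback preserves independence; minimality of the pulled-back product measure then identifies $\shjoin$ with independent combination (\cref{app:lem:icomb-stablepb,app:thm:pcmjoin-corresp}). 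No atomlessness, no fresh dimensions. Your automorphism-transport route (embed a copy of $A\otimes B$ in unused dimensions, then move it onto the given pair by homogeneity and conclude by equivariance of the Day subobject) can likely be made to work, but it buys nothing: to apply \cref{lem:mble-homogeneity} you must first exhibit an isomorphism between the joint subalgebra generated by the given independently-combinable pair and $\measalg(A)\lmafp\measalg(B)$ restricting correctly to both factors, and producing that isomorphism is essentially the direct argument you are trying to route around (or an appeal to Fremlin 325L); you would additionally have to arrange the automorphism to be finite-width so that it lies in $\auti$. So the proposal stands, but its final paragraph mislocates the measure-theoretic content of the theorem.
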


\section{Discussion \& related work} \label{sec:discussion}

\goodparagraph{Atomic sheaves for probability} \citet{tao_2015}
defines probabilistic notions as those invariant under extension of
the sample space. Along these lines,
\citet{simpson2017probability} constructs a topos of
atomic sheaves on a category of probability spaces and measure-preserving
maps; in it, he presents a sheaf of random variables
and an extension of the Giry monad~\cite{giry2006categorical} to sheaves,
and shows how concepts such as independence
and expectation can be internalized~\cite{simpsonsheafslides,simpsonSyntheticProbabilityTheory}.

Simpson's topos is similar to our $\sheafcat$, but
our base category $\StdMble$ omits measures
and its maps are quotiented by almost-everywhere equality;
we instead model measures explicitly via the sheaf $\pspcs$.
As we have focused on separation logic,
we have not investigated whether the Giry monad
extends to $\sheafcat$ and
the probabilistic concepts that can be expressed internally;
this would make interesting future work.
\citet{simpson2017probability}
mentions a resemblance to nominal sets,
but does not extensively develop the notion to the best of our knowledge. 

Simpson's topos also serves as a model of Atomic Sheaf Logic~\cite{equivalenceandconditionalindependenceinatomicsheaflogic},
a recently-developed logic axiomatizing the interaction between conditional independence and a notion of
\emph{atomic equivalence}, which in the probabilistic setting denotes equidistribution of random variables,
with potential applications to developing proof-relevant probabilistic separation logics;
it would be interesting to explore whether our topos admits analogous constructions.

\goodparagraph{Categorical probability}
There are numerous categorical formulations of probability.
\citet{fritz2020synthetic} develops probability theory purely synthetically by
axiomatizing equational properties known to hold for Markov kernels. 
\citet{jackson2006sheaf}, building on \citet{breitsprecher2006concept}, 
gives an alternative sheaf-theoretic model of
probability by taking sheaves on a single measurable space rather than a
category of measurable spaces; we speculate that there could be a relationship
between this model and ours similar to the relationship between 
petit and gros topoi of sheaves on topological spaces~\citep{maclane2012sheaves}.

\goodparagraph{Quasi-Borel spaces}
The category QBS of quasi-Borel spaces~\citep{heunen2017convenient}
is a richly developed model of higher-order probability.
Whereas QBS has been used extensively to
model higher-order probabilistic
languages~\citep{vakar2019domain,scibior2017denotational,aguirre2021higher,sato2019formal},
our goal in constructing $\sheafcat$ and $\cgsets$ has been focused on 
refining models of probabilistic
separation logic.
Structurally, QBS and $\sheafcat$ are quite different:
QBS is a well-pointed quasi-topos while $\sheafcat$ is a non-well-pointed topos. However,
as remarked in \citet[Prop. 34]{heunen2017convenient}, 
QBS is related to particular presheaves
on the category of standard measurable spaces.
This suggests connections to
$\sheafcat$, since it is a category of sheaves on $\StdMble$,
but there is a gap between these two settings:
$\StdMble$-morphisms are quotiented by
almost-everywhere equality whereas maps of standard measurable spaces are not.
We leave elucidating the relationship
between our setting and QBS to future work.

\goodparagraph{General representation theorems}
The equivalence $\Sch\simeq\Nom$ 
can be obtained via
a \emph{Fra{\"i}ss{\'e} limit}~\cite[\S 7.1]{hodges1993model},
a recipe for making universal objects (e.g., $\N$)
capable of representing a class of models (e.g., finite sets).
More generally, there is a long line of results giving
groupoid-based representations of
categories~\cite{butz1996representing,blass1989freyd,dubuc2003localic,joyal1984extension,blass1983Boolean,makkai1982full},
with a history going back to Grothendieck~\cite{grothendieck2002rev,dubuc2000galois}.
\citet{caramello2016topological} is particularly relevant, as it
gives conditions closely resembling \cref{def:DHomogeneity,def:DCorrespondence}
under which categories of atomic sheaves
are equivalent to categories of continuous $\Aut(u)$-sets for suitable objects $u$.
We are currently investigating whether \cref{thm:discrete,thm:cont}
can be obtained via these general results,
with an eye towards generalizing beyond probability
to the quantum setting.

\goodparagraph{Probabilistic separation logic}
PSL~\citep{barthe2019probabilistic} is the
first separation logic whose separating conjunction models
independence, by splitting random substitutions; 
it has since been extended to support
conditional independence~\citep{bao2021bunched} 
and negative dependence~\citep{bao2022separation},
and to the quantum setting~\citep{zhou2021quantum}.
In contrast to PSL and its extensions,
Lilac~\citep{li2023lilac} has
an alternative model of separation,
via independent combination.
Lilac's model is complicated:
independent combination is an intricate measure-theoretic operation,
an intricate proof is required to show it forms a monoid,
and many side conditions on this monoid are
needed for soundness of Lilac's proof rules.

\cref{thm:krm-corresp} simplifies and clarifies Lilac's model.
It shows that independent combination arises naturally
from the well-known tensor product of standard probability spaces;
that independent combination
forms a monoid then follows from the fact that tensor product is monoidal.
The resource monoid in \cref{thm:krm-corresp}
replaces the side conditions on Lilac's monoid
with the single notion of \nref{def:standardizable}{standardizability} ---
a condition well-motivated by the intuition that probability spaces
should arise via pullback along $\StdMble$-maps.

\cref{thm:krm-corresp} also
improves on the model in \citet{li2023lilac}
in multiple ways.
Quotienting by negligiblity
yields a model invariant under
almost-everywhere equality, whereas
the model in \citet{li2023lilac} must manually track $\sigma$-ideals 
of negligible sets.
Interpreting propositions as equivariant maps
implies our model is
invariant under finite-width permutations of $\hcube$.
Finally, using the internal language of $\sheafcat$,
one can interpret quantification over propositions,
allowing to generalize
Lilac to a higher-order logic;
in the future, we would like to explore whether this higher-order generalization
can be used to specify properties of higher-order programs.

An aspect of Lilac not captured by our model
is its \emph{conditioning modality},
interpreted by disintegration~\citep{chang1997conditioning}.
This is difficult to capture in our model because $\StdMble$-objects
come with a fixed collection of negligible sets, whereas
disintegration can change which sets are negligible.

\goodparagraph{Probability and name generation}
Recent work has identified connections between probability theory
and name generation:
\citet{staton2016semantics} provides a semantics for a
probabilistic language
that treats random variables as dynamically-allocated read-only
names, and
\citet{sabok2021probabilistic} show that QBS can be used
to characterize observational equivalence of stateful imperative programs
by interpreting dynamic allocation
as probabilistic sampling.
The resemblance between our probabilistic \cref{thm:krm-corresp} and
the store-based \cref{fact:folklore} provides further evidence
along these lines.

\goodparagraph{Nominal sets}
Many constructions exist in
$\Nom$ beyond its ability to capture permutation-invariance:
\emph{freshness quantification}~\cite{menni2003quantifiers}
captures the informal convention of picking fresh names~\cite{barendregt1984lambda},
a \emph{name abstraction}~\cite[\S4]{pitts2013nominal} type former
gives a uniform treatment of binding,
and \emph{nominal restriction sets}~\cite[\S9.1]{pitts2013nominal}
models languages with locally generated names~\cite{odersky1994functional,pitts1993observable}.
It would be interesting to explore whether analogous constructions
can be carried out in $\cgsets$, to obtain analogous treatments
of the informal convention of picking fresh sample spaces~\fremlinf{\S27}{27}{1}
and to provide models of probabilistic languages with locally generated random variables.

\section{Conclusion}
We unify two different approaches to separating probabilistic state: the usual
product of probability spaces and independent combination.
To do this, we show that separation-as-product lives in a category
$\sheafcat$ of \goodsheaves{},
that separation-as-independent-combination lives in a category
$\cgsets$ of \agoodname{s},
and that these two notions of separation correspond
across an equivalence $\sheafcat\simeq\cgsets$.
This validates the use of independent combination
in probabilistic separation logic~\citep{li2023lilac},
clarifies independent combination's relationship with traditional formulations
of independence, and suggests improvements to existing models.
Finally, as a probabilistic analog of $\Nom$,
the category $\cgsets$ creates new probabilistic interpretations
of nominal concepts, which we hope will
create more opportunities for using nominal techniques in probability.

\section{Acknowledgements}

We thank Minsung Cho, 
Anthony D'Arienzo,
Ryan Doenges, 
John Gouwar,
and Max New
for their careful feedback
and suggestions.
This work was supported by the National Science Foundation under Grant No. \#CCF-2220408.
Sandia National Laboratories is a multimission laboratory
managed and operated by National Technology \& Engineering
Solutions of Sandia, LLC, a wholly owned subsidiary
of Honeywell International Inc., for the U.S. Department of
Energy's National Nuclear Security Administration under
contract DE-NA0003525. SAND No. SAND2024-03245O C.

\bibliographystyle{ACM-Reference-Format}
\bibliography{main}

\pagebreak
\appendix
\onecolumn

\ifdefined\noapp
\else
\renewcommand\StdProb{\nref{app:def:stdprob}{\texStdProb}}
\renewcommand\StdProbAlg{\nref{app:def:stdprobalg}{\mathbf{ProbAlg}_{\rm std}}}
\renewcommand\StdProbAlgop{\nref{app:def:stdprobalg}{\mathbf{ProbAlg}_{\rm std}\op}}
\renewcommand\StdMble{{\nref{app:def:mblecat}{\texStdMble}}}
\renewcommand\StdMbleop{{\nref{app:def:mblecat}{\texStdMble\op}}}
\renewcommand\StdMbleAlg{{\nref{app:def:stdmblealg}{\mathbf{MbleAlg}_{\rm std}}}}
\renewcommand\StdMbleAlgop{{\nref{app:def:stdmblealg}{\mathbf{MbleAlg}_{\rm std}\op}}}
\renewcommand\forgetmu{{\nref{app:def:forgetmu}{\texforgetmu}}}
\renewcommand\goodsheaf{{\nref{app:def:goodsheaf}{\texgoodsheaf}}}
\renewcommand\Goodsheaf{{\nref{app:def:goodsheaf}{\texGoodsheaf}}}
\renewcommand\goodsheaves{{\nref{app:def:goodsheaf}{\texgoodsheaves}}}
\renewcommand\Goodsheaves{{\nref{app:def:goodsheaf}{\texGoodsheaves}}}
\renewcommand\pspcs{{\nref{app:def:mble-pspcs}{\texpspcs}}}
\renewcommand\hcube{{\nref{app:def:hcube}{\texhcube}}}
\renewcommand\auti{{\nref{app:def:auti}{\texauti}}}
\renewcommand\agoodname{{\nref{app:def:pnom}{\texagoodname}}}
\renewcommand\Agoodname{{\nref{app:def:pnom}{\texAgoodname}}}
\renewcommand\gpspcs{{\nref{app:def:pspcs-pnom}{\texgpspcs}}}
\renewcommand\gRV{{\nref{app:def:rv-pnom}{\texgRV}}}

\section{Overview}

The goal of this appendix is to build up towards
\cref{app:sec:goodsheaves,app:sec:agoodnames},
which give sheaf- and permutation-based models of
probabilistic separation logic, prove their equivalence, and show the correspondence
between the tensor-product- and independent-combination-based
models of probabilistic separation.
To lead into this result, the appendix is structured as follows.
\begin{itemize}
    \item 
      \cref{app:sec:measure-theory}
      imports the measure theory
      and
      \cref{app:sec:sheaf-theory}
      the sheaf theory
      needed to present the probabilistic counterpart of
      \cref{sec:nom-situation}.
      Most of the results are for handling continuous probability,
      but some (in particular, \cref{app:lem:tensor-preserves-ussheaves}
      characterizing the Day convolution)
      are needed even in the discrete case.
      These sections are best not read linearly on first reading, but 
      used as reference when reading \cref{app:sec:goodsheaves,app:sec:agoodnames}.
    \item \cref{app:sec:goodsheaves} presents \goodsheavesnolink{}.
    \item \cref{app:sec:agoodnames} presents \agoodnamenolink{}
      and our probabilistic analogue of \cref{fact:folklore}.
\end{itemize}

On reading, we encourage you to follow concepts by clicking on 
hyperlinks that take you directly to definitions.

\section{General measure theory} \label{app:sec:measure-theory}
Throughout this section we rely heavily on \citet{fremlin2000measure}.

\subsection{Standard probability spaces and probability algebras}

\begin{definition}[standard Borel space] \label{app:def:sbs}
  A \emph{standard Borel space}
  is a measurable space $(X,\calF)$
  such that $X$ can be made into a complete separable metrizable
  space (in other words, a Polish space) whose Borel $\sigma$-algebra is $\calF$~\fremlinf{424A}{42}{30}.
\end{definition}

\begin{definition}[standard probability space] \label{app:def:sps}
  A \emph{standard probability space}
  is a probability space $(X,\calF,\mu)$
  where $(X,\calF)$ is the completion of a standard Borel space
  with respect to the negligibles of $\mu$.
  (For more on standard probability spaces,
  see \citet{rohlin1949fundamental},
  where they are called Lebesgue spaces;
  in particular, \citet[\S 2.7]{rohlin1949fundamental} justifies the definition given here.)
\end{definition}

\begin{definition}[the category $\StdProb$] \label{app:def:stdprob}
  Let $\StdProb$
  be the category whose objects are standard
  probability spaces
  and whose morphisms are measure-preserving maps
  quotiented by almost-sure equality:
  \begin{align*}
    &\StdProb((X, \calF,\mu), (Y,\calG,\nu))
    = \{ f : (X,\calF)\to(Y,\calG)\text{ measurable}
    \mid \mu(f^{-1}G) = \nu(G)\text{ for all $G$ in $\calG$}
    \}/\aseq\\
    &\text{where $f\aseq g$ iff $\{x\in X\mid f(x)\ne g(x)\}$ is $\mu$-negligible}
  \end{align*}
\end{definition}

\begin{definition}[measure algebra] \label{app:def:measalg}
  A \emph{measure algebra} 
  is a pair $(\frakA,\overline\mu)$
  where $\frakA$ is a complete boolean algebra
  (i.e., a boolean algebra with all small meets and joins)
  and $\overline\mu$ is a function $\frakA\to[0,1]$
  satisfying \begin{itemize}
    \item $\overline\mu(\bot) = 0$
    \item $\overline\mu(a) > 0$ for all $a\ne\bot$
    \item $\overline\mu \big(\bigvee_i a_i\big) = \sum_i \overline\mu(a_i)$
      for all sequences $(a_i)_{i\in \N}$
      with $a_i\wedge a_j = \bot$ for all $i\ne j$.
  \end{itemize}
  A \emph{probability algebra} 
  is a measure algebra $(\frakA,\overline\mu)$
  with $\overline\mu(\top) = 1$.
\end{definition}

See \fremlinft{321A}{32}{1} for more information on measure algebras.
Our definitions deviate slightly from the definitions given there:
our measure-algebras are
closed under all small meets and joins,
so correspond to the notion of Dedekind-complete
measure algebra~~\fremlinf{314A(a)}{31}{32}, whereas the definition in
\fremlinft{321A}{32}{1} only requires the measure algebras
to be Dedekind $\sigma$-complete (closed under countable meets and joins).
This change in terminology is motivated
by the fact that all probability spaces will give rise to
Dedekind-complete measure algebras:

\begin{construction} \label{app:cons:measalg}
  Every probability space $(X,\calF,\mu)$
  gives rise to a corresponding measure algebra $(\frakA,\overline\mu)$
  as follows. Let $\calN$ be the $\sigma$-ideal of $\mu$-negligible sets.
  Set $\frakA$ to the quotient boolean algebra $\calF/\calN$.
  Elements of $\frakA$ are equivalence classes $[F]$ for $F\in\calF$,
  modulo $[F] = [F']$ iff $F\triangle F'$ (the symmetric difference of $F$ and $F'$)
  is $\mu$-negligible.
  Define $\overline\mu$ by $\overline\mu[F] = \mu(F)$.
  Let $\measalg(X,\calF,\mu)$
  denote the measure algebra constructed from
  a probability space $(X,\calF,\mu)$ in this way.
\end{construction}

For more on this construction, see \fremlinft{321H}{32}{3}.

\begin{lemma} \label{app:lem:sigalg-mod-negl-cba}
  For every probability space $(X,\calF,\mu)$
  it holds that $\measalg(X,\calF,\mu)$
  is a Dedekind-complete measure algebra in the sense of
  Fremlin, so a measure algebra in our sense.
\end{lemma}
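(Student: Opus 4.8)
The plan is to lean on Fremlin for the measure-algebra structure and then do the one piece of genuine work, which is upgrading Dedekind $\sigma$-completeness to full Dedekind completeness. By \fremlinft{321H}{32}{3}, the construction $\measalg(X,\calF,\mu) = (\calF/\calN,\overline\mu)$ of \cref{app:cons:measalg} is a measure algebra in Fremlin's sense: $\calF/\calN$ is a Dedekind $\sigma$-complete Boolean algebra, $\overline\mu$ is well-defined on equivalence classes, vanishes only at $\bot$ (so it is strictly positive on nonzero elements), and is countably additive. This gives all three bullet conditions of \cref{app:def:measalg} immediately, together with $\overline\mu(\top) = \mu(X) = 1$, so $\measalg(X,\calF,\mu)$ is in fact a probability algebra. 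The only gap between Fremlin's notion and ours is that \cref{app:def:measalg} demands $\frakA := \calF/\calN$ be a \emph{complete} Boolean algebra, i.e.\ closed under arbitrary (not merely countable) meets and joins, so the remaining task is to produce suprema of arbitrary subsets $S \subseteq \frakA$.

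The idea is to use the finiteness of the probability measure to reduce an arbitrary supremum to a countable one. Given $S \subseteq \frakA$, let $\calU$ be the collection of all countable joins $\bigvee_n a_n$ with each $a_n \in S$; each such join exists by $\sigma$-completeness, and $\calU$ is closed under countable joins. Since $\overline\mu$ is bounded above by $\overline\mu(\top) = 1$, the quantity $s := \sup\{\overline\mu(u) \mid u \in \calU\}$ is finite; I would pick $u_n \in \calU$ with $\overline\mu(u_n) \to s$ and set $u := \bigvee_n u_n$, which lies in $\calU$ and satisfies $\overline\mu(u) = s$ (it dominates every $u_n$, yet is itself a member of $\calU$). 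The claim is that $u = \sup S$. For the upper-bound property, fix $a \in S$; then $a \vee u \in \calU$, so $\overline\mu(a \vee u) \le s = \overline\mu(u)$, and writing $a \vee u = u \vee (a \wedge \neg u)$ as a disjoint join and invoking additivity forces $\overline\mu(a \wedge \neg u) = 0$, hence $a \wedge \neg u = \bot$ by strict positivity, i.e.\ $a \le u$. Minimality is immediate: any upper bound $v$ of $S$ dominates every countable join of elements of $S$, so in particular $u \le v$. This exhibits $u$ as the least upper bound and establishes Dedekind completeness, completing the proof that $\measalg(X,\calF,\mu)$ is a measure algebra in the sense of \cref{app:def:measalg}. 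As an alternative to this hands-on argument, one may simply invoke the fact that a probability algebra is totally finite and hence semifinite, and that semifinite measure algebras are Dedekind complete (\fremlin{322B}).

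The main obstacle is precisely this passage from countable to arbitrary suprema; everything else is bookkeeping already carried out by Fremlin. The crux is that strict positivity of $\overline\mu$ together with its boundedness pins down the measure-maximizing countable join $u$ so tightly that no element of $S$ can stick out past it: any overflow $a \wedge \neg u$ would have to carry positive measure, contradicting the maximality of $s$. This is exactly where the hypothesis that $\mu$ is a (finite) probability measure, rather than an arbitrary measure, does the essential work.
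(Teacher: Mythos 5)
Your main argument is correct, but it takes a genuinely different route from the paper. The paper's proof is a three-step citation chain through Fremlin's classification of measure algebras: $\measalg(X,\calF,\mu)$ is a probability algebra (Fremlin 322A(a)), probability algebras are localizable (Fremlin 322C), and localizable algebras are Dedekind complete essentially by definition (Fremlin 322A(e)). You instead do the completeness upgrade by hand: starting from the Dedekind $\sigma$-complete measure algebra provided by Fremlin 321H, you build the supremum of an arbitrary $S\subseteq\frakA$ by maximizing $\overline\mu$ over the family $\calU$ of countable joins of elements of $S$, and then use finite additivity plus strict positivity of $\overline\mu$ (and its boundedness by $\overline\mu(\top)=1$) to show that no $a\in S$ can protrude past the maximizing join $u$, while minimality of $u$ among upper bounds is automatic. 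This exhaustion argument is sound --- it is essentially the standard proof underlying Fremlin's theorem that totally finite measure algebras are localizable --- and what it buys is self-containedness: the reader needs only 321H rather than the \S322 classification machinery. What the paper's version buys is brevity, and consistency with the appendix's general policy of delegating measure-theoretic infrastructure to Fremlin.

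One genuine error to fix, though it sits only in your optional closing remark: the claim that ``a probability algebra is totally finite and hence semifinite, and semifinite measure algebras are Dedekind complete'' is false. Semi-finiteness does \emph{not} imply Dedekind completeness; localizability is by definition semi-finiteness \emph{together with} Dedekind completeness, and there exist semi-finite measure algebras that are not localizable, so no citation can support that shortcut. The correct short chain is the one the paper uses: totally finite $\Rightarrow$ $\sigma$-finite $\Rightarrow$ localizable $\Rightarrow$ Dedekind complete. Since your hands-on argument already establishes completeness, you should simply delete the alternative, or replace it with the paper's chain through localizability.
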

\begin{proof}
  $\measalg(X,\calF,\calN)$
  is a probability algebra~\fremlinf{322A(a)}{32}{5},
  so localizable~\fremlinf{322C}{32}{6},
  so Dedekind-complete~\fremlinf{322A(e)}{32}{5}.
\end{proof}

\begin{definition}[homomorphism of measure algebras] \label{app:def:measalg-hom}
  Given two measure algebras $(\frakA,\overline\mu)$
  and $(\frakB,\overline\nu)$,
  a \emph{measure-algebra homomorphism from $(\frakA,\overline\mu)$
  to $(\frakB,\overline\nu)$}
  is a complete-boolean-algebra homomorphism
  $f : \frakA\to\frakB$
  which preserves measures: $\overline\nu(f(a)) = \overline\mu(a)$
  for all $a$ in $\frakA$.
\end{definition}

See \fremlinft{324I}{32}{25} for more information on measure-algebra
homomorphisms; just as our definition of measure algebra
corresponds to Fremlin's Dedekind-complete measure algebra,
our definition of measure-algebra homomorphism
corresponds to Fremlin's order-continuous
measure-algebra homomorphism.

\begin{definition}[simple product of measure algebras] \label{app:def:measalg-simple-product}
  Given two measure algebras $(\frakA,\overline\mu)$
  and $(\frakB,\overline\nu)$,
  their \emph{simple product}~(\fremlin{\href{https://www1.essex.ac.uk/maths/people/fremlin/chap32.pdf\#page=9}{322L}})
  is the measure algebra $(\frakA\times\frakB,\overline\lambda)$
  where $\frakA\times\frakB$ is the product boolean algebra
  with boolean-algebra operations computed pointwise
  and $\overline\lambda$ is the measure defined by
  $\overline\lambda(a, b) = \overline\mu(a) + \overline\nu(b)$
  for all $a\in\frakA$ and $b\in\frakB$.
  Since joins and meets are pointwise,
  the boolean algebra $\frakA\times\frakB$
  is Dedekind-complete if $\frakA,\frakB$
  are.
\end{definition}

The idea is that the simple product
is an algebraic counterpart to the coproduct
of measure spaces: $\measalg(X + Y) \cong \measalg X\times\measalg Y$.
It is known that every standard probability space
is isomorphic to the coproduct of countably many atoms
and possibly an interval equipped with the Lebesgue measure~\cite[p. 25, \S 2]{rohlin1949fundamental}.
This fact motivates the following definition:

\begin{definition}[standard probability algebra] \label{app:def:stdprobalg-def}
  Say a probability algebra
  $(\frakA,\overline\mu)$
  is \emph{standard}
  if it is composed of at most countably many atoms
  and an interval; that is, there exists some $p\in(0,1]$ 
  and a countable family of weights $(q_i)_{i\in I}$
  such that $(\frakA,\overline\mu)$
  decomposes as a simple product
  \[ (\frakA,\overline\mu) \cong ([0,p],\overline\lambda) \times \prod_{i\in I} \atom{q_i}
  \qquad\text{or}\qquad
  (\frakA,\overline\mu) \cong \prod_{i\in I} \atom{q_i}
  \]
  where
  $([0,p],\overline\lambda)$ is the measure algebra of Lebesgue measure on the interval $[0,p]$
  and $\atom {q_i}$ is
  the algebra with two elements $\bot,\top$ where $\top$ has measure $q_i$.
\end{definition}

\begin{definition}[the category $\StdProbAlg$] \label{app:def:stdprobalg}
  Let $\StdProbAlg$
  be the category whose objects are standard
  probability algebras
  and whose morphisms are \nref{app:def:measalg-hom}{measure-algebra homomorphisms}.
\end{definition}

The operation $\measalg$
extends to 
a functor $\StdProb\to\StdProbAlgop$:

\begin{construction} \label{app:cons:measalg-hom}
  Let $(X,\calF,\mu)$ and $(Y,\calG,\nu)$
  be probability spaces
  and $\measalg(X,\calF,\mu) = (\frakA,\overline\mu)$
  and $\measalg(Y,\calG,\nu) = (\frakB,\overline\nu)$
  the corresponding measure algebras.
  Let $f$ be a measure-preserving map
  $(X,\calF,\mu)\to(Y,\calG,\nu)$.
  The map $\measalg(f) : (\frakB,\overline\nu)\to(\frakA,\overline\mu)$
  defined by setting $\measalg(f)[G] = [f^{-1}(G)]$
  for all $G\in\calG$
  is a measure-algebra homomorphism.
  This construction respects almost-sure equality of
  measure-preserving maps, sends the identity
  map to the identity homomorphism, and
  sends composition of measure-preserving maps
  $f\circ g$
  to flipped-composition of homomorphisms
  $\measalg(g)\circ\measalg(f)$.
\end{construction}
\begin{proof}
  If $G\triangle G'$ is $\nu$-negligible
  then $f^{-1}(G\triangle G') = f^{-1}(G)\triangle f^{-1}(G)$
  is $\mu$-negligible since $f$ measure-preserving,
  so $\measalg(f)$ is well-defined.
  Taking $f$-preimages distributes over all
  unions and intersections, so $\measalg(f)$
  is a homomorphism of complete boolean-algebras;
  this homomorphism preserves the measures $\overline\mu$
  and $\overline\nu$ because $f$ is measure-preserving.
  If $f\aseq g$, then $f^{-1}(G)\triangle g^{-1}(G)$
  is $\mu$-negligible for all $G\in\calG$,
  so $\measalg(f) = \measalg(g)$.
  Finally, $f$ respects identities and composition
  because taking preimages does.
\end{proof}

\begin{lemma} \label{app:lem:stdprob-dual}
  The functor $\measalg:\StdProb\to\StdProbAlgop$
  witnesses an equivalence
  $\StdProb\simeq\StdProbAlgop$.
\end{lemma}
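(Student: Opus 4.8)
The plan is to show that $\measalg$ is fully faithful and essentially surjective, which together imply that it is an equivalence $\StdProb\simeq\StdProbAlgop$. Unwinding the contravariance recorded in \cref{app:cons:measalg-hom}, full faithfulness says that for every pair of standard probability spaces $(X,\calF,\mu)$ and $(Y,\calG,\nu)$ the assignment $f\mapsto\measalg(f)$ is a bijection from $\StdProb((X,\calF,\mu),(Y,\calG,\nu))$ onto the set $\StdProbAlg(\measalg(Y,\calG,\nu),\measalg(X,\calF,\mu))$ of measure-algebra homomorphisms running in the reverse direction. I would prove essential surjectivity and faithfulness directly, and isolate fullness as the one step requiring serious measure theory.

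Essential surjectivity is immediate from the definition of standard probability algebra (\cref{app:def:stdprobalg-def}): such an algebra decomposes as a simple product $([0,p],\overline\lambda)\times\prod_{i\in I}\atom{q_i}$ of an interval algebra and countably many atoms, and this is exactly $\measalg$ of the standard probability space obtained as the coproduct of the interval $[0,p]$ with countably many atoms of weight $q_i$, using that $\measalg$ carries coproducts of probability spaces to simple products of algebras. For faithfulness, suppose $\measalg(f)=\measalg(g)$ for measure-preserving $f,g:(X,\calF,\mu)\to(Y,\calG,\nu)$, i.e. $f^{-1}(G)\aseq g^{-1}(G)$ for all $G\in\calG$. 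Since $(Y,\calG)$ is the completion of a standard Borel space, its Borel sub-$\sigma$-algebra is generated by a countable family $\{G_n\}$ that separates points of $Y$. Then $\{x : f(x)\ne g(x)\}\subseteq\bigcup_n\bigl(f^{-1}(G_n)\triangle g^{-1}(G_n)\bigr)$, a countable union of $\mu$-negligible sets, so $f\aseq g$ and hence $f=g$ in $\StdProb$.

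Fullness is the crux, and the only place a nontrivial measure-theoretic input enters. Given a measure-algebra homomorphism $\phi:\measalg(Y,\calG,\nu)\to\measalg(X,\calF,\mu)$, I must produce a measure-preserving map $f:(X,\calF,\mu)\to(Y,\calG,\nu)$ with $\measalg(f)=\phi$. This is a realization-of-homomorphisms statement: because $(Y,\calG,\nu)$ is a standard probability space it is compact (Radon) with countably generated measurable structure, and because $(X,\calF,\mu)$ is complete, every measure-preserving Boolean homomorphism $\measalg(Y)\to\measalg(X)$ is induced by a point map $X\to Y$ (inverse-measure-preserving, in Fremlin's terminology, which is precisely the paper's notion of measure-preserving map). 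I would invoke Fremlin's realization theorem \fremlin{343B} to obtain such an $f$; its representing property $[f^{-1}(G)]=\phi([G])$ for all $G\in\calG$ is exactly the identity $\measalg(f)=\phi$, while uniqueness up to almost-everywhere equality is the faithfulness already established. The main obstacle is therefore not a computation but checking that standardness supplies precisely the compactness and countable-generation hypotheses under which the realization theorem applies; once those are in place, everything else is bookkeeping.
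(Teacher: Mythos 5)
Your proposal is correct and takes essentially the same route as the paper: essential surjectivity read off from the definition of standard probability algebra (via the atoms-plus-interval decomposition), and fullness obtained by checking that standardness and completeness supply the hypotheses of Fremlin's realization theorem 343B, exactly as the paper does. The only difference is cosmetic: for faithfulness the paper cites Fremlin's 343G (uniqueness of realizations for countably separated spaces), whereas you prove the same fact directly with a countable point-separating family of Borel sets.
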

\begin{proof}
  The functor $\measalg$ is essentially surjective
  on objects because any standard probability algebra
  is isomorphic to the measure algebra
  generated by the coproduct of countably many atoms and possibly an interval
  equipped with the Lebesgue measure.
  Let $(X,\calF,\mu)$ and $(Y,\calG,\nu)$
  be standard probability spaces
  with associated probability algebras
  $\measalg(X,\calF,\mu) = (\frakA,\overline\mu)$
  and $\measalg(Y,\calG,\nu) = (\frakB,\overline\nu)$.
  Fix an arbitrary measure-algebra homomorphism
  $f^* : (\frakB,\overline\nu)\to(\frakA,\overline\mu)$.
  The probability space $(X,\calF,\mu)$ is complete by definition
  and strictly localizable by \fremlinft{221L}{21}{4}.
  The probability space $(Y,\calG,\nu)$ is finite, hence semi-finite~\fremlinf{221F}{21}{4},
  and nonempty.
  Thus the preconditions of 
  \fremlinft{343B}{34}{22} are satisfied.
  As the completion of a probability measure on a
  standard Borel space, $(Y,\calG,\nu)$
  is a compact measure space~\fremlinf{433X(e)(i)}{43}{15}~\fremlinf{342G(b)}{34}{15},
  hence also locally compact~\fremlinf{342H(a)}{34}{17},
  so point (i) of \fremlinft{343B}{34}{22} holds.
  The implication (i)$\Rightarrow$(vi) of \fremlinft{343B}{34}{22}
  then gives a measure-preserving map $f : (X,\calF,\mu)\to (Y,\calG,\nu)$
  such that $\measalg(f) = f^*$.
  This establishes fullness of the functor $\measalg$.
  The space $(Y,\calG,\nu)$ is
  countably separated (because it arises from a metrizable space),
  so by \fremlinft{343G}{34}{26} the map
  $f$ is unique up to almost-sure equivalence.
  This establishess faithfulness of $\measalg$.
\end{proof}

\begin{lemma} \label{app:lem:stdprobalg-mono}
  Every map in $\StdProbAlg$ is mono.
\end{lemma}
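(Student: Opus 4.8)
The plan is to reduce \emph{monic} to \emph{injective on underlying sets}, and then to deduce injectivity directly from the measure-preserving condition. Since the morphisms of $\StdProbAlg$ are genuine functions --- complete boolean-algebra homomorphisms between the underlying algebras --- the forgetful functor to $\Set$ is faithful. Consequently, any morphism whose underlying function is injective is automatically a monomorphism: if $U$ denotes this forgetful functor and $f\circ g = f\circ h$, then $U(f)\circ U(g) = U(f)\circ U(h)$, whence $U(g) = U(h)$ by injectivity of $U(f)$ in $\Set$, and then $g = h$ by faithfulness. So it suffices to show that every measure-algebra homomorphism $f : (\frakA,\overline\mu)\to(\frakB,\overline\nu)$ is injective.

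First I would observe that a boolean-algebra homomorphism is injective exactly when it has trivial kernel, i.e.\ $f(c) = \bot$ implies $c = \bot$. Indeed, for any $a,b$ with $f(a) = f(b)$, using that $f$ preserves complements and finite meets/joins we get $f(a\triangle b) = f(a)\triangle f(b) = \bot$, where $\triangle$ is the symmetric difference; trivial kernel then forces $a\triangle b = \bot$, that is, $a = b$.

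The key step is to establish triviality of the kernel via the measure. Suppose $f(c) = \bot$. Because $f$ is measure-preserving, $\overline\mu(c) = \overline\nu(f(c)) = \overline\nu(\bot) = 0$. But the definition of a measure algebra demands $\overline\mu(a) > 0$ for every $a\ne\bot$; taking the contrapositive, $\overline\mu(c) = 0$ forces $c = \bot$. Hence the kernel is trivial, $f$ is injective, and therefore monic.

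I do not expect any genuine obstacle: the content is simply the fact that a measure-preserving homomorphism cannot collapse a nonzero (hence positive-measure) element to $\bot$. The only point requiring a moment of care is the passage from ``injective underlying function'' to ``monomorphism in $\StdProbAlg$,'' which is justified purely by faithfulness of the forgetful functor and uses no special property of \emph{standard} probability algebras beyond the fact that their homomorphisms are honest functions; in particular, the argument works for arbitrary measure algebras.
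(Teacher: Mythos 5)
Your proof is correct and follows the same route as the paper: monicity is reduced to injectivity of the underlying homomorphism, which in turn follows from measure-preservation together with strict positivity of the measure on nonzero elements. The only difference is that the paper outsources the injectivity step to a citation (Fremlin 324K(a)) whereas you supply the short symmetric-difference argument yourself, which is a perfectly adequate substitute.
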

\begin{proof}
  All measure-preserving measure algebra homomorphisms
  are injective~\fremlinf{324K(a)}{32}{25}.
\end{proof}

\begin{lemma} \label{app:lem:std-prob-epi}
  Every map in $\StdProb$ is epi.
\end{lemma}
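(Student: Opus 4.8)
The plan is to obtain this immediately by dualizing \cref{app:lem:stdprobalg-mono} across the equivalence of \cref{app:lem:stdprob-dual}. The functor $\measalg$ witnesses an equivalence $\StdProb\simeq\StdProbAlgop$, and equivalences of categories both preserve and reflect epimorphisms; hence a morphism $f$ of $\StdProb$ is epi if and only if $\measalg(f)$ is epi in $\StdProbAlgop$. But an epimorphism in $\StdProbAlgop$ is by definition precisely a monomorphism in $\StdProbAlg$, and \cref{app:lem:stdprobalg-mono} says every $\StdProbAlg$-map is mono. Chaining these observations yields that every $\StdProb$-map is epi.

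The one place I would be careful is spelling out the direction of the duality, since this is the only spot where a sign error could creep in. Under $C\simeq D\op$, epimorphisms of $C$ correspond to epimorphisms of $D\op$, which unfold to monomorphisms of $D$. Concretely, a map $g$ being mono in $\StdProbAlg$ is the statement $g h_1 = g h_2 \Rightarrow h_1 = h_2$; read with all arrows reversed in $\StdProbAlgop$, this is exactly the assertion that $g$ is epi there, and transporting back along the equivalence $\measalg$ gives epi-ness of the corresponding $\StdProb$-morphism. I would therefore phrase the argument purely in terms of the abstract categorical properties ``mono'' and ``epi'' and the fact that equivalences transport them, rather than reasoning about underlying measure-preserving functions of $\StdProb$-morphisms directly.

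I expect no real obstacle here: all the content has already been discharged, in \cref{app:lem:stdprobalg-mono} (which rests on the fact that measure-preserving measure-algebra homomorphisms are injective) together with the equivalence of \cref{app:lem:stdprob-dual}. The only thing to guard against is conflating \emph{injective} with \emph{monic}; since we use the abstract statement that every $\StdProbAlg$-map is monic, it transports cleanly across the equivalence without needing any further analysis of $\StdProb$-morphisms as functions.
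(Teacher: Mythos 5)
Your proposal is correct and is exactly the paper's argument: the paper's proof is the one-liner ``Combine \cref{app:lem:stdprob-dual} and \cref{app:lem:stdprobalg-mono},'' i.e., transport the mono property across the equivalence $\StdProb\simeq\StdProbAlgop$, which is precisely the duality you spell out. Your additional care about the direction of the duality and about equivalences preserving/reflecting monos and epis is sound and just makes the implicit content of the paper's proof explicit.
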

\begin{proof}
  Combine \cref{app:lem:stdprob-dual} and \cref{app:lem:stdprobalg-mono}.
\end{proof}

The following lemma seems closely connected to the theorems of \citet{edalat1999semi}.
\begin{lemma} \label{lem:std-prob-right-ore}
  The category $\StdProb$
  has the right Ore property: for $f,g$
  with types as shown below,
  there exists a space $W$ and maps $h,k$
  such that $fh = gk$:
  \[ \commsquare{(W,\calK,\tau)}{k}{(Y,\calG,\nu)}{h}{g}{(X,\calF,\mu)}{f}{(Z,\calH,\rho)} \]
\end{lemma}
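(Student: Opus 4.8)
The plan is to build $W$ concretely as the \emph{conditionally independent coupling} of $X$ and $Y$ over $Z$ --- exactly the semi-pullback construction of \citet{edalat1999semi}, which the remark after \cref{lem:std-prob-right-ore} is gesturing at. First I would invoke the disintegration theorem, which applies because standard probability spaces are completions of standard Borel spaces: disintegrating $\mu$ along $f$ yields a $\rho$-measurable family $\{\mu_z\}_{z\in Z}$ of probability measures on $X$ with $\mu_z$ concentrated on $f^{-1}(z)$ and $\mu=\int_Z\mu_z\,d\rho$, and symmetrically disintegrating $\nu$ along $g$ yields $\{\nu_z\}_{z\in Z}$ on $Y$.

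Next I would define a probability measure $\lambda$ on the Borel $\sigma$-algebra of $X\times Y$ by $\lambda(E)=\int_Z(\mu_z\otimes\nu_z)(E)\,d\rho(z)$, whose well-definedness rests on the measurability of $z\mapsto(\mu_z\otimes\nu_z)(E)$ supplied by disintegration. Because $\mu_z$ sits on $f^{-1}(z)$ and $\nu_z$ on $g^{-1}(z)$, each product $\mu_z\otimes\nu_z$ is concentrated on $f^{-1}(z)\times g^{-1}(z)$, so $\lambda$ concentrates on the fibre set $\{(x,y)\mid f(x)=g(y)\}$. I would then take $W$ to be $X\times Y$ equipped with the completion of $\lambda$; since $X\times Y$ is standard Borel and $\lambda$ is a probability measure, $W$ is a standard probability space, so the question of standardness that would plague a purely algebraic free-product construction is resolved for free.

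Finally I would let $h,k$ be the two coordinate projections $W\to X$ and $W\to Y$ and verify the three required facts. Measure-preservation of $h$ is the marginal computation $\lambda(A\times Y)=\int_Z\mu_z(A)\,d\rho=\mu(A)$ using $\nu_z(Y)=1$, and symmetrically for $k$. Commutativity $fh=gk$ holds pointwise on the fibre set and hence $\lambda$-almost everywhere, which is exactly equality in $\StdProb$ since its morphisms are quotiented by almost-everywhere equality; this gives the square of \cref{lem:std-prob-right-ore}.

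The main obstacle I expect is the bookkeeping around disintegration: ensuring the conditional kernels $\mu_z,\nu_z$ exist, are measurable in $z$, and are genuinely concentrated on the fibres for $\rho$-almost-every $z$, together with the attendant almost-everywhere caveats (disintegration and fibre-concentration hold only off a $\rho$-null set). These are standard for standard Borel spaces, and I would cite the disintegration theorem in \citet{fremlin2000measure}; the null-set caveats are harmless precisely because $\StdProb$ already quotients by almost-everywhere equality. An alternative algebraic route would dualise the cospan to a span $\frakF\xleftarrow{f^*}\frakH\xrightarrow{g^*}\frakG$ in $\StdProbAlg$ via \cref{app:lem:stdprob-dual} and amalgamate using the relatively independent product of measure algebras over $\frakH$; this is cleaner categorically but then forces one to separately check that the amalgam is a \emph{standard} probability algebra, which the concrete coupling above sidesteps.
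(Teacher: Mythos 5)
Your proposal is correct and takes essentially the same route as the paper's own proof: the paper also sets $W = X\times Y$ with the completion of the mixture measure $\tau'(K') = \int (\mu|_z\otimes\nu|_z)(K')\,\rho(\mathrm{d}z)$ built from disintegrations of $\mu$ along $f$ and $\nu$ along $g$, takes $h,k$ to be the coordinate projections, and verifies measure-preservation by the same marginal computation and commutativity $fh=gk$ almost everywhere via concentration of $\mu|_z\otimes\nu|_z$ on the fibres $f^{-1}(z)\times g^{-1}(z)$. The only cosmetic difference is that you additionally flag the algebraic alternative via $\StdProbAlg$, which the paper does not pursue.
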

\begin{proof}
  As standard probability spaces,
  $X,Y,Z$ are all completions of Borel measures on
  Polish spaces.
  Let $\calF',\calG',\calH'$ be the Borel sets
  given by the topologies on $X,Y,Z$ respectively.
  Let $W$ be the product space $X\times Y$
  with Borel sets $\calK'$ generated by rectangles
  $F'\times G'$ for $F'\in\calF',G'\in\calG'$ as usual.
  This product is Polish because each of its factors are.
  Since $f,g$ are maps of standard probability spaces, there exist
  disintegrations $\{\mu|_z\}_{z\in Z}$ and $\{\nu|_z\}_{z\in Z}$.
  Let $\tau'$ be the function $\calK'\to[0,1]$ defined by
  \[ \tau'(K') = \int (\mu|_z\otimes \nu|_z)(K')\,\rho({\rm d}z), \]
  a probability measure because each $\mu|_z\otimes\nu|_z$ is.
  Let $(W,\calK,\tau)$ be the completion of $(W,\calK',\tau')$,
  standard because $\tau'$ is a measure on a Polish space by construction.
  Let $h$ and $k$ be the projections $\pi_1$ and $\pi_2$ respectively.
  These maps are measure-preserving: for all $F\in\calF$ and $G\in\calG$,
  \begin{align*}
    \tau(h^{-1}(F)) = \rho(F\times Y) 
    = \int (\mu|_z\otimes \nu|_z)(F\times Y) \,\rho({\rm d}z)
    = \int \mu|_z(F)\nu|_z(Y) \,\rho({\rm d}z)
    = \int \mu|_z(F) \,\rho({\rm d}z) = \mu(F) \\
    \tau(k^{-1}(G)) = \rho(X\times G)
    = \int (\mu|_z\otimes \nu|_z)(X\times G) \,\rho({\rm d}z)
    = \int \mu|_z(X)\times\nu|_z(G) \,\rho({\rm d}z)
    = \int \nu|_z(G) \,\rho({\rm d}z) = \nu(G)
  \end{align*}
  Finally, the square commutes almost-surely:
  \begin{align*}
    &\Pr_{(x,y)\sim\tau}[f(h(x,y)) = g(k(x,y))]
     = \int (\mu|_z\otimes \nu|_z)(\{(x,y) \mid fx = gy\}) \,\rho({\rm d}z)\\
     &= \int (\mu|_z\otimes \nu|_z)\left(\biguplus_{z'} (f^{-1}(z') \times g^{-1}(z'))\right)\,\rho({\rm d}z) \\
     &= \int_{z\in \lambda} (\mu|_z\otimes \nu|_z)\left((f^{-1}(z) \times g^{-1}(z)) \cap \biguplus_{z'} (f^{-1}(z') \times g^{-1}(z'))\right) \,\rho({\rm d}z)
        \text{ because $(\mu|_z\otimes\nu|_z)(f^{-1}(z)\times g^{-1}(z)) = 1$ for a.a.\,$z$} \\
     &= \int (\mu|_z\otimes \nu|_z)(f^{-1}(z) \times g^{-1}(z)) \,\rho({\rm d}z)
     = 1
  \end{align*}
\end{proof}

\subsection{Standard enhanced measurable spaces and measurable algebras} \label{app:sec:stdmblespc}

Given a standard probability space $(X,\calF,\mu)$,
one can choose to forget everything about the measure $\mu$
except for which measurable subsets are negligible,
leaving behind a tuple $(X,\calF,\calN)$
where $\calN$ is a $\sigma$-ideal of $\calF$.
Given two standard probability spaces $(X,\calF,\mu)$
and $(Y,\calG,\nu)$ where $\mu$ has negligibles $\calN$
and $\nu$ has negligibles $\calM$,
and given a morphism $[f] : (X,\calF,\mu)\to (Y,\calG,\nu)$ of standard probability spaces,
one can choose to forget everything about the fact that $[f]$
is measure-preserving except for the fact that
$\nu(G) = 0$ iff $\mu(f^{-1}(G)) = 0$,
leaving behind an equivalence class
$[f]$ with $f^{-1}(G)\in\calN$ iff $G\in\calM$
for all $G\in\calG$.
(Note this is well-defined, as only the negligible sets
are needed to determine whether two morphisms $f,f'$ are almost-surely equal.)
This idea is made precise as follows.

\begin{definition}[standard enhanced measurable space] \label{app:def:stems}
  An \emph{enhanced measurable space}
  is a tuple $(X,\calF,\calN)$
  for which there exists a measure $\mu$
  with negligibles $\calN$ such that
  $(X,\calF,\mu)$ is a probability space.
  A \emph{standard enhanced measurable space}
  is an enhanced measurable space for which there
  exists a measure $\mu$ making it a \nref{app:def:sps}{standard probability space}.
\end{definition}

\begin{definition}[negligible-preserving, negligible-reflecting] \label{app:def:neg-refl}
  Given two standard measurable spaces
  $(X,\calF,\calN)$ and $(Y,\calG,\calM)$,
  say a measurable map $f : (X,\calF)\to(Y,\calG)$
  is \emph{negligible-reflecting}
  if $f^{-1}(M)\in \calN$
  for all $M\in\calM$
  and \emph{negligible-preserving}
  if $f^{-1}(G)\in\calN$
  implies $G\in\calM$
  for all $G\in\calG$.
\end{definition}

\begin{definition}[map of standard enhanced measurable spaces] \label{app:def:stems-mor}
  Given two standard enhanced measurable spaces
  $(X,\calF,\calN)$ and $(Y,\calG,\calM)$,
  a \emph{map of standard enhanced measurable spaces
  from $(X,\calF,\calN)$ to $(Y,\calG,\calM)$}
  is a measurable map $f : (X,\calF)\to(Y,\calG)$
  that is both negligible-preserving and negligible-reflecting,
  quotiented by almost-sure equality:
  $f\aseq g$ iff $\{x\in X\mid f(x)\ne g(x)\} \in\calN$.
\end{definition}

\begin{definition}[the category $\StdMble$] \label{app:def:mblecat}
  Let $\StdMble$ be the category of
  \nref{app:def:stems}{standard enhanced measurable spaces}
  and \nref{app:def:stems-mor}{maps between them}.
\end{definition}

The following gives
  a more concrete picture of
  what kinds of (equivalence classes of) maps live in $\StdMble$:
  they can be equivalently seen as
  those maps that induce injective complete-boolean-algebra homomorphisms,
  and as measure-preserving maps that have forgotten the 
  fact that they were measure-preserving.

\begin{lemma} \label{app:lem:mblecat-map-char}
  Let $(X,\calF,\calN)\xrightarrow f(Y,\calG,\calM)$
  be a \nref{app:def:neg-refl}{negligible-reflecting} map of 
  enhanced measurable spaces.
  The following are equivalent:
  \begin{enumerate}
  \item $f$ is a \nref{app:def:stems-mor}{$\StdMble$-map} from $(X,\calF,\calN)$ to $(Y,\calG,\calM)$
  \item $f$ is \nref{app:def:neg-refl}{negligible-preserving}
  \item The complete-boolean-algebra homomorphism
    $\calF/\calN\xleftarrow{\measalg(f)}\calG/\calM$ 
    described in \cref{app:cons:measalg-hom}
    is injective
  \item There exist measures $\mu$ on $\calF$ and $\nu$ on $\calG$
    with negligibles $\calN$ and $\calM$ respectively
    such that $f$ is a $\StdProb$-map
    $(X,\calF,\mu)\to(Y,\calG,\nu)$
  \end{enumerate}
\end{lemma}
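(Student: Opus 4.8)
The plan is to prove all four equivalences by arranging them around condition (2), exploiting that the standing hypothesis of negligible-reflecting is in force throughout. I would dispatch (1)$\iff$(2) first, since it is purely definitional: by \cref{app:def:stems-mor} a $\StdMble$-map is exactly a measurable map that is both negligible-preserving and negligible-reflecting, so once $f$ is assumed negligible-reflecting (\cref{app:def:neg-refl}), being a $\StdMble$-map is the same as being negligible-preserving.

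Next I would establish (2)$\iff$(3). Negligible-reflecting is precisely what is needed for $\measalg(f)\colon\calG/\calM\to\calF/\calN$, sending $[G]\mapsto[f^{-1}(G)]$, to be a well-defined complete-Boolean-algebra homomorphism (as in \cref{app:cons:measalg-hom}, minus the measure-preservation clause, which we do not yet have). A Boolean-algebra homomorphism $h$ is injective iff its kernel is trivial, because $h(x)=h(y)$ forces $h(x\triangle y)=\bot$; and the kernel of $\measalg(f)$ is $\{[G]\mid f^{-1}(G)\in\calN\}$. Hence injectivity says exactly that $f^{-1}(G)\in\calN\Rightarrow G\in\calM$, which is negligible-preserving. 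The reverse implication of the remaining equivalence, (4)$\Rightarrow$(2), is immediate: if $f$ is measure-preserving from $(X,\calF,\mu)$ to $(Y,\calG,\nu)$ with $\mu,\nu$ having negligibles $\calN,\calM$, then $f^{-1}(G)\in\calN$ gives $\nu(G)=\mu(f^{-1}(G))=0$, so $G\in\calM$.

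The substantive step, and the one I expect to be the main obstacle, is (2)$\Rightarrow$(4): reconstructing measures for which $f$ becomes measure-preserving. Here I would use that the spaces are standard enhanced (\cref{app:def:stems}) to fix standard probability measures $\mu_0$ on $(X,\calF)$ and $\nu$ on $(Y,\calG)$ with negligibles $\calN$ and $\calM$. Let $\calB=\{f^{-1}(G)\mid G\in\calG\}\subseteq\calF$ be the pullback sub-$\sigma$-algebra and define $\nu_X(f^{-1}(G))=\nu(G)$. This is well-defined, a probability measure, and countably additive precisely because $f$ is negligible-preserving-and-reflecting: $f^{-1}(G\triangle G')=\emptyset\in\calN$ forces $G\triangle G'\in\calM$, and pairwise-disjoint preimages force the corresponding $G_i$ to be almost-everywhere disjoint. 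Crucially, $\nu_X$ and $\mu_0|_\calB$ have exactly the same null sets in $\calB$, since $\nu_X(f^{-1}G)=0\iff G\in\calM\iff f^{-1}(G)\in\calN\iff\mu_0(f^{-1}G)=0$. Radon--Nikodym on $(X,\calB,\mu_0|_\calB)$ then produces a $\calB$-measurable density $g$, almost-everywhere strictly positive, with $\nu_X=g\cdot(\mu_0|_\calB)$.

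Finally I would set $\mu:=g\cdot\mu_0$ on all of $\calF$ and verify three things. First, $f$ is measure-preserving, since for $G\in\calG$ we have $\mu(f^{-1}G)=\int_{f^{-1}G}g\,d\mu_0=\nu_X(f^{-1}G)=\nu(G)$, using that $g$ is $\calB$-measurable. Second, the negligibles of $\mu$ are exactly $\calN$, because $g>0$ almost everywhere makes $\mu$ and $\mu_0$ mutually absolutely continuous. Third, $(X,\calF,\mu)$ is standard: by \cref{app:def:sps} standardness is the assertion that $(X,\calF)$ is the completion of a standard Borel space with respect to the negligibles of the measure, and both $(X,\calF)$ and that negligible ideal $\calN$ are unchanged from $\mu_0$. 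This exhibits $f$ as a $\StdProb$-map, giving (4). Altogether (1)$\iff$(2)$\iff$(3) together with (2)$\iff$(4) yields the four-way equivalence; the delicate point is the density argument reconstructing $\mu$, while everything else is bookkeeping about kernels and negligible ideals.
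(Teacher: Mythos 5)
Your proof is correct, but your key step takes a genuinely different route from the paper's. The paper closes the equivalence via the cycle $(1)\Leftrightarrow(2)$, $(2)\Rightarrow(3)$, $(3)\Rightarrow(4)$, $(4)\Rightarrow(2)$, and its $(3)\Rightarrow(4)$ keeps the \emph{domain} measure fixed: it unforgets a standard measure $\mu$ on $(X,\calF,\calN)$, restricts the measure algebra $(\calF/\calN,\overline\mu)$ along the injective homomorphism $\measalg(f)$, and thereby defines the codomain measure as the pushforward $\nu(G)=\mu(f^{-1}(G))$ --- so $f$ is measure-preserving by construction, no density argument is needed, and the only work is transferring standardness to $(Y,\calG,\nu)$ via \fremlinf{212E(d)}{21}{11}. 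You instead prove $(2)\Rightarrow(4)$ directly by fixing the \emph{codomain} measure $\nu$ and rescaling a reference measure $\mu_0$ on the domain by a Radon--Nikodym derivative of the pullback measure, which is exactly the technique the paper uses for \cref{app:lem:stdext} (where the codomain measure is prescribed in advance and this flexibility is genuinely required); your construction of the pullback measure likewise reproduces \cref{app:cons:pbmeas}. For the present lemma your route costs more analysis (Radon--Nikodym plus strict positivity of the density) where the paper's pushforward is essentially free, but it buys the stronger statement that the codomain measure can be chosen arbitrarily; both routes rest on the same final observation that measures with identical negligible ideals yield the same completion of a Polish Borel structure, so standardness is preserved. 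One stylistic difference: you prove $(2)\Leftrightarrow(3)$ in both directions via the kernel of $\measalg(f)$, whereas the paper only needs $(2)\Rightarrow(3)$ and recovers the converse around the cycle; either bookkeeping is fine.
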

\begin{proof}~
  \begin{itemize}
  \item $(1)\Leftrightarrow(2)$. By definition.
  \item $(2)\Rightarrow(3)$.
    If $f$ is negligible-preserving
    then $\measalg(f)[E] = \bot$ implies $[E] = \bot$,
    so $\measalg(f)$ has trivial kernel.
  \item $(3)\Rightarrow(4)$.
    If $\measalg(f) : \calF/\calN\hookleftarrow\calG/\calM$ 
    is an injective complete-boolean-algebra homomorphism,
    then unforgetting a standard probability measure $\mu$
    on $(X,\calF,\calN)$ gives a standard probability space
    $(X,\calF,\mu)$ and corresponding measure algebra
    $(\calF/\calN, \overline\mu)$;
    restricting $\overline\mu$ along the injective homomorphism $\measalg(f)$
    gives a measure algebra $(\calG/\calM,\overline\nu)$
    making $\measalg(f)$
    a measure-algebra homomorphism $(\calF/\calN,\overline\mu)
    \hookleftarrow(\calG/\calM,\overline\nu)$.
    Define $\mu : \calF\to[0,1]$ by $\mu E = \overline\mu[E]$.
    This is a measure,
    since $\mu\emptyset = \overline\mu[\emptyset] = \overline\mu \bot = 0$
    and if $\{E_i\}_i$ is a countable disjoint family of events
      then $\mu(\biguplus_i E_i) =
      \overline\mu[\biguplus_i E_i]
      = \overline\mu(\biguplus_i [E_i])
      = \sum_i \overline\mu [E_i]
      = \sum_i \mu E_i
      $,
    Moreover, $\mu$ has negligibles $\calM$,
    since if $N\in\calM$ then $\mu N = \overline\mu [N]
      = \overline\mu\bot = 0$, and conversely if $\mu E = \overline\mu[E] = 0$
       then $[E] = \bot$ so $E\in\calM$.
    To show $(Y,\calG,\nu)$ standard,
    unforget a collection of Borel sets $\calG'$
    arising from a Polish topology on $Y$
    and a measure $\lambda$ making $(Y,\calG,\lambda)$
    into a standard probability space, with $\lambda$
    the completion of the Borel measure $\lambda|_{\calG'}$.
    The measure $\mu$ is correspondingly the completion of the Borel measure
    $\mu|_{\calG'}$ because $\lambda|_{\calG'}$ and $\mu|_{\calG'}$
    have the same Borel-negligibles~\fremlinf{212E(d)}{21}{11},
    so $(Y,\calG,\nu)$ standard.
  \item $(4)\Rightarrow(2)$.
    Any measure-preserving map is negligible-preserving.
  \end{itemize}
\end{proof}

\begin{definition} \label{app:def:forgetmu}
  Let $\forgetmu : \StdProb\to \StdMble$ be the forgetful functor
  that sends a standard probability space $(X,\calF,\mu)$
  to the standard enhanced measurable space $(X,\calF,\negligibles(\mu))$
  and a $\StdProb$-map $[f] : (X,\calF,\mu)\to(Y,\calG,\nu)$
  to itself (now considered as a negligible-preserving-and-reflecting map of standard enhanced measurable spaces).
\end{definition}

\begin{lemma} \label{app:lem:stdmble-img}
  The category $\StdMble$ is the image of $\forgetmu$:
  the functor $\forgetmu$ is surjective on objects,
  and any morphism of standard enhanced measurable spaces
  arises from a measure-preserving map equipping
  those spaces with standard probability measures.
\end{lemma}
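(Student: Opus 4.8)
The plan is to dispatch both claims by appealing to definitions already in place, with the substantive analytic content having been absorbed into \cref{app:lem:mblecat-map-char}.

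For surjectivity on objects, I would fix an arbitrary standard enhanced measurable space $(X,\calF,\calN)$. By \cref{app:def:stems}, the very definition of standard enhanced measurable space guarantees a measure $\mu$ with negligibles $\calN$ such that $(X,\calF,\mu)$ is a standard probability space. Then $\forgetmu(X,\calF,\mu) = (X,\calF,\calN)$ by the definition of $\forgetmu$ in \cref{app:def:forgetmu}, so every object of $\StdMble$ lies in the image of $\forgetmu$ on objects.

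For the statement about morphisms, I would fix a $\StdMble$-map $[f] : (X,\calF,\calN)\to(Y,\calG,\calM)$ and choose a representative $f$. Since a $\StdMble$-map is by \cref{app:def:stems-mor} both negligible-preserving and negligible-reflecting, $f$ in particular satisfies the negligible-reflecting hypothesis of \cref{app:lem:mblecat-map-char}, and moreover satisfies condition~(1) of that lemma. The equivalence $(1)\Leftrightarrow(4)$ then supplies standard probability measures $\mu$ on $\calF$ with negligibles $\calN$ and $\nu$ on $\calG$ with negligibles $\calM$ for which $f$ is a genuine $\StdProb$-map $(X,\calF,\mu)\to(Y,\calG,\nu)$. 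Because $\forgetmu$ acts as the identity on underlying maps and because $\negligibles(\mu)=\calN$ and $\negligibles(\nu)=\calM$, applying $\forgetmu$ to this $\StdProb$-map returns exactly $[f]$; this exhibits $[f]$ as arising from a measure-preserving map, as required.

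The only real difficulty lives in \cref{app:lem:mblecat-map-char}, specifically the reconstruction step $(3)\Rightarrow(4)$, where one must un-forget a standard probability measure from an injective complete Boolean algebra homomorphism and verify the result is again standard; granting that lemma, the present proof is pure bookkeeping, threading together the definitions of $\StdMble$-map, standard enhanced measurable space, and $\forgetmu$. I would therefore expect no independent obstacle here beyond correctly matching the negligible ideals $\calN,\calM$ with those of the reconstructed measures.
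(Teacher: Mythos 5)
Your proof is correct and follows exactly the paper's own argument: surjectivity on objects is immediate from the definition of a standard enhanced measurable space, and the morphism claim is precisely the implication $(1)\Rightarrow(4)$ of \cref{app:lem:mblecat-map-char}. The paper's proof is just a two-line citation of these same two facts, so your write-up is the same proof with the bookkeeping made explicit.
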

\begin{proof}
  The object part follows by definition of standard enhanced measurable space.
  The morphism part is (1)$\Rightarrow$(4) of \cref{app:lem:mblecat-map-char}.
\end{proof}

Just as there is an equivalence $\StdProb\simeq\StdProbAlgop$ (\cref{app:lem:stdprob-dual}),
there is an equivalence between $\StdMble$
and a category of measure algebras that have forgotten their measures.

\begin{definition}[measurable algebra] \label{app:def:measurable-alg}
  A \emph{measurable algebra} is a complete boolean algebra $\frakA$
  for which there exists a measure $\overline\mu : \frakA\to[0,1]$
  making $(\frakA,\overline\mu)$ a \nref{app:def:measalg}{probability algebra}.
\end{definition}

As with the definition of measure algebra, we deviate slightly from
\fremlinft{391B(a)}{39}{2} in requiring measurable algebras to be
complete as boolean algebras.

\begin{definition}[standard measurable algebra] \label{app:def:stdmblealg-def}
  Call a measurable algebra \emph{standard}
  if there exists a measure $\mu$ on it
  that makes it into a \nref{app:def:stdprobalg-def}{standard probability algebra}.
\end{definition}

\begin{definition}[the category $\StdMbleAlg$] \label{app:def:stdmblealg}
  Let $\StdMbleAlg$ be the category of standard measurable algebras
  and injective complete-boolean-algebra homomorphisms.
\end{definition}

\begin{lemma} \label{app:lem:mblecat-dual}
  The operation $\measalg$
  defines a functor $\StdMble\to\StdMbleAlgop$
  witnessing an equivalence
  $\StdMble\simeq\StdMbleAlgop$.
\end{lemma}
\begin{proof}
  Forgetting about the parts of
  \cref{app:cons:measalg,app:cons:measalg-hom}
  to do with measures makes $\measalg$
  a functor of the required type.
  The proof that this functor is an equivalence
  is analogous to the proof of \cref{app:lem:stdprob-dual}.
\end{proof}

\begin{corollary} \label{app:cor:stdmble-epis}
  Every morphism in $\StdMble$ is epi.
\end{corollary}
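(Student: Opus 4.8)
The plan is to mirror the proof of \cref{app:lem:std-prob-epi} for $\StdProb$: transport a ``mono'' property across the duality of \cref{app:lem:mblecat-dual}. The only observation I need is that every morphism of $\StdMbleAlg$ is a monomorphism, and this is immediate from the definition: morphisms of $\StdMbleAlg$ are \emph{injective} complete-boolean-algebra homomorphisms, and an injective homomorphism is left-cancellable (if $f\circ g = f\circ h$ then $g=h$, since the underlying functions agree and $f$ is injective), hence mono. Note this is where the two dualities differ in flavor: for $\StdProbAlg$ one must appeal to \fremlinf{324K(a)}{32}{25} to learn that measure-preserving homomorphisms are automatically injective (\cref{app:lem:stdprobalg-mono}), whereas in the measure-forgetting world injectivity is built directly into the definition of $\StdMbleAlg$, so no measure-theoretic input is needed.

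With this in hand, the remainder is a diagram-chase through opposites. Passing to the opposite category turns monomorphisms into epimorphisms, so every morphism of $\StdMbleAlgop$ is epi. The functor $\measalg : \StdMble \to \StdMbleAlgop$ witnessing \cref{app:lem:mblecat-dual} is an equivalence, and equivalences reflect epimorphisms; since every morphism of $\StdMbleAlgop$ is epi and $\measalg(f)$ lands in $\StdMbleAlgop$ for each $\StdMble$-map $f$, the map $f$ must itself be epi in $\StdMble$. (Equivalently, apply the quasi-inverse, which preserves epis, to the epi $\measalg(f)$ and use that it sends it back to something isomorphic to $f$.)

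I do not anticipate any genuine obstacle here — the statement is the exact measurable-algebra analogue of \cref{app:lem:std-prob-epi}, proved by the identical two-ingredient recipe (a duality plus a mono property). The only point demanding care is the bookkeeping of arrow directions through the contravariant equivalence: a morphism of $\StdMbleAlgop$ from $\measalg(X)$ to $\measalg(Y)$ is an injective homomorphism $\measalg(Y)\hookrightarrow\measalg(X)$ in $\StdMbleAlg$, so one should verify that left-cancellability of this homomorphism corresponds, under $\measalg$, precisely to right-cancellability of the original $\StdMble$-map $X\to Y$. This is exactly the duality between monos and epis that an (anti)equivalence supplies, so the verification is purely formal.
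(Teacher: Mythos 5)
Your proposal is correct and takes essentially the same route as the paper: the paper's proof likewise combines the anti-equivalence $\StdMble\simeq\StdMbleAlgop$ of \cref{app:lem:mblecat-dual} with injectivity (hence monicity) of the induced homomorphisms, citing (1)$\Rightarrow$(3) of \cref{app:lem:mblecat-map-char} for the latter, which is exactly the fact you extract from the definition of $\StdMbleAlg$. Your side remark about the contrast with \cref{app:lem:stdprobalg-mono} (where injectivity needs Fremlin, rather than being definitional) is accurate.
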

\begin{proof}
  Combine (1)$\Rightarrow$(3) from \cref{app:lem:mblecat-map-char} 
  and \cref{app:lem:mblecat-dual}.
\end{proof}

\begin{construction} \label{app:cons:pbmeas}
  Suppose $(X,\calF,\calN)\in\StdMble$
  and $(Y,\calG,\nu)\in\StdProb$.
  Suppose $f : (X,\calF,\calN)\to \forgetmu(Y,\calG,\nu)$,
  or in other words that $f$ is a measurable map $(X,\calF)\to(Y,\calG)$
  such that $f^{-1}(G)\in\calN$ iff $\nu(G) = 0$.
  The function $\pbmeas f\nu : f^{-1}\calG\to[0,1]$
  defined on the pullback $\sigma$-algebra
  $f^{-1}\calG := \{ f^{-1}(G) \mid G\in\calG\}$
  by $\pbmeas f\nu(f^{-1}(G)) = \nu(G)$
  is a probability measure,
  and $f$ is measure-preserving
  as a map $(X,f^{-1}\calG,\pbmeas f\nu)\to(Y,\calG,\nu)$.
\end{construction}
\begin{proof}
  The function $\pbmeas f\nu$ is well-defined: if $f^{-1}(G) = f^{-1}(G')$
  for some $G,G'\in\calG$, then $f^{-1}(G\triangle G') = \emptyset \in \calN$,
  so $\nu(G\triangle G') = 0$, so $\nu(G) = \nu(G')$.
  It indeed defines a measure, because $\nu$ is a measure and taking $f$-preimages
  preserves the empty set and complements
  and distributes over countable disjoint unions.
  Finally, $f$ is measure-preserving by definition of $\pbmeas f\nu$.
\end{proof}

\begin{note}
  The probability space $(X,f^{-1}\calG,\pbmeas f\nu)$
  is not necessarily standard:
  for example, if $f$ is the map $(x\mapsto [x<1/2]) : [0,1]\to \{\vtrue,\vfalse\}$
  and $\{\vtrue,\vfalse\}$ is given the uniform measure,
  then $f^{-1}\calG$ is the atomic $\sigma$-algebra
  on $[0,1]$ generated by $[0,1/2)$
  and $[1/2,1]$ and $\pbmeas f\nu$ assigns each atom probability $1/2$.
  The triple $(X,f^{-1}\calG,\pbmeas f\nu)$ is not a standard probability space
  because its measure algebra has two atoms but there is no
  bijection from it onto the two-point space.
\end{note}

\begin{lemma} \label{app:lem:stdext}
  For $(X,\calF,\calN)\in\StdMble$
  and $(Y,\calG,\nu)\in\StdProb$ and
  $f : (X,\calF,\calN)\to \forgetmu(Y,\calG,\nu)$,
  there exists a measure $\mu$ extending $\pbmeas f \nu$
  making $(X,\calF,\mu)$ into a standard probability space
  with negligibles $\calN$
  and $f$ into a measure-preserving
  map $(X,\calF,\mu)\to (Y,\calG,\nu)$.
\end{lemma}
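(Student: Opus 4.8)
The plan is to build $\mu$ by starting from an arbitrary standard measure on $(X,\calF,\calN)$ and reweighting it by a Radon--Nikodym density, so that $f$ becomes measure-preserving onto $(Y,\calG,\nu)$ while the negligibles and standardness are preserved. First I would use that $(X,\calF,\calN)$ is a standard enhanced measurable space to fix \emph{some} measure $\mu_0$ with negligibles $\calN$ making $(X,\calF,\mu_0)$ a standard probability space. Pushing $\mu_0$ forward along $f$ gives a probability measure $f_*\mu_0$ on $(Y,\calG)$, and since $f$ is negligible-preserving-and-reflecting we have $f_*\mu_0(G)=0$ iff $f^{-1}(G)\in\calN$ iff $\nu(G)=0$; hence $f_*\mu_0$ and $\nu$ are mutually absolutely continuous.

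Next I would invoke Radon--Nikodym to obtain a density $h=\mathrm d\nu/\mathrm d(f_*\mu_0)\in L^1(f_*\mu_0)$, which is $f_*\mu_0$-almost-everywhere strictly positive (existence from $\nu\ll f_*\mu_0$, positivity from $f_*\mu_0\ll\nu$). Define $\mu$ on $\calF$ by $\mu(E)=\int_E (h\circ f)\,\mathrm d\mu_0$. The change-of-variables formula $\int_X(\phi\circ f)\,\mathrm d\mu_0=\int_Y\phi\,\mathrm d(f_*\mu_0)$ gives $\mu(X)=\int_Y h\,\mathrm d(f_*\mu_0)=\nu(Y)=1$, so $\mu$ is a probability measure, and the same computation with $E=f^{-1}(G)$ yields $\mu(f^{-1}(G))=\int_G h\,\mathrm d(f_*\mu_0)=\nu(G)$. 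Thus $f$ is measure-preserving as a map $(X,\calF,\mu)\to(Y,\calG,\nu)$, and in particular $\mu$ agrees with $\pbmeas f\nu$ on the pullback $\sigma$-algebra $f^{-1}\calG$, i.e.\ $\mu$ extends $\pbmeas f\nu$.

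It then remains to check that $\mu$ has negligibles exactly $\calN$ and that $(X,\calF,\mu)$ is standard. For the negligibles, $h\circ f>0$ holds $\mu_0$-almost-everywhere, since $\{h=0\}$ is $f_*\mu_0$-null and so its $f$-preimage $\{h\circ f=0\}$ is $\mu_0$-null; hence $\mu$ and $\mu_0$ are mutually absolutely continuous and share the ideal $\calN$. For standardness, I would fix a Polish topology on $X$ whose Borel $\sigma$-algebra $\calF'$ has $\mu_0$-completion $\calF$ (available from standardness of $\mu_0$). Because $\mu\sim\mu_0$ have identical null sets, they have the same null Borel sets, so the $\mu$-completion of $\calF'$ is again $\calF$; the measure $\mu|_{\calF'}$ is a Borel measure on this Polish space whose completion is $\mu$ (on $E\triangle M$ with $M$ inside a null Borel set, both assign $\mu(E)$). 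Hence $(X,\calF,\mu)$ is the completion of a Borel measure on a Polish space, i.e.\ standard.

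The hard part will be this last standardness check: one must verify that replacing $\mu_0$ by an equivalent measure preserves standardness of the \emph{probability space}, not merely of the underlying measurable space. The cleanest route is the Borel-completion argument above, which turns on the fact that mutually absolutely continuous measures have the same completion; alternatively one can argue algebraically, noting that $(\calF/\calN,\overline\mu)$ has the same underlying standard measurable algebra $\calF/\calN$ as $(\calF/\calN,\overline{\mu_0})$ and that any probability measure on a standard measurable algebra yields a standard probability algebra. The change-of-variables manipulations and the strict positivity of $h$ are then routine once mutual absolute continuity is in hand.
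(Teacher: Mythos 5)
Your proposal is correct and takes essentially the same route as the paper's own proof: unforget a standard probability measure with negligibles $\calN$, reweight it by a Radon--Nikodym density so that $f$ becomes measure-preserving, use strict positivity of the density to conclude the negligibles (hence $\calN$) are unchanged, and obtain standardness from the fact that mutually absolutely continuous Borel measures on a Polish space have the same completion. The only cosmetic difference is where the derivative is taken: you differentiate $\nu$ against the pushforward $f_*\mu_0$ on $Y$ and pull the density back through $f$, while the paper differentiates $\pbmeas f\nu$ against $\lambda|_{f^{-1}\calG}$ on $X$, obtaining a density measurable for the pullback $\sigma$-algebra --- the two densities coincide via $g = h\circ f$, and your placement arguably streamlines the positivity step, since mutual absolute continuity of $\nu$ and $f_*\mu_0$ is immediate from $f$ being negligible-preserving and reflecting.
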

\begin{proof}
  Unforget a measure $\lambda$ with negligibles $\calN$
  such that $(X,\calF,\lambda)$ is a standard probability space.
  The measure $\pbmeas f\nu : f^{-1}\calG\to[0,1]$
  is absolutely continuous with respect to
  the restriction $\lambda|_{f^{-1}\calG}$
  of $\lambda$ to the pullback $\sigma$-algebra $f^{-1}\calG$:
  if $\pbmeas f\nu(f^{-1}(G)) =\nu(G) = 0$
  then $f^{-1}(G)\in\calN$, so $\mu(f^{-1}(G)) = 0$.
  Thus $\pbmeas f\nu$ has a Radon-Nikodym derivative
  $g : (X,f^{-1}\calG)\to \R$.
  Since $f^{-1}\calG\subseteq\calF$,
  the derivative $g$ is also measurable as a function $(X,\calG)\to\R$.
  Let $\mu$ be the measure $\calF\to[0,1]$
  defined by $\mu(F) = \int [x\in F]g(x)\,\lambda({\rm d}x)$
  for all $F\in\calF$.
  This extends $\pbmeas f\nu$, and so makes $f$ measure-preserving
  as a map $(X,\calF,\mu)\to(Y,\calG,\nu)$.

  All that's left is to show $(X,\calF,\mu)$ standard with negligibles $\calN$.
  Since $g$ is measurable as a function $(X,f^{-1}\calG)\to\R$,
  there must be some $G\in\calG$ with $\{x\mid g(x) = 0\} = g^{-1}(0) = f^{-1}(G)$.
  This $G$ must be $\nu$-negligible, since
  \[\nu(G) = \pbmeas f\nu(f^{-1}(G))
  = \int [f(x)\in G] g(x) \,\lambda_{f^{-1}\calG}({\rm d}x) 
  = \int [g(x) = 0] g(x) \,\lambda_{f^{-1}\calG}({\rm d}x) 
  = 0.\]
  Since $f$ is negligible-reflecting, this implies $f^{-1}(G) = \{x\mid g(x) = 0\}$
  is $\lambda$-negligible, so the derivative $g$ is $\lambda$-almost-everywhere
  strictly positive.
  This implies $\nu(F) = \int [x\in F]g(x)\,\lambda({\rm d}x) = 0$
  iff $\lambda(F) = 0$, so
  $\lambda$ and $\mu$ have the same negligible sets $\calN$.
  Finally, unforget a collection of Borel sets $\calF'$
  arising from a Polish topology on $X$ such that $\lambda$
  is the completion of the Borel measure $\lambda|_{\calF'}$.
  The measure $\mu$ is correspondingly the completion of the Borel measure
  $\mu|_{\calF'}$ because $\lambda|_{\calF'}$ and $\mu|_{\calF'}$
  have the same Borel-negligibles~\fremlinf{212E(d)}{21}{11},
  so $(X,\calF,\mu)$ standard.
\end{proof}

The following lemma can also be proved via a mild
strengthening of \citet[Definition 3.1]{wendt1998change}.

\begin{lemma} \label{app:lem:mble-right-ore}
  The category $\StdMble$
  has the right Ore property: for $f,g$
  with types as shown below,
  there exists a space $W$ and maps $h,k$
  such that $fh = gk$:
  \[ \commsquare{(W,\calK,\calS)}{k}{(Y,\calG,\calM)}{h}{g}{(X,\calF,\calN)}{f}{(Z,\calH,\calR)} \]
\end{lemma}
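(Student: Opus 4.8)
The plan is to reduce to the right Ore property of $\StdProb$ (\cref{lem:std-prob-right-ore}), using the forgetful functor $\forgetmu : \StdProb\to\StdMble$ as a bridge. The only real work is to lift the given cospan $f,g$ to a cospan in $\StdProb$ that shares a common codomain measure; once this is done, the $\StdProb$ Ore completion can be transported back along $\forgetmu$.

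First I would fix a standard probability measure $\rho$ on $(Z,\calH)$ whose negligibles are exactly $\calR$, so that $\forgetmu(Z,\calH,\rho) = (Z,\calH,\calR)$; such a $\rho$ exists because $(Z,\calH,\calR)$ is a standard enhanced measurable space. Viewing $f$ as a $\StdMble$-map $(X,\calF,\calN)\to\forgetmu(Z,\calH,\rho)$, the hypothesis that $f$ is negligible-preserving-and-reflecting gives $f^{-1}(H)\in\calN$ iff $\rho(H)=0$, which is exactly the precondition of \cref{app:lem:stdext}. That lemma then supplies a measure $\mu$ on $\calF$ with negligibles $\calN$ making $(X,\calF,\mu)$ standard and $f$ measure-preserving into $(Z,\calH,\rho)$. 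Applying the same argument to $g$ with the \emph{same} $\rho$ yields a measure $\nu$ on $\calG$ with negligibles $\calM$ making $g$ a $\StdProb$-map $(Y,\calG,\nu)\to(Z,\calH,\rho)$. Crucially, both lifts land in the single space $(Z,\calH,\rho)$, so $f,g$ now form a genuine cospan in $\StdProb$.

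Next I would invoke the right Ore property of $\StdProb$ (\cref{lem:std-prob-right-ore}) on this cospan to obtain a standard probability space $(W,\calK,\tau)$ together with measure-preserving maps $h : (W,\calK,\tau)\to(X,\calF,\mu)$ and $k : (W,\calK,\tau)\to(Y,\calG,\nu)$ satisfying $fh = gk$ in $\StdProb$. Applying $\forgetmu$ then produces $(W,\calK,\calS) := \forgetmu(W,\calK,\tau)$ together with $\StdMble$-maps $\forgetmu(h),\forgetmu(k)$; since $\forgetmu$ is a functor and $f,g$ are literally $\forgetmu$ of their measure-preserving lifts, we get $f\circ\forgetmu(h) = \forgetmu(fh) = \forgetmu(gk) = g\circ\forgetmu(k)$, which is the required commuting square in $\StdMble$.

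The main obstacle is the compatibility issue in the lifting step: a naive appeal to \cref{app:lem:mblecat-map-char}(1)$\Rightarrow$(4) would choose measures on $Z$ independently for $f$ and for $g$, and these need not agree, leaving $f$ and $g$ in different slices of $\StdProb$ over $Z$. Fixing a single $\rho$ on $Z$ up front and extending it inward to the domains via \cref{app:lem:stdext} --- rather than choosing measures on the domains first --- is exactly what resolves this, and is the one point requiring care.
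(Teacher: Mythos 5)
Your proof is correct and follows essentially the same route as the paper: lift the cospan to $\StdProb$ with a single common measure $\rho$ on $(Z,\calH)$, invoke \cref{lem:std-prob-right-ore}, and apply $\forgetmu$. The only cosmetic difference is that the paper obtains $\rho$ (together with $\mu$) from (1)$\Rightarrow$(4) of \cref{app:lem:mblecat-map-char} applied to $f$ and then uses \cref{app:lem:stdext} only for $g$, whereas you fix $\rho$ up front and apply \cref{app:lem:stdext} symmetrically to both legs --- both resolve the compatibility issue you correctly identify as the crux.
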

\begin{proof}
  By 
  (1)$\Rightarrow$(4) of \cref{app:lem:mblecat-map-char},
  there exist measures $\mu$ and $\rho$
  with negligibles $\calN$ and $\calR$
  respectively making $f$ a $\StdProb$-map
  from $(X,\calF,\mu)$ to $(Z,\calH,\rho)$.
  By \cref{app:lem:stdext} there exists
  a measure $\nu$ with negligibles $\calM$
  making $(Y,\calG,\nu)$ a standard probability space
  and $g$ a $\StdProb$-map
  from $(Y,\calG,\nu)$ to $(Z,\calH,\rho)$.
  The result follows by applying
  \cref{lem:std-prob-right-ore} and forgetting all measures.
\end{proof}

\subsubsection{Semicartesian structure of $\StdMble$}

\begin{fact}[semicartesian structure of $\StdProb$] \label{app:lem:stdprob-semicart}
  The category $\StdProb$ of standard probability spaces
  and measure-preserving maps
  is symmetric \nref{app:def:semicartesian-monoidal-category}{semicartesian monoidal}.
  The symmetric monoidal product of two standard probability spaces
  $(X,\calF,\mu)$
  and $(Y,\calG,\nu)$
  has underlying set $X\times Y$
  with $\sigma$-algebra generated by rectangles
  $F\times G$ for $F\in\calF,G\in\calG$,
  and probability measure defined by
  $(\mu\otimes\nu)(E) = \iint [(x,y)\in E]\,{\rm d}\mu(x){\rm d}\nu(y)$~\cite[Definition 5.25]{axler2020measure}.
  This preserves standardness~\cite[p. 37, \S 2.7]{rohlin1949fundamental}.
  The unit is the one-point probability space.
\end{fact}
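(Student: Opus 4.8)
The plan is to establish the symmetric monoidal structure by transporting the Cartesian-product structure on underlying sets, deferring the genuinely measure-theoretic content to the cited results, and then to verify separately that the unit is terminal. First I would check that $\otimes$ is well-defined on objects: given $(X,\calF,\mu)$ and $(Y,\calG,\nu)$, the product measure on the rectangle-generated $\sigma$-algebra is a probability measure by~\cite[Definition 5.25]{axler2020measure}, and the resulting space is again standard by~\cite[p.~37, \S2.7]{rohlin1949fundamental}, so it lands in $\StdProb$.

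Next I would promote $\otimes$ to a bifunctor. Given $\StdProb$-maps $f$ and $g$, the set-map $f\times g$ carries the preimage of a rectangle to a rectangle, hence is measurable, and it is measure-preserving because both $(\mu'\otimes\nu')\circ(f\times g)^{-1}$ and $\mu\otimes\nu$ are measures agreeing on the rectangles, which form a generating $\pi$-system. Well-definedness modulo $\aseq$ is the one place a genuine argument is needed: if $f\aseq f'$ and $g\aseq g'$, the locus where $f\times g$ and $f'\times g'$ disagree is contained in $(\{f\ne f'\}\times Y)\cup(X\times\{g\ne g'\})$, each factor of which is $(\mu\otimes\nu)$-negligible by Fubini, so the two agree $\aseq$. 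Functoriality and preservation of identities then hold because they already hold for the underlying functions.

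I would then supply the structure morphisms as the evident underlying bijections: the associator $((x,y),z)\mapsto(x,(y,z))$, the unitors $(\ast,x)\mapsto x$ and $(x,\ast)\mapsto x$, and the braiding $(x,y)\mapsto(y,x)$. Each is a measurable bijection whose inverse is again measurable (measurability is checked on rectangle generators and extends to the completions), and each is measure-preserving by the uniqueness of the product measure on rectangles --- for the braiding this is the symmetry of $\mu\otimes\nu$ under Tonelli --- so each descends to a $\StdProb$-isomorphism. The pentagon, triangle, hexagon, and symmetry equations all hold because they are literal equalities of functions on the underlying sets; since the quotient sending a measure-preserving map to its $\aseq$-class is identity-on-objects and surjective on hom-sets, these equalities descend to $\StdProb$.

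Finally I would verify that the one-point space $1$ is terminal, which upgrades the symmetric monoidal structure to a semicartesian one. For any $(X,\calF,\mu)$ the unique set-function $X\to\{\ast\}$ is measurable and measure-preserving, and any two maps into $1$ coincide on the nose, so $\StdProb(X,1)$ is a singleton. A symmetric monoidal category whose unit is terminal is by definition semicartesian, with the projections $\scmfst$ and $\scmsnd$ recovered from the unitors together with the terminal maps. I do not expect a serious obstacle here: the only non-formal steps are the cited preservation of standardness and the Fubini argument for well-definedness modulo $\aseq$, while every coherence law is inherited verbatim from the Cartesian structure of $\Set$.
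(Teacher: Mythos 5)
Your proposal is correct, but note that the paper does not prove this statement at all: it is recorded as a \emph{fact}, with the two genuinely measure-theoretic ingredients (existence of the product measure on the rectangle $\sigma$-algebra, preservation of standardness) delegated to the citations to Axler and Rohlin, and all of the categorical bookkeeping left unstated. Your write-up supplies exactly that bookkeeping, and the steps are sound: measure-preservation of $f\times g$ and of the associator, unitors, and braiding follows from agreement of two measures on the generating $\pi$-system of rectangles; well-definedness modulo $\aseq$ follows from the containment of the disagreement set in $(\{f\ne f'\}\times Y)\cup(X\times\{g\ne g'\})$, a union of two null rectangles; the coherence equations descend along the quotient from literal equalities in $\Set$; and terminality of the one-point space is precisely the paper's definition of semicartesian.

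One subtlety deserves to be made explicit rather than parenthetical. By the paper's own definition, a standard probability space is \emph{complete} (it is the completion of a standard Borel space), but the rectangle-generated $\sigma$-algebra $\calF\otimes\calG$ need not be complete: for $X=Y=[0,1]$ with Lebesgue measure, a set $\{0\}\times B$ with $B$ non-measurable lies inside the null rectangle $\{0\}\times Y$ but has a non-measurable section, hence is not in $\calF\otimes\calG$. So for the product to land in the category of standard probability spaces one must take the completion of the rectangle $\sigma$-algebra with respect to $\mu\otimes\nu$ --- this is what Rohlin's product of Lebesgue spaces actually is --- and your arguments should be run there. This costs nothing: a map that is measure-preserving with respect to a $\sigma$-algebra is automatically measure-preserving with respect to its completion, so the $\pi$-system argument, the Fubini argument, and the coherence equations all carry over verbatim.
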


The fact that $\StdMble$
is the image of $\forgetmu$ (\cref{app:lem:stdmble-img})
suggests the semicartesian symmetric monoidal structure on $\StdMble$
ought to be the image of the corresponding structure on $\StdProb$.
This is indeed the case, as we now show.

\begin{lemma} \label{app:lem:stdprob-prod-forgets-uniquely}
  Let $X$ be an object of $\StdMble$
  and $X_1,X_2$ two objects in $\StdProb$
  that forget to it (i.e., $\forgetmu X_1 = \forgetmu X_2 = X$).
  Similarly let $Y$ be an object of $\StdMble$ and $Y_1,Y_2$
  two objects of $\StdProb$ with $\forgetmu Y_1 = \forgetmu Y_2 = Y$.
  Then the monoidal products $X_1\otimes Y_1$
  and $X_2\otimes Y_2$ forget to the same
  standard enhanced measurable space: $\forgetmu (X_1\otimes Y_1) = \forgetmu (X_2\otimes Y_2)$.
\end{lemma}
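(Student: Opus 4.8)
The plan is to unfold the definitions of $\forgetmu$ and the monoidal product and reduce the claim to a statement purely about negligible sets of product measures. Write $X_1 = (A,\calF,\mu_1)$ and $X_2 = (A,\calF,\mu_2)$: since $\forgetmu X_1 = \forgetmu X_2$, these two standard probability spaces share the same underlying set $A$, the same $\sigma$-algebra $\calF$, and a common ideal $\calN$ of negligible sets, differing only in that $\mu_1\ne\mu_2$ as functions. Likewise write $Y_1 = (B,\calG,\nu_1)$ and $Y_2 = (B,\calG,\nu_2)$ with common negligibles $\calM$. By \cref{app:lem:stdprob-semicart}, both products $X_1\otimes Y_1$ and $X_2\otimes Y_2$ have underlying set $A\times B$ and $\sigma$-algebra $\calK$ generated by the rectangles $F\times G$ (for $F\in\calF$, $G\in\calG$); this $\calK$ depends only on $\calF$ and $\calG$, so it is identical for both products. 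Applying $\forgetmu$, the first two components of $\forgetmu(X_1\otimes Y_1)$ and $\forgetmu(X_2\otimes Y_2)$ already agree, and it remains only to prove that $\mu_1\otimes\nu_1$ and $\mu_2\otimes\nu_2$ induce the same negligible ideal on $\calK$.

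For this I would argue directly from the iterated-integral definition of the product measure, which for $E\in\calK$ gives $(\mu_i\otimes\nu_i)(E) = \int_B \mu_i(E_y)\,d\nu_i(y)$, where $E_y = \{x\in A \mid (x,y)\in E\}$ is the $y$-section. Every section of a set in the rectangle-generated $\sigma$-algebra $\calK$ lies in $\calF$, and by Tonelli the map $y\mapsto \mu_i(E_y)$ is $\calG$-measurable, so $S := \{y \mid \mu_1(E_y)>0\}$ is a genuine element of $\calG$. The crucial observation is that $S$ does not depend on which factor measure one uses: for each fixed $y$ the section $E_y$ is a fixed set in $\calF$, and because $\mu_1$ and $\mu_2$ share the ideal $\calN$ we have $\mu_1(E_y)=0 \iff E_y\in\calN \iff \mu_2(E_y)=0$, whence $S = \{y \mid \mu_2(E_y)>0\}$ as well. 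Chaining these facts yields $(\mu_1\otimes\nu_1)(E)=0 \iff \nu_1(S)=0 \iff \nu_2(S)=0 \iff (\mu_2\otimes\nu_2)(E)=0$, where the middle step uses that $\nu_1,\nu_2$ share the ideal $\calM$ and that $S\in\calG$. Thus the two product measures have exactly the same negligible sets, which is what was needed.

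The main obstacle is really careful bookkeeping around the two standard subtleties of product measures: (i) ensuring that sections and section-measure functions are measurable, which I handle by working on the rectangle-generated algebra $\calK$ (where every section is automatically measurable and Tonelli applies cleanly) rather than on any completion; and (ii) reconciling this with the convention under which the monoidal product of standard probability spaces is a \emph{completion} of the rectangle measure. Point (ii) is benign: the completion of a measure depends only on its ideal of negligible sets, so once $\mu_1\otimes\nu_1$ and $\mu_2\otimes\nu_2$ are shown to share negligibles on $\calK$, their completed $\sigma$-algebras coincide and so do the completed negligible ideals. An alternative to the section argument, should one prefer to avoid invoking Tonelli in this form, is to note that sharing negligibles makes each pair of factors mutually absolutely continuous, extract strictly-positive Radon–Nikodym densities $f$ and $g$, and observe that $(x,y)\mapsto f(x)g(y)$ is an almost-everywhere-positive density of $\mu_2\otimes\nu_2$ against $\mu_1\otimes\nu_1$; but this route uses Fubini as well and is no shorter.
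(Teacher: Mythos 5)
Your proof is correct, but it takes a genuinely different route from the paper's. The paper proves the lemma via Radon--Nikodym derivatives: shared negligibles make $\mu_{X_2}$ absolutely continuous with respect to $\mu_{X_1}$ (and likewise on the $Y$ side), the densities $f,g$ are argued to be almost-everywhere strictly positive, and then $h(x,y)=f(x)g(y)$ serves as a strictly positive density of $\mu_{X_2}\otimes\mu_{Y_2}$ against $\mu_{X_1}\otimes\mu_{Y_1}$, with Fubini closing the argument --- exactly the alternative you sketch and set aside in your last sentence. Your primary argument instead works directly with sections: for $E$ in the rectangle-generated $\sigma$-algebra, the set $S=\{y\mid \mu_1(E_y)>0\}$ is $\calG$-measurable by Tonelli, is unchanged when $\mu_1$ is replaced by $\mu_2$ (since sections lie in $\calF$ and the factor measures share $\calN$), and nullity of the product measure reduces to nullity of $S$, which is insensitive to swapping $\nu_1$ for $\nu_2$. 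This is more elementary --- it avoids Radon--Nikodym and the side argument that the density can be taken strictly positive, which is the fiddliest step in the paper's proof. You are also more careful than the paper on one point: the paper's statement of \cref{app:lem:stdprob-semicart} describes the product $\sigma$-algebra as rectangle-generated, yet standard probability spaces in this paper are complete, so strictly speaking a completion is involved; your observation that completion depends only on the negligible ideal, so that agreement of null sets on the rectangle algebra propagates to the completed spaces, fills a gap the paper's proof passes over silently. Both arguments are sound; the paper's buys an explicit mutual density (a mildly stronger conclusion, and a technique it reuses in \cref{app:lem:stdext}), while yours buys a shorter path to the stated claim with weaker machinery.
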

\begin{proof}
  Both $X_1\otimes Y_1$ and $X_2\otimes Y_2$
  have the same underlying set (given by set-theoretic product)
  and $\sigma$-algebra (generated by rectangles),
  so all that's left is to show
  that the measures on $X_1\otimes Y_1$ and $X_2\otimes Y_2$
  have the same negligible sets.
  Let $\mu_{X_1}$ be the measure on $X_1$
  and $\mu_{X_2}$ the measure on $X_2$.
  Since $\forgetmu (X_1) = \forgetmu (X_2) = X$,
  the measures $\mu_{X_1}$ and $\mu_{X_2}$
  have the same negligibles,
  so $\mu_{X_2}$ is absolutely continuous
  with respect to $\mu_{X_1}$
  and has a Radon-Nikodym derivative
  $f : X\to \R$.
  The derivative $f$ can be taken to be strictly positive:
  being a derivative forces
   $f >_{\text{$\mu_{X_1}$-a.e.}} 0$, in or other words that
  $E = \{x \mid f(x) = 0\}$ is $\mu_{X_1}$-negligible,
  for otherwise we would have
  $\mu_{X_2}(E) = \int [x\in E] f(x) \,{\rm d}\mu_{X_1}(x) = 0$
  contradicting the hypothesis that
  $\mu_{X_1}$ and $\mu_{X_2}$ have the same negligible sets.
  Running the same argument on $\forgetmu (Y_1) = \forgetmu (Y_2) = Y$
  shows $\mu_{Y_2}$ has a strictly positive Radon-Nikodym derivative
  $g : X\to \R$ with respect to $\mu_{X_1}$.
  Therefore the product measure $\mu_{X_2}\otimes \mu_{Y_2}$
  has Radon-Nikodym derivative $h(x,y) = f(x)g(y)$
  with respect to product measure $\mu_{X_1}\otimes \mu_{Y_1}$,
  strictly positive because $f,g$ are, and by Fubini's theorem
  \[
    (\mu_{X_2}\otimes \mu_{Y_2})(E)
    = \iint [(x,y)\in E]\,{\rm d}\mu_{X_2}(x){\rm d}\mu_{Y_2}(y)
    = \iint [(x,y)\in E]f(x)g(y)\,{\rm d}\mu_{X_1}(x){\rm d}\mu_{Y_1}(y)
  \]
  is zero exactly when $(\mu_{X_1}\otimes\mu_{Y_1})(E)$ is, as required.
\end{proof}

\begin{definition}[tensor product of standard enhanced measurable spaces] \label{app:def:stdmble-tensor}
  The \emph{tensor product}
  of standard enhanced measurable spaces $X,Y$,
  written $X\otimes Y$, is defined to be
  $\forgetmu (X'\otimes_{\StdProb} Y')$.
  where
  $X'$ and $Y'$ are arbitrary standard probability spaces
  with $\forgetmu (X') = X$ and $\forgetmu (Y') = Y$.
  This is well-defined by 
  \cref{app:lem:stdprob-prod-forgets-uniquely}: the choice of
  $X',Y'$ does not matter.

  This operation extends to a functor $-_1\otimes -_2 : \StdMble\times\StdMble\to\StdMble$:
  given $\StdMble$-maps $f : X\to Y$ and $g : A\to B$,
  pick arbitrary $X',Y'\in\StdProb$ with $\forgetmu (X') = X$ and $\forgetmu (Y') = Y$
  making $f$ into a $\StdProb$-morphism
  (which exist by \cref{app:lem:stdmble-img})
  and similarly pick $A',B'$ for $g$,
  then set $f\otimes g$ to be the map $\forgetmu (f\otimes_{\StdProb} g) : X\otimes A\to Y\otimes B$.
  The choice of $X',Y',A',B'$ doesn't matter because $\forgetmu$ faithful.
  Functoriality follows from functoriality of $\otimes_{\StdProb}$.
\end{definition}

\begin{lemma} \label{app:lem:stdmble-semicartesian}
  The category $\StdMble$
  is \nref{app:def:semicartesian-monoidal-category}{semicartesian monoidal}
  with symmetric monoidal product $\nref{app:def:stdmble-tensor}{(\otimes)}$.
\end{lemma}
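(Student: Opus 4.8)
The plan is to transport the symmetric monoidal structure of $\StdProb$ (\cref{app:lem:stdprob-semicart}) along the forgetful functor $\forgetmu$, and then verify separately that the resulting unit is terminal. Recall from \cref{app:def:stdmble-tensor} that $\otimes$ on $\StdMble$ is by definition $\forgetmu$ applied to $\otimes_{\StdProb}$, with unit $1 = \forgetmu(1)$ the one-point space. The strategy is to take each component of the coherence data of $\StdProb$ --- the associator $\alpha$, the unitors $\lambda,\rho$, and the braiding $\sigma$ --- and declare the corresponding component in $\StdMble$ to be its image under $\forgetmu$. Since $\forgetmu$ is a functor it preserves isomorphisms, so each such image is again an isomorphism; and since it preserves commuting diagrams, the pentagon, triangle, and hexagon identities that hold in $\StdProb$ transport to the corresponding identities in $\StdMble$, once we know the $\StdMble$ structure maps really are these images.

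First I would check well-definedness. The objects $X\otimes Y$ are independent of the chosen standard-probability lifts by \cref{app:lem:stdprob-prod-forgets-uniquely}, so the source and target of each candidate structure map are unambiguous. Moreover the underlying set-functions of $\alpha,\lambda,\rho,\sigma$ in $\StdProb$ are the canonical maps on Cartesian products (reassociation, projection away from a point, and the swap), which do not mention the measures; hence their images under $\forgetmu$ depend only on these underlying functions and not on the lift, giving well-defined morphisms of $\StdMble$. Next I would check naturality. Given $\StdMble$-maps $f,g,h$, \cref{app:lem:stdmble-img} lets me lift them to $\StdProb$-maps $f',g',h'$ after equipping their sources and targets with suitable standard measures; applying $\forgetmu$ to the naturality square for $\alpha$ (respectively $\lambda,\rho,\sigma$) in $\StdProb$ then yields exactly the naturality square in $\StdMble$, using that $\otimes$ on lifted morphisms is $\forgetmu$ of $\otimes_{\StdProb}$ and that a composite such as $(f\otimes g)\otimes h$ may be computed from the lift $f'\otimes_{\StdProb} g'$ of $f\otimes g$. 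Because $\forgetmu$ is faithful, the choice of lifts is immaterial. This establishes that $(\StdMble,\otimes,1)$ is symmetric monoidal.

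It remains to show $(\StdMble,\otimes,1)$ is semicartesian, i.e.\ that the unit $1$ is terminal. This I would verify directly rather than through $\forgetmu$: the one-point space carries the $\sigma$-algebra $\{\emptyset,\{*\}\}$, so the only measurable map $X\to 1$ is the constant map, which is automatically negligible-preserving and negligible-reflecting, and is already the unique such map at the level of underlying functions; hence $\StdMble(X,1)$ is a singleton. The main obstacle is not any single step but the bookkeeping in the naturality argument: because $\forgetmu$ is surjective on objects and faithful but \emph{not} full, a given $\StdMble$-map lifts to $\StdProb$ only after one is free to readjust the measures on its source and target, so one must check that the lifts of $f,g,h$ can be chosen with mutually compatible measures on shared objects. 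This is possible precisely because the naturality squares in $\StdProb$ commute for every choice of compatible measures, so any coherent choice of lifts suffices; once this is pinned down, the remaining coherence conditions are immediate from functoriality of $\forgetmu$.
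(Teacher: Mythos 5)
Your proposal is correct and follows essentially the same route as the paper's proof: the structure maps of $\StdMble$ are declared to be the images under $\forgetmu$ of the associator, unitors, and braiding of $\StdProb$, and the coherence diagrams commute because they are images of the corresponding diagrams in $\StdProb$. You additionally spell out details the paper leaves implicit --- well-definedness of the structure maps across choices of lifts, the bookkeeping for naturality given that $\forgetmu$ is faithful but not full, and the direct verification that the one-point space is terminal in $\StdMble$ --- all of which are accurate.
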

\begin{proof}
  The associator, unitor, and swapping map are
  the images under $\forgetmu$ of their counterparts
  in $\StdProb$ (\cref{app:lem:stdprob-semicart}).
  The coherence diagrams commute because
  they are the image under $\forgetmu$ of
  the corresponding diagrams for $\otimes_{\StdProb}$.
\end{proof}

\subsection{Subalgebras with enough room} \label{app:sec:subalgebras-with-enough-room}

Throughout this section,
when we say ``subalgebra'' we mean what \citet{fremlin2000measure}
calls ``closed subalgebra''.
There are two possible meanings for the word ``closed'',
one order-theoretic and one topological, but thankfully the
two coincide in our situation: \fremlinft{323H}{32}{19}
shows ``closed'' has a
canonical meaning for localizable (hence also for probability) algebras.
We will use the order-theoretic definition, so a closed subalgebra
is a boolean subalgebra closed under all small joins~\fremlinf{313D(a)}{31}{22}.

\begin{definition}[enough room] \label{app:def:enough-room}
  Say a subalgebra $\frakC$ of $[0,1]$
  has \emph{enough room}
  if
  every standard probability algebra
  is isomorphic to
  a subalgebra $\frakD$ of $\frakA$
  such that $\frakC$ and $\frakD$
  are
  stochastically independent~\fremlinf{325L}{32}{35}.
\end{definition}

\begin{intuition}
  A subalgebra $\frakC$ of $[0,1]$ having enough room
  is analogous to Lilac's requirement
  that sub-$\sigma$-algebras of the Hilbert cube
  have finite footprint~\lilac{Definition 2.6}{15}.
  The main idea is that, if $[0,1]$ is to be
  an inexhaustible source of randomness,
  then a given subalgebra
  must be small enough that 
  one can always allocate fresh (i.e., stochastically independent)
  standard probability algebras inside of it.
\end{intuition}

\ifdefined\flavortext

\begin{note}
  We have a lot of freedom in how we choose to picture
  the measure algebra $[0,1]$. As a measure algebra, it is
  isomorphic to $[0,1]^2$ (the Lebesgue measure on the unit square)
  and $[0,1/2] \times [0,1/2]$ and $[0,1/3]\times[0,2/3]$
  (where $\times$ is the simple product of measure algebras, so that these algebras
  correspond to the disjoint
  union of two intervals) and so on.
  Because of this, in pictures below $[0,1]$ may be drawn as any of these spaces.
\end{note}

\subsubsection{A subalgebra with enough room}

\begin{example}
  Let $\frakA$ be the probability algebra given by the Lebesgue measure on the unit square $[0,1]^2$.
  Let $\frakC$ be the subalgebra generated by the projection $\pi_1 : [0,1]^2\to[0,1]$,
  corresponding to the sub-$\sigma$-algebra consisting of rectangles of the form
  $F\times [0,1]$ for $F$ a Lebesgue-measurable subset of $[0,1]$.
  The subalgebra $\frakC$ has enough room, because any standard probability algebra
  can be embedded as a subalgebra of $\frakA$
  corresponding to a sub-$\sigma$-algebra consisting of rectangles of the form 
  $[0,1]\times F'$.
\end{example}

\input{subalgebra-with-not-enough-room}

\fi 

\subsubsection{Characterizing subalgebras with enough room}

\ifdefined\flavortext
We now formalize the intuitions sketched in the
above example, and characterize the subalgebras with enough room.

\begin{lemma} \label{app:lem:indep-incomparable}
  Let $\frakC$ be a subalgebra of a measure algebra $(\frakA,\mu)$
  and $a$ an element of $\frakA$ with nontrivial measure (so $\mu(a)\notin\{0,1\}$)
  that is independent of $\frakC$ in the sense that $\mu(a\cap c) = \mu(a)\mu(c)$
  for all $c$ in $\frakC$.
  Then $a$ is 
  incomparable with every element of $\frakC$;
  that is, for all $c$ nontrivial in $\frakC$ we have neither $a\subseteq c$ nor $a\supseteq c$.
\end{lemma}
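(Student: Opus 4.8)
The plan is to prove the statement by contradiction, treating the two comparabilities $a\subseteq c$ and $a\supseteq c$ separately and using independence to collapse each order relation into an equation between measures.

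Before the case split, I would record the preliminary observation that $\mu(\top)=1$: instantiating the independence hypothesis at $c=\top\in\frakC$ gives $\mu(a)=\mu(a\cap\top)=\mu(a)\mu(\top)$, and since $\mu(a)>0$ (as $\mu(a)\neq 0$) this forces $\mu(\top)=1$. This is the only place the ambient algebra's normalization is used, and it lets me pass freely between $\mu(c)=1$ and $c=\top$.

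For the first case, suppose $a\subseteq c$ for some nontrivial $c\in\frakC$. Then $a\cap c=a$, so independence yields $\mu(a)=\mu(a)\mu(c)$; dividing by $\mu(a)>0$ gives $\mu(c)=1$. I would then invoke strict positivity of the measure: writing $\top=c\vee(\top\setminus c)$ as a disjoint join, additivity gives $1=\mu(\top)=\mu(c)+\mu(\top\setminus c)=1+\mu(\top\setminus c)$, so $\mu(\top\setminus c)=0$, whence $\top\setminus c=\bot$ and $c=\top$, contradicting nontriviality of $c$. For the second case, suppose $a\supseteq c$, i.e.\ $c\subseteq a$. Then $a\cap c=c$, and independence gives $\mu(c)=\mu(a)\mu(c)$; since $c\neq\bot$ we have $\mu(c)>0$, so dividing gives $\mu(a)=1$, contradicting $\mu(a)\notin\{0,1\}$ outright.

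There is essentially no obstacle here: the argument is a direct computation, and the only subtlety worth flagging is the appeal to strict positivity ($\mu(x)>0$ for $x\neq\bot$) to upgrade $\mu(c)=1$ to $c=\top$ in the first case, together with the initial normalization $\mu(\top)=1$. Both are immediate from the measure-algebra axioms, so the content of the lemma is really just that independence and nontrivial comparability are incompatible once the measure is strictly positive and normalized.
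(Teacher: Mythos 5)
Your proof is correct and follows essentially the same route as the paper's: split into the two comparability cases, use independence to turn $a\cap c=a$ (resp.\ $a\cap c=c$) into the equation $\mu(a)=\mu(a)\mu(c)$ (resp.\ $\mu(c)=\mu(a)\mu(c)$), and contradict nontriviality. The paper's version is terser (it reads nontriviality of $c$ directly at the level of measures), while you additionally derive the normalization $\mu(\top)=1$ and use strict positivity to upgrade $\mu(c)=1$ to $c=\top$; this extra care is harmless and only makes the element-level reading of ``nontrivial'' explicit.
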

\begin{proof}
  If $a\subseteq c$ for some nontrivial $c$ then $\mu(a) = \mu(a\cap c) = \mu(a)\mu(c)$
  forcing $\mu(c) = 0$ or $\mu(a) = 1$, impossible because $a,c$ nontrivial.
  Similarly, if $a\supseteq c$ for some nontrivial $c$
  then $\mu(c) = \mu(a\cap c) = \mu(a)\mu(c)$
  forcing $\mu(a) = 0$ or $\mu(c) = 1$, impossible because $a,c$ nontrivial.
\end{proof}

\fi 

\begin{lemma} \label{app:lem:enough-room-interval}
  A subalgebra $\frakC$ of $[0,1]$ has \nameref{app:def:enough-room}
  iff there exists a subalgebra $\frakD$ isomorphic to $[0,1]$
  with $\frakC,\frakD$ independent.
\end{lemma}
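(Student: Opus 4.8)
The plan is to treat the two directions separately, with the forward implication essentially free and the reverse carrying all the content. For the forward direction, suppose $\frakC$ has enough room in the sense of \cref{app:def:enough-room}. The Lebesgue algebra $[0,1]$ is itself a standard probability algebra (the atomless case of \cref{app:def:stdprobalg-def}, with $p=1$ and no atoms), so applying the definition of enough room to $\frakB=[0,1]$ directly produces a subalgebra $\frakD$ of $[0,1]$ with $\frakD\cong[0,1]$ and $\frakC,\frakD$ independent. This is exactly the right-hand side, so nothing further is needed.

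For the reverse direction, assume there is a subalgebra $\frakD\cong[0,1]$ independent from $\frakC$, and fix an arbitrary standard probability algebra $\frakB$; I must produce a copy of $\frakB$ inside $[0,1]$ that is independent from $\frakC$. The strategy is to embed $\frakB$ into $\frakD$ \emph{itself}. Independence passes downward along subalgebra inclusions: if $\frakB'\subseteq\frakD$ then for $b\in\frakB'$ and $c\in\frakC$ the identity $\overline\mu(b\wedge c)=\overline\mu(b)\,\overline\mu(c)$ is just the special case $d=b$ of the independence of $\frakD$ and $\frakC$, so $\frakB'$ is automatically independent from $\frakC$. Thus the whole reverse direction reduces to a single fact: every standard probability algebra embeds as a (measure-preserving, closed) subalgebra of the Lebesgue algebra $[0,1]$, whence of $\frakD$ via the isomorphism $\frakD\cong[0,1]$.

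To prove that embedding fact I would invoke the structure theorem of \cref{app:def:stdprobalg-def}, writing $\frakB\cong([0,p],\overline\lambda)\times\prod_{i\in I}\atom{q_i}$ with $p+\sum_{i}q_i=1$ (the purely atomic case being $p=0$). Since $[0,1]$ is atomless and these prescribed masses sum to $1$, I can carve $[0,1]$ into a partition consisting of one element $a_*$ of measure $p$ together with countably many elements $a_i$ of measure $q_i$. Then I take $\frakB'$ to be the closed subalgebra generated by all measurable subsets of $a_*$ together with the elements $\{a_i\}_{i\in I}$; concretely its elements are the $e\vee\bigvee_{i\in J}a_i$ with $e\le a_*$ and $J\subseteq I$. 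The principal ideal below $a_*$ is an atomless standard probability algebra of total mass $p$, hence isomorphic to $([0,p],\overline\lambda)$ by the classification of atomless standard probability spaces (\citet{rohlin1949fundamental}), while each $a_i$ contributes a two-element factor $\atom{q_i}$; this exhibits a measure-preserving isomorphism $\frakB'\cong\frakB$.

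The main obstacle is entirely inside this embedding lemma: identifying the principal ideal below $a_*$ with $([0,p],\overline\lambda)$ relies on the classification fact that any two atomless standard probability spaces of equal total mass are isomorphic, which is the one nontrivial import from measure theory. A secondary, routine point is verifying that $\frakB'$ is closed under \emph{all} small joins rather than merely finite ones; this is handled by noting that probability algebras are ccc, so small joins reduce to countable joins, and a countable join of elements of the described form again has that form and so stays inside $\frakB'$.
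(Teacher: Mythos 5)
Your proof is correct and follows essentially the same route as the paper's: the forward direction is immediate because $[0,1]$ is itself a standard probability algebra, and the reverse direction embeds an arbitrary standard probability algebra into $\frakD\cong[0,1]$ and observes that independence from $\frakC$ is inherited by subalgebras of $\frakD$. The only difference is that you spell out (via the structure theorem and a ccc argument for closure under small joins) the embedding fact that every standard probability algebra sits inside $[0,1]$, which the paper simply asserts.
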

\begin{proof}
  If $\frakC$ has enough room then there certainly
  is an independent subalgebra isomorphic to $[0,1]$,
  since $[0,1]$ is a \nref{app:def:stdprobalg}{standard probability algebra}.
  Conversely, any standard probability algebra $\frakF$
  can be embedded as subalgebra of $[0,1]$,
  so if $\frakC$ is independent of $\frakD$ isomorphic to $[0,1]$
  then the subalgebra of $\frakD$ corresponding to $\frakF$
  is also independent of $\frakC$.
\end{proof}

\begin{lemma} \label{app:lem:enough-room-mono}
  If $\frakC$ has \nameref{app:def:enough-room}
  and $\frakB\subseteq\frakC$
  then $\frakB$  has enough room.
\end{lemma}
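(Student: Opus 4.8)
The plan is to exploit the characterization of enough room supplied by \cref{app:lem:enough-room-interval}, together with the elementary fact that stochastic independence is inherited when the first algebra shrinks. Since $\frakC$ has enough room, the forward direction of \cref{app:lem:enough-room-interval} produces a subalgebra $\frakD$ of $[0,1]$ isomorphic to $[0,1]$ such that $\frakC$ and $\frakD$ are independent. I claim this same $\frakD$ witnesses that $\frakB$ has enough room, so the whole argument reduces to checking that $\frakB$ and $\frakD$ are independent and then invoking the reverse direction of the characterization.

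First I would verify that $\frakB$ and $\frakD$ are independent. Independence of $\frakC$ and $\frakD$ means $\overline\mu(c\wedge d) = \overline\mu(c)\,\overline\mu(d)$ for every $c\in\frakC$ and $d\in\frakD$. Because $\frakB\subseteq\frakC$, every $b\in\frakB$ is in particular an element of $\frakC$, so the same identity holds with $b$ in place of $c$; hence $\frakB$ and $\frakD$ are independent. Put differently, the defining condition for independence is universally quantified over the first algebra, and passing to a subalgebra only discards constraints.

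Second, having produced a subalgebra $\frakD$ isomorphic to $[0,1]$ that is independent of $\frakB$, I would apply the reverse (``if'') direction of \cref{app:lem:enough-room-interval} to conclude immediately that $\frakB$ has enough room.

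There is essentially no obstacle here: the mathematical content is just the monotonicity of independence under shrinking the first algebra, with \cref{app:lem:enough-room-interval} doing all the remaining work of translating between the ``enough room'' condition and the existence of a single independent copy of $[0,1]$. The only point deserving a moment's care is confirming that Fremlin's notion of stochastic independence~(\fremlinf{325L}{32}{35}) really is the pointwise product condition used above, so that restricting the first factor to a subalgebra visibly preserves it; this is immediate from the definition.
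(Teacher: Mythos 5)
Your proposal is correct and follows exactly the paper's own argument: extract a copy $\frakD\cong[0,1]$ independent of $\frakC$ via \cref{app:lem:enough-room-interval}, observe that independence passes to the subalgebra $\frakB\subseteq\frakC$, and apply the converse direction of \cref{app:lem:enough-room-interval}. The only difference is that you spell out the monotonicity-of-independence step that the paper states without comment, which is a reasonable amount of added detail.
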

\begin{proof}
  There exists $\frakD$ with $\frakD\cong[0,1]$ and $\frakC,\frakD$ independent,
  so $\frakB,\frakD$ independent,
  so $\frakB$ has enough room by \cref{app:lem:enough-room-interval}.
\end{proof}

\begin{lemma} \label{app:lem:enough-room-iso}
  If $\frakC$ has \nameref{app:def:enough-room}
  and $\pi$ is an automorphism of $[0,1]$
  then $\pi \frakC$ has enough room.
\end{lemma}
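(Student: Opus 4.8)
The plan is to show that the property of having enough room is preserved under automorphisms of $[0,1]$, and by Lemma~\ref{app:lem:enough-room-interval} it suffices to exhibit a subalgebra isomorphic to $[0,1]$ that is independent of $\pi\frakC$. Since $\frakC$ has enough room, that same lemma gives a subalgebra $\frakD\cong[0,1]$ with $\frakC,\frakD$ independent. The natural candidate for the witness required by $\pi\frakC$ is simply $\pi\frakD$.

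First I would observe that $\pi$, being a complete Boolean algebra automorphism, preserves all the structure used in the definition of enough room: it carries subalgebras to subalgebras, preserves the isomorphism type of a subalgebra (so $\pi\frakD\cong\frakD\cong[0,1]$), and preserves the measure $\overline\mu$ because it is a measure-algebra automorphism of the standard probability algebra on $[0,1]$. The key step is then to verify that independence is preserved: for any $c\in\frakC$ and $d\in\frakD$ we have $\overline\mu(c\cap d)=\overline\mu(c)\,\overline\mu(d)$, and applying $\pi$ together with measure-preservation gives
\[
  \overline\mu(\pi(c)\cap\pi(d))
  = \overline\mu(\pi(c\cap d))
  = \overline\mu(c\cap d)
  = \overline\mu(c)\,\overline\mu(d)
  = \overline\mu(\pi(c))\,\overline\mu(\pi(d)),
\]
using that $\pi$ is a Boolean homomorphism (so $\pi(c\cap d)=\pi(c)\cap\pi(d)$) and measure-preserving. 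Since every element of $\pi\frakC$ is of the form $\pi(c)$ and every element of $\pi\frakD$ is $\pi(d)$, this shows $\pi\frakC$ and $\pi\frakD$ are independent.

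Having established that $\pi\frakD$ is a subalgebra isomorphic to $[0,1]$ and independent of $\pi\frakC$, I would conclude by Lemma~\ref{app:lem:enough-room-interval} that $\pi\frakC$ has enough room, completing the proof. I do not anticipate any serious obstacle here: the entire argument rests on the fact that measure-algebra automorphisms preserve exactly the three ingredients (subalgebra structure, isomorphism type, and the independence relation, which is defined purely in terms of meets and the measure) appearing in the characterization of enough room. The only point demanding mild care is confirming that $\pi$ restricts to an isomorphism $\frakD\to\pi\frakD$ and that independence was correctly reformulated via Lemma~\ref{app:lem:enough-room-interval} so that checking a single pair $(\pi\frakC,\pi\frakD)$ genuinely suffices rather than quantifying over all standard probability algebras directly.
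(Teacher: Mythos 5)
Your proof is correct and follows the paper's own argument exactly: both invoke \cref{app:lem:enough-room-interval} to reduce the claim to exhibiting a subalgebra isomorphic to $[0,1]$ independent of $\pi\frakC$, and both take $\pi\frakD$ where $\frakD$ witnesses that $\frakC$ has enough room. The only difference is that you spell out the routine verification (via measure-preservation and the Boolean homomorphism property of $\pi$) that independence and isomorphism type are carried along, which the paper leaves implicit.
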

\begin{proof}
  There exists $\frakD$ with $\frakD\cong[0,1]$ and $\frakC,\frakD$ independent,
  so $\pi\frakC,\pi\frakD$ independent and $\pi\frakD\cong[0,1]$,
  so $\pi\frakB$ has enough room by \cref{app:lem:enough-room-interval}.
\end{proof}

\begin{lemma}
  If a subalgebra $\frakC$ of $\frakA$ has \nameref{app:def:enough-room}
  then for any standard probability algebra there exists
  a subalgebra $\frakD$ isomorphic to it and independent of $\frakC$
  such that the subalgebra generated by $\frakC,\frakD$
  still has enough room.
\end{lemma}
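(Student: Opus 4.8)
The plan is to use \cref{app:lem:enough-room-interval} to extract a spare copy of $[0,1]$ that is independent of $\frakC$, split that copy into two independent halves, build $\frakD$ inside one half, and keep the other half in reserve as the witness that $\langle\frakC,\frakD\rangle$ is still roomy. So given an arbitrary standard probability algebra $\frakF$, the target subalgebra $\frakD$ will be carved out of a piece of fresh randomness that is independent of $\frakC$, with a second piece of the same fresh randomness held back to certify enough room afterwards.

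Concretely, I would first apply \cref{app:lem:enough-room-interval}: because $\frakC$ has \nameref{app:def:enough-room}, there is a subalgebra $\frakE\cong[0,1]$ with $\frakC$ and $\frakE$ independent. Since the tensor square $[0,1]\otimes[0,1]$ is an atomless standard probability algebra, it is isomorphic to $[0,1]$ by the classification of standard probability algebras (\cref{app:def:stdprobalg-def}); transporting its two tensor factors through $\frakE\cong[0,1]$ produces subalgebras $\frakE_1,\frakE_2\subseteq\frakE$ with $\frakE_1\cong\frakE_2\cong[0,1]$ and $\frakE_1,\frakE_2$ independent of each other.

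The next step is the independence bookkeeping: I would check that $\frakC,\frakE_1,\frakE_2$ are mutually independent in the sense of \fremlinf{325L}{32}{35}. For $c\in\frakC$, $e_1\in\frakE_1$, $e_2\in\frakE_2$, independence of $\frakC$ and $\frakE$ (applied to $e_1\cap e_2\in\frakE$) gives $\overline\mu(c\cap e_1\cap e_2)=\overline\mu(c)\,\overline\mu(e_1\cap e_2)$, while independence of $\frakE_1$ and $\frakE_2$ gives $\overline\mu(e_1\cap e_2)=\overline\mu(e_1)\,\overline\mu(e_2)$; the pairwise conditions hold since $\frakE_1,\frakE_2\subseteq\frakE$ and $\frakE$ is independent of $\frakC$.

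Finally I would assemble the conclusion. Embed $\frakF$ into $\frakE_1\cong[0,1]$ as a subalgebra $\frakD\cong\frakF$ (every standard probability algebra embeds into $[0,1]$, as already used in \cref{app:lem:enough-room-interval}); then $\frakD\subseteq\frakE_1\subseteq\frakE$ is independent of $\frakC$ because $\frakE$ is. To see $\langle\frakC,\frakD\rangle$ has \nameref{app:def:enough-room}, note $\frakE_2\cong[0,1]$ and $\langle\frakC,\frakD\rangle\subseteq\langle\frakC,\frakE_1\rangle$, so it suffices that $\frakE_2$ be independent of $\langle\frakC,\frakE_1\rangle$, after which the converse half of \cref{app:lem:enough-room-interval} yields enough room. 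The main obstacle is exactly this last independence claim: passing from independence of $\frakE_2$ from the \emph{generators} $\frakC$ and $\frakE_1$ to independence from the \emph{generated} subalgebra $\langle\frakC,\frakE_1\rangle$. I expect to handle it with a $\pi$-system argument, since the meets $c\cap e_1$ form a generating $\pi$-system on which the product law $\overline\mu(e_2\cap c\cap e_1)=\overline\mu(e_2)\,\overline\mu(c\cap e_1)$ already holds by the triple independence established above, so independence propagates to the whole generated subalgebra.
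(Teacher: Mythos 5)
Your proof is correct and takes essentially the same route as the paper: the paper likewise uses the enough-room hypothesis to obtain a copy of $[0,1]^2=[0,1]\lmafp[0,1]$ independent of $\frakC$, embeds the given standard probability algebra into one tensor factor as $\frakD$, and keeps the other factor $\top\lmafp[0,1]\cong[0,1]$ in reserve as the witness that $\langle\frakC,\frakD\rangle$ still has enough room via \cref{app:lem:enough-room-interval}. The step you flag as the main obstacle---upgrading independence of $\frakE_2$ from the generators $\frakC,\frakE_1$ to the generated subalgebra---is exactly what the paper discharges by citing Fremlin's associativity of independent subalgebras (\fremlinf{272K}{27}{14}, \fremlinf{325X(g)}{32}{37}), and your $\pi$-system argument is a correct hands-on proof of that same fact.
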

\begin{proof}
  Since $\frakC$ has enough room,
  there is a copy of $[0,1]^2$ embedded in $\frakA$ independent of $\frakC$.
  Any standard probability space can be embedded
  as a subalgebra $\frakD$ of $[0,1]$,
  hence as a subalgebra $\frakD \lmafp\top$
  of $[0,1]^2 = [0,1]\lmafp[0,1]$~\fremlinf{325F}{32}{32},
  where $\lmafp$ is the localizable measure algebra free product~\fremlinf{325E}{32}{32}.
  By associativity of independent subalgebras~\fremlinf{272K}{27}{14}~\fremlinf{325X(g)}{32}{37}
  this implies the subalgebra generated by $\frakC,\frakD$
  is independent of the subalgebra $\top \lmafp[0,1] \cong [0,1]$
  of the copy of $[0,1]^2$ embedded in $\frakC$.
  Thus we have found $\frakD$ such that the subalgebra generated by $\frakC,\frakD$
  is independent of a copy of $[0,1]$,
  which implies the subalgebra generated by $\frakC,\frakD$
  has enough room by \cref{app:lem:enough-room-interval}.
\end{proof}

\begin{lemma} \label{app:lem:enough-room-rel-atomless}
  If $\frakC\subseteq[0,1]$ has enough room,
  then $[0,1]$ is relatively atomless over $\frakC$~\fremlinf{331A}{33}{1}.
\end{lemma}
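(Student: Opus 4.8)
The plan is to argue by contradiction, using the independent copy of $[0,1]$ furnished by \cref{app:lem:enough-room-interval} to split any putative relative atom. First I would invoke \cref{app:lem:enough-room-interval} to obtain a subalgebra $\frakD\cong[0,1]$ with $\frakC,\frakD$ independent inside $[0,1]$. Suppose toward a contradiction that some $a\ne\bot$ is a $\frakC$-relative atom of $[0,1]$, so that its principal ideal is exactly $\{a\cap c\mid c\in\frakC\}$; in particular every element below $a$ has this form. Writing $\mu$ for the measure on $[0,1]$, we have $\mu(a)>0$ since $\mu$ is strictly positive on nonzero elements.

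Next I would extract randomness from $\frakD$. Since $\frakD\cong[0,1]$, it contains a countable family $d_1,d_2,\dots$ that is mutually independent with $\mu(d_i)=1/2$ (e.g.\ the binary-digit events); because $\frakD$ is independent of $\frakC$, the family $\{d_i\}$ is independent of $\frakC$ as well, so that $\langle d_1\rangle,\dots,\langle d_n\rangle,\frakC$ form a mutually independent family for every $n$ (independence of each $d_i$ from $\frakC$ combines with mutual independence of the $d_i$). As $a\cap d_i\le a$, the relative-atom hypothesis yields $c_i\in\frakC$ with $a\cap d_i=a\cap c_i$, which forces $a\cap(d_i\triangle c_i)=\bot$, i.e.\ $a\le(d_i\Leftrightarrow c_i)$ where $d_i\Leftrightarrow c_i=\neg(d_i\triangle c_i)$.

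Finally I would estimate $\mu(a)$ from above. Set $B_n=\bigcap_{i=1}^n(d_i\Leftrightarrow c_i)$, with $B_0=\top$. Splitting $B_{n-1}=(B_{n-1}\cap c_n)\sqcup(B_{n-1}\cap\neg c_n)$ and using that $d_n$ is independent of $\langle d_1,\dots,d_{n-1},\frakC\rangle$ (which contains both $B_{n-1}$ and $c_n$), I would compute $\mu(B_n)=\mu(B_{n-1}\cap c_n\cap d_n)+\mu(B_{n-1}\cap\neg c_n\cap\neg d_n)=\tfrac12\mu(B_{n-1})$, hence $\mu(B_n)=2^{-n}$. Since $a\le B_n$ for every $n$, this gives $\mu(a)\le 2^{-n}\to0$, contradicting $\mu(a)>0$; therefore $[0,1]$ has no $\frakC$-relative atom and is relatively atomless over $\frakC$.

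The substantive point—and the step I expect to be the main obstacle—is that one cannot split $a$ using a single element of $\frakD$: the relative-atom hypothesis only gives $a\cap d=a\cap c$ for some $c$, and since $a$ need not be independent of $\frakD$, one coin yields no contradiction. The resolution is to bound $\mu(a)$ from above by $\mu(B_n)$ using infinitely many independent coins; crucially this estimate uses only the independence of the $d_i$ from $\frakC$ (which \cref{app:lem:enough-room-interval} supplies) and never any independence between $a$ and $\frakD$. The only facts about independent families invoked are standard (see \citet{fremlin2000measure}).
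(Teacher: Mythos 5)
Your proof is correct, but it takes a genuinely different route from the paper's. The paper argues structurally: it also starts from \cref{app:lem:enough-room-interval} to get $\frakD\cong[0,1]$ independent of $\frakC$, but then identifies the generated subalgebra $\angled{\frakC,\frakD}$ with the free product $\frakC\lmafp[0,1]$ via \fremlinf{325L}{32}{35}, observes that this free product is relatively atomless over the rectangle subalgebra $\{c\lmafp\top\mid c\in\frakC\}$ (the element $c\lmafp[0,1/2]$ lies below $c\lmafp\top$ but is not of the form $(c\lmafp\top)\cap(d\lmafp\top)$), and finally transfers relative atomlessness upward from the intermediate closed subalgebra to all of $[0,1]$ using \fremlinf{331Y(a)}{33}{9}. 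You instead run a quantitative argument by contradiction: the relative-atom hypothesis traps $a$ below every $d_i\Leftrightarrow c_i$, and the inductive independence computation forces $\mu(a)\le\mu(B_n)=2^{-n}$ for all $n$, contradicting strict positivity of the measure on nonzero elements. Your route is more elementary and self-contained (no free-product machinery, no upward-transfer exercise; only the existence of a sequence of independent coins and finite-additivity manipulations of independent families), while the paper's route buys the structural picture that the rest of the section runs on: its free-product identification is essentially the first leg of the characterization in \cref{app:thm:enough-room-char}, whose remaining implications also go through $\frakC\lmafp[0,1]$.

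One caution on a justification, not on the substance. The parenthetical ``independence of each $d_i$ from $\frakC$ combines with mutual independence of the $d_i$'' is, read literally, the classic fallacy: with $d_1,d_2$ independent fair coins and $\frakC$ generated by $d_1\triangle d_2$, each $d_i$ is independent of $\frakC$ and the $d_i$ are mutually independent, yet $\mu(d_1\cap d_2\cap(d_1\triangle d_2))=0\ne\tfrac18$. What actually makes your claim true is what your main clause invokes: $\frakC$ is independent of the \emph{whole subalgebra} $\frakD$, and any intersection $e_1\cap\dots\cap e_n$ with $e_i\in\angled{d_i}$ stays inside $\frakD$, so
\begin{align*}
\mu(e_1\cap\dots\cap e_n\cap c)=\mu(e_1\cap\dots\cap e_n)\,\mu(c)=\mu(e_1)\cdots\mu(e_n)\,\mu(c).
\end{align*}
With that phrasing fixed, the step $\mu(B_n)=\tfrac12\mu(B_{n-1})$ goes through (expand $B_{n-1}\cap c_n$ over the atoms of the finite algebra generated by $d_1,\dots,d_{n-1}$, or cite the associativity of independence \fremlinf{272K}{27}{14} as the paper does elsewhere), and the rest of your argument stands as written.
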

\begin{proof}
  Since $\frakC$ has enough room, there is a subalgebra $\frakD$
  of $[0,1]$ with $\frakD\cong[0,1]$ and $\frakC,\frakD$ independent.
  Let $\angled{\frakC,\frakD}$ be the subalgebra generated by
  $\frakC$ and $\frakD$.
  Since $\frakD\cong[0,1]$,
  there is an isomorphism
  $\pi : \angled{\frakC,\frakD}\cong \frakC\lmafp\frakD \cong \frakC\lmafp[0,1]$~\fremlinf{325L}{32}{35}.
  Under this isomorphism, elements $c$ of $\frakC$
  look like rectangles $\pi c \lmafp \top_{[0,1]}$,
  and the image of $\pi$ is the subalgebra $\{c \lmafp \top \mid c \in [0,1]\}\subseteq \frakC\lmafp[0,1]$ of all such rectangles.
  The measure algebra $\frakC\lmafp[0,1]$ is relatively atomless
  over this subalgebra, because the principal ideal generated by any rectangle $c\lmafp \top$
  contains elements such as $c\lmafp[0,1/2]$ that are not of the form $(c\lmafp\top)\cap (d\lmafp\top)$
  for any $d$.
  Thus $\angled{\frakC,\frakD}$ is relatively atomless over $\frakC$.
  Since $\frakC\subseteq\angled{\frakC,\frakD}\subseteq[0,1]$,
  this implies $[0,1]$ relatively atomless over $\frakC$~\fremlinf{331Y(a)}{33}{9}.
\end{proof}

\begin{lemma} \label{app:lem:rel-atomless-projection}
  If $[0,1]$ is $\frakC$-relatively atomless,
  then there is an isomorphism $\pi : [0,1]\to \frakC\lmafp[0,1]$
  with $\pi c = c\lmafp \top_{[0,1]}$
  for all $c$ in $\frakC$.
\end{lemma}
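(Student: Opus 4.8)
The plan is to deduce the statement from Fremlin's \emph{relative Maharam theorem}, after first upgrading relative atomlessness to \emph{relative Maharam-type-homogeneity}. Write $\frakA$ for the measure algebra $[0,1]$, and for $a\in\frakA$ write $\frakA_a$ for its principal ideal and $\frakC_a=\{a\cap c\mid c\in\frakC\}$ for the trace of $\frakC$ below $a$. Recall that the relative Maharam type $\tau_\frakC(a)$ is the least cardinality of a set $D\subseteq\frakA_a$ with $\frakC_a\cup D$ generating $\frakA_a$ as a closed subalgebra, and that a nonzero $a$ is a $\frakC$-relative atom precisely when $\tau_\frakC(a)=0$.

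First I would bound the relative type from above. Since $\frakA=[0,1]$ is separable (Maharam type $\omega$), every principal ideal $\frakA_a$ with $a\ne\bot$ again has absolute Maharam type at most $\omega$; as the relative type never exceeds the absolute type, $\tau_\frakC(a)\le\omega$ for all $a\ne\bot$.

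The crux is to show that relative atomlessness forces $\tau_\frakC(a)=\omega$ for \emph{every} $a\ne\bot$, so that $\frakA$ is relatively Maharam-type-homogeneous over $\frakC$ with relative type $\omega$. Relative atomlessness already gives $\tau_\frakC(a)\ge 1$, so it suffices to rule out finite positive values. The key observation is that if $\frakA_a$ is generated over $\frakC_a$ by a single element $d$, then every element of $\frakA_a$ has the form $(c_1\cap d)\cup(c_2\setminus d)$ with $c_1,c_2\in\frakC_a$; intersecting with $d$ then yields $\frakA_d=\{c\cap d\mid c\in\frakC_a\}=\frakC_d$, so $d$ is a $\frakC$-relative atom --- contradicting relative atomlessness. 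Iterating this reduction over finitely many generators rules out each finite positive value, leaving $\tau_\frakC(a)=\omega$ throughout.

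Finally I would invoke the relative Maharam theorem~\fremlinf{333C}{33}{18}. The free product $\frakC\lmafp\frakB_\omega$, where $\frakB_\omega$ is the Maharam-homogeneous algebra of type $\omega$, is itself relatively Maharam-type-homogeneous of relative type $\omega$ over its canonical copy $\frakC\lmafp\top$ of $\frakC$; since $\frakA$ is relatively homogeneous of the same relative type over $\frakC$, the uniqueness part of the theorem supplies an isomorphism matching the two copies of $\frakC$, i.e.\ one carrying each $c$ to $c\lmafp\top$. Identifying $\frakB_\omega$ with the Lebesgue algebra $[0,1]$~\fremlinf{331K}{33}{9} gives the desired $\pi:[0,1]\to\frakC\lmafp[0,1]$ with $\pi c=c\lmafp\top_{[0,1]}$. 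The main obstacle is the homogeneity upgrade of the previous paragraph together with checking the hypotheses of the relative Maharam theorem precisely, so that the isomorphism it produces is genuinely the identity on $\frakC$ and hence of the stated form.
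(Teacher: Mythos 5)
Your proposal is correct in outline, but it routes through Fremlin's \S 333 differently than the paper does. The paper's proof invokes the canonical decomposition theorem \fremlinf{333K}{33}{29} wholesale: it writes $[0,1]\cong\prod_{n}\frakC_{e_n}\times\prod_{\kappa}\frakC_{e_\kappa}\lmafp\frakB_\kappa$ with $c\mapsto((c\cap e_n)_n,((c\cap e_\kappa)\lmafp\top)_\kappa)$, kills the relative-atom factors $\frakC_{e_n}$ using $\frakC$-relative atomlessness, kills every $\kappa\ne\omega$ using Maharam-type-homogeneity of $[0,1]$ (\fremlinf{331K}{33}{7}, \fremlinf{254K}{25}{53}, \fremlinf{331H(c)}{33}{5}), and finally uses measure-preservation to force $e_\omega=\top_\frakC$. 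You instead first upgrade relative atomlessness to relative Maharam-type-homogeneity of type $\omega$ and then apply the uniqueness/extension half of the theory, \fremlinf{333C(b)}{33}{22}, against the explicit model $\frakC\lmafp\frakB_\omega$. What your route buys is transparency about what relative atomlessness is doing (forcing relative type $\omega$ everywhere); what the paper's route buys is that 333K hands over the embedding formula $c\mapsto c\lmafp\top$ for free, so no homogeneity of the free-product side ever has to be verified.

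Two ingredients of yours need shoring up, though neither is fatal. First, your homogeneity upgrade is precisely \fremlinf{333B(d)}{33}{22}, which the paper itself cites for this exact purpose in \cref{app:lem:rel-atomless-omega}, so you could simply cite it; if you insist on proving it, note that your ``iterate over finitely many generators'' step, taken literally, only shows that $d_n$ is a relative atom over the \emph{enlarged} subalgebra generated by $\frakC_a\cup\{d_1,\dots,d_{n-1}\}$, not over $\frakC$. The clean finite-type argument is: if $\frakA_a$ is generated over $\frakC_a$ by $d_1,\dots,d_n$, then every element of $\frakA_a$ has the form $\sup_{S\subseteq\{1,\dots,n\}}(c_S\cap e_S)$ with $c_S\in\frakC_a$ and $e_S$ the $2^n$ cells of the partition generated by the $d_i$, whence $\frakA_{e_S}=\frakC_{e_S}$ for every cell, and any nonzero cell is a $\frakC$-relative atom (your single-generator computation is the case $n=1$). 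Second, you assert without justification that $\frakC\lmafp\frakB_\omega$ is relatively Maharam-type-homogeneous of relative type $\omega$ over its canonical copy $\frakC\lmafp\top$; this is true, but it is exactly as much work as the left-hand side --- it follows from relative atomlessness of the free product over $\frakC\lmafp\top$ together with the countable bound on its (relative) Maharam type, again via \fremlinf{333B(d)}{33}{22}, and is the same fact the paper establishes inside the proof of \cref{app:lem:mble-homogeneity}. With those two points supplied, 333C(b) applies (the comparison $c\mapsto c\lmafp\top$ is measure-preserving for the free-product measure, and both sides are probability algebras of equal relative type $\omega$), and it yields exactly the stated $\pi$.
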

\begin{proof}
  Consider the decomposition
   \begin{align*}
      \pi &: [0,1] \xrightarrow{\sim}\prod_{n\in\N} \frakC_{e_n} \times \prod_{\kappa\in K}\frakC_{e_\kappa}\lmafp \frakB_\kappa \\
      \text{for all $c$ in $\frakC$, }\quad\pi c &= ((c \cap e_n)_{n\in \N}, ((c\cap e_\kappa)\lmafp \top_{\frakB_\kappa})_{\kappa\in K})
   \end{align*}
   of $[0,1]$ with respect to $\frakC$
   given by
   \fremlin{\href{https://www1.essex.ac.uk/maths/people/fremlin/chap33.pdf\#page=29}{333K}},
  where $\frakB_\kappa$ denotes the standard
  measure algebra on $\{0,1\}^\kappa$~\fremlinf{333A(d)}{33}{22}.
  (We have changed the notation slightly
  from the statement of this decomposition in Fremlin, writing $\top_{\frakB_\kappa}$
  for the top element of the algebra $\frakB_\kappa$ instead of $1$,
  to avoid confusion with the real number $1$ in $[0,1]$.)
  As discussed in \fremlinft{333K(a)}{33}{29},
  each factor $\frakC_{e_n}$ corresponds to a relative atom of $[0,1]$ over $\frakC$.
  Since $[0,1]$ is $\frakC$-relatively atomless,
  we must have $e_n = \bot$ for all $n$,
  so that the product $\prod_n \frakC_{e_n}$ disappears and $\pi$ can be rewritten as
   \begin{align*}
      \pi &: [0,1] \xrightarrow{\sim} \prod_{\kappa\in K}\frakC_{e_\kappa}\lmafp \frakB_\kappa \\
      \text{for all $c$ in $\frakC$, }\quad\pi c &= ((c\cap e_\kappa)\lmafp 1)_{\kappa\in K}
   \end{align*}
  Each factor $\frakC_{e_\kappa}\lmafp\frakB_\kappa$
  corresponds to a principal ideal of $[0,1]$ with Maharam type $\kappa$~\fremlinf{333G(a)}{33}{25}.
   The measure algebra $[0,1]$ is Maharam-type-homogenous
   with Maharam type $\omega$~\fremlinf{331K}{33}{7}~\fremlinf{254K}{25}{53},
   so does not have any principal ideals
   with Maharam type $\kappa>\omega$~\fremlinf{331H(c)}{33}{5}.
  Thus, if $\pi$ is to be an isomorphism, we must have $K = \{\omega\}$, so
  the product $\prod_\kappa \frakC_{e_\kappa}\lmafp \frakB_\kappa$
  reduces to $\frakC_{e_\omega}\lmafp \frakB_\omega$.
  Moreover, the measure algebra $\frakB_\omega$
  is isomorphic to $[0,1]$~\fremlinf{254K}{25}{53}.
  Thus $\pi$ can be rewritten as
   \begin{align*}
      \pi &: [0,1] \xrightarrow{\sim} \frakC_{e_\omega}\lmafp [0,1] \\
      \text{for all $c$ in $\frakC$, }\quad\pi c &= (c\cap e_\omega)\lmafp \top_{[0,1]}
   \end{align*}
  Finally, the fact that $\pi$ preserves measure forces $\mu(e_\omega) = 1$;
  this in turn forces $e_\omega$ to have measure $1$, and so $e_\omega = \top_\frakC$.
  Thus
   \begin{align*}
      \pi &: [0,1] \xrightarrow{\sim} \frakC\lmafp [0,1]\\
      \text{for all $c$ in $\frakC$, }\quad\pi c &= c\lmafp \top_{[0,1]}
   \end{align*}
  as claimed.
\end{proof}

\begin{lemma} \label{app:lem:projection-enough-room}
  If there is an isomorphism $\pi : [0,1]\to \frakC\lmafp[0,1]$
  with $\pi c = c\lmafp \top_{[0,1]}$
  for all $c$ in $\frakC$,
  then 
  $\frakC$ has \nameref{app:def:enough-room}.
\end{lemma}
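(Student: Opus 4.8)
The plan is to reduce directly to the characterization of enough room in \cref{app:lem:enough-room-interval}, which says that $\frakC$ has enough room if and only if there is a subalgebra of $[0,1]$ isomorphic to $[0,1]$ that is stochastically independent of $\frakC$. So it suffices to exhibit one such subalgebra, and the hypothesized isomorphism $\pi$ hands us exactly the right candidate: a copy of $[0,1]$ sitting independently inside the free product, which we pull back into $[0,1]$ along $\pi^{-1}$.

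First I would inspect the structure of the codomain $\frakC\lmafp[0,1]$. The free product contains two canonical subalgebras: the copy $\frakC' = \{c\lmafp\top_{[0,1]} \mid c\in\frakC\}$ of $\frakC$, and the copy $\frakE = \{\top_\frakC\lmafp d \mid d\in[0,1]\}$ of $[0,1]$. The defining property of the localizable measure algebra free product~\fremlinf{325E}{32}{32}~\fremlinf{325L}{32}{35} is precisely that these two subalgebras are stochastically independent. Thus $\frakC'$ and $\frakE$ are independent subalgebras of $\frakC\lmafp[0,1]$, with $\frakE\cong[0,1]$.

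Next I would transport this independent pair back along $\pi^{-1}$. Setting $\frakD := \pi^{-1}(\frakE)$, the fact that $\pi$ is a measure-algebra isomorphism gives $\frakD\cong\frakE\cong[0,1]$. The hypothesis $\pi c = c\lmafp\top_{[0,1]}$ says exactly that $\pi$ restricts to a bijection $\frakC\to\frakC'$, so $\pi^{-1}(\frakC') = \frakC$. Because a measure-algebra isomorphism preserves $\overline\mu$, it preserves the independence relation $\overline\mu(a\wedge b)=\overline\mu(a)\overline\mu(b)$ in both directions; hence the preimages $\frakC = \pi^{-1}(\frakC')$ and $\frakD = \pi^{-1}(\frakE)$ are independent in $[0,1]$. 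This yields a subalgebra $\frakD\cong[0,1]$ independent of $\frakC$, and \cref{app:lem:enough-room-interval} then concludes that $\frakC$ has enough room.

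The only genuine content is the independence of the two canonical factors inside the free product; everything else is bookkeeping about transporting subalgebras and the independence relation across $\pi$. I expect the main obstacle to be citing the correct Fremlin result for independence of the canonical factors of the free product and confirming that independence is preserved under measure-algebra isomorphism — but since isomorphisms preserve measure and independence is defined purely in terms of the measure, this preservation is immediate, so the argument should be short.
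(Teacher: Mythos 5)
Your proof is correct and takes essentially the same route as the paper's: both identify the canonical copy $\top_\frakC\lmafp[0,1]$ of $[0,1]$ inside the free product as independent of the embedded copy $\{c\lmafp\top_{[0,1]}\mid c\in\frakC\}$, transport it back along $\pi^{-1}$ to a subalgebra of $[0,1]$ independent of $\frakC$, and conclude by \cref{app:lem:enough-room-interval}. Your write-up merely makes explicit the bookkeeping the paper leaves implicit, namely that $\pi^{-1}$ carries $\{c\lmafp\top_{[0,1]}\mid c\in\frakC\}$ back onto $\frakC$ and that measure-algebra isomorphisms preserve stochastic independence.
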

\begin{proof}
  The subalgebra $\top_\frakC\lmafp[0,1]$ generated by
  rectangles $\top_\frakC\lmafp a$ for $a$ in $[0,1]$
  is isomorphic to $[0,1]$ and independent
  of $\frakC\lmafp\top_{[0,1]}$.
  Transporting back along $\pi$
  gives a subalgebra $\pi^{-1}(\top_\frakC\lmafp[0,1])$
  of $[0,1]$ independent of $\frakC$,
  so $\frakC$ has enough room by \cref{app:lem:enough-room-interval}.
\end{proof}

\begin{theorem}[characterization of subalgebras with enough room] \label{app:thm:enough-room-char}
  The following are equivalent:
  \begin{enumerate}
    \item $\frakC\subseteq [0,1]$ has \nameref{app:def:enough-room}
    \item 
      There is an isomorphism $\pi : [0,1]\to \frakC\lmafp[0,1]$
      with $\pi c = c\lmafp \top_{[0,1]}$
      for all $c$ in $\frakC$
    \item $[0,1]$ is $\frakC$-relatively atomless
  \end{enumerate}
\end{theorem}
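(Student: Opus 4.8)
The plan is to prove the three conditions equivalent by establishing a cycle of implications $(1)\Rightarrow(3)\Rightarrow(2)\Rightarrow(1)$, each edge of which is supplied verbatim by one of the three lemmas immediately preceding the theorem. Concretely, $(1)\Rightarrow(3)$ is exactly \cref{app:lem:enough-room-rel-atomless}, which says that \nameref{app:def:enough-room} forces $[0,1]$ to be $\frakC$-relatively atomless; $(3)\Rightarrow(2)$ is \cref{app:lem:rel-atomless-projection}, producing from relative atomlessness the splitting isomorphism $\pi c = c\lmafp\top_{[0,1]}$; and $(2)\Rightarrow(1)$ is \cref{app:lem:projection-enough-room}, which extracts an independent copy of $[0,1]$ from the splitting and invokes \cref{app:lem:enough-room-interval} to conclude enough room. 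Chaining these three implications closes the cycle, so all three conditions are equivalent.

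Since the substantive work has already been carried out in the lemmas, I expect no real obstacle at the level of the theorem itself; the proof is pure bookkeeping. It is nonetheless worth recording where the genuine difficulty lives. The only direction requiring heavy machinery is $(3)\Rightarrow(2)$: there one invokes Fremlin's structural decomposition of a measure algebra relative to a closed subalgebra ($333$K) to split $[0,1]$ into relative atoms together with homogeneous free-product factors, uses $\frakC$-relative atomlessness to annihilate the atomic part, and then pins down the homogeneous part via the Maharam-type-homogeneity of $[0,1]$ (type $\omega$) to force the single factor $\frakC\lmafp\frakB_\omega$ with $\frakB_\omega\cong[0,1]$. The measure-preservation constraint then fixes the relevant top element, yielding the rectangle formula $\pi c = c\lmafp\top_{[0,1]}$.

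The one point I would double-check while writing up is that the cycle is genuinely closed rather than being three implications that merely share endpoints: each lemma's hypothesis must match the previous conclusion exactly, including the precise normal form of the isomorphism $\pi$ — namely that it restricts to $c\mapsto c\lmafp\top_{[0,1]}$ on $\frakC$ — since this is precisely what \cref{app:lem:projection-enough-room} consumes as input. With that alignment confirmed, the three lemmas compose into a clean logical loop and the theorem follows immediately.
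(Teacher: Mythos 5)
Your proof is correct and is exactly the paper's proof: the theorem is established by citing \cref{app:lem:enough-room-rel-atomless}, \cref{app:lem:rel-atomless-projection}, and \cref{app:lem:projection-enough-room} to form a cycle of implications, with all substantive work living in those lemmas (indeed, your labeling $1\Rightarrow3\Rightarrow2\Rightarrow1$ matches the lemmas' actual content more faithfully than the paper's own proof text, which writes the cycle as $1\Rightarrow2\Rightarrow3\Rightarrow1$). Your hypothesis-matching check also holds: the normal form $\pi c = c\lmafp\top_{[0,1]}$ produced by \cref{app:lem:rel-atomless-projection} is precisely what \cref{app:lem:projection-enough-room} consumes.
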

\begin{proof}
  \cref{app:lem:enough-room-rel-atomless},
  \cref{app:lem:rel-atomless-projection},
  and \cref{app:lem:projection-enough-room}
  give the cycle
  $1\Rightarrow2\Rightarrow3\Rightarrow 1$.
\end{proof}

\begin{intuition}
  \cref{app:thm:enough-room-char} says
  subalgebras with enough room
  have a very nice form:
  up to measure-algebra isomorphism,
  they are 
  projections out of a product space with an interval's worth of independence
  available for allocating fresh randomness.
  Also, the characterization based on relative-atomlessness
  shows that the idea of having enough room does not depend
  on measures in an essential way, as
  relative-atomlessness is purely a property of the underlying
  Boolean algebras.
\end{intuition}

\subsection{Automorphisms of the interval} \label{app:sec:autos}

\begin{lemma} \label{app:lem:rel-atomless-omega}
  If $[0,1]$ is relatively atomless over a subalgebra $\frakC$,
  then $\href{https://www1.essex.ac.uk/maths/people/fremlin/chap33.pdf\#page=21}{\tau_{\frakC_a}([0,1]_a)} = \omega$ for all $a\ne\bot$ in $[0,1]$.
\end{lemma}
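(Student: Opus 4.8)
The plan is to reduce to \cref{app:lem:rel-atomless-projection}, which already puts a relatively atomless pair into the relatively homogeneous form $[0,1]\cong\frakC\lmafp[0,1]$, and then read off the relative Maharam type on each principal ideal. Fix $a\ne\bot$ in $[0,1]$. First I would check that relative atomlessness descends to $[0,1]_a$ over $\frakC_a:=\{a\cap c\mid c\in\frakC\}$. This is immediate: for $b\le a$ the principal ideal generated by $b$ inside $[0,1]_a$ is just $[0,1]_b$, and $\{b\cap x\mid x\in\frakC_a\}=\{b\cap c\mid c\in\frakC\}$ since $b\le a$, so $b$ is a $\frakC_a$-relative atom of $[0,1]_a$ exactly when it is a $\frakC$-relative atom of $[0,1]$; by hypothesis there are none (cf.\ \fremlinf{331Y(a)}{33}{9}).

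Next I would transport the problem to the whole algebra. Because $[0,1]$ is Maharam-type-homogeneous of type $\omega$, its nonzero principal ideal $[0,1]_a$ is, after normalizing its measure, an atomless standard probability algebra homogeneous of type $\omega$~\fremlinf{331H(c)}{33}{5}, hence isomorphic to $[0,1]$~\fremlinf{254K}{25}{53}. Fixing such an isomorphism $\phi:[0,1]_a\xrightarrow{\sim}[0,1]$ and setting $\frakC':=\phi(\frakC_a)$, relative atomlessness is preserved, so $[0,1]$ is $\frakC'$-relatively atomless and \cref{app:lem:rel-atomless-projection} gives an isomorphism $[0,1]\cong\frakC'\lmafp[0,1]$ sending each $c'\in\frakC'$ to $c'\lmafp\top_{[0,1]}$. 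Pulling back along $\phi$, the pair $([0,1]_a,\frakC_a)$ is isomorphic to $(\frakC_a\lmafp[0,1],\{c\lmafp\top_{[0,1]}\mid c\in\frakC_a\})$; that is, $[0,1]_a$ is relatively homogeneous over $\frakC_a$ with free factor $[0,1]$.

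Finally I would compute the relative Maharam type of this free product. Relative Maharam type is an isomorphism invariant, so $\tau_{\frakC_a}([0,1]_a)=\tau_{\frakC'}(\frakC'\lmafp[0,1])$, and the relative Maharam type of a relatively homogeneous algebra is the Maharam type of its free factor, namely $\tau([0,1])=\omega$ by \fremlinf{331K}{33}{7} and \fremlinf{254K}{25}{53}. The anticipated main obstacle is precisely this last equality: the upper bound $\tau_{\frakC_a}([0,1]_a)\le\omega$ is routine, since the relative type is dominated by the absolute type $\tau([0,1]_a)=\omega$ and a countable generating set of the free factor together with $\frakC'$ generates $\frakC'\lmafp[0,1]$; the harder, matching lower bound — that the independent copy of $[0,1]$ genuinely contributes $\omega$ dimensions over $\frakC'$ and cannot be generated by fewer — is where one must lean on Fremlin's relative Maharam theory in \S333 rather than argue by hand. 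Combining the two bounds yields $\tau_{\frakC_a}([0,1]_a)=\omega$, as claimed.
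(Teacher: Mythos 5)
Your outline is sound up to its final step, and there is no circularity in invoking \cref{app:lem:rel-atomless-projection} (it is proved earlier in the paper, independently of this lemma): the descent of relative atomlessness to the pair $([0,1]_a,\frakC_a)$ is argued correctly, transporting along an isomorphism $[0,1]_a\cong[0,1]$ is legitimate because relative atomlessness and relative Maharam type are purely Boolean-algebraic isomorphism invariants, and your upper bound $\tau_{\frakC'}(\frakC'\lmafp[0,1])\le\omega$ is right. The genuine gap is exactly the point you yourself flag as the ``main obstacle'' and then defer: the lower bound $\tau_{\frakC'}(\frakC'\lmafp[0,1])\ge\omega$ is never proved, only attributed to unspecified ``relative Maharam theory in \S 333''. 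Since that inequality is the entire content of the lemma, the proposal does not close. The precise fact needed is \fremlinf{333B(d)}{33}{22}: if $\frakA$ is relatively atomless over a subalgebra $\frakC$, then $\tau_{\frakC_a}(\frakA_a)\ge\omega$ for every $a\ne\bot$. (The reason is elementary: if $\frakC\cup\{a_1,\dots,a_n\}$ $\tau$-generated $\frakA$, then the elements $\sup_j(e_j\cap c_j)$ with $c_j\in\frakC$ and $e_j$ the atoms of the finite algebra generated by the $a_i$ would form a closed subalgebra equal to $\frakA$, making each nonzero $e_j$ a $\frakC$-relative atom.) To apply it to your free product you would additionally need that $\frakC'\lmafp[0,1]$ is relatively atomless over $\{c'\lmafp\top_{[0,1]}\mid c'\in\frakC'\}$ --- true, and observed in passing in the proof of \cref{app:lem:enough-room-rel-atomless} --- but at that point you are invoking ``relatively atomless $\Rightarrow$ relative type $\ge\omega$'' anyway, merely for a different pair of algebras.

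Once that fact is admitted, the whole detour buys nothing, and this is how the paper argues: \fremlinf{333B(d)}{33}{22} applied directly to the hypothesis gives the lower bound $\tau_{\frakC_a}([0,1]_a)\ge\omega$, and the upper bound is the chain $\tau_{\frakC_a}([0,1]_a)\le\tau_{\frakC}([0,1])\le\tau([0,1])=\omega$ via \fremlinf{333B(a)}{33}{22}, \fremlinf{333B(e)}{33}{22}, \fremlinf{331K}{33}{7} and \fremlinf{254K}{25}{53}. By contrast, your route passes through \cref{app:lem:rel-atomless-projection}, whose proof rests on Fremlin's canonical-form theorem 333K --- far heavier machinery than the elementary facts in 333B --- and still leaves the hard step to be discharged by a 333B-type result. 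Your step 1 is a correct, hand-made version of the principal-ideal localization built into 333B(d), so it can be kept as exposition; but steps 2--3 should be replaced by the direct citations, since as written the proof is incomplete at its crux.
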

\begin{proof}
  Relative atomlessness implies $\tau_{\frakC_a}[0,1]_a \ge \omega$
  for all $a\ne\bot$ in $[0,1]$~\fremlinf{333B(d)}{33}{22},
  and
  \[\tau_{\frakC_a}[0,1]_a 
  \stackrel{\text{\href{https://www1.essex.ac.uk/maths/people/fremlin/chap33.pdf\#page=22}{333B(a)}}}{\le} \tau_{\frakC}[0,1] 
  \stackrel{\text{\href{https://www1.essex.ac.uk/maths/people/fremlin/chap33.pdf\#page=22}{333B(e)}}}{\le} \tau[0,1] 
  \stackrel{\text{\href{https://www1.essex.ac.uk/maths/people/fremlin/chap33.pdf\#page=7}{331K}+\href{https://www1.essex.ac.uk/maths/people/fremlin/chap25.pdf\#page=53}{254K}}}= \omega.\]
\end{proof}

\begin{lemma} \label{app:lem:rel-atomless-projection-forgot}
  Let $\frakA$ be a subalgebra of
  $[0,1]$ considered as a \nref{app:def:measurable-alg}{measurable algebra}.
  If $[0,1]$ is $\frakA$-relatively atomless,
  then there exists a measurable-algebra isomorphism $\pi : [0,1]\to \frakA\lmafp [0,1]$
  such that $\pi a = a\lmafp\top_{[0,1]}$ for all $a$ in $\frakA$.
\end{lemma}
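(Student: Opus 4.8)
The plan is to reduce to the measure-theoretic version already established in \cref{app:lem:rel-atomless-projection}, exploiting the fact that relative-atomlessness is a purely Boolean-algebraic condition that makes no reference to a measure. The only thing the present statement adds over \cref{app:lem:rel-atomless-projection} is that we are now working with measurable algebras --- complete Boolean algebras that merely \emph{admit} a standard measure --- rather than with a fixed measure algebra. So the task is just to show that a measure can be supplied, the earlier lemma invoked, and the measure then forgotten again harmlessly.

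First I would unforget the canonical Lebesgue measure $\mu$ on $[0,1]$, making $([0,1],\mu)$ the standard probability algebra, and restrict $\mu$ along the inclusion to obtain a measure $\mu|_\frakA$ turning $\frakA$ into a closed subalgebra of this probability algebra. Since relative-atomlessness refers only to principal ideals and the Boolean operations --- not to $\mu$ --- the hypothesis that $[0,1]$ is $\frakA$-relatively atomless transports verbatim to the measure algebra $([0,1],\mu)$ over the subalgebra $(\frakA,\mu|_\frakA)$.

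Next I would apply \cref{app:lem:rel-atomless-projection} with $\frakC := \frakA$, obtaining a measure-algebra isomorphism $\pi : ([0,1],\mu) \xrightarrow{\sim} (\frakA,\mu|_\frakA)\lmafp([0,1],\mu)$ with $\pi a = a\lmafp\top_{[0,1]}$ for every $a$ in $\frakA$. A measure-algebra isomorphism is in particular a complete-Boolean-algebra isomorphism; forgetting the two measures therefore yields a measurable-algebra isomorphism $\pi : [0,1] \to \frakA\lmafp[0,1]$ with the same action $\pi a = a\lmafp\top_{[0,1]}$ on $\frakA$, which is exactly the desired conclusion.

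The only point requiring care --- and hence the main obstacle --- is the bookkeeping about measures: one must confirm that the restriction $\mu|_\frakA$ really does make $\frakA$ a bona fide closed subalgebra to which \cref{app:lem:rel-atomless-projection} applies, and that the free product $\frakA\lmafp[0,1]$ named in the conclusion is understood with these standard measures, so that forgetting them recovers precisely the measurable algebra in the statement. Once this alignment is pinned down, the argument is immediate, as no structural work beyond \cref{app:lem:rel-atomless-projection} is needed.
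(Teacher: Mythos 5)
Your proposal is correct and follows essentially the same route as the paper's own proof: equip $[0,1]$ with the Lebesgue measure, restrict it to $\frakA$ to view $\frakA$ as a closed sub-measure-algebra, invoke \cref{app:lem:rel-atomless-projection} (noting that relative-atomlessness is measure-independent), and then forget the measures. The paper's argument is exactly this, stated more tersely.
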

\begin{proof}
  Put the usual Lebesgue measure $\lambda$ on $[0,1]$,
  rendering $\frakA$ a closed sub-measure-algebra of the measure algebra
  $([0,1],\lambda)$, apply \cref{app:lem:rel-atomless-projection}
  to get a measure algebra isomorphism
  $\pi : ([0,1],\lambda)\to(\frakA,\lambda|_\frakA)\lmafp([0,1],\lambda)$
  with $\pi a = a\lmafp \top_{[0,1]}$ for all $a$ in $\frakA$,
  and then forget all the measures and the fact that $\pi$ preserves them.
\end{proof}

\begin{lemma}[Homogeneity] \label{app:lem:mble-homogeneity}
  Let $\frakA,\frakB$ be standard probability subalgebras of $[0,1]$ that render it relatively atomless.
  Let $f : \frakB\hookrightarrow \frakA$
  be an injective order-continuous Boolean algebra homomorphism.
  There exists an order-continuous Boolean algebra automorphism $\pi : [0,1]\to[0,1]$
  refining $f$ in the sense that $\pi b = f b$ for all $b$ in $\frakB$.
  In other words, the diagram
\[ \begin{tikzcd}
    {[0,1]}                        & {[0,1]} \arrow[l, "\pi"', dashed]                     \\
    {\frakA} \arrow[u, hook] & {\frakB} \arrow[l, "f", hook] \arrow[u, hook]
    \end{tikzcd}\]
   commutes, where the vertical arrows are the inclusions $\frakA\subseteq[0,1]$ and $\frakB\subseteq[0,1]$.
\end{lemma}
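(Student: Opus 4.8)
The plan is to reduce $f$ to its image and then glue together the ``projection'' decompositions supplied by \cref{app:lem:rel-atomless-projection-forgot}. First I would set $\frakC := f(\frakB)$, the image of $f$. Since $f$ is an injective order-continuous (complete) Boolean homomorphism, its image is a closed subalgebra of $\frakA$ and $f$ corestricts to a measurable-algebra isomorphism $f : \frakB\to\frakC$. Relative atomlessness passes down to subalgebras: if $a$ were a $\frakC$-relative atom then, because $\{a\cap c \mid c\in\frakC\}\subseteq\{a\cap b \mid b\in\frakA\}$ sits inside the principal ideal generated by $a$ and the outer set equals it, all three coincide, so $a$ would also be an $\frakA$-relative atom. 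Hence $\frakA$-relative atomlessness of $[0,1]$ forces $[0,1]$ to be $\frakC$-relatively atomless as well, and \cref{app:lem:rel-atomless-projection-forgot} applies to both $\frakB$ and $\frakC$.

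Next I would invoke \cref{app:lem:rel-atomless-projection-forgot} twice to obtain measurable-algebra isomorphisms $\beta : [0,1]\to\frakB\lmafp[0,1]$ with $\beta b = b\lmafp\top_{[0,1]}$ for all $b\in\frakB$, and $\gamma : [0,1]\to\frakC\lmafp[0,1]$ with $\gamma c = c\lmafp\top_{[0,1]}$ for all $c\in\frakC$. The idea is then to transport the isomorphism $f : \frakB\to\frakC$ through the free product while keeping the spare interval's worth of randomness fixed, defining the desired automorphism as the composite
\[ \pi \;=\; \gamma^{-1}\circ(f\lmafp\mathrm{id}_{[0,1]})\circ\beta \;:\; [0,1]\longrightarrow[0,1]. \]
Each factor is a measurable-algebra isomorphism, so $\pi$ is an order-continuous Boolean automorphism of $[0,1]$. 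To check that it refines $f$, I would compute on $b\in\frakB$: $\beta b = b\lmafp\top_{[0,1]}$, then $(f\lmafp\mathrm{id})(b\lmafp\top_{[0,1]}) = f(b)\lmafp\top_{[0,1]}$, and finally $\gamma^{-1}(f(b)\lmafp\top_{[0,1]}) = f(b)$ because $f(b)\in\frakC$ and $\gamma$ sends $f(b)$ to $f(b)\lmafp\top_{[0,1]}$. Thus $\pi b = f(b)$ for all $b\in\frakB$, as required, and the vertical-inclusion diagram commutes.

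The main obstacle is making sense of $f\lmafp\mathrm{id}_{[0,1]}$: the localizable measure-algebra free product $\lmafp$ of \fremlinf{325E}{32}{32} is built from the measures, and its functoriality is phrased for measure-preserving homomorphisms, whereas $f$ is only an injective complete Boolean homomorphism. I would resolve this by equipping the algebras with compatible measures before forgetting them: fix a standard probability measure $\overline\mu$ on $\frakA$, restrict it to $\frakC$, and pull it back along the isomorphism $f$ to a measure on $\frakB$, which makes $f : \frakB\to\frakC$ a measure-preserving isomorphism. Then $f\lmafp\mathrm{id}_{[0,1]}$ is a genuine measure-preserving isomorphism of free products in Fremlin's sense, and forgetting measures yields the measurable-algebra isomorphism $\frakB\lmafp[0,1]\to\frakC\lmafp[0,1]$ used above. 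The point requiring the most care is checking that this is consistent with the measure-forgetful presentation of $\lmafp$ coming out of \cref{app:lem:rel-atomless-projection-forgot} --- that is, that the underlying complete Boolean algebra of the free product is independent of the chosen measures --- so that $\beta$, $\gamma$, and $f\lmafp\mathrm{id}$ all live in the same measure-forgotten setting.
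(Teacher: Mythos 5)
Your proof is correct, but it takes a genuinely different route from the paper's. Where the paper applies the projection lemma (\cref{app:lem:rel-atomless-projection-forgot}) only once, to $\frakB$, and then invokes Fremlin's 333C(b) — an extension theorem for isomorphisms over a common closed subalgebra, which requires the relative-Maharam-type computation $\tau_{[0,1]_a}([0,1]) = \omega$ in the style of \cref{app:lem:rel-atomless-omega} — you apply the projection lemma twice, to $\frakB$ and to the image $\frakC = f(\frakB)$, and glue the two decompositions with $f\lmafp\mathrm{id}_{[0,1]}$. Your descent of relative atomlessness from $\frakA$ to $\frakC$ is sound (it also follows from the paper's own \cref{app:lem:enough-room-mono} together with \cref{app:thm:enough-room-char}), and the final computation $\pi b = \gamma^{-1}(f(b)\lmafp\top_{[0,1]}) = f(b)$ is exactly right. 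What your route buys is the elimination of 333C(b) and all Maharam-type bookkeeping, replaced by the symmetry of the two projection decompositions; what it costs is the point you correctly flag, namely that the Boolean algebra underlying $(\frakB,\nu)\lmafp([0,1],\lambda)$ and its canonical embedding $b\mapsto b\lmafp\top_{[0,1]}$ must not depend on the choice of strictly positive measure $\nu$ on $\frakB$. This fact is true and is available from the paper's own machinery: it is the algebraic shadow, through the duality of \cref{app:lem:mblecat-dual}, of \cref{app:lem:stdprob-prod-forgets-uniquely} (it is precisely what makes the tensor product of \cref{app:def:stdmble-tensor} well defined), since the probability-algebra free product is the measure algebra of the product probability space. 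Notably, the paper's proof quietly relies on the same fact: its triangle is built from the restricted measure $\lambda|_{\frakB}$ while its quadrilateral uses the pulled-back measure $\nu = \mu\circ f$, and the two are only glued after "forgetting all measures." So your proposal is, if anything, more self-contained once that one lemma is cited explicitly.
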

\begin{proof}
  Give $[0,1]$ in the top-left corner the usual Lebesgue measure $\lambda$.
  Since $f$ and the inclusion $\frakA\subseteq [0,1]$ are injective,
  restricting $\lambda$ along these gives measures $\mu,\nu$ such that
   \[ \begin{tikzcd}
    {([0,1],\lambda)}                        &  \\
    {(\frakA,\mu)} \arrow[u, hook] & {(\frakB,\nu)} \arrow[l, "f", hook] 
    \end{tikzcd}\]
  is a diagram in $\StdProbAlg$,
  the category of standard probability algebras and measure-algebra homomorphisms.
  By \cref{app:lem:rel-atomless-projection-forgot},
  there exists an order-continuous Boolean algebra isomorphism $\pi : [0,1]\to \frakB\lmafp[0,1]$
  such that the triangle
  \[
\begin{tikzcd}
  {\frakB\lmafp[0,1]} & {[0,1]} \arrow[l, "\pi"', two heads, hook]                 \\
                      & \frakB \arrow[u, hook] \arrow[lu, "{-\lmafp\top_{[0,1]}}"]
  \end{tikzcd}
  \]
  commutes.
  The homomorphism $-\lmafp \top_{[0,1]}$
  is measure-preserving as a map $(\frakB,\nu)\to (\frakB,\nu)\lmafp([0,1], \lambda)$,
  since
  \[ 
    (\nu\lmafp\lambda)(b\lmafp\top_{[0,1]})
    = \nu b\cdot \lambda\top_{[0,1]}
    = \nu b
    \]
  for all $b$ in $\frakB$.
  Therefore it fits in with the other $\StdProbAlg$-diagram above to give the following
  $\StdProbAlg$-diagram:
   \[ 
\begin{tikzcd}
  {([0,1],\lambda)}              & {(\frakB,\nu)\lmafp([0,1],\lambda)} &                                                                           \\
  {(\frakA,\mu)} \arrow[u, hook] &                     & {(\frakB,\nu)} \arrow[ll, "f", hook] \arrow[lu, "{-\lmafp \top_{[0,1]}}"]
  \end{tikzcd}
    \]
  This diagram can be completed into a commutative quadrilateral.
  Note that
  $[0,1]$ is $\frakA$-relatively atomless
  and $(\frakB,\nu)\lmafp([0,1],\lambda)$
  is $\frakB$-relatively atomless,
  and $(\frakB,\nu)\lmafp([0,1],\lambda)$
  has countable Maharam type~\fremlinf{333G(a)}{33}{26}
  because $\frakB$ standard and so has at-most-countable Maharam type.
  Therefore, the same argument as in \cref{app:lem:rel-atomless-omega}
  gives $\tau_{[0,1]_a}([0,1]) = \tau_{(\frakB\lmafp[0,1])_{b\lmafp \top}}(\frakB\lmafp[0,1]) =\omega$
  for all $a$ in $[0,1]$ and $b$ in $\frakB$.
  Now \fremlin{\href{https://www1.essex.ac.uk/maths/people/fremlin/chap33.pdf\#page=22}{333C(b)}}
  gives a measure algebra isomorphism
  $\sigma$ completing
  the above diagram into a commutative quadrilateral:
  \[
\begin{tikzcd}
  {([0,1],\lambda)}              & {(\frakB,\nu)\lmafp([0,1],\lambda)} \arrow[l, "\sigma"', two heads, hook, dashed] &                                                                           \\
  {(\frakA,\mu)} \arrow[u, hook] &                                                           & {(\frakB,\nu)} \arrow[ll, "f", hook] \arrow[lu, "{-\lmafp \top_{[0,1]}}"]
  \end{tikzcd}
  \]
  Forgetting all measures and
  combining this quadrilateral with the triangle above gives the following commutative rectangle
  of measurable algebras:
  \[\begin{tikzcd}
  {[0,1]}              & {\frakB\lmafp[0,1]} \arrow[l, "\sigma"', two heads, hook] & {[0,1]} \arrow[l, "\pi"', two heads, hook]                                                        \\
  {\frakA} \arrow[u, hook] &                                          & {\frakB} \arrow[ll, "f", hook] \arrow[u, hook] \arrow[lu, "{-\lmafp \top}"]
  \end{tikzcd}\]
  The composite $\sigma\pi$ is an automorphism of the form required.
\end{proof}

\begin{lemma}[Correspondence] \label{app:lem:mble-distinguishability}
  For $\frakC\subseteq[0,1]$, let $\Fix\frakC$ be the subgroup of $\Aut_{\StdMble}[0,1]$
  consisting of autos fixing every $c$ in $\frakC$:
  \[ \Fix\frakC := \{ \pi \mid \pi c =c\text{ for all }c \in \frakC \}. \]
  If $[0,1]$ is $\frakC$-relatively atomless
  and $\Fix\frakC\subseteq\Fix\frakD$,
  then $\frakD\subseteq\frakC$.
\end{lemma}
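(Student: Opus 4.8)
The plan is to prove the contrapositive: assuming $[0,1]$ is $\frakC$-relatively atomless and $\frakD\not\subseteq\frakC$, I will exhibit a single $\pi\in\Fix\frakC$ with $\pi\notin\Fix\frakD$, which shows $\Fix\frakC\not\subseteq\Fix\frakD$. Fix a witness $d\in\frakD$ with $d\notin\frakC$. The construction is the measure-algebraic analogue of the swap used in the discrete \cref{def:DCorrespondence}: I will move $d$ while fixing $\frakC$ pointwise.

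First I would pass to convenient coordinates. Since $[0,1]$ is $\frakC$-relatively atomless, \cref{app:lem:rel-atomless-projection-forgot} supplies a measurable-algebra isomorphism $\phi:[0,1]\xrightarrow{\sim}\frakC\lmafp[0,1]$ with $\phi(c)=c\lmafp\top_{[0,1]}$ for all $c\in\frakC$. As a closed subalgebra of $[0,1]$ the algebra $\frakC$ has at-most-countable Maharam type and is therefore standard, so I may realize $\frakC\lmafp[0,1]$ concretely as the measurable algebra of a product probability space $C\times[0,1]$ (Lebesgue on the second factor), under which $\phi$ carries $\frakC$ onto the cylinder subalgebra $\{[C_0\times[0,1]]\}$ and carries $d$ to a measurable set $D\subseteq C\times[0,1]$. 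Because $\phi$ is an isomorphism sending $\frakC$ onto the cylinders, the hypothesis $d\notin\frakC$ becomes: $D$ is not almost-everywhere a cylinder, i.e. the conditional-measure function $c\mapsto\lambda(D_c)$ (where $D_c=\{x:(c,x)\in D\}$ is the fibre) takes values in $(0,1)$ on a set of positive measure.

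Next I would produce the moving automorphism using rotations of the fibre rather than a per-fibre selection. Viewing the second factor as the circle $\R/\Z$, each rotation $\rho_t(c,x)=(c,\,x+t\bmod 1)$ is a measure-preserving automorphism of $C\times[0,1]$, hence induces a measurable-algebra automorphism $\hat\rho_t$ of $\frakC\lmafp[0,1]$; since $\rho_t$ preserves the first coordinate it fixes every cylinder, so $\hat\rho_t$ fixes $\frakC$ pointwise. It remains to find a single $t_0$ with $\hat\rho_{t_0}(D)\ne D$. Noting $(\hat\rho_t D)_c = D_c+t$, I would average the symmetric-difference measure over $t$ and apply Fubini:
\[
  \int_0^1\!\!\int_C \lambda\big((D_c+t)\,\triangle\,D_c\big)\,dc\,dt
  = \int_C 2\,\lambda(D_c)\big(1-\lambda(D_c)\big)\,dc,
\]
which is strictly positive precisely because $\lambda(D_c)\in(0,1)$ on a positive-measure set of fibres. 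Hence some constant $t_0$ satisfies $\int_C\lambda((D_c+t_0)\,\triangle\,D_c)\,dc>0$, i.e. $\hat\rho_{t_0}(D)\ne D$ in the algebra.

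Finally, transport back: set $\pi:=\phi^{-1}\circ\hat\rho_{t_0}\circ\phi\in\Aut_{\StdMble}[0,1]$, reading $\hat\rho_{t_0}$ as a $\StdMble$-automorphism through the duality $\StdMble\simeq\StdMbleAlgop$ (\cref{app:lem:mblecat-dual}). For $c\in\frakC$ we have $\hat\rho_{t_0}(\phi(c))=\hat\rho_{t_0}(c\lmafp\top_{[0,1]})=c\lmafp\top_{[0,1]}$, so $\pi(c)=c$ and $\pi\in\Fix\frakC$; but $\pi(d)=\phi^{-1}(\hat\rho_{t_0}(D))\ne\phi^{-1}(D)=d$, so $\pi\notin\Fix\frakD$ because $d\in\frakD$, completing the contrapositive. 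I expect the main obstacle to be this middle step: the naive ``swap of two independent copies'' fails for a general $D$ that is symmetric across the factors, and a measurable choice of fibrewise automorphisms is awkward; the clean fix is to average the one-parameter family of rotations, reducing the entire construction to the elementary identity $\int_0^1\lambda((D_c+t)\,\triangle\,D_c)\,dt = 2\lambda(D_c)(1-\lambda(D_c))$. A secondary point to check carefully is the passage between the abstract free product $\frakC\lmafp[0,1]$ and the concrete product space $C\times[0,1]$, which relies on $\frakC$ being standard.
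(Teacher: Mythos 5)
Your proof is correct, but it takes a genuinely different route from the paper's. The paper's argument is essentially two lines: after invoking the projection isomorphism $\sigma : [0,1]\to\frakC\lmafp[0,1]$ (as you do via \cref{app:lem:rel-atomless-projection-forgot}), it appeals to Fremlin's theorem 333R, which under the relative-atomlessness hypothesis produces a \emph{single} automorphism $\pi$ whose fixed-point subalgebra is \emph{exactly} $\frakC$; any $d\in\frakD\setminus\frakC$ is then automatically moved by $\pi$. You instead construct, for each witness $d\notin\frakC$, an automorphism fixing $\frakC$ and moving $d$: realize $\frakC\lmafp[0,1]$ as $\measalg(C\times[0,1])$, observe that $d\notin\frakC$ means the fibre measures $\lambda(D_c)$ lie in $(0,1)$ on a positive-measure set, and then average the circle rotations $\rho_t$ over $t$ using the identity $\int_0^1\lambda((D_c+t)\triangle D_c)\,dt=2\lambda(D_c)(1-\lambda(D_c))$ to find one rotation that moves $D$ while fixing all cylinders. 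This is a weaker intermediate statement than 333R but exactly what the lemma needs, and it is the right continuous replacement for the swap used in the discrete \cref{def:DCorrespondence} (your observation that a fibrewise swap would require an awkward measurable selection, and that a one-parameter rotation family sidesteps it, is precisely the point). What the paper's approach buys is brevity at the cost of citing a deep result; what yours buys is a self-contained, quantitative construction. Two supporting facts in your argument deserve explicit citations rather than assertion: that a closed subalgebra of the Lebesgue measure algebra has countable Maharam type and is hence a standard probability algebra (Maharam's theorem plus monotonicity of Maharam type for closed subalgebras, Fremlin 332T), and that for probability algebras the localizable free product $\frakC\lmafp[0,1]$ is realized by the measure algebra of the product probability space (Fremlin 325I); you flagged the second yourself, and both are standard.
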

\begin{proof}
  If every $\StdMble$-auto fixing $\frakC$ fixes $\frakD$,
  then surely every $\StdProb$-auto fixing $\frakC$
  fixes $\frakD$.
  Now suppose for contradiction that there exists $d$ in $\frakD$ not in $\frakC$,
  with aim to construct a $\StdProb$-auto fixing $\frakC$ but not $d$.
  By \cref{app:thm:enough-room-char},
  there is an isomorphism $\sigma : [0,1]\to \frakC\lmafp[0,1]$
  such that $\sigma c = c\lmafp \top_{[0,1]}$ for all $c$ in $\frakC$.
  By \fremlin{\href{https://www1.essex.ac.uk/maths/people/fremlin/chap33.pdf\#page=35}{333R(ii)$\Rightarrow$(iii)}},
  $\frakC$ arises as the fixed-point subalgebra of a particular automorphism
   $\pi : [0,1]\xrightarrow\sim[0,1]$,
   so that $c\in\frakC$ iff $\pi c = c$.
  Since $d\notin\frakC$ we must have $\pi d \ne d$,
  so $\pi$ is an auto of the form required.
\end{proof}

\subsection{The target practice lemma}

\begin{lemma} \label{app:lem:target-practice}
  Let $\frakB$ be a \nref{app:def:stdmblealg-def}{standard measurable algebra},
  $\frakA\subseteq\frakB$ a standard measurable subalgebra,
  and $b$ an element of $\frakB$ not in $\frakA$.
  There exists a standard measurable algebra
  $\frakC$ and homomorphisms of standard measurable algebras
  $f,g : \frakB\hookrightarrow\frakC$
  such that $f|_\frakA = g|_\frakA$
  and $f(b)\notin\img g$.
\end{lemma}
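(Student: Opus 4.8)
The plan is to take $\frakC$ to be the measurable algebra of the \emph{relatively independent self-product} of $\frakB$ over $\frakA$, with $f$ and $g$ the two coordinate inclusions. First I would dualize: by \cref{app:lem:mblecat-dual} we may unforget measures (equip $\frakB$ with a measure making it a standard probability algebra and restrict it to $\frakA$), and then by \cref{app:lem:stdprob-dual} pass to spaces, writing $(\frakB,\mu)\cong\measalg(Y,\calG,\nu)$ and realizing the inclusion $\frakA\hookrightarrow\frakB$ as $\measalg(q)$ for a measure-preserving map $q:(Y,\calG,\nu)\to(Y',\calG',\nu')$ with $\measalg(Y',\calG',\nu')\cong(\frakA,\mu|_\frakA)$. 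Specializing the construction in the proof of \cref{lem:std-prob-right-ore} to the cospan $Y\xrightarrow{q}Y'\xleftarrow{q}Y$ produces a standard probability space $(W,\calK,\tau)$ with $W=Y\times Y$, measure $\tau(K)=\int(\nu|_z\otimes\nu|_z)(K)\,\nu'(\mathrm dz)$ built from the disintegration of $\nu$ along $q$, and measure-preserving projections $h,k:W\to Y$ with $q h\aseq q k$. I would then set $\frakC=\measalg(W,\calK,\tau)$ — a standard measurable algebra since $W$ is standard — and $f=\measalg(h)$, $g=\measalg(k)$.

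Next I would dispatch the two easy requirements. Both $f,g:\frakB\to\frakC$ are injective by \cref{app:lem:stdprobalg-mono}, hence (after forgetting measures) are $\StdMbleAlg$-morphisms $\frakB\hookrightarrow\frakC$. For agreement on $\frakA$, contravariance of $\measalg$ (\cref{app:cons:measalg-hom}) gives $f|_\frakA=\measalg(h)\circ\measalg(q)=\measalg(qh)$ and likewise $g|_\frakA=\measalg(qk)$, so $q h\aseq q k$ yields $f|_\frakA=g|_\frakA$, whose common image is the amalgamated copy of $\frakA$ inside $\frakC$.

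The crux is $f(b)\notin\img g$. Here $\img f$ and $\img g$ are the sub-$\sigma$-algebras $h^{-1}(\calG)$ and $k^{-1}(\calG)$ of $\calK$ (mod null) — the first- and second-coordinate copies of $\frakB$ — each containing the common copy $f(\frakA)=g(\frakA)$. The form of $\tau$ gives, on rectangles, $\tau(h^{-1}(F)\cap k^{-1}(G))=\int\nu|_z(F)\,\nu|_z(G)\,\nu'(\mathrm dz)$, which says precisely that $\img f$ and $\img g$ are conditionally independent given $f(\frakA)$. The key fact is that two conditionally independent subalgebras that both contain the conditioning algebra intersect in exactly that algebra: if $Z\in\img f\cap\img g$ then $1_Z=E[1_Z\mid\img g]=E[1_Z\mid f(\frakA)]$ is $f(\frakA)$-measurable, so $\img f\cap\img g=f(\frakA)$. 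Consequently $f(b)\in\img g$ would force $f(b)\in f(\frakA)$, hence $b\in\frakA$ by injectivity of $f$, contradicting $b\notin\frakA$.

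The main obstacle lies entirely in that last paragraph: verifying at the level of measure algebras (rather than random variables) that the measure $\tau$ from \cref{lem:std-prob-right-ore} genuinely realizes conditional independence of the two coordinate copies over $\frakA$, and that the conditional-independence-implies-trivial-intersection argument goes through rigorously in this setting. By contrast, the standardness of $\frakC$, the injectivity of $f,g$, and the functoriality computations are all routine once the dualities \cref{app:lem:stdprob-dual,app:lem:mblecat-dual} and the Ore-property construction of \cref{lem:std-prob-right-ore} are invoked.
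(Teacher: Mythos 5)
Your proof is correct, but it takes a genuinely different route from the paper's. The paper never forms the relative product of $\frakB$ with itself over all of $\frakA$; instead it sandwiches $b$ between $l=\bigvee\{a\in\frakA\mid a\le b\}$ and $u=\bigwedge\{a\in\frakA\mid a\ge b\}$, splits $\frakB$ along the partition of unity $\{\neg u,\,u-b,\,b-l,\,l\}$, amalgamates only the two ``annulus'' factors $\frakB_{u-b}\hookleftarrow\frakA_{u-l}\hookrightarrow\frakB_{b-l}$ via the Ore property, and then separates $f(b)$ from $\img g$ by hand: the two copies of the amalgam $\frakC_{u-l}\times\frakC_{u-l}$ are embedded into $\frakC_{u-l}\lmafp[0,1]$ at two different ``angles'' (the $[0,1/2]$ versus $[0,1/3]$ splits), so that $f(b)=(\bot,\top\lmafp[1/2,1],\top)$ visibly fails to have the normal form of elements of $\img g$. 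You replace this combinatorial endgame by a conceptual one: the canonical Ore witness of \cref{lem:std-prob-right-ore}, specialized to the self-cospan $Y\xrightarrow{q}Y'\xleftarrow{q}Y$, is the relatively independent product, and conditionally independent subalgebras that both contain the conditioning algebra intersect exactly in it. This is sound: the chain $1_Z=E[1_Z\mid\img g]=E[1_Z\mid f(\frakA)]$ is the standard argument, and the conditional independence you flag as the remaining obligation does hold, by the same concentration property of disintegrations ($\nu|_z$ is carried by $q^{-1}(z)$ for a.e.\ $z$) that the paper already exploits at the end of the proof of \cref{lem:std-prob-right-ore}. Two points deserve explicit care in a full write-up: (i) to realize the inclusion $\frakA\hookrightarrow\frakB$ as $\measalg(q)$ you need the restricted measure to make $\frakA$ a \emph{standard} probability algebra so that \cref{app:lem:stdprob-dual} applies --- the paper makes this same move in \cref{app:lem:mble-homogeneity}, so it is available but should be invoked; (ii) conditional expectation on abstract measure algebras is nowhere developed in the paper, so the argument should be run on the space $(W,\calK,\tau)$ itself, reading $\img f$, $\img g$, $f(\frakA)$ as the $\sigma$-algebras $h^{-1}(\calG)$, $k^{-1}(\calG)$, $(qh)^{-1}(\calG')$ modulo $\tau$-null sets. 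What the paper's approach buys is a purely Boolean final step, free of disintegration and conditional-expectation machinery; what yours buys is brevity and a canonical choice of $\frakC$, dispensing with the $l/u$ sandwich and the interval-splitting trick entirely.
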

\begin{proof}
  Let $l$ be the largest element of $\frakA$ contained in $b$
  and $u$ the smallest element of $\frakA$ containing $b$:
  \[ l := \bigvee_{\frakA\ni a\le b} a \qquad u := \bigwedge_{\frakA\ni a\ge b} a \]
  These exist and are elements of $\frakA$ because $\frakA$ is a complete boolean algebra.
  Since $b$ is not in $\frakA$, neither $l$ nor $u$ can be equal to it.
  Thus $b$ is sandwiched between $l$ and $u$ by a sequence of strict inequalities
  $l < b < u$.

  Consider an arbitrary $a$ in $\frakA$ less than $u-l$.
  Any such $a$ can be decomposed
  into a disjoint union $a_{u-b} + a_{b-l}$
  with $a_{u-b} \le u-b$ and $a_{b-l}\le b-l$,
  by setting $a_{u-b} := a\wedge (u-b)$ and $a_{b-l} := a\wedge(b-l)$.
  The following diagram depicts the situation (and explains the lemma's name):
  \[
    \begin{tikzpicture}
      \draw 
         (0,0) circle (2.5) (0,2.5) node [above] {$u$}
         (0,0) circle (1.5) (0,1.5) node [above] {$b$}
         (0,0) circle (0.5) (0,0.5) node [above] {$l$}
         (1,0) circle (0.2) (1,0.2) node [above] {$a_{b-l}$}
         (2,0) circle (0.2) (2,0.2) node [above] {$a_{u-b}$}
        ;
    \end{tikzpicture}
  \]
  The idea is to make use of the following fact:
  for any $a$ in $\frakA$ that is nonzero (i.e., $a\ne\bot$),
  both $a_{u-b}$ and $a_{b-l}$ as shown above must be nonzero too.
  For if $a_{u-b} = \bot$ and $a_{b-l}$ nonzero,
  then $a_{b-l} + l$ would be an element of $\frakA$ contained in $b$
  and larger than $l$, contradicting $l$ maximal.
  Symmetrically, if $a_{b-l} = \bot$ and $a_{u-b}$ nonzero,
  then $u - a_{u-b}$ would be an element of $\frakA$
  containing $b$ and smaller than $u$, contradicting $u$ minimal.
  This property --- that nonzero elements of $\frakA$ contained in the annulus $u-l$ depicted above
  decompose into disjoint unions $a_{u-b} + a_{b-l}$
  with $a_{u-b},a_{b-l}$ both nonzero --- distinguishes
  elements of $\frakA$ from $b$, since $b \wedge (u - b) = \bot$.
  The idea of the proof is to construct maps $f,g$
  making this difference visible, allowing to separate the
  element $b$ from the subalgebra $\frakA$.

  To do this, we first formulate the above remarks
  in terms of boolean algebras.
  The sequence of inequalities $l < b < u$
  gives a partition of unity $\{\neg u, u - b, b - l, l\}$,
  which gives a decomposition of $\frakB$ as a simple product
  of principal ideals
  $
    \frakB \cong \frakB_{\neg u} \times \frakB_{u-b} \times \frakB_{b-l} \times \frakB_l
  $.
  Since $l$ and $u$ are both in the subalgebra $\frakA$, it decomposes similarly as
  $
    \frakA \cong \frakA_{\neg u} \times \frakA_{u-l} \times \frakA_l
  $.
  The inclusion $\frakA\subseteq\frakB$ decomposes correspondingly into three inclusions of principal ideals:
  \[
    \frakA_{\neg u}\subseteq \frakB_{\neg u}
    \qquad
    \frakA_{u-l}\subseteq \frakB_{u-b} \times \frakB_{b-l}
    \qquad
    \frakA_l\subseteq \frakB_l
  \]
  Let $i$ be the inclusion $\frakA_{u-l}\subseteq \frakB_{u-b} \times \frakB_{b-l}$.
  As a homomorphism into a product of boolean algebras,
  $i$ can be written uniquely as
  \[
    i = (i_{u-b},i_{b-l})
    \quad\text{for some complete-boolean-algebra homomorphisms}\quad
    \begin{aligned}
      i_{u-b} &: \frakA_{u-l} \to \frakB_{u-b} \\
      i_{b-l} &: \frakA_{u-l} \to \frakB_{b-l} 
    \end{aligned}
  \]
  The earlier discussion shows that $i_{u-b}(a)$ and $i_{b-l}(a)$
  are both nonzero for all nonzero $a\in\frakA_{u-l}$.
  More is true: the homomorphisms $i_{u-b},i_{b-l}$
  are injective.
  To see that $i_{u-b}$ is injective, pick arbitrary $a,a'\in \frakA_{u-l}$
  with $i_{u-b}(a) = i_{u-b}(a')$.
  Unwinding definitions, this is equivalent to saying
  $a_{u-b} = a'_{u-b}$,
  where $a$ decomposes as $a_{u-b} + a_{b-l}$
  and $a'$ as $a'_{u-b} +a'_{b-l}$.
  This forces $a_{b-l} = a'_{b-l}$ and hence $a=a'$ showing $i_{u-b}$ injective,
  for otherwise
  $ a - a' = (a_{u-b} +a_{b-l}) - (a'_{u-b} + a'_{b-l}) = a_{b-l} - a'_{b-l}$
  would be an element of $\frakA_{u-l}$ contained entirely in $b-l$,
  contradicting $l$ maximal.
  An analogous argument shows $i_{b-l}$ injective.

  Since $i_{u-b},i_{b-l}$ are injective complete-boolean-algebra homomorphisms
  and therefere homomorphisms of standard measurable algebras, they correspond
  by \cref{app:lem:mblecat-dual}
  to maps of standard enhanced measurable spaces with common codomain.
  By \cref{app:lem:mble-right-ore}, this cospan
  completes to a commutative square.
  Passing this square back through \cref{app:lem:mblecat-dual}
  gives 
  a standard measurable algebra $\frakC_{u-l}$
  and homomorphisms $j,k$
  fitting into the following commutative square:
  \[\commsquare{\frakA}{i_{u-b}}{\frakB_{u-b}}{i_{b-l}}{j}{\frakB_{b-l}}{k}{\frakC_{u-l}}\]
  All ingredients needed to construct $f,g$ are now at hand.
  Let $\frakC$ be the standard measurable algebra $\frakB_{\neg u}\times (\frakC_{u-l}\lmafp[0,1]) \times \frakB_l$.
  Let $p,q$ be the injective homomorphisms $\frakC_{u-l}\times\frakC_{u-l}\hookrightarrow \frakC_{u-l}\lmafp[0,1]$
  defined by the following equations:
  \begin{align*}
     p(c_1,c_2) &= (c_1\lmafp [0,1/2]) + (c_2\lmafp [1/2,1]) \\
     q(c_1,c_2) &= (c_1\lmafp [0,1/3]) + (c_2\lmafp [1/3,1]) 
  \end{align*}
  Let $f,g : \frakB\hookrightarrow\frakC$ be the following composites
  (note the only difference in their definitions is whether $p$ or $q$ is used at the end):
  \begin{align*}
     f &= \left(\frakB
       \xrightarrow{\sim} \frakB_{\neg u}\times \frakB_{u-b}\times\frakB_{b-l}\times \frakB_l
       \xhookrightarrow{1 \times j\times k \times 1} \frakB_{\neg u}\times (\frakC_{u-l}\times\frakC_{u-l})\times \frakB_l
       \xhookrightarrow{1 \times p \times 1} \frakB_{\neg u}\times (\frakC\lmafp[0,1])\times \frakB_l
       =\frakC
       \right) \\
     g &= \left(\frakB
       \xrightarrow{\sim} \frakB_{\neg u}\times \frakB_{u-b}\times\frakB_{b-l}\times \frakB_l
       \xhookrightarrow{1 \times j\times k \times 1} \frakB_{\neg u}\times (\frakC_{u-l}\times\frakC_{u-l})\times \frakB_l
       \xhookrightarrow{1 \times q \times 1} \frakB_{\neg u}\times (\frakC\lmafp[0,1])\times \frakB_l
       =\frakC
       \right) 
  \end{align*}
  Since every element $x$ of $\frakB$ decomposes into a disjoint union
  $x_{\neg u} +x_{u-b} + x_{b-l} + x_l$ for some $x_{\neg u}\in\frakB_{\neg u}$
  and $x_{u-b} \in\frakB_{u-b}$ and $x_{b-l}\in\frakB_{b-l}$ and $x_l\in\frakB_l$,
  the action of $f$ and $g$ on arbitrary elements of $\frakB$ can be described 
  by the following equations:
  \begin{align*}
    f(x_{\neg u} +x_{u-b} + x_{b-l} + x_l) &= (x_{\neg u}, (j(x_{u-b})\lmafp[0,1/2]) + (k(x_{b-l})\lmafp[1/2,1]), x_l) \\
    g(x_{\neg u} +x_{u-b} + x_{b-l} + x_l) &= (x_{\neg u}, (j(x_{u-b})\lmafp[0,1/3]) + (k(x_{b-l})\lmafp[1/3,1]), x_l)
  \end{align*}
  We now verify $f,g$ have the desired property.
  First $f,g$ agree on the subalgebra $\frakA$:
  for any $a$ in $\frakA$, it holds that $j(a_{u-b}) = k(a_{b-l})$
  because the square $ji_{u-b} = ki_{b-l}$ commutes, so
  \begin{align*}
    f(a_{\neg u} + a_{u-b} + a_{b-l} + a_l)
    &= (a_{\neg u}, (j(a_{u-b})\lmafp[0,1/2]) + (k(a_{b-l})\lmafp[1/2,1]), a_l) \\
    &= (a_{\neg u}, (j(a_{u-b})\lmafp[0,1/2]) + (j(a_{u-b})\lmafp[1/2,1]), a_l) \\
    &= (a_{\neg u}, j(a_{u-b})\lmafp([0,1/2]+[1/2,1]), a_l) \\
    &= (a_{\neg u}, j(a_{u-b})\lmafp \top, a_l) \\
    &= (a_{\neg u}, j(a_{u-b})\lmafp([0,1/3]+[1/3,1]), a_l) \\
    &= (a_{\neg u}, (j(a_{u-b})\lmafp[0,1/3]) + (j(a_{u-b})\lmafp[1/3,1]), a_l) \\
    &= (a_{\neg u}, (j(a_{u-b})\lmafp[0,1/3]) + (k(a_{b-l})\lmafp[1/3,1]), a_l) \\
    &= g(a_{\neg u} +a_{u-b} + a_{b-l} + a_l).
  \end{align*}
  Second, $f(b)\notin\img g$:
  every element in the image of $g$ is of the form
  \begin{align*}
    (x_{\neg u}, (j(x_{u-b})\lmafp[0,1/3]) + (k(x_{b-l})\lmafp[1/3,1]), x_l)
  \end{align*}
  but
  \begin{align*}
    f(b)
    &= f(\bot + \bot + (b-l) + l)\\
    &= (\bot, (j(\bot)\lmafp[0,1/2]) + (k(\top_{\frakB_{b-l}})\lmafp[1/2,1]), \top_{\frakB_l}) \\
    &= (\bot, (\bot\lmafp[0,1/2]) + (\top\lmafp[1/2,1]), \top) \\
    &= (\bot, \top\lmafp[1/2,1], \top)
  \end{align*}
  and there are no $x_{u-b},x_{b-l}$
  such that $(j(x_{u-b})\lmafp[0,1/3]) + (k(x_{b-l})\lmafp[1/3,1]) = \top\lmafp[1/2,1]$.
\end{proof}

\section{General sheaf theory} \label{app:sec:sheaf-theory}

\subsection{Atomic sheaves} \label{app:sec:atomic-sheaves}

\begin{notation} \label{app:notation:psh-action}
  If $F$ is a presheaf
  on $C$ and $f:X\to Y$ a morphism in $C$
  and $y\in FY$,
  we will write $y\cdot f$ for the element $F(f)(y)\in FX$.
\end{notation}

\begin{definition}[$f$-invariance] \label{app:def:f-inv}
  Let $F$ be a presheaf on $C$,
  $y\in Fd$ an element of $F$,
  and $f : d\to c$ a morphism in $C$.
  Following \citet{simpson2017probability},
  say $y$ is \emph{$f$-invariant}
  if for all $g,h : e\to d$
  with $fg = fh$
  it holds that
  $y\cdot g = y\cdot h$.
  The following diagram illustrates the situation:
  \[ \begin{tikzcd}
  y\cdot g = y\cdot h \in Fe                                & y\in Fd \arrow[l, maps to] &   \\
  e \arrow[r, "h"', shift right] \arrow[r, "g", shift left] & d \arrow[r, "f"]           & c
  \end{tikzcd}\]
\end{definition}

\begin{fact} \label{app:fact:right-ore-atomic-topology}
  The atomic topology exists for $C$ iff $C$ satisfies the \emph{right Ore property}:
  for any two morphisms $f : c\to e$ and $g : d\to e$,
  there exists an object $b$ and morphisms $h : b\to c$
  and $k : b\to d$ such that
  \[ \commsquare{b}{k}{d}{h}{g}{c}{f}{e} \]
  commutes~\cite[Example III.2(f), p.115]{maclane2012sheaves}.
\end{fact}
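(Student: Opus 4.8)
The plan is to verify directly that the candidate ``atomic topology''---the assignment sending each object $c$ to the collection $J(c)$ of all \emph{inhabited} sieves on $c$---satisfies the three Grothendieck-topology axioms exactly when the right Ore property holds. Recall that a sieve $S$ on $c$ is a set of morphisms into $c$ closed under precomposition, and that for $f : d\to c$ the pullback sieve is $f^*S = \{\,g : e\to d \mid fg\in S\,\}$. The axioms to check are: (i) the maximal sieve $\Hom(-,c)$ lies in $J(c)$; (ii) \emph{stability}, that $S\in J(c)$ and $f : d\to c$ imply $f^*S\in J(d)$; and (iii) \emph{transitivity}.

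First I would dispatch the two axioms that hold unconditionally. The maximal sieve contains $\mathrm{id}_c$, so it is inhabited, giving (i). For transitivity, suppose $S$ is inhabited and $R$ is a sieve with $f^*R$ inhabited for every $f\in S$; choosing any $f\in S$, inhabitedness of $f^*R$ yields $g$ with $fg\in R$, so $R$ is inhabited. Thus (i) and (iii) are automatic, and the entire content of ``the atomic topology exists'' is concentrated in the stability axiom (ii).

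The remaining step is to show stability is equivalent to the right Ore property. For the direction Ore $\Rightarrow$ stability: given an inhabited $S$ on $c$ and $f : d\to c$, pick $h : a\to c$ in $S$ and apply Ore to the cospan $d\xrightarrow{f} c\xleftarrow{h} a$, producing an apex $b$ with $k : b\to d$ and $l : b\to a$ satisfying $fk = hl$. Since $S$ is a sieve and $h\in S$, we get $hl\in S$, hence $fk\in S$, so $k\in f^*S$ and $f^*S$ is inhabited. For the converse, stability $\Rightarrow$ Ore: given a cospan $c\xrightarrow{f} e\xleftarrow{g} d$, let $S$ be the principal sieve on $e$ generated by $g$, which is inhabited; stability gives $f^*S$ inhabited, i.e.\ some $h : b\to c$ with $fh\in S$, meaning $fh$ factors as $gk$ for some $k : b\to d$, which is precisely the commuting square $fh = gk$ witnessing Ore.

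I do not expect a genuine obstacle, since the result is a routine unwinding of definitions; the only thing demanding care is matching the handedness of the Ore condition---which leg of the cospan is completed against which---to the direction in which sieves are pulled back in the stability axiom. Once the bookkeeping of $f^*S$ and of principal sieves is set up so these directions align, both implications are immediate.
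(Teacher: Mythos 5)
Your proof is correct, and it is essentially the argument behind the paper's citation: the paper does not prove this fact itself but defers to Mac Lane--Moerdijk (Example III.2(f)), where the atomic topology is likewise taken to be the assignment of all nonempty sieves and the right Ore condition is exactly what makes the stability axiom hold, the maximality and transitivity axioms being automatic. Your two implications (Ore $\Rightarrow$ stability via completing the cospan $d\to c\leftarrow a$ with $a\to c$ a chosen element of the sieve, and stability $\Rightarrow$ Ore via pulling back the principal sieve generated by $g$) are the standard bookkeeping and check out in every detail.
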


\begin{definition}[atomic sheaf] \label{app:def:atomic-sheaf-condition}
  Let $C$ be a category for which the atomic topology exists.
  A presheaf $F$ on $C$ is an \emph{atomic sheaf}
  if and only if it satisfies
  the following condition~\cite[Lemma III.4.2]{maclane2012sheaves}:
  for all morphisms $f : d\to c$ in $C$,
  the function $Ff$ is an inclusion $Fc\hookrightarrow Fd$
  whose image is the subset of \nref{app:def:f-inv}{$f$-invariant} elements
  of $Fd$.
  More explicitly: for all morphisms
  $f : d\to c$ and $f$-invariant elements $y\in Fd$,
  there exists a unique $x\in Fc$
  with $y = x\cdot f$.
\end{definition}

\subsection{Continuous group-invariant sets} \label{app:sec:tgsets}

\begin{definition}[category of $G$-sets] \label{app:def:gsets}
  For $G$ a topological group,
  the \emph{category of continuous $G$-sets},
  written $\tgsets G$,
  is the category whose objects are sets $X$ equipped
  with a continuous right action $(\cdot_X) : X\times G\to X$
  (where $X$ is given the discrete topology)
  and whose morphisms from $X$ to $Y$
  are functions $f : X\to Y$
  that are equivariant: $f(x\cdot_X g) = f(x)\cdot_Y g$
  for all $x\in X$ and $g\in G$.
\end{definition}

\begin{lemma} \label{app:lem:left-right-gsets}
  For any topological group $G$, there is an
  isomorphism of categories
  $\tgsets G\cong \tgsets{G\op}$.
\end{lemma}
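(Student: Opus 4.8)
The plan is to build the isomorphism by hand using the inversion map of $G$, which is the only ingredient distinguishing a left action from a right action. Concretely, I would define a functor $\Phi \colon \tgsets G \to \tgsets{G\op}$ that is the identity on underlying sets and underlying functions but reindexes the action by inversion: given a continuous right $G$-set $(X,\cdot)$, set $x \ast g := x \cdot g^{-1}$, and send an equivariant map to the same function.

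First I would check $\Phi$ is well-defined on objects. The pair $(X,\ast)$ is a right $G\op$-action: $x \ast 1 = x \cdot 1 = x$, and since multiplication in $G\op$ sends $(g,h)$ to $hg$, we compute $x \ast (hg) = x \cdot (hg)^{-1} = x \cdot (g^{-1}h^{-1}) = (x \cdot g^{-1}) \cdot h^{-1} = (x \ast g) \ast h$, which is exactly the $G\op$-action law. Continuity of $\ast$ follows because $x \ast g = x \cdot g^{-1}$ is the composite of the continuous action $\cdot$ with inversion applied to the $G\op$-coordinate, and inversion is continuous in any topological group; here one uses that $G\op$ and $G$ share the same underlying topological space. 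I would then check $\Phi$ preserves equivariance: if $f(x \cdot g) = f(x) \cdot g$, then $f(x \ast g) = f(x \cdot g^{-1}) = f(x) \cdot g^{-1} = f(x) \ast g$. Functoriality is immediate, since $\Phi$ leaves the underlying functions untouched and hence preserves identities and composites.

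To finish, I would define $\Psi \colon \tgsets{G\op} \to \tgsets G$ by the symmetric construction, using $(G\op)\op = G$ and that inversion in $G\op$ agrees with inversion in $G$. A direct computation shows $\Psi\Phi$ returns the action $x \mapsto x \cdot (g^{-1})^{-1} = x \cdot g$, so $\Psi\Phi = \mathrm{id}_{\tgsets G}$, and symmetrically $\Phi\Psi = \mathrm{id}$; on morphisms both composites are the identity because the underlying functions are never altered. Hence $\Phi$ is an isomorphism of categories. I do not anticipate any genuine obstacle: the only points requiring care are the order-reversal bookkeeping in the $G\op$-action law and the remark that continuity is preserved because inversion is continuous --- equivalently, that $g \mapsto g^{-1}$ is an isomorphism of topological groups $G \xrightarrow{\sim} G\op$, from which the categorical isomorphism could alternatively be obtained by transport of structure.
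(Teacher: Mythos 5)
Your proposal is correct and takes essentially the same approach as the paper: precompose the action with inversion, leave underlying sets and functions untouched, and observe that equivariance and functoriality are preserved. The only cosmetic difference is the continuity check --- you invoke continuity of inversion directly, while the paper notes that the two actions have the same stabilizer subgroups (which suffices since the carriers are discrete); both are valid one-line justifications.
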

\begin{proof}
  Whereas objects of $\tgsets G$
  are sets $X$ equipped with a continuous right action
  $(\cdot) : X\times G\to X$,
  objects of $\tgsets{G\op}$
  are sets $X$ equipped with a continuous left action
  $(\cdot\op) : G\times X\to X$.
  Every $(X,\cdot)\in\tgsets G$
  corresponds to $(X, \cdot\op)\in\tgsets{G\op}$
  by setting $g \cdot\op x := x\cdot g^{-1}$.
  (This is indeed a left action, since
  $g\cdot\op h \cdot\op x = x\cdot h^{-1}\cdot g^{-1}
  = x\cdot (gh)^{-1} = gh\cdot\op x$,
  and it is continuous because
  $(\cdot)$ and $(\cdot\op)$
  yield the same stabilizer subgroups.)
  A morphism $f : (X,\cdot_X)\to(Y,\cdot_Y)$ in $\tgsets G$
  is a function $f : X\to Y$
  satisfying $f(x\cdot_X g) = f(x)\cdot_Y g$ for all $g\in G$,
  which is equivalent to $f(x\cdot_X g^{-1}) = f(x)\cdot_Y g^{-1}$
  for all $g\in G$,
  making $f$ also a morphism $(X,\cdot\op_X)\to(Y,\cdot\op_Y)$
  in $\tgsets{G\op}$.
\end{proof}

\subsection{Presheaves with minimal supports}

\begin{definition}[category with minimal supports] \label{app:def:supported-category}
  A category $C$ \emph{has minimal supports} if every coslice
  $c/C$ has a terminal object.
  For any object $c$ of $C$, call the terminal object
  $(c^*, p : c\to c^*)$ of $c/C$ the \emph{support} of $c$.
  An object $c$ has \emph{trivial support}
  or is \emph{trivially-supported}
  if its support is the identity map $1 : c\to c$.
\end{definition}

\begin{lemma} \label{app:lem:supported-category-stability}
  If $C$ has minimal supports
  and $c$ has support $p : c \to c^*$
  and $f : c\to d$,
  then $d$ has support $!_f : d\to c^*$
  where $!_f$ is the unique morphism $f\to p$ in $c/C$.
\end{lemma}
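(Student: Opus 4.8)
The plan is to prove the statement by showing directly that the pair $(c^*, !_f : d \to c^*)$ is a terminal object of the coslice $d/C$, since by \cref{app:def:supported-category} this is exactly the assertion that $d$ has support $!_f$. Recall that an object of $d/C$ is a morphism $g : d \to x$, and that a morphism in $d/C$ from $(x,g)$ to $(c^*, !_f)$ is a morphism $h : x \to c^*$ of $C$ satisfying $h \circ g = !_f$. So the task reduces to producing, for each object $(x,g)$ of $d/C$, a unique such $h$.

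The key idea is to transport the problem from $d/C$ back into $c/C$ by precomposing with $f$, where the hypothesis that $(c^*,p)$ is terminal can be applied. First I would record that $!_f$ is characterized by $!_f \circ f = p$: by definition $!_f$ is the unique morphism $(d,f) \to (c^*,p)$ in $c/C$, which is precisely a morphism $d \to c^*$ with this property (this also shows $!_f$ is well-defined, as it is the unique arrow into the terminal object from $(d,f)$).

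For existence, given $g : d \to x$, form the object $(x, g\circ f)$ of $c/C$ and let $h : x \to c^*$ be the unique morphism $(x, g\circ f) \to (c^*, p)$ supplied by terminality, so that $h \circ g \circ f = p$. Then $h \circ g : d \to c^*$ satisfies $(h\circ g)\circ f = p$, so it is a morphism $(d,f) \to (c^*,p)$ in $c/C$; uniqueness of arrows into the terminal object $(c^*,p)$ forces $h \circ g = !_f$, which is what we want. For uniqueness, if $h_1, h_2 : x \to c^*$ both satisfy $h_i \circ g = !_f$, then composing on the right with $f$ gives $h_i \circ g \circ f = !_f \circ f = p$, exhibiting both $h_1$ and $h_2$ as morphisms $(x, g\circ f) \to (c^*, p)$ in $c/C$; terminality then yields $h_1 = h_2$.

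I do not expect any serious obstacle here: the whole argument is a routine diagram chase driven entirely by the universal property of the terminal object of $c/C$. The only point requiring care is invoking that universal property in its two distinct roles — once to manufacture $h$ and to identify $h\circ g$ with $!_f$, and once again to obtain uniqueness — while keeping straight which coslice ($c/C$ versus $d/C$) each morphism inhabits. Accordingly I would double-check at the outset that "support" in \cref{app:def:supported-category} means exactly a terminal object of the relevant coslice, so that establishing terminality of $(c^*, !_f)$ in $d/C$ settles the claim.
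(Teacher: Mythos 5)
Your proof is correct and follows essentially the same route as the paper's: both arguments transport an arbitrary object $g : d \to x$ of $d/C$ to the object $g \circ f$ of $c/C$, use terminality of $(c^*, p)$ there to produce the candidate morphism, identify its composite with $g$ as $!_f$ via uniqueness, and get uniqueness in $d/C$ by precomposing with $f$ again. The only difference is presentational — the paper organizes the chase around a single commutative diagram, while you spell out the coslice memberships explicitly.
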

\begin{proof}
Consider the following diagram.
\[\begin{tikzcd}
      & c \arrow[ld, "f"'] \arrow[rd, "p"] &                 \\
  d \arrow[d, "g"'] \arrow[rr, "!_f"', dashed] &                                                                
   & c^* \\
  e \arrow[rru, "!_{gf}"', dashed]                        &                                                                  &                
  \end{tikzcd}\]
The upper triangle depicts the situation given: $d$
has support $p$, and,
because $p$ terminal in $c/C$,
there exists a unique map
$!_f$ making the upper triangle commute in $C$.
To see that the map $!_f$ is the terminal object of $d/C$,
fix an arbitrary object $g$ of $d/C$ as shown.
Because $p$ terminal in $c/C$,
there exists a unique $!_{gf}$ such that the outer quadrilateral 
commutes. The composite $!_{gf}g$ is just as
good as $!_f$ when it comes to making the upper triangle commute, so
uniqueness of $!_f$ implies the lower triangle commutes.
Uniqueness of $!_{gf}$ in making the lower triangle commute
then follows from its uniqueness in making the quadrilateral commute.
\end{proof}

\begin{lemma} \label{app:lem:supported-category-support-id}
  If $C$ has minimal supports and $c$ has support $p : c \to c^*$,
  then $c^*$ has support $1 : c^*\to c^*$.
  Thus every object of $C$ has a map to an object
  with trivial support.
\end{lemma}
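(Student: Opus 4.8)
The plan is to get this almost for free from the stability result \cref{app:lem:supported-category-stability}, applied with the support map of $c$ playing the role of the arbitrary morphism $f$. The key observation is that the ``support of a support'' is forced to be an identity by the uniqueness built into terminality.

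First I would instantiate \cref{app:lem:supported-category-stability} with $d := c^*$ and $f := p$, where $p : c\to c^*$ is the given support of $c$. The lemma then tells us that $c^*$ has support $!_p : c^*\to c^*$, where $!_p$ is the unique morphism from $p$ to $p$ in the coslice $c/C$. Next I would identify this $!_p$ with the identity on $c^*$: by definition, a morphism $p\to p$ in $c/C$ is a $C$-morphism $h : c^*\to c^*$ satisfying $h\circ p = p$, and the identity $1_{c^*}$ plainly satisfies $1_{c^*}\circ p = p$. Since $p$ is terminal in $c/C$, there is exactly one such morphism, so $!_p = 1_{c^*}$. Hence the support of $c^*$ is $1 : c^*\to c^*$, which is precisely the statement that $c^*$ is trivially supported. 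Finally, for the ``thus'' clause, I would observe that every object $c$ comes equipped with its support map $p : c\to c^*$, and the first part shows $c^*$ has trivial support, so $p$ itself exhibits $c$ as admitting a map into a trivially-supported object.

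I do not expect any genuine obstacle here, as the entire argument is an unwinding of the universal property. The only point requiring care is to avoid conflating the two coslices $c/C$ and $c^*/C$: \cref{app:lem:supported-category-stability} phrases the support of $c^*$ via morphisms living in $c/C$, and the crux is recognizing that the terminal object of $c^*/C$ coincides with $!_p = 1_{c^*}$.
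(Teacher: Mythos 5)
Your proof is correct and matches the paper's own argument exactly: instantiate \cref{app:lem:supported-category-stability} at $f = p$, then use terminality of $p$ in $c/C$ to conclude $!_p = 1_{c^*}$. The paper's proof is just a terser version of the same reasoning, and your handling of the ``thus'' clause is the intended one.
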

\begin{proof}
  Set $f = p$ in \cref{app:lem:supported-category-stability};
  have $!_p = 1_{c^*}$ by uniqueness of $!_p$ in making
  the triangle $!_p p = p$ commute.
\end{proof}

\begin{lemma} \label{app:lem:supported-category-mono}
  If $C$ has minimal supports and $c$ has support $p : c\to c^*$
  and $f : d\to c$, then $d$ has support $pf : d\to c^*$.
  In particular, if $c$ has trivial support
  then any map $f : d\to c$ supports $d$.
\end{lemma}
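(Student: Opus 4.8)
The plan is to show directly that $(c^*, pf : d\to c^*)$ is a terminal object of the coslice $d/C$, which by definition makes it the support of $d$. Two earlier facts do the heavy lifting: because $C$ has minimal supports, $d$ itself has a support $r : d\to d^*$ that is terminal in $d/C$; and by \cref{app:lem:supported-category-support-id} the object $c^*$ is trivially supported, i.e.\ $1_{c^*} : c^*\to c^*$ is terminal in $c^*/C$. I expect it to be cleanest to first dispatch the ``in particular'' clause as a standalone sub-claim and then recover the general statement from it.

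For the sub-claim, suppose $e$ is trivially supported and $h : d\to e$; I would show $(e,h)$ is terminal in $d/C$. Given any object $(x, g : d\to x)$ of $d/C$, terminality of $d$'s support $r : d\to d^*$ supplies unique maps $\alpha : e\to d^*$ with $\alpha h = r$ (from the object $(e,h)$) and $\beta : x\to d^*$ with $\beta g = r$ (from the object $(x,g)$). Triviality of $e$'s support then provides a unique $\gamma : d^*\to e$ with $\gamma\alpha = 1_e$, viewing $\alpha : e\to d^*$ as an object of $e/C$ and mapping it to the terminal $1_e$. Setting $k := \gamma\beta$ gives $kg = \gamma\beta g = \gamma r = \gamma\alpha h = h$, which establishes existence of the required factorization.

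The main obstacle is uniqueness of this factorization, which I would settle by a short diagram chase. If $k' : x\to e$ also satisfies $k'g = h$, then $\alpha k'$ is a map $x\to d^*$ with $(\alpha k')g = \alpha h = r$, so the uniqueness of $\beta$ among maps $x\to d^*$ satisfying $(-)g = r$ forces $\alpha k' = \beta$; composing with $\gamma$ and using $\gamma\alpha = 1_e$ yields $k' = \gamma\beta = k$. This proves $(e,h)$ terminal in $d/C$, hence the support of $d$, which is exactly the ``in particular'' clause.

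The general statement then follows at once: since $c^*$ is trivially supported by \cref{app:lem:supported-category-support-id} and $pf$ is a map $d\to c^*$, applying the sub-claim with $e := c^*$ and $h := pf$ shows $(c^*, pf)$ is the support of $d$. (Equivalently, one can run the existence/uniqueness argument directly in the general case, obtaining $\alpha : c^*\to d^*$ with $\alpha(pf) = r$ from terminality of $d^*$, with the retraction $\gamma : d^*\to c^*$ still coming from triviality of $c^*$'s support.) The one point demanding care is bookkeeping the direction of each canonical arrow, since $\alpha$ and $\beta$ arise from terminality in $d/C$ (maps \emph{into} $d^*$) whereas $\gamma$ arises from terminality in $c^*/C$ (a retraction \emph{out of} $d^*$); keeping these orientations straight is where the argument is most error-prone.
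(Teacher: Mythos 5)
Your proof is correct, but it takes a genuinely different route from the paper's. The paper proves the general statement directly: it applies \cref{app:lem:supported-category-stability} to $f : d\to c$ and $d$'s support $q : d\to d^*$ to conclude that the induced map $!_f : c\to d^*$ is itself a support of $c$; since $p$ and $!_f$ are then two supports of $c$, uniqueness of terminal objects yields an isomorphism $c^*\cong d^*$ under $c$, and pasting the two triangles shows $pf\cong q$ in $d/C$, so $pf$ is terminal. You instead invert the decomposition: you first prove the ``in particular'' clause as a standalone claim --- if $e$ is trivially supported and $h : d\to e$, then $(e,h)$ is terminal in $d/C$ --- by an explicit existence-and-uniqueness argument (building the factorization $k=\gamma\beta$ from the two canonical maps into $d^*$ and the retraction $\gamma$ supplied by triviality of $e$'s support, then chasing uniqueness through $\alpha k'=\beta$), and you recover the general statement by combining this with \cref{app:lem:supported-category-support-id}, which says $c^*$ is trivially supported. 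Both arguments are sound and of comparable length; the paper's leverages the already-proven stability lemma and the abstract uniqueness of terminal objects, while yours avoids \cref{app:lem:supported-category-stability} entirely, works from the bare definition of terminality, and has the incidental merit of isolating the trivial-support case, which is precisely the form in which the lemma is later invoked (e.g.\ in the proof that the Day convolution embeds in the product). Your bookkeeping of arrow directions --- $\alpha,\beta$ into $d^*$ from terminality in $d/C$, the retraction $\gamma$ out of $d^*$ from terminality in $c^*/C$ --- is exactly right, and notably your argument nowhere needs maps in $C$ to be epi.
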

\begin{proof}
  Consider the following diagram.
  \[\begin{tikzcd}
                                               & d \arrow[ld, "f"'] \arrow[rd, "q"] &                                   \\
  c \arrow[rd, "p"'] \arrow[rr, "!_f", dashed] &                                    & d^* \arrow[ld, dashed, bend left] \\
                                               & c^* \arrow[ru, dashed, bend left]  &                                  
  \end{tikzcd}\]
  The solid arrows depict the situation given: $c$ has support $p$
  and $f:d\to c$, and, since $C$ supported, $d$ has some support $q : d\to d^*$.
  To show $pf$ supports $d$, it suffices to show $pf\cong q$ as objects
  of $d/C$.
  By \cref{app:lem:supported-category-stability}, the unique map $!_f$
  making the upper triangle commute is a support for $c$.
  Thus $!_f$ and $p$ are both supports for $c$;
  thus $!_f\cong p$ as objects of $c/C$,
  and there exists the unlabelled dashed isomorphism making
  the lower triangle commute.
  Since both the upper and lower triangles commute, the whole
  diagram commutes, and
  the dashed isomorphism gives $f\cong q$ in $d/C$ as desired.
\end{proof}

\begin{lemma} \label{app:lem:supported-category-epis-iso}
  If $C$ has minimal supports and contains only epis,
  and $c$ has trivial support,
  then every $f : c\to d$ is an isomorphism.
\end{lemma}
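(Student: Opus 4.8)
The plan is to exploit the hypothesis that $c$ has trivial support to extract a one-sided inverse of $f$, and then to promote that one-sided inverse to a genuine two-sided inverse using the assumption that every morphism of $C$ is epi.

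First I would unfold what trivial support means: by \cref{app:def:supported-category}, $c$ having trivial support says that its support is the identity $1_c : c\to c$, so $c^* = c$ and $p = 1_c$. Since $p = 1_c$ is by definition terminal in the coslice $c/C$, I can feed the given morphism $f : c\to d$ into \cref{app:lem:supported-category-stability}. That lemma produces a support $!_f : d\to c$ for $d$, where $!_f$ is the unique morphism from the object $f$ to the object $p = 1_c$ inside $c/C$. Unwinding the definition of a coslice morphism, this means $!_f$ is characterized by $!_f\circ f = p = 1_c$. Hence $!_f$ is a retraction of $f$, exhibiting $f$ as a split monomorphism.

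Next I would use the epi hypothesis to upgrade this retraction to a two-sided inverse. Composing on the right with $f$ gives $f\circ !_f\circ f = f\circ(!_f\circ f) = f\circ 1_c = f = 1_d\circ f$. Because $f$ is an epimorphism (every morphism of $C$ is), it may be cancelled on the right, yielding $f\circ !_f = 1_d$. Together with $!_f\circ f = 1_c$ this shows that $!_f$ is a two-sided inverse of $f$, so $f$ is an isomorphism, completing the argument.

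The main obstacle is really just keeping the composition conventions straight. The entire force of the trivial-support hypothesis is that the canonical support map $!_f$ becomes an actual retraction of $f$ — i.e.\ $!_f\circ f = 1_c$ — rather than merely some map lying over $c$; this requires reading off the direction of the coslice morphism correctly. Once that identity is pinned down, the epi cancellation that supplies the other half of the inverse is entirely routine.
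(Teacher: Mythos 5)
Your proof is correct and takes essentially the same route as the paper's: terminality of $1_c$ in the coslice $c/C$ yields a retraction $g$ with $g\circ f = 1_c$, and right-cancellation of the epi $f$ (from $f\circ g\circ f = f$) upgrades this to a two-sided inverse. The only difference is cosmetic --- you obtain the retraction by invoking \cref{app:lem:supported-category-stability} rather than directly from terminality of the trivial support, which is a harmless detour.
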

\begin{proof}
  Since $1 : c\to c$ terminal in $c/C$, there is a unique
  $g : d\to c$ with $gf = 1$,
  so $f$ epi and a left inverse, so $f$ iso.
\end{proof}

\begin{definition}[presheaf with minimal supports] \label{app:def:supported-sheaf}
  \labelword{Say}{def:catelts} a presheaf $F$ on $C$ \emph{has minimal supports}
  if its category of elements $\catelts(F)$ is a \nameref{app:def:supported-category}.
  Unwinding definitions, $F$ has minimal supports if for every $x : F c$
  there exists $x^* : F c^*$ and $p :  c\to c^*$
  with $x = x^*\cdot p$
  such that for all
  $x' : F c'$ and $p' :  c\to c'$
  with $x = x'\cdot p$,
  there exists a unique $q :  c'\to c^*$
  with $x' = x^*\cdot q$ 
  and $p = qp'$.
  As a terminal object, $( c^*,x^*,p)$
  is unique up to unique isomorphism,
  which in this case means unique mod
  $( c^*,x^*,p) \sim ( c^\sharp, x^*\cdot i, i^{-1}p)$
  for all isomorphisms $i :  c^\sharp\to  c^*$.
\end{definition}

\begin{intuition}
  For sheaves on a category of measurable spaces,
  a sheaf $F$ is supported if every element $x : F\Omega$
  can be expressed in terms of a ``smallest sample space'' $\Omega^*$.
  This sample space is not necessarily unique---any two-point space
  will do for modelling a boolean random variable, for example---but
  is unique up to unique isomorphism of measurable spaces.
\end{intuition}

\begin{note}
  This notion of support is related
  to the one presented in \citet[Section 4.4.1]{staton2007name}:
  a sheaf with minimal supports as defined here corresponds,
  under the terminology there, to a sheaf for which every element has a least support.
\end{note}

\begin{lemma} \label{app:lem:yo-supps}
  For any category $C$ and object $c$ of $C$,
  the representable presheaf $\yo c$ has \nref{app:def:supported-sheaf}{minimal supports}.
\end{lemma}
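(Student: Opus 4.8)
The plan is to recognize that the category of elements of the representable presheaf $\yo c$ is just the slice category $C/c$, and then to exploit the fact that $C/c$ has a terminal object. Having a terminal object is more than enough to force every coslice to have one, which is exactly the \nref{app:def:supported-category}{minimal supports} condition we need on $\catelts(\yo c)$.

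First I would unfold $\catelts(\yo c)$. Its objects are pairs $(d,y)$ with $y\in(\yo c)(d)=\Hom(d,c)$, i.e.\ morphisms $y:d\to c$, and a morphism $(d,y)\to(d',y')$ is a $C$-map $f:d\to d'$ with $y'\cdot f=y$. Since $\yo c=\Hom(-,c)$ acts on $f$ by precomposition, the convention $y\cdot f=\yo c(f)(y)=y\circ f$ of \cref{app:notation:psh-action} turns this condition into $y'\circ f=y$. This is precisely the data of the slice category $C/c$, so $\catelts(\yo c)\cong C/c$. The only thing to verify here is that the variance and the direction of the action line up correctly, which is routine bookkeeping.

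Next I would observe that $C/c$ has a terminal object, namely the identity $(c,1_c)$: for any $y:d\to c$, a $C/c$-map from $y$ to $1_c$ is an $f:d\to c$ with $1_c\circ f=y$, forcing $f=y$, so the map exists and is unique. Finally I would record the general fact that any category $\mathcal{D}$ with a terminal object $T$ has minimal supports, i.e.\ every coslice $X/\mathcal{D}$ has a terminal object. The candidate is the unique map $!_X:X\to T$: given any object $g:X\to Y$ of the coslice, a coslice morphism $g\to{}!_X$ is an $h:Y\to T$ with $hg={}!_X$, and since $T$ is terminal there is a unique such $h$, the equation $hg={}!_X$ holding automatically because both sides are maps into $T$. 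Applying this to $\mathcal{D}=\catelts(\yo c)\cong C/c$ with terminal object $(c,1_c)$ shows $\catelts(\yo c)$ has minimal supports, which is exactly what it means for $\yo c$ to have minimal supports.

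I do not expect any real obstacle: the entire content is the identification $\catelts(\yo c)\cong C/c$ together with the elementary observation that a terminal object makes all coslices pointed. The one place to stay careful is matching variances and the direction of $y\cdot f$, so that $(c,1_c)$ genuinely serves as the minimal support $(c^*,x^*,p)$ of an arbitrary element $(d,y)$, with $p=y$ and the universal map $q$ from any competing factorization given by the competing morphism itself.
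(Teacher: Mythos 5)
Your proof is correct and takes essentially the same approach as the paper: both arguments come down to the observation that $(c,1_c)$, with structure map the element $f$ itself, is terminal in each coslice of $\catelts(\yo c)$, with uniqueness forced by the equation $1_c\circ q = g$ under precomposition. The only difference is packaging --- you route through the identification $\catelts(\yo c)\cong C/c$ and the general fact that a terminal object makes every coslice pointed, while the paper verifies terminality in the coslice directly --- but the underlying computation is identical.
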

\begin{proof}
  Fix $f : C(x,c)$.
  The morphism $(f : C(x,c)) \xrightarrow f (1 : C(c,c))$
  is terminal in $f/\catelts(\yo c)$. To see this, fix arbitrary
  $(f : C(x,c))\xrightarrow p (g : C(y,c))$ with aim to find
  a unique dashed morphism making the following triangle commute
  in $f/\catelts(\yo c)$:
   \[
\begin{tikzcd}
  & f:C(x,c) \arrow[ld, "p"'] \arrow[rd, "f"] &                 \\
g:C(y,c) \arrow[rr, "!_p"', dashed] &                                                     & 1 : C(c,c)
\end{tikzcd}\]
Such a dashed arrow $!_p$ must satisfy $1\circ {!_p} = g$, forcing $!_p = g$.
It only remains to check $!_p$ makes the triangle commute;
this follows from the fact that $p$ is a morphism
from $f$ to $g$ in $\catelts(\yo c)$.
\end{proof}

\begin{definition} \label{app:def:rep-mod-g-general}
  For $U : D\to C$ a functor
  and $G$ an essentially small groupoid in $D$,
  define $\yo{(UG)}$ to be the presheaf $\colim_{c\in G} \yo{(Uc)}$,
  where $\yo$ is the Yoneda embedding. More explicitly, the action of $\yo{(UG)}$
  on objects is
  \begin{align*}
    \yo{(UG)}(c) &= \{\text{morphisms $f : c\to Ud$ for $d$ in $G$}\} / \sim 
  \end{align*}
  where $(f : c\to Ud) \sim (g : c\to Ue)$
  if and only if $U(\pi) f = g$ for some $\pi : d\to e$ in $G$,
  and the action on morphisms is
  \begin{align*}
    \yo(UG)(f : c\to d) &= \left(
      \begin{aligned}
        \yo(UG)(d) &\to \yo(UG)(c) \\
        [g : d\to Ue] &\mapsto [gf : c\to Ue]
      \end{aligned}
    \right).
  \end{align*}
\end{definition}

\begin{definition}[representables modulo a groupoid] \label{app:def:rep-mod-g}
  For $C$ any category and $G$ a small groupoid in $C$,
  let $\yo {G}$ be the \emph{presheaf of representables modulo $G$}.
  This is a specialization of \cref{app:def:rep-mod-g-general}
  to the case $U=1_C$.
\end{definition}

\begin{lemma} \label{app:lem:yo-groupoid-supps}
  For $U:D\to C$
  with $C$ a category of epis and $G$ a groupoid in $D$,
  the presheaf \nref{app:def:rep-mod-g-general}{$\yo(UG)$}
  has \nref{app:def:supported-sheaf}{minimal supports}.
\end{lemma}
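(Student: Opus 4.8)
The plan is to exhibit, for each element of $\yo(UG)$, an explicit terminal object in the relevant coslice of its category of elements, working directly from the concrete description of $\yo(UG)$ in \cref{app:def:rep-mod-g-general}. Fix an object $c$ of $C$ and an element $x\in\yo(UG)(c)$; by construction $x = [f]$ for some $d$ in $G$ and some $C$-morphism $f : c\to Ud$. I propose as support the triple $(c^*, x^*, p)$ with $c^* := Ud$, $x^* := [1_{Ud}]\in\yo(UG)(Ud)$, and $p := f : c\to Ud$. The first thing to verify is $x = x^*\cdot p$, which is immediate from the action on morphisms: $x^*\cdot p = \yo(UG)(f)([1_{Ud}]) = [1_{Ud}\circ f] = [f] = x$.

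The substance of the argument is checking terminality. Given any $x'\in\yo(UG)(c')$ and $p' : c\to c'$ with $x = x'\cdot p'$, write $x' = [g]$ for some $e$ in $G$ and $g : c'\to Ue$. The hypothesis $x'\cdot p' = x$ unwinds to $[g\circ p'] = [f]$ in $\yo(UG)(c)$, which by the defining equivalence supplies a morphism $\pi : d\to e$ in $G$ with $U(\pi)\circ f = g\circ p'$. Since $G$ is a groupoid, $\pi$ is invertible, so I can define the mediating map $q := U(\pi^{-1})\circ g : c'\to Ud = c^*$. It then remains to check the two conditions $x' = x^*\cdot q$ and $p = q\circ p'$. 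The first holds because $x^*\cdot q = [1_{Ud}\circ q] = [q]$, and $[q] = [g]$ is witnessed by $\pi$ itself, as $U(\pi)\circ q = U(\pi)\circ U(\pi^{-1})\circ g = g$. The second is a one-line computation: $q\circ p' = U(\pi^{-1})\circ g\circ p' = U(\pi^{-1})\circ U(\pi)\circ f = f = p$.

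Uniqueness of $q$ is exactly where the hypothesis that $C$ is a category of epis is used. If $q_1, q_2 : c'\to c^*$ both satisfy $q_i\circ p' = p$, then $q_1\circ p' = q_2\circ p'$; since $p'$ is a morphism of $C$ and every such morphism is epi, this forces $q_1 = q_2$. Hence $(c^*, x^*, p)$ is terminal in $x/\catelts(\yo(UG))$, which by \cref{app:def:supported-sheaf} means $\yo(UG)$ has minimal supports. Specializing to $U = 1_C$ recovers the statement for $\yo G$ of \cref{app:def:rep-mod-g}.

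I expect the main obstacle to be purely bookkeeping rather than conceptual: one must track the quotient by the equivalence relation on representatives carefully, and ensure the two hypotheses are invoked at precisely the right places — groupoid-invertibility of $\pi$ both to construct $q$ and to certify $x' = x^*\cdot q$, and the epi property of $p'$ solely for uniqueness. No measure theory or genuine sheaf-theoretic input is needed; the entire content is the combinatorics of the colimit-of-representables presentation.
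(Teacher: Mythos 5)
Your proof is correct and follows essentially the same route as the paper's: both exhibit the support $(Ud,\,[1_{Ud}],\,f)$, derive the mediating map $U(\pi^{-1})\circ g$ (the paper writes it as $U(\pi)g$ with $\pi$ oriented the other way) from the witnessing groupoid morphism, and use the epi hypothesis exactly once, to force uniqueness of the mediating map. The only differences are notational bookkeeping, so there is nothing to add.
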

\begin{proof}
  The proof is similar to \cref{app:lem:yo-supps}.
  Fix an equivalence class $[f : C(x,Uc)]$.
  The morphism $[f : C(x,Uc)] \xrightarrow f [1 : C(Uc,Uc)]$
  is terminal in $[f]/\catelts(\yo(UG))$. To see this, fix arbitrary
  $p : x\to y$ and
  $[f : C(x,Uc)]\xrightarrow p [g : C(y,Ud)]$ with aim to find
  a unique dashed morphism making the following triangle commute
  in $[f]/\catelts(\yo(UG))$:
   \[
\begin{tikzcd}
  & {[f:C(x,Uc)]} \arrow[ld, "p"'] \arrow[rd, "f"] &                 \\
{[g:C(y,Ud)]} \arrow[rr, "!_p"', dashed] & & {[1 : C(Uc,Uc)]}
\end{tikzcd}\]
Unwinding definitions, this diagram says
$[g]\cdot p = [gp] = [f]$, so $U(\pi) gp = f$
for some $\pi : d\to c$ in $G$.
Commutativity of the above triangle requires solving
$!_p p = f$ for $!_p$. Since $p$ epi and $U(\pi) g p = f$,
the composite $U(\pi) g$ is the only possible solution.
It only remains to check that
setting $!_p := U(\pi) g$ gives
a morphism in $\catelts(\yo(UG))$
from $[g]$ to $[1]$, and indeed
$[1]\cdot {!_p} = [1]\cdot U(\pi) g = [U(\pi) g] = [g]$.
\end{proof}

\subsection{The Day convolution of sheaves with minimal supports}

In this section we describe conditions under which Day convolution preserves atomic sheaves.

\begin{definition}[tensor product of presheaves] \label{app:def:tensor-product-of-presheaves} \label{app:def:tensor-product-of-sheaves}
  Let $(C,\otimes,\monunit)$ be a symmetric monoidal category.
  Given two presheaves $P,Q$ on $C$,
  their \emph{Day convolution}~\citep{day2006closed}
  is defined by the following coend:
  \[ (P\otimes Q) c = \int^{c_P,c_Q\in C} Pc_P \times Qc_Q \times C(c,c_P\otimes c_Q)
  \]
  This product makes the category $\psh(C)$ of presheaves on $C$
  into a symmetric monoidal category,
  with unit $\yo \monunit$ where $\yo$ is the Yoneda embedding.
\end{definition}

The following lemma gives a concrete representation for the
Day convolution of two atomic sheaves in the special case
where the sheaves \nref{app:def:supported-sheaf}{have minimal supports}
and the base category is made only of epis.

\begin{lemma} \label{app:lem:ussheaf-conv}
  Let $(C,\otimes,\monunit)$ be a symmetric monoidal category of epis
  for which the atomic topology exists.
  If $F$ and $G$ are atomic sheaves on $C$
  with \nref{app:def:supported-sheaf}{minimal supports},
  then $iF\otimes iG$
  is the presheaf
  \[ (iF\otimes iG)( c)
    = \left(\begin{aligned}
      &\text{tuples $( c_x, c_y, f :  c\to c_x\otimes  c_y, x : F c_x, y : G c_y)$, abbreviated $(f,x:F c_x,y:G c_y)$,} \\
      &\text{where $x$ and $y$ have \nref{app:def:supported-category}{trivial support},
       mod the equivalence relation} \\
      &\quad (f,x:F c_x,y:G c_y) \sim (g,a:F c_a,b:G c_b) \\
      &\text{iff there exist isos $h :  c_a\to  c_x$ and $k :  c_b\to  c_y$} \\
      &\text{such that $f = (h\otimes k)g$ and $a =x\cdot h$ and $b = y\cdot k$}
    \end{aligned}\right)
    \]
\end{lemma}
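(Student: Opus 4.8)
The plan is to compute the coend defining the Day convolution (\cref{app:def:tensor-product-of-presheaves}) directly and exhibit an explicit bijection with the stated presheaf. Unfolding the coend, an element of $(iF\otimes iG)(c)$ is an equivalence class of triples $(x,y,f)$ with $x\in F c_x$, $y\in G c_y$, and $f : c\to c_x\otimes c_y$, modulo the equivalence relation generated by the dinaturality relations $(\tilde x\cdot u,\ \tilde y\cdot v,\ f)\sim(\tilde x,\ \tilde y,\ (u\otimes v)f)$ for all $u : c_x\to c_{\tilde x}$ and $v : c_y\to c_{\tilde y}$, using the presheaf-action notation $z\cdot g = F(g)(z)$. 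Write $T(c)$ for the set described in the statement --- triples with $x,y$ \nref{app:def:supported-category}{trivially supported}, modulo the relation by isomorphisms $h,k$ --- and let $\phi : T(c)\to(iF\otimes iG)(c)$ send the class of $(f,x,y)$ to the coend class of $(x,y,f)$. I would first check $\phi$ is well defined: if $(f,x,y)$ and $(g,a,b)$ are related by isos $h,k$ with $a=x\cdot h$, $b=y\cdot k$, $f=(h\otimes k)g$, then applying the generating relation with $u=h$ and $v=k$ gives $(a,b,g)=(x\cdot h,y\cdot k,g)\sim(x,y,(h\otimes k)g)=(x,y,f)$, so the two coend classes agree.

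The heart of the argument is constructing an inverse $\psi$ by \emph{reducing to supports}. Given a coend class with representative $(x,y,f)$, I would use that $F$ and $G$ have minimal supports (\cref{app:def:supported-sheaf}) to choose supports $x=x^*\cdot p$ and $y=y^*\cdot q$ with $x^*,y^*$ trivially supported, and set $\psi$ of the class to be the $T(c)$-class of $((p\otimes q)f,\ x^*,\ y^*)$. Well-definedness has two parts. First, independence of the choice of support: any two supports of $x$ are uniquely isomorphic as terminal objects of the relevant coslice of $\catelts(F)$, and this isomorphism is exactly an $h$ witnessing the iso-relation on $T(c)$, so the $T(c)$-class is unchanged. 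Second, invariance under the generating coend relation: here I would apply \cref{app:lem:supported-category-mono} inside $\catelts(F)$ and $\catelts(G)$ --- noting that the forgetful functor to $C$ reflects epis, so these categories of elements again consist of epis --- so that a morphism $u : (c_x,x)\to(c_{\tilde x},\tilde x)$ transports the support of $\tilde x$ back to a support of $x=\tilde x\cdot u$ with the \emph{same} support element $\tilde x^*$ and structure map $\tilde p\,u$. Consequently, reducing the two sides of $(\tilde x\cdot u,\tilde y\cdot v,f)\sim(\tilde x,\tilde y,(u\otimes v)f)$ yields the triples $((\tilde p u\otimes \tilde q v)f,\tilde x^*,\tilde y^*)$ and $((\tilde p\otimes\tilde q)(u\otimes v)f,\tilde x^*,\tilde y^*)$, which are literally equal because $\otimes$ is a bifunctor and $(\tilde p u)\otimes(\tilde q v)=(\tilde p\otimes\tilde q)(u\otimes v)$.

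With $\psi$ in hand, I would verify the two composites are identities: $\phi\psi$ is the identity because sending $(x,y,f)$ to $(x^*,y^*,(p\otimes q)f)$ is itself an instance of the generating relation (with $u=p$, $v=q$), so $\phi\psi$ returns the original coend class; and $\psi\phi$ is the identity because a trivially supported element is its own support via the identity map, so the reduction does nothing. Naturality in $c$ is immediate, as the presheaf action on both sides is precomposition of the structure map $f$ with the given morphism, and $\phi$ manifestly commutes with this. The role of the hypothesis that $C$ consists of epis enters through \cref{app:lem:supported-category-epis-iso}, which forces any comparison morphism out of a trivially supported element to be an isomorphism; this is what makes the iso-relation the correct (and honestly symmetric) description rather than something coarser. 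I expect the main obstacle to be the second well-definedness check for $\psi$: getting the support-transport bookkeeping exactly right so that the two reduced triples coincide on the nose, which is precisely where \cref{app:lem:supported-category-mono} and bifunctoriality of $\otimes$ do the work.
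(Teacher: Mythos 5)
Your proposal is correct, and it reaches the stated description by a genuinely different decomposition than the paper's own proof. The paper argues in three steps: every coend class has a trivially-supported representative; a \emph{single} dinaturality relation between two trivially-supported tuples is forced to be witnessed by isomorphisms (this is where the epi hypothesis enters, via \cref{app:lem:supported-category-epis-iso} applied in $\catelts(F)$ and $\catelts(G)$); and the iso-relation on trivially-supported tuples is an equivalence relation. You instead build an explicit two-sided inverse $\psi$ by reduction to supports and verify that it descends to the coend. The substantive difference is that your route handles a point the paper's argument only glosses over: the coend is a quotient by the equivalence relation \emph{generated} by the dinaturality relations, so two trivially-supported tuples could a priori be identified through a zigzag whose intermediate tuples are not trivially supported; the paper's single-step analysis does not literally cover that case, whereas your invariance check for $\psi$ (transport of supports along morphisms of $\catelts(F)$ via \cref{app:lem:supported-category-mono}, plus bifunctoriality of $\otimes$) is exactly what rules it out. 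So your proof is, if anything, the more airtight of the two, at the cost of more bookkeeping. Two small remarks. First, you should record that the iso-relation is an equivalence relation (isos compose and invert; the paper verifies this explicitly) --- this is needed for the target $T(c)$ of $\psi$ to be a well-defined quotient in the first place. Second, your closing attribution of the epi hypothesis to \cref{app:lem:supported-category-epis-iso} describes the paper's route rather than yours: your construction only ever uses uniqueness of terminal objects and \cref{app:lem:supported-category-mono}, neither of which requires epis, so your argument appears not to use the epi hypothesis at all --- a mild strengthening of the lemma.
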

\begin{proof}
  The Day convolution $iF\otimes iG$ is a presheaf that sends $ c$ to
  \begin{align*}
    (iF\otimes iG)( c)
    &= \int^{ c_x,  c_y}
    C( c, c_x\otimes  c_y) \times
    F c_x \times G c_y \\
    &= \left(\begin{aligned}
      &\text{tuples $( c_x, c_y, f :  c\to c_x\otimes  c_y, x : F c_x, y : G c_y)$, abbreviated $(f,x:F c_x,y:G c_y)$,} \\
      &\text{mod the equivalence relation generated by} \\
      &\quad ((g\otimes h)\circ f,x:F c_x,y:G c_y)\sim(f,x\cdot g:F c_x',y\cdot h:G c_y') \\
      &\text{for all $f :  c\to c_x'\otimes  c_y'$ and $g :  c_x'\to c_x$ and $h :  c_y'\to c_y$}
    \end{aligned}\right)
  \end{align*}
  Since $F$ and $G$ have minimal supports,
  every element $[f,x:F c_x,y:G c_y]$
  of $(iF\otimes iG)( c)$
  (writing $[-]$ for equivalence class)
  is of the form $[f,x^*\cdot p_x:F c_x^*, y^* \cdot p_y : G c_y^*]$
  for some $(x^*, p_x)$ supporting $x$ and $(y^*, p_y)$ supporting $y$.
  This simplifies things:
  the elements $x^*$ and $y^*$ have support $1_{ c_x^*}$ and $1_{ c_y^*}$
  by \cref{app:lem:supported-category-support-id},
  and $[f,x^*\cdot p_x, y^*\cdot p_y] = [(p_x\otimes p_y)\circ f, x^*,y^*]$,
  so every element of $(iF\otimes iG)( c)$
  is of the form $[f, x : F c_x, y : G c_y]$
  for some $x$ and $y$ whose supports are the identity maps $1_{ c_x}$
  and $1_{ c_y}$ respectively:
  \begin{align*}
    (iF\otimes iG)( c) &= \left(\begin{aligned}
      &\text{trivially-supported tuples $(f, x : F c_x, y : G c_y)$} \\
      &\text{mod the equivalence relation generated by $(\sim)$ as above} \\
    \end{aligned}\right)
  \end{align*}
  Suppose two trivially-supported tuples $(f,x\ofty F c_x,y\ofty G c_y)$
  and $(g,a\ofty F c_a,b\ofty G c_b)$ are
  related by $(\sim)$,
  so there exist $h :  c_a\to  c_x$ and $k :  c_b\to c_y$ such that
  \[ 
    f = (h\otimes k)  g
    \qquad\text{and}\qquad
    a = x\cdot h
    \qquad\text{and}\qquad
    b = y\cdot k.
    \]
  The equation $a = x\cdot h$ corresponds to a
  morphism
  $(a : F c_a)\stackrel h\longrightarrow (x : F c_x)$
  in $\catelts(F)$.
  Now $a$ has support $1_{ c_a}$
  by assumption and $\catelts(F)$ contains only epis
  because $ C$ does by hypothesis,
  so $h$ iso in $\catelts(F)$
  by \cref{app:lem:supported-category-epis-iso}.
  This implies $h$ iso in $ C$.
  Running the same argument on the equation $b = y\cdot k$
  gives $k$ iso in $ C$.
  Thus
  \begin{align} \label{app:def:sheaf-tensor-product-eqv-rel}
    (f,x,y) \sim (g,a,b)
    \iff 
    \text{there exist $h,k$ \textbf{iso} such that $f = (h\otimes k) g$ and $a = x\cdot h$ and $b = y\cdot k$.}
  \end{align}
  This is an equivalence relation on trivially-supported tuples:
  \begin{itemize}
  \item Reflexivity: choose $h = k = 1$.
  \item Symmetry: if $f = (h\otimes k) g$ and $a = x\cdot h$ and $ b = y\cdot k$,
    then $(h^{-1}\otimes k^{-1}) f = g$ and $a\cdot  h^{-1} = x$ and $b\cdot k^{-1} = y$,
    so if $h,k$ witness $(f,x,y) \sim (g,a,b)$ then $h^{-1},k^{-1}$
    witness $(g,a,b)\sim(f,x,y)$.
  \item Transitivity:
    suppose $(f,x,y) \sim_{p,q} (g,a,b) \sim_{r,s} (h,u,v)$,
    where the subscripts on $\sim$ indicate the witnesses for the given relation.
    This gives
    \begin{align*}
      f &= (p\otimes q)  g = (p\otimes q)  (r\otimes s)  h = (pr\otimes qs)h\\
      u &= a\cdot r = x\cdot p \cdot r = x\cdot pr \\
      v &= b\cdot s = y\cdot q \cdot s = y\cdot qs
    \end{align*}
    which together says $(f,x,y) \sim_{pr,qs} (h,u,v)$.
  \end{itemize}
  Thus the quotienting done by the coend in $(iF\otimes iG)( c)$
  is precisely a quotient by $(\sim)$ on trivially supported tuples,
  as claimed:
  \begin{align*}
    (iF\otimes iG)( c) &= 
      \left(\text{trivially-supported tuples $(f, x : F c_x, y : G c_y)$}\right)
      ~/~\!\sim
  \end{align*}
\end{proof}

\begin{definition}[semicartesian monoidal category] \label{app:def:semicartesian-monoidal-category}
  A monoidal category $(C, \otimes, \monunit)$
  is \emph{semicartesian} if $\monunit$ is the terminal object of $C$.
  This implies the existence of projection maps
  $\scmfst  : a\otimes b\to a$
  and $\scmsnd  : a\otimes b\to b$,
  defined by the composites
  \begin{align*}
    \scmfst  &= \left(a\otimes b \xrightarrow{1\otimes !} a\otimes 1 \cong a \otimes \monunit \cong a\right) \\
    \scmsnd  &= \left(a\otimes b \xrightarrow{!\otimes 1} 1\otimes b \cong \monunit \otimes b \cong a\right)
  \end{align*}
  where every occurrence of $!$ denotes the unique morphism into the terminal object.
\end{definition}

\begin{definition}[category of supports] \label{app:def:category-of-supports}
  Call a symmetric \nameref{app:def:semicartesian-monoidal-category} $(C,\otimes,\monunit)$
  a \emph{category of supports} if \begin{itemize}
    \item Every map in $C$ is epi;
    \item The two projection maps $\scmfst ,\scmsnd $ 
      are jointly monic: two maps $f,g : c\to d\otimes e$
      are equal iff $\scmfst  f = \scmfst  g$ and $\scmsnd  f = \scmsnd  g$;
    \item The atomic topology exists for $C$;
    \item For every groupoid $G$ in $C$,
      the \nref{app:def:rep-mod-g}{presheaf of representables modulo $G$}
      is an atomic sheaf.
  \end{itemize}
\end{definition}

\begin{lemma} \label{app:lem:tensor-preserves-ussheaves}
  Let $C$ be a \nameref{app:def:category-of-supports}
  and $F,G$ atomic sheaves on $C$ with \nref{app:def:supported-sheaf}{minimal supports}.
  The Day convolution for presheaves $iF\otimes iG$
  is an atomic sheaf with minimal supports.
\end{lemma}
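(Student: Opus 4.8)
The plan is to prove the two conclusions—minimal supports and the atomic sheaf property—in turn, using the concrete description of $iF\otimes iG$ from \cref{app:lem:ussheaf-conv} for the first, and then deducing the second from a general principle: any presheaf with minimal supports on a \nameref{app:def:category-of-supports} is automatically an atomic sheaf.

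First I would establish minimal supports. Writing a generic element of $(iF\otimes iG)(c)$ as a class $[m, x\ofty F c_x, y\ofty G c_y]$ with $x,y$ trivially supported and $m : c\to c_x\otimes c_y$ (legitimate by \cref{app:lem:ussheaf-conv}), I claim its support is the triple $(c_x\otimes c_y,\ [1_{c_x\otimes c_y}, x, y],\ m)$, which restricts correctly since the presheaf action is precomposition. For the universal property, given $[m', u, v]$ and $p' : c\to c'$ restricting to $[m,x,y]$, the equivalence relation of \cref{app:lem:ussheaf-conv} supplies isomorphisms $h : c_x\to c_u$ and $k : c_y\to c_v$ with $m'p' = (h\otimes k)m$, $x = u\cdot h$, $y = v\cdot k$; setting $q := (h^{-1}\otimes k^{-1})m'$ then gives both $m = q p'$ and $[q, x, y] = [m', u, v]$ after unwinding the relation once more (using $x\cdot h^{-1} = u$, $y\cdot k^{-1} = v$). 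Uniqueness of $q$ is where the category-of-supports hypothesis first bites: since every map of $C$ is epi, $q p' = m = q'' p'$ forces $q = q''$.

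For the atomic sheaf property I would isolate the general statement above. Given $f : d\to c$ and an $f$-invariant $s\in Pd$ with minimal support $(c^*, s^*, p : d\to c^*)$, form the one-object groupoid $G^* = \{\,r : c^*\xrightarrow{\sim} c^* \mid s^*\cdot r = s^*\,\}$; by the defining property of a category of supports, $\yo{G^*}$ is an atomic sheaf. The heart of the argument is to show the class $[p]\in\yo{G^*}(d)$ is itself $f$-invariant: for $g,h$ with $fg = fh$, invariance of $s$ gives $s^*\cdot(pg) = s^*\cdot(ph)$, so $pg$ and $ph$ are two maps out of the element $(e, s^*\cdot pg)$ of $\catelts(P)$ into $(c^*, s^*)$, which is trivially supported (by \cref{app:lem:supported-category-support-id}); by \cref{app:lem:supported-category-mono} both are supports of that element, so by uniqueness of supports they differ by a unique automorphism $r\in G^*$, i.e. $[pg] = [ph]$. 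Sheafhood of $\yo{G^*}$ then extends $[p]$ to some $[\bar n]\in\yo{G^*}(c)$; correcting $\bar n$ by an element of $G^*$ yields an honest $n : c\to c^*$ with $nf = p$, whence $t := s^*\cdot n$ satisfies $t\cdot f = s$. Uniqueness of $t$ runs the same support-tracking in reverse: any competing $t'$ with support map $q'$ and comparison isomorphism $\iota$ satisfies $t' = s^*\cdot(\iota^{-1}q')$ with $(\iota^{-1}q')f = p$, so epi-ness of $f$ forces $\iota^{-1}q' = n$ and hence $t' = t$.

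The main obstacle is the invariance-transport step in the second part—converting the bare equality $s^*\cdot(pg) = s^*\cdot(ph)$ in $P$ into an equality of classes in $\yo{G^*}$. This is precisely where minimal supports are indispensable: without a canonical smallest representative there is no reason two maps agreeing on $s^*$ should be related by an automorphism, and it is the uniqueness clause of \cref{app:def:supported-sheaf} (terminal objects are unique up to unique isomorphism) together with \cref{app:lem:supported-category-mono}, identifying maps out of trivially supported objects as supports, that makes the translation go through. Everything else—the support computation of the first part and the uniqueness arguments—reduces to bookkeeping with the equivalence relation of \cref{app:lem:ussheaf-conv} and repeated use of the fact that every morphism of $C$ is epi.
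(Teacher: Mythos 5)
Your proof is correct, but it reaches the atomic-sheaf half of the statement by a genuinely different route than the paper. The minimal-supports half coincides with the paper's argument: the support of $[m,x\ofty Fc_x,y\ofty Gc_y]$ is $[1_{c_x\otimes c_y},x,y]$ along $m$, with the comparison map extracted from the equivalence relation of \cref{app:lem:ussheaf-conv} and uniqueness from epi-ness. For sheafhood, the paper works directly with the coend description: it rewrites $p$-invariance of a class $[f,x,y]$ as $p$-invariance of $[f]$ in the presheaf of representables modulo the groupoid of isos of the form $h\otimes k$ fixing $x$ and $y$, and then applies the category-of-supports axiom to that groupoid. You instead factor the statement through a standalone principle --- on a category of supports, \emph{every} presheaf with minimal supports is an atomic sheaf --- proved by transporting $f$-invariance of $s$ to $f$-invariance of the class $[p]$ in $\yo{G^*}$, where $G^*$ is the stabilizer groupoid of the support element $s^*$, with the transport powered by uniqueness of terminal objects (via \cref{app:lem:supported-category-support-id,app:lem:supported-category-mono}) rather than by the explicit equivalence relation. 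Your support-tracking steps check out: $pg$ and $ph$ are indeed both supports of $(e,s\cdot g)$ and hence differ by a unique automorphism lying in $G^*$; correcting $\bar n$ by an element of $G^*$ to get an honest $n$ with $nf=p$ is legitimate; and the uniqueness argument via the comparison isomorphism $\iota$ and epi-ness of $f$ is sound. Both proofs ultimately rest on the same axiom --- representables modulo a groupoid are atomic sheaves --- plus epi-ness, so the difference is one of decomposition rather than of underlying mechanism. What your route buys is modularity and generality: it isolates the fact that the only feature of the Day convolution relevant to sheafhood is the minimal-supports property, so your lemma would dispatch sheafhood of any other minimal-supports presheaf on $C$ for free. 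What the paper's route buys is concreteness: it needs no abstract support bookkeeping, since the required groupoid and invariance statement can be read directly off \cref{app:lem:ussheaf-conv}.
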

\begin{proof}
  We use the concrete representation for $iF\otimes iG$
  calculated in \cref{app:lem:ussheaf-conv}.
  \begin{itemize}
    \item $iF\otimes iG$ is a sheaf:
    Fix $p :  c'\to  c$
    and an equivalence class
     $[f, x\ofty F c_x, y \ofty G c_y] : (iF\otimes iG)( c')$
     of trivially-supported tuples
    that is $p$-invariant, so
    for all 
    $q,r :  c''\to c'$
    satisfying $pq = pr$ it holds that
    $[f,x,y]\cdot q = [f,x,y]\cdot r$.
    We are done if we can show that there exists a unique
    extension of $[f,x,y]$ to an element $[g,a,b]$ of $(iF\otimes iG)( c)$
    such that $[g,a,b]\cdot p = [f,x,y]$.
  
    For any $q$ and $r$, we have $[f,x,y] \cdot q= [fq,x,y] = [fr,x,y] = [f,x,y]\cdot r$
    iff $(fq,x,y)\sim (fr,x,y)$, iff there exist isos $h,k$
    with
    \[ \text{$fq = (h\otimes k)fr$ and $x = x\cdot h$ and $y = y\cdot k$}. \]

    Thus $p$-invariance of $[f,x,y]$ says that for all $q,r$
    with $pq=pr$ it holds that $fq = (h\otimes k)fr$
    for some isos $h :  c_x\to c_x$ and $k: c_y\to c_y$.
    This is precisely what it means for $[f]_G$
    to be $p$-invariant as an element of $ C( c',G)$,
    where $G$ is the groupoid of isos of the form $h\otimes k$.
    (Note $h$ and $k$ happen to be automorphisms
    here, but $G$ also contains non-automorphisms.
    This will be relevant later.)
    The presheaf $C(-,G)$ is an atomic sheaf
    because $C$ is a category of supports, so
    there exist
    $ c_x'$ and $ c_y'$
    and $\overline f :  c\to  c_x'\otimes  c_y'$
    with $\overline f p \sim_G f$,
    unique up to $(\sim_G)$.
    Existence of $( c_x', c_y',\overline f)$
    is equivalent to the existence of a map $\overline f :  c\to c_x\otimes c_y$
    with $\overline f p = f$:
    any tuple $(c_x',c_y',\overline f)$ with $\overline f p = (h\otimes k)f$
    for some $h,k$ in $G$ yields a map
    $(h^{-1}\otimes k^{-1})\overline f$
    with $(h^{-1}\otimes k^{-1})\overline f p = f$,
    and conversely if $\overline f p = f$
    then one has a tuple $(c_x,c_y,\overline f)$ with $\overline f p \sim_G f$.

    The equivalence class $[\overline f, x, y]$
    is an element of $(iF\otimes iG)( c)$
    satisfying
      $[\overline f, x, y]\cdot p
      = [f,x,y]$.
    It only remains to show that it is the unique such.
    Suppose $[g,a,b]\cdot p = [gp,a,b] = [f,x,y]$,
    so $(f,x,y) \sim (gp,a,b)$, so
    there exist isos $h,k$ with
    \[ f = (h\otimes k)gp
    \qquad\text{and}\qquad
       a = x\cdot h
       \qquad\text{and}\qquad 
       b = y\cdot k. \]
    Then $(h\otimes k)gp = f = \overline fp$,
    so $(h\otimes k)g = \overline f$ by $p$ epi,
    so
    \[ [g,a,b] = [g,x\cdot h,y\cdot k] = [(h\otimes k)g,x,y] = [\overline f,x,y]\]
    establishing uniqueness of $[\overline f,x,y]$.
  \item $iF\otimes iG$ is has minimal supports:
    fix $[f,x:F c_x,y:G c_y] : (iF\otimes iG)( c)$.
    We have $[f,x,y] = [1_{ c_x\otimes  c_y}, x, y]\cdot f$,
    giving an object
    $[f,x, y] \xrightarrow f [1, x, y]$
    of $[f,x,y]/\catelts(iF\otimes iG)$.
    This object is terminal: fix arbitrary
    $[f,x,y] \xrightarrow p [g,a,b]$
    with aim to find $!_p$ making
    \[
\begin{tikzcd}
  & {[f,x,y]} \arrow[ld, "p"'] \arrow[rd, "f"] &           \\
{[g,a,b]} \arrow[rr, "!_p"', dashed] &                                            & {[1,x,y]}
\end{tikzcd}
      \]
    commute.
    Since $C$ is a category of epis and $p$ a morphism in $C$,
    any dashed morphism completing this triangle must be unique,
    so it only remains to find one such.
    Unpacking the arrow $[f,x,y] \xrightarrow p[g,a,b]$ gives the equations
    \[ f = (h\otimes k)gp \qquad a = x\cdot h \qquad b = y\cdot k \]
    Commutativity of the triangle requires $!_pp = f$,
    which suggests setting $!_p = (h\otimes k)g$.
    It only remains to check that
    $[1,x,y] \,\cdot\, !_p = [g,a,b]$.
    Indeed, $[1,x,y]\,\cdot\, !_p = [(h\otimes k)g, x, y] = [g,x\cdot h,y\cdot k] = [g,a,b]$.
  \end{itemize}
\end{proof}

\begin{lemma} \label{app:lem:tensor-sub-product}
  Let $C$ be a \nameref{app:def:category-of-supports}
  and $F,G$ atomic sheaves on $C$ with \nref{app:def:supported-sheaf}{minimal supports}.
  The natural transformation
  \begin{align*}
    &i : F\shotimes G \hookrightarrow F\times G \\
    &i_ c[f, x\ofty F c_x, y\ofty G c_y] = (x\cdot \scmfst f : F c,\, y\cdot \scmsnd  f : G c)
  \end{align*}
  is a monic map of sheaves, making $F\shotimes G$ a subobject of $F\times G$.
\end{lemma}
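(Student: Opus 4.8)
The plan is to verify three things in turn: that $i$ is well-defined on equivalence classes, that it is natural, and that each component $i_c$ is injective (which, for presheaves of sets, is equivalent to $i$ being monic). Throughout I work with the concrete description of $F\shotimes G$ from \cref{app:lem:ussheaf-conv}, in which every element is represented by a trivially-supported tuple $(f, x\ofty F c_x, y\ofty G c_y)$ modulo the relation generated by $(f, x, y)\sim(g,a,b)$ whenever there exist isos $h\colon c_a\to c_x$ and $k\colon c_b\to c_y$ with $f = (h\otimes k)g$, $a = x\cdot h$, and $b = y\cdot k$. For well-definedness I would use naturality of the semicartesian projection, $\scmfst\circ(h\otimes k) = h\circ\scmfst$, together with functoriality of $F$, to compute $x\cdot\scmfst f = x\cdot\big(h\circ(\scmfst g)\big) = (x\cdot h)\cdot\scmfst g = a\cdot\scmfst g$, and symmetrically $y\cdot\scmsnd f = b\cdot\scmsnd g$; hence $i_c[f,x,y] = i_c[g,a,b]$.

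Naturality of $i$ is then immediate. For $p\colon c'\to c$ the restriction action on $F\shotimes G$ is $[f,x,y]\cdot p = [fp,x,y]$, and since $\scmfst(fp) = (\scmfst f)\circ p$ we get $x\cdot\scmfst(fp) = (x\cdot\scmfst f)\cdot p$, with the analogous identity for the second coordinate; thus $i$ commutes with restriction.

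The substantive step is injectivity of $i_c$, and this is where the minimal-support hypothesis does the real work. Suppose $i_c[f,x,y] = i_c[g,a,b]$, i.e.\ $x\cdot\scmfst f = a\cdot\scmfst g$ in $Fc$ and $y\cdot\scmsnd f = b\cdot\scmsnd g$ in $Gc$. Since $x$ and $a$ are trivially supported in $\catelts(F)$, the ``in particular'' clause of \cref{app:lem:supported-category-mono} shows that the maps $\scmfst f\colon c\to c_x$ and $\scmfst g\colon c\to c_a$ are \emph{both} supports of the common element $x\cdot\scmfst f = a\cdot\scmfst g$. Uniqueness of the terminal object of $(c,\,x\cdot\scmfst f)/\catelts(F)$ then furnishes an iso $h\colon c_x\to c_a$ in $C$ with $x = a\cdot h$ and $\scmfst g = h\circ\scmfst f$; writing $h' := h^{-1}\colon c_a\to c_x$ gives $a = x\cdot h'$ and $\scmfst f = h'\circ\scmfst g$. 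The same argument carried out in $\catelts(G)$ produces an iso $k'\colon c_b\to c_y$ with $b = y\cdot k'$ and $\scmsnd f = k'\circ\scmsnd g$.

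It remains to assemble these into the single datum $f = (h'\otimes k')g$ witnessing $(f,x,y)\sim(g,a,b)$. Computing projections by naturality, $\scmfst\big((h'\otimes k')g\big) = h'\circ\scmfst g = \scmfst f$ and $\scmsnd\big((h'\otimes k')g\big) = k'\circ\scmsnd g = \scmsnd f$, so $f$ and $(h'\otimes k')g$ agree after both projections. Because $C$ is a \nameref{app:def:category-of-supports}, its projections $\scmfst,\scmsnd$ are jointly monic, which forces $f = (h'\otimes k')g$. Thus $h', k'$ exhibit $[f,x,y] = [g,a,b]$, establishing injectivity and hence that $i$ is monic. I expect this final gluing --- recovering the equality of $f$ and $(h'\otimes k')g$ from the separate equalities of their two projections --- to be the delicate point, and it is precisely what the joint-monicity clause of \cref{app:def:category-of-supports} is designed to supply; the extraction of the isos $h,k$ is the other place where care is needed, since it relies essentially on the minimal-support property of $F$ and $G$.
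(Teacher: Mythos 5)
Your proof is correct and follows essentially the same route as the paper's: well-definedness and naturality via the projection identities, then componentwise injectivity by observing that trivial supports make both $\scmfst f$ and $\scmfst g$ supports of the common element (\cref{app:lem:supported-category-mono}), extracting isos from uniqueness of supports, and closing with the joint monicity of $\scmfst,\scmsnd$ to recover $f = (h'\otimes k')g$. The only cosmetic difference is that the paper first invokes \cref{app:lem:tensor-preserves-ussheaves} to identify the sheaf tensor product with the presheaf Day convolution before applying the concrete description of \cref{app:lem:ussheaf-conv}, a bookkeeping step you elide.
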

\begin{proof}
  Because $F$ and $G$ have minimal supports,
  the sheaf tensor product $F\shotimes G$
  coincides with the presheaf tensor product (\cref{app:lem:tensor-preserves-ussheaves}).
  The map $i$ is defined above on \nref{app:lem:ussheaf-conv}{trivially-supported tuples};
  it respects the \nref{app:lem:ussheaf-conv}{equivalence relation} because
  \begin{align*}
    i_ c[f,x\cdot h,y\cdot k]
    = (x\cdot h\cdot \scmfst  f,\, y\cdot k \cdot \scmsnd  f)
    = (x\cdot \scmfst  (h\otimes k) f, \, y\cdot \scmsnd  (h\otimes k) f]
    = i_ c[(h\otimes k)f,x,y].
  \end{align*}
  It is a natural transformation:
  \begin{align*}
    i_ c[fp,x,y]
    = (x\cdot \scmfst  fp, y\cdot\scmsnd  fp)
    = (x\cdot \scmfst  f, y\cdot \scmsnd  f)\cdot p
    = i_ c[f,x,y] \cdot p
  \end{align*}
  Finally, each component of $i$ is monic.
  Fix arbitrary trivially-supported $[f,x,y]$ and $[g,a,b]$
  and
  suppose $i_ c[f,x,y] = i_ c[g,a,b]$,
  so $x\cdot \scmfst  f = a\cdot \scmfst  g$
  and $y\cdot \scmsnd  f = b\cdot \scmsnd  g$.
  This corresponds to the following diagrams in $\catelts(F)$
  and $\catelts(G)$ respectively:
  \[
\begin{tikzcd}
  & x\cdot \scmfst  f = a\cdot \scmfst  g \arrow[ldd, "\scmfst  f"'] \arrow[rdd, "\scmfst  g"] &                                                    &  &                                                         & y\cdot\scmsnd  f=b\cdot\scmsnd  g \arrow[ldd, "\scmsnd  f"'] \arrow[rdd, "\scmsnd  g"] &                                                    \\
  &                                                                                &                                                    &  &                                                         &                                                                            &                                                    \\
x \arrow[rr, "h^{-1}"', dashed, bend right, shift left] &                                                                                & a \arrow[ll, "h"', dashed, bend right, shift left] &  & y \arrow[rr, "k^{-1}"', dashed, bend right, shift left] &                                                                            & b \arrow[ll, "k"', dashed, bend right, shift left]
\end{tikzcd}
    \]
  The solid arrows depict the situation given.
  Since both $x$ and $a$ have trivial support,
  the common value $x\cdot\scmfst  f = a\cdot\scmfst  g$
  has both $\scmfst  f$ and $\scmfst  g$ as supports
  by \cref{app:lem:supported-category-mono}.
  Any two supports for the same object are isomorphic, so
  $\scmfst  f \cong \scmsnd  g$ in the slice category
  $(x\cdot \scmfst  f) / \catelts(F)$,
  giving $h,h^{-1}$ making the triangle on the left commute.
  Analogously, $\scmsnd  f$ and $\scmsnd g$ are both supports
  for the common value $y\cdot \scmsnd  f = b\cdot \scmsnd  g$,
  giving $k,k^{-1}$ making the triangle on the right commute.
  Unpacking what it means for $h$ and $k$ to be morphisms
  in $\catelts(F)$ and $\catelts(G)$ respectively gives
  $a = x\cdot h$ and $b = y \cdot k$,
  so to get $[f,x,y] = [g,a,b]$ it only remains to show
  $f = (h\otimes k)g$.
  Two maps $ c\to c_x\otimes c_y$ are equal
  iff they are equal when postcomposed with the projections $\scmfst $ and $\scmsnd $,
  and indeed
  \begin{align*}
    \scmfst (h\otimes k)g = h\scmfst  g \stackrel{(*)}= \scmfst  f
    \qquad\text{and}\qquad
    \scmsnd (h\otimes k)g = k\scmsnd  g \stackrel{(*)}= \scmsnd  f
  \end{align*} 
  where the equations marked $(*)$ follow from commutativity
  of the triangles above.
\end{proof}

\subsection{Nominal situations}

\begin{definition}[nominal situation] \label{app:def:nominal-situation}
  A \emph{nominal situation}
  is a tuple $(C, C_\infty, \cinfty : C_\infty, (i_c : c\hookrightarrow \cinfty)_{c:C}, G)$ where
  \begin{itemize}
  \item $C$ is a full subcategory of $C_\infty$
  \item $C$ and $C_\infty$ consist only of monic maps
  \item The atomic topology exists for $C\op$
  \item $G$ is a subgroup of $\Aut(\cinfty)$
  \item (Closure) The special monos $i_c$ hit every subobject
    of the form $\pi i_c$ in
    $\cinfty$ with $\pi\in G$. That is, for every $c$
    and auto $\pi \in G$
    there exists an isomorphism $f : c\xrightarrow\sim c'$ in $C$ with $\pi i_c = i_{c'} f$:
    \[\begin{tikzcd}
    \cinfty \arrow[r, "\pi"]                 & \cinfty                    \\
    c \arrow[u, "i_c", hook] \arrow[r, "f"', dashed] & c' \arrow[u, "i_{c'}"', hook']
    \end{tikzcd}\]
  \item (Homogeneity)
    For every map $f : c\hookrightarrow d$ in $C$
    there exists $\pi\in G$
    with $\pi i_c = i_d f$:
    \[\begin{tikzcd}
    \cinfty \arrow[r, "\pi", dashed]                 & \cinfty                    \\
    c \arrow[u, "i_c", hook] \arrow[r, "f"', hook] & d \arrow[u, "i_d"', hook']
    \end{tikzcd}\]
  \item (Correspondence)
    The map
    \[ \Fix i := \{ \pi \mid \pi i = i \}\subseteq G \]
    that sends every mono $i : c\hookrightarrow\cinfty$
    to the subgroup of $G$ fixing it
    gives an contravariant equivalence between subobjects of 
    $c_\infty$ and subgroups of $G$ fixing those subobjects.
    (This mapping is automatically faithful
    because its domain --- the subobjects of $\cinfty$ --- is
    a thin category.
    The nontrivial part is the requirement that $\Fix$ be full,
    which is to say that if $\Fix i\subseteq\Fix j$
    then $j$ factors through $i$, so that the triangle
    \[ \begin{tikzcd}
                                                    & \cinfty &                               \\
    \dom j \arrow[ru, "j", hook] \arrow[rr, dashed] &         & \dom i \arrow[lu, "i"', hook]
    \end{tikzcd}\]
    commutes.)
  \item (Cofinality)
      For every finite family of objects $(c_j)_{j\in J}$ in $C$
      there exists an object $c^*$ in $C$
      with $\Fix i_{c^*}\subseteq\bigcap_j \Fix i_{c_j}$.
  \end{itemize}
\end{definition}

\begin{definition} \label{app:def:nominal-refinement}
  In a \nameref{app:def:nominal-situation}
  $(C,C_\infty, \cinfty,i_\bullet,G)$,
  say an automorphism $\pi \in G$
  \emph{refines} a map $f : c\to d$
  if the following square commutes:
    \[\begin{tikzcd}
    \cinfty \arrow[r, "\pi"]                 & \cinfty                    \\
    c \arrow[u, "i_c", hook] \arrow[r, "f"', hook] & d \arrow[u, "i_d"', hook']
    \end{tikzcd}\]
  In this language, Homogeneity says every map is refined by some automorphism.
\end{definition}

\begin{lemma} \label{app:lem:refinement-respects-comp}
  \nref{app:def:nominal-refinement}{Refinement}
  respects composition:
  if $\pi_f$ refines $f : c\to d$
  and $\pi_g$ refines $g : d\to e$,
  then $\pi_g\pi_f$ refines $gf$,
  and if $\pi$ refines an iso $f$ then $\pi^{-1}$
  refines $f^{-1}$.
\end{lemma}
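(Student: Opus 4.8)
The plan is to prove both statements by a direct equational diagram chase, since \nref{app:def:nominal-refinement}{refinement} is defined by a single commuting square $\pi i_c = i_d f$ and both claims are pure consequences of pasting or inverting such squares. No property of the \nameref{app:def:nominal-situation} beyond the fact that $G$ is a group (so closed under products and inverses) is needed, and the structural hypotheses (monicity, Homogeneity, Correspondence) play no role here.

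First I would record the two hypotheses of the composition claim as equations: that $\pi_f$ refines $f : c\to d$ means $\pi_f i_c = i_d f$, and that $\pi_g$ refines $g : d\to e$ means $\pi_g i_d = i_e g$. Since $G$ is a group we have $\pi_g\pi_f\in G$, so asking whether it refines $gf : c\to e$ is well-posed, and amounts to checking $(\pi_g\pi_f)\,i_c = i_e(gf)$. I would then compute by reassociating and substituting along the shared edge $i_d$: $(\pi_g\pi_f)\,i_c = \pi_g(\pi_f i_c) = \pi_g(i_d f) = (\pi_g i_d)f = (i_e g)f = i_e(gf)$, which is exactly the desired refinement equation. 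Geometrically this is just the horizontal pasting of the two refinement squares.

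For the inverse claim, suppose $\pi$ refines an isomorphism $f : c\to d$, i.e.\ $\pi i_c = i_d f$. Again $\pi^{-1}\in G$ and $f^{-1}$ exists in $C$, so $\pi^{-1}$ refining $f^{-1} : d\to c$ is the well-posed equation $\pi^{-1} i_d = i_c f^{-1}$. I would derive it from the hypothesis by applying $\pi^{-1}$ on the left to obtain $i_c = \pi^{-1} i_d f$, and then postcomposing with $f^{-1}$ to get $i_c f^{-1} = \pi^{-1} i_d$, as required. The only things to watch are that the group structure of $G$ is what licenses forming $\pi_g\pi_f$ and $\pi^{-1}$, and that associativity of composition is used silently; there is no real obstacle, as the entire content is the substitution of one commuting square into another.
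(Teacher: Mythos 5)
Your proof is correct and is essentially the paper's proof: the equational chase $(\pi_g\pi_f)i_c = \pi_g(i_d f) = (i_e g)f = i_e(gf)$ is exactly the paper's pasting of the two refinement squares, and your rearrangement $\pi i_c = i_d f \Rightarrow i_c f^{-1} = \pi^{-1} i_d$ is verbatim the paper's argument for the inverse claim. (Only a cosmetic slip: multiplying on the right by $f^{-1}$ is precomposition, not postcomposition.)
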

\begin{proof}
  First, if $\pi_f$ refines $f$ and $\pi_g$ refines $g$
  then pasting the respective commutative squares
  together gives the following commutative rectangle witnessing
  $\pi_g\pi_f$ refines $gf$:
\[  \begin{tikzcd}
  \cinfty \arrow[r, "\pi_f"]               & \cinfty \arrow[r, "\pi_g"]               & \cinfty                  \\
  c \arrow[r, "f"'] \arrow[u, "i_c", hook] & d \arrow[u, "i_d", hook] \arrow[r, "g"'] & e \arrow[u, "i_e", hook]
  \end{tikzcd}\]
  Second, if $\pi$ refines $f : c\to d$ iso then $\pi i_c = i_d f$,
  so $i_c f^{-1} = \pi^{-1} i_d$, so $\pi^{-1}$ refines $f^{-1}$.
\end{proof}

\newcommand\refineset{{\nref{app:def:refinement-topology}{\calR}}}

\begin{definition}[refinement topology] \label{app:def:refinement-topology}
  Let $(C,C_\infty, \cinfty,i_\bullet,G)$
  be a \nameref{app:def:nominal-situation}.
  For any iso $f : c\xrightarrow\sim d$
  in $C$,
  let $\refineset_{f}$ be the
  collection of all automorphisms in $G$
  that \nref{app:def:nominal-refinement}{refine} $f$:
  \[ \refineset_f := \{ \pi : \cinfty\xrightarrow\sim\cinfty \mid \pi i_c = i_d f \} \]
  The \emph{refinement topology}
  is the topology on $G$
  consisting of unions of finite intersections of sets $\refineset_f$ for isos $f$ of $C$.
\end{definition}

\begin{lemma} \label{app:lem:aut-continuous}
  In a \nameref{app:def:nominal-situation}
  $(C,C_\infty, \cinfty,i_\bullet,G)$,
  the group $G$ is continuous for the \nameref{app:def:refinement-topology}.
\end{lemma}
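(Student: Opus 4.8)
The plan is to verify that $G$ equipped with the \nref{app:def:refinement-topology}{refinement topology} is a topological group, so that multiplication and inversion are continuous. The starting observation is that each subbasic open $\refineset_f$, for an iso $f : c\xrightarrow\sim d$ in $C$, is a \emph{left coset of a stabilizer subgroup}. Writing $\Fix(i_c) = \{\pi\mid \pi i_c = i_c\}$, I would fix any refinement $\pi_f$ of $f$ (one exists by Homogeneity) and check directly that $\pi i_c = i_d f = \pi_f i_c$ is equivalent to $\pi_f^{-1}\pi \in \Fix(i_c)$, giving $\refineset_f = \pi_f\,\Fix(i_c)$. In particular $\Fix(i_c) = \refineset_{1_c}$ is open. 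Conversely, applying the Closure axiom to an arbitrary $g\in G$ and object $c$ produces an iso $f : c\xrightarrow\sim c'$ with $g\, i_c = i_{c'}f$, whence $\refineset_f = g\,\Fix(i_c)$; thus the subbasic opens are \emph{exactly} the left cosets of the subgroups $\{\Fix(i_c)\}_{c:C}$, and the refinement topology is precisely the group topology having these open subgroups as a neighborhood base at the identity.

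It then remains to check the two standard conditions under which a family of subgroups serves as an identity neighborhood base for a group topology. The filter-base condition --- that $\Fix(i_c)$ and $\Fix(i_d)$ both contain some $\Fix(i_{c^*})$ --- is the Cofinality axiom verbatim (applied to the pair $c,d$), and this also shows that finite intersections of subbasic opens are again unions of cosets, so the subbasis genuinely generates the claimed topology. The conjugation condition is supplied by Closure: given $g$ and $c$, take the iso $h : c\xrightarrow\sim c'$ with $g\,i_c = i_{c'}h$, so that $g^{-1}i_{c'} = i_c h^{-1}$; a one-line computation then gives, for $\pi\in\Fix(i_c)$, that $(g\pi g^{-1})i_{c'} = g\pi i_c h^{-1} = g\,i_c h^{-1} = i_{c'}$, hence $g\,\Fix(i_c)\,g^{-1}\subseteq\Fix(i_{c'})$, with equality by running the same argument with $g^{-1}$. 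So every conjugate of a base subgroup is again a base subgroup, which trivially yields the conjugation condition.

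With both conditions in hand I would invoke the standard fact that a group with a conjugation-stable filter base of subgroups at the identity is a topological group: continuity of inversion follows because $(g\,\Fix(i_c))^{-1} = g^{-1}\bigl(g\,\Fix(i_c)\,g^{-1}\bigr) = g^{-1}\Fix(i_{c'})$ is again a coset, and continuity of multiplication follows from $g\,\Fix(i_c)\cdot h\,\Fix(i_d) = gh\,\bigl(h^{-1}\Fix(i_c)h\bigr)\Fix(i_d)$ together with the subgroup property. Alternatively, continuity of inversion can be argued directly from \cref{app:lem:refinement-respects-comp}: since $\pi^{-1}$ refines $f$ iff $\pi$ refines $f^{-1}$, one has $\mathrm{inv}^{-1}(\refineset_f) = \refineset_{f^{-1}}$, a subbasic open.

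The main obstacle I anticipate is not any single computation but getting the bookkeeping right: confirming the coset description of $\refineset_f$ in both directions (the reverse direction is exactly where Closure is needed), and checking that finite intersections of subbasic opens reduce to unions of cosets so that the refinement topology really coincides with the subgroup-neighborhood topology. The conjugation identity $g\,\Fix(i_c)\,g^{-1} = \Fix(i_{c'})$ is the one genuinely new calculation, and it is short; everything else is a routine application of the nominal-situation axioms (Closure, Cofinality, Homogeneity) and \cref{app:lem:refinement-respects-comp}.
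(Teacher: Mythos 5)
Your proof is correct, but it takes a genuinely different route from the paper's. The paper verifies continuity of the two group operations directly on subbasic opens: inversion via $\mathrm{inv}^{-1}(\calR_f) = \calR_{f^{-1}}$ (using \cref{app:lem:refinement-respects-comp}), and multiplication by using Closure to factor a refinement $\sigma\pi$ of an iso $f$ through an intermediate iso $g$, which produces the open box $\calR_{fg^{-1}}\times\calR_g \subseteq \mathrm{mul}^{-1}(\calR_f)$ --- the inversion half is exactly the computation you relegate to your closing ``alternatively'' remark. You instead identify each subbasic open $\calR_f$ as a left coset $\pi_f\,\Fix(i_c)$ of a stabilizer subgroup (Homogeneity supplies the refinement $\pi_f$, and your coset computation is sound since $f$ is iso; Closure conversely shows every coset $g\,\Fix(i_c)$ arises as some $\calR_f$), check that the subgroups $\Fix(i_c)$ form a filter base (Cofinality, verbatim) stable under conjugation ($g\,\Fix(i_c)\,g^{-1}=\Fix(i_{c'})$, again via Closure), and then invoke the standard criterion for a conjugation-stable subgroup filter base to define a group topology. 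Both arguments are sound and of comparable length. What your approach buys: the coset description makes the structure of the refinement topology transparent --- every open set is a union of cosets of subgroups $\Fix(i_c)$ --- which renders the paper's later \cref{app:lem:fix-is-cofinal} (cofinality of the $\Fix i_c$ among open subgroups, needed for \cref{app:thm:nominal-situation}) nearly immediate rather than a separate argument. What the paper's approach buys: it is entirely self-contained, using only the nominal-situation axioms and \cref{app:lem:refinement-respects-comp}, with no appeal to an external fact about topological groups.
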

\begin{proof}
  We show that the inverse and multiplication maps are continuous.
  \begin{itemize}
  \item Inverse is continuous:
    for all isos $f$ of $C$,
    \[\mathrm{inverse}^{-1}(\refineset_f)
    = \{ \pi \mid \pi^{-1} \in \refineset_f \}
    = \{ \pi \mid \pi^{-1} \text{ refines } f \}
    \nref{app:lem:refinement-respects-comp}{=} \{ \pi \mid \pi \text{ refines } f^{-1} \}
    = \{ \pi \mid \pi \in \refineset_{f^{-1}} \}
    = \refineset_{f^{-1}}
    \]
    so the preimage of every basis element is open.
  \item Multiplication is continuous:
    fix an iso $f$ of $C$ and
    suppose $(\sigma,\pi) \in \textrm{mul}^{-1}(\refineset_f)$
    with aim to find an open neighborhood $\calO_\sigma \times \calO_\pi$
    such that $(\sigma,\pi)\in\calO_\sigma\times\calO_\pi\subseteq\textrm{mul}^{-1}(\refineset_f)$.
    By assumption $\sigma\pi$ refines $f$, so
    \[\begin{tikzcd}
    \cinfty \arrow[r, "\pi"]                  & \cinfty \arrow[r, "\sigma"] & \cinfty                  \\
    c \arrow[u, "i_c", hook] \arrow[rr, "f"'] &                             & d \arrow[u, "i_d", hook]
    \end{tikzcd}\]
    commutes. By Closure,
    the subobject $\pi i_c$ 
    is isomorphic to $i_{c'}$ for some $c'$.
    That is, there exists an object $c'$ and isomorphism $g : c\to c'$
    such that
\[ \begin{tikzcd}
  \cinfty \arrow[r, "\pi"]                 & \cinfty                      \\
  c \arrow[u, "i_c", hook] \arrow[r, "g"'] & c' \arrow[u, "i_{c'}", hook]
  \end{tikzcd}\]
  commutes. Combining this square with the rectangle above gives
  \[
\begin{tikzcd}
  \cinfty \arrow[r, "\pi"]                                   & \cinfty \arrow[r, "\sigma"]                                                         & \cinfty                  \\
  c \arrow[u, "i_c", hook] \arrow[rr, "f"'] \arrow[rd, "g"'] &                                                                                     & d \arrow[u, "i_d", hook] \\
                                                             & c' \arrow[uu, "i_{c'}" description, hook, bend left] \arrow[ru, "fg^{-1}"', dashed] &                         
  \end{tikzcd}
  \]
  where the back rectangle and the left quadrilateral commute.
  Since $g$ iso, the dashed arrow exists and makes the lower triangle commute.
  By diagram chase, the right quadrilateral commutes when precomposed
  with $g$; since $g$ iso, this implies the right quadrilateral commutes.
  This gives the following commutative rectangle:
  \[\begin{tikzcd}
  \cinfty \arrow[r, "\pi"]                 & \cinfty \arrow[r, "\sigma"]                        & \cinfty            \\
  c \arrow[u, "i_c", hook] \arrow[r, "g"'] & c' \arrow[u, "i_{c'}", hook] \arrow[r, "fg^{-1}"'] & d \arrow[u, "i_d"]
  \end{tikzcd}\]
  Translating this rectangle into words, we have that $\pi$ refines $g$
  and $\sigma$ refines $fg^{-1}$
  and and $\sigma\pi$ refines $f$.
  But note that for any other $\pi'$ and $\sigma'$
  we would still have $\sigma'\pi'$ refining $f$
  so long as $\pi'$ refines $g$ and $\sigma'$ refines $fg^{-1}$.
  In other words,
   \[ (\sigma,\pi) \in \refineset_{fg^{-1}}\times \refineset_{g}\subseteq 
      \mathrm{mul}^{-1}(\refineset_f)  \]
  and we have found a suitable open neighborhood as required.
  \end{itemize}
\end{proof}

\begin{definition} \label{app:def:fix-subgroup-category}
  In a nominal situation 
  $(C,C_\infty, \cinfty,i_\bullet,G)$,
  let $\subgrpcat_C$
  be the category whose objects are
  subgroups $\Fix i_c$ for all $c$ in $C$
  and whose morphisms $\Fix i_c\to \Fix i_d$
  are cosets $(\Fix i_d)\pi$
  such that $\pi g\pi^{-1}$ fixes $i_d$
  for all $g\in G$ that fix $i_c$,
  with composition
  $\Fix{i_c}
  \xrightarrow{(\Fix{i_d})\pi} \Fix{i_d}
  \xrightarrow {(\Fix {i_e})\sigma} \Fix {i_e}$
  given by $(\Fix {i_e})\sigma\pi$.
\end{definition}

\begin{lemma} \label{app:lem:fix-subgroup-equiv}
   For any \nameref{app:def:nominal-situation} $(C,C_\infty,\cinfty,i_\bullet,G)$,
   there is an equivalence of categories
   $C^{\rm op} \simeq \nref{app:def:fix-subgroup-category}{\subgrpcat_C}$.
\end{lemma}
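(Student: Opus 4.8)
The plan is to construct an explicit functor $F:C\op\to\subgrpcat_C$ and prove it is fully faithful and essentially surjective. On objects I set $F(c)=\Fix i_c$. Since by the construction of $\subgrpcat_C$ every object is already of the form $\Fix i_c$ for some $c$ in $C$, this makes $F$ surjective on objects, and essential surjectivity is settled immediately. On a $C\op$-morphism $c\to d$, equivalently a $C$-morphism $f:d\to c$, I pick an automorphism $\pi_f\in G$ that \emph{refines} $f$ in the sense of \cref{app:def:nominal-refinement} (one exists by Homogeneity, giving $\pi_f i_d=i_c f$) and return the coset $(\Fix i_d)\pi_f^{-1}$. This is independent of the choice: if $\pi_f i_d=\pi_f' i_d$ then $(\pi_f')^{-1}\pi_f\in\Fix i_d$, and as $\Fix i_d$ is a subgroup this forces $(\Fix i_d)\pi_f^{-1}=(\Fix i_d)(\pi_f')^{-1}$. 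A short check confirms $(\Fix i_d)\pi_f^{-1}$ really is a $\subgrpcat_C$-morphism $\Fix i_c\to\Fix i_d$: for $g\in\Fix i_c$ we have $\pi_f^{-1}g\pi_f\,i_d=\pi_f^{-1}g\,i_c f=\pi_f^{-1}i_c f=i_d$, so conjugation by $\pi_f^{-1}$ carries $\Fix i_c$ into $\Fix i_d$ as required.

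Functoriality is then bookkeeping. The identity $1_c$ is refined by $1\in G$, so $F(1_c)=\Fix i_c$ is the identity morphism. For composition, given $f:d\to c$ and $g:e\to d$ with chosen refinements $\pi_f,\pi_g$, \cref{app:lem:refinement-respects-comp} shows $\pi_f\pi_g$ refines $fg:e\to c$; hence $F(fg)=(\Fix i_e)(\pi_f\pi_g)^{-1}=(\Fix i_e)\pi_g^{-1}\pi_f^{-1}$, which is exactly the $\subgrpcat_C$-composite $(\Fix i_e)\pi_g^{-1}\circ(\Fix i_d)\pi_f^{-1}$ of $F(g)$ after $F(f)$, matching the composition rule of $\subgrpcat_C$ (recalling that the $C\op$-composite of $f$ and $g$ is the $C$-morphism $fg:e\to c$).

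The crux is fullness, where the Correspondence axiom does the work. Given an arbitrary $\subgrpcat_C$-morphism $(\Fix i_d)\rho:\Fix i_c\to\Fix i_d$, its defining condition $\rho g\rho^{-1}\in\Fix i_d$ for all $g\in\Fix i_c$ says precisely that $\Fix(\rho i_c)=\rho(\Fix i_c)\rho^{-1}\subseteq\Fix i_d$, using the identity $\Fix(\rho i_c)=\rho(\Fix i_c)\rho^{-1}$ (a direct computation: $h$ fixes $\rho i_c$ iff $\rho^{-1}h\rho$ fixes $i_c$). Correspondence then yields a factorization of $i_d$ through the mono $\rho i_c$, i.e. a morphism $h:d\to c$ with $i_d=\rho i_c h$, equivalently $\rho^{-1}i_d=i_c h$; because $C$ is a full subcategory of $C_\infty$, this $h$ is a morphism of $C$. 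Thus $\rho^{-1}$ refines $h$, so $F(h)=(\Fix i_d)(\rho^{-1})^{-1}=(\Fix i_d)\rho$, proving $F$ full. Faithfulness is immediate: if $F(h)=F(h')$ then the coset equality gives $\pi_h^{-1}\pi_{h'}\in\Fix i_d$, whence $\pi_h i_d=\pi_{h'}i_d$, so $i_c h=i_c h'$, and $h=h'$ since $i_c$ is monic. Being fully faithful and essentially surjective, $F$ is an equivalence.

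I expect the main obstacle to be exactly the fullness step: one must correctly identify $\rho(\Fix i_c)\rho^{-1}$ with the stabilizer $\Fix(\rho i_c)$ and carefully track the inverses and the $\op$-reversal so that the factorization extracted from Correspondence lands as a morphism $d\to c$ in the intended direction and genuinely inside $C$. The remaining verifications—well-definedness on cosets, the conjugation condition, functoriality, and faithfulness—are routine manipulations of cosets and refinements, relying only on Homogeneity and monicity of the $i_c$.
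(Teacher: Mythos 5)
Your proof is correct and takes essentially the same route as the paper's: the same functor $c\mapsto\Fix i_c$, $(f:d\to c)\mapsto(\Fix i_d)\pi_f^{-1}$ with the identical well-definedness, conjugation, functoriality, and faithfulness checks, and fullness via the Correspondence axiom. The only cosmetic difference is in fullness, where you apply Correspondence to the pair $(\rho i_c,\, i_d)$ after establishing $\Fix(\rho i_c)\subseteq\Fix i_d$, while the paper applies it to $(i_c,\, \pi i_d)$ after establishing $\Fix i_c\subseteq\Fix(\pi i_d)$; these are conjugate formulations of the same condition and yield the same factorization.
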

\begin{proof}
  We will construct a functor $F : C^{\rm op}\to \subgrpcat_C$
  and show that it is full, faithful, and surjective on objects.
  \begin{itemize}
  \item Send $c$ in $C$ to $\Fix i_c$ in $\subgrpcat_C$.
  \item Send $f : d\to c$ in $C$ to $(\Fix i_d)\pi^{-1} : \Fix{i_c}\to \Fix{i_d}$ in $\subgrpcat_C$,
    where $\pi$ is an automorphism
    such that $\pi i_d = i_c f$,
    guaranteed to exist by Homogeneity.
    This automorphism is indeed a map $\Fix{i_c}\to \Fix{i_d}$,
    because if $g$ fixes $i_c$
    then
    \[ \pi^{-1} g \pi i_d = \pi^{-1} g i_c f = \pi^{-1} i_c f 
    = \pi^{-1} \pi i_d  = i_d\]
    so $\pi^{-1} g \pi$ fixes $i_d$.
    The choice of $\pi$ does not matter:
    for any other $\overline\pi$ with $\overline\pi i_d = i_c f$,
    it holds that $\pi^{-1}\overline\pi i_d = \pi^{-1} i_c f = i_d$,
    so $\pi^{-1}\overline\pi\in\Fix i_d$,
    so $(\Fix i_d)\pi^{-1} = (\Fix i_d)\overline\pi^{-1}$.
  \item This assignment is functorial.
    The identity $1_c : c\to c$ is sent to
    a coset $(\Fix {i_c})\pi$ for some $\pi i_c = i_c 1_c$.
    But this means $\pi$ fixes $i_c$, so $(\Fix{i_c})\pi = \Fix{i_c}$.
    Given $f : e\to d$ and $g : d\to c$,
    we have $F(f) = (\Fix {i_e})\pi_f^{-1}$
    and $F(g) = (\Fix {i_d}) \pi_g^{-1}$
    and $F(gf) = (\Fix{i_e}) \pi_{fg}^{-1}$
    with $\pi_f$, $\pi_g$, and $\pi_{fg}$ fitting into the following commutative rectangles:
    \[\begin{tikzcd}
    \cinfty \arrow[r, "\pi_f"]               & \cinfty \arrow[r, "\pi_g"]               & \cinfty                  \\
    e \arrow[r, "f"'] \arrow[u, "i_e", hook] & d \arrow[r, "g"'] \arrow[u, "i_d", hook] & c \arrow[u, "i_c", hook]
    \end{tikzcd}
    \qquad\qquad
\begin{tikzcd}
    \cinfty \arrow[rr, "\pi_{fg}"]           &                   & \cinfty                  \\
    e \arrow[r, "f"'] \arrow[u, "i_c", hook] & d \arrow[r, "g"'] & c \arrow[u, "i_e", hook]
    \end{tikzcd}\]
    We need $F(gf) = (\Fix{i_e})\pi_{fg}^{-1} = (\Fix{i_e})\pi_f^{-1}\pi_g^{-1} = F(f)F(g)$.
    Since in general two cosets $gH,hH$ are equal iff $gh^{-1}\in H$,
    this amounts to showing $(\pi_f^{-1}\pi_g^{-1})(\pi_{fg}^{-1})^{-1}\in \Fix{i_e}$.
    The commutativity of the above rectangles implies
    \[ 
       (\pi_f^{-1}\pi_g^{-1})(\pi_{fg}^{-1})^{-1} i_e
       = \pi_f^{-1}\pi_g^{-1}\pi_{fg} i_e
       = \pi_f^{-1}\pi_g^{-1} i_c g f
       = \pi_f^{-1}\pi_g^{-1} \pi_g\pi_f i_e
       = i_e
        \]
    so $(\pi_f^{-1}\pi_g^{-1})(\pi_{fg}^{-1})^{-1}$ fixes $i_e$ as required.
  \item Full:
    let $(\Fix{i_d})\pi^{-1}$ be a morphism
    $\Fix{i_c}\to \Fix{i_d}$ in $\subgrpcat_C$,
    the goal being to find $f : d\to c$ in $C$
    such that the following square commutes:
    \[
    \begin{tikzcd}
    \cinfty \arrow[r, "\pi"]                 & \cinfty                    \\
    d \arrow[u, "i_d", hook] \arrow[r, "f"', dashed] & c \arrow[u, "i_c"', hook']
    \end{tikzcd}\]
    Such an $f$ exists iff the mono $\pi i_d$ factors through $i_c$.
    By Correspondence we just need to show $\Fix{i_c}\subseteq\Fix{\pi i_d}$.
    This follows from the fact that $(\Fix i_d)\pi^{-1}$ is a morphism
    in $\subgrpcat_C$:
    if $\sigma$ fixes $i_c$ then
    $\pi^{-1} \sigma \pi$ fixes $i_d$,
    so $\pi^{-1}\sigma\pi i_d = i_d$,
    so $\sigma \pi i_d = \pi i_d$,
    so $\sigma$ fixes $\pi i_d$.
    Since $\sigma$ was arbitrary we have
    $\Fix{i_c}\subseteq \Fix{(\pi i_d)}$ as required.
  \item Faithful:
    suppose $f,g : d\to c$ and $F(f) = F(g) : \Fix{i_c}\to \Fix{i_d}$.
    By definition $F(f) = (\Fix{i_d})\pi_f^{-1}$
    and $F(g) = (\Fix{i_d})\pi_g^{-1}$
    for some $\pi_f,\pi_g$ fitting into the following commutative squares:
    \[\begin{tikzcd}
    \cinfty \arrow[r, "\pi_f"]               & \cinfty                  \\
    d \arrow[r, "f"'] \arrow[u, "i_d", hook] & c \arrow[u, "i_c", hook]
    \end{tikzcd}
    \qquad\qquad 
    \begin{tikzcd}
    \cinfty \arrow[r, "\pi_g"]               & \cinfty                  \\
    d \arrow[r, "g"'] \arrow[u, "i_d", hook] & c \arrow[u, "i_c", hook]
    \end{tikzcd}
    \]
    The assumption $F(f) = F(g)$ implies $\pi_f^{-1}(\pi_g^{-1})^{-1}\in \Fix{i_d}$,
    iff $\pi_f^{-1}\pi_g\in\Fix{i_d}$,
    iff $\pi_f^{-1}\pi_g i_d = i_d$, iff $\pi_g i_d = \pi_f i_d$.
    Thus the bottom-left-to-top-right routes of the two squares above
    are equal. Commutativity of these squares implies
    $i_cf = i_c g$, so $f=g$ because $i_c$ mono.
  \item Surjective on objects:
    every object
    $\Fix{i_c}$ of $\subgrpcat_C$
    is equal to $F(c)$, so in the image of $F$.
  \end{itemize}
\end{proof}

\begin{lemma} \label{app:lem:fix-is-cofinal}
   Let $(C,C_\infty,\cinfty,i_\bullet,G)$
   be a \nameref{app:def:nominal-situation}.
  The set $\mathcal U := \{\Fix{i_c} \mid c \in \mathrm{Ob}(C)\}$
  is cofinal in the open subgroups of 
  the \nref{app:lem:aut-continuous}{topological group} $G$,
  in the sense that
  any open subgroup $H$ of $G$
  contains $\Fix{i_c}$ for some $c$.
\end{lemma}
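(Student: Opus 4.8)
The plan is to exploit the fact that the subbasic open sets of the \nref{app:def:refinement-topology}{refinement topology} are exactly the refinement sets $\refineset_f$, and that the fixing subgroups $\Fix i_c$ arise as the special case $\refineset_{1_c}$ of refinement along an identity isomorphism (indeed $\refineset_{1_c} = \{\pi \mid \pi i_c = i_c 1_c\} = \Fix i_c$). Given an open subgroup $H$, it contains the identity automorphism $1$, and since $H$ is open there is a basic open neighbourhood of $1$ contained in $H$, i.e.\ a finite intersection $U = \bigcap_{j=1}^n \refineset_{f_j} \subseteq H$ with each $f_j : c_j \xrightarrow\sim d_j$ an isomorphism of $C$ and $1 \in U$. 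My goal is to squeeze a single $\Fix i_{c^*}$ between $\bigcap_j \Fix i_{c_j}$ and $U$, then collapse the finite intersection using Cofinality.

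First I would unpack what membership $1 \in \refineset_{f_j}$ means: by \cref{app:def:nominal-refinement} it says $1$ refines $f_j$, which is precisely the equation $i_{c_j} = i_{d_j} f_j$. The key step is then the containment $\Fix i_{c_j} \subseteq \refineset_{f_j}$: for any $\pi$ with $\pi i_{c_j} = i_{c_j}$, substituting the above equation gives $\pi i_{c_j} = i_{c_j} = i_{d_j} f_j$, so $\pi$ refines $f_j$ and hence lies in $\refineset_{f_j}$. Intersecting over $j$ yields $\bigcap_{j=1}^n \Fix i_{c_j} \subseteq \bigcap_{j=1}^n \refineset_{f_j} = U \subseteq H$.

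Finally I would invoke Cofinality on the finite family $(c_j)_{j=1}^n$ to obtain an object $c^*$ of $C$ with $\Fix i_{c^*} \subseteq \bigcap_{j=1}^n \Fix i_{c_j}$, giving $\Fix i_{c^*} \subseteq H$ as required; the degenerate case $n = 0$ (where $U = G$ forces $H = G$) is covered by the same appeal to Cofinality with the empty family, for which any object works. I do not anticipate a serious obstacle: the argument is essentially a dictionary translation between the topological and categorical data. The one step worth care is recognizing that Cofinality is exactly the ingredient needed to turn the finite intersection of fixing subgroups produced by the basic neighbourhood into a single fixing subgroup, while the rest is just unwinding the definitions of the refinement topology and of refinement.
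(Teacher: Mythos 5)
Your proposal is correct and follows essentially the same argument as the paper: extract a finite intersection $\bigcap_j \refineset_{f_j}$ of subbasic opens around the identity inside $H$, use $1\in\refineset_{f_j}$ to get $i_{d_j}f_j = i_{c_j}$ and hence $\Fix i_{c_j}\subseteq\refineset_{f_j}$, then apply Cofinality to collapse $\bigcap_j \Fix i_{c_j}$ to a single $\Fix i_{c^*}$. The only (immaterial) difference is that the paper observes the full equality $\refineset_{f_j} = \Fix i_{c_j}$, whereas you prove just the one inclusion actually needed.
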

\begin{proof}
  Every open subgroup $H$ contains the identity $1_\cinfty$,
  so contains an open neighborhood around $1_\cinfty$
  of the form $\bigcap_j \refineset_{f_j}$
  for $(f_j : c_j\to d_j)_{j\in J}$ a finite set of isos in $C$.
  Unwinding the definition of $\refineset$
  in $1_\cinfty\in\refineset_{f_j}$ shows
  $i_{d_j} f_j = i_{c_j}$ for all $j\in J$.
  Thus $\pi\in\refineset_{f_j}$
  iff $\pi i_{c_j} = i_{d_j} f_j = i_{c_j}$
  iff $\pi\in\Fix i_{c_j}$
  for all $\pi\in\Aut(\cinfty)$ and $j\in J$,
  so $\refineset_{f_j} = \Fix i_{c_j}$ for all $j\in J$,
  and by Cofinality there exists $c^*$
  with
  \[ H \supseteq \bigcap_j \refineset_{f_j} = \bigcap_j \Fix i_{c_j} \supseteq \Fix i_{c^*}\ni \calU \]
  as needed.
\end{proof}

\begin{theorem} \label{app:thm:nominal-situation}
  For any \nameref{app:def:nominal-situation} $(C,C_\infty,\cinfty,i_\bullet,G)$,
  there is an equivalence of categories
  \[
    \Shjat(C^{\rm op})
    \simeq
    \tgsets{G}
   \]
   where $G$ is given the \nameref{app:def:refinement-topology}.
   Across this equivalence,
   an atomic sheaf $F$ on $C\op$
   corresponds to an $G$-set $\tilde F$
   whose carrier is
   $\colim_{c\in P\op} Fc$ where $P$ is the
   preorder of subobjects of $\cinfty$ of the form $i_c$ for $c\in C$.
   Elements of $\tilde F$
   are equivalence classes $[c,x:Fc]$
   where $[c,x] = [c',x']$
   iff there exists $c^*$ with $c\sqsubseteq c^*\sqsupseteq c'$
   where $\sqsubseteq$ is the preorder $P$
   such that $x\cdot j = x'\cdot j'$,
   where $j$ is the unique $C$-morphism witnessing $c\sqsubseteq c^*$
   via the equation $i_{c^*} j = i_c$
   and $j'$ is the unique $C$-morphism witnessing $c'\sqsubseteq c^*$
   via the equation $i_{c^*} j' = i_{c'}$.
   The action of $G$
   on these equivalence classes
   is given by $[c, x:Fc]\cdot \pi = [c', F(f)(x):Fc']$
   where $(c',f:c\to c')$ is an arbitrary iso
   $\pi$ refines, guaranteed to exist by Closure.
\end{theorem}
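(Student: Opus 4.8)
The plan is to obtain the equivalence by transporting the classical representation of continuous $G$-sets as atomic sheaves on an orbit category back along the combinatorial data packaged in the nominal situation. \cref{app:lem:fix-subgroup-equiv} already supplies an equivalence $C\op\simeq\subgrpcat_C$, and since both the atomic topology and the atomic-sheaf condition of \cref{app:def:atomic-sheaf-condition} are intrinsic to the base category, hence preserved under an equivalence of base categories, this upgrades to $\Shjat(C\op)\simeq\Shjat(\subgrpcat_C)$. It therefore suffices to identify $\Shjat(\subgrpcat_C)$ with $\tgsets{G}$, where $G$ carries the refinement topology (\cref{app:def:refinement-topology}) and is a genuine topological group by \cref{app:lem:aut-continuous}.

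Next I would recognize $\subgrpcat_C$ as a full subcategory of the orbit category $\calO_G$ of open subgroups of $G$: its objects $\Fix i_c$ are exactly the basic open subgroups $\refineset_{1_c}$, the corresponding transitive $G$-sets are the orbits $G/\Fix i_c$, and its coset-valued Hom-sets coincide with the $G$-equivariant maps between these orbits. The topological-group analogue of the classical representation theorem (in the style of \citet[Theorem III.9.2]{maclane2012sheaves}) identifies $\tgsets{G}$ with atomic sheaves on $\calO_G$, sending a continuous $G$-set $X$ to the sheaf $H\mapsto X^{H}$ of $H$-fixed points and, conversely, an atomic sheaf to the colimit of its values over the system of open subgroups. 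By \cref{app:lem:fix-is-cofinal} the family $\{\Fix i_c\}$ is cofinal among the open subgroups, so $\subgrpcat_C\hookrightarrow\calO_G$ is a dense subsite, and the Comparison Lemma collapses $\Shjat(\subgrpcat_C)\simeq\Shjat(\calO_G)$, closing the chain of equivalences.

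Finally I would chase an atomic sheaf through the composite to confirm it agrees with the functor in the statement. Such an $F$ is carried to the continuous $G$-set whose underlying set is the colimit $\colim_{c\in P\op}Fc$ over the system of subobjects $i_c$, directed by Cofinality, with precisely the stated equivalence relation on representatives $[c,x]$ and the action $[c,x]\cdot\pi=[c',F(f)x]$ for $\pi$ refining an iso $f:c\to c'$; this action is well-defined by Closure and compatible with the group structure by \cref{app:lem:refinement-respects-comp}, and continuity is immediate because the stabilizer of $[c,x]$ contains the basic open $\Fix i_c=\refineset_{1_c}$. I expect the crux to be the natural isomorphism $Fc\cong(\tilde F)^{\Fix i_c}$ underlying full faithfulness: an element of the colimit fixed by $\Fix i_c$ must descend to a unique element of $Fc$. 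This is exactly where the three defining properties interlock --- Correspondence rewrites $\Fix i_c$-invariance as the $f$-invariance of \cref{app:def:atomic-sheaf-condition}, Homogeneity furnishes the automorphisms needed to compare representatives living at different levels, and the restriction operation of the atomic-sheaf condition then forces the invariant element down to level $c$. The reverse round-trip $X\cong\colim_c X^{\Fix i_c}$ falls out of continuity together with \cref{app:lem:fix-is-cofinal}, since every $x\in X$ has open stabilizer and is hence fixed by some $\Fix i_c$.
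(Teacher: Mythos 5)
Your proposal is correct and takes essentially the same route as the paper: both rest on \cref{app:lem:fix-subgroup-equiv} and the cofinality lemma \cref{app:lem:fix-is-cofinal}, reduce to the classical representation theorem of \citet[Theorem III.9.2]{maclane2012sheaves}, and then identify the transported sheaf with the stated colimit carrier and refinement action. The only difference is packaging: the paper applies that theorem directly to the cofinal subgroup site $\subgrpcat_C$ and reads the carrier and action off the internals of its proof, whereas you interpose the full orbit category $\calO_G$ via the Comparison Lemma and re-derive the fixed-point isomorphism $Fc\cong(\tilde F)^{\Fix i_c}$ by hand from Correspondence, Homogeneity, and atomicity --- a sound, slightly longer path to the same place.
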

\begin{proof}
  $\subgrpcat_C \simeq C^{\rm op}$ by \cref{app:lem:fix-subgroup-equiv}
  and the subgroups $\Fix i_c$ are cofinal in the open subgroups of $G$ (\cref{app:lem:fix-is-cofinal}),
  so \citet[Theorem III.9.2]{maclane2012sheaves} gives
  the desired equivalence.
  Rifling through the proof of \citet[Theorem III.9.2]{maclane2012sheaves}, we find that 
  it constructs for each $F\in\Shjat(\subgrpcat_C)$
  the $G$-set
  with carrier $\colim_{\Fix i_c\in\calU} F(\Fix i_c)$
  where the colimit is over $\calU$
  ordered by subgroup inclusion,
  and the diagram the colimit is taken over sends every such
  inclusion $\Fix i_c\subseteq \Fix i_d$
  to the $\subgrpcat_C$-morphism
  $\Fix{i_c}\xrightarrow{(\Fix{i_d})1_{\cinfty}}\Fix{i_d}$ (\citet[p. 153]{maclane2012sheaves}).
  By Correspondence every such morphism corresponds to
  the canonical morphism $i_d\to i_c$
  in $C_\infty/\cinfty$ witnessing
  the ordering relation $d\subseteq c$,
  so transporting
  $\colim_{\Fix i_c\in\calU} F(\Fix i_c)$
  across the equivalence $\subgrpcat_C\simeq C\op$  gives the colimit
  in the statement.
  Further rifling through the proof of \citet[Theorem III.9.2]{maclane2012sheaves}
  reveals that the action of a $\cinfty$-auto $\pi$ 
  sends an equivalence class $[\Fix{i_c},x : F(\Fix{i_c})]$
  to $[\pi^{-1}(\Fix{i_c})\pi,\, F(\overline\pi)(x):F(\pi^{-1}(\Fix{i_c})\pi)]$
  where $\overline\pi$ is the $\subgrpcat_C$-morphism
  $\pi^{-1}(\Fix{i_c})\pi \xrightarrow{(\Fix{i_c})\pi} \Fix{i_c}$ (\citet[p. 153]{maclane2012sheaves}).
  Pick an arbitrary $c'$ and iso $f : c\to c'$ that $\pi^{-1}$ refines, guaranteed to exist by Closure.
  No matter the choice of $c',f$, we have 
  $\pi^{-1}(\Fix{i_c})\pi = \Fix{i_{c'}}$,
  since for all $\sigma$ fixing $i_c$ it holds that
  $\pi^{-1}\sigma\pi i_{c'} 
  = \pi^{-1}\sigma i_c f^{-1} = \pi^{-1} i_c f^{-1} 
  = i_{c'} ff^{-1} = i_{c'}$
  so $\pi^{-1}\sigma \pi$ fixes $i_{c'}$,
  and conversely for all $\sigma$ fixing $i_{c'}$ it holds that
  $\pi\sigma\pi^{-1}i_c 
  = \pi\sigma i_{c'}f
  = \pi i_{c'}f = i_c f^{-1} f = i_c$
  so $\pi\sigma\pi^{-1}$ fixes $i_c$.
  Thus the $\subgrpcat_C$-morphism $\overline\pi$
  has domain $\Fix{i_{c'}}$ and codomain $\Fix{i_c}$,
  and the action of a $\cinfty$-auto $\pi$
  sends $[\Fix i_c, x:F(\Fix i_c)]$
  to $[\Fix i_{c'}, F(\overline\pi)(x) : F(\Fix i_{c'})]$
  for $(c', f : c'\to c)$ refining $\pi^{-1}$.
  Transporting this across
  the equivalence $\subgrpcat_C\simeq C\op$
  gives the action in the statement:
  the subgroups $\Fix i_c,\Fix i_{c'}$
  correspond to the objects $c,c'$, and
  the morphism $\overline\pi$
  given by the coset 
  $(\Fix i_c)\pi$
  is the image of $f$ under the functor 
  $C\op\to\subgrpcat_C$ constructed in the proof of 
  \cref{app:lem:fix-subgroup-equiv}.
\end{proof}

\section{\Goodsheavesnolink} \label{app:sec:goodsheaves}

\subsection{Probabilistic concepts as sheaves}

\begin{definition} \label{app:def:goodsheaf}
  An \emph{\goodsheaf{}}
  is an object of the category $\Shjat(\StdMble)$.
\end{definition}

\begin{lemma} \label{app:lem:stdmble-pfwd}
  If $(X,\calF,\calN)$
  is a standard enhanced measurable space
  and $(Y,\calG)$
  a measurable space
  arising from a Polish space
  and $f : (X,\calF)\to(Y,\calG)$
  a measurable map,
  then
  the set $\calM := \{G\in \calG \mid f^{-1}(G)\in \calN\}$
  makes $(Y,\calG,\calM)$ a standard enhanced measurable space
  and $f$ a morphism in $\StdMble$.
\end{lemma}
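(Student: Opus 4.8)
The plan is to produce the witnessing measure on $(Y,\calG)$ by pushing forward a measure from $X$ along $f$. First I would unforget a measure: since $(X,\calF,\calN)$ is a standard enhanced measurable space, by \cref{app:def:stems} there is a measure $\mu$ with negligibles $\calN$ making $(X,\calF,\mu)$ a standard probability space. As $f$ is measurable, $f^{-1}(G)\in\calF$ for every $G\in\calG$, so the pushforward $\nu := f_*\mu$, defined by $\nu(G) = \mu(f^{-1}(G))$, is well-defined; it is a probability measure because taking $f$-preimages preserves $\emptyset$, all of $Y$, and countable disjoint unions, while $\mu$ is a probability measure. By construction the $\nu$-negligible sets of $\calG$ are exactly $\{G\in\calG \mid \nu(G) = 0\} = \{G\in\calG \mid \mu(f^{-1}(G)) = 0\} = \{G\in\calG\mid f^{-1}(G)\in\calN\} = \calM$; in particular $\calM$ is a $\sigma$-ideal, being the negligible ideal of the measure $\nu$.

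It remains to check that $\nu$ makes $(Y,\calG,\nu)$ a standard probability space, which is the only step requiring real measure theory. Here I would use the hypothesis that $(Y,\calG)$ arises from a Polish space: $\nu$ is then (the completion of) a Borel probability measure on a Polish space, and such a measure is standard --- this is exactly the characterization of standard (Lebesgue--Rohlin) spaces underlying \cref{app:def:sps}, and the compactness and completion bookkeeping is identical to that in the proofs of \cref{app:lem:mblecat-map-char,app:lem:stdext} (completing the Borel measure along a Polish topology, cf.\ \fremlinf{433X(e)(i)}{43}{15}). Since $\nu$ has negligibles $\calM$, this shows $(Y,\calG,\calM)$ is a standard enhanced measurable space in the sense of \cref{app:def:stems}. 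I expect this to be the main obstacle: not the pushforward itself, but matching the given $\sigma$-algebra $\calG$ to the completed $\sigma$-algebra of the standard probability space it supports, exactly the kind of Borel-versus-complete bookkeeping already carried out elsewhere in \cref{app:sec:measure-theory}.

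Finally, the morphism claim is essentially immediate. By definition $\calM = \{G \mid f^{-1}(G)\in\calN\}$, so $f^{-1}(G)\in\calN$ holds if and only if $G\in\calM$; reading this implication left-to-right shows $f$ is \nref{app:def:neg-refl}{negligible-preserving} and right-to-left shows it is \nref{app:def:neg-refl}{negligible-reflecting}, so $f$ is a map of standard enhanced measurable spaces (\cref{app:def:stems-mor}) from $(X,\calF,\calN)$ to $(Y,\calG,\calM)$. Equivalently, the identity $\nu(G) = \mu(f^{-1}(G))$ exhibits $f$ as a measure-preserving map $(X,\calF,\mu)\to(Y,\calG,\nu)$, so condition $(4)$ of \cref{app:lem:mblecat-map-char} holds and the implication $(4)\Rightarrow(1)$ delivers the same conclusion.
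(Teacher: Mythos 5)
Your proposal is correct and follows essentially the same route as the paper's proof: unforget a standard probability measure on $(X,\calF,\calN)$, push it forward along $f$, observe that its negligibles are exactly $\calM$ (so $\calM$ is a $\sigma$-ideal and $f$ preserves and reflects negligibles by construction), and invoke the Polish hypothesis together with the Rohlin-style characterization to conclude standardness of $(Y,\calG,\calM)$. The only cosmetic difference is that you additionally route the morphism claim through \cref{app:lem:mblecat-map-char}~(4)$\Rightarrow$(1), whereas the paper verifies negligible-preservation and reflection directly from the definition of $\calM$; both are immediate.
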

\begin{proof}
  The set $\calM$ is a $\sigma$-ideal in $\calG$
  because $\calN$ is a $\sigma$-ideal in $\calF$
  and taking preimages preserves all $\sigma$-algebra operations,
  so $(Y,\calG,\calM)$ is an enhanced measurable space.
  The map $f$ preserves and reflects negligibles by construction,
  since $M\in \calM$ iff $f^{-1}(M)\in \calN$.
  All that's left is to show $(Y,\calG,\calM)$ is standard.
  This follows from $Y$ Polish, by unforgetting
  a standard probability measure on $(X,\calF,\calN)$
  and pushing it forward onto $Y$ through $f$~\cite[Section 2.7, p.24]{rohlin1949fundamental}.
\end{proof}

\begin{lemma} \label{app:lem:rv-mble-sheaf}
  For any measurable space $(A,\calG)$
  arising from a Polish space, the random variable presheaf
  \begin{align*}
    &\RV_A(\Omega,\calF,\calN) = \{ \text{measurable maps }(\Omega,\calF)\to (A,\calG) \}/\aseq
    \text{where $X\aseq Y$ iff $\{\omega \mid X\omega\ne Y\omega\}\in\calN$}
    \\
    &\RV_A(p : \Omega' \to \Omega)([X] : \RV_A(\Omega)) : \RV_A(\Omega') = [X\circ p]
  \end{align*}
  is an atomic sheaf.
\end{lemma}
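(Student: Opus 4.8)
The plan is to verify the atomic sheaf condition of \cref{app:def:atomic-sheaf-condition} directly. The atomic topology on $\StdMble$ exists by \cref{app:fact:right-ore-atomic-topology,app:lem:mble-right-ore}, so it suffices to show that for every $\StdMble$-morphism $p : \Omega'\to\Omega$ the map $\RV_A(p) : \RV_A(\Omega)\to\RV_A(\Omega')$, $[X]\mapsto[X\circ p]$, is injective and has image exactly the set of $p$-invariant elements (\cref{app:def:f-inv}). First I would record that $\RV_A$ is a well-defined presheaf at all: precomposition $Y\mapsto Y\circ g$ along an $\StdMble$-map $g$ respects $\calN$-almost-everywhere equality, since $\{Y\circ g\ne Y'\circ g\}\subseteq g^{-1}(\{Y\ne Y'\})$ and $\StdMble$-maps are negligible-reflecting; functoriality is then immediate.

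Injectivity is short. If $[X\circ p]=[X'\circ p]$ then $\{X\circ p\ne X'\circ p\}=p^{-1}(\{X\ne X'\})$ is $\Omega'$-negligible; the set $\{X\ne X'\}$ is measurable because $A$ is Polish, so its diagonal is closed, hence Borel, and $(X,X')$ is measurable. Since $p$ is negligible-preserving, $\{X\ne X'\}$ is $\Omega$-negligible, so $[X]=[X']$. The inclusion (image)~$\subseteq$~($p$-invariants) is equally direct: if $[Y]=[X\circ p]$ and $g,h$ satisfy $pg=ph$, then $[Y]\cdot g=[X\circ(pg)]=[X\circ(ph)]=[Y]\cdot h$, using that precomposition respects a.e.\ equality.

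The real content is the reverse inclusion: every $p$-invariant $[Y]$ lies in the image, i.e.\ $Y$ descends a.e.\ through $p$. Here I would use \cref{app:lem:stdmble-img} to equip $\Omega,\Omega'$ with standard probability measures $\mu,\mu'$ making $p$ measure-preserving, and then reuse the disintegration construction in the proof of \cref{lem:std-prob-right-ore} applied to the cospan $p,p$: disintegrating $\mu'$ along $p$ as $\{\mu'|_\omega\}_\omega$ produces a standard space $\Psi$ carried by $\Omega'\times\Omega'$ with measure $\tau(K)=\int(\mu'|_\omega\otimes\mu'|_\omega)(K)\,\mu(\mathrm d\omega)$, whose projections $\pi_1,\pi_2$ are $\StdMble$-maps with $p\pi_1=p\pi_2$. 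Feeding $\pi_1,\pi_2$ into $p$-invariance of $[Y]$ gives $Y\circ\pi_1\aseq Y\circ\pi_2$, which by Fubini says that for $\mu$-a.e.\ $\omega$ the pushforward $Y_*(\mu'|_\omega)$ satisfies $(Y_*(\mu'|_\omega)\otimes Y_*(\mu'|_\omega))(\Delta_A)=1$, i.e.\ is a Dirac measure, so $Y$ is $\mu'|_\omega$-a.e.\ constant on the fibre over $\omega$.

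Finally I would construct the descent $X$. The assignment $\omega\mapsto Y_*(\mu'|_\omega)$ is a measurable map $\Omega\to\mathcal P(A)$ landing a.e.\ in the Borel set of Dirac measures; since $A$ is standard Borel, $a\mapsto\delta_a$ is a Borel isomorphism onto its image, and composing with its inverse yields a measurable $X:\Omega\to A$ (set arbitrarily on the remaining null set). Integrating fibrewise constancy gives $\mu'(\{Y\ne X\circ p\})=\int\mu'|_\omega(\{Y\ne X(\omega)\})\,\mu(\mathrm d\omega)=0$, so $[X\circ p]=[Y]$. The main obstacle is exactly this last step: turning the pointwise-on-fibres data into a genuinely \emph{measurable} $A$-valued function, which is where standardness of both $\Omega'$ and $A$ is essential, via existence of regular conditional distributions and the Borel structure on $\mathcal P(A)$.
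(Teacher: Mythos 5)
Your proof is correct, but it takes a genuinely different route from the paper's. The paper argues algebraically: it pushes the negligibles of $\Omega'$ forward along $Y$ to make $Y$ itself a $\StdMble$-map, passes through the duality $\StdMble\simeq\StdMbleAlgop$ (\cref{app:lem:mblecat-dual}), and reduces descent to the inclusion of measurable subalgebras $\img(Y^{-1})\subseteq\img(p^{-1})$, which it proves by contradiction using the target practice lemma (\cref{app:lem:target-practice}): any element outside $\img(p^{-1})$ can be separated by a pair $q,r$ with $pq=pr$, contradicting $p$-invariance. The descended map $\overline Y$ then comes out of the duality rather than being constructed by hand. You instead stay entirely on the measure-theoretic side: you use the invariance hypothesis on a single concrete pair $(q,r)$ --- the two projections of the relatively independent self-coupling of $\Omega'$ over $\Omega$, i.e.\ the paper's Ore-property witness from \cref{lem:std-prob-right-ore} applied to the cospan $(p,p)$ --- and then rebuild the descended function pointwise via disintegration and the Dirac-measure characterization of fibrewise constancy. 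What each approach buys: the paper's argument is uniform in the sheaf --- essentially the same Boolean-algebra reasoning proves sheafhood of $\yo(UG)$ and hence of $\pspcs$ (\cref{app:lem:mblecat-quot-sheaf}), where elements are not point functions into a Polish space, so your pointwise reconstruction would not transfer; it also keeps all almost-everywhere bookkeeping inside the measure algebra. Your argument avoids the duality and the target practice lemma entirely and yields an explicit construction of $X$, but it is tailored to $\RV_A$: it leans on Rokhlin disintegration, on choosing Borel versions of $Y$ so that $\omega\mapsto Y_*(\mu'|_\omega)$ is genuinely measurable into $\calP(A)$, and on the Lusin--Souslin facts that the Dirac embedding $a\mapsto\delta_a$ has Borel image and Borel inverse --- exactly the technicalities you flag, which are standard but nontrivial, and which the paper's algebraic route sidesteps. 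Your uniqueness/injectivity argument is the same as the paper's (negligible-preservation of $p$), and your verification that the image lands in the $p$-invariants matches the easy direction there as well.
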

\begin{proof}
  Given two random variables $X,Y$ let $\neset X Y := \{\omega \mid X\omega\ne Y\omega\}$
  be the event they disagree.
  The action of $\RV_A$ on morphisms is well-defined because $p$
  negligible-reflecting: if $\neset{X}{Y}$ negligible in $\Omega$
  then $\neset{X\circ p}{Y\circ p} = p^{-1}(\neset X{Y})$ negligible in $\Omega'$.
  Unwinding the definition of $\RV_A$ and making the quotienting
  of random variables up to almost-everywhere explicit,
  the presheaf $\RV_A$ is an atomic sheaf
  if and only if
  \[
    \forall p~[X].~\underbrace{(\forall q~r.~ p\circ q = p\circ r \implies X\circ q \aseq X\circ r)}_{\text{$X$ is $p$-invariant}}
    \implies \exists! [\overline X].~ \overline X p \aseq X
  \]
  The following diagram gives the types of all variables involved:
\[\begin{tikzcd}
  \Omega'' \arrow[r, "r"', shift right=1] \arrow[r, "q", shift left=1] & \Omega' \arrow[r, "p"] \arrow[d, "X"] & \Omega \arrow[ld, "\overline X"] \\
                                                                       & A                                     &                                 
  \end{tikzcd}\]
  Suppose $[X]$ is $p$-invariant.
  By \cref{app:lem:stdmble-pfwd},
  pushing the negligibles of $\Omega'$ forward along $X$ 
  makes $A$ into an standard enhanced measurable space and $X$ a morphism of
  standard enhanced measurable spaces.
  By \cref{app:lem:mblecat-dual},
  this gives a diagram
\[\begin{tikzcd}
    \mathrm{alg}(\Omega')                     & \mathrm{alg}(\Omega) \arrow[l, "p^{-1}", hook] \\
    \mathrm{alg}(A) \arrow[u, "X^{-1}", hook] &                                               
    \end{tikzcd}\]
  in $\StdMbleAlg$.
  There is an inclusion of standard measurable algebras $\img(X^{-1})\subseteq \img(p^{-1})$:
  otherwise, if there were some $E$ in $\img(X^{-1})$
  not in $\img(p^{-1})$,
  then by \cref{app:lem:target-practice} and \cref{app:lem:mblecat-dual}
  there would exist $\StdMble$-maps $q,r : \Omega''\to\Omega'$
  with $pq = pr$ but $q^{-1}(E) \ne r^{-1}(E)$,
  so $X\circ q \not\aseq X\circ r$,
  contradicting the assumption that $X$ is $p$-invariant.
  The inclusion $\img(X^{-1})\subseteq \img(p^{-1})$ gives
  a corresponding (injective) homomorphism $\operatorname{alg}(A)\hookrightarrow\operatorname{alg}(\Omega)$
  making the above diagram of homomorphisms into a commutative triangle.
  By \cref{app:lem:mblecat-dual} again, 
  such a homomorphism arises from a measurable map
  $\overline X : \Omega\to A$,
  and commutativity of the triangle implies
  $\overline X p\aseq X$.
  Finally, for any $\overline Y : \Omega\to A$
  with $\overline Y p \aseq X$,
  the string of equations $\overline X p\aseq X \aseq \overline Y p$
  implies
  $\neset{\overline X p}{\overline Y p}$
  negligible in $\Omega'$;
  since $\neset{\overline X p}{\overline Y p} = p^{-1}(\neset{\overline X}{\overline Y})$
  and $p$ negligible preserving,
  this implies $\neset{\overline Y}{\overline X}$ negligible in $\Omega$,
  which is to say $\overline Y\aseq \overline X$,
  establishing uniqueness of $\overline X$.
\end{proof}

\begin{lemma} \label{app:lem:mblecat-quot-sheaf}
  Let $U : C\to\StdMble$ be a functor and $G$ a groupoid in $C$.
  The presheaf \nref{app:def:rep-mod-g-general}{$\yo{(UG)}$},
  more explicitly given by
  \begin{align*}
    &\StdMble(\Omega,UG) = \left(
      \begin{aligned}
        &\{\text{maps $\Omega\to UA$ for some $A\in G$}\}/\sim \\
        &\text{where $(f:\Omega\to UA)\sim (g:\Omega\to UB)$
           iff $f = U(\pi) g$ for some $\pi : B\to A$ in $G$}
      \end{aligned}\right) \\
    &[f]\cdot p = [fp]
  \end{align*}
  is an atomic sheaf on $\StdMble$.
\end{lemma}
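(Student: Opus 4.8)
The plan is to verify the atomic sheaf condition of \cref{app:def:atomic-sheaf-condition} directly, mirroring the proof that $\RV_A$ is a sheaf (\cref{app:lem:rv-mble-sheaf}); note first that the atomic topology exists for $\StdMble$ by \cref{app:lem:mble-right-ore}, so the condition makes sense. Fix a $\StdMble$-map $p : \Omega'\to\Omega$ and a $p$-invariant element $[f]$ of $\StdMble(\Omega', UG)$, represented by a map $f : \Omega'\to UA$ with $A$ an object of $G$. Unwinding the equivalence relation, $p$-invariance says that for all $q,r : \Omega''\to\Omega'$ with $pq = pr$ there is an automorphism $\pi : A\to A$ in $G$ with $fq = U(\pi)\,fr$. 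I must produce a unique $[\overline f]\in\StdMble(\Omega,UG)$ with $[\overline f]\cdot p = [f]$. Unlike the $\RV_A$ case, no pushing-forward of negligibles is needed, since $UA$ is already a $\StdMble$-object and $f$ already a $\StdMble$-morphism.

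First I would pass to measure algebras via the duality $\StdMble\simeq\StdMbleAlgop$ of \cref{app:lem:mblecat-dual}, under which $f$ and $p$ become injective complete-Boolean-algebra homomorphisms $\measalg(f)$ and $\measalg(p)$ into $\measalg(\Omega')$ (each taking preimages), with $\img(\measalg(p))$ a standard measurable subalgebra. The key step is to show $\img(\measalg(f))\subseteq\img(\measalg(p))$. Suppose not, so there is $E = f^{-1}(E_0)\in\img(\measalg(f))$ with $E\notin\img(\measalg(p))$. Applying the target practice lemma (\cref{app:lem:target-practice}) with ambient algebra $\measalg(\Omega')$, subalgebra $\img(\measalg(p))$, and element $E$, and then translating back through \cref{app:lem:mblecat-dual}, yields $\StdMble$-maps $q,r : \Omega''\to\Omega'$ with $pq = pr$ but $q^{-1}(E)\notin\img(r^{-1})$. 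However, $p$-invariance gives $\pi$ with $fq = U(\pi)\,fr$, whence $q^{-1}(E) = (fq)^{-1}(E_0) = (fr)^{-1}\big(U(\pi)^{-1}E_0\big) = r^{-1}\big(f^{-1}(U(\pi)^{-1}E_0)\big)\in\img(r^{-1})$, a contradiction. This is the main obstacle: it is precisely the point where the groupoid element $\pi$ must be threaded through the preimage computation so that $q^{-1}(E)$ lands back in $\img(r^{-1})$, a wrinkle absent from the $\RV_A$ proof.

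With the inclusion in hand, $\measalg(f)$ factors as $\measalg(p)\circ h$ for a unique injective homomorphism $h$ out of $\measalg(UA)$; by \cref{app:lem:mblecat-dual} this $h$ is $\measalg(\overline f)$ for a map $\overline f : \Omega\to UA$, and faithfulness of the duality upgrades $\measalg(\overline f\,p) = \measalg(f)$ to $\overline f\,p\aseq f$. Hence $[\overline f]\cdot p = [\overline f\,p] = [f]$, giving existence. For uniqueness, suppose $\overline g : \Omega\to UB$ also satisfies $[\overline g]\cdot p = [f]$, so $\overline g\,p = U(\tau)\,f$ for some $\tau : A\to B$ in $G$. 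Then $(U(\tau)\overline f)\,p = U(\tau)(\overline f\,p) = U(\tau)f = \overline g\,p$, and since $p$ is epi (\cref{app:cor:stdmble-epis}) we conclude $\overline g = U(\tau)\overline f$, so $[\overline g] = [U(\tau)\overline f] = [\overline f]$. This completes the verification that $\yo(UG)$ is an atomic sheaf.
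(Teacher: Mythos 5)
Your proof is correct and follows essentially the same route as the paper's: pass to measurable algebras via \cref{app:lem:mblecat-dual}, establish $\img(\measalg(f))\subseteq\img(\measalg(p))$ by contradiction using the target practice lemma (\cref{app:lem:target-practice}) together with the groupoid automorphism supplied by $p$-invariance, then recover $\overline f$ by duality and get uniqueness from $p$ epi. Your element-wise computation $q^{-1}(E)=r^{-1}\bigl(f^{-1}(U(\pi)^{-1}E_0)\bigr)$ is just an unpacking of the paper's observation that $\img(qf^{-1})=\img(rf^{-1})$, and your uniqueness argument is, if anything, slightly more explicit about handling representatives with different codomains $UA$, $UB$.
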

\begin{proof}
  The proof is analogous to the proof of \cref{app:lem:rv-mble-sheaf}.
  Suppose $p : \Omega'\to\Omega$
  and $[f] : \StdMble(\Omega',UG)$ is $p$-invariant for some $A\in G$ and $f : \Omega'\to UA$.
  The goal is to find $\overline f : \StdMble(\Omega',UG)$
  with $\overline f p \sim f$, which is to equivalent to
  finding $A'$ and $\overline f : \Omega'\to UA'$ and $\pi : A\to A'$ in $G$ with $\overline f p = U(\pi) f$,
  which is equivalent (by left-multiplying both sides by $U(\pi^{-1})$)
  to finding $\overline f : \Omega'\to UA$ with $\overline f p = f$.
  As in the proof of \cref{app:lem:rv-mble-sheaf},
  such an $\overline f$ exists
  if there is an inclusion of measurable algebras $\img(f^{-1})\subseteq \img(p^{-1})$,
  so it suffices to establish this inclusion;
  uniqueness of $\overline f$ then follows from $p$ epi.
  Rephrasing the assumption that $f$ is $p$-invariant in terms of
  measurable algebras, we have
  \[ 
    \forall q~r.~ q p^{-1} = r p^{-1} \implies \exists \pi\in G.~ q f^{-1} = r f^{-1} U(\pi)^{-1}
  \]
  with types of the variables involved depicted by the following diagram:
  \[
\begin{tikzcd}
  \measalg(\Omega'') & \measalg(\Omega') \arrow[l, "q"', hook, shift right] \arrow[l, "r", hook, shift left] & \measalg(\Omega) \arrow[l, "p^{-1}", hook] \\
                     & \measalg(A) \arrow[u, "f^{-1}", hook]                                                 &                                           
  \end{tikzcd}
    \]
  Note that if $qf^{-1} = rf^{-1}U(\pi)^{-1}$ for some $\pi$
  then $qf^{-1}$ and $rf^{-1}$ have the same image in $\measalg(\Omega'')$,
  so $p$-invariance of $f$ implies
  \begin{align} \label{app:eqn:yoneda-quot-prop-to-be-contradicted}
    \forall q~r.~ q p^{-1} = r p^{-1} \implies \img (q f^{-1}) = \img (r f^{-1}).
  \end{align}
  We are now ready to establish the inclusion $\img(f^{-1})\subseteq \img(p^{-1})$.
  Suppose for contradiction that there exists $E$ in $\measalg(A)$ with $f^{-1}(E)$ not in $\img(p^{-1})$.
  By \cref{app:lem:target-practice},
  there exists a standard measurable algebra $\Omega''$
  and homomorphisms $q,r$ as depicted above such that $qp^{-1} = rp^{-1}$
  but $qf^{-1}E \notin \img r$,
  contradicting~(\ref{app:eqn:yoneda-quot-prop-to-be-contradicted}).
\end{proof}

\begin{lemma} \label{app:lem:mblecat-subcanonical}
  For any standard probability space $\Omega$,
  the representable functor $\StdMble(-,\Omega)$
  is an atomic sheaf with \nref{app:def:supported-sheaf}{minimal supports}.
\end{lemma}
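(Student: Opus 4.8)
The plan is to recognize that the representable functor $\StdMble(-,\Omega)$ is literally the Yoneda embedding $\yo\Omega$ of $\Omega$ (regarded as an object of $\StdMble$; every standard probability space forgets to such an object via \cref{app:lem:stdmble-img}), and that both required properties have already been discharged in the general lemmas on presheaves of representables modulo a groupoid. So rather than redoing the sheaf verification by hand, I would exhibit $\StdMble(-,\Omega)$ as a degenerate instance of that machinery.

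For the atomic-sheaf condition I would instantiate \cref{app:lem:mblecat-quot-sheaf}. Take $C$ to be the terminal category $\mathbf 1$ — a single object with only its identity, which is trivially a groupoid $G$ — and let $U : \mathbf 1\to\StdMble$ be the functor selecting $\Omega$. Unwinding \cref{app:def:rep-mod-g-general} in this case, the quotienting relation identifies $f\sim g$ iff $f = U(\pi)g$ for some $\pi\in G$; since $G$ contains only the identity, $\sim$ is trivial, so $\yo(UG)(\Omega') = \StdMble(\Omega',\Omega)$ for every $\Omega'$, with action on morphisms given by precomposition. Thus $\yo(UG) = \StdMble(-,\Omega)$ on the nose, and \cref{app:lem:mblecat-quot-sheaf} immediately yields that it is an atomic sheaf. (This is precisely the statement that the atomic topology on $\StdMble$ is subcanonical, which is what the lemma's name records.)

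For the minimal-supports condition I would appeal directly to \cref{app:lem:yo-supps}, which shows every representable presheaf $\yo c$ on an arbitrary category has minimal supports, the support of an element $f\in\Hom(x,c)$ being witnessed by the identity $1_c$. Specializing $c = \Omega$ in $\StdMble$ gives that $\StdMble(-,\Omega)$ has minimal supports. Alternatively, \cref{app:lem:yo-groupoid-supps} applies to the same trivial-groupoid instance, since $\StdMble$ consists only of epis by \cref{app:cor:stdmble-epis}.

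There is no genuine obstacle here: the mathematical content lives entirely in the general lemmas, and the only thing to check is the routine bookkeeping that the trivial-groupoid instance of $\yo(UG)$ collapses to the plain representable. Were one to prefer a self-contained direct proof, the crux would be exactly the step carried out inside \cref{app:lem:mblecat-quot-sheaf} and \cref{app:lem:rv-mble-sheaf}: showing that $p$-invariance of a map forces the inclusion of measurable algebras $\img(f^{-1})\subseteq\img(p^{-1})$, where \cref{app:lem:target-practice} supplies the separating maps $q,r$ and uniqueness of the gluing follows from $p$ being epi (\cref{app:cor:stdmble-epis}).
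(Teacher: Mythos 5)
Your proposal is correct and takes essentially the same route as the paper: the paper's own proof obtains sheafhood by instantiating \cref{app:lem:mblecat-quot-sheaf} at the trivial groupoid $\{1_\Omega\}$ (with $U$ the identity rather than your equivalent choice of a functor out of the terminal category), and gets minimal supports from \cref{app:lem:yo-groupoid-supps} at the same trivial instance, of which your primary appeal to \cref{app:lem:yo-supps} is just a more direct special case.
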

\begin{proof}
  Sheafhood follows from \cref{app:lem:mblecat-quot-sheaf}
  by setting $U = 1$ and $G$ to the trivial groupoid $\{1_\Omega\}$;
  the minimal support property follows similarly from \cref{app:lem:yo-groupoid-supps}.
\end{proof}

\begin{definition}[sheaf of probability spaces] \label{app:def:mble-pspcs}
  The \emph{sheaf of probability spaces} $\pspcs$ is
  \nref{app:def:rep-mod-g-general}{$\yo (\forgetmu G)$} with
  $\forgetmu$ the forgetful functor $\StdProb\to\StdMble$
  and $G$ the maximal subgroupoid of $\StdProb$.
  Concretely, the action of $\pspcs$ on objects is
  \begin{align*}
    \pspcs(\Omega,\calF,\calN) &= \left(\begin{aligned}
      &\text{pairs $((A,\calG,\mu),X)$ with $(A,\calG,\mu)\in\StdProb$ 
      and $X$ a $\StdMble$-map
      $(\Omega,\calF,\calN)\to (A,\calG,\negligibles(\mu))$} \\
      &\text{mod $\sim$, where $((A,\calG,\mu),X)\sim((A',\calG',\mu'),X')$ iff
        exists $\StdProb$-iso $i : (A,\calG,\mu)\to (A',\calG',\mu')$
        with $X' = U(i)\, X$}
    \end{aligned}\right)
  \end{align*}
  and the action on morphisms is by precomposition:
  \begin{align*}
    \pspcs(f : \Omega'\to\Omega) &= \left(\begin{aligned}
      \pspcs\Omega &\to \pspcs\Omega' \\
      [(A,\calG,\mu),X] &\mapsto [(A,\calG,\mu),Xf] 
    \end{aligned}\right)
  \end{align*}
  This presheaf is an atomic sheaf on $\StdMble$ by
  \cref{app:lem:mblecat-quot-sheaf},
  and has \nref{app:def:supported-sheaf}{minimal supports} by \cref{app:lem:yo-groupoid-supps}.
\end{definition}

\subsection{Separation as Day convolution} \label{app:sec:dayconv}

\begin{lemma} \label{app:lem:stdmble-supports}
  The category $\StdMble$ is a \nameref{app:def:category-of-supports}.
\end{lemma}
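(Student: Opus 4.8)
The plan is to verify the four conditions of \cref{app:def:category-of-supports} for $\StdMble$. First I would record that $\StdMble$ is symmetric semicartesian monoidal, with tensor product $\otimes$ of \cref{app:def:stdmble-tensor} and unit the one-point space; this is exactly \cref{app:lem:stdmble-semicartesian}, so the hypotheses needed even to state ``category of supports'' are already in hand. Three of the four bullet-point conditions then follow immediately from earlier results: every morphism of $\StdMble$ is epi by \cref{app:cor:stdmble-epis}; the atomic topology exists because $\StdMble$ has the right Ore property (\cref{app:lem:mble-right-ore}), which is precisely the criterion of \cref{app:fact:right-ore-atomic-topology}; and for any groupoid $G$ in $\StdMble$ the presheaf of representables modulo $G$ is an atomic sheaf by \cref{app:lem:mblecat-quot-sheaf}, applied with the functor $U$ there taken to be the identity $1_{\StdMble}$, as specified in \cref{app:def:rep-mod-g}.

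The only condition requiring a genuine argument is that the projections $\scmfst,\scmsnd$ are jointly monic, and my plan is to argue directly at the level of representatives. Recall from \cref{app:def:stdmble-tensor} that $X\otimes Y$ has underlying set the Cartesian product of the underlying sets of $X$ and $Y$, and that, unwinding \cref{app:def:semicartesian-monoidal-category}, $\scmfst$ and $\scmsnd$ are represented by the set-theoretic projections $\pi_1,\pi_2$. Given $\StdMble$-maps $f=[f_0]$ and $g=[g_0]$ from some $Z=(Z,\calF,\calN)$ into $X\otimes Y$ with $\scmfst f=\scmfst g$ and $\scmsnd f=\scmsnd g$, the hypotheses unfold to say that the disagreement sets $\{z : \pi_1 f_0(z)\ne\pi_1 g_0(z)\}$ and $\{z : \pi_2 f_0(z)\ne\pi_2 g_0(z)\}$ both lie in $\calN$. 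Since $f_0(z)\ne g_0(z)$ forces disagreement in at least one coordinate, the set $\{z : f_0(z)\ne g_0(z)\}$ is contained in the union of these two sets, which is again in $\calN$ because $\calN$ is a $\sigma$-ideal. Hence $f_0\aseq g_0$, i.e.\ $f=g$ in $\StdMble$.

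This coordinatewise bookkeeping is the crux, and it is elementary once one recalls that $\StdMble$-maps are equivalence classes under almost-everywhere equality and that $\otimes$ is carried by the set product; so I do not expect a real obstacle. The only care needed is to fix representatives so that $\scmfst,\scmsnd$ are the literal projections, after which the $\sigma$-ideal closure of $\calN$ does all the work. Should one prefer to stay on the algebraic side, an alternative route would pass through the duality $\StdMble\simeq\StdMbleAlg\op$ (\cref{app:lem:mblecat-dual}), using that $\measalg(X\otimes Y)$ is the measure-algebra free product of $\measalg(X)$ and $\measalg(Y)$, generated as a complete Boolean algebra by the images of $\measalg(\scmfst)$ and $\measalg(\scmsnd)$; two order-continuous homomorphisms out of it that agree on these images agree on the closed subalgebra they generate, namely all of $\measalg(X\otimes Y)$, giving the same conclusion.
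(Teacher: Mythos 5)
Your proposal is correct, and for three of the four conditions it cites exactly the lemmas the paper uses: semicartesian monoidal structure from \cref{app:lem:stdmble-semicartesian}, epis from \cref{app:cor:stdmble-epis}, the atomic topology from \cref{app:lem:mble-right-ore} together with \cref{app:fact:right-ore-atomic-topology}, and sheafhood of representables modulo a groupoid from \cref{app:lem:mblecat-quot-sheaf} with $U=1_\StdMble$. The genuine difference is the second bullet of \cref{app:def:category-of-supports}: the paper's own proof never addresses joint monicity of $\scmfst,\scmsnd$ at all, evidently treating it as routine, whereas you supply an explicit argument. Your argument is sound: with the quotient in \cref{app:def:stems-mor} taken pointwise ($f\aseq g$ iff the disagreement set is in $\calN$), and with $\scmfst,\scmsnd$ represented by the set-theoretic projections (which is what \cref{app:def:stdmble-tensor} and \cref{app:def:semicartesian-monoidal-category} yield), the disagreement set of $f_0,g_0$ is contained in the union of the two coordinatewise disagreement sets, and $\calN$ is a $\sigma$-ideal, so $f=g$. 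Your alternative algebraic route also works: under \cref{app:lem:mblecat-dual}, $\measalg(X\otimes Y)$ is the localizable measure-algebra free product $\measalg(X)\lmafp\measalg(Y)$, which is generated as a complete Boolean algebra by the images of $\measalg(\scmfst)$ and $\measalg(\scmsnd)$, so two injective complete-Boolean-algebra homomorphisms agreeing on those images agree everywhere. In short, your write-up is not just correct but strictly more complete than the paper's proof; the only point of care, which you correctly flag, is fixing representatives so that the abstract projections of the semicartesian structure are literally $\pi_1,\pi_2$.
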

\begin{proof}
  The category $\StdMble$ is symmetric \nref{app:def:semicartesian-monoidal-category}{semicartesian monoidal} by
  \cref{app:lem:stdmble-semicartesian},
  contains only epis by \cref{app:cor:stdmble-epis},
  and has the atomic topology by \cref{app:lem:mble-right-ore} 
  and \cref{app:fact:right-ore-atomic-topology}.
  Finally, setting $U=1_\StdMble$ in \cref{app:lem:mblecat-quot-sheaf} shows representables mod $G$
  are atomic sheaves for all groupoids $G$.
\end{proof}

\begin{lemma} \label{app:lem:mble-pspcs-tensor-pspcs}
  The tensor product of $\nref{app:def:mble-pspcs}\pspcs$ with itself in sheaves can be computed as if in presheaves:
  \[ i(\pspcs\nref{app:def:tensor-product-of-sheaves}{\shotimes} \pspcs) \cong i\pspcs\otimes i\pspcs\]
\end{lemma}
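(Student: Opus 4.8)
The plan is to reduce the claim to the general preservation result \cref{app:lem:tensor-preserves-ussheaves}, which says that the presheaf Day convolution of two atomic sheaves with minimal supports, taken over a \nameref{app:def:category-of-supports}, is again an atomic sheaf. First I would record the hypotheses that feed into that lemma: the base category $\StdMble$ is a \nameref{app:def:category-of-supports} by \cref{app:lem:stdmble-supports}, and $\pspcs$ is an atomic sheaf with minimal supports by \cref{app:def:mble-pspcs}. Applying \cref{app:lem:tensor-preserves-ussheaves} with $F = G = \pspcs$ then immediately yields that the presheaf-level Day convolution $i\pspcs \otimes i\pspcs$ is already an atomic sheaf.

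The second ingredient is the definition of the sheaf-level tensor product. The category $\Shjat(\StdMble)$ is a reflective subcategory of $\psh(\StdMble)$ via a sheafification adjunction $a \dashv i$, and the monoidal structure $\shotimes$ on sheaves is the one induced by restricting the presheaf Day convolution along this reflection, so that $\pspcs \shotimes \pspcs = a(i\pspcs \otimes i\pspcs)$. I would then invoke the standard fact that the unit $i\pspcs \otimes i\pspcs \to i\,a(i\pspcs \otimes i\pspcs)$ is an isomorphism exactly when $i\pspcs \otimes i\pspcs$ is already a sheaf. Combining this with the previous paragraph gives $i(\pspcs \shotimes \pspcs) = i\,a(i\pspcs \otimes i\pspcs) \cong i\pspcs \otimes i\pspcs$, which is the claim.

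I expect essentially all the content to live in checking that the hypotheses of \cref{app:lem:tensor-preserves-ussheaves} genuinely apply, and this has already been isolated into \cref{app:lem:stdmble-supports} and \cref{app:def:mble-pspcs}; so the main obstacle is bookkeeping rather than mathematics. The one point warranting care is the compatibility of sheafification with the Day convolution, namely that $\shotimes$ is indeed computed as the sheafification of the presheaf convolution. This rests on the atomic-topology localization being monoidal (the maps inverted by $a$ being stable under $\otimes$), which is what licenses writing $\pspcs \shotimes \pspcs = a(i\pspcs \otimes i\pspcs)$ at all; once that is granted, the conclusion is a one-line consequence of the idempotence of $a$ on objects that are already sheaves.
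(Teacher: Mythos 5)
Your proof is correct and takes essentially the same route as the paper: both reduce the claim to \cref{app:lem:tensor-preserves-ussheaves}, with the hypotheses supplied by \cref{app:lem:stdmble-supports} (that $\StdMble$ is a category of supports) and by the minimal-supports property of $\pspcs$ from \cref{app:def:mble-pspcs}. The only difference is that you make explicit the sheafification bookkeeping --- that $\shotimes$ on sheaves is the reflection $a(i\pspcs\otimes i\pspcs)$ and that the unit of the reflection is an isomorphism on presheaves that are already sheaves --- which the paper's one-line proof leaves implicit.
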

\begin{proof}
  $\StdMble$ is a category of supports (\cref{app:lem:stdmble-supports})
  and $\pspcs$ has minimal supports,
  so \cref{app:lem:tensor-preserves-ussheaves} applies.
\end{proof}

\begin{lemma} \label{app:lem:pspcs-day-conv-char}
  The tensor product $\pspcs\nref{app:def:mble-pspcs}\shotimes\pspcs$
  is a sheaf of ``independent probability spaces'', with action on objects
  \[ (\pspcs\shotimes\pspcs)(\Omega)
     = \left(\begin{aligned}
      &\text{pairs $([A,X]:\pspcs\Omega,[B,Y]:\pspcs\Omega)$ with $X:\Omega\to \forgetmu A$ and $Y:\Omega\to \forgetmu B$} \\
      &\text{that factor through a tensor product; i.e.,
        there exist $\Omega_1,\Omega_2$ and $f : \Omega\to\Omega_1\otimes\Omega_2$
        and $X' : \Omega_1\to \forgetmu A$ and $Y':\Omega_2\to \forgetmu B$} \\
      &\text{with $X = X'\,\scmfst \,f$
      and $Y = Y'\,\scmsnd \,f$}
     \end{aligned}\right)\]
  and action on morphisms
  \[ (\pspcs\shotimes\pspcs)(f: \Omega'\to\Omega)[(A,X),(B,Y)]
     = [(A,Xf),(B,Yf)].
  \]
\end{lemma}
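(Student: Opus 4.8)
The plan is to read off the concrete description by specializing the general Day-convolution machinery of \cref{app:sec:sheaf-theory} to $\pspcs$. By \cref{app:lem:stdmble-supports}, $\StdMble$ is a category of supports, and $\pspcs$ is an atomic sheaf with minimal supports (\cref{app:def:mble-pspcs}); so \cref{app:lem:mble-pspcs-tensor-pspcs} lets me compute $\pspcs\shotimes\pspcs$ as the presheaf Day convolution $i\pspcs\otimes i\pspcs$, for which \cref{app:lem:ussheaf-conv} gives an explicit representation and \cref{app:lem:tensor-sub-product} supplies a monic natural transformation $i:\pspcs\shotimes\pspcs\hookrightarrow\pspcs\times\pspcs$. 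The whole argument then reduces to identifying the image of $i$ with the displayed sheaf of factoring pairs: since $i$ is monic, $\pspcs\shotimes\pspcs$ is isomorphic to its image.

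First I would compute $i$ on $\pspcs$. Representing an element of the convolution as a tuple $[f:\Omega\to\Omega_1\otimes\Omega_2,\ [A,X']:\pspcs\Omega_1,\ [B,Y']:\pspcs\Omega_2]$ following \cref{app:lem:ussheaf-conv}, the formula of \cref{app:lem:tensor-sub-product} gives $i_\Omega[f,[A,X'],[B,Y']] = ([A,X']\cdot\scmfst f,\ [B,Y']\cdot\scmsnd f)$. Because $\pspcs$ acts on morphisms by precomposition (\cref{app:def:mble-pspcs}), the right-hand side equals $([A,X'\,\scmfst f],\ [B,Y'\,\scmsnd f])$, which is manifestly a pair factoring through the tensor $\Omega_1\otimes\Omega_2$ along the projections.

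Next I would verify that the image of $i$ is exactly the set of factoring pairs. One inclusion is the computation just given, with $\Omega_1$ and $\Omega_2$ as the factoring witnesses. For the converse, a factoring witness $(\Omega_1,\Omega_2,f,X',Y')$ for a pair $([A,X],[B,Y])$ yields the convolution element $[f,[A,X'],[B,Y']]$, whose image under $i$ is $([A,X'\,\scmfst f],[B,Y'\,\scmsnd f]) = ([A,X],[B,Y])$; that the chosen representatives $[A,X']$ and $[B,Y']$ need not be trivially supported is harmless, since \cref{app:lem:ussheaf-conv} already identifies each presheaf-coend representative with a trivially supported one. Monicity of $i$ (\cref{app:lem:tensor-sub-product}) then makes $\pspcs\shotimes\pspcs$ isomorphic to this image subsheaf, and the $\StdMble$-action it inherits from $\pspcs\times\pspcs$ is precomposition, giving $([A,X],[B,Y])\cdot f = ([A,Xf],[B,Yf])$ exactly as stated.

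The main obstacle is the interaction of the two quotients in play: the coend equivalence on convolution tuples and the $\StdProb$-isomorphism quotient built into $\pspcs$. The delicate point occurs in the converse inclusion, where the equality $[A,X'\,\scmfst f] = [A,X]$ holds only up to a witnessing $\StdProb$-isomorphism; I would absorb that isomorphism into $X'$ (and symmetrically into $Y'$) so that it becomes honest factoring data with codomain $\forgetmu A$. Everything else is routine unwinding, as all of the sheaf-theoretic content has already been isolated in \cref{app:lem:tensor-preserves-ussheaves,app:lem:ussheaf-conv,app:lem:tensor-sub-product}.
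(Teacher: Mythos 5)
Your proof is correct and follows the same route as the paper, whose entire proof is the single sentence that $\pspcs\shotimes\pspcs$ is the image of the monic inclusion $i:\pspcs\shotimes\pspcs\hookrightarrow\pspcs\times\pspcs$ from \cref{app:lem:tensor-sub-product}. Your unwinding of that image via \cref{app:lem:ussheaf-conv} and \cref{app:lem:mble-pspcs-tensor-pspcs}, including the care taken with the coend quotient and the $\StdProb$-isomorphism quotient, is exactly the detail the paper leaves implicit.
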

\begin{proof}
  The tensor product $\pspcs\shotimes\pspcs$
  is the image of the inclusion $\pspcs\shotimes\pspcs\hookrightarrow\pspcs\times\pspcs$
  defined by \cref{app:lem:tensor-sub-product}.
\end{proof}

\begin{definition} \label{app:def:pspcs-tensor-join}
  There is a map $\pcmjoin : \pspcs\shotimes\pspcs\to\pspcs$
  defined by
  \begin{align*}
    \pcmjoin_\Omega = \left(\begin{aligned}
      \pspcs\Omega \shotimes \pspcs\Omega &\to \pspcs\Omega \\
      ([A,X], [B,Y]) &\mapsto [A\otimes B, (X,Y)]\text{ where $(X,Y)(\omega) = (X\omega,Y\omega)$}
    \end{aligned}\right)
  \end{align*}
  where we have used the representation calculated
  in \cref{app:lem:pspcs-day-conv-char} to define $\pcmjoin$
  on pairs of independent probability spaces.
  The assumption that $[A,X]$ and $[B,Y]$ 
  factor through some tensor product is needed in order for the map
  $(X,Y)$
  to be well-formed: otherwise it may fail to be negligible-reflecting
  as a map of enhanced measurable spaces $\Omega\to \forgetmu(A\otimes B)$.
  (For example, if $\Omega = [0,1]$ and $X = Y = 1_{[0,1]}$,
  the image of the map $(X,Y)$ is the diagonal in the unit square,
  a negligible set with nonnegligible preimage.)
\end{definition}

\begin{note} \label{app:def:pspcs-tensor-join-note}
  Another way to arrive at the map $(X,Y)$
  in \cref{app:def:pspcs-tensor-join}
  is by unforgetting
  the tensor product that $[A,X]$ and $[B,Y]$
  factor through, giving a commutative diagram
  \[\begin{tikzcd}
                            & \Omega \arrow[d, "f"] \arrow[lddd, "X"', bend right=71] \arrow[rddd, "Y", bend left=71] &                          \\
                            & \Omega_A\otimes\Omega_B \arrow[ld, "\scmfst "'] \arrow[rd, "\scmsnd "]          &                          \\
  \Omega_A \arrow[d, "X'"'] &                                                                                         & \Omega_B \arrow[d, "Y'"] \\
  \forgetmu A                         &                                                                                         & \forgetmu B                       
  \end{tikzcd}\]
  and then setting $(X,Y)$
  to be the composite $(X'\otimes Y') f$.
  (Note $\forgetmu(A\otimes B) = \forgetmu A\otimes \forgetmu B$ by definition,
  so this typechecks.)
  \cref{app:lem:tensor-sub-product}, used to prove
  \cref{app:lem:pspcs-day-conv-char}, shows
  this construction does not depend on the choice of $f,\Omega_A,\Omega_B,X',Y'$.
\end{note}

\begin{definition} \label{app:def:pspcs-join-unit}
  There is a map $\pcmunit : 1\to\pspcs$
  defined by
  \begin{align*}
    \pcmunit_\Omega = \left(\begin{aligned}
      1 &\to \pspcs\Omega \\
      \_ &\mapsto [1, ! : \Omega\to 1]
    \end{aligned}\right).
  \end{align*}
\end{definition}

\begin{lemma} \label{app:def:pspcs-join-unit-cm}
  The tuple $(\pspcs, \pcmjoin,\pcmunit)$ is
  a commutative monoid internal to the symmetric monoidal
  category $(\Shjat(\StdMble), \shotimes, 1)$,
  which is to say that the following equations hold in the
  internal linearly-typed language of this category:
  \begin{itemize}
    \item (Unit) $p \ofty \pspcs \vdash \pcmjoin(\pcmunit, p) = p : \pspcs$
    \item (Commutativity) $p\ofty\pspcs,q\ofty\pspcs\vdash\pcmjoin(p, q) = \pcmjoin(q, p) : \pspcs$
    \item (Associativity) $p\ofty\pspcs,q\ofty\pspcs,r\ofty\pspcs\vdash\pcmjoin(p, \pcmjoin(q, r))
      = \pcmjoin(\pcmjoin(p, q), r) : \pspcs$
  \end{itemize}
\end{lemma}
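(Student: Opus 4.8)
The plan is to verify the three internal equations \emph{externally and componentwise}. Each judgment asserts the equality of two parallel maps of sheaves (built by composing $\pcmjoin$, $\pcmunit$, and the coherence maps of $(\Shjat(\StdMble),\shotimes,1)$), and a map of sheaves is a natural transformation, so it suffices to check agreement at every object $\Omega$ of $\StdMble$. At each such $\Omega$ I would unfold the concrete description of $\pspcs\shotimes\pspcs$ from \cref{app:lem:pspcs-day-conv-char} together with the formulas for $\pcmjoin$ and $\pcmunit$ from \cref{app:def:pspcs-tensor-join,app:def:pspcs-join-unit}. The crucial point is that every element of $\pspcs\Omega$ is an equivalence class $[A,X]$ taken modulo $\StdProb$-isomorphism, so each equation need only hold \emph{up to} such an isomorphism, and the required isomorphisms will be exactly the unitor, symmetry, and associator of the symmetric monoidal structure on $\StdProb$ (equivalently $\StdMble$, \cref{app:lem:stdmble-semicartesian}).

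Concretely, for the unit law the left-hand side evaluates to $[1\otimes A,(!,X)]$ and the right-hand side to $[A,X]$; I would exhibit the $\StdProb$-iso given by the left unitor $1\otimes A\cong A$ and check that postcomposing $(!,X)$ with it returns $X$. For commutativity the two sides are $[A\otimes B,(X,Y)]$ and $[B\otimes A,(Y,X)]$, related by the symmetry $A\otimes B\cong B\otimes A$; for associativity they are $[A\otimes(B\otimes C),(X,(Y,Z))]$ and $[(A\otimes B)\otimes C,((X,Y),Z)]$, related by the associator. In each case the nontrivial content is verifying that the coherence isomorphism carries one representative of the joint map onto the other.

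To discharge these verifications I would lean on the semicartesian structure of $\StdMble$: by \cref{app:lem:stdmble-supports} it is a \nameref{app:def:category-of-supports}, so the projections $\scmfst$ and $\scmsnd$ are jointly monic. Hence to prove that two maps into a tensor product agree it is enough to show their composites with the two projections agree, and since the unitor, symmetry, and associator are assembled from these projections (together with the factorizations $X=X'\,\scmfst\,f$ and $Y=Y'\,\scmsnd\,f$ witnessing that the pairs lie in the domain of $\pcmjoin$, as in \cref{app:lem:pspcs-day-conv-char}), each equation collapses to a routine identity between projection composites. The same joint-monicity argument also shows that the combined maps $(!,X)$, $(Y,X)$, and $((X,Y),Z)$ are well-formed $\StdMble$-morphisms into the respective tensor products, so that the equivalence classes in question are genuinely defined.

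The main obstacle I anticipate is not any single equation but the bookkeeping connecting the \emph{internal} linearly-typed judgments to these \emph{external} pointwise checks: one must ensure the contexts $p\ofty\pspcs$, $q\ofty\pspcs$, $r\ofty\pspcs$ unwind correctly through the Day convolution — in particular that $\pcmjoin\circ(\pcmunit\shotimes 1)$, $\pcmjoin\circ(1\shotimes\pcmjoin)$, and their counterparts land in the domain of $\pcmjoin$ and restrict to the claimed formulas on representatives — and that the chosen coherence isomorphisms are independent of the factorization witnesses, which is guaranteed by \cref{app:lem:tensor-sub-product}. Once this dictionary is in place, the three monoid laws follow directly from the coherence of the symmetric monoidal structure on $\StdMble$.
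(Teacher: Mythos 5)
Your proposal is correct and follows essentially the same route as the paper's proof: both verify each law stagewise at an arbitrary $\Omega$ using the concrete representation of $\pspcs\shotimes\pspcs$ from \cref{app:lem:pspcs-day-conv-char}, compute the two sides as $[1\otimes A,(!,X)]$ vs.\ $[A,X]$, $[A\otimes B,(X,Y)]$ vs.\ $[B\otimes A,(Y,X)]$, and $[A\otimes(B\otimes C),(X,(Y,Z))]$ vs.\ $[(A\otimes B)\otimes C,((X,Y),Z)]$, and identify them via the unitor, symmetry, and associator of $\StdProb$. Your additional use of joint monicity of $\scmfst,\scmsnd$ to check that the coherence isomorphisms carry one representative to the other is a sound elaboration of a step the paper leaves implicit (though note that well-formedness of the paired maps comes from the factorization through a tensor product, as in \cref{app:def:pspcs-tensor-join}, not from joint monicity itself).
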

\begin{proof}
  Each of the equations holds because
  the corresponding property holds of $\shotimes$:
  \begin{itemize}
    \item (Unit)
      The goal is to show
      $\left(\pspcs \xrightarrow{\sim} 1\shotimes\pspcs \xrightarrow{\pcmunit\otimes 1} \pspcs\shotimes\pspcs
      \xrightarrow{\pcmjoin} \pspcs\right)
      = 1_\pspcs$.
      At stage $\Omega$ and given an element $[A,X]:\pspcs\Omega$, 
      the left side is
      $\pcmjoin_\Omega([1,!_\Omega], [A,X]) = [1\otimes A, (!_\Omega,X)]$
      and the right side is $[A, X]$.
      These are the same equivalence classes via the isomorphism $1\otimes A\cong A$.
    \item (Commutativity)
      The goal is to show
      $\left(\pspcs\shotimes\pspcs \xrightarrow{\pcmjoin} \pspcs \right)
      =\left(\pspcs\shotimes\pspcs \xrightarrow{s} \pspcs\shotimes\pspcs \xrightarrow{\pcmjoin} \pspcs \right)$
      where $s$ witnesses symmetry of the monoidal product $\shotimes$.
      At stage $\Omega$ and given an element of $(\pspcs\shotimes\pspcs)(\Omega)$,
      which by  \cref{app:lem:pspcs-day-conv-char} amounts to 
      a pair of elements $([A,X],[B,Y]):\pspcs\Omega\times\pspcs\Omega$
      that factor through a tensor product,
      the left side is
      $\pcmjoin_\Omega([A,X], [B,Y]) = [A\otimes B, (X,Y)]$
      and the right side is
      $\pcmjoin_\Omega([B,Y], [A,X]) = [B\otimes A, (Y,X)]$.
      These two equivalence classes are equal
      via the isomorphism $A\otimes B\cong B\otimes A$.
    \item (Associativity)
      The goal is to show
      \[\left(\pspcs\shotimes(\pspcs\shotimes\pspcs) \xrightarrow{1\otimes \pcmjoin}
        \pspcs\shotimes\pspcs \xrightarrow\pcmjoin \pspcs \right)
      =\left(\pspcs\shotimes(\pspcs\shotimes\pspcs) \xrightarrow{\sim}
      (\pspcs\shotimes\pspcs)\shotimes\pspcs\xrightarrow{\pcmjoin\otimes 1}
      \pspcs\shotimes\pspcs \xrightarrow{\pcmjoin} \pspcs \right).\]
      At stage $\Omega$ and given an element of $(\pspcs\shotimes(\pspcs\shotimes\pspcs))(\Omega)$,
      which by two applications of \cref{app:lem:tensor-sub-product} amounts to 
      a tuple of elements $([A,X],([B,Y],[C,Z])):\pspcs\Omega\times(\pspcs\Omega\times\pspcs\Omega)$
      that factor through tensor products in the proper way,
      the left side is
      \[\pcmjoin_\Omega([A,X], \pcmjoin_\Omega([B,Y], [C,Z]))=[A\otimes (B\otimes C), (X,(Y,Z))],\]
      and the right side is
      \[\pcmjoin_\Omega(\pcmjoin_\Omega([A,X],[B,Y]),[C,Z])=[(A\otimes B)\otimes C, ((X,Y),Z)].\]
      These two equivalence classes are equal
      via the isomorphism $A\otimes (B\otimes C)\cong (A\otimes B)\otimes C$.
  \end{itemize}
\end{proof}

\begin{lemma} \label{app:def:pspcs-join-unit-pcm}
  The tuple $(\pspcs, \pcmjoin,\pcmunit)$ is
  a partial commutative monoid internal
  to the symmetric monoidal category $(\Shjat(\StdMble),\times,1)$,
  in the sense that
  \begin{itemize}
  \item Unit: for all $p : F\to \pspcs$
    the map $(\pcmunit!,p)$ factors through $i$
    and $\pcmjoin(\pcmunit!,p) = p$.
  \item Commutativity: If a map $(p,q) : F\to \pspcs\times\pspcs$
    factors through the inclusion $i:\pspcs\shotimes\pspcs\hookrightarrow\pspcs\times\pspcs$,
    there exists a unique map $f: F\to \pspcs\shotimes\pspcs$
    with $if = (p,q)$.
    Abusing notation and writing $f$ as $(p,q)$, 
    relying on types to disambiguate,
    commutativity holds if
    whenever $(p,q)$ factors through $i$
    then so does $(q,p)$,
    and
     $\pcmjoin(p,q) = \pcmjoin(q,p)$.
  \item Associativity:
    for all $p,q,r : F\to \pspcs$
    such that $(p,q)$ factors through $i$
    and $(\pcmjoin(p,q),r)$ factors through $i$,
    it holds that $(q,r)$ factors through $i$
    and $(p,\pcmjoin(q,r))$ factors through $i$
    and $\pcmjoin(\pcmjoin(p,q),r)=\pcmjoin(p,\pcmjoin(q,r))$.
  \end{itemize}
\end{lemma}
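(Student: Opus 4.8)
The plan is to transfer the three partial-commutative-monoid laws from the total commutative-monoid structure already established in \cref{app:def:pspcs-join-unit-cm}, exploiting that the inclusion $i : \pspcs\shotimes\pspcs\hookrightarrow\pspcs\times\pspcs$ of \cref{app:lem:tensor-sub-product} is monic and compatible with the coherence data (unitor, symmetry, associator) of the Day convolution. Because $i$ is monic, a pair factors through it in at most one way, so ``the partial operation is defined'' is exactly ``the relevant pair factors through $i$'', and the value $\pcmjoin(p,q)$ is unambiguous. All equalities and factorizations are of maps of sheaves and may be checked stagewise, so throughout I would work with the concrete descriptions of $\pspcs\shotimes\pspcs$ as independent probability spaces (\cref{app:lem:pspcs-day-conv-char}) and of $\pcmjoin,\pcmunit$ (\cref{app:def:pspcs-tensor-join,app:def:pspcs-join-unit}).

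For the unit law, given $p : F\to\pspcs$ I would exhibit the factorization of $(\pcmunit\,!,p)$ through $i$ as the composite $F\xrightarrow{p}\pspcs\xrightarrow{\sim}1\shotimes\pspcs\xrightarrow{\pcmunit\otimes 1}\pspcs\shotimes\pspcs$; applying $i$ recovers $(\pcmunit\,!,p)$ because the semicartesian projections $\scmfst,\scmsnd$ of $1\otimes A$ return the unit and $A$ respectively, and then $\pcmjoin$ of the composite equals $p$ by the unit equation of \cref{app:def:pspcs-join-unit-cm}. For commutativity, if $f : F\to\pspcs\shotimes\pspcs$ is the unique factorization of $(p,q)$, then $s\circ f$ factors $(q,p)$, where $s$ is the symmetry of $\shotimes$: the identity $i\circ s = \mathrm{swap}\circ i$ holds because the swap intertwines $\scmfst$ and $\scmsnd$, which I can read off the concrete form of $i$. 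Commutativity of $\pcmjoin$ then gives $\pcmjoin\circ(s\circ f)=\pcmjoin\circ f$, i.e.\ $\pcmjoin(q,p)=\pcmjoin(p,q)$.

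The hard part will be associativity, whose hypotheses --- $(p,q)$ factors through $i$ and $(\pcmjoin(p,q),r)$ factors through $i$ --- must be assembled into a single factorization of the triple through the iterated convolution $\pspcs\shotimes\pspcs\shotimes\pspcs$. The obstacle is that the two given factorizations need not be a priori compatible: the first factors $(p,q)$ through a tensor $\Omega_1\otimes\Omega_2$ carrying the two spaces separately, whereas the second factors $(\pcmjoin(p,q),r)$ through a tensor $\Xi_1\otimes\Omega_3$ whose first leg $\Xi_1$ carries the combined space $A\otimes B$. The key is that $\pspcs$ has minimal supports (\cref{app:lem:yo-groupoid-supps}): taking $\Omega_1,\Omega_2,\Omega_3$ to be the minimal supports of $p,q,r$, the concrete form of $\pcmjoin$ in \cref{app:def:pspcs-tensor-join} forces $\pcmjoin(p,q)=[A\otimes B,(X,Y)]$ to be supported on $\Omega_1\otimes\Omega_2$, identifying $\Xi_1\cong\Omega_1\otimes\Omega_2$ and yielding a map $h : \Omega\to\Omega_1\otimes\Omega_2\otimes\Omega_3$ factoring all three of $p,q,r$ at once.

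Once this triple factorization is in hand, the remaining conclusions are re-bracketings. Projecting $h$ shows $(q,r)$ factors through $i$ (via $\Omega_2\otimes\Omega_3$) and $(p,\pcmjoin(q,r))$ factors through $i$ (via $\Omega_1\otimes(\Omega_2\otimes\Omega_3)$), and the equation $\pcmjoin(\pcmjoin(p,q),r)=\pcmjoin(p,\pcmjoin(q,r))$ follows from the associativity equation of \cref{app:def:pspcs-join-unit-cm}, with the monoidal associator $A\otimes(B\otimes C)\cong(A\otimes B)\otimes C$ on $\StdProb$ matching the associator of $\shotimes$ used there. I expect the support-reconciliation step of the previous paragraph to be the only place requiring genuine work; the rest is bookkeeping against the concrete descriptions.
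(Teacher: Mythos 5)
Your overall skeleton is the paper's: all three laws are reduced to the total commutative-monoid equations of \cref{app:def:pspcs-join-unit-cm}, unit and commutativity are handled by re-indexing the tensor factorization (your unitor/symmetry phrasing is the paper's stagewise argument stated globally, and both are fine), and associativity is handled by assembling a factorization of the triple and re-bracketing along the associator.

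The problem is the step you single out as the crux of associativity. You claim minimal supports let you ``identify $\Xi_1\cong\Omega_1\otimes\Omega_2$''; that identification is false in general, and no such reconciliation is needed. Stagewise, the hypothesis that $(\pcmjoin(p,q),r)$ factors through $i$ gives, via \cref{app:lem:pspcs-day-conv-char}, spaces $\Xi_1,\Omega_3$, a map $g:\Omega\to\Xi_1\otimes\Omega_3$, and maps $W:\Xi_1\to\forgetmu(A\otimes B)$, $Z':\Omega_3\to\forgetmu C$ with $(X,Y)=W\,\scmfst\,g$ and $Z=Z'\,\scmsnd\,g$. Nothing forces $W$ to be invertible: $\Xi_1$ can be strictly larger than the support (e.g.\ $W$ a projection off a bigger product --- note that in $\StdMble$ epi does not imply iso), and minimal supports only produce a map from $\Xi_1$ \emph{to} the support $\forgetmu(A\otimes B)$, never an isomorphism. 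The repair is that the map $W$ is all you need: set $h:=(W\otimes Z')\,g:\Omega\to(\forgetmu A\otimes\forgetmu B)\otimes\forgetmu C$, so that $\scmfst\,h=(X,Y)$ and $\scmsnd\,h=Z$. Then $(q,r)$ factors through $i$ directly via $g$ (since $Y=\scmsnd\,W\,\scmfst\,g$ and $Z=Z'\,\scmsnd\,g$), post-composing $h$ with the associator exhibits the factorization of $(p,\pcmjoin(q,r))$ through $\forgetmu A\otimes(\forgetmu B\otimes\forgetmu C)$, and the equation follows from \cref{app:def:pspcs-join-unit-cm} as you say. This composite $h$ is exactly the map $((X,Y)'\otimes Z')f_{pq,r}$ in the paper's diagram chase. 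Note also that no fresh appeal to minimal supports is required in this step: that property is already baked into the concrete description of $\pspcs\shotimes\pspcs$ (\cref{app:lem:pspcs-day-conv-char}, via \cref{app:lem:tensor-sub-product}), which both you and the paper invoke.
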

\begin{proof}
  A map $(p,q) : F\to\pspcs\times\pspcs$
  factors through $i$ iff
  for any $x : F\Omega$
  the pair
  $(p(x), q(x)) : \pspcs\Omega\times\pspcs\Omega$
  factors through a tensor product.
  \begin{itemize}
    \item (Unit) If $p(x) = [A,X]$,
      the pair $([1,!], [A,X])$
      always factors through the tensor product $1\otimes \forgetmu A$,
      so $(\pcmunit!,p)$ always factors through $i$.
      That $\pcmunit$ is a unit for $\pcmjoin$ follows from
       \cref{app:def:pspcs-join-unit-cm}.
    \item (Commutativity)
      If $(p(x),q(x))$ factors through a
      tensor product $\Omega_p\otimes\Omega_q$, then $(q(x),p(x))$
      factors through the tensor product $\Omega_q\otimes\Omega_p$.
      Thus if $(p,q)$ factors through $i$ then  so does $(q,p)$.
      Commutativity of $\pcmjoin$ follows from
       \cref{app:def:pspcs-join-unit-cm}.
    \item (Associativity)
      Fix arbitrary $x$ and write $p(x) = [A,X]$
      and $q(x) = [B,Y]$ and $r(x) = [C,Z]$.
      Suppose $([A,X],[B,Y])$
      factors through
      a tensor product $\Omega_p\otimes\Omega_q$
      and $(\pcmjoin([A,X],[B,Y]), [C,Z]) = ([A\otimes B, (X,Y)], [C,Z])$
      factors through a tensor product $\Omega_{pq}\otimes\Omega_r$.
      The situation is illustrated by the following commutative diagram:
      \[
\begin{tikzcd}
  &                                                                                                           &                           & \Omega \arrow[lld, "{f_{p,q}}"] \arrow[rrd, "{f_{pq,r}}"'] &                                      &                                                                                   &                          \\
  & \Omega_p\otimes\Omega_q \arrow[ld, "\scmfst "'] \arrow[rd, "\scmsnd "] \arrow[dd, "X'\otimes Y'"] &                           &                                                            &                                      & \Omega_{pq}\otimes\Omega_r \arrow[ld, "\scmfst "'] \arrow[rd, "\scmsnd "] &                          \\
\Omega_p \arrow[dd, "X'"'] &                                                                                                           & \Omega_q \arrow[dd, "Y'"] &                                                            & \Omega_{pq} \arrow[llld, "{(X,Y)'}"] &                                                                                   & \Omega_r \arrow[d, "Z'"] \\
  & \forgetmu A\otimes \forgetmu B \arrow[ld, "\scmfst "] \arrow[rd, "\scmsnd "']                                         &                           &                                                            &                                      &                                                                                   & \forgetmu C                        \\
\forgetmu A                          &                                                                                                           & \forgetmu B                         &                                                            &                                      &                                                                                   &                         
\end{tikzcd}\]
      The root-to-leaf paths from $\Omega$ to $\forgetmu A,\forgetmu B,\forgetmu C$
      are the random variables $X,Y,Z$ respectively.
      The two paths $\Omega\to \forgetmu A\otimes \forgetmu B$ represent the random variable $(X,Y)$:
      the path through $\Omega_p\otimes\Omega_q$
      is the one constructed by $\pcmjoin([A,X],[B,Y])$, as described in 
      \cref{app:def:pspcs-tensor-join-note},
      and the path through $\Omega_{pq}$
      witnesses the fact that $([A\otimes B,(X,Y)],[C,Z])$
      factors through the tensor product $\Omega_{pq}\otimes \Omega_r$.

      With this diagram, $([B,Y],[C,Z])$
      visibly factors through $\Omega_{pq}\otimes \Omega_r$.
      The trickier case is to show
      $([A,X],\pcmjoin([B,Y],[C,Z]))$ factors through a tensor product.
      Contract the pentagon in the middle of the above diagram and
      the root-to-leaf path from $\Omega$ to $\forgetmu C$ to get
      \[
\begin{tikzcd}
  & \Omega \arrow[d, "{((X,Y)'\otimes Z')f_{pq,r}}"]                                                                                         &   \\
  & (\forgetmu A\otimes \forgetmu B)\otimes \forgetmu C \arrow[ld, "{\scmfst \,\scmfst }"'] \arrow[d, "{\scmsnd \,\scmfst }"'] \arrow[rd, "\scmsnd "'] &   \\
\forgetmu A & \forgetmu B                                                                                                                                        & \forgetmu C
\end{tikzcd}
      \]
      Since all we have done is to contract and delete nodes in the previous diagram,
      the root-to-leaf paths still denote the random variables $X,Y,Z$.
      The expression $\pcmjoin([B,Y],[C,Z])=[B\otimes C,(Y,Z)]$ can be constructed following \cref{app:def:pspcs-tensor-join-note},
      giving
      \[
\begin{tikzcd}
  &  & \Omega \arrow[d, "{((X,Y)'\otimes Z')f_{pq,r}}"]                                                                                                                                                                             &                                  &  &  &   \\
  &  & (\forgetmu A\otimes \forgetmu B)\otimes \forgetmu C \arrow[llddd, "{\scmfst \,\scmfst }"'] \arrow[ddd, "{\scmsnd \,\scmfst }"'] \arrow[rrrrddd, "\scmsnd ", bend left] \arrow[rddd, "{\scmsnd \,\scmfst \otimes\scmsnd }"] &                                  &  &  &   \\
  &  &                                                                                                                                                                                                                              &                                  &  &  &   \\
  &  &                                                                                                                                                                                                                              &                                  &  &  &   \\
\forgetmu A &  & \forgetmu B                                                                                                                                                                                                                            & \forgetmu B\otimes \forgetmu C \arrow[l] \arrow[rrr] &  &  & \forgetmu C
\end{tikzcd}
      \]
      where the path $\Omega\to \forgetmu B\otimes \forgetmu C$
      is $(Y,Z)$. Tidying and applying the isomorphism
      $(\forgetmu A\otimes \forgetmu B)\otimes \forgetmu C\cong \forgetmu A\otimes (\forgetmu B\otimes \forgetmu C)$
      gives
      \[
\begin{tikzcd}
  & \Omega \arrow[d, "{a((X,Y)'\otimes Z')f_{pq,r}}"]                            &            \\
  & \forgetmu A\otimes (\forgetmu B\otimes \forgetmu C) \arrow[ld, "\scmfst "'] \arrow[rd, "\scmsnd "] &            \\
\forgetmu A &                                                                              & \forgetmu B\otimes \forgetmu C
\end{tikzcd}
      \]
      Since the path $\Omega\to \forgetmu A$ is $X$
      and the path $\Omega\to \forgetmu B\otimes \forgetmu C$ is $(Y,Z)$,
      this diagram shows $(X,(Y,Z))$
      factors through a tensor product as needed.
      Finally, the associativity equation follows from
       \cref{app:def:pspcs-join-unit-cm}.
  \end{itemize}
\end{proof}

\begin{fact}
  Let $\Prop$ denote the subobject classifier in $\Shjat(\StdMble)$,
  which is the constant sheaf $\Prop(\Omega) = \{\vtrue,\vfalse\}$.
\end{fact}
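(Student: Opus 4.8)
The plan is to invoke the standard description of the subobject classifier in a Grothendieck topos and specialise it to the atomic topology on $\StdMble$. Recall that for a site $(C,J)$ the subobject classifier of $\Sh(C,J)$ is the sheaf $\Prop$ whose value $\Prop(c)$ is the set of $J$-closed sieves on $c$, with restriction along $f : d\to c$ sending a closed sieve $S$ to the $J$-closure of its pullback $f^{*}(S)$~\cite{maclane2012sheaves}. Since $\sheafcat = \Shjat(\StdMble)$ is exactly the category of sheaves for the atomic topology $J$ on $\StdMble$, which exists by \cref{app:fact:right-ore-atomic-topology} because $\StdMble$ has the right Ore property (\cref{app:lem:mble-right-ore}), it suffices to enumerate the $J$-closed sieves. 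For the atomic topology a sieve is $J$-covering precisely when it is nonempty, so a sieve $S$ on $c$ is $J$-closed exactly when, for every $f : d\to c$, the existence of some $g$ with $fg\in S$ already forces $f\in S$.

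The crux is to show that the only $J$-closed sieves on any object $c$ are the empty sieve $\emptyset_c$ and the maximal sieve $\mathrm{t}_c$; this is the step I expect to require the most care, and it is the only place the structure of $\StdMble$ enters. Both $\emptyset_c$ and $\mathrm{t}_c$ are manifestly closed, so I would rule out any nonempty proper closed sieve. Suppose $S$ is a nonempty closed sieve and fix $f : d\to c$ in $S$. For an arbitrary $h : e\to c$, I would apply the right Ore property to the cospan $d\xrightarrow{f}c\xleftarrow{h}e$, obtaining an object $W$ and maps $k : W\to d$ and $l : W\to e$ with $fk = hl$. Because $S$ is a sieve containing $f$, we have $hl = fk\in S$, so $l$ lies in the pullback $h^{*}(S)$, making it nonempty and hence $J$-covering; closedness of $S$ then yields $h\in S$. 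As $h$ was arbitrary, $S = \mathrm{t}_c$, so indeed $\Prop(c) = \{\emptyset_c,\mathrm{t}_c\}$ has exactly two elements.

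Finally I would check that the transition maps are trivial: for $f : d\to c$ the pullback of the empty sieve is empty and the pullback of the maximal sieve is maximal, and both are already closed, so $\Prop(f)$ sends $\emptyset_c\mapsto\emptyset_d$ and $\mathrm{t}_c\mapsto\mathrm{t}_d$. Identifying $\emptyset_{\bullet}$ with $\vfalse$ and $\mathrm{t}_{\bullet}$ with $\vtrue$, every $\Prop(f)$ becomes the identity on $\{\vtrue,\vfalse\}$, exhibiting $\Prop$ as the constant presheaf at $\{\vtrue,\vfalse\}$. This presheaf is automatically an atomic sheaf, being the subobject classifier; alternatively one checks directly that any constant presheaf satisfies the restriction-operation characterisation of \cref{app:def:atomic-sheaf-condition}, since its action on every morphism is the identity and every element is trivially invariant. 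This yields the claimed description of $\Prop$.
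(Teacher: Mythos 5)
Your proof is correct. The paper itself offers no argument here---it records the statement as a \emph{Fact}, implicitly deferring to the standard literature on atomic sites---so there is nothing to compare against beyond noting that your computation is precisely the standard one: identify the subobject classifier of $\Shjat(\StdMble)$ with the sheaf of closed sieves, use the right Ore property (\cref{app:lem:mble-right-ore}) to show every nonempty closed sieve is maximal, observe the restrictions then become identities, and check that the resulting two-point constant presheaf satisfies the atomic-sheaf condition of \cref{app:def:atomic-sheaf-condition} (which it does trivially, since every element of a constant presheaf is $f$-invariant). All of these steps are sound, including the one you flag as delicate: the Ore completion of the cospan $d\xrightarrow{f}c\xleftarrow{h}e$ is exactly what makes $h^{*}(S)$ nonempty, and nonemptiness is covering for the atomic topology, so closedness forces $h\in S$. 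Your proof thus fills in the details the paper leaves to the reader, in the way the paper's citation of \citet{maclane2012sheaves} intends.
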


\begin{definition}[ordering on probability spaces] \label{app:def:mble-pspcs-ordering}
  Let $(\sqsubseteq) : \pspcs\times\pspcs\to \Prop$
  be the map defined by
  \begin{align*}
    [A,X] \sqsubseteq_\Omega [B,Y]
    \iff
    \text{there exists a morphism $f : B\to A$ such that $fY = X$.}
  \end{align*}
  This respects the equivalence classes
  $[A,X]$ and $[B,Y]$ because if $fY = X$
  and $[A,X] = [\overline A, \overline X]$
  and $[B,Y] = [\overline B,\overline Y]$,
  so $\overline X = iX$ and $\overline Y = jY$
  for isos $i:A\to\overline A$ and $j:B\to\overline B$,
  then setting $\overline f$ to $ifj^{-1}$
  gives $ifj^{-1}\overline Y = ifj^{-1}jY = ifY = iX = \overline X$.
  And it is natural in $\Omega$ because if $p : \Omega'\to\Omega$
  and $fY = X$ witnesses $[A,X]\sqsubseteq_\Omega[B,Y]$
  then $fYp = Xp$ witnesses $[A,Xp]\sqsubseteq_{\Omega'}[B,Yp]$
  (in words, the ordering relation is invariant under extensions
  of the sample space).
\end{definition}

\begin{theorem} \label{app:thm:mble-pspcs-krm}
  The tuple $(\pspcs, \pcmjoin,\pcmunit,\krmorder)$ is
  a partially defined monoid~\cite[\S 5.3]{galmiche2005semantics}
  internal to $\Shjat(\StdMble)$; in other words,
  $(\pspcs,\pcmjoin,\pcmunit)$
  forms an internal PCM and the following monotonicity
  condition holds:
  if $p \krmorder_\Omega p'$
  and $q\krmorder_\Omega q'$
  and $(p',q')$ factors through a tensor product,
  then $(p,q)$ does too
  and $\pcmjoin_\Omega(p,q)\krmorder_\Omega\pcmjoin_\Omega(p',q')$.
\end{theorem}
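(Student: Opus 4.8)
Since $(\pspcs,\pcmjoin,\pcmunit)$ is already an internal partial commutative monoid by \cref{app:def:pspcs-join-unit-pcm}, the plan is to verify only the monotonicity clause. I would argue pointwise at each stage $\Omega$: both $\krmorder$ and $\pcmjoin$ are defined componentwise on representatives and are already known to respect the equivalence classes of $\pspcs\Omega$ (\cref{app:def:mble-pspcs-ordering,app:def:pspcs-tensor-join}), so it suffices to work with chosen representatives $p=[A,X]$, $p'=[A',X']$, $q=[B,Y]$, $q'=[B',Y']$. By \cref{app:def:mble-pspcs-ordering}, the hypothesis $p\krmorder_\Omega p'$ supplies a $\StdProb$-map $f:A'\to A$ with $\forgetmu(f)\circ X'=X$, and $q\krmorder_\Omega q'$ a map $g:B'\to B$ with $\forgetmu(g)\circ Y'=Y$. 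By \cref{app:lem:pspcs-day-conv-char}, the hypothesis that $(p',q')$ factors through a tensor product yields $\Omega_1,\Omega_2$, a $\StdMble$-map $h:\Omega\to\Omega_1\otimes\Omega_2$, and maps $X'':\Omega_1\to\forgetmu A'$ and $Y'':\Omega_2\to\forgetmu B'$ with $X'=X''\,\scmfst\,h$ and $Y'=Y''\,\scmsnd\,h$.

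First I would show that $(p,q)$ itself factors through a tensor product. Precomposing the two factorizations with $f$ and $g$ gives $X=\forgetmu(f)\circ X''\,\scmfst\,h$ and $Y=\forgetmu(g)\circ Y''\,\scmsnd\,h$, so $(p,q)$ factors through the very same $\Omega_1\otimes\Omega_2$ along $h$, now with witnesses $\forgetmu(f)\circ X'':\Omega_1\to\forgetmu A$ and $\forgetmu(g)\circ Y'':\Omega_2\to\forgetmu B$. By \cref{app:lem:pspcs-day-conv-char} this is exactly the assertion that $(p,q)$ lies in the image of the inclusion $\pspcs\shotimes\pspcs\hookrightarrow\pspcs\times\pspcs$, so the join $\pcmjoin_\Omega(p,q)$ is defined.

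Next I would establish the inequality $\pcmjoin_\Omega(p,q)\krmorder_\Omega\pcmjoin_\Omega(p',q')$. Unwinding \cref{app:def:pspcs-tensor-join}, these joins are $[A\otimes_{\StdProb} B,(X,Y)]$ and $[A'\otimes_{\StdProb} B',(X',Y')]$, so by \cref{app:def:mble-pspcs-ordering} I must exhibit a $\StdProb$-map $A'\otimes_{\StdProb} B'\to A\otimes_{\StdProb} B$ carrying $(X',Y')$ to $(X,Y)$. The natural candidate is $f\otimes_{\StdProb} g$, a $\StdProb$-morphism because $\otimes_{\StdProb}$ is a bifunctor (\cref{app:lem:stdprob-semicart}). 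By the very definition of the tensor on $\StdMble$ (\cref{app:def:stdmble-tensor}) we have $\forgetmu(f\otimes_{\StdProb} g)=\forgetmu(f)\otimes\forgetmu(g)$, and I would verify $(\forgetmu(f)\otimes\forgetmu(g))\circ(X',Y')=(X,Y)$ by postcomposing with the jointly monic projections and invoking their naturality against tensor maps: $\scmfst\circ(\forgetmu(f)\otimes\forgetmu(g))\circ(X',Y')=\forgetmu(f)\circ\scmfst\circ(X',Y')=\forgetmu(f)\circ X'=X$, and symmetrically $\scmsnd\circ(\forgetmu(f)\otimes\forgetmu(g))\circ(X',Y')=Y$.

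The hard part will not be any deep measure theory but the coherence bookkeeping in this last step: one must confirm that the semicartesian projections of $\StdMble$ are natural with respect to maps of the form $u\otimes v$, and that $\forgetmu$ preserves the monoidal product faithfully enough that the pair $(X,Y)$ is recovered on the nose rather than merely up to associators and unitors. Both reduce to standard facts about projections in a semicartesian monoidal category, available here because $\StdMble$ is semicartesian with jointly monic projections (\cref{app:lem:stdmble-semicartesian,app:lem:stdmble-supports}); well-definedness on equivalence classes is inherited from \cref{app:def:mble-pspcs-ordering,app:def:pspcs-tensor-join}, so no further checks are needed.
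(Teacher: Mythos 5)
Your proposal is correct and takes essentially the same route as the paper's proof: the PCM part is delegated to \cref{app:def:pspcs-join-unit-pcm}, definedness of $(p,q)$ is obtained by composing the tensor factorization of $(p',q')$ with the maps $f,g$ supplied by the ordering (through the very same $\Omega_1\otimes\Omega_2$), and the inequality is witnessed by $f\otimes g$. The only difference is presentational: where the paper reads off $(X,Y)=(f\otimes g)(X',Y')$ from a single commutative diagram, you verify the same identity by postcomposing with the jointly monic projections $\scmfst,\scmsnd$ and using their naturality, which is the same underlying fact.
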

\begin{proof}
  The tuple $(\pspcs,\pcmjoin,\pcmunit)$
  is a PCM by \cref{app:def:pspcs-join-unit-pcm}.
  For monotonicity, suppose
  $[A,X]\krmorder[A',X']$
  and
  $[B,Y]\krmorder[B',Y']$
  and $(X',Y')$ factor through a tensor product.
  Unwinding the definition of $(\krmorder)$,
  there exist $f : A'\to A$ and $g : B'\to B$
  with $X = fX'$ and $Y = gY'$.
  Now consider the following commutative diagram:
  \[
\begin{tikzcd}
  & \Omega \arrow[d, "p"]                                                                                               &                                 \\
  & \Omega_{A'}\otimes\Omega_{B'} \arrow[ld, "\scmfst "'] \arrow[rd, "\scmsnd "] \arrow[dd, "X'_*\otimes Y'_*"] &                                 \\
\Omega_{A'} \arrow[dd, "X'_*"'] &                                                                                                                     & \Omega_{B'} \arrow[dd, "Y'_*"'] \\
  & \forgetmu A'\otimes \forgetmu B' \arrow[ld] \arrow[rd] \arrow[dd, "f\otimes g"]                                                         &                                 \\
\forgetmu{A'} \arrow[dd, "f"]   &                                                                                                                     & \forgetmu{B'} \arrow[dd, "g"']  \\
  & \forgetmu A\otimes\forgetmu B \arrow[ld] \arrow[rd]                                                                 &                                 \\
\forgetmu A                     &                                                                                                                     & \forgetmu B                    
\end{tikzcd}
\]
  Since $(X',Y')$
  factors through a tensor product,
  there exist $\Omega_{A'},\Omega_{B'},p,X'_*,Y'_*$ with
  $X' = X'_*\,\scmfst \,p$ and $Y' = Y'_*\,\scmsnd \,p$ as shown.
  All paths from $\Omega$ to $\forgetmu{A'}$ are equal to $X'$.
  Analogously, all paths from $\Omega$ to $\forgetmu{B'}$ are equal to $Y'$.
  Since $X = fX'$ and $Y = gY'$, the root-to-leaf paths from $\Omega$
  to $\forgetmu A$ and $\Omega$ to $\forgetmu B$
  are $X$ and $Y$ respectively.
  The diagram shows $X$ and $Y$ factor through $\Omega_{A'}\otimes\Omega_{B'}$,
  so $\pcmjoin_\Omega([A,X],[B,Y])$ is defined.
  It only remains to show
  $\pcmjoin_\Omega([A,X],[B,Y])\sqsubseteq_\Omega
  \pcmjoin_\Omega([A',X'],[B',Y'])$.
  Since $\pcmjoin_\Omega([A,X],[B,Y]) = [A\otimes B, (X,Y)]$
  and $\pcmjoin_\Omega([A',X'],[B',Y']) = [A'\otimes B', (X', Y')]$,
  it's enough to show $(X,Y)$ factors through $(X',Y')$.
  This is visible in the diagram:
  the map $(X,Y)$ is the path from $\Omega$
  to $\forgetmu A\otimes\forgetmu B$,
  the map $(X',Y')$ is the path from $\Omega$
  to $\forgetmu{A'}\otimes\forgetmu{B'}$,
  and the diagram shows
  the factorization $(X,Y) = (f\otimes g)(X',Y')$.
\end{proof}

\section{\Agoodnamenolink{s}} \label{app:sec:agoodnames}

\subsection{A nominal situation for standard enhanced measurable spaces}

\newcommand\micat{{\nref{app:def:micat}{\hcube\StdMble}}}
\newcommand\rami{{\nref{app:def:rami}{\hcube\mathbf{EMS}_{\rm std}^{\rm{ff}}}}}

\begin{definition} \label{app:def:hcube}
  The Hilbert cube $\hcube$ is the standard enhanced measurable space $([0,1]^\omega,\calF,\calN)$
  of infinite sequences in the interval $[0,1]$.
  The $\sigma$-algebra $\calF$
  and negligibles $\calN$ are those obtained in constructing the usual Lebesgue measure $\lambda$
  on $\hcube$, given by extending the function
  \[ \lambda([a_1,b_1]\times\dots\times[a_n,b_n] \times [0,1]^\omega) = 
     (b_1 - a_1)\times\dots\times(b_n - a_n),
    \]
  defined on finite-dimensional boxes in $\hcube$,
  to all Borel sets of $\hcube$ and then taking the completion.
  As with the interval $[0,1]$, we will write $\hcube$
  for both the standard enhanced measurable space and its associated
  standard measurable algebra, relying on context to disambiguate.
\end{definition}

Informally speaking, our equivalence result
validates the idea that a Hilbert cube's worth of
randomness is enough so long as one
embeds every measurable space
needed into $\hcube$ in a way that leaves
enough room for new randomness.
To set up this result,
we first introduce some auxiliary categories
to help track the particular way in which
a standard enhanced measurable space
can be embedded into $\hcube$;
these will be helpful later, when
it comes to finding simple descriptions
of the \agoodname{s} corresponding to
the \goodsheaves{} introduced in \cref{app:sec:goodsheaves}.

\renewcommand\hcubeproj[1]{{\nref{app:def:projn}{\mathrm{proj}_{1..#1}}}}
\renewcommand\embedmap[1]{{#1\nref{app:def:embedmap}{\otimes1_\hcube}}}

\begin{definition} \label{app:def:projn}
  For $n\in \N$ let $\hcubeproj n$
  be the projection $\hcube\to[0,1]^n$
  defined by $\hcubeproj n (x_1,\dots,x_n,\dots) = (x_1,\dots,x_n)$.
\end{definition}

\begin{definition} \label{app:def:mblespcramap}
  A $\StdMble$-map 
  $f : \hcube\to Y$
  has \emph{finite footprint}
  if $f$ factors through $\hcubeproj n$ for some $n$.
\end{definition}

\begin{definition} \label{app:def:micat}
  Let $\micat$ be the category 
  whose objects are
  $\StdMble$-maps
  $\hcube \xrightarrow p X$
  for some standard enhanced measurable space $X$,
  and whose morphisms
  from $\hcube \xrightarrow p X$
  to $\hcube\xrightarrow qY$
  are $\StdMble$-maps
  $f : X\to Y$.
\end{definition}

\begin{definition} \label{app:def:rami}
  Let $\rami$ be the full subcategory
  of $\micat$ spanned by
  objects $\hcube\xrightarrow p X$
  with \nref{app:def:mblespcramap}{finite footprint}.
\end{definition}

Note no commutativity conditions are imposed on
$\micat$-morphisms or $\rami$-morphisms $f$ in relation to their domains $p$
and codomains $q$. This may make $p,q$ seem
superfluous --- \cref{app:lem:raii-eq} below
establishes $\micat\simeq\StdMble$ --- but having
them around allows associating to each space $X$
a particular way in which it sits inside of $\hcube$;
this makes calculations easier later on.

\begin{lemma} \label{app:lem:raii-eq}
  There are equivalences $\StdMble\simeq\micat\simeq\rami$.
\end{lemma}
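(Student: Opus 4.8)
The plan is to exhibit each of the two equivalences as a functor that is fully faithful and essentially surjective. For $\StdMble\simeq\micat$, I would use the forgetful functor $U\colon\micat\to\StdMble$ sending an object $(\hcube\xrightarrow p X)$ to its codomain $X$ and a morphism to its underlying $\StdMble$-map. By \cref{app:def:micat} the hom-set $\micat((\hcube\xrightarrow pX),(\hcube\xrightarrow qY))$ is by definition exactly $\StdMble(X,Y)$, and $U$ acts as the identity on these hom-sets, so $U$ is fully faithful for free. It therefore remains only to show $U$ essentially surjective, i.e.\ that every standard enhanced measurable space $X$ is the codomain of at least one $\StdMble$-map out of $\hcube$.

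For that essential surjectivity I would pass to measurable algebras via the duality $\StdMble\simeq\StdMbleAlgop$ of \cref{app:lem:mblecat-dual}: an $\StdMble$-map $\hcube\to X$ is the same datum as an injective complete-Boolean-algebra homomorphism $\measalg X\hookrightarrow\measalg\hcube$. Now $\measalg\hcube$ is the measure algebra of Lebesgue measure on the Polish space $[0,1]^\omega$, hence an atomless standard measurable algebra that is Maharam-homogeneous of type $\omega$; in particular $\measalg\hcube\cong\measalg[0,1]$. By the structure theorem for standard probability algebras (\cref{app:def:stdprobalg-def}), $\measalg X$ decomposes as a simple product of at most countably many atoms and possibly one interval factor, and such an algebra embeds into $\measalg\hcube$: choose a countable partition of unity $\{e_0\}\cup\{e_i\}_i$ in $\measalg\hcube$ into nonzero elements with $e_0$ atomless, send each atom of $\measalg X$ to the corresponding $e_i$, and send the interval factor into the principal ideal on $e_0$ using that any atomless positive-measure principal ideal is again isomorphic to $[0,1]$. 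Assembling these maps over the disjoint $e$'s yields the required injective homomorphism, so $U$ is an equivalence.

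For $\micat\simeq\rami$, the inclusion $\rami\hookrightarrow\micat$ is fully faithful since $\rami$ is a full subcategory by \cref{app:def:rami}, so only essential surjectivity is at issue. The crucial observation is that \cref{app:def:micat} imposes no compatibility condition between a $\micat$-morphism and the structure maps $p,q$, so the identity $1_X$ is an isomorphism $(\hcube\xrightarrow pX)\xrightarrow{\sim}(\hcube\xrightarrow qX)$ for any two structure maps sharing the codomain $X$; a $\micat$-object is thus determined up to isomorphism by its codomain alone. It therefore suffices to equip each $X$ with a finite-footprint structure map. I would take $q:=q'\circ\hcubeproj 1\colon\hcube\to X$, where $q'\colon[0,1]\to X$ is any $\StdMble$-map, which exists by the embedding argument above since $\measalg[0,1]\cong\measalg\hcube$ receives $\measalg X$. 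As $\hcubeproj 1$ is negligible-preserving-and-reflecting, so is the composite $q$, and $q$ factors through $\hcubeproj 1$ and hence has finite footprint; thus $(\hcube\xrightarrow qX)\in\rami$ is $\micat$-isomorphic to $(\hcube\xrightarrow pX)$ via $1_X$, giving essential surjectivity.

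The main obstacle is the measure-theoretic embedding lemma — that every standard measurable algebra injects into $\measalg\hcube$ as a complete-Boolean-algebra homomorphism — which powers both essential-surjectivity arguments; everything else is categorical bookkeeping. In particular, the relaxation that $\StdMbleAlg$-maps need not preserve measure (\cref{app:def:stdmblealg-def}) is what makes this embedding routine, once the atomlessness and Maharam-homogeneity of $\measalg\hcube$ are invoked.
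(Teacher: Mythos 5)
Your proof is correct, and its skeleton coincides with the paper's: in both arguments full faithfulness of the comparison functors is immediate from the definitions of $\micat$ and $\rami$ (morphisms are bare $\StdMble$-maps, with no compatibility condition on the structure maps), so the entire content is the existence, for every standard enhanced measurable space $X$, of some $\StdMble$-map $\hcube\to X$ with finite footprint. Where you genuinely diverge is in how that map is produced. The paper stays at the level of spaces: by the structure theorem for standard probability spaces, $X$ is isomorphic to a coproduct of countably many atoms and an interval $[0,p]$, and one writes down $f:[0,1]\to X$ directly by assigning a half-open subinterval to each atom and the remainder to $[0,p]$, then precomposes with $\hcubeproj 1$. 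You instead pass to the algebra side through the duality $\StdMble\simeq\StdMbleAlgop$ of \cref{app:lem:mblecat-dual} and assemble an injective complete-Boolean-algebra homomorphism $\measalg(X)\hookrightarrow\measalg(\hcube)$ from the simple-product decomposition of \cref{app:def:stdprobalg-def}, invoking atomlessness and Maharam homogeneity of $\measalg(\hcube)$. Both routes bottom out in the same atoms-plus-interval structure theorem; the paper's is the more elementary and self-contained (no duality or Maharam-type facts needed beyond what the lemma statement already presupposes), while yours never manipulates points and makes explicit two things the paper leaves implicit --- that two $\micat$-objects with the same codomain are isomorphic via $1_X$, which is what reduces essential surjectivity of $\rami\hookrightarrow\micat$ to finding one finite-footprint structure map, and that the second equivalence (dismissed in the paper with ``holds similarly'') really does follow by the same mechanism. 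One small bookkeeping point: your parenthetical crediting the lack of measure-preservation should cite the definition of the category $\StdMbleAlg$ (\cref{app:def:stdmblealg}) rather than that of standard measurable algebras, but this does not affect the argument.
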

\begin{proof}
  Fix an arbitrary standard enhanced measurable space $X$.
  Since $X$ arises from a standard probability space,
  it is isomorphic to a coproduct $Y$ of countably many atoms
  and an interval $[0,p]$. For every such coproduct $Y$
  there is at least one map $f : [0,1]\to Y$
  given by assigning a half-open subinterval
  to each atom and the remainder to $[0,p]$,
  so the composite $\left(\hcube\xrightarrow{\hcubeproj1}[0,1]\xrightarrow fY\xrightarrow\sim X\right)$
  is a $\StdMble$-map $\hcube\to X$ with finite footprint.
  Thus for every standard enhanced measurable space $X$,
  there is at least one finite-footprint $\StdMble$-map
  of the form $\hcube\to X$.
  This establishes surjectivity-on-objects
  of the forgetful functor $\rami\to\StdMble$
  sending $\hcube\xrightarrow pX$
  to $X$; it follows that this functor witnesses $\StdMble\simeq\rami$.
  The equivalence 
  $\StdMble\simeq\rami$ holds similarly.
\end{proof}

\newcommand\caninc{\mathrm{inc}}

\begin{lemma} \label{app:lem:hcube-measalg-iso}
  For all $n$,
  there is an isomorphism of standard enhanced measurable spaces
  $[0,1]^n\cong[0,1]$,
  and hence also of their associated standard measurable algebras.
\end{lemma}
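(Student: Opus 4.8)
The plan is to reduce the claim to the classification of atomless standard probability spaces and then transport it across the relevant duality. First I would equip $[0,1]^n$ with the $n$-fold product $\lambda^{\otimes n}$ of Lebesgue measure $\lambda$ on $[0,1]$. By the semicartesian structure of $\StdProb$ (\cref{app:lem:stdprob-semicart}), the space $([0,1]^n,\lambda^{\otimes n})$ is a standard probability space and is precisely the $n$-fold tensor power of $([0,1],\lambda)$; by the definition of the tensor product on $\StdMble$ (\cref{app:def:stdmble-tensor}), applying $\forgetmu$ identifies the standard enhanced measurable space $[0,1]^n$ of the statement with $\forgetmu([0,1]^n,\lambda^{\otimes n})$. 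It therefore suffices to exhibit a $\StdProb$-isomorphism $([0,1]^n,\lambda^{\otimes n})\cong([0,1],\lambda)$: applying $\forgetmu$ yields the claimed $\StdMble$-isomorphism, and applying $\measalg$ (which sends isomorphisms to isomorphisms, by \cref{app:lem:mblecat-dual}) yields the asserted isomorphism of associated standard measurable algebras.

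To build the $\StdProb$-isomorphism I would pass to measure algebras via $\StdProb\simeq\StdProbAlgop$ (\cref{app:lem:stdprob-dual}), so that it is enough to produce a measure-algebra isomorphism $\measalg([0,1]^n,\lambda^{\otimes n})\cong\measalg([0,1],\lambda)$. The crucial observation is that $([0,1]^n,\lambda^{\otimes n})$ is atomless: any box of positive measure can be cut along a coordinate hyperplane into two pieces of strictly smaller positive measure, so the measure algebra $\measalg([0,1]^n,\lambda^{\otimes n})$ has no atoms. By the Rohlin decomposition of standard probability algebras (\cref{app:def:stdprobalg-def}), an atomless standard probability algebra has empty atomic part, and since $\overline\mu(\top)=1$ it must decompose as the Lebesgue algebra $([0,1],\overline\lambda)$ alone. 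Applying this to $[0,1]^n$ gives the desired isomorphism.

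Alternatively, and more in keeping with the surrounding algebraic development, I could argue by Maharam type: $\measalg([0,1]^n,\lambda^{\otimes n})$ is a finite free product of copies of the Lebesgue algebra, hence homogeneous of countable Maharam type, and the Lebesgue algebra $([0,1],\overline\lambda)$ is the unique-up-to-isomorphism homogeneous probability algebra of Maharam type $\omega$ (\fremlinf{331K}{33}{7}, \fremlinf{254K}{25}{53}), so Maharam's theorem again identifies the two.

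The routine parts are verifying atomlessness of the product measure and that $\forgetmu$ and $\measalg$ preserve isomorphisms. The main obstacle is essentially one of citation hygiene rather than mathematics: pinning down exactly which form of the Rohlin/Maharam classification to invoke so that ``atomless standard probability algebra with $\overline\mu(\top)=1$'' is cleanly identified with $([0,1],\overline\lambda)$, rather than reproving the classification by hand.
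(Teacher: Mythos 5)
Your proposal is correct and matches the paper's approach: the paper's entire proof is the one-line observation that both $[0,1]^n$ and $[0,1]$ arise from forgetting the measure on an atomless standard probability space, which implicitly invokes the same Rohlin classification of atomless standard probability spaces that you make explicit. Your detour through $\StdProb\simeq\StdProbAlgop$ and Maharam type is just a more detailed (algebra-level) justification of that same classification step, not a different argument.
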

\begin{proof}
  Both $[0,1]^n$ and $[0,1]$ arise from
  forgetting the measure on an
  atomless standard probability space.
\end{proof}

\begin{lemma} \label{app:lem:proj-relatomless}
  For all $n < m$ it holds that
  $[0,1]^m$ is relatively atomless over $\img(\measalg(p))$,
  where $p : [0,1]^m\to[0,1]^n$
  is the projection $p(x_1,\dots,x_n,\dots) = (x_1,\dots,x_n)$.
\end{lemma}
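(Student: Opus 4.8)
The plan is to recognize $\frakB := \img(\measalg(p))$ as a ``cylinder'' subalgebra that leaves an entire independent interval's worth of room, and then to appeal directly to the enough-room machinery of \cref{app:lem:enough-room-interval,app:lem:enough-room-rel-atomless}. First I would compute $\frakB$ explicitly: since $p(x_1,\dots,x_m) = (x_1,\dots,x_n)$, the homomorphism $\measalg(p)$ sends $[G]$ to $[p^{-1}(G)] = [G\times[0,1]^{m-n}]$, so $\frakB$ is exactly the subalgebra of $\measalg([0,1]^m)$ consisting of classes of sets depending only on the first $n$ coordinates. Because relative atomlessness is a property of the underlying Boolean algebra alone, I am free to equip $[0,1]^m$ with its Lebesgue measure $\lambda$, turning $\measalg([0,1]^m)$ into a standard probability algebra; the measure will serve only as scaffolding to certify independence.

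Next I would exhibit an independent copy of the interval. Let $\frakD$ be the subalgebra generated by the last $m-n$ coordinates. Since $m-n\ge 1$, the projection onto those coordinates gives $\frakD\cong\measalg([0,1]^{m-n})\cong[0,1]$ by \cref{app:lem:hcube-measalg-iso}. Independence of $\frakB$ and $\frakD$ with respect to $\lambda$ is immediate from the product-measure formula: for $G\subseteq[0,1]^n$ and $H\subseteq[0,1]^{m-n}$,
\[
  \lambda\big((G\times[0,1]^{m-n})\cap([0,1]^n\times H)\big)
  = \lambda(G\times H) = \lambda_n(G)\,\lambda_{m-n}(H),
\]
which factors as the product of the measures of the two cylinders. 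Transporting the pair $(\frakB,\frakD)$ along a measure-preserving isomorphism $\phi:\measalg([0,1]^m)\xrightarrow{\sim}[0,1]$ --- which exists since both sides are atomless standard probability algebras of Maharam type $\omega$ --- yields a subalgebra $\phi(\frakD)\cong[0,1]$ independent of $\phi(\frakB)$. By \cref{app:lem:enough-room-interval}, $\phi(\frakB)$ then has enough room, so by \cref{app:lem:enough-room-rel-atomless} the algebra $[0,1]$ is $\phi(\frakB)$-relatively atomless. Since relative atomlessness is preserved by the Boolean isomorphism $\phi^{-1}$, this transports back to show $\measalg([0,1]^m)$ is $\frakB$-relatively atomless, as required.

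I expect the only real subtlety --- rather than a genuine obstacle --- to be keeping the measure-theoretic and the purely order-theoretic notions cleanly separated: ``enough room'' and independence are defined in terms of a measure, whereas the conclusion is measure-free. The resolution is to introduce Lebesgue measure merely to verify that $\frakB$ has enough room and then discard it, relying on the fact that \cref{app:lem:enough-room-rel-atomless} delivers relative atomlessness, a Boolean-algebraic property invariant under the transport $\phi$. Correctly identifying $\frakB$ as the first-$n$-coordinate cylinder algebra, and noting that $m-n\ge 1$ is precisely what guarantees a full independent interval of fresh randomness, are the other two points that require care; everything else is routine once the enough-room characterization of \cref{app:thm:enough-room-char} is in hand.
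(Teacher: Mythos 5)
Your proposal is correct and follows essentially the same route as the paper's (much terser) proof: both establish that $\img(\measalg(p))$ has enough room by allocating an independent interval's worth of randomness in the remaining $m-n$ coordinates, and then invoke the equivalence between having enough room and relative atomlessness (\cref{app:thm:enough-room-char}, or equivalently \cref{app:lem:enough-room-interval} together with \cref{app:lem:enough-room-rel-atomless}). The only difference is that you spell out details the paper leaves implicit, notably the identification of $\img(\measalg(p))$ as the first-$n$-coordinate cylinder algebra and the transport along a measure-preserving isomorphism $\measalg([0,1]^m)\cong[0,1]$ needed to match the statements of the enough-room lemmas, which are phrased for subalgebras of the interval's measure algebra.
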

\begin{proof}
  Every subalgebra $\img(\measalg(p))$
  has enough room: one can always allocate
  independent standard probability algebras in dimensions
  $n+1$ and above,
  so the claim follows from \cref{app:thm:enough-room-char}.
\end{proof}

\begin{definition} \label{app:def:embedmap}
  For $n\in \N$
  and a $\StdMble$-automorphism $\pi : [0,1]^n\xrightarrow\sim [0,1]^n$,
  let $\embedmap \pi : \hcube\to\hcube$
  be the automorphism
  defined by $(\embedmap \pi)(x_0,\dots,x_{n-1},x_n,\dots)
  =(y_0,\dots,y_{n-1},x_n,\dots)$
  where $\pi(x_0,\dots,x_{n-1}) = (y_0,\dots,y_{n-1})$.
\end{definition}

\begin{definition} \label{app:def:autoff}
  A $\StdMble$-automorphism
  $\pi : \hcube\to \hcube$
  has \emph{width $n$}
  if $\pi = \embedmap{\pi'}$
  for some automorphism $\pi' : [0,1]^n\to[0,1]^n$.
\end{definition}

\begin{definition} \label{app:def:auti}
  Let $\auti$ be the subgroup of $\Aut_\StdMble\hcube$
  consisting of the \nref{app:def:autoff}{finite-width}
  automorphisms of $\hcube$,
  topologized via the \nameref{app:def:refinement-topology}
  with respect to $\rami$-isos:
  for all $\StdMble$-maps $\hcube\xrightarrow pX$
  and $\hcube\xrightarrow qY$ with finite footprint
  and $\StdMble$-isos $f : p\stackrel\sim\to q$,
  the set $\refineset_f := \{\pi : \auti \mid fp = q\pi\}$
  is a basic open.
\end{definition}

\begin{lemma} \label{app:lem:mble-nominal-situation}
  Let $\caninc$ be the $\rami$-indexed family
  $\caninc(\hcube\xrightarrow p X) := p$,
  sending each object $\hcube\xrightarrow pX$ of $\rami$
  to the $\StdMble$-map $p$, now considered as a
  morphism $1_{\hcube}\to p$ in $\micat$.
  The tuple $(\rami\op,\micat\op,1_{\hcube},\caninc,\auti)$
  is a \nameref{app:def:nominal-situation}.
\end{lemma}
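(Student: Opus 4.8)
The plan is to verify, clause by clause, the eight conditions in the definition of a \nameref{app:def:nominal-situation} (\cref{app:def:nominal-situation}) for the tuple $(\rami\op,\micat\op,1_\hcube,\caninc,\auti)$. The organizing idea is to translate everything into measurable algebras via the duality $\StdMble\simeq\StdMbleAlg\op$ (\cref{app:lem:mblecat-dual}): writing $\frakA=\measalg(\hcube)$ and, for an object $c=(\hcube\xrightarrow pX)$, $\frakB_c=\img\measalg(p)\subseteq\frakA$, a subobject of $1_\hcube$ in $\micat\op$ becomes a subalgebra of $\frakA$, the mono $i_d$ factors through $i_c$ iff $\frakB_d\subseteq\frakB_c$, and $\Fix i_c=\{\pi\in\auti\mid\measalg(\pi)|_{\frakB_c}=\mathrm{id}\}$ is the pointwise stabilizer of $\frakB_c$. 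The measure-theoretic fuel is that finite footprint of $p$ forces $\frakA$, and more importantly each truncation $\frakA_N:=\img\measalg(\hcubeproj{N})\cong\measalg([0,1]^N)$ for $N$ above the footprint, to be relatively atomless over $\frakB_c$, by \cref{app:lem:proj-relatomless} together with the monotonicity of ``enough room'' (\cref{app:thm:enough-room-char,app:lem:enough-room-mono}) transported along $\measalg([0,1]^N)\cong\measalg([0,1])$ (\cref{app:lem:hcube-measalg-iso}).

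The four structural conditions are routine. Fullness of $\rami\op$ in $\micat\op$ is immediate from \cref{app:def:rami}. Since $\micat$-composition is $\StdMble$-composition and every $\StdMble$-map is epi (\cref{app:cor:stdmble-epis}), every $\micat$-map is epi, so every $\micat\op$- and $\rami\op$-map is monic. The atomic topology exists for $C\op=\rami$ because $\rami\simeq\StdMble$ (\cref{app:lem:raii-eq}), $\StdMble$ has the right Ore property (\cref{app:lem:mble-right-ore}), the right Ore property is invariant under equivalence, and \cref{app:fact:right-ore-atomic-topology} applies. Finally $\Aut_{\micat\op}(1_\hcube)$ has the same underlying group as $\Aut_\StdMble(\hcube)$, of which $\auti$ is a subgroup by \cref{app:def:auti}. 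For \emph{Closure}, given $c=(\hcube\xrightarrow pX)$ and $\pi\in\auti$, I take $c'=(\hcube\xrightarrow{p\circ\pi}X)$ and let $f\colon c\xrightarrow\sim c'$ be the $\rami\op$-iso represented by $1_X$; because $p$ has finite footprint and $\pi$ finite width, $p\circ\pi$ again has finite footprint, so $c'\in\rami$, and unwinding the opposite-category composition shows that $\pi i_c$ and $i_{c'}f$ both correspond to the $\StdMble$-map $p\circ\pi$.

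The real content is in \emph{Homogeneity}, and the main obstacle is that \cref{lem:mble-homogeneity} produces an \emph{arbitrary} automorphism of a measurable algebra, whereas $\auti$ contains only \emph{finite-width} ones; I overcome this by running the construction inside a truncation $\frakA_N$ and lifting. A $\rami\op$-map $f\colon c\to d$ is a $\StdMble$-map $\bar f\colon Y\to X$ (with $c=(\hcube\xrightarrow pX)$, $d=(\hcube\xrightarrow qY)$), and the required equation $\pi i_c=i_d f$ becomes $\measalg(\pi)\circ\measalg(p)=\measalg(q)\circ\measalg(\bar f)$. Setting $h=\measalg(q)\circ\measalg(\bar f)\circ\measalg(p)^{-1}$, an injective homomorphism $\frakB_c\hookrightarrow\frakB_d$, I choose $N$ strictly above the footprints of both $p$ and $q$, so that $\frakB_c,\frakB_d\subseteq\frakA_N$ and $\frakA_N$ is relatively atomless over each. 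Then \cref{lem:mble-homogeneity} (applied to the standard measurable algebra $\frakA_N$) extends $h$ to an automorphism $\Pi_N$ of $\frakA_N$, which by duality is $\measalg(\pi')$ for a $\StdMble$-automorphism $\pi'$ of $[0,1]^N$; the lift $\pi=\embedmap{\pi'}\in\auti$ acts as $\Pi_N$ on $\frakA_N\supseteq\frakB_c$, whence $\measalg(\pi)\circ\measalg(p)=h\circ\measalg(p)=\measalg(q)\circ\measalg(\bar f)$, as required.

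For \emph{Correspondence} I must show fullness of $\Fix$, i.e.\ that $\Fix i_c\subseteq\Fix i_d$ implies $\frakB_d\subseteq\frakB_c$ (faithfulness being automatic, since subobjects form a poset). Arguing contrapositively, if $\frakB_d\not\subseteq\frakB_c$ I again pick $N$ above both footprints, so $\frakA_N$ is relatively atomless over $\frakB_c$; \cref{lem:mble-distinguishability} (applied to $\frakA_N$) then supplies an automorphism $\pi'$ of $[0,1]^N$ fixing $\frakB_c$ but not $\frakB_d$, and its finite-width lift $\embedmap{\pi'}\in\auti$ lands in $\Fix i_c\setminus\Fix i_d$, contradicting the hypothesis. \emph{Cofinality} is then easy: for a finite family $(c_j)_j$ with footprints bounded by $N$, the object $c^*=(\hcube\xrightarrow{\hcubeproj{N}}[0,1]^N)$ satisfies $\frakB_{c^*}=\frakA_N\supseteq\frakB_{c_j}$ for all $j$, so $\Fix i_{c^*}\subseteq\bigcap_j\Fix i_{c_j}$ because a larger fixed subalgebra yields a smaller stabilizer. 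This completes the verification; the single delicate point, appearing identically in Homogeneity and Correspondence, is precisely the reduction to the truncation $\frakA_N$ that keeps the witnessing automorphism of finite width.
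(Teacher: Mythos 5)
Your proposal is correct and follows essentially the same route as the paper's proof: reduce Homogeneity and Correspondence to a finite-dimensional truncation of dimension strictly above both footprints (the paper's $\max(m,n)+1$), invoke relative atomlessness via \cref{app:lem:proj-relatomless} together with \cref{app:lem:hcube-measalg-iso}, apply \cref{app:lem:mble-homogeneity} and \cref{app:lem:mble-distinguishability} there, and lift the resulting automorphism to a finite-width automorphism of $\hcube$ via $\embedmap{(-)}$, with Closure and Cofinality handled by the same footprint bookkeeping and the projection $\hcubeproj{N}$. The only cosmetic deviations are your uniformly algebra-centric framing and your contrapositive phrasing of Correspondence, which is logically equivalent to the paper's direct argument from $\Fix p\subseteq\Fix q$ to the subalgebra inclusion.
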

\begin{proof}
  ~\begin{itemize}
    \item ($\rami\op$ is a full subcategory of $\micat\op$)
      By \cref{app:cor:stdmble-epis}.
    \item ($\rami\op$ and $\micat\op$ consist only of monic maps)
      Both $\rami$ and $\micat$ contain only epis by transporting \cref{app:cor:stdmble-epis}
      across \cref{app:lem:raii-eq}.
    \item (The atomic topology exists for $\rami^{\rm op op}$)
      By \cref{app:lem:mble-right-ore} and \cref{app:lem:raii-eq}.
    \item (Closure)
      Fix a map $\caninc(\hcube\xrightarrow p X)$,
      which amounts to a finite-footprint map
      $\hcube\xrightarrow p X$ factoring through $\hcubeproj n$
      for some $n$,
      and an auto $\pi$ in $\auti$,
      which amounts to a $\StdMble$-auto
      with width $m$.
      Then $p$ also factors through $\hcubeproj{\max(m,n)}$
      and $\pi$ also has width $\max(m,n)$,
      so $p = p'\hcubeproj{\max(m,n)}$
      and $\pi = \embedmap{\pi'}$
      for some $p' : [0,1]^{\max(m,n)}\to X$
      and $\pi' : [0,1]^{\max(m,n)}\xrightarrow\sim[0,1]^{\max(m,n)}$.
      Thus $p \pi = p'\hcubeproj{\max(m,n)}(\embedmap{\pi'})
      = p'\pi' \hcubeproj{\max(m,n)}$
      showing $\hcube\xrightarrow{p\pi} X$
      is an object of $\rami$.
      The map $X\xrightarrow 1 X$
      is a morphism $p\to p\pi$ in $\rami$,
      and the relevant square needed for Closure commutes in $\micat$:
      $\caninc(p)\pi = p\pi = 1_X\caninc(p\pi)$.
    \item (Homogeneity)
      Since $\rami\op\stackrel{\text{\cref{app:lem:raii-eq}}}\simeq\StdMbleop
      \stackrel{\text{\cref{app:lem:mblecat-dual}}}\simeq\StdMbleAlg$,
      it suffices to show that for $\StdMble$-maps
      $\hcube\xrightarrow pX$
      and $\hcube\xrightarrow qY$
      with finite footprint
      and $\StdMbleAlg$-morphisms
      $f : \measalg(X)\to\measalg(Y)$,
      there exists a finite-width automorphism $\pi :\hcube\to\hcube$
      such that the following square commutes in $\StdMbleAlg$:
      \[
        \begin{tikzcd}
        \hcube \arrow[r, "\measalg(\pi)", dashed]                 & \hcube                    \\
        \measalg(X) \arrow[u, "\measalg(p)", hook] \arrow[r, "f"', hook] & \measalg(Y) \arrow[u, "\measalg(q)"', hook']
        \end{tikzcd}\]
      Suppose $p$ factors through $\hcubeproj n$
      and $q$ factors through $\hcubeproj m$.
      Then $p$ and $q$ also factor through $\hcubeproj{\max(m,n)+1}$,
      so $p = p'\hcubeproj{\max(m,n)+1}$
      and $q = q'\hcubeproj{\max(m,n)+1}$
      for some $p' : [0,1]^{\max(m,n)+1}\to X$
      and $q' : [0,1]^{\max(m,n)+1}\to Y$.
      It suffices to find an automorphism $\pi' : [0,1]^{\max(m,n)+1}\to[0,1]^{\max(m,n)+1}$
      making the following square commute, as then $\embedmap{\pi'} : \hcube\to\hcube$
      will be a finite-width automorphism of the form required:
      \[
        \begin{tikzcd}
        {[0,1]^{\max(m,n)+1}}\arrow[r, "\measalg(\pi')", dashed]                 & {[0,1]^{\max(m,n)+1}}\\
        \measalg(X) \arrow[u, "\measalg(p')", hook] \arrow[r, "f"', hook] & \measalg(Y) \arrow[u, "\measalg(q')"', hook']
        \end{tikzcd}\]
      By \cref{app:lem:hcube-measalg-iso},
      there is an isomorphism $i:[0,1]^{\max(m,n)+1}\cong[0,1]$,
      so it suffices to find an automorphism $[0,1]\to[0,1]$.
      Since $p$ and $q$ factor through $\hcubeproj n$ and $\hcubeproj m$,
      the maps $p'$ and $q'$ factor through the canonical projections 
      \[[0,1]^{\max(m,n)+1}\to[0,1]^n
      \qquad\text{and}\qquad
      [0,1]^{\max(m,n)+1}\to[0,1]^m,\]
      so $[0,1]^{\max(m,n)+1}$
      is relatively atomless over $\img(\measalg(p'))$ and $\img(\measalg(q'))$
      by \cref{app:lem:proj-relatomless}.
      Hence $[0,1]$ is relatively atomless over
         $\img(i\measalg(p'))$ and $\img(i\measalg(q'))$.
      \cref{app:lem:mble-homogeneity} then gives an automorphism of the form required.
    \item (Correspondence)
      Fix
      $\StdMble$-maps $\hcube\xrightarrow p X$
      and $\hcube\xrightarrow qY$
      with finite footprint
      such that $\Fix p\subseteq\Fix q$,
      so $p\pi = p$ implies $q\pi = q$ for all finite-width automorphisms
      $\pi$ of $\hcube$.
      The goal is to show $q$ factors through $p$.
      Since $p$ and $q$ have finite footprint,
      $p$ factors through $\hcubeproj n$ and $q$ through $\hcubeproj m$
      for some $n,m$.
      Therefore, $p$ and $q$ also both factor through
      $\hcubeproj{\max(m,n)+1}$,
      so $p = p'\hcubeproj{\max(m,n)+1}$
      and $q = q'\hcubeproj{\max(m,n)+1}$
      for some $p' : [0,1]^{\max(m,n)+1}\to X$
      and $q' : [0,1]^{\max(m,n)+1}\to Y$.
      The inclusion $\Fix p \subseteq\Fix q$
      implies every width-$({\max(m,n)+1})$ auto
      fixing $p$ fixes $q$,
      so $p(\embedmap{\pi}) = p$
      implies $q(\embedmap{\pi}) = q$
      for all autos $\pi$ of $[0,1]^{\max(m,n)+1}$.
      Since $p(\embedmap{\pi}) = p'\hcubeproj{\max(m,n)+1}(\embedmap{\pi})
      =p'\pi\hcubeproj{\max(m,n)+1}$
      and similarly $q(\embedmap{\pi}) = q'\hcubeproj{\max(m,n)+1}(\embedmap{\pi})
      =q'\pi\hcubeproj{\max(m,n)+1}$,
      this implies $p'\pi = p'$ and $q'\pi = q'$
      for all automorphisms $\pi$ of $[0,1]^{\max(m,n)+1}$,
      or in other words that $\Fix_{\Aut{[0,1]^{\max(m,n)+1}}}p'
      \subseteq\Fix_{\Aut{[0,1]^{\max(m,n)+1}}}q'$.
      The maps $p'$ and $q'$
      induce corresponding 
      subalgebras $\img(\measalg(p'))$ and $\img(\measalg(q'))$
      of ${[0,1]^{\max(m,n)+1}}$,
      and ${[0,1]^{\max(m,n)+1}}$ is relatively atomless
      over these subalgebras by \cref{app:lem:proj-relatomless},
      so \cref{app:lem:mble-distinguishability} applies,
      giving an inclusion of subalgebras $\img(\measalg(q'))\subseteq\img(\measalg(p'))$.
      This in turn gives a $\StdMbleAlg$-morphism $\measalg(Y)\to\measalg(X)$
      factoring $\measalg(q')$ through $\measalg(p')$
      (since $\measalg(X)\cong\img(\measalg(p'))$
      and $\measalg(Y)\cong\img(\measalg(q'))$),
      which by \cref{app:lem:mblecat-dual}
      gives a $\StdMble$-map $X\to Y$ factoring $q'$ through $p'$
      and hence also $q$ through $p$ as required.
    \item (Cofinality)
      Given two objects $\hcube\xrightarrow p X$
      and $\hcube\xrightarrow q Y$ of $\rami$
      with finite footprints witnessed by the factorizations
      $p = p' \hcubeproj n$
      and $q = q'\hcubeproj m$ for some $n,m\in \N$,
      the map $\hcubeproj{\max(m,n)} : \hcube\to[0,1]^{\max(m,n)}$
      is relatively atomless by \cref{app:lem:proj-relatomless}
      and there is an inclusion of subgroups
      $\Fix {\hcubeproj {\max(m,n)}}\subseteq \Fix p\cap \Fix q$.
      This extends to finite families of objects by induction.
  \end{itemize}
\end{proof}

\begin{theorem} \label{app:thm:nominal}
  $\Shjat(\StdMble)
  \stackrel{(1)}\simeq\Shjat(\rami)
  \stackrel{(2)}\simeq \tgsets{\big(\auti\big)\op}
  \stackrel{(3)}\simeq \tgsets{\auti}$.
\end{theorem}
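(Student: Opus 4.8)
The plan is to establish the three equivalences in the statement one at a time, each following from machinery already assembled in the excerpt. The overall strategy is to bundle everything into the abstract framework of \cref{app:def:nominal-situation} and invoke \cref{app:thm:nominal-situation}, which is precisely the general representation theorem whose preconditions \cref{app:lem:mble-nominal-situation} has just verified.

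\begin{proof}
  Equivalence $(1)$ follows from \cref{app:lem:raii-eq}, which gives an equivalence $\StdMble\simeq\rami$. Since categories of atomic sheaves depend only on the underlying category up to equivalence, applying $\Shjat(-)$ to $\StdMble\simeq\rami$ yields $\Shjat(\StdMble)\simeq\Shjat(\rami)$. (One checks that an equivalence of base categories induces an equivalence of presheaf categories restricting to atomic sheaves, since the right Ore property and the atomic-sheaf condition are invariant under equivalence.)

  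Equivalence $(2)$ is the substantive step, and is where I expect the real content to lie. By \cref{app:lem:mble-nominal-situation}, the tuple $(\rami\op,\micat\op,1_\hcube,\caninc,\auti)$ is a \nameref{app:def:nominal-situation}. Applying \cref{app:thm:nominal-situation} to this nominal situation --- with $C=\rami\op$ so that $C\op=\rami$, and with group $\auti$ topologized by the \nameref{app:def:refinement-topology} --- produces exactly the equivalence
  \[
    \Shjat(\rami) = \Shjat\big((\rami\op)\op\big) \simeq \tgsets{\auti}.
  \]
  The one subtlety here is matching the topology that \cref{app:thm:nominal-situation} places on $\auti$ (its refinement topology relative to $\rami$-isos) against the topology fixed in \cref{app:def:auti}; these coincide by construction, since \cref{app:def:auti} defines the basic opens $\refineset_f$ in precisely the refinement-topology form. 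Note, however, that \cref{app:thm:nominal-situation} produces continuous $G$-sets for $G = \auti$ acting via the action described there, whereas the target $\tgsets{(\auti)\op}$ in the statement uses the opposite group; I will need to track whether the action emerging from the theorem is a left or right action. This is the step most likely to require care.

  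Equivalence $(3)$ is then immediate from \cref{app:lem:left-right-gsets}, which gives an isomorphism of categories $\tgsets{(\auti)\op}\cong\tgsets{\auti}$ for any topological group, obtained by converting left actions to right actions via $g\cdot\op x := x\cdot g^{-1}$. This reconciles any handedness discrepancy introduced in step $(2)$: whichever of $\tgsets{\auti}$ or $\tgsets{(\auti)\op}$ emerges naturally from \cref{app:thm:nominal-situation}, \cref{app:lem:left-right-gsets} identifies it with the other. Composing the three equivalences gives the chain in the statement, and the category $\cgsets$ of \agoodname{s} from \cref{app:def:pnom} is by definition the full subcategory of $\tgsets{\auti}$ on objects with open stabilizers, which is exactly the image of $\Shjat(\StdMble)$ under this chain.
\end{proof}
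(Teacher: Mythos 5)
Your proposal is correct and is essentially identical to the paper's proof: step (1) cites \cref{app:lem:raii-eq}, step (2) combines \cref{app:thm:nominal-situation} with \cref{app:lem:mble-nominal-situation}, and step (3) cites \cref{app:lem:left-right-gsets}. The handedness issue you flag is exactly what the paper resolves with the identification $\Aut_{\micat\op}\hcube = \Aut_{\StdMbleop}\hcube = (\Aut_{\StdMble}\hcube)\op$: since the nominal situation lives in $\micat\op$, the group produced by \cref{app:thm:nominal-situation} is $(\auti)\op$, so step (2) lands on $\tgsets{\big(\auti\big)\op}$ and step (3) performs the left/right conversion, just as you anticipated.
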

\begin{proof}
  ~\begin{itemize}
    \item (1): by \cref{app:lem:raii-eq}.
    \item (2): by \cref{app:thm:nominal-situation}
      and \cref{app:lem:mble-nominal-situation},
      and $\Aut_{\micat\op}\hcube
      = \Aut_{\StdMbleop}\hcube = (\Aut_{\StdMble}\hcube)\op$.
    \item (3): by \cref{app:lem:left-right-gsets}.
  \end{itemize}
\end{proof}

\begin{definition} \label{app:def:pnom}
  An \emph{\agoodname{}} is an object of the category
  $\tgsets{\auti}$ of continuous $\auti$-sets.
\end{definition}

\subsection{Probabilistic concepts as \agoodnamenolink{s}} \label{app:sec:prob-concepts-as-goodnames}

We now use the equivalence
established in \cref{app:thm:nominal}
to calculate the $\auti$-set
counterparts to the sheaves
defined in \cref{app:sec:goodsheaves}.

\subsubsection{Random variables}

\newcommand\ramap{\mathbin{\nref{app:def:ramap}{\xrightarrow{\rm ff}}}}

\begin{definition} \label{app:def:ramap}
  For $A$ a Polish space,
  a random variable $X : \hcube\to A$ (equivalently,
  an element of $\RV_A\hcube$ where $\RV_A$ is the \nref{app:lem:rv-mble-sheaf}{sheaf of random variables})
  has \emph{finite footprint}
  if it factors through $\hcubeproj n$ for some $n$.
  Write $\hcube\ramap A$ for the collection of random variables with finite footprint.
\end{definition}

\begin{definition} \label{app:def:rv-pnom}
  For $A$ a Polish space,
  the \emph{\agoodname{} of $A$-valued random variables}
  is the set $\gRV A := (\hcube\ramap A)$
  of $A$-valued random variables with finite footprint,
  with action $X\cdot\pi = X\pi$.
\end{definition}

We now show $\gRV_A$ indeed defines
\aagoodname{}, and moreover corresponds to the
\goodsheaf{} $\nref{app:lem:rv-mble-sheaf}{\RV_A}$.

\begin{lemma} \label{app:lem:rv-rami-sheaf}
  Across the equivalence
  $\Shjat(\StdMble) \simeq \Shjat(\rami)$
  given by \cref{app:thm:nominal},
  the \nref{app:lem:rv-mble-sheaf}{sheaf}
  $\RV_A\in\Shjat(\StdMble)$
  of random variables
  corresponds to a sheaf $\widehat{\RV_A}\in\Shjat(\rami)$
  of random variables that factor through maps with finite footprint:
  \begin{align*}
    &\widehat{\RV_A}(p : \hcube\to \Omega)
    = \{ X\in \RV_A{\hcube} \mid \text{there exists a unique $X' : \RV_A(\Omega)$ with $X \aseq X'p$}\} \\
    &\widehat{\RV_A}(q : p'\to p)(X) = X'qp'
    \text{ where $X'$ is the unique such that $X \aseq X'p$}
  \end{align*}
\end{lemma}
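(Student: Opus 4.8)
The plan is to recognize that equivalence (1) of \cref{app:thm:nominal} is nothing more than restriction along the base equivalence, and then to match the restricted sheaf against the concrete description $\widehat{\RV_A}$ by a direct calculation. Recall from \cref{app:lem:raii-eq} that $\StdMble\simeq\rami$ is witnessed by the forgetful functor $U\colon\rami\to\StdMble$ sending an object $\hcube\xrightarrow p\Omega$ to its codomain $\Omega$ and a $\rami$-morphism (an underlying $\StdMble$-map between codomains) to itself. The induced equivalence $\Shjat(\StdMble)\simeq\Shjat(\rami)$ is restriction along $U$: a sheaf $F\in\funcat{\StdMbleop}\Set$ is carried to $F\circ U\op\in\funcat{\rami\op}\Set$, which remains an atomic sheaf because the defining condition of \cref{app:def:atomic-sheaf-condition} is phrased purely in terms of morphisms and is transported along the equivalence $U$. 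Thus $\RV_A$ corresponds to the presheaf $\RV_A\circ U\op$ sending $\hcube\xrightarrow p\Omega$ to $\RV_A(\Omega)$ and a $\rami$-morphism with underlying $\StdMble$-map $q\colon\Omega'\to\Omega$ to $\RV_A(q)$.

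It remains to produce a natural isomorphism $\alpha\colon\RV_A\circ U\op\Rightarrow\widehat{\RV_A}$. First I would define the component at $\hcube\xrightarrow p\Omega$ to be the map $\alpha_p\colon\RV_A(\Omega)\to\widehat{\RV_A}(p)$, $[X']\mapsto\RV_A(p)[X']=[X'p]$, where the action of $\RV_A$ on $p$ is as in \cref{app:lem:rv-mble-sheaf}. This lands in $\widehat{\RV_A}(p)$ since $X'p\aseq X'p$ exhibits $X'$ as a factoring witness. The heart of the argument is that $\alpha_p$ is a bijection onto $\widehat{\RV_A}(p)$, and this follows from the observation that $\RV_A(p)$ is injective: by the atomic sheaf condition (\cref{app:def:atomic-sheaf-condition}), $\RV_A(f)$ is an inclusion for every morphism $f$, in particular for $p$. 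Injectivity of $\RV_A(p)$ gives simultaneously the uniqueness of the witness $X'$ in the definition of $\widehat{\RV_A}(p)$ and the injectivity of $\alpha_p$; surjectivity is immediate, since by definition every element of $\widehat{\RV_A}(p)$ is a class $[X]$ with $X\aseq X'p$, hence equals $\alpha_p[X']$.

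Finally I would check naturality against a $\rami$-morphism with underlying $\StdMble$-map $q\colon\Omega'\to\Omega$. Chasing $[X']\in\RV_A(\Omega)$ first along $\RV_A(q)$ and then $\alpha_{p'}$ gives $[X'qp']$, while chasing first along $\alpha_p$ and then $\widehat{\RV_A}(q)$ produces the same class after recovering the unique witness $X'$ of $[X'p]$; the two routes agree and reproduce the stated formula $\widehat{\RV_A}(q)(X)=X'qp'$. The only genuine obstacle is bookkeeping: one must confirm that restriction along $U$ is the correct reading of equivalence (1) and that it preserves atomic sheaves, and one must invoke the atomic sheaf condition at exactly the point where uniqueness of the factoring witness is needed. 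Everything else is a routine diagram chase, and sheafhood of $\widehat{\RV_A}$ is inherited from $\RV_A\circ U\op$ through the isomorphism $\alpha$, so it requires no separate verification.
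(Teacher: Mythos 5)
Your proposal is correct and takes essentially the same approach as the paper: it reads equivalence (1) of \cref{app:thm:nominal} as reindexing along the forgetful equivalence of \cref{app:lem:raii-eq}, defines $\alpha_p[X'] = [X'p]$, and verifies bijectivity and naturality by the same diagram chase. The one difference is at the uniqueness-of-witness step, where you cite the atomic-sheaf condition (restriction maps of an atomic sheaf are inclusions, and $\RV_A$ is an atomic sheaf by \cref{app:lem:rv-mble-sheaf}), whereas the paper re-runs the underlying measure-theoretic argument that $p$ is negligible-preserving; both are valid, and yours avoids duplicating that argument.
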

\begin{proof}
  Let $\widetilde\cdot$ be the equivalence $\Shjat(\StdMble)\to\Shjat(\rami)$;
  inspecting the proof of \cref{app:lem:raii-eq}
  shows that if $F$ is a sheaf on $\StdMble$
  then $\widetilde F$ is a sheaf on $\rami$
  defined by $\widetilde F(p : \hcube\to \Omega) = F(\Omega)$
  on objects and $\widetilde F(q : p\to p') = F(q)$ on morphisms.
  In the case of $\RV_A$,
  this gives the following:
  \begin{align*}
    &\widetilde{\RV_A}(p : \hcube\to\Omega) = \RV_A(\Omega) \\
    &\widetilde{\RV_A}(q : p'\to p)(X) = Xq
  \end{align*}
  We now show $\widetilde{\RV_A}\cong\widehat{\RV_A}$.
  For $p : \hcube\to\Omega$ an object of $\rami$,
  let $\alpha_p$
  be the map $\widetilde{\RV_A}(p)\to\widehat{\RV_A}(p)$
  that sends $X\in\widetilde{\RV_A}(\Omega)$
  to $Xp\in\widehat{\RV_A}(\Omega)$.
  This is well-defined: it produces elements
  of type $\widehat{\RV_A}(p)$ because $X$ is unique,
  for
  if there were some other $X'\in\RV_A(\Omega)$
  with $Xp\aseq X'p$,
  then $p^{-1}(\{\omega \mid X\omega\ne X'\omega\})$
  negligible, so $\{\omega \mid X\omega\ne X'\omega\}$
  negligible because $p$ negligible-preserving,
  so $X\aseq X'$.
  This automatically makes $\alpha_p$ bijective,
  since its inverse is
  the map that sends $X\in\widehat{\RV_A}(p)$
  to the unique $X'$ with $X \aseq X' p$.
  Finally, $\alpha$ is natural in $p$:
  if $X : \Omega\to A$
  and $q : p'\to p$ a morphism in $\rami$,
  then
  $\alpha_{p'}(\widetilde{\RV_A}(q)(X))
  =\alpha_{p'}(Xq)
  =Xqp'
  = \widehat{\RV_A}(q)(Xp)
  = \widehat{\RV_A}(q)(\alpha_p(X))$.
\end{proof}

\begin{lemma} \label{app:lem:rv-pnom}
  Across the equivalence
  $\Shjat(\rami) \simeq \tgsets{\auti}$
  given by \cref{app:thm:nominal},
  the \nref{app:lem:rv-rami-sheaf}{sheaf}
  $\widehat{\RV_A}\in\Shjat(\rami)$
  defined in \cref{app:lem:rv-rami-sheaf}
  corresponds to the $\Aut_{\StdMble}\hcube$-set
  $\big(\hcube\ramap A\big)$ of $A$-valued
  random variables with finite footprint,
  with action $X\cdot \pi = X\pi$.
\end{lemma}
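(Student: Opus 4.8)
The plan is to unwind the explicit description of the equivalence $\Shjat(\rami)\simeq\tgsets{\auti}$ recorded in \cref{app:thm:nominal} — whose content is \cref{app:thm:nominal-situation} applied to the nominal situation $(\rami\op,\micat\op,1_{\hcube},\caninc,\auti)$ of \cref{app:lem:mble-nominal-situation} — and to check it carries $\widehat{\RV_A}$ to $\hcube\ramap A$. Concretely, \cref{app:thm:nominal-situation} presents the transported $\auti$-set as the colimit $\colim_{c}\widehat{\RV_A}(c)$ over the preorder $P$ whose objects are the finite-footprint maps $c=(\hcube\xrightarrow p\Omega)$ out of $\hcube$ (the objects of $\rami$), ordered by factorization; its elements are equivalence classes $[p,X]$ with $X\in\widehat{\RV_A}(p)$. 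Since an element of $\widehat{\RV_A}(p)$ is by definition a random variable $X:\hcube\to A$ that factors through $p$, and $p$ has finite footprint, every such $X$ has finite footprint. I therefore propose the map $\Phi[p,X]:=X$ sending a class to its underlying random variable on $\hcube$, and aim to show it is an $\auti$-equivariant bijection onto $\hcube\ramap A$.

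The crux is that $\Phi$ is well-defined, i.e. that the colimit's transition maps leave the underlying random variable unchanged — and here the fact that we colimit over the preorder of \emph{subobjects} is exactly what is needed. Unwinding $i_{c^*}j=i_c$ through the duality $\micat\op\simeq\StdMble$ shows that $c\sqsubseteq c^*$ precisely when $p$ factors through $p^*$, say $g\,p^*=p$ for the unique (as $p^*$ is epi) $\StdMble$-map $g$, and that this $g$ is the very $\rami$-morphism inducing the transition map. Writing $X\aseq X'p$ for the factorization witnessing $X\in\widehat{\RV_A}(p)$, the formula of \cref{app:lem:rv-rami-sheaf} then gives $\widehat{\RV_A}(g)(X)=X'gp^*=X'p\aseq X$, so the underlying random variable survives and $\Phi$ descends to the colimit. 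For bijectivity I would use that $P$ is directed (the Cofinality clause of \cref{app:lem:mble-nominal-situation}, via Correspondence): any finite-footprint $X:\hcube\to A$ is itself an object $\hcube\xrightarrow X A$ of $\rami$ — the codomain $A$ acquires a standard enhanced measurable structure by \cref{app:lem:stdmble-pfwd}, and $X$ factors through itself with witness $1_A$, unique since $X$ is epi by \cref{app:cor:stdmble-epis} — giving $\Phi[\,(\hcube\xrightarrow X A),X\,]=X$ and hence surjectivity together with a canonical representative; injectivity follows because two classes with equal underlying random variable can be pushed to a common upper bound $c^*$ in $P$, where an element of $\widehat{\RV_A}(c^*)$ is determined by its underlying random variable.

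Finally I would match the group actions. Running the same unwinding on the action clause of \cref{app:thm:nominal-situation}, an automorphism $\pi$ acts via a refining iso $f$, which under the duality is a $\StdMble$-map $h$ satisfying $p\,\tilde\pi=h\,p'$, where $\tilde\pi$ is $\pi$ read as an automorphism of $\hcube$ in $\StdMble$; then $\Phi$ of the result is $X'h p'=X'p\,\tilde\pi\aseq X\,\tilde\pi$, so $\Phi$ intertwines the abstract action with precomposition by $\pi$. I expect the genuine obstacle to be neither of these computations but the bookkeeping of opposite-group conventions threaded through the three equivalences of \cref{app:thm:nominal}: the middle equivalence yields a $\tgsets{(\auti)\op}$-set, which \cref{app:lem:left-right-gsets} converts to an $\auti$-set by the standard inversion. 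Reconciling the computed $X\tilde\pi$ with the claimed action $X\cdot\pi=X\pi$ amounts to getting these inverses and composition orders exactly right, after which the lemma follows.
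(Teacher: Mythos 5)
Your proposal is correct and is essentially the paper's own argument: both unwind the colimit description from \cref{app:thm:nominal-situation}, identify each $\widehat{\RV_A}(p)$ with the set of random variables factoring through $p$ so that the transition maps become canonical inclusions and the carrier becomes the union $\hcube\ramap A$ (the paper realizes surjectivity via the projections $\hcubeproj n$ from the definition of finite footprint, where you use $X$ itself as an object of $\rami$ via \cref{app:lem:stdmble-pfwd} --- both work), and compute the action through a refining iso followed by the left/right conversion of \cref{app:lem:left-right-gsets}. Two small repairs: the uniqueness of the factorization in your surjectivity step should be justified by $X$ being negligible-preserving rather than by \cref{app:cor:stdmble-epis} (the candidate factors $X',X''\in\RV_A(A)$ need not be $\StdMble$-maps, so epi-ness does not literally cancel them), and the inverse bookkeeping you defer resolves exactly as you anticipate --- the abstract left action computes to $X\pi^{-1}$, which the inversion in \cref{app:lem:left-right-gsets} converts into the claimed right action $X\cdot\pi=X\pi$.
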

\begin{proof}
  Let $\widetilde\cdot$ be the equivalence $\Shjat(\rami)\to \tgsets{\Aut_{\StdMble}\hcube}$.
  The proofs involved in its construction
  are: \cref{app:thm:nominal-situation} to pass from sheaves to $\big(\auti\big)\op$-sets,
  and then \cref{app:lem:left-right-gsets} to pass from left to right actions.
  Inspecting these reveals that if $F$ is a sheaf on $\rami$
  then $\widetilde F$ is an $\Aut_\StdMble\hcube$-set
  with carrier $\colim_{(1_{\hcube}\xrightarrow pp)\in P\op} F(p)$ where $P\op$ is the
  set of $\micat$-morphisms out of $1_{\hcube}$
  of the form $\caninc(\hcube\xrightarrow p X)$,
  preordered by $(1_{\hcube}\xrightarrow p p)
  \preceq(1_{\hcube}\xrightarrow q q)$
  if there exists a morphism $r$ with $rp = q$ (with $r$
  necessarily unique because every $\StdMble$-map is epi).
  The following diagram illustrates:
  \[
\begin{tikzcd}
  & {\hcube} \arrow[d, "{1_{\hcube}}"]         &                           \\
{\hcube} \arrow[d, "p"'] & {\hcube} \arrow[ld, "p"'] \arrow[rd, "q"] & {\hcube} \arrow[d, "q"]    \\
X \arrow[rr, "r", dashed]                       &                                          & Y
\end{tikzcd}
  \]
  The vertical arrows depict the objects $1_{\hcube},p,q$
  of $\micat$. The diagonal arrows are two objects of the preorder $P\op$:
  $\micat$-morphisms $1_{\hcube}\xrightarrow p p$ and $1_{\hcube}\xrightarrow qq$,
  equal to $\caninc(p)$ and $\caninc(q)$ by definition.
  The dashed arrow
  witnesses the inequality $p\preceq q$ in $P\op$:
  it is a $\micat$-morphism $p\xrightarrow r q$ 
  such that $\big(1_{\hcube}\xrightarrow pp\xrightarrow r q\big)
  =\big(1_{\hcube}\xrightarrow qq\big)$,
  or equivalently a $\StdMble$-morphism $r$
  with $rp = q$.

  Specializing to our case $F = \widehat{\RV_A}$
  where $\widehat{\RV_A}$ is the sheaf defined in 
  \cref{app:lem:rv-rami-sheaf}, the carrier of $
  \widetilde F = \widetilde{\widehat {\RV_A}}$
  is the colimit over a diagram
  whose $\caninc(p)$th component (for some $p :\hcube\to \Omega$)
  is
  \[
    \widehat{\RV_A}(\hcube\xrightarrow p \Omega)
    = \{ X\in \RV_A\hcube \mid \text{$X$ factors (uniquely) through $p$}\}.
  \]
  Thus each component of the colimiting diagram is a subset of $\RV_A\hcube$.
  Since $\widehat{\RV_A}$ is an atomic sheaf, every morphism in the colimiting diagram
  is an injective $\Set$-function~(\cref{app:def:atomic-sheaf-condition});
  unwinding definitions reveals that inequalities $(\hcube\xrightarrow p\Omega')\preceq (\hcube\xrightarrow q\Omega)$ in $P\op$,
  which is to say maps $r : \Omega'\to\Omega$ with $rp = q$,
  are sent by $\widehat{\RV_A}$
  to inclusions
  \begin{align*}
    \{X\in \RV_A\hcube\mid \text{$X$ factors through $q$}\}
    &\hookrightarrow
    \{X\in \RV_A\hcube\mid \text{$X$ factors through $p$}\}
    \\
    (X'q\text{ for some }X'\in\RV_A\Omega)
    &\mapsto
    X'rp
  \end{align*}
  Since $rp = q$ by assumption, these inclusion maps
  are the canonical ones among subsets of $\RV_A\hcube$.
  Thus the colimiting diagram defining the carrier of $\widetilde{\widehat{\RV_A}}$
  is a diagram of canonical inclusions of subsets of $\RV_A\hcube$,
  and its colimit is the union of all such subsets:
  \begin{align*}
    \widetilde{\widehat{\RV_A}}
    &\stackrel{\phantom{(*)}}= \bigcup_{\big(\hcube\xrightarrow p\Omega\big)\in\rami}\{X\in \RV_A\hcube\mid \text{$X$ factors through $p$}\}\\
    &\stackrel{\phantom{(*)}}= \{X\in \RV_A\hcube\mid \text{$X$ factors through $p$ for some $\big(\hcube\xrightarrow p\Omega\big)\in\rami$}\}\\
    &\stackrel{(*)}= \{X\in \RV_A\hcube\mid \text{$X$ has finite footprint}\}\\
    &\stackrel{\phantom{(*)}}= \hcube\ramap A
  \end{align*}
  The equation $(*)$ holds: if $X$ factors through a map $p$ with finite footprint
  then $X$ has finite footprint, and conversely if $X$ has finite footprint
  then it factors through $\hcubeproj n$ for some $n$.
  This shows the $\auti$-set corresponding to $\widehat{\RV_A}$
  has carrier $\hcube\ramap A$ as claimed.

  Further inspecting \cref{app:thm:nominal},
  which transports sheaves on $\rami$ to left $\auti$-sets
  as described in \cref{app:thm:nominal-situation},
  shows $\widehat{\RV_A}$ corresponds to the left $\auti$-action
  on equivalence classes $[p\in\rami, X\ofty\widehat{\RV_A}(p)]$ defined by
  \[\pi\cdot \big[(\hcube\xrightarrow p\Omega), X : \widehat{\RV_A}(p)\big]
  = \big[(\hcube\xrightarrow q\Omega'), \widehat{\RV_A}(r)(X)\big]
  \text{ where $(q, r : p\to q)$
  is an arbitrary $\rami\op$-iso $\pi$ refines.}\]
  The following diagram illustrates:
  \[
\begin{tikzcd}
  {\hcube} \arrow[rr, "\pi"] \arrow[dd, "p"'] \arrow[rdddddd, "X"] &   & {\hcube} \arrow[dd, "q", dashed] \arrow[ldddddd, "\widehat{\RV_A}(r)(X)", dashed] \\
                                                                  &   &                                                                                  \\
  \Omega \arrow[rdddd, "X'"']                                     &   & \Omega' \arrow[ll, "r", dashed]                                                  \\
                                                                  &   &                                                                                  \\
                                                                  &   &                                                                                  \\
                                                                  &   &                                                                                  \\
                                                                  & A &                                                                                 
  \end{tikzcd}
    \]
  The triangle on the left depicts
  the equivalence class $[p,X]$: by the calculation of the 
  carrier of $\widetilde{\widehat{\RV_A}}$
  above, this equivalence class corresponds to a random variable $X$
  that factors through $p$ via $X'$ as shown.
  The dashed arrows depict the action of the automorphism $\pi$ 
  on $X$. There exists by Closure arbitrary $\Omega',q,r$
  with $r$ a $\rami\op$-iso from $p$ to $q$ that $\pi$ refines;
  this amounts to $\Omega',q,r$ with $r : \rami(q,p)$ iso making the square commute
  as shown. The result of the action is the composite $\widehat{\RV_A}(r)(X) = X'rq$,
  illustrated by the dashed arrow $\hcube\to A$.
  Commutativity of the diagram and $\pi,r$ iso
  implies $\widehat{\RV_A}(r)(X) = X'rq = X'p\pi^{-1} = X\pi^{-1}$.
  Thus the left $\auti$-set corresponding to
  the sheaf $\widehat{\RV_A}$
  across equivalence (2) of \cref{app:thm:nominal}
  has action $(\pi,X)\mapsto X\pi^{-1}$.
  This corresponds under equivalence (3) of \cref{app:thm:nominal}
  to the right-$\auti$-action $(X,\pi)\mapsto X\pi$, as claimed.
\end{proof}

\begin{theorem} \label{app:lem:rv-corresp}
  The \goodsheaf{}
  $\nref{app:lem:rv-mble-sheaf}{\RV_A}$
  corresponds to the \agoodname{}
  $\gRV_A$
  across
  the equivalence
  in \cref{app:thm:nominal}.
\end{theorem}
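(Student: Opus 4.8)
The plan is to obtain this correspondence by composing the two transport lemmas that immediately precede it, since the equivalence of \cref{app:thm:nominal} factors as a chain $\Shjat(\StdMble)\simeq\Shjat(\rami)\simeq\tgsets{\big(\auti\big)\op}\simeq\tgsets{\auti}$, and the behaviour of $\RV_A$ along each link has already been recorded separately.

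First I would apply \cref{app:lem:rv-rami-sheaf} to transport the sheaf $\RV_A$ of \cref{app:lem:rv-mble-sheaf} across the first equivalence $\Shjat(\StdMble)\simeq\Shjat(\rami)$ coming from \cref{app:lem:raii-eq}, yielding the sheaf $\widehat{\RV_A}$ of random variables that factor through finite-footprint maps. Next I would feed $\widehat{\RV_A}$ into \cref{app:lem:rv-pnom}, which carries it across the two remaining equivalences into a continuous $\auti$-set whose carrier is $\hcube\ramap A$ --- the set of finite-footprint $A$-valued random variables --- with right action $X\cdot\pi = X\pi$. By \cref{app:def:rv-pnom}, this $\auti$-set is exactly $\gRV_A$. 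Concatenating these three identifications then shows that $\RV_A$ corresponds to $\gRV_A$ across the composite equivalence of \cref{app:thm:nominal}, as required.

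All of the substantive work lies in the two lemmas being composed: \cref{app:lem:rv-rami-sheaf} reindexes the sheaf along the forgetful equivalence $\rami\simeq\StdMble$ and checks that the naive relabelling $\widetilde{\RV_A}(p)=\RV_A(\Omega)$ is naturally isomorphic to the concrete description $\widehat{\RV_A}$ via $X\mapsto Xp$, while \cref{app:lem:rv-pnom} runs the colimit computation extracted from the proof of \cref{app:thm:nominal-situation}, showing that the filtered union of the subsets $\{X\mid X\text{ factors through }p\}$ of $\RV_A\hcube$ collapses to the finite-footprint random variables and that the induced action is $X\cdot\pi=X\pi$. I therefore expect no genuinely new obstacle here: the present theorem is purely a matter of chaining these two correspondences, and continuity of the resulting $\auti$-set --- hence its being a bona fide \agoodname{} --- is automatic from its arising in the image of the equivalence $\Shjat(\rami)\simeq\tgsets{\auti}$.
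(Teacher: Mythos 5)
Your proposal is correct and is exactly the paper's own argument: the proof there is literally to combine \cref{app:lem:rv-rami-sheaf} and \cref{app:lem:rv-pnom}, which respectively handle the link $\Shjat(\StdMble)\simeq\Shjat(\rami)$ and the remaining passage to continuous $\auti$-sets. Your additional remarks about where the substantive work lives accurately reflect how those two lemmas divide the labor.
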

\begin{proof}
  Combine \cref{app:lem:rv-rami-sheaf}
  and \cref{app:lem:rv-pnom}.
\end{proof}

\subsubsection{Probability spaces} \label{app:sec:pspcs-nom-calc}

\renewcommand\epbmeas[2]{{\nref{app:def:epb}{#1^*}}#2}
\begin{definition} \label{app:def:epb}
  Let $(X,\calF,\calN)$
  be an standard enhanced measurable space,
  $(Y,\calG,\mu)$ a standard probability space,
  and $f : (X,\calF,\calN)\to\forgetmu(Y,\calG,\mu)$
  a $\StdMble$-map.
  The \emph{pullback of $(Y,\calG,\mu)$ along $f$},
  written $\epbmeas f(\calG,\mu)$,
  is the pair $(\epbmeas f \calG,\epbmeas f \mu)$ defined by
  \begin{align*}
    &\epbmeas f\calG = \{f^{-1}(G)\triangle N \mid G\in\calG,N'\in\calN\} \\
    &\epbmeas f\mu(f^{-1}(G)\triangle N) = \mu(G)\text{ for all $G\in\calG,N\in\calN$}
  \end{align*}
  This operation makes $(X,\epbmeas f \calG,\epbmeas f \mu)$
  a probability space with negligibles $\calN$
  and $f$ a measure-preserving map $(X,\epbmeas f\calG,\epbmeas f\mu)\to(Y,\calG,\mu)$.
\end{definition}
\begin{proof}
  The set $\epbmeas f\calG$ is a $\sigma$-algebra:
  it contains the empty set because $\emptyset = f^{-1}(\emptyset)\triangle\emptyset\in\epbmeas f\calG$,
  it's closed under complements
  because $(f^{-1}(G)\triangle N)^{\rm c}
  = f^{-1}(G)^{\rm c}\triangle N
  = f^{-1}(G^{\rm c})\triangle N\in \epbmeas f\calG$
  for all $G\in\calG$.
  For closure under countable unions, fix a countable family $(f^{-1}(G_i)\triangle N_i)_{i\in \N}$.
  First
  \begin{align*}
    &x\in \bigcup_i (f^{-1}(G_i)\triangle N_i)\setminus \bigcup_i f^{-1}(G_i)\\
    &\text{iff } (\exists i. f(x)\in G_i\Leftrightarrow x\notin N_i)\land\forall i. f(x)\notin G_i\\
    &\text{iff } (\exists i. x\in N_i)\land\forall i. f(x)\notin G_i\\
    &\text{iff } x\in \bigcup_i N_i \cap \bigcap_i f^{-1}(G_i^{\rm c})
  \end{align*}
  and $\bigcup_i N_i \cap \bigcap_i f^{-1}(G_i^{\rm c})$
  is in $\calN$ because $\bigcup_i N_i$ is in $\calN$
  and $(X,\calF,\calN)$ arises from a complete probability space.
  Second
  \begin{align*}
    &x\in \bigcup_i f^{-1}(G_i)\setminus\bigcup_i (f^{-1}(G_i)\triangle N_i) \\
    &\text{iff } (\exists i. f(x)\in G_i)\land\forall i. f(x)\in G_i\Leftrightarrow x\in N_i \\
    &\text{iff } (\exists i. x\in N_i)\land\forall i. f(x)\in G_i\Leftrightarrow x\in N_i \\
    &\text{iff } x\in \bigcup_i N_i \cap \bigcap_i f^{-1}(G_i^{\rm c})\triangle N_i
  \end{align*}
  and $\bigcup_i N_i \cap \bigcap_i f^{-1}(G_i^{\rm c})\triangle N_i)$
  negligible similarly.
  Thus
  $
    \underbrace{\left(\bigcup_i (f^{-1}(G_i)\triangle N_i)\right)}_A
    \triangle \underbrace{\left(\bigcup_i f^{-1}(G_i)\right)}_B
    = (\underbrace{A\setminus B}_{\in\calN}) \uplus (\underbrace{B\setminus A}_{\in \calN})
    \in\calN
  $
  as required.

  The measure $\mu'$ is
  well-defined:
  if $f^{-1}(G)\triangle N' = f^{-1}(\overline G)\triangle \overline N'$
  for $G,\overline G\in\calG$ and $N',\overline N'\in\calN'$
  then rearranging using the algebraic properties of $\triangle$ gives
  $f^{-1}(G\triangle \overline G) = N'\triangle \overline N'\in \calN'$,
  so $G\triangle\overline G\in\calN$ because $f$ negligible-preserving,
  so $\mu(G\triangle\overline G) = 0$ because $\mu$ has negligibles $\calN$,
  so $\mu(G) = \mu(\overline G)$.

  The equation defining $\epbmeas f\mu$
  makes $f$ measure-preserving,
  and $\epbmeas f\mu$ has negligibles $\calN$ because
  if $G\in\calG$ and $N\in\calN$
  then $\epbmeas f\mu(f^{-1}(G)\triangle N) = 0$
  iff $\mu(G) = 0$, iff $G\in\negligibles(\mu)$,
  iff $f^{-1}(G)\in\calN$ (by $f$ negligible-reflecting),
  iff $f^{-1}(G)\triangle N\in\calN$.
\end{proof}

\begin{note}
  Informally, what this operation is doing
  is completing the usual pullback measure on $f^{-1}(\calG)$
  with respect to the the negligibles $\calN$.
  For example: if $f$ is the map $[0,1]\to \{\vtrue,\vfalse\}$
  given by the indicator function $\lambda x. [x<1/2]$
  where $[0,1]$ is given Lebesgue-negligibles
  and $\{\vtrue,\vfalse\}$ is given the uniform probability measure,
  then the usual pullback $\sigma$-algebra is
  the $4$-element $\sigma$-algebra $\calF$
  generated by the atoms $\{[0,1/2), [1/2, 1]\}$,
  with each atom getting probability $1/2$.
  But the operation described in \cref{app:def:epb}
  produces not the $\sigma$-algebra $\calF$
  but rather $\calF$ plus all Lebesgue-negligible subsets of $[0,1]$,
  so that it includes not just $[0,1/2)$
  but also the closed interval $[0,1/2]$,
  sets of the form $[0,1/2)\setminus \{x\}$ for all $x\in[0,1]$,
  the set $[0,1/2] \setminus \Q$, and so on.
\end{note}

\begin{definition}[probability space on an enhanced measurable space] \label{app:def:pspc-on-ems}
  Let $(\Omega,\calF,\calN)$ be an enhanced measurable space.
  A \emph{probability space on} $(\Omega,\calF,\calN)$
  is a pair $(\calG,\mu)$
  such that $\calN\subseteq \calG\subseteq\calF$
  and $\mu$ is a probability measure with negligibles $\calN$.
  Call such a pair \emph{standardizable} if $(\Omega,\calG,\mu)$
  arises via \nref{app:def:epb}{pullback}
  along a map $X : (\Omega,\calF,\calN)\to \forgetmu(Y,\calG,\mu)$
  with $(Y,\calG,\mu)$ a standard probability space.
  In case $(\Omega,\calF,\calN) = \hcube$,
  a standardizable probability space $(\calG,\mu)$
  has \emph{finite footprint}
  if it arises by pullback along a \nref{app:def:mblespcramap}{map with finite footprint}.
\end{definition}

\begin{definition} \label{app:def:pspcs-pnom}
  The \emph{\agoodname{} of probability spaces}
  is the set $\gpspcs$
  of standardizable probability spaces on $\hcube$ with finite footprint,
  and action $(\calF,\mu)\cdot\pi = \epbmeas{\pi}{(\calF,\mu)}$.
\end{definition}

We now show $\gpspcs$ indeed defines \aagoodname{}, and moreover corresponds to the
\goodsheaf{} $\pspcs$.

\begin{lemma} \label{app:lem:epb-resp-iso}
  Let $(\Omega,\calF,\calN)$
  be an enhanced measurable space,
  $X : (\Omega,\calF,\calN)\to\forgetmu(A,\calG,\mu)$
  a $\StdMble$-map,
  and $\pi : (A,\calG,\mu)\to(A',\calG',\mu')$
  a $\StdProb$-iso.
  Then $\epbmeas X{(\calG,\mu)}
  =\epbmeas{(\pi X)}{({\calG'}, {\mu'})}$.
\end{lemma}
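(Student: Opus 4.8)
Lemma~\ref{app:lem:epb-resp-iso} asserts that enhanced pullback is insensitive to postcomposition with a $\StdProb$-isomorphism: if $X:(\Omega,\calF,\calN)\to\forgetmu(A,\calG,\mu)$ and $\pi:(A,\calG,\mu)\to(A',\calG',\mu')$ is an iso in $\StdProb$, then $\epbmeas X{(\calG,\mu)}=\epbmeas{(\pi X)}{(\calG',\mu')}$. This is the well-definedness fact needed to make the action $(\calF,\mu)\cdot\pi=\epbmeas\pi{(\calF,\mu)}$ on $\gpspcs$ (\cref{app:def:pspcs-pnom}) respect the quotient by which elements of $\pspcs$ are identified, so that $\gpspcs$ genuinely corresponds to $\pspcs$.

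**The plan.** The proof is a direct unwinding of \cref{app:def:epb} on both sides, using only that $\pi$ is a measure-preserving bijection of Boolean algebras. First I would establish equality of the two $\sigma$-algebras. By definition $\epbmeas X\calG=\{X^{-1}(G)\triangle N\mid G\in\calG,\,N\in\calN\}$ and $\epbmeas{(\pi X)}{\calG'}=\{(\pi X)^{-1}(G')\triangle N\mid G'\in\calG',\,N\in\calN\}$. Since $\pi$ is a measurable bijection with $\pi(\calG)=\calG'$ (as it is a $\StdProb$-iso, hence induces a Boolean isomorphism $\calG/\mu\cong\calG'/\mu'$), we have $(\pi X)^{-1}(G')=X^{-1}(\pi^{-1}(G'))$, and as $G'$ ranges over $\calG'$ the set $\pi^{-1}(G')$ ranges over $\calG$ (modulo negligibles). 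The negligible corrections $\triangle N$ absorb the mod-negligible discrepancy, so the two families of sets coincide, giving $\epbmeas X\calG=\epbmeas{(\pi X)}{\calG'}$.

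Next I would check the two measures agree on this common $\sigma$-algebra. Fix an element, written as $X^{-1}(G)\triangle N$ for some $G\in\calG$ and $N\in\calN$. By \cref{app:def:epb}, $\epbmeas X\mu(X^{-1}(G)\triangle N)=\mu(G)$. Writing the same element in the form $(\pi X)^{-1}(G')\triangle N'$ for appropriate $G'\in\calG'$, we get $\epbmeas{(\pi X)}{\mu'}=\mu'(G')$, and I would identify $G'$ with $\pi(G)$ up to a negligible set. Because $\pi$ is measure-preserving, $\mu'(\pi(G))=\mu(G)$, so the two assigned values coincide. Here I must lean on the well-definedness already proved inside \cref{app:def:epb}: the value $\epbmeas f\mu(f^{-1}(G)\triangle N)$ does not depend on the chosen representative $(G,N)$, which frees me from worrying about the almost-everywhere slack introduced by $\pi$ being an iso only up to a.e.\ equality.

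**The main obstacle.** The only genuinely delicate point is that $\StdProb$-morphisms, and hence the iso $\pi$, are equivalence classes under almost-everywhere equality (\cref{app:def:stdprob}), so $\pi^{-1}(G')$ and $\pi(G)$ are only determined up to a negligible set of $\calG$, and ``$\pi$ a bijection of $\sigma$-algebras'' really means a Boolean isomorphism of the measure algebras $\calG/\mu\cong\calG'/\mu'$ rather than a literal set bijection. The reconciling observation is exactly that enhanced pullback closes $X^{-1}(\calG)$ under symmetric difference with all of $\calN$: since $X$ is negligible-reflecting, any a.e.\ ambiguity in $G'$ versus $\pi(G)$ pulls back to a set in $\calN$, which is swallowed by the $\triangle N$ closure and leaves both the $\sigma$-algebra and the measure unchanged. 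I would phrase the argument at the level of the measure algebra $\hcube\cong\calF/\calN$ where this slack vanishes, invoking \cref{app:lem:mblecat-dual} if a clean algebraic restatement is wanted, though an elementary set-level argument with explicit $\triangle N$ bookkeeping also suffices.
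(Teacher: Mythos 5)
Your proposal is correct and takes essentially the same route as the paper's proof: unwind the definition of enhanced pullback on both sides, absorb the almost-everywhere slack of the iso into the $\triangle\,\calN$ closure using the fact that $X$ reflects negligibles, and invoke measure-preservation for the equality of the two measures. The one refinement the paper makes, which you should adopt: rather than working with the direct image $\pi(G)$ (which need not be measurable and is only defined up to a.e.\ equivalence), it names the inverse iso $\tau : (A',\calG',\mu')\to(A,\calG,\mu)$ and takes $G' = \tau^{-1}(G)$, so that $G\,\triangle\,\pi^{-1}(\tau^{-1}G)$ is an honest $\mu$-negligible set and the whole argument stays at the level of preimages.
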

\begin{proof}
  Let $\tau : (A',\calG',\mu')\to(A,\calG,\mu)$ be the inverse of $\pi$
  (to avoid confusion with the operation $\pi^{-1}$ of taking $\pi$-preimages).
  First, the $\sigma$-algebras $\epbmeas X\calG$ and $\epbmeas{(\pi X)}{\calG'}$
  are equal.
  For any event $X^{-1}(G)\triangle N\in \epbmeas X\calG$
  with $G\in\calG,N\in\calN$,
  \begin{align*}
    X^{-1}(G) \triangle (\pi X)^{-1}(\tau^{-1} G)
    = X^{-1}(G)\triangle X^{-1}(\pi^{-1}(\tau^{-1} G)) 
    = X^{-1}(G)\triangle X^{-1}(M)
  \end{align*}
  for some $M\in\negligibles(\mu)$,
  so $X^{-1}(G)\triangle N 
  = (\pi X)^{-1}\underbrace{(\tau^{-1}G)}_{\in\calG'}\triangle 
  \underbrace{(N\triangle X^{-1}(M))}_{\in\calN} \in \epbmeas X{\calG'}$.
  This shows $\epbmeas X\calG\subseteq\epbmeas{(\pi X)}{\calG'}$.
  Running the same argument with the roles of $\pi$ and $\tau$ swapped
  shows the converse inclusion.
  Next, the measures are equal:
  for $X^{-1}(G)\triangle N\in \epbmeas X\calG$
  with $G\in\calG,N\in\calN$,
  \begin{align*}
    \epbmeas {(\pi X)}{\mu'}(X^{-1}(G)\triangle N)
    &= \epbmeas {(\pi X)}{\mu'}((\pi X)^{-1}\underbrace{(\tau^{-1}G)}_{\in\calG'}\triangle \underbrace{X^{-1}(M)}_{\in \calN})
      \text{ for some $M\in\negligibles(\mu)$, as in the argument above} \\
    &= \mu'(\tau^{-1}(G)) \text{ by definition of $\epbmeas{(\pi X)}{\mu'}$}\\
    &= \mu(G) \text{ because $\tau$ measure-preserving}\\
    &=\epbmeas X\mu(X^{-1}(G)\triangle N).
  \end{align*}
\end{proof}

\begin{lemma} \label{app:lem:epb-functorial}
  Pullback of probability spaces respects
  respects composition of $\StdMble$-maps:
  for all $\StdMble$-maps
  $f : (X,\calF,\calN)\to (Y,\calG,\calM)$
  and $g : (Y,\calG,\calM)\to\forgetmu(Z,\calH,\mu)$
  it holds that $\epbmeas{(gf)}{(\calH,\mu)}
  = \epbmeas f{\epbmeas g{(\calH,\mu)}}$.
\end{lemma}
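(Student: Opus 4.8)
The plan is to unfold both sides using the explicit description of enhanced pullback in \cref{app:def:epb} and to check that the resulting $\sigma$-algebras and measures coincide. First I would record the data. Since $f$ is a $\StdMble$-map $(X,\calF,\calN)\to(Y,\calG,\calM)$ and $g$ a $\StdMble$-map $(Y,\calG,\calM)\to\forgetmu(Z,\calH,\mu)$, the composite $gf$ is a $\StdMble$-map $(X,\calF,\calN)\to\forgetmu(Z,\calH,\mu)$, so $\epbmeas{(gf)}{(\calH,\mu)}$ is the probability space on $(X,\calF,\calN)$ with $\epbmeas{(gf)}\calH=\{(gf)^{-1}(H)\triangle N\mid H\in\calH,\,N\in\calN\}$ and $\epbmeas{(gf)}\mu((gf)^{-1}(H)\triangle N)=\mu(H)$. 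On the other hand, \cref{app:def:epb} makes $\epbmeas g{(\calH,\mu)}$ a probability space on $(Y,\calG,\calM)$ with negligibles $\calM$; since the defining formula of \cref{app:def:epb} never uses standardness of its second argument, and $f$ remains measurable and negligible-preserving-and-reflecting as a map into it (because $\epbmeas g\calH\subseteq\calG$), the outer pullback $\epbmeas f{\epbmeas g{(\calH,\mu)}}$ is legitimate and given by $\epbmeas f{\epbmeas g\calH}=\{f^{-1}(E)\triangle N\mid E\in\epbmeas g\calH,\,N\in\calN\}$ with $\epbmeas f{\epbmeas g\mu}(f^{-1}(E)\triangle N)=\epbmeas g\mu(E)$. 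It therefore suffices to show these two probability spaces on $(X,\calF,\calN)$ share the same underlying $\sigma$-algebra and the same measure.

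For the $\sigma$-algebras, I would use that $(gf)^{-1}=f^{-1}\circ g^{-1}$, that $f^{-1}$ commutes with symmetric differences, and that $f$ is negligible-reflecting, so $f^{-1}(M)\in\calN$ for every $M\in\calM$. For the inclusion $\epbmeas f{\epbmeas g\calH}\subseteq\epbmeas{(gf)}\calH$, an element $f^{-1}(E)\triangle N$ with $E=g^{-1}(H)\triangle M$ rewrites as $f^{-1}(g^{-1}(H))\triangle\big(f^{-1}(M)\triangle N\big)=(gf)^{-1}(H)\triangle N''$ where $N''=f^{-1}(M)\triangle N\in\calN$. The reverse inclusion is immediate: taking $M=\emptyset$ shows $g^{-1}(H)\in\epbmeas g\calH$, so $(gf)^{-1}(H)\triangle N=f^{-1}(g^{-1}(H))\triangle N$ lies in $\epbmeas f{\epbmeas g\calH}$.

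For the measures, both are well-defined by \cref{app:def:epb} (using that $f$ and $gf$ are negligible-preserving), so it is enough to evaluate them on one representative of each event. Writing a common event as $(gf)^{-1}(H)\triangle N=f^{-1}(g^{-1}(H))\triangle N$, the left-hand measure gives $\epbmeas{(gf)}\mu((gf)^{-1}(H)\triangle N)=\mu(H)$, while the right-hand measure, applied with $E=g^{-1}(H)\in\epbmeas g\calH$, gives $\epbmeas f{\epbmeas g\mu}(f^{-1}(g^{-1}(H))\triangle N)=\epbmeas g\mu(g^{-1}(H))=\mu(H)$, so the two agree. The step I expect to require the most care is the legitimacy of the iterated pullback asserted in the first paragraph: one must verify that the intermediate data $(Y,\epbmeas g\calH,\epbmeas g\mu)$ together with $f$ satisfies the hypotheses of \cref{app:def:epb}, i.e. that $\epbmeas g\calH\subseteq\calG$ keeps $f$ measurable and negligible-preserving-and-reflecting, even though this intermediate probability space need not itself be standard. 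Everything after that is routine bookkeeping with symmetric differences.
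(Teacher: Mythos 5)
Your proof is correct and follows essentially the same route as the paper's: both unfold the two pullbacks, identify the $\sigma$-algebras via the rewriting $f^{-1}(g^{-1}(H)\triangle M)\triangle N = (gf)^{-1}(H)\triangle\bigl(f^{-1}(M)\triangle N\bigr)$ with $f^{-1}(M)\triangle N\in\calN$ by negligible-reflection (and the $M=\emptyset$ trick for the reverse inclusion), and then check the measures agree by evaluating the defining formulas on representatives. Your additional paragraph justifying the legitimacy of the iterated pullback --- noting that $(Y,\epbmeas g{\calH},\epbmeas g{\mu})$ need not be standard but that the construction and its well-definedness only need $f$ to remain measurable and negligible-preserving-and-reflecting into it --- addresses a point the paper passes over silently, and is a welcome strengthening rather than a deviation.
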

\begin{proof}
  First, the $\sigma$-algebras are equal.
  Every $(gf)^{-1}(H)\triangle N\in \epbmeas{(gf)}\calH$
  is equal to $f^{-1}(g^{-1}(H)\triangle\emptyset)\triangle N\in\epbmeas{f}{\epbmeas g\calH}$
  and conversely every $f^{-1}(g^{-1}(H)\triangle M)\triangle N\in\epbmeas{f}{\epbmeas g\calH}$
  is equal to $(gf)^{-1}(H)\triangle \underbrace{f^{-1}(M)\triangle N}_{\in\calN}\in\epbmeas{(gf)}{\calH}$.
  Second, the measures are equal:
  \begin{align*}
    \epbmeas f{\epbmeas g\mu}(f^{-1}(g^{-1}(H)\triangle M)\triangle N)
    =\epbmeas g\mu(g^{-1}(H)\triangle M)
    =\mu(H)
    =\epbmeas {(gf)}\mu((gf)^{-1}(H)\triangle (f^{-1}(M)\triangle N))
    =\epbmeas {(gf)}\mu(f^{-1}(g^{-1}(H)\triangle M)\triangle N)
  \end{align*}
  for all $(f^{-1}(g^{-1}(H)\triangle M)\triangle N)\in\epbmeas f{\epbmeas g\calH}$.
\end{proof}

\begin{lemma} \label{app:lem:epb-bij-hom}
  For $f : (X,\calF,\calN)\to\forgetmu(Y,\calG,\mu)$
  a $\StdMble$-map,
  the measure algebra homomorphism
  $\measalg(X,\epbmeas f\calG,\epbmeas f \mu)\xleftarrow{\measalg(f)}\measalg(Y,\calG,\mu)$
  is bijective.
\end{lemma}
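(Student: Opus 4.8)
The plan is to verify directly that $\measalg(f)$ is both injective and surjective, from which it follows that it is a bijective complete-Boolean-algebra homomorphism, hence an isomorphism of measure algebras. Recall from \cref{app:cons:measalg-hom} that $\measalg(f)$ is the map $\measalg(Y,\calG,\mu)\to\measalg(X,\epbmeas f\calG,\epbmeas f\mu)$ sending $[G]$ to $[f^{-1}(G)]$; here the source algebra is $\calG/\negligibles(\mu)$ and the target algebra is $\epbmeas f\calG/\calN$, since by \cref{app:def:epb} the enhanced pullback $(X,\epbmeas f\calG,\epbmeas f\mu)$ has negligible ideal exactly $\calN$.

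For surjectivity, I would appeal to the very definition of $\epbmeas f\calG$ in \cref{app:def:epb}: every element of the target algebra is the class of a set of the form $f^{-1}(G)\triangle N$ with $G\in\calG$ and $N\in\calN$. Because the target quotients by $\calN$ and $N\in\calN$, this class equals $[f^{-1}(G)]=\measalg(f)[G]$, so every element lies in the image of $\measalg(f)$.

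For injectivity, suppose $\measalg(f)[G]=\measalg(f)[G']$, i.e.\ $[f^{-1}(G)]=[f^{-1}(G')]$ in $\epbmeas f\calG/\calN$. This means $f^{-1}(G)\triangle f^{-1}(G')=f^{-1}(G\triangle G')\in\calN$. Since $f$ is a $\StdMble$-map into $\forgetmu(Y,\calG,\mu)$ it is in particular negligible-preserving (\cref{app:def:neg-refl}), so $G\triangle G'\in\negligibles(\mu)$, giving $[G]=[G']$.

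The argument is short, so I do not expect a substantial obstacle; the only care needed is bookkeeping about which $\sigma$-algebra and which negligible ideal each equivalence class lives in. In particular, the key observations are that the target algebra's ideal is precisely $\calN$ (so the $N$ summand is absorbed in the surjectivity step) and that injectivity uses exactly the negligible-preserving half of $f$ being a $\StdMble$-map.
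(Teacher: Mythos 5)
Your proof is correct and takes essentially the same route as the paper: the surjectivity half is identical (every class $[f^{-1}(G)\triangle N]$ in $\measalg(X,\epbmeas f\calG,\epbmeas f\mu)$ collapses to $[f^{-1}(G)]=\measalg(f)[G]$ precisely because $\negligibles(\epbmeas f\mu)=\calN$). The only difference is injectivity, which you prove directly from the negligible-preserving half of $f$ being a $\StdMble$-map (from $f^{-1}(G\triangle G')\in\calN$ conclude $G\triangle G'\in\negligibles(\mu)$), whereas the paper cites Fremlin's 324K(a) that measure-preserving measure-algebra homomorphisms are injective; your argument is a correct, self-contained unfolding of the same fact and needs no further repair.
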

\begin{proof}
  The homomorphism $\measalg(f)$ is automatically injective
  because $f$ measure-preserving as a map $(X,\epbmeas f\calG,\epbmeas f\mu)\to(Y,\calG,\mu)$
  and measure-preserving homomorphisms are injective~\fremlinf{324K(a)}{32}{25}.
  It's surjective because every element of $\measalg(X,\epbmeas f \calG,\epbmeas f\mu)$
  is an equivalence class
  of the form $[f^{-1}(G)\triangle N]$
  in the quotient boolean algebra $\epbmeas f \calG/\negligibles(\epbmeas f \mu)$
  for some $G\in\calG,N\in\calN$,
  and \begin{align*}
    [f^{-1}(G)\triangle N]
    \stackrel{(*)}= [f^{-1}(G)]
    = \measalg(f)[G]
  \end{align*}
  where $(*)$ holds because $\epbmeas f\mu$ has negligibles $\calN$.
\end{proof}

\begin{lemma} \label{app:lem:mble-pspcs-repr-on}
  The sheaf $\pspcs$ is equivalently
  a sheaf of standardizable probability spaces, whose action on objects is
  \begin{align*}
    \pspcs\Omega
    \cong \{\text{standardizable probability spaces $(\calG,\mu)$ \nref{app:def:pspc-on-ems}{on} $\Omega$}\}
  \end{align*}
  and whose action on morphisms takes \nref{app:def:epb}{pullbacks of probability spaces}:
  \begin{align*}
    \pspcs(f : \Omega'\to\Omega)(\calG,\mu)
    = \epbmeas f{(\calG,\mu)}
  \end{align*}
\end{lemma}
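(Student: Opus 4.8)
The plan is to exhibit a natural isomorphism $\alpha$ between $\pspcs$ and the presheaf $\hat\pspcs$ appearing in the statement, which sends each $\Omega$ to the set of standardizable probability spaces on it and each $f$ to enhanced pullback. First I would define the forward component $\alpha_\Omega : \pspcs\Omega\to\hat\pspcs\Omega$ by sending an equivalence class $[(A,\calG,\mu),X]$ to the enhanced pullback $\epbmeas X{(\calG,\mu)}$. This lands in $\hat\pspcs\Omega$ because $\epbmeas X{(\calG,\mu)}$ is by construction (\cref{app:def:epb}) a probability space on $\Omega$ with negligibles $\calN$, and it is standardizable precisely because it arises by enhanced pullback along the map $X$ into the standard probability space $(A,\calG,\mu)$. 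That $\alpha_\Omega$ respects the equivalence relation defining $\pspcs\Omega$ is exactly the content of \cref{app:lem:epb-resp-iso}: whenever $i$ is a $\StdProb$-iso with $X' = \forgetmu(i)\,X$, one has $\epbmeas X{(\calG,\mu)} = \epbmeas{X'}{(\calG',\mu')}$, so $\alpha_\Omega$ is well-defined.

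For the inverse I would send a standardizable probability space $(\calG,\mu)$ on $\Omega$ to $[(A,\calH,\nu),X]$, where $X:\Omega\to\forgetmu(A,\calH,\nu)$ is any map witnessing standardizability, i.e. with $(\calG,\mu)=\epbmeas X{(\calH,\nu)}$. The crux, and the step I expect to be the main obstacle, is showing this assignment is independent of the choice of witnessing map $X$, so that it is well-defined into $\pspcs\Omega$. Suppose $(\calG,\mu)=\epbmeas X{(\calH,\nu)}=\epbmeas{X'}{(\calH',\nu')}$ for two maps into standard probability spaces. By \cref{app:lem:epb-bij-hom}, each of $\measalg(X)$ and $\measalg(X')$ is a \emph{bijective} measure-algebra homomorphism onto the common algebra $\measalg(\Omega,\calG,\mu)$, so $\measalg(X)^{-1}\circ\measalg(X')$ is a measure-algebra isomorphism of the standard probability algebras of $(A',\calH',\nu')$ and $(A,\calH,\nu)$. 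Transporting it back across the duality $\StdProb\simeq\StdProbAlgop$ of \cref{app:lem:stdprob-dual} yields a $\StdProb$-iso $i:(A,\calH,\nu)\to(A',\calH',\nu')$ with $\measalg(i)=\measalg(X)^{-1}\circ\measalg(X')$. A short computation, $\measalg(\forgetmu(i)\,X)=\measalg(X)\circ\measalg(i)=\measalg(X')$, combined with faithfulness of $\measalg$ on $\StdMble$ (\cref{app:lem:mblecat-dual}), gives $\forgetmu(i)\,X\aseq X'$. This is exactly the relation identifying $[(A,\calH,\nu),X]$ with $[(A',\calH',\nu'),X']$ in $\pspcs\Omega$, so the inverse is well-defined.

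With both maps in hand, that they are mutually inverse is immediate from their definitions: forward-then-backward returns the original class since $X$ is a valid witness of standardizability for $\epbmeas X{(\calG,\mu)}$, and backward-then-forward returns $(\calG,\mu)$ since $\epbmeas X{(\calH,\nu)}=(\calG,\mu)$ by assumption. Finally I would check naturality. The action of $\pspcs$ on a morphism $f:\Omega'\to\Omega$ is precomposition, $[(A,\calG,\mu),X]\mapsto[(A,\calG,\mu),Xf]$, so under $\alpha$ naturality amounts to the identity $\epbmeas{(Xf)}{(\calG,\mu)}=\epbmeas f{\epbmeas X{(\calG,\mu)}}$, which is precisely functoriality of enhanced pullback (\cref{app:lem:epb-functorial}). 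Hence $\alpha$ is a natural isomorphism, and $\pspcs$ is equivalent to the described sheaf of standardizable probability spaces with enhanced pullback as its action on morphisms.
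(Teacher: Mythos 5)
Your proposal is correct and follows essentially the same route as the paper's proof: the same forward map $[(A,\calG,\mu),X]\mapsto\epbmeas X{(\calG,\mu)}$, well-definedness via \cref{app:lem:epb-resp-iso}, naturality via \cref{app:lem:epb-functorial}, and the same key step of composing the bijective homomorphisms from \cref{app:lem:epb-bij-hom} and transporting across the measure-algebra duality to produce the $\StdProb$-iso identifying the two representatives. The only difference is packaging --- you build an explicit inverse and check well-definedness of the witness choice, where the paper phrases the identical argument as injectivity plus surjectivity of the forward map.
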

\begin{proof}
  Any $[(A,\calF,\mu),X : \Omega\to A]\in \pspcs\Omega$
  gives rise to a standardizable probability space $\epbmeas X(\calF,\mu)$.
  This operation respects the equivalence class $[A,X]$
  by \cref{app:lem:epb-resp-iso}, and
  is surjective by definition of standardizable probability space.
  It is natural in $\Omega$ by \cref{app:lem:epb-functorial}.
  All that's left is to show injectivity.
  Fix $[(A,\calF,\mu),X],[(B,\calG,\nu),Y]$
  with $(\epbmeas X\calF,\epbmeas X\mu) = (\epbmeas Y\calG,\epbmeas Y\nu)$.
  Applying \cref{app:lem:epb-bij-hom}
  to $X$ and $Y$
  gives bijective measure-algebra homomorphisms
  \[ \measalg(X) : \measalg(A,\calF,\mu)\to \measalg(\Omega,\epbmeas X\calF,\epbmeas X\mu)
  \qquad\text{and}\qquad
    \measalg(Y) : \measalg(B,\calG,\nu)\to \measalg(\Omega,\epbmeas Y\calG,\epbmeas Y\nu).
  \]
  The composition $i^* := \measalg(X)^{-1}\measalg(Y) : \measalg(B,\calG,\nu)\to\measalg(A,\calF,\mu)$,
  well-typed because $(\epbmeas X\calF,\epbmeas X\mu) = (\epbmeas Y\calG,\epbmeas Y\nu)$,
  satisfies $\measalg(X) i^* = \measalg(Y)$,
  so corresponds by \cref{app:lem:mblecat-dual}
  to a $\StdProb$-iso $i : (A,\calF,\mu)\to(B,\calG,\nu)$
  such that $\forgetmu(i)X = Y$,
  witnessing $[(A,\calF,\mu),X] = [(B,\calG,\nu),Y]$.
\end{proof}

\begin{lemma} \label{app:lem:epbmeas-inj}
  For $f:(X,\calF,\calN)\to(Y,\calG,\calM)$ a $\StdMble$-map
  and $(\calH,\mu),(\calH',\mu')$ probability spaces on $(Y,\calG,\calM)$,
  if $\epbmeas f (\calH,\mu)= \epbmeas f({\calH'},{\mu'})$
then $(\calH,\mu) = (\calH',\mu')$.
\end{lemma}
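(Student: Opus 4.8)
The plan is to show the pullback operation $\epbmeas f{(-)}$ is injective on probability spaces over $(Y,\calG,\calM)$ by recovering both the $\sigma$-algebra and the measure from the pulled-back data, the key being that $f$ is negligible-preserving-and-reflecting. First I would check the two pullbacks in the hypothesis are legitimately formed by the recipe of \cref{app:def:epb}: since $(\calH,\mu)$ is a probability space on $(Y,\calG,\calM)$ we have $\calM\subseteq\calH\subseteq\calG$ with $\mu$ having negligibles $\calM$, so $f$ restricts to a $\StdMble$-map into $\forgetmu(Y,\calH,\mu)$ (measurability and the preserve/reflect condition on $\calG$ restrict to the smaller $\calH$), and similarly for $(\calH',\mu')$. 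Thus $\epbmeas f{(\calH,\mu)}$ and $\epbmeas f{(\calH',\mu')}$ are given by the explicit formulas $\epbmeas f\calH=\{f^{-1}(H)\triangle N\mid H\in\calH,N\in\calN\}$ and $\epbmeas f\mu(f^{-1}(H)\triangle N)=\mu(H)$, and likewise with primes.

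The main step is recovering the $\sigma$-algebra, i.e.\ proving $\calH=\calH'$. Fix $H\in\calH$. Then $f^{-1}(H)=f^{-1}(H)\triangle\emptyset$ lies in $\epbmeas f\calH$, which by assumption equals $\epbmeas f\calH'$, so there exist $H'\in\calH'$ and $N\in\calN$ with $f^{-1}(H)=f^{-1}(H')\triangle N$. Rearranging using the algebraic properties of $\triangle$ gives $f^{-1}(H\triangle H')=N\in\calN$, and since $H\triangle H'\in\calG$ and $f$ is negligible-preserving, this forces $H\triangle H'\in\calM$. As $(\calH',\mu')$ is a probability space over $(Y,\calG,\calM)$ we have $\calM\subseteq\calH'$, so $H\triangle H'\in\calH'$; combined with $H'\in\calH'$ and closure under $\triangle$ this yields $H=(H\triangle H')\triangle H'\in\calH'$. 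Hence $\calH\subseteq\calH'$, and the symmetric argument gives the reverse inclusion, so $\calH=\calH'$.

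Once the $\sigma$-algebras coincide, recovering the measure is immediate. For any $H\in\calH=\calH'$, taking $N=\emptyset$ in the defining equations gives $\mu(H)=\epbmeas f\mu(f^{-1}(H))$ and $\mu'(H)=\epbmeas f{\mu'}(f^{-1}(H))$; since $\epbmeas f\mu=\epbmeas f{\mu'}$ as functions on the common $\sigma$-algebra $\epbmeas f\calH=\epbmeas f\calH'$, the two right-hand sides agree, so $\mu(H)=\mu'(H)$, whence $(\calH,\mu)=(\calH',\mu')$. I do not anticipate a genuine obstacle: the only points needing care are confirming the pullbacks are well-defined before comparing them (using $\calM\subseteq\calH,\calH'$ together with the preserve/reflect property) and applying negligible-preservation in exactly the right direction to pass from $f^{-1}(H\triangle H')\in\calN$ to $H\triangle H'\in\calM$. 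A cleaner but equivalent route would argue through measure algebras, using that $\measalg(f)$ is injective (\cref{app:lem:mblecat-map-char}) and that \cref{app:lem:epb-bij-hom} makes $\measalg(f)$ a bijection onto $\measalg(X,\epbmeas f\calH,\epbmeas f\mu)$; but the direct set-level computation above is the shorter path.
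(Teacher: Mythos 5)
Your proof is correct and follows essentially the same route as the paper's: the $\sigma$-algebra recovery step (writing $f^{-1}(H)=f^{-1}(H')\triangle N$, rearranging to $f^{-1}(H\triangle H')\in\calN$, applying negligible-preservation to get $H\triangle H'\in\calM\subseteq\calH'$, and concluding $H\in\calH'$) is word-for-word the paper's argument. Your measure step is marginally streamlined --- having $\calH=\calH'$ in hand you evaluate both pullback measures at $f^{-1}(H)$ directly with $N=\emptyset$, where the paper routes through the representative $f^{-1}(H')\triangle N$ and one extra negligibility observation --- but it is the same idea and equally valid.
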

\begin{proof}
  If $H\in\calH$ then
  $f^{-1}(H) = f^{-1}(H)\triangle\emptyset\in\epbmeas f \calH = \epbmeas f{\calH'}$,
  so $f^{-1}(H) = f^{-1}(H')\triangle N$ for some $H'\in\calH'$
  and $N\in\calN$, so $f^{-1}(H)\triangle f^{-1}(H') = f^{-1}(H\triangle H') = N\in\calN$,
  so $H\triangle H'\in\calM$ since $f$ negligible-preserving,
  so $H\triangle H'\in\calH'$ since $\calH'\supseteq\calM$,
  so $H'\triangle (H\triangle H') = H\in\calH'$.
  This shows $\calH\subseteq\calH'$;
  the converse inclusion follows from an analogous argument
  with the roles of $\calH,\calH'$ swapped.
  Thus $\calH = \calH'$.
  Finally, if $H\in\calH$
  then $f^{-1}(H) = f^{-1}(H')\triangle N$ for some $H'\in\calH',N\in\calN$,
  so $\mu(H) = \epbmeas f\mu(f^{-1}(H))
  =\epbmeas f\mu'(f^{-1}(H')\triangle N)
  = \mu'(H') = \mu(H)$
  where the last equality follows from $f^{-1}(H)\triangle f^{-1}(H') = N\in\calN$
  and $f$ negligible-preserving and $\negligibles(\mu) = \negligibles(\mu') = \calM$.
\end{proof}

\begin{lemma} \label{app:lem:pspcs-rami}
  Across the equivalence
  $\Shjat(\StdMble) \simeq \Shjat(\rami)$
  given by \cref{app:thm:nominal},
  the \nref{app:def:mble-pspcs}{sheaf}
  $\pspcs\in\Shjat(\StdMble)$
  of standardizable probability spaces
  described in \cref{app:lem:mble-pspcs-repr-on}
  corresponds to a sheaf $\widehat{\pspcs}\in\Shjat(\rami)$
  of standardizable probability spaces that factor through maps with finite footprint:
  \begin{align*}
    &\widehat{\pspcs}(p : \hcube\to \Omega)
    = \{ (\calG,\nu)\text{ a standardizable probability space \nref{app:def:pspc-on-ems}{on} $\hcube$}
    \mid \text{there exists a unique $(\calF,\mu)$ \nref{app:def:pspc-on-ems}{on} $\Omega$ with $(\calG,\nu) = \epbmeas p(\calF,\mu)$}
    \} \\
    &\widehat{\pspcs}(q : p'\to p)(\epbmeas p(\calF,\mu)) = {\epbmeas{(qp')}{(\calF,\mu)}}
  \end{align*}
\end{lemma}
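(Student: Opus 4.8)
The plan is to follow the strategy of \cref{app:lem:rv-rami-sheaf}, which handles the completely analogous statement for the random-variable sheaf. Writing $\widetilde{(\cdot)}\colon\Shjat(\StdMble)\to\Shjat(\rami)$ for the equivalence of \cref{app:thm:nominal}, inspection of its construction shows $\widetilde{\pspcs}(p\colon\hcube\to\Omega)=\pspcs(\Omega)$ on objects and $\widetilde{\pspcs}(q)=\pspcs(q)$ on morphisms, and by \cref{app:lem:mble-pspcs-repr-on} the set $\pspcs(\Omega)$ is that of standardizable probability spaces on $\Omega$, with action by enhanced pullback. So it suffices to exhibit a natural isomorphism $\alpha\colon\widetilde{\pspcs}\xrightarrow{\sim}\widehat{\pspcs}$. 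I would define $\alpha_p$ to send a standardizable probability space $(\calF,\mu)$ on $\Omega$ to its enhanced pullback $\epbmeas{p}{(\calF,\mu)}$ along $p$, with candidate inverse sending an element $\epbmeas{p}{(\calF,\mu)}$ of $\widehat{\pspcs}(p)$ to the unique $(\calF,\mu)$ witnessing it.

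For well-definedness of $\alpha_p$, note that $\epbmeas{p}{(\calF,\mu)}$ is again standardizable with finite footprint: if $(\calF,\mu)=\epbmeas{X'}{(\calH,\rho)}$ for some $X'\colon\Omega\to\forgetmu(Y,\calH,\rho)$, then by functoriality of enhanced pullback (\cref{app:lem:epb-functorial}) it equals $\epbmeas{(X'p)}{(\calH,\rho)}$, and $X'p$ inherits finite footprint from $p$. It factors through $p$ by construction, and the factoring probability space on $\Omega$ is unique by injectivity of $\epbmeas{p}{(-)}$ (\cref{app:lem:epbmeas-inj}); this same injectivity yields injectivity of $\alpha_p$. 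Naturality in $p$ is then a direct calculation: for a $\rami$-morphism $q\colon p'\to p$, both $\alpha_{p'}\circ\widetilde{\pspcs}(q)$ and $\widehat{\pspcs}(q)\circ\alpha_p$ send $(\calF,\mu)$ to $\epbmeas{(qp')}{(\calF,\mu)}$, using \cref{app:lem:epb-functorial} and the stated formula for $\widehat{\pspcs}$ on morphisms.

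The one genuinely new step, and the main obstacle, is surjectivity of $\alpha_p$: showing that the unique $(\calF,\mu)$ on $\Omega$ produced by an element $(\calG,\nu)=\epbmeas{p}{(\calF,\mu)}$ of $\widehat{\pspcs}(p)$ is itself \emph{standardizable}, a condition the random-variable proof never has to maintain. I would argue through measure algebras, as in \cref{app:lem:rv-mble-sheaf}. Choose a witness $X\colon\hcube\to\forgetmu(Y,\calH,\rho)$ for standardizability of $(\calG,\nu)$; by \cref{app:lem:epb-bij-hom} the homomorphism $\measalg(X)$ identifies the standard algebra $\measalg(Y,\calH,\rho)$ with $\measalg(\hcube,\calG,\nu)$. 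Computing images inside $\measalg(\hcube)$ from the definition of enhanced pullback gives $\img(\measalg(X))=\measalg(p)(\calF/\calN)\subseteq\img(\measalg(p))$, where $\measalg(p)$ is injective since $p$ is negligible-preserving. Postcomposing $\measalg(X)$ with the inverse of the corestriction of $\measalg(p)$ produces an injective complete-Boolean-algebra homomorphism $\measalg(Y)\to\measalg(\Omega)$, i.e.\ a $\StdMbleAlg$-morphism, which by the duality \cref{app:lem:mblecat-dual} is $\measalg(X')$ for a $\StdMble$-map $X'\colon\Omega\to\forgetmu(Y,\calH,\rho)$ with $X'p\aseq X$. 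Then $\epbmeas{p}{(\calF,\mu)}=\epbmeas{X}{(\calH,\rho)}=\epbmeas{p}{\epbmeas{X'}{(\calH,\rho)}}$, so injectivity of $\epbmeas{p}{(-)}$ (\cref{app:lem:epbmeas-inj}) forces $(\calF,\mu)=\epbmeas{X'}{(\calH,\rho)}$, exhibiting $(\calF,\mu)$ as standardizable and completing the bijection. I expect the bookkeeping identifying the images of $\measalg(p)$ and $\measalg(X)$ inside $\measalg(\hcube)$, and the careful invocation of the dualities to turn the algebra inclusion back into a factorization $X'p\aseq X$, to be the fiddliest part.
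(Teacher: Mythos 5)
Your proposal is correct and follows essentially the same route as the paper's proof: transport $\pspcs$ across the equivalence using \cref{app:lem:mble-pspcs-repr-on}, define $\alpha_p$ as enhanced pullback along $p$, and obtain well-definedness, injectivity, and naturality from \cref{app:lem:epbmeas-inj} and \cref{app:lem:epb-functorial}. The only divergence is that the paper declares bijectivity of $\alpha_p$ ``automatic,'' silently taking the unique factoring space $(\calF,\mu)$ to be standardizable, whereas your measure-algebra argument for surjectivity --- the image containment $\img(\measalg(X))\subseteq\img(\measalg(p))$, the duality \cref{app:lem:mblecat-dual} producing $X'$ with $X'p\aseq X$, and then \cref{app:lem:epbmeas-inj} --- genuinely proves this, so your write-up is more complete on exactly the point the paper elides.
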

\begin{proof}
  The proof is similar to \cref{app:lem:rv-rami-sheaf}.
  Let $\widetilde\cdot$ be the equivalence $\Shjat(\StdMble)\to\Shjat(\rami)$;
  across this equivalence and \cref{app:lem:mble-pspcs-repr-on},
  the sheaf $\widetilde\pspcs \in\Shjat(\rami)$ is
  \begin{align*}
    &\widetilde{\pspcs}(p : \hcube\to\Omega) = \{\text{standardizable probability spaces $(\calG,\mu)$ on $\Omega$}\} \\
    &\widetilde{\pspcs}(q : p'\to p)(\calF,\mu) = \epbmeas q{(\calF,\mu)}
  \end{align*}
  We now show $\widetilde{\pspcs}\cong\widehat{\pspcs}$.
  For $p : \hcube\to\Omega$ an object of $\rami$,
  let $\alpha_p$
  be the map $\widetilde{\pspcs}(p)\to\widehat{\pspcs}(p)$
  that sends $(\calF,\mu)\in\widetilde{\pspcs}(\Omega)$
  to $\epbmeas p{(\calF,\mu)}\in\widehat{\pspcs}(\Omega)$.
  This is well-defined: it produces elements
  of type $\widehat{\pspcs}(p)$ because $(\calF,\mu)$ is unique
  by \cref{app:lem:epbmeas-inj}.
  This automatically makes $\alpha_p$ bijective,
  with inverse 
  the map sending $(\calG,\nu)\in\widehat{\pspcs}(p)$
  to the unique $(\calF,\mu)\in\widetilde{\pspcs}(p)$ with $(\calG,\nu) = \epbmeas p{(\calF,\mu)}$.
  Finally, $\alpha$ is natural in $p$:
  if $(\calF,\mu)\in\widetilde{\pspcs}(p)$
  and $q : p'\to p$ a $\rami$-morphism,
  then
  $\alpha_{p'}(\widetilde{\pspcs}(q)(\calF,\mu))
  =\alpha_{p'}(\epbmeas q{(\calF,\mu)})
  =\epbmeas{(qp')}{(\calF,\mu)}
  = \widehat{\pspcs}(q)(\epbmeas p{(\calF,\mu)})
  = \widehat{\pspcs}(q)(\alpha_p(\calF,\mu))$
  by \cref{app:lem:epb-functorial}.
\end{proof}

\begin{lemma} \label{app:lem:pspcs-pnom}
  Across the equivalence
  $\Shjat(\rami) \simeq \tgsets{\auti}$
  given by \cref{app:thm:nominal},
  the sheaf
  $\widehat{\pspcs}\in\Shjat(\rami)$
  defined in \cref{app:lem:pspcs-rami}
  corresponds to the $\Aut_{\StdMble}\hcube$-set
  of standardizable probability spaces on $\hcube$ with finite footprint,
  with action $(\calF,\mu)\cdot \pi = \epbmeas\pi{(\calF,\mu)}$.
\end{lemma}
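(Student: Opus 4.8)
The plan is to transcribe the proof of \cref{app:lem:rv-pnom} verbatim, replacing random variables on $\hcube$ by standardizable probability spaces on $\hcube$, and replacing precomposition of functions by enhanced pullback $\epbmeas{\cdot}{\cdot}$. Let $\widetilde{\cdot}$ denote the equivalence $\Shjat(\rami)\simeq\tgsets{\auti}$ obtained by composing equivalence (2) of \cref{app:thm:nominal} (which sends a sheaf to a left $(\auti)\op$-set via \cref{app:thm:nominal-situation}) with equivalence (3) (which passes from left to right actions via \cref{app:lem:left-right-gsets}). By \cref{app:thm:nominal-situation}, the carrier of $\widetilde{\widehat{\pspcs}}$ is the colimit $\colim_{p\in P\op}\widehat{\pspcs}(p)$ over the preorder $P\op$ of objects $\hcube\xrightarrow p\Omega$ of $\rami$, ordered by $p\preceq q$ iff there is $r$ with $rp=q$.

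First I would compute this carrier. Each component $\widehat{\pspcs}(p)$ is, by \cref{app:lem:pspcs-rami}, the set of standardizable probability spaces on $\hcube$ arising via enhanced pullback along $p$, hence a subset of the single ambient set of all standardizable probability spaces on $\hcube$. Since $\widehat{\pspcs}$ is an atomic sheaf, every transition map in the colimiting diagram is an injective $\Set$-function (\cref{app:def:atomic-sheaf-condition}); unwinding the morphism action of \cref{app:lem:pspcs-rami} shows an inequality $(\hcube\xrightarrow p\Omega')\preceq(\hcube\xrightarrow q\Omega)$ witnessed by $r:\Omega'\to\Omega$ with $rp=q$ is sent to the map $\epbmeas{q}{(\calF,\mu)}\mapsto\epbmeas{(rp)}{(\calF,\mu)}=\epbmeas{q}{(\calF,\mu)}$, i.e.\ to the canonical inclusion between subsets. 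The colimit of such a diagram is the union of its components, which by \cref{app:def:pspc-on-ems} and \cref{app:def:rami} (a standardizable probability space on $\hcube$ has finite footprint iff it arises via pullback along some finite-footprint map, and the objects of $\rami$ are exactly the finite-footprint maps out of $\hcube$, cf.\ \cref{app:def:mblespcramap}) equals the set $\gpspcs$ of standardizable probability spaces on $\hcube$ with finite footprint.

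Next I would compute the action. Following the diagram chase in \cref{app:lem:rv-pnom}, the left $(\auti)\op$-action sends $[p,(\calG,\nu)]$ with $(\calG,\nu)=\epbmeas{p}{(\calF,\mu)}$ to $[q,\widehat{\pspcs}(r)(\calG,\nu)]$, where $(q,r)$ is an arbitrary $\rami\op$-iso $\pi$ refines, guaranteed by Closure, and refinement supplies the commuting square $rq=p\pi^{-1}$. Applying the morphism action of \cref{app:lem:pspcs-rami} together with functoriality of enhanced pullback (\cref{app:lem:epb-functorial}) gives $\widehat{\pspcs}(r)(\calG,\nu)=\epbmeas{(rq)}{(\calF,\mu)}=\epbmeas{(p\pi^{-1})}{(\calF,\mu)}=\epbmeas{(\pi^{-1})}{\epbmeas{p}{(\calF,\mu)}}=\epbmeas{(\pi^{-1})}{(\calG,\nu)}$. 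Thus the left action is $(\pi,(\calG,\nu))\mapsto\epbmeas{(\pi^{-1})}{(\calG,\nu)}$, and passing to the right action via \cref{app:lem:left-right-gsets} yields $(\calG,\nu)\cdot\pi=\epbmeas{\pi}{(\calG,\nu)}$, exactly the action in the statement.

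The load-bearing but routine points are the well-definedness of the colimit representation and the identification of the transition maps as canonical inclusions; these rely on uniqueness of the pullback factorization (\cref{app:lem:epbmeas-inj}) and on \cref{app:lem:epb-resp-iso} to see that the representation does not depend on the chosen standard probability space. The main obstacle I anticipate is bookkeeping the two layers of \cref{app:thm:nominal} correctly -- in particular tracking $\pi$ versus $\pi^{-1}$ through refinement and through the left-to-right conversion, and ensuring that replacing function composition by $\epbmeas{\cdot}{\cdot}$ interacts with \cref{app:lem:epb-functorial} so that $\epbmeas{(rq)}{(\calF,\mu)}$ collapses as above. Once these are pinned down the argument is a direct transcription of \cref{app:lem:rv-pnom}.
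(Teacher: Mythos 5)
Your proposal is correct and follows essentially the same route as the paper's own proof: both transcribe the argument of \cref{app:lem:rv-pnom}, identify the colimit carrier as the union of the subsets $\widehat{\pspcs}(p)$ (which equals $\gpspcs$ by the finite-footprint characterization), and compute the action via a refined $\rami\op$-iso with $rq = p\pi^{-1}$, using \cref{app:lem:epb-functorial} to collapse $\epbmeas{(p\pi^{-1})}{(\calF,\mu)}$ to $\epbmeas{(\pi^{-1})}{(\calG,\nu)}$ before converting left to right actions. The bookkeeping of $\pi$ versus $\pi^{-1}$ that you flagged as the main risk is handled exactly as the paper handles it.
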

\begin{proof}
  The proof is similar to \cref{app:lem:rv-pnom}.
  Let $\widetilde\cdot$ be the equivalence $\Shjat(\rami)\to \tgsets{\Aut_{\StdMble}\hcube}$.
  The proofs involved in its construction
  are: \cref{app:thm:nominal-situation} to pass from sheaves to $\big(\auti\big)\op$-sets,
  and then \cref{app:lem:left-right-gsets} to pass from left to right actions.
  As in the proof of \cref{app:lem:rv-pnom},
  the sheaf $\widehat{\pspcs}$
  is sent to an $\auti$-set $\widetilde{\widehat{\pspcs}}$
  whose carrier is a colimit over $\widehat\pspcs(p)$
  as $p$ ranges over the preorder on $\rami$-objects
  of the form
  $\hcube\xrightarrow p p$
  with ordering relation
  $(\hcube\xrightarrow p p)\preceq (\hcube\xrightarrow q q)$
  iff $rp = q$ for some (necessarily unique) $r$.
  For $\hcube\xrightarrow p p$,
  \[
    \widehat{\pspcs}(\hcube\xrightarrow p \Omega)
    = \{ (\calG,\nu)\text{ on }\hcube
    \mid \text{there exists a unique $(\calF,\mu)$ with $(\calG,\nu) = \epbmeas p{(\calF,\mu)}$} \},
  \]
  so all objects in the colimiting diagram are subsets of the set of standardizable probability spaces on $\hcube$.
  Ordering relations $(\hcube\xrightarrow p\Omega')\preceq (\hcube\xrightarrow q\Omega)$,
  which is to say maps $r : \Omega'\to\Omega$ with $rp = q$,
  are sent by $\widehat\pspcs$ to inclusions
  \begin{align*}
    \widehat\pspcs(q)
    &\hookrightarrow
    \widehat\pspcs(p)
    \\
    \epbmeas q{(\calF,\mu)}
    &\mapsto
    \epbmeas {(rp)}{(\calF,\mu)} = \epbmeas q{(\calF,\mu)}\text{ since $rp = q$}
  \end{align*}
  so morphisms of the colimiting diagram are
  the canonical ones among subsets of the set of standardizable probability spaces on $\hcube$.
  Thus the colimit defining the carrier of $\widetilde{\widehat{\RV_A}}$
  is the union of all such subsets:
  \begin{align*}
    \widetilde{\widehat{\pspcs}}
    &= \bigcup_{\big(\hcube\xrightarrow p\Omega\big)\in\rami}\{(\calG,\nu)\text{ standardizable on }\hcube\mid \text{$(\calG,\nu)$ factors through $p$}\}\\
    &= \{(\calG,\nu)\text{ standardizable on }\hcube\mid \text{$(\calG,\nu) = \epbmeas p{(\calF,\mu)}$
      for some $\big(\hcube\xrightarrow p\Omega\big)\in\rami$
      and $(\calF,\mu)$ standardizable on $\Omega$}\}\\
    &= \{(\calG,\nu)\text{ standardizable on }\hcube\mid \text{$(\calG,\nu)$ has finite footprint}\}
  \end{align*}
  Further inspecting \cref{app:thm:nominal},
  which transports sheaves on $\rami$ to left $\auti$-sets
  as described in \cref{app:thm:nominal-situation},
  shows $\widehat{\pspcs}$ corresponds to the left $\auti$-action
  on equivalence classes $[p\in\rami, (\calG,\nu))\ofty\widehat{\pspcs}(p)]$ defined by
  \[\pi\cdot \big[(\hcube\xrightarrow p\Omega), (\calG,\nu) \ofty \widehat{\pspcs}(p)\big]
  = \big[(\hcube\xrightarrow q\Omega'), \widehat{\pspcs}(r)(\calG,\nu)\big]
  \text{ where $(q, r : p\to q)$
  is an arbitrary $\rami\op$-iso $\pi$ refines.}\]
  In this case, this action amounts
  to sending a standardizable probability space $\epbmeas p{(\calF,\mu)}$
  (where $p$ is some map $\hcube\to\Omega$ with finite footprint)
  to $\epbmeas{(rq)}(\calF,\mu)$ where $q,r$ arbitrary such that
  $r$ iso and
  \[
\begin{tikzcd}
  \hcube \arrow[rr, "\pi"] \arrow[dd, "p"'] &  & \hcube \arrow[dd, "q"]  \\
                                            &  &                                 \\
  \Omega                                    &  & \Omega' \arrow[ll, "r"]
  \end{tikzcd}
    \]
  commutes. This implies $rq = p\pi^{-1}$,
  so the action sends $\epbmeas p{(\calF,\mu)}$
  to $\epbmeas{(p\pi^{-1})}{(\calF,\mu)}=\epbmeas{(\pi^{-1})}{\epbmeas{p}{(\calF,\mu)}}$~(\cref{app:lem:epb-functorial}).
  Since every element of $\widetilde{\widehat\pspcs}$
  is of the form $\epbmeas p{(\calF,\mu)}$ for some $p,\calF,\mu$,
  this shows the left $\auti$-set corresponding to
  the sheaf $\widehat{\pspcs}$
  across equivalence (2) of \cref{app:thm:nominal}
  has action $(\pi,(\calG,\nu))\mapsto \epbmeas{(\pi^{-1})}{(\calG,\nu)}$.
  This corresponds under equivalence (3) of \cref{app:thm:nominal}
  to the right $\auti$-action $((\calG,\nu),\pi)\mapsto \epbmeas\pi{(\calG,\nu)}$, as claimed.
\end{proof}

\begin{theorem} \label{app:lem:pspcs-corresp}
  The \goodsheaf{}
  $\pspcs$
  corresponds to the \agoodname{}
  $\gpspcs$
  across the equivalence
  in \cref{app:thm:nominal}.
\end{theorem}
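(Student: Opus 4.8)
The plan is to mirror the proof of the random-variable correspondence (\cref{app:lem:rv-corresp}) and assemble the result from the two intermediate lemmas that have already done all the real work. The equivalence of \cref{app:thm:nominal} is presented as a composite $\Shjat(\StdMble)\stackrel{(1)}\simeq\Shjat(\rami)\stackrel{(2)}\simeq\tgsets{\big(\auti\big)\op}\stackrel{(3)}\simeq\tgsets{\auti}$, so establishing the correspondence $\pspcs\leftrightarrow\gpspcs$ amounts to transporting $\pspcs$ step by step along this chain and checking that the endpoint is exactly $\gpspcs$.

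First I would transport $\pspcs$ across equivalence (1). By \cref{app:lem:pspcs-rami}, $\pspcs$ corresponds under $\Shjat(\StdMble)\simeq\Shjat(\rami)$ to the sheaf $\widehat{\pspcs}$ on $\rami$ of standardizable probability spaces that factor through finite-footprint maps, with action given by enhanced pullback. Then I would transport $\widehat{\pspcs}$ across the remaining two equivalences (2) and (3), which together give $\Shjat(\rami)\simeq\tgsets{\auti}$. By \cref{app:lem:pspcs-pnom}, $\widehat{\pspcs}$ corresponds to the $\auti$-set whose carrier is the set of standardizable probability spaces on $\hcube$ with finite footprint and whose action is $(\calG,\nu)\cdot\pi=\epbmeas{\pi}{(\calG,\nu)}$ --- which is precisely the \agoodname{} $\gpspcs$ of \cref{app:def:pspcs-pnom}. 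Composing these two transports yields the claimed correspondence; in the written proof this collapses to the single line ``Combine \cref{app:lem:pspcs-rami} and \cref{app:lem:pspcs-pnom}.''

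The final composition is routine, and all of the genuine difficulty lives upstream in the two lemmas. The harder one, \cref{app:lem:pspcs-rami}, rests on \cref{app:lem:mble-pspcs-repr-on}, which reconciles the colimit presentation of $\pspcs$ with its concrete description as a sheaf of standardizable probability spaces acted on by enhanced pullback; proving that description faithful requires the injectivity of enhanced pullback (\cref{app:lem:epbmeas-inj}) together with the bijectivity of the induced measure-algebra homomorphism (\cref{app:lem:epb-bij-hom}). The subtle point in \cref{app:lem:pspcs-pnom} is bookkeeping the group action: the colimit-over-supports construction of \cref{app:thm:nominal-situation} produces a \emph{left} action sending $(\pi,(\calG,\nu))$ to $\epbmeas{(\pi^{-1})}{(\calG,\nu)}$, and one must invoke equivalence (3) (\cref{app:lem:left-right-gsets}) to flip this into the intended right action $\epbmeas{\pi}{(\calG,\nu)}$, using functoriality of enhanced pullback (\cref{app:lem:epb-functorial}) to rewrite the refinement identity $rq=p\pi^{-1}$ into pullback form. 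Hence the main obstacle is not the theorem itself but the two correspondence lemmas feeding it; the theorem is a one-line composition.
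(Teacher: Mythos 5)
Your proposal matches the paper's proof exactly: the paper's argument is precisely the one-line composition ``Combine \cref{app:lem:pspcs-rami} and \cref{app:lem:pspcs-pnom}'', transporting $\pspcs$ to $\widehat{\pspcs}$ across equivalence (1) and then to $\gpspcs$ across equivalences (2) and (3). Your accounting of where the real work lives (the concrete presentation via \cref{app:lem:mble-pspcs-repr-on} with \cref{app:lem:epb-bij-hom} and \cref{app:lem:epbmeas-inj}, and the left-to-right action flip via \cref{app:lem:left-right-gsets}) is also faithful to the paper.
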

\begin{proof}
  Combine \cref{app:lem:pspcs-rami}
  and \cref{app:lem:pspcs-pnom}.
\end{proof}

\subsection{Separation as independent combination}

\cref{app:sec:dayconv} established the following: \begin{itemize}
  \item The Day convolution $\pspcs\shotimes\pspcs$
    is a subobject of the sheaf $\pspcs\times\pspcs$
    of pairs of probability spaces, consisting of those
    pairs which factor through a tensor product.
  \item There is a map $\pcmjoin : \pspcs\shotimes\pspcs\to\pspcs$
    combining such pairs of probability spaces.
    Viewed as a partial map $\pspcs\times\pspcs\to\pspcs$
    and in combination with an ordering relation
    $(\krmorder) : \pspcs\times\pspcs\to\Prop$,
    it forms a partially defined monoid (PDM) internal to the category
    of \goodsheaves{}~(\cref{app:thm:mble-pspcs-krm}).
\end{itemize}
By the equivalence of \goodsheaves{}
and \agoodname{s} in \cref{app:thm:nominal},
there is a corresponding PDM
internal to the category of \agoodname{s}.
We show this PDM is --- modulo small differences regarding
negligible sets --- the PDM of \citet{li2023lilac},
and in particular that the
monoidal operation is independent combination.

\begin{definition}[independent sub-$\sigma$-algebras~\fremlinf{272A(b)}{27}{10}]
  Let $(\Omega,\calF,\mu)$
  be a probability space.
  Two sub-$\sigma$-algebras $\calG,\calH\subseteq\calF$
  are \emph{independent}
  if $\mu(G\cap H) = \mu(G)\mu(H)$ for all $G\in\calG,H\in\calH$.
\end{definition}

\begin{definition}[ordering on probability spaces] \label{app:def:pspc-ord}
  Let $(\Omega,\calF,\calN)$ be a standard enhanced measurable space.
  Let $\gkrmorder$ be the ordering on standardizable probability spaces on $\Omega$
  given by $(\calG,\mu)\gkrmorder (\calG',\mu')$ iff $\calG\subseteq\calG'$ and $\mu = \mu'|_\calG$.
\end{definition}

\begin{definition}[independent combination] \label{app:def:indepcomb}
  Let $(\Omega,\calF,\calN)$
  be a standard enhanced measurable space.
  Two standardizable probability spaces $(\calG,\mu),(\calH,\nu)$
  on $(\Omega,\calF,\calN)$ (\cref{app:def:pspc-on-ems})
  are \emph{independently combinable}
  if there exists a standardizable probability space $(\calK,\rho)$
  on $(\Omega,\calF,\calN)$
  such that $(\calG,\mu)\gkrmorder(\calK,\rho)\gkrmorderop(\calH,\nu)$
    and $\calG$ and $\calH$ are independent sub-$\sigma$-algebras
    in the probability space $(\Omega,\calK,\rho)$.
  If $(\calK,\rho)$ is the smallest such standardizable probability space
  with respect to the ordering $\gkrmorder$,
  then it is called the \emph{independent combination}
  of $(\calG,\mu)$ and $(\calH,\nu)$.
\end{definition}

Compare \cref{app:def:indepcomb} with \lilact{Definition 2.2}{10}:
they are essentially the same, 
up to the $\sigma$-ideal
$\calN$ of negligible sets and the stipulation that probability spaces
be standardizable.

\begin{definition}[empty probability space] \label{app:def:emp-gpspcs}
  Let $\calN$ be the negligibles of $\hcube$.
  The \emph{empty probability space on $\hcube$},
  written $\gpcmunit$, is the probability space on $\hcube$
  with $\sigma$-algebra generated by $\calN$
  and measure defined by $\mu(N) = 0$ for all $N\in\calN$,
  standardizable because it arises via pullback 
  along the unique $\StdMble$-map $! : (\Omega,\calF,\calN)\to (1,\{\emptyset,1\},\emptyset)$,
  finite-footprint because $!$ factors through $\hcubeproj 0$.
\end{definition}

\newcommand\indepable{{\nref{app:def:indepable}{{\mathbb P}^2_{\perp}}}}

In \cref{app:sec:pspcs-nom-calc}
we showed that, across the equivalence in \cref{app:thm:nominal},
the sheaf $\pspcs$ corresponds to the \agoodname{}
$\gpspcs$ of standardizable probability spaces with finite footprint.
In this section we extend this correspondence with the following: \begin{itemize}
  \item The map $\pcmunit : 1\to \pspcs$
     corresponds to the empty probability space $\gpcmunit : \gpspcs$
     (equivalently a map $1\to\gpspcs$ in the category of \agoodname{s})~(\cref{app:lem:pcmunit-corresp}).
  \item The ordering relation $\krmorder : \pspcs\times\pspcs\to\Prop$ 
    corresponds to the ordering $\gkrmorder : \overline\pspcs\times\overline\pspcs\to\Prop$~(\cref{app:lem:krmorder-corresp}).
  \item The Day convolution $\pspcs\shotimes\pspcs$ corresponds to
    \aagoodname{} $\indepable$ of pairs of independently-combinable probability spaces (\cref{app:lem:indepable-corresp}).
  \item The partial map $\pcmjoin : \pspcs\times\pspcs\to\Prop$ corresponds to
    a partial map $\gpcmjoin : \gpspcs\times\gpspcs\to\gProp$
    that sends
    independently-combinable pairs to their independent combination (\cref{app:thm:pcmjoin-corresp}).
\end{itemize}
Putting this together shows the PDM $(\pspcs,\pcmjoin,\pcmunit,\krmorder)$
internal to \goodsheaves{}
corresponds to the PDM $(\gpspcs,\gpcmjoin,\gpcmunit,\gkrmorder)$
internal to \agoodname{s}~(\cref{app:thm:krm-corresp}).

The recipe 
for showing a morphism of \goodsheaves{}
corresponds to a morphism of \agoodname{s}
across the equivalence in \cref{app:thm:nominal}
is as follows. For the sheaves
 $F$ introduced in \cref{app:sec:goodsheaves}, the corresponding \agoodname{} $\overline F$
is a union of the images of the
embeddings $F(p) : F\Omega\hookrightarrow F\hcube$
for $\StdMble$-maps $p : \hcube\to\Omega$ with finite footprint.
A natural transformation of \goodsheaves{} $\alpha : F\to G$
then corresponds to a map of \agoodname{s} $f : \overline F\to \overline G$
across the equivalence in \cref{app:thm:nominal} if
for all $\Omega$ and $p : \hcube\to\Omega$ with finite footprint,
it holds that $\left(F\Omega\xhookrightarrow{F(p)}
\overline F\xrightarrow f\overline G\right) 
= \left(F\Omega\xrightarrow{\alpha_\Omega}G\Omega
\xhookrightarrow{G(p)}\overline G\right)$;
i.e., $f$ behaves like $\alpha$ on elements
$x\in F\Omega$ when embedded into $\overline F$ via $F(p)$.

\begin{lemma} \label{app:lem:pcmunit-corresp}
  The constant function at the empty probability space
  $\gpcmunit$ defines a map $1\to\gpspcs$
  of \agoodname{s},
  and this map corresponds to
  the map $\pcmunit : 1\to \pspcs$
  across the equivalence in \cref{app:thm:nominal}.
\end{lemma}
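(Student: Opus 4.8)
The plan is to apply the recipe stated just before the lemma, taking $F = 1$ the terminal sheaf---whose corresponding \agoodname{} $\overline F$ is the terminal (one-point) \agoodname{} $1$, since the equivalence of \cref{app:thm:nominal} preserves terminal objects---and $G = \pspcs$, whose corresponding \agoodname{} is $\gpspcs$ by \cref{app:lem:pspcs-corresp}. Two things must then be checked: that the constant function at $\gpcmunit$ is a genuine morphism $1\to\gpspcs$ in $\tgsets{\auti}$, and that it satisfies the recipe's compatibility condition against $\pcmunit$.

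For the first point, a morphism out of the terminal \agoodname{} is precisely a choice of $\auti$-fixed element of $\gpspcs$, so I would verify $\epbmeas{\pi}{\gpcmunit} = \gpcmunit$ for every $\pi\in\auti$. Since $\gpcmunit$ arises by enhanced pullback of the one-point probability space $1$ along the unique $\StdMble$-map $!_\hcube : \hcube\to\forgetmu(1)$ (\cref{app:def:emp-gpspcs}), functoriality of enhanced pullback (\cref{app:lem:epb-functorial}) gives $\epbmeas{\pi}{\gpcmunit} = \epbmeas{(!_\hcube\circ\pi)}{1}$; and $!_\hcube\circ\pi = !_\hcube$, both being the unique map $\hcube\to 1$, so $\epbmeas{\pi}{\gpcmunit} = \gpcmunit$. (This also shows its stabilizer is all of $\auti$, hence open, confirming that $\gpcmunit$ is a legitimate element of the continuous $\auti$-set $\gpspcs$, as already recorded in \cref{app:def:pspcs-pnom,app:def:emp-gpspcs}.)

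For the second point, I would trace both composites in the recipe equation at an arbitrary $\Omega$ and finite-footprint $p : \hcube\to\Omega$. The left composite is constantly $\gpcmunit$, since $1(p)$ lands in the single element of $\overline 1 = 1$ and $f$ then returns $\gpcmunit$. For the right composite, $\pcmunit_\Omega$ sends the unique element of $1(\Omega)$ to $[1, !_\Omega : \Omega\to 1]$ (\cref{app:def:pspcs-join-unit}), which under the concrete representation of \cref{app:lem:mble-pspcs-repr-on} is the standardizable probability space $\epbmeas{!_\Omega}{1}$ on $\Omega$; the embedding $\pspcs(p)$ then takes a further enhanced pullback, and \cref{app:lem:epb-functorial} yields $\epbmeas{p}{\epbmeas{!_\Omega}{1}} = \epbmeas{(!_\Omega\circ p)}{1} = \epbmeas{!_\hcube}{1} = \gpcmunit$. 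The two composites agree, which is exactly the required correspondence.

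The argument is essentially bookkeeping and I do not expect a serious obstacle; the one point requiring care is to pass consistently between the colimit presentation of $\pspcs$ (\cref{app:def:mble-pspcs}) and its concrete standardizable-probability-space presentation (\cref{app:lem:mble-pspcs-repr-on}), so that the unit $[1,!_\Omega]$, its image under $\pspcs(p)$, and the element $\gpcmunit$ are all compared as enhanced pullbacks along the relevant $\StdMble$-maps. Once everything is rewritten in pullback form, the two applications of functoriality (\cref{app:lem:epb-functorial}) close the argument.
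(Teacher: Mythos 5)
Your proof is correct and takes essentially the same route as the paper's: identify $\pcmunit_\Omega$ with the class $[1,!_\Omega]$, embed along a finite-footprint $p:\hcube\to\Omega$, and use functoriality of enhanced pullback to compute $\epbmeas{(!_\Omega p)}{1} = \gpcmunit$. Your extra explicit check that $\gpcmunit$ is fixed by every $\pi\in\auti$ (so that the constant function is a genuine map $1\to\gpspcs$) fills in a point the paper's proof leaves implicit, but it is the same argument, not a different approach.
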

\begin{proof}
  At stage $\Omega\in\StdMble$,
  the natural transformation $\pcmunit$
  picks out the equivalence class $[1,!:\Omega\to \forgetmu 1]\in\pspcs\Omega$
  where $1\in\StdProb$ is the
  one-point probability space.
  For any $\StdMble$-map $p : \hcube\to\Omega$ with finite footprint,
  the element $\pspcs(p)[1,!]$
  of $\gpspcs$ corresponding to $[1,!]$
  is the pullback $\epbmeas{(!p)}{1}$
  on $\hcube$; this is precisely $\gpcmunit$.
\end{proof}

\begin{lemma} \label{app:lem:subalg-implies-ordering}
  Let $(X,\calF,\calN)$ be an enhanced measurable space.
  Let $f : (X,\calF,\calN)\to\forgetmu(A,\calG,\mu)$
  and $g : (X,\calF,\calN)\to\forgetmu(B,\calH,\nu)$
  be two $\StdMble$-maps.
  If $\epbmeas f(\calG,\mu) \gkrmorder \epbmeas f(\calH,\nu)$
  then there exists a $\StdProb$-map $p : (B,\calH,\nu)\to(A,\calG,\mu)$
  such that $f = \forgetmu(p) g $.
\end{lemma}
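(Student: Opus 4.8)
The plan is to pass entirely to the world of measure and measurable algebras, where the dualities $\StdProb\simeq\StdProbAlg\op$ (\cref{app:lem:stdprob-dual}) and $\StdMble\simeq\StdMbleAlg\op$ (\cref{app:lem:mblecat-dual}) turn the sought factorization into a routine diagram of Boolean-algebra homomorphisms. Write $\frakA$ for the standard probability algebra $\measalg(A,\calG,\mu)$ and $\frakB$ for $\measalg(B,\calH,\nu)$, and let $\calN$ be the negligibles of $(X,\calF,\calN)$, so that $\measalg(X,\calF,\calN)=\calF/\calN$. (Here I read the hypothesis as $\epbmeas f{(\calG,\mu)}\gkrmorder\epbmeas g{(\calH,\nu)}$: pulling $(\calH,\nu)$ back along $g$ is what typechecks.) Producing the desired $\StdProb$-map $p\colon(B,\calH,\nu)\to(A,\calG,\mu)$ amounts, by \cref{app:lem:stdprob-dual}, to producing a measure-preserving homomorphism $p^*\colon\frakA\to\frakB$; and once $p$ is in hand, verifying $f=\forgetmu(p)\,g$ in $\StdMble$ amounts, by faithfulness in \cref{app:lem:mblecat-dual}, to checking the single equation $\measalg(f)=\measalg(g)\circ p^*$ of homomorphisms $\frakA\to\calF/\calN$.

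First I would set up the two images. By \cref{app:lem:epb-bij-hom}, the homomorphism $\measalg(f)\colon\frakA\to\calF/\calN$ (sending $[G]$ to $[f^{-1}(G)]$) corestricts to an isomorphism onto its image $\frakA_f:=\epbmeas f\calG/\calN$, and likewise $\measalg(g)$ corestricts to an isomorphism onto $\frakB_g:=\epbmeas g\calH/\calN$; moreover these isomorphisms are measure-preserving, since by construction of the enhanced pullback (\cref{app:def:epb}) the maps $f,g$ are measure-preserving for the pulled-back measures. Next I unpack the hypothesis via \cref{app:def:pspc-ord}: it says exactly that $\epbmeas f\calG\subseteq\epbmeas g\calH$ and that the two pulled-back measures agree on the smaller $\sigma$-algebra. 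Quotienting by $\calN$, this gives a measure-preserving inclusion of subalgebras $\frakA_f\hookrightarrow\frakB_g$ inside $\calF/\calN$. Composing the three measure-preserving maps
\[
  \frakA \xrightarrow{\ \cong\ } \frakA_f \hookrightarrow \frakB_g \xrightarrow{\ \cong\ } \frakB
\]
(the outer two being the corestricted isomorphisms, the middle one the inclusion) yields $p^*\colon\frakA\to\frakB$, a $\StdProbAlg$-morphism since $\frakA,\frakB$ are standard probability algebras. Dualizing via \cref{app:lem:stdprob-dual} produces $p\colon(B,\calH,\nu)\to(A,\calG,\mu)$ with $\measalg(p)=p^*$.

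Finally I would verify the factorization. Contravariance of $\measalg$ gives $\measalg(\forgetmu(p)\,g)=\measalg(g)\circ p^*$, since $\measalg(\forgetmu(p))=p^*$ as Boolean homomorphisms. Substituting the definition of $p^*$ collapses the composite: $\measalg(g)$ cancels the corestricted inverse of $\measalg(g)$ down to the plain inclusion $\frakB_g\hookrightarrow\calF/\calN$, so $\measalg(g)\circ p^*$ is the inclusion $\frakA_f\hookrightarrow\calF/\calN$ precomposed with the corestriction $\frakA\xrightarrow{\cong}\frakA_f$ --- that is, $\measalg(f)$ itself. Faithfulness of the duality in \cref{app:lem:mblecat-dual} then forces $f=\forgetmu(p)\,g$ in $\StdMble$. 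The main obstacle I anticipate is bookkeeping around negligibles: the enhanced pullbacks $\epbmeas f{(\calG,\mu)}$ and $\epbmeas g{(\calH,\nu)}$ need not be standard \emph{spaces} (cf.\ the note after \cref{app:cons:pbmeas}), so every step must live at the level of their measure algebras, which \emph{are} standard by \cref{app:lem:epb-bij-hom}; the ordering hypothesis must be read as a statement about $\sigma$-algebras mod $\calN$ rather than about the ambient spaces, and one must check that the inclusion $\frakA_f\hookrightarrow\frakB_g$ genuinely preserves measure and not merely Boolean structure.
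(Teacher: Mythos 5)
Your proposal is correct and is essentially the paper's own proof: the paper likewise uses \cref{app:lem:epb-bij-hom} to corestrict $\measalg(f)$ and $\measalg(g)$ to isomorphisms onto the pullback algebras, forms the composite $\measalg(g)^{-1}\circ i\circ\measalg(f)$ (your $p^*$) where $i$ is the inclusion induced by the ordering, and dualizes back to obtain $p$ with $f=\forgetmu(p)\,g$. Your reading of the hypothesis as $\epbmeas f{(\calG,\mu)}\gkrmorder\epbmeas g{(\calH,\nu)}$ matches what the paper's proof actually uses (its statement contains the same typo you flagged), and your extra care about measure-preservation of the inclusion and faithfulness in the final step only makes explicit what the paper leaves implicit.
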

\begin{proof}
  The inequality $\epbmeas f(\calG,\mu) \gkrmorder \epbmeas f(\calH,\nu)$
  implies a corresponding inclusion of measure algebras
  $i : \measalg(\epbmeas f(\calG,\mu)) \hookrightarrow \measalg(\epbmeas f(\calH,\nu))$.
  By \cref{app:lem:epb-bij-hom}, the maps $f$ and $g$
  give rise to $\StdProb$-isos
  \[
    \measalg(f) : \measalg(A,\calG,\mu)\xrightarrow\sim
      \measalg(\epbmeas f(\calG,\mu))
    \qquad\text{and}\qquad
    \measalg(g) : \measalg(B,\calH,\nu)\xrightarrow\sim
      \measalg(\epbmeas g(\calH,\nu)).
  \]
  The composite $p^* := \measalg(g)^{-1}i\measalg(f)$
  is a $\StdProbAlg$-morphism $\measalg(A,\calG,\mu)\to\measalg(B,\calH,\nu)$
  with $\measalg(g)p^* = i\measalg(f)$, so
  by \cref{app:lem:mblecat-dual} corresponds 
  to a $\StdProb$-map $p : (B,\calH,\nu)\to(A,\calG,\mu)$
  such that $f = \forgetmu(p) g$ as needed.
\end{proof}

\begin{lemma} \label{app:lem:krmorder-corresp-helper}
  For any $\Omega\in\StdMble$
  and $(A,\calF,\mu),(B,\calG,\nu)\in\StdProb$
  and $\StdMble$-maps $X:\Omega\to\forgetmu(A,\calF,\mu)$
  and $Y:\Omega\to\forgetmu(B,\calG,\nu)$,
  the following are equivalent: \begin{enumerate}
    \item $X = qY$ for some $\StdMble$-map $q : (B,\calG,\nu)\to(A,\calF,\mu)$
    \item $\epbmeas{X}(\calF,\mu) \gkrmorder \epbmeas{Y}(\calG,\nu)$
  \end{enumerate}
\end{lemma}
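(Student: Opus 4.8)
The statement is a biconditional, so the plan is to prove each direction separately. Throughout I read the morphism $q$ in condition~(1) as measure-preserving---i.e.\ as a $\StdProb$-map, the kind of morphism that witnesses the ordering $\krmorder$ in \cref{app:def:mble-pspcs-ordering}---so that $qY$ abbreviates $\forgetmu(q)Y$. This is the only reading under which the equivalence can hold: the measure clause of $\gkrmorder$ (\cref{app:def:pspc-ord}) genuinely requires $\mu(F)=\nu(q^{-1}(F))$, which fails for an arbitrary negligible-preserving-and-reflecting map.

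The direction (2)$\Rightarrow$(1) I would obtain immediately from the companion \cref{app:lem:subalg-implies-ordering}. Applying that lemma with its source space taken to be $\Omega$ and its two maps taken to be $X$ and $Y$, the hypothesis $\epbmeas X{(\calF,\mu)}\gkrmorder\epbmeas Y{(\calG,\nu)}$ produces precisely a $\StdProb$-map $q : (B,\calG,\nu)\to(A,\calF,\mu)$ with $X \aseq \forgetmu(q)Y$, which is condition~(1). All the substantive work here---passing through the dualities \cref{app:lem:mblecat-dual,app:lem:epb-bij-hom} and invoking the right Ore property---has already been carried out inside that lemma.

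For (1)$\Rightarrow$(2) I would verify the two clauses of $\gkrmorder$ directly from \cref{app:def:epb,app:def:pspc-ord}. For the inclusion of $\sigma$-algebras: since $X \aseq \forgetmu(q)Y$ we have $X^{-1}(F) \aseq Y^{-1}(q^{-1}(F))$ for every $F\in\calF$, and $q^{-1}(F)\in\calG$ because $q$ is measurable; hence each generator $X^{-1}(F)\triangle N$ of $\epbmeas X\calF$ differs from $Y^{-1}(q^{-1}(F))$ by a member of $\calN$ and so lies in $\epbmeas Y\calG$, giving $\epbmeas X\calF\subseteq\epbmeas Y\calG$. For the measures: by definition $\epbmeas X\mu(X^{-1}(F)\triangle N)=\mu(F)$, and rewriting the same set as $Y^{-1}(q^{-1}(F))\triangle N'$ for a suitable $N'\in\calN$ shows $\epbmeas Y\nu$ assigns it $\nu(q^{-1}(F))$; these agree because $q$ is measure-preserving. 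Thus $\epbmeas X\mu = \epbmeas Y\nu|_{\epbmeas X\calF}$, completing $\gkrmorder$.

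The one genuine point of care is the negligible-set bookkeeping: because $X\aseq\forgetmu(q)Y$ holds only almost everywhere and $\epbmeas{}$ quotients by $\calN$, every identification above is up to a member of $\calN$, and one must check the symmetric-difference manipulations stay inside $\epbmeas Y\calG$ (using completeness of $(\Omega,\calF,\calN)$, as in \cref{app:def:epb}). I expect no deeper obstacle, since the hard direction reduces entirely to \cref{app:lem:subalg-implies-ordering}; the second thing worth flagging is simply that measure-preservation of $q$ is exactly the ingredient making the measure clause go through---dropping it, and allowing an arbitrary $\StdMble$-map, would break (1)$\Rightarrow$(2), for instance when $A=B=\{0,1\}$ carry different biases but $q$ is the identity function.
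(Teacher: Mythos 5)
Your proof is correct and takes essentially the paper's route. The direction (2)$\Rightarrow$(1) is identical: invoke \cref{app:lem:subalg-implies-ordering} with source $\Omega$ and maps $X,Y$. For (1)$\Rightarrow$(2) the paper argues structurally, writing $\epbmeas{X}{(\calF,\mu)} = \epbmeas{(qY)}{(\calF,\mu)} = \epbmeas{Y}{\epbmeas{q}{(\calF,\mu)}}$ by functoriality of enhanced pullback (\cref{app:lem:epb-functorial}) and then applying $\epbmeas{q}{(\calF,\mu)}\gkrmorder(\calG,\nu)$ together with stability of $\gkrmorder$ under pullback (\cref{app:epb-resp-ord}); your symmetric-difference computation is exactly this argument unrolled, so the two differ only in packaging, not in substance.

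Your preliminary point about how to read $q$ is also well taken, and is confirmed by the paper's own proof: its final step rests on the claim $\epbmeas{q}{(\calF,\mu)}\gkrmorder(\calG,\nu)$, whose measure clause $\mu(F) = \nu(q^{-1}(F))$ for all $F\in\calF$ is literally measure-preservation of $q$. So although the statement says ``$\StdMble$-map,'' the equivalence holds only when $q$ is (the image under $\forgetmu$ of) a $\StdProb$-map, exactly as your two-point counterexample with mismatched biases shows; this is also consistent with the converse direction, where \cref{app:lem:subalg-implies-ordering} produces a genuinely measure-preserving $q$.
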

\begin{proof}
  If $X = qY$ then
  $\epbmeas{X}{(\calF,\mu)}
  =\epbmeas{qY}{(\calF,\mu)}
  =\epbmeas{Y}{\epbmeas q{(\calF,\mu)}}
  \gkrmorder\epbmeas{Y}{(\calG,\nu)}
  $ where the final inequality follows from
  the fact that $\epbmeas q{(\calF,\mu)}\gkrmorder (\calG,\nu)$
  as probability spaces on $(A,\calG,\nu)$.
  Conversely, if (2) holds then \cref{app:lem:subalg-implies-ordering}
  gives $q : (B,\calG,\nu)\to(A,\calF,\mu)$
  so $qY = X$.
\end{proof}

\begin{lemma} \label{app:epb-resp-ord}
  If $f : (X,\calF,\calN)\to(Y,\calG,\calM)$
  a $\StdMble$-map
  and $(\calH,\mu),(\calH',\mu')$
  are two probability spaces on $(Y,\calG,\calM)$,
  then $(\calH,\mu)\gkrmorder(\calH',\mu')$
  iff $\epbmeas f{(\calH,\mu)}\gkrmorder\epbmeas f{(\calH',\mu')}$.
\end{lemma}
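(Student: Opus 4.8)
The plan is to unfold the definition of $\gkrmorder$ in both statements and verify its two defining conditions --- $\sigma$-algebra inclusion and measure restriction --- separately in each direction. Recall $(\calK,\rho)\gkrmorder(\calK',\rho')$ means $\calK\subseteq\calK'$ and $\rho = \rho'|_\calK$, and that by \cref{app:def:epb} the pulled-back $\sigma$-algebra is $\epbmeas f\calH = \{f^{-1}(H)\triangle N \mid H\in\calH, N\in\calN\}$ with measure determined by $\epbmeas f\mu(f^{-1}(H)\triangle N) = \mu(H)$. Throughout I would use that, being probability spaces on $(Y,\calG,\calM)$, the algebras satisfy $\calM\subseteq\calH,\calH'\subseteq\calG$, and that $f$ is negligible-preserving-and-reflecting: $f^{-1}(G)\in\calN$ iff $G\in\calM$ for all $G\in\calG$.

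For the forward direction, assume $\calH\subseteq\calH'$ and $\mu = \mu'|_\calH$. The inclusion $\epbmeas f\calH\subseteq\epbmeas f\calH'$ is immediate, since each generator $f^{-1}(H)\triangle N$ with $H\in\calH$ is also a generator of $\epbmeas f\calH'$ (as $H\in\calH'$). For the measures, evaluating $\epbmeas f{\mu'}$ on such a generator gives $\mu'(H)$, which equals $\mu(H) = \epbmeas f\mu(f^{-1}(H)\triangle N)$ because $H\in\calH$ and $\mu = \mu'|_\calH$; hence $\epbmeas f\mu = (\epbmeas f{\mu'})|_{\epbmeas f\calH}$.

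For the converse, the first step is to recover $\calH\subseteq\calH'$ from $\epbmeas f\calH\subseteq\epbmeas f\calH'$. Given $H\in\calH$, writing $f^{-1}(H) = f^{-1}(H)\triangle\emptyset\in\epbmeas f\calH\subseteq\epbmeas f\calH'$ produces $H'\in\calH'$ and $N\in\calN$ with $f^{-1}(H\triangle H') = f^{-1}(H)\triangle f^{-1}(H') = N\in\calN$; negligible-preservation then forces $H\triangle H'\in\calM\subseteq\calH'$, so $H = H'\triangle(H\triangle H')\in\calH'$ (exactly the manipulation used in \cref{app:lem:epbmeas-inj}). Given this inclusion, the measure identity follows by evaluating both pullback measures on $f^{-1}(H)$ for $H\in\calH$: the hypothesis gives $\mu(H) = \epbmeas f\mu(f^{-1}(H)) = \epbmeas f{\mu'}(f^{-1}(H)) = \mu'(H)$, whence $\mu = \mu'|_\calH$.

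The main obstacle is this backward $\sigma$-algebra inclusion: because the pullback operation deliberately saturates by $\calN$, membership of $f^{-1}(H)$ in $\epbmeas f\calH'$ only yields some $H'\in\calH'$ agreeing with $H$ up to a symmetric difference in $\calG$, and one must invoke negligible-reflection together with $\calM\subseteq\calH'$ to promote this to $H$ itself lying in $\calH'$. Everything else is a direct computation from the defining formula for $\epbmeas f$, so I expect the lemma to reduce entirely to the one careful symmetric-difference argument shared with \cref{app:lem:epbmeas-inj}.
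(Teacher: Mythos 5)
Your proof is correct and follows essentially the same route as the paper's: the forward direction is a direct computation on generators, and the converse hinges on exactly the symmetric-difference/negligible-preservation argument (shared with \cref{app:lem:epbmeas-inj}) to recover $\calH\subseteq\calH'$, after which the measure identity follows by evaluating both pullback measures on $f^{-1}(H)$. If anything, your measure step is marginally cleaner than the paper's, which re-expresses $f^{-1}(H)$ as $f^{-1}(H')\triangle N$ and uses $H\triangle H'\in\calM$ again, whereas you exploit the just-established membership $H\in\calH'$ directly.
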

\begin{proof}
  The left-to-right implication is straightforward.
  For the converse, suppose
 $\epbmeas f{(\calH,\mu)}\gkrmorder\epbmeas f{(\calH',\mu')}$.
  Then $\calH\subseteq\calH'$,
  because if $H\in\calH$
  then $f^{-1}(H)\in\epbmeas f\calH$,
  so $f^{-1}(H) = f^{-1}(H')\triangle N$
  for some $H'\in\calH',N\in\calN$
  by $\epbmeas f\calH=\epbmeas f{\calH'}$,
  so $f^{-1}(H)\triangle f^{-1}(H')
  =f^{-1}(H\triangle H')=N\in\calN$,
  so $H\triangle H'\in\calN$ by $f$ negligible-preserving,
  so $H' = H'\triangle(H\triangle H')\in\calH'$
  by $\calH'\supseteq\calN$.
  And $\mu'|_\calH = \mu$,
  because if $H\in\calH$
  then \begin{align*}
    \mu(H)
     &= \epbmeas f\mu(f^{-1}(H)) \\
     &= \epbmeas f{\mu'}(f^{-1}(H)) \text{ by $\epbmeas f\mu=\epbmeas f{\mu'}|_{\epbmeas f{\calH}}$}\\
     &= \epbmeas f{\mu}(f^{-1}(H')\triangle N) \text{ for $H'\in\calH',N\in\calN$ as above} \\
     &= \mu'(H') \\
     &= \mu'(H'\triangle(H\triangle H')) \text{ where $H\triangle H'\in\calN$ as above}\\
     &= \mu'(H') \text{ because $\calN=\negligibles(\mu')$}.
  \end{align*}
\end{proof}

\begin{lemma} \label{app:lem:krmorder-corresp}
  The ordering $\gkrmorder$
  is a map $\gpspcs\times\gpspcs\to\Prop$
  of \agoodname{s}
  corresponding to 
  the map $\krmorder : \pspcs\times\pspcs\to\Prop$
  of \goodsheaves{}
  across the equivalence in \cref{app:thm:nominal}.
\end{lemma}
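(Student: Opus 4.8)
The plan is to check two things: that $\gkrmorder$ genuinely defines a morphism of \agoodname{s} (an equivariant map $\gpspcs\times\gpspcs\to\Prop$, where $\Prop$ here denotes the two-element set with trivial action, the \agoodname{} counterpart of the constant sheaf), and that this morphism is carried to $\krmorder$ by the equivalence of \cref{app:thm:nominal}, following the recipe described just before the statement.

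For equivariance, recall that the action on $\gpspcs$ is $(\calG,\mu)\cdot\pi=\epbmeas\pi{(\calG,\mu)}$ and that the action on the codomain is trivial, so I would show that $(\calG,\mu)\gkrmorder(\calG',\mu')$ holds iff $\epbmeas\pi{(\calG,\mu)}\gkrmorder\epbmeas\pi{(\calG',\mu')}$ for every finite-width $\pi\in\auti$. Since $\pi$ is in particular a $\StdMble$-map $\hcube\to\hcube$ and both spaces live on $\hcube$, this is precisely \cref{app:epb-resp-ord} instantiated at $f=\pi$. The same instantiation shows the relation is invariant under the group action, so it descends to a well-defined map of $\auti$-sets.

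For the correspondence, I would apply the recipe with $F=\pspcs\times\pspcs$, $G=\Prop$, $\alpha=\krmorder$, and $f=\gkrmorder$: it suffices to show, for every $\Omega\in\StdMble$ and every finite-footprint $p:\hcube\to\Omega$, that $\gkrmorder$ applied to the images of a pair under $(\pspcs\times\pspcs)(p)$ returns the truth value of $\krmorder_\Omega$. Unwinding the correspondence $\pspcs\cong\gpspcs$ (\cref{app:lem:pspcs-corresp}) together with functoriality of pullback (\cref{app:lem:epb-functorial}), the element of $\gpspcs$ attached to $[A,X]\in\pspcs\Omega$ via $p$ is $\epbmeas{(Xp)}{(\calF_A,\mu_A)}$, where $(\calF_A,\mu_A)$ is the probability structure of $A$. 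Thus I must prove
\[
  [A,X]\krmorder_\Omega[B,Y]
  \iff
  \epbmeas{(Xp)}{(\calF_A,\mu_A)}\gkrmorder\epbmeas{(Yp)}{(\calG_B,\nu_B)}.
\]
By definition $[A,X]\krmorder_\Omega[B,Y]$ means $X=qY$ for some morphism $q:B\to A$; since $p$ is epi (\cref{app:cor:stdmble-epis}) this is equivalent to $Xp=q\,Yp$, which by \cref{app:lem:krmorder-corresp-helper} (instantiated with $\hcube$ in place of $\Omega$ and $Xp,Yp$ in place of $X,Y$) is exactly the right-hand inequality. This closes the chain and, by the recipe, establishes the correspondence.

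The only real content is packaged into \cref{app:lem:krmorder-corresp-helper} and \cref{app:epb-resp-ord}; the remaining steps are bookkeeping about how pullback composes along $p$ and the observation that $p$ is epi. Accordingly, I expect no serious obstacle here, since the work has already been done in translating the subspace ordering into the existence of a reindexing map $q$ and in showing that pullback both respects and reflects this ordering.
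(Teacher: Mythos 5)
Your proof is correct and takes essentially the same route as the paper's: both reduce the statement to \cref{app:lem:krmorder-corresp-helper}, which converts the existence of a reindexing map $q$ with $X = qY$ into the subspace ordering on pullbacks, and both then identify the elements of $\gpspcs$ attached to $[A,X],[B,Y]$ via a finite-footprint $p:\hcube\to\Omega$ as $\epbmeas{(Xp)}{(\calF,\mu)}$ and $\epbmeas{(Yp)}{(\calG,\nu)}$. The only difference is where the transport along $p$ happens: the paper applies the helper lemma at stage $\Omega$ and then invokes \cref{app:epb-resp-ord} to pass the ordering statement down to $\hcube$, whereas you transport the factorization $X=qY$ itself along $p$ using that $p$ is epi (\cref{app:cor:stdmble-epis}) and apply the helper lemma directly at $\hcube$; these two substitutions are interchangeable and of equal difficulty. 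Your explicit equivariance check (via \cref{app:epb-resp-ord} at $f=\pi\in\auti$) addresses a point the paper leaves implicit --- there it follows because $\gkrmorder$ is exhibited as the image of the sheaf morphism $\krmorder$ under the equivalence --- so including it is harmless and arguably makes the argument more self-contained.
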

\begin{proof}
  For any $\Omega\in\StdMble$
  and $\StdMble$-map $p : \hcube\to\Omega$ with finite footprint,
  two elements $[(A,\calF,\mu),X]$
  and $[(B, \calG,\nu), Y]$ of $\pspcs\Omega$
  are related by $\krmorder_\Omega$
  iff $X = qY$ for some $\StdMble$-map $q : (B,\calG,\nu)\to(A,\calF,\mu)$,
  iff $\epbmeas X{(\calF,\mu)} \gkrmorder \epbmeas Y{(\calG,\nu)}$ by \cref{app:lem:krmorder-corresp-helper},
  iff $\epbmeas{(Xp)}{(\calF,\mu)} \gkrmorder \epbmeas{(Yp)}{(\calG,\nu)}$ by \cref{app:epb-resp-ord},
  and $\epbmeas{(Xp)}{(\calF,\mu)}$ and $\epbmeas{(Yp)}{(\calG,\nu)}$
  are the standardizable probability spaces on $\hcube$
  corresponding to $[A,X]$ and $[B,Y]$ across the equivalence
  in \cref{app:thm:nominal}.
\end{proof}

\begin{definition} \label{app:def:indepable}
  Let $\indepable$ be the $\preauti$-set
  of pairs $((\calG,\mu),(\calH,\nu))\in\gpspcs\times\gpspcs$
  for which $(\calG,\mu)$ and $(\calH,\nu)$
  are independently combinable,
  with action inherited from $\gpspcs\times\gpspcs$.
\end{definition}

\begin{lemma} \label{app:lem:indepcomb-pbstable}
  If $(\calH,\mu)$ and $(\calK,\nu)$
  are independently combinable on
  a standard enhanced measurable space $(Y,\calG,\calM)$
  and 
  $f$ is a $\StdMble$-map $(X,\calF,\calN)\to(Y,\calG,\calM)$,
  then $\epbmeas f(\calH,\mu)$
  and $\epbmeas f(\calK,\nu)$
  are independently combinable on $(X,\calF,\calN)$.
\end{lemma}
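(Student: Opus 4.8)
The plan is to transport the witness of independent combinability through enhanced pullback. By \cref{app:def:indepcomb}, independent combinability of $(\calH,\mu)$ and $(\calK,\nu)$ on $(Y,\calG,\calM)$ supplies a standardizable probability space $(\calL,\rho)$ on $(Y,\calG,\calM)$ with $(\calH,\mu)\gkrmorder(\calL,\rho)\gkrmorderop(\calK,\nu)$ such that $\calH$ and $\calK$ are independent sub-$\sigma$-algebras of $(Y,\calL,\rho)$. I would show that $\epbmeas f{(\calL,\rho)}$ witnesses independent combinability of $\epbmeas f{(\calH,\mu)}$ and $\epbmeas f{(\calK,\nu)}$ on $(X,\calF,\calN)$, by checking the three requirements of \cref{app:def:indepcomb}.

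First, standardizability: since $(\calL,\rho)$ is standardizable, it arises as $\epbmeas g{(\calZ,\sigma)}$ for some $\StdMble$-map $g : (Y,\calG,\calM)\to\forgetmu(Z,\calZ,\sigma)$ with $(Z,\calZ,\sigma)\in\StdProb$; then $\epbmeas f{(\calL,\rho)} = \epbmeas f{\epbmeas g{(\calZ,\sigma)}} = \epbmeas{(gf)}{(\calZ,\sigma)}$ by functoriality of enhanced pullback (\cref{app:lem:epb-functorial}), so $\epbmeas f{(\calL,\rho)}$ arises via pullback along $gf$ and is standardizable on $(X,\calF,\calN)$. Second, the two ordering conditions $\epbmeas f{(\calH,\mu)}\gkrmorder\epbmeas f{(\calL,\rho)}$ and $\epbmeas f{(\calK,\nu)}\gkrmorder\epbmeas f{(\calL,\rho)}$ follow immediately from the given inequalities by \cref{app:epb-resp-ord}, which states that enhanced pullback preserves (and reflects) the ordering $\gkrmorder$.

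The substantive step is verifying that $\epbmeas f\calH$ and $\epbmeas f\calK$ are independent sub-$\sigma$-algebras in $(X,\epbmeas f\calL,\epbmeas f\rho)$. A typical element of $\epbmeas f\calH$ is $f^{-1}(H)\triangle N$ and of $\epbmeas f\calK$ is $f^{-1}(K)\triangle N'$, with $H\in\calH$, $K\in\calK$, and $N,N'\in\calN$. Since $f^{-1}(H)\triangle N$ differs from $f^{-1}(H)$ by $N$ and $f^{-1}(K)\triangle N'$ differs from $f^{-1}(K)$ by $N'$, their intersection differs from $f^{-1}(H)\cap f^{-1}(K) = f^{-1}(H\cap K)$ only by a subset of $N\cup N'\in\calN$. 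As $\epbmeas f\rho$ has negligible ideal exactly $\calN$ (\cref{app:def:epb}) and $H\cap K\in\calL$, this yields $\epbmeas f\rho\big((f^{-1}(H)\triangle N)\cap(f^{-1}(K)\triangle N')\big) = \rho(H\cap K)$; independence of $\calH,\calK$ in $(Y,\calL,\rho)$ then gives $\rho(H\cap K) = \rho(H)\,\rho(K) = \epbmeas f\rho(f^{-1}(H)\triangle N)\,\epbmeas f\rho(f^{-1}(K)\triangle N')$, which is exactly the required product rule. I expect this last computation to be the main, if mild, obstacle: the bookkeeping that the symmetric-difference perturbations $N,N'$ and the passage through $f^{-1}$ commute with intersection and leave the $\epbmeas f\rho$-measure unchanged rests entirely on $\epbmeas f\rho$ having negligible ideal $\calN$ together with $f^{-1}$ preserving intersections, so once these are invoked cleanly the argument closes.
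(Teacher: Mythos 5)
Your proposal is correct and follows essentially the same route as the paper's proof: pull back the witness $(\calL,\rho)$ along $f$, check the ordering conditions, and verify independence via the same computation that distributes $\cap$ over $\triangle$ and absorbs the resulting error terms into $\calN$ using $\negligibles(\epbmeas f\rho)=\calN$. If anything you are slightly more explicit than the paper, which leaves the standardizability of $\epbmeas f{(\calL,\rho)}$ (your appeal to \cref{app:lem:epb-functorial}) and the ordering conditions (your appeal to \cref{app:epb-resp-ord}) as "straightforward."
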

\begin{proof}
  Suppose $(\calL,\rho)$ witnesses
  the independent combinability of 
  $(\calH,\mu)$ and $(\calK,\nu)$,
  so $\calH\subseteq\calL\supseteq\calK$
  and $\mu=\rho|_\calH$ and $\nu=\rho|_\calK$
  and $\rho(H\cap K)=\rho(H)\rho(K)$
  for all $H\in\calH,K\in\calK$.
  It is straightforward to show
  $\epbmeas f\calH\subseteq\epbmeas f\calL\supseteq\epbmeas f\calK$
  and $\epbmeas f\calH = \epbmeas f\calL|_{\epbmeas f\calH}$
  and $\epbmeas f\calK = \epbmeas f\calL|_{\epbmeas f\calK}$.
  Fix arbitrary $f^{-1}(H)\triangle N\in\epbmeas f\calH$
  and $f^{-1}(K)\triangle N'\in\epbmeas f{\calK}$
  for $H\in\calH$ and $K\in\calK$ and $N,N'\in\calN$.
  Since $\cap$ distributes over $\triangle$,
  \begin{align*} 
    \epbmeas f\rho((f^{-1}(H)\triangle N)\cap(f^{-1}(K)\triangle N'))
    &= 
    \epbmeas f\rho(
      (f^{-1}(H)\cap f^{-1}(K))\triangle
      \underbrace{(f^{-1}(H)\cap N')\triangle
      (N\cap f^{-1}(K))\triangle
      (N\triangle N')}_{\in\calN}
    )\\
    &= \epbmeas f\rho(f^{-1}(H\cap K))
      \text{ because $\calN = \negligibles(\epbmeas f\rho)$}
      \\
    &= \rho(H\cap K)\\
    &= \rho(H)\rho(K)\\
    &= \epbmeas f\rho(f^{-1}(H)\triangle N)
       \epbmeas f\rho(f^{-1}(K)\triangle N')
  \end{align*}
  so $\epbmeas f\calH$ and $\epbmeas f\calK$
  are independent sub-$\sigma$-algebras in $(X,\epbmeas f\calL,\rho)$.
\end{proof}

\begin{lemma} \label{app:lem:factor-through-tensor-iff-indep-combable}
  For any $\Omega\in\StdMble$ 
  and $(A,\calF,\mu),(B,\calG,\nu)\in\StdProb$
  and $\StdMble$-maps $X:\Omega\to\forgetmu(A,\calF,\mu)$
  and $Y:\Omega\to\forgetmu(B,\calG,\nu)$,
  the following are equivalent: \begin{enumerate}
    \item The pairs $((A,\calF,\mu),X)$ and $((B,\calG,\nu),Y)$
      factor through a tensor product;
      i.e., there exist $\Omega_1,\Omega_2$ and $f : \Omega\to\Omega_1\otimes\Omega_2$
      and $X':\Omega_1\to \forgetmu{(A,\calF,\mu)}$
      and $Y':\Omega_2\to\forgetmu{(B,\calG,\nu)}$
      with $X = X'\,\scmfst\,f$
      and $Y = Y'\,\scmsnd\,f$.
    \item The pullbacks $\epbmeas{X}(\calF,\mu)$
     and $\epbmeas{Y}(\calG,\nu)$ are independently combinable.
  \end{enumerate}
  Moreover, in the case where both hold,
  the pullback $\epbmeas{((X'\otimes Y')f)}{(\calF\otimes\calG,\mu\otimes\nu)}$
  of the tensor product $(A,\calF,\mu)\otimes(B,\calG,\nu)$
  along the composite $\left(\Omega\xrightarrow f\Omega_1\otimes\Omega_2 \xrightarrow{X'\otimes Y'}A\otimes B\right)$
  is the independent combination of
    $\epbmeas{X}(\calF,\mu)$
     and $\epbmeas{Y}(\calG,\nu)$.
\end{lemma}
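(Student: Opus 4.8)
The plan is to prove the two implications separately and then read off the ``moreover'' claim from the construction used for $(1)\Rightarrow(2)$. Throughout I abbreviate the tensor product $(A,\calF,\mu)\otimes(B,\calG,\nu)$ by $(A\otimes B,\calF\otimes\calG,\mu\otimes\nu)$, which is again a standard probability space by \cref{app:lem:stdprob-semicart}, and I write $Z := (X'\otimes Y')f : \Omega\to\forgetmu(A\otimes B,\calF\otimes\calG,\mu\otimes\nu)$ for the composite appearing in the statement.

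For $(1)\Rightarrow(2)$, I would assume the factorization is given and set $(\calK,\rho):=\epbmeas Z{(\calF\otimes\calG,\mu\otimes\nu)}$, a standardizable probability space on $\Omega$. The semicartesian naturality identities $\scmfst(X'\otimes Y')=X'\scmfst$ and $\scmsnd(X'\otimes Y')=Y'\scmsnd$ give $X=\scmfst Z$ and $Y=\scmsnd Z$, where $\scmfst,\scmsnd$ are the measure-preserving (hence $\StdMble$-) projections out of the tensor product. By \cref{app:lem:krmorder-corresp-helper} this yields $\epbmeas X{(\calF,\mu)}\gkrmorder(\calK,\rho)$ and $\epbmeas Y{(\calG,\nu)}\gkrmorder(\calK,\rho)$. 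It remains to check that $\epbmeas X\calF$ and $\epbmeas Y\calG$ are independent in $(\Omega,\calK,\rho)$. Since $\scmfst^{-1}(E)=E\times B$ and $\scmsnd^{-1}(F)=A\times F$ as events of $\calF\otimes\calG$, for $E\in\calF$ and $F\in\calG$ we have $X^{-1}(E)=Z^{-1}(E\times B)$ and $Y^{-1}(F)=Z^{-1}(A\times F)$, so that $\rho(X^{-1}(E))=\mu(E)$, $\rho(Y^{-1}(F))=\nu(F)$, and $\rho(X^{-1}(E)\cap Y^{-1}(F))=(\mu\otimes\nu)(E\times F)=\mu(E)\nu(F)$. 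Passing from these generators to the full sub-$\sigma$-algebras $\epbmeas X\calF$, $\epbmeas Y\calG$ (whose elements differ from $X^{-1}(E)$, $Y^{-1}(F)$ by members of $\calN$) is exactly the negligible-set bookkeeping of \cref{app:lem:indepcomb-pbstable}, and gives independence; thus $(\calK,\rho)$ witnesses independent combinability.

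For $(2)\Rightarrow(1)$, let $(\calK,\rho)$ be a witness to independent combinability, standardizable via some $\StdMble$-map $W:\Omega\to\forgetmu(C,\calK_C,\rho_C)$ with $(\calK,\rho)=\epbmeas W{(\calK_C,\rho_C)}$. From $\epbmeas X{(\calF,\mu)}\gkrmorder\epbmeas W{(\calK_C,\rho_C)}$ and $\epbmeas Y{(\calG,\nu)}\gkrmorder\epbmeas W{(\calK_C,\rho_C)}$, \cref{app:lem:krmorder-corresp-helper} (equivalently \cref{app:lem:subalg-implies-ordering}) produces $\StdMble$-maps $\pi_A:(C,\calK_C,\rho_C)\to(A,\calF,\mu)$ and $\pi_B:(C,\calK_C,\rho_C)\to(B,\calG,\nu)$ with $X=\pi_A W$ and $Y=\pi_B W$. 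A direct computation using $\rho(W^{-1}(\cdot))=\rho_C(\cdot)$ and $\rho|_{\epbmeas X\calF}=\epbmeas X\mu$ shows $\rho_C(\pi_A^{-1}(E))=\mu(E)$ and $\rho_C(\pi_B^{-1}(F))=\nu(F)$, so $\pi_A,\pi_B$ are in fact measure-preserving, and the independence hypothesis translates to $\rho_C(\pi_A^{-1}(E)\cap\pi_B^{-1}(F))=\mu(E)\nu(F)$ for all $E,F$. Since rectangles form a $\pi$-system generating $\calF\otimes\calG$, this says the pairing $(\pi_A,\pi_B):(C,\calK_C,\rho_C)\to(A\otimes B,\calF\otimes\calG,\mu\otimes\nu)$ is a $\StdProb$-map. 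Taking $\Omega_1=\forgetmu(A,\calF,\mu)$, $\Omega_2=\forgetmu(B,\calG,\nu)$, $X'=1_{\Omega_1}$, $Y'=1_{\Omega_2}$, and $f=\forgetmu((\pi_A,\pi_B))\,W$ then gives the required factorization, since $\scmfst f=\pi_A W=X$ and $\scmsnd f=\pi_B W=Y$.

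Finally, for the ``moreover'' claim I would show the $(\calK,\rho)$ built in $(1)\Rightarrow(2)$ is the least witness. Its $\sigma$-algebra $\calK=\epbmeas Z{(\calF\otimes\calG)}$ is generated mod $\calN$ by the rectangle-preimages $X^{-1}(E)\cap Y^{-1}(F)$; any witness $(\calK',\rho')$ satisfies $\epbmeas X\calF\cup\epbmeas Y\calG\subseteq\calK'$ and hence $\calK\subseteq\calK'$, while independence forces $\rho'$ and $\rho$ to agree on the generating $\pi$-system of rectangle-preimages, hence everywhere on $\calK$ by uniqueness of measures extending a $\pi$-system. Thus $(\calK,\rho)\gkrmorder(\calK',\rho')$, identifying $(\calK,\rho)$ as the independent combination. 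The main obstacle I anticipate is not any single step but the uniform handling of the negligible ideal $\calN$: every ``$=$'' between events above is really an equality mod $\calN$, so both the independence computation and the minimality argument must be carried out at the level of the enhanced pullback $\sigma$-algebras of \cref{app:def:epb} rather than on naive preimages, and the $\pi$-system uniqueness argument in particular must be phrased for the completed measures so that it is insensitive to negligible perturbations.
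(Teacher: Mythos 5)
Your proof is correct, and its first half follows the paper's own route: for $(1)\Rightarrow(2)$ the paper likewise passes to $Z=(X'\otimes Y')f$, observes that the projections' pullbacks $\epbmeas{\scmfst}{(\calF,\mu)}$ and $\epbmeas{\scmsnd}{(\calG,\nu)}$ are independently combinable upstairs in $(A\times B,\calF\otimes\calG,\mu\otimes\nu)$, and transports this down along $Z$ via \cref{app:lem:indepcomb-pbstable} and \cref{app:lem:epb-functorial} --- the same rectangle computation and negligible bookkeeping you describe, just packaged so that the stability lemma does the work in one step. Where you genuinely diverge is in $(2)\Rightarrow(1)$ and the minimality claim. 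The paper forms the pairing directly on $\Omega$: it defines $g(\omega)=(X(\omega),Y(\omega))$, checks on rectangles that $g$ is measure-preserving as a map $(\Omega,\calH,\rho)\to(A,\calF,\mu)\otimes(B,\calG,\nu)$ (your computation, carried out one space down), and takes $f=\forgetmu(g)$ with $X'$ and $Y'$ identities; your detour through the standardizing space $C$, pairing the maps $\pi_A,\pi_B$ produced by \cref{app:lem:subalg-implies-ordering}, reaches the same factorization and has the mild advantage that the pairing is a map between honestly standard probability spaces (the paper's $(\Omega,\calH,\rho)$ need not itself be standard, so calling $g$ a $\StdProb$-map there is a slight abuse), at the cost of two extra invocations of the factorization lemma --- note also that the lemma already hands you $\StdProb$-maps, so your separate verification that $\pi_A,\pi_B$ are measure-preserving is redundant. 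For the ``moreover'' part the two arguments really differ: the paper reuses its pairing, arguing that for an arbitrary witness $(\calH,\rho)$ joint monicity of $\scmfst,\scmsnd$ forces $g=(X'\otimes Y')f$ almost everywhere, hence $(X'\otimes Y')f$ is measure-preserving with domain $(\Omega,\calH,\rho)$ and its pullback is $\gkrmorder(\calH,\rho)$; you instead argue by generation plus Dynkin's $\pi$-$\lambda$ uniqueness. Both are sound, and the uniqueness-of-measures fact you invoke is not really avoided by the paper --- it is hidden in the step where measure-preservation of $g$ is concluded from agreement on rectangles --- so the difference is one of bookkeeping rather than substance; your closing caveat about working modulo $\calN$ throughout is exactly the right thing to worry about, and is what \cref{app:lem:indepcomb-pbstable} and the enhanced-pullback definition are there to handle.
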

\begin{proof}
  Suppose (1).
  Form the tensor product $(A\times B,\calF\otimes\calG,\mu\otimes\nu)$
  of $(A,\calF,\mu)$ and $(B,\calG,\nu)$,
  so that $X = \scmfst(X'\otimes Y')f$
  and $Y=\scmsnd(X'\otimes Y')f$.
  By construction the sub-$\sigma$-algebras $\epbmeas\scmfst\calF$
  and $\epbmeas\scmsnd\calG$ are independent
  in the tensor product and the measure $\mu\otimes\nu$
  restricts to $\epbmeas\scmfst\mu$ on $\epbmeas\scmfst\calF$
  and $\epbmeas\scmsnd\nu$ on $\epbmeas\scmsnd\calG$,
  so $\epbmeas\scmfst(\calF,\mu)$ and $\epbmeas\scmsnd(\calG,\nu)$
  are independently combinable in $(A\times B,\calF\otimes\calG,\negligibles(\mu\otimes\nu))$.
  Hence $\epbmeas{X}{(\calF,\mu)}=\epbmeas{(\scmfst(X'\otimes Y')f)}{(\calF,\mu)}$
     and $\epbmeas{Y}{(\calG,\nu)}=\epbmeas{(\scmsnd(X'\otimes Y')f)}{(\calG,\nu)}$
     are independently combinable in $\Omega$
     by \cref{app:lem:indepcomb-pbstable,app:lem:epb-functorial}.

   Conversely, suppose (2),
   and let $(\calH,\rho)$ be the standardizable probability space
   on $\Omega$ witnessing independent combinability
   of $\epbmeas X{(\calF,\mu)}$
   and $\epbmeas Y{(\calG,\nu)}$.
   There is a $\StdProb$-map
   $g : (\Omega,\calH,\rho)\to(A,\calF,\mu)\otimes(B,\calG,\nu)$
   defined by $f(\omega)=(X(\omega),Y(\omega))$,
   measure-preserving because $\rho(g^{-1}(F\times G))
   = \rho(X^{-1}(F)\cap Y^{-1}(G))
   = \rho(X^{-1}(F))\rho(Y^{-1}(G))
   = \mu(F)\nu(G) = (\mu\otimes\nu)(F\times G)$
   for all rectangles $F\times G\in\calF\otimes\calG$.
   Moreover, $g$ satisfies $\scmfst\,g = X$ and $\scmsnd\,g = Y$,
   so $\forgetmu g$ is a $\StdMble$-map
   showing that $[A,X]$ and $[B,Y]$ factor through the tensor product
   $(A,\calF,\negligibles(\mu))\otimes(B,\calG,\negligibles(\nu))$.

   It only remains to show, in case (1) and (2) both hold,
   that $\epbmeas{((X'\otimes Y')f)}(\calF\otimes\calG,\mu\otimes\nu)$
   is the independent combination; i.e.,
   that it is the $\gkrmorder$-smallest standardizable probability space on $\Omega$
   witnessing the independent combinability
   of $\epbmeas X{(\calF,\mu)}$
   and $\epbmeas Y{(\calG,\nu)}$.
   Suppose $(\calH,\rho)$ is another such witness.
   The map $g : (\Omega,\calH,\rho)\to(A,\calF,\mu)\otimes(B,\calG,\nu)$
   constructed above
   is equal to $(X'\otimes Y')f$
   as a set-function,
   since $\scmfst\,g = X = \scmfst(X'\otimes Y')f$
   and $\scmsnd\,g = Y = \scmsnd(X'\otimes Y')f$.
   This shows $(X'\otimes Y')f$ is measure-preserving
   as a map with domain $(\Omega,\calH,\rho)$,
   so
   $\epbmeas{((X'\otimes Y')f)}{(\calF\otimes\calG,\mu\otimes\nu)}
   \gkrmorder (\calH,\rho)$ as required.
\end{proof}

\begin{lemma} \label{app:lem:indepable-corresp}
  The $\preauti$-set $\indepable$ is \aagoodname{}
  corresponding to
  the \goodsheaf{} $\pspcs\shotimes\pspcs$
  across the equivalence in \cref{app:thm:nominal}.
\end{lemma}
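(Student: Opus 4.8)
The plan is to reuse the two-step transport already established for $\pspcs$ in \cref{app:lem:pspcs-rami,app:lem:pspcs-pnom}, specialized through the concrete description of the Day convolution. Since the equivalence of \cref{app:thm:nominal} is an equivalence of categories, it preserves finite products and subobjects; combined with the correspondence $\pspcs\leftrightarrow\gpspcs$ of \cref{app:lem:pspcs-corresp}, this means $\pspcs\times\pspcs$ corresponds to $\gpspcs\times\gpspcs$, and the monic $i : \pspcs\shotimes\pspcs\hookrightarrow\pspcs\times\pspcs$ of \cref{app:lem:tensor-sub-product} transports to a monic into $\gpspcs\times\gpspcs$. Thus $\pspcs\shotimes\pspcs$ corresponds to \emph{some} sub-\agoodname{} of $\gpspcs\times\gpspcs$, and the only remaining task is to identify its carrier and action with those of $\indepable$.

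Following the recipe used for $\pspcs$, I would compute the carrier of the corresponding \agoodname{} as the union over finite-footprint maps $p:\hcube\to\Omega$ of the images of $(\pspcs\shotimes\pspcs)(p)$. By the concrete representation of \cref{app:lem:pspcs-day-conv-char}, an element of $(\pspcs\shotimes\pspcs)(\Omega)$ is a pair $([A,X],[B,Y])$ that factors through a tensor product; under the correspondence $\pspcs\leftrightarrow\gpspcs$, precomposing with $p$ and passing to $\hcube$ sends this pair to the pair of pullbacks $(\epbmeas{(Xp)}{(\calF,\mu)}, \epbmeas{(Yp)}{(\calG,\nu)})$ of standardizable probability spaces on $\hcube$, exactly as in \cref{app:lem:pspcs-pnom}. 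The crux is then \cref{app:lem:factor-through-tensor-iff-indep-combable}: a pair factors through a tensor product if and only if its pullbacks are independently combinable. Applying this, together with the finite-footprint bookkeeping that both components (hence also their combination) have finite footprint, identifies the union-of-images carrier with the set of independently-combinable pairs of finite-footprint standardizable probability spaces on $\hcube$, which is precisely $\indepable$.

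It remains to check the action. As in \cref{app:lem:pspcs-pnom}, the $\auti$-action obtained from the equivalence is the pointwise pullback action $((\calG,\nu),(\calH,\rho))\cdot\pi = (\epbmeas\pi{(\calG,\nu)},\epbmeas\pi{(\calH,\rho)})$, which is exactly the action $\indepable$ inherits from $\gpspcs\times\gpspcs$; \cref{app:lem:indepcomb-pbstable} guarantees this action stays within $\indepable$, so it is well-defined, and the transported inclusion is the canonical one. I would finally note that $\indepable$ is automatically an \agoodname{}, being the image of the \goodsheaf{} $\pspcs\shotimes\pspcs$ (equivalently, a sub-$\preauti$-set of the \agoodname{} $\gpspcs\times\gpspcs$, whose stabilizers it inherits).

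The main obstacle is the identification step: making precise that the \emph{subobject} $i$ transports to the canonical inclusion $\indepable\hookrightarrow\gpspcs\times\gpspcs$, rather than merely matching carriers abstractly, and verifying that the translation of ``factors through a tensor product'' into ``independently combinable'' via \cref{app:lem:factor-through-tensor-iff-indep-combable} is compatible with the finite-footprint restriction on both factors and on the combined space. Everything else is the same bookkeeping that appears in \cref{app:lem:pspcs-rami,app:lem:pspcs-pnom}.
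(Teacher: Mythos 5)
Your proposal is correct and follows essentially the same route as the paper: both arguments use the concrete representation of $\pspcs\shotimes\pspcs$ from \cref{app:lem:pspcs-day-conv-char} to view it as a subsheaf of $\pspcs\times\pspcs$, transport the inclusion across the equivalence by computing images under pullback along finite-footprint maps $p:\hcube\to\Omega$ (the same recipe used for $\pspcs$ in \cref{app:lem:pspcs-rami,app:lem:pspcs-pnom}), and invoke \cref{app:lem:factor-through-tensor-iff-indep-combable} as the crux identifying ``factors through a tensor product'' with ``independently combinable.'' The only cosmetic difference is that you phrase the subobject transport abstractly (equivalences preserve products and monos) and spell out the action and its well-definedness via \cref{app:lem:indepcomb-pbstable}, whereas the paper leaves these to the general correspondence machinery; the mathematical content is the same.
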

\begin{proof}
  By \cref{app:lem:pspcs-day-conv-char},
  each $(\pspcs\shotimes\pspcs)(\Omega)$
  is a subset of $\pspcs\Omega\times\pspcs\Omega$
  consisting of the pairs of probability spaces on $\Omega$
  that factor through a tensor product.
  Thus the corresponding \agoodname{}
  is a subset of the \agoodname{}
  $\gpspcs\times\gpspcs$
  of pairs of standardizable probability spaces
  with finite footprint.
  To determine this subset, it suffices to compute
  the image of the inclusion $\pspcs\shotimes\pspcs\hookrightarrow\pspcs\times\pspcs$
  across the equivalence in \cref{app:thm:nominal}.
  For any $\Omega\in\StdMble$
  and $\StdMble$-map $p : \hcube\to\Omega$ with finite footprint,
  this inclusion sends a pair $((\calG,\mu),(\calH,\nu))$
  of probability spaces on $\Omega$ that factor through a tensor product
  to $(\epbmeas p{(\calG,\mu)},\epbmeas p{(\calH,\nu)})$,
  a pair of independently-combinable probability spaces on $\hcube$ with
  finite footprint.
  This hits every pair of independently-combinable probability spaces
  with finite footprint by
  \cref{app:lem:factor-through-tensor-iff-indep-combable},
  so the image of the inclusion $\pspcs\shotimes\pspcs\hookrightarrow\pspcs\times\pspcs$
  corresponds to $\indepable$ as claimed.
\end{proof}

\begin{definition} \label{app:def:gpspcs-join}
  Let $(\calG,\mu)$ and $(\calH,\nu)$
  be two standardizable probability spaces on $\hcube$
  with finite footprint.
  If $(\calG,\mu)$ and $(\calH,\nu)$
  are independently combinable,
  let $\gpcmjoin((\calG,\mu),(\calH,\nu))$
  be their independent combination.
\end{definition}

\begin{lemma} \label{app:lem:icomb-stablepb}
  Let $f : (X,\calF,\calN)\to(Y,\calG,\calM)$
  be a $\StdMble$-map.
  If $(\calL,\rho)$ is the independent combination
  of $(\calH,\mu)$ and $(\calK,\nu)$
  then 
  $\epbmeas f{(\calL,\rho)}$ is the independent combination
  of $\epbmeas f{(\calH,\mu)}$ and $\epbmeas f{(\calK,\nu)}$.
\end{lemma}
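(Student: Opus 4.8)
The plan is to reduce the statement to the explicit tensor-product description of independent combination established in \cref{app:lem:factor-through-tensor-iff-indep-combable}, and then let functoriality of enhanced pullback (\cref{app:lem:epb-functorial}) do the work. The key observation is that a factorization of two probability spaces through a tensor product is itself stable under pullback: precomposing the factoring maps with $f$ produces a factorization of the pulled-back spaces through the \emph{same} $\StdProb$-object, so both independent combinations are enhanced pullbacks of one and the same space along two maps that differ only by $f$.

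First I would record representations of the input spaces. Since $(\calH,\mu)$ and $(\calK,\nu)$ are standardizable on $Y$, write $(\calH,\mu)=\epbmeas a{(\calF_A,\mu_A)}$ and $(\calK,\nu)=\epbmeas b{(\calF_B,\nu_B)}$ for $\StdMble$-maps $a:Y\to\forgetmu(A,\calF_A,\mu_A)$ and $b:Y\to\forgetmu(B,\calF_B,\nu_B)$ with $(A,\calF_A,\mu_A),(B,\calF_B,\nu_B)\in\StdProb$. Because $(\calH,\mu)$ and $(\calK,\nu)$ are independently combinable (they admit an independent combination by hypothesis), \cref{app:lem:factor-through-tensor-iff-indep-combable} supplies $\Omega_1,\Omega_2$, a map $g:Y\to\Omega_1\otimes\Omega_2$, and maps $a':\Omega_1\to\forgetmu(A,\calF_A,\mu_A)$, $b':\Omega_2\to\forgetmu(B,\calF_B,\nu_B)$ with $a=a'\,\scmfst\,g$ and $b=b'\,\scmsnd\,g$; its ``moreover'' clause identifies the independent combination as
\[
  (\calL,\rho)=\epbmeas{((a'\otimes b')g)}{(\calF_A\otimes\calF_B,\mu_A\otimes\nu_B)}.
\]

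Next I would pull everything back along $f:(X,\calF,\calN)\to(Y,\calG,\calM)$ and repackage. By \cref{app:lem:epb-functorial}, $\epbmeas f{(\calL,\rho)}=\epbmeas{((a'\otimes b')gf)}{(\calF_A\otimes\calF_B,\mu_A\otimes\nu_B)}$, while $\epbmeas f{(\calH,\mu)}=\epbmeas{(af)}{(\calF_A,\mu_A)}$ and $\epbmeas f{(\calK,\nu)}=\epbmeas{(bf)}{(\calF_B,\nu_B)}$. Since $af=a'\,\scmfst\,(gf)$ and $bf=b'\,\scmsnd\,(gf)$, the pulled-back pair factors through the tensor product $(A,\calF_A,\mu_A)\otimes(B,\calF_B,\nu_B)$ via the single map $gf:X\to\Omega_1\otimes\Omega_2$. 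Applying the ``moreover'' clause of \cref{app:lem:factor-through-tensor-iff-indep-combable} once more, now at stage $X$, shows that the independent combination of $\epbmeas f{(\calH,\mu)}$ and $\epbmeas f{(\calK,\nu)}$ is exactly $\epbmeas{((a'\otimes b')gf)}{(\calF_A\otimes\calF_B,\mu_A\otimes\nu_B)}$, which coincides with $\epbmeas f{(\calL,\rho)}$ as computed above. This is the desired conclusion.

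The main obstacle---really the only point requiring care---is the bookkeeping that ensures the data $(\Omega_1,\Omega_2,g,a',b')$ produced at stage $Y$ transport verbatim to stage $X$: one must check that the two factorization equations survive precomposition with $f$ and that enhanced pullback genuinely commutes with composition on the nose, both of which are precisely \cref{app:lem:epb-functorial}. A secondary check is that independent combinations are unique, so that the two expressions may be identified; this is guaranteed by the minimality clause of \cref{app:def:indepcomb}. An alternative route---showing directly that $\epbmeas f{(\calL,\rho)}$ is a witness via the argument of \cref{app:lem:indepcomb-pbstable} and then verifying $\gkrmorder$-minimality against an arbitrary witness on $X$---is possible but awkward, since there is no canonical way to transport a witness on $X$ back to $Y$; the tensor-product argument sidesteps this entirely.
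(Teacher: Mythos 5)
Your proof is correct and follows essentially the same route as the paper's: both identify $(\calL,\rho)$ as an enhanced pullback of a tensor product via the ``moreover'' clause of \cref{app:lem:factor-through-tensor-iff-indep-combable}, pull back along $f$, observe the factorization persists after precomposition, and conclude by a second application of that clause together with functoriality of enhanced pullback (\cref{app:lem:epb-functorial}). The only cosmetic difference is that the paper works with the canonical pairing map $p(y)=(h(y),k(y))$ rather than an abstract factorization $(\Omega_1,\Omega_2,g,a',b')$, which changes nothing in the argument.
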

\begin{proof}
  Let $h : Y\to \forgetmu A$ and $k : Y\to\forgetmu B$
  be the $\StdMble$-maps witnessing standardizability
  of $(\calH,\mu)$ and $(\calK,\nu)$,
  so $(\calH,\mu) = \epbmeas hA$ and $(\calK,\nu)=\epbmeas kB$.
  By \cref{app:lem:factor-through-tensor-iff-indep-combable},
  the independent combination $(\calL,\rho)$
  is the pullback $\epbmeas p{(A\otimes B)}$,
  where $p : Y\to\forgetmu (A\otimes B)$
  is the map defined by $p(y) = (h(y),k(y))$.
  Now $\epbmeas f{(\calH,\mu)}$ and $\epbmeas f{(\calK,\nu)}$
  arise via pullback from $hf$ and $kf$ respectively,
  so by \cref{app:lem:factor-through-tensor-iff-indep-combable}
  again their independent combination
  is $\epbmeas q{(A\otimes B)}$
  where $q : X\to\forgetmu (A\otimes B)$
  is the map defined by $q(x) = (h(f(x)),k(f(x)))$.
  By definition $q = pf$, so
  $\epbmeas q{(A\otimes B)} 
  = \epbmeas{(pf)}{(A\otimes B)}
  = \epbmeas{f}{\epbmeas p{(A\otimes B)}}
  = \epbmeas f{(\calL,\rho)}$
  as claimed.
\end{proof}

\begin{theorem} \label{app:thm:pcmjoin-corresp}
  The operation $\gpcmjoin$ is a map $\indepable\to\gpspcs$
  of \agoodname{s} corresponding to
  the map $\pcmjoin : \pspcs\shotimes\pspcs\to\pspcs$
  of \goodsheaves{}
  across the equivalence in \cref{app:thm:nominal}.
\end{theorem}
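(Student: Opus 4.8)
The plan is to invoke the recipe for transporting morphisms across the equivalence of \cref{app:thm:nominal} described above. Since $\pspcs\shotimes\pspcs$ corresponds to $\indepable$ (\cref{app:lem:indepable-corresp}) and $\pspcs$ corresponds to $\gpspcs$ (\cref{app:lem:pspcs-corresp}), the sheaf morphism $\pcmjoin$ transports to a unique morphism of \agoodname{s} $\indepable\to\gpspcs$. It then suffices to show this transported morphism agrees with $\gpcmjoin$ as a set-function, since identifying the two simultaneously establishes that $\gpcmjoin$ is a genuine morphism of \agoodname{s} (being equal to one, it is automatically equivariant and lands in $\gpspcs$) and that it corresponds to $\pcmjoin$. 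By the recipe, the transported morphism is pinned down by the requirement that for every $\Omega\in\StdMble$ and every finite-footprint $\StdMble$-map $p:\hcube\to\Omega$, it agrees with $\pcmjoin_\Omega$ on elements embedded into $\indepable$ via $(\pspcs\shotimes\pspcs)(p)$; because every independently-combinable finite-footprint pair arises from such an embedding (exactly the content of \cref{app:lem:factor-through-tensor-iff-indep-combable} as used in \cref{app:lem:indepable-corresp}), verifying this single agreement determines the transported morphism everywhere.

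First I would spell out the two embeddings concretely. Fix $\Omega$, a finite-footprint $p:\hcube\to\Omega$, and an element $([A,X],[B,Y])$ of $(\pspcs\shotimes\pspcs)(\Omega)$; by \cref{app:lem:pspcs-day-conv-char} this factors through a tensor product, so there are $\Omega_1,\Omega_2$, a map $f:\Omega\to\Omega_1\otimes\Omega_2$, and $X':\Omega_1\to\forgetmu A$, $Y':\Omega_2\to\forgetmu B$ with $X = X'\,\scmfst\,f$ and $Y = Y'\,\scmsnd\,f$. Tracing through \cref{app:lem:indepable-corresp,app:lem:pspcs-corresp}, the embedding $(\pspcs\shotimes\pspcs)(p)$ followed by the carrier identification sends this element to the independently-combinable pair $(\epbmeas{(Xp)}{A},\,\epbmeas{(Yp)}{B})$ on $\hcube$, whereas $\pspcs(p)\circ\pcmjoin_\Omega$ sends it to the space corresponding to $[A\otimes B,(X,Y)p]$, namely $\epbmeas{((X,Y)p)}{(A\otimes B)}$, using the formula for $\pcmjoin$ on pairs from \cref{app:def:pspcs-tensor-join}.

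The heart of the argument is then the identity $\gpcmjoin(\epbmeas{(Xp)}{A},\,\epbmeas{(Yp)}{B}) = \epbmeas{((X,Y)p)}{(A\otimes B)}$. By definition $\gpcmjoin$ of an independently-combinable pair is their independent combination (\cref{app:def:gpspcs-join}). Now $Xp = X'\,\scmfst\,(fp)$ and $Yp = Y'\,\scmsnd\,(fp)$, so $Xp$ and $Yp$ factor through a tensor product via $fp:\hcube\to\Omega_1\otimes\Omega_2$; \cref{app:lem:factor-through-tensor-iff-indep-combable} then computes their independent combination as $\epbmeas{((X'\otimes Y')(fp))}{(A\otimes B)}$. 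Since $(X,Y) = (X'\otimes Y')f$ by \cref{app:def:pspcs-tensor-join-note}, we have $(X,Y)p = (X'\otimes Y')(fp)$, so the two sides coincide and the verification is complete. Equivariance of $\gpcmjoin$ falls out of this identification, but can also be checked directly: the domain $\indepable$ is stable under the $\auti$-action by \cref{app:lem:indepcomb-pbstable}, and \cref{app:lem:icomb-stablepb} gives $\gpcmjoin(\epbmeas{\pi}{(\calG,\mu)},\epbmeas{\pi}{(\calH,\nu)}) = \epbmeas{\pi}{\gpcmjoin((\calG,\mu),(\calH,\nu))}$ for all $\pi\in\auti$.

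The main obstacle is purely bookkeeping: keeping the two embeddings straight and correctly tracking how the abstractly-defined operation $\gpcmjoin$ (``take the independent combination'') gets realized as a pullback of $A\otimes B$ along $(X,Y)$. All the genuine measure-theoretic content — that independent combination is computed by pulling back a tensor product, and that pullback is functorial and commutes with the group action — has already been isolated in \cref{app:lem:factor-through-tensor-iff-indep-combable,app:lem:epb-functorial,app:lem:icomb-stablepb}, so no new analysis is required, and the proof reduces to assembling these facts along the chain of embeddings.
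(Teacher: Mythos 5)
Your proposal is correct and takes essentially the same route as the paper's proof: fix a finite-footprint $p:\hcube\to\Omega$, unwind $\pcmjoin_\Omega$ on a pair factoring through a tensor product as the enhanced pullback of $(A\otimes B,\calF\otimes\calG,\mu\otimes\nu)$, and identify the result with the independent combination via \cref{app:lem:factor-through-tensor-iff-indep-combable}. The only (immaterial) difference is in the last step: the paper applies that lemma at $\Omega$ and then invokes \cref{app:lem:icomb-stablepb} to transport the independent combination along $p$, whereas you apply the lemma directly at $\hcube$ to $Xp$, $Yp$ (which factor through $fp$) and conclude by functoriality of enhanced pullback (\cref{app:lem:epb-functorial}) --- in effect inlining the proof of \cref{app:lem:icomb-stablepb}.
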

\begin{proof}
  Fix $\Omega\in\StdMble$ and $p : \hcube\to\Omega$
  a $\StdMble$-map with finite footprint.
  Let $((\calF',\mu'),(\calG',\nu'))\in\pspcs\shotimes\pspcs$
  be a pair of standardizable probability spaces
  on $\Omega$ that factor through a tensor product,
  so $(\calF',\mu') = \epbmeas {(X\,\scmfst\,f)}{(\calF,\mu)}$
  and $(\calG',\nu') = \epbmeas {(Y\,\scmsnd\,f)}{(\calG,\nu)}$
  for standard probability spaces $(A,\calF,\mu),(B,\calG,\nu)$
  and standard enhanced measurable spaces $\Omega_1,\Omega_2$
  and $\StdMble$-maps 
  $f : \Omega\to\Omega_1\otimes\Omega_2$
  and $X:\Omega_1\to A$ and $Y:\Omega_1\to B$.
  The map $\pcmjoin$ sends $((\calF',\mu'),(\calG',\nu'))$
  to $\epbmeas {((X\otimes Y)f)}{(\calF\otimes\calG,\mu\otimes\nu)}$
  by \cref{app:def:pspcs-tensor-join-note},
  their independent combination
  by \cref{app:lem:factor-through-tensor-iff-indep-combable},
  so $\epbmeas p{\epbmeas {((X\otimes Y)f)}{(\calF\otimes\calG,\mu\otimes\nu)}}$
  is the independent combination of $\epbmeas p{(\calF',\mu')}$
  and $\epbmeas p{(\calG',\nu')}$ by \cref{app:lem:icomb-stablepb}.
  Putting this together gives
  $\epbmeas p{\pcmjoin((\calF',\mu'),(\calG',\nu'))}
  = \gpcmjoin(\epbmeas p{(\calF',\mu')},\epbmeas p{(\calG',\nu')})$,
  so $\pcmjoin$ corresponds to $\gpcmjoin$ across the equivalence
  in \cref{app:thm:nominal} as claimed.
\end{proof}

\begin{theorem} \label{app:thm:krm-corresp}
The PDM $(\pspcs,\pcmjoin,\pcmunit,\krmorder)$
internal to \goodsheaves{}
corresponds to the PDM $(\gpspcs,\gpcmjoin,\gpcmunit,\gkrmorder)$
internal to \agoodname{s}
across the equivalence in \cref{app:thm:nominal}.
\end{theorem}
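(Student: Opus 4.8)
The plan is to assemble the componentwise correspondences established in \cref{app:lem:pspcs-corresp,app:lem:pcmunit-corresp,app:lem:krmorder-corresp,app:lem:indepable-corresp,app:thm:pcmjoin-corresp} and then invoke a general principle: an equivalence between toposes transports partially-defined monoids to partially-defined monoids. Both $\Shjat(\StdMble)$ and $\tgsets{\auti}$ are toposes, and the data and axioms of a PDM --- as spelled out in \cref{app:def:pspcs-join-unit-pcm} and \cref{app:thm:mble-pspcs-krm} --- are phrased purely in terms of finite products, the subobject classifier $\Prop$, composition, and the subobject $i : \pspcs\shotimes\pspcs\hookrightarrow\pspcs\times\pspcs$ carving out the domain of partiality. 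Since an equivalence preserves and reflects all of this structure, applying the equivalence $\Phi : \Shjat(\StdMble)\xrightarrow{\sim}\tgsets{\auti}$ of \cref{app:thm:nominal} to the sheaf-side PDM of \cref{app:thm:mble-pspcs-krm} yields a PDM in $\tgsets{\auti}$ whose carrier is $\Phi(\pspcs)$, whose domain of partiality is $\Phi(i)$, and whose operations are $\Phi(\pcmunit)$, $\Phi(\krmorder)$, and $\Phi(\pcmjoin)$.

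First I would record that $\Phi$, being part of an equivalence, preserves the terminal object, binary products, monos, pullbacks, and the subobject classifier, so that $\Phi(\Prop)\cong\Prop$ and $\Phi(F\times G)\cong\Phi F\times\Phi G$ naturally in $F,G$. This is exactly what guarantees that every PDM law --- the unit, commutativity, and associativity conditions of \cref{app:def:pspcs-join-unit-pcm} together with the monotonicity condition of \cref{app:thm:mble-pspcs-krm}, each a finitary statement about maps into $\pspcs$, $\pspcs\times\pspcs$, and $\Prop$ and about factorizations through $i$ --- holds on the \agoodname{} side precisely because it holds on the sheaf side.

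Next I would identify the transported components with the \agoodname{} structures. By \cref{app:lem:pspcs-corresp} the carrier $\Phi(\pspcs)$ is $\gpspcs$; by \cref{app:lem:indepable-corresp} the domain $\Phi(\pspcs\shotimes\pspcs)$ is $\indepable$, embedded in $\gpspcs\times\gpspcs$ as $\ginc$; by \cref{app:lem:pcmunit-corresp} the unit $\Phi(\pcmunit)$ is $\gpcmunit$; by \cref{app:lem:krmorder-corresp} the ordering $\Phi(\krmorder)$ is $\gkrmorder$; and by \cref{app:thm:pcmjoin-corresp} the multiplication $\Phi(\pcmjoin)$ is $\gpcmjoin$. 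Substituting these identifications into the transported PDM yields exactly $(\gpspcs,\gpcmjoin,\gpcmunit,\gkrmorder)$, giving the claim.

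The main obstacle is ensuring that the domain of partiality transports coherently: one must check that the identification $\Phi(\pspcs\shotimes\pspcs)\cong\indepable$ of \cref{app:lem:indepable-corresp} is compatible with the product projections, so that the mono $\Phi(i)$ really becomes the canonical inclusion $\ginc$ of independently-combinable pairs rather than some other subobject of $\gpspcs\times\gpspcs$. Because each component lemma was proved by the same recipe --- comparing behavior on elements embedded along the finite-footprint maps $p : \hcube\to\Omega$ --- this compatibility is built in, and the remaining verification is routine bookkeeping rather than new measure-theoretic content.
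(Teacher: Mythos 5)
Your proposal is correct and takes essentially the same route as the paper: the paper's own proof is exactly the assembly of \cref{app:lem:pspcs-corresp,app:thm:pcmjoin-corresp,app:lem:pcmunit-corresp,app:lem:krmorder-corresp}, with the transport-along-equivalence principle left implicit. Your two additions --- spelling out why componentwise correspondence suffices (equivalences preserve the finite products, monos, and subobject classifier in which the PDM axioms are phrased) and explicitly citing \cref{app:lem:indepable-corresp} so that the domain of partiality $\Phi(i)$ is identified with the canonical inclusion $\ginc$ rather than an arbitrary subobject --- are sound and, if anything, make the argument slightly more complete than the paper's.
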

\begin{proof}
  \cref{app:lem:pspcs-corresp} shows $\pspcs$ corresponds to $\gpspcs$,
  \cref{app:thm:pcmjoin-corresp} shows $\pcmjoin$ corresponds to $\gpcmjoin$,
  \cref{app:lem:pcmunit-corresp} shows $\pcmunit$ corresponds to $\gpcmunit$,
  and \cref{app:lem:krmorder-corresp} shows $\krmorder$ corresponds to $\gkrmorder$.
\end{proof} \fi

\end{document}